\definecolor{Prune}{RGB}{99,0,60} 
\definecolor{B1}{RGB}{49,62,72} 
\definecolor{C1}{RGB}{124,135,143}
\definecolor{D1}{RGB}{213,218,223}
\definecolor{A2}{RGB}{198,11,70}
\definecolor{B2}{RGB}{237,20,91}
\definecolor{C2}{RGB}{238,52,35}
\definecolor{D2}{RGB}{243,115,32}
\definecolor{A3}{RGB}{124,42,144}
\definecolor{B3}{RGB}{125,106,175}
\definecolor{C3}{RGB}{198,103,29}
\definecolor{D3}{RGB}{254,188,24}
\definecolor{A4}{RGB}{0,78,125}
\definecolor{B4}{RGB}{14,135,201}
\definecolor{C4}{RGB}{0,148,181}
\definecolor{D4}{RGB}{70,195,210}
\definecolor{A5}{RGB}{0,128,122}
\definecolor{B5}{RGB}{64,183,105}
\definecolor{C5}{RGB}{140,198,62}
\definecolor{D5}{RGB}{213,223,61}
\newcommand{\textgreek}[1]{\begingroup\fontencoding{LGR}\selectfont#1\endgroup}
\newcommand{\textjap}[1]{\begingroup\begin{CJK}{UTF8}{goth}#1\end{CJK}\endgroup}
\renewcommand{\paragraph}[1]{\medskip\noindent\textbf{#1}}
\definecolor{wine}{rgb}{0.8,0,0.26}
\definecolor{forest}{rgb}{0,0.5,0.16}
\definecolor{orange}{rgb}{1.0,0.66,0}
\definecolor{darkorange}{rgb}{0.5,0.33,0}
\definecolor{darkforest}{rgb}{0.1,0.39,0.1}
\definecolor{darkmagenta}{rgb}{0.6, 0, 0.6}
\definecolor{lightergray}{gray}{0.985}
\definecolor{lightgray}{gray}{0.95}
\tikzstyle{empty}=[shape=circle, tikzit fill={rgb,255: red,191; green,191; blue,191}]
\tikzstyle{dot}=[fill=black, draw=black, shape=circle]
\tikzstyle{to}=[draw=black, ->]
\tikzstyle{equal-arrow}=[{-}, double equal sign distance]
\tikzstyle{hook}=[right hook->, draw=black, tikzit draw=magenta]
\tikzstyle{dashed-arrow}=[dashed, ->]
\tikzstyle{mono}=[draw=black, to reversed->]
\tikzstyle{white dot}=[dot,fill=white, inner sep=0.4mm,minimum width=0pt,minimum height=0pt, font=\footnotesize]
\tikzset{tensor/.style={inner sep=2.5pt, draw, circle, path picture={
  \draw[black]
(path picture bounding box.south east) -- (path picture bounding box.north west) (path picture bounding box.south west) -- (path picture bounding box.north east);
}}}
\tikzset{plus/.style={inner sep=2.5pt, draw, circle, path picture={
  \draw[black]
(path picture bounding box.east) -- (path picture bounding box.west) (path picture bounding box.south) -- (path picture bounding box.north);
}}}
\newcommand\Label[1]{\refstepcounter{equation}(\theequation)\ltx@label{#1}}
\def\bR{\begin{color}{red}}
\def\bB{\begin{color}{blue}}
\def\bM{\begin{color}{magenta}}
\def\bC{\begin{color}{cyan}}
\def\bW{\begin{color}{white}}
\def\bBl{\begin{color}{black}}
\def\bG{\begin{color}{green}}
\def\bY{\begin{color}{yellow}}
\def\e{\end{color}\xspace}
\newcommand{\bit}{\begin{itemize}}
\newcommand{\eit}{\end{itemize}\par\noindent}
\newcommand{\ben}{\begin{enumerate}}
\newcommand{\een}{\end{enumerate}\par\noindent}
\newcommand{\beq}{\begin{equation}}
\newcommand{\eeq}{\end{equation}\par\noindent}
\newcommand{\beqa}{\begin{eqnarray*}}
\newcommand{\eeqa}{\end{eqnarray*}\par\noindent}
\newcommand{\beqn}{\begin{eqnarray}}
\newcommand{\eeqn}{\end{eqnarray}\par\noindent}
\newcommand{\opn}[1]{\ensuremath{\operatorname{#1}}}
\newcommand{\typ}{\mathtt{type}}
\newcommand{\alt}{~\mid~}
\newcommand{\inl}[1]{\ensuremath{\mathtt{inj}_l{\;#1}}}
\newcommand{\inr}[1]{\ensuremath{\mathtt{inj}_r{\;#1}}}
\newcommand{\ini}[1]{\mathtt{inj}_i{\;#1}}
\newcommand{\pv}[2]{\ensuremath{\langle #1,#2 \rangle}}
\newcommand{\pair}[2]{\ensuremath{#1 \otimes #2}}
\newcommand{\tact}[1][c]{\mathtt{act}_\TT(#1)}
\newcommand{\zact}[1][c]{\mathtt{act}_\SC(#1)}
\newcommand{\xact}[1][c]{\mathtt{act}_\XX(#1)}
\newcommand{\xret}[1]{\mathtt{ret}_\XX~#1}
\newcommand{\sret}[1]{\mathtt{ret}_\SC~#1}
\newcommand{\zret}[1]{\mathtt{ret}_\SC~#1}
\newcommand{\tret}[1]{\mathtt{ret}_\TT~#1}
\newcommand{\rret}[1]{\mathtt{ret}~#1}
\newcommand{\xdo}[3]{\mathtt{do}_\XX~#1\leftarrow #2;~#3}
\newcommand{\zdo}[3]{\mathtt{do}_\SC~#1\leftarrow #2;~#3}
\newcommand{\tdo}[3]{\mathtt{do}_\TT~#1\leftarrow #2;~#3}
\newcommand{\ddo}[3]{\mathtt{do}~#1\leftarrow #2;~#3}
\newcommand{\one}{\mathtt I}
\newcommand{\fold}[1]{\ensuremath{\mathtt{fold}{\;#1}}}
\newcommand{\lett}{\ensuremath{\mathtt{let}}}
\newcommand{\letv}[3]{{\mathtt{let}}\,{#1}={#2}~{\mathtt{in}}~{#3}}
\newcommand{\clauses}[1]{\left\{ ~#1~ \right\}}
\newcommand{\clause}[2]{\mid ~ #1 \isot #2}
\newcommand{\tc}{\mathtt{t}\!\mathtt{t}}
\newcommand{\fc}{\mathtt{f}\!\mathtt{f}}
\newcommand{\OD}[2]{{\rm OD}_{#1}{#2}}
\newcommand{\ODe}[2]{{\rm OD}^{\it ext}_{#1}{#2}}
\newcommand{\ttt}{\ensuremath{\mathbf{tt}}}
\newcommand{\fff}{\ensuremath{\mathbf{ff}}}
\newcommand{\dup}{\ensuremath{\opn{Dup}}}
\newcommand{\garRem}[2]{\ensuremath{\opn{GarbRem}(#1, #2)}}
\newcommand{\nat}{\ensuremath{\mathtt{Nat}}}
\newcommand{\zero}{\ensuremath{\mathtt{zero}}}
\newcommand{\suc}[1]{\ensuremath{\mathtt{S}{\;#1}}}
\newcommand{\ctrl}[1]{\ensuremath{\mathtt{ctrl}{\;#1}}}
\newcommand{\match}[3]{\mathrm{match}(#1,#2,#3)}
\newcommand{\natS}{\ensuremath{\mathbb{N}}}
\newcommand{\natT}{\ensuremath{\mathtt{nat}}}
\newcommand{\boolT}{\ensuremath{\mathbb{B}}}
\newcommand{\1}{I}
\newcommand{\ffix}{\ensuremath{\mathtt{fix}~}}
\newcommand{\subfin}{\triangleleft}
\DeclarePairedDelimiter\floor{\lfloor}{\rfloor}
\newcommand{\finto}{\underset{\mathrm{fin}}{\to}}
\newcommand{\isoto}{\underset{\mathrm{iso}}{\to}}
\newcommand{\termto}{\underset{\mathrm{term}}{\to}}
\newcommand{\nfix}[1]{\ensuremath{\mathtt{fix}^{#1}~}}
\newcommand{\entailiso}{\vdash_{\isoterm}}
\newcommand{\entail}{\vdash}
\newcommand{\entaile}{\vdash_e}
\newcommand{\isot}{\ensuremath{\leftrightarrow}}
\newcommand{\iso}{\ensuremath{\leftrightarrow}}
\newcommand{\isovar}{\ensuremath{\phi}}
\newcommand{\isolambdavar}{\ensuremath{\psi}}
\newcommand{\isobasique}{\clauses{\clause{v_1}{e_1}\mid\dots\clause{v_n}{e_n}}}
\newcommand{\isobreduit}{\clauses{\clause{v_i}{e_i}}_{i \in I}}
\newcommand{\isobreduitj}{\clauses{\clause{v_j}{e_j}}_{j \in J}}
\newcommand{\isoterm}{\ensuremath{\omega}}
\newcommand{\unibasique}{\clauses{\clause{b_1}{e_1}\mid\dots\clause{b_n}{e_n}}}
\newcommand{\unibreduit}{\clauses{\clause{b_i}{e_i}}_{i \in I}}
\newcommand{\nnext}[1]{\mathtt{next}{\;#1}}
\newcommand{\later}{\ensuremath{\blacktriangleright}}
\newcommand\embed\hookrightarrow
\newcommand{\naturalto}{\ensuremath{\Rightarrow}}
\newcommand\natto\Rightarrow
\newcommand{\Id}{\ensuremath{\mathrm{id}}}
\newcommand{\dcpob}{\mathbf{DCPO}_\bot}
\newcommand{\dcpo}{\DCPO}
\newcommand{\Hilbc}{\mathbf{Hilb}^{\aleph_o}}
\newcommand{\Cat}{\mathbf{Cat}}
\newcommand{\abs}[1]{\left\vert #1 \right\vert}
\newcommand{\norm}[1]{\left\Vert #1 \right\Vert}
\newcommand{\dg}{^{\dagger}}
\newcommand{\inv}{^{-1}}
\newcommand{\op}{^{\mathrm{op}}}
\newcommand{\fix}{\mathrm{fix}}
\newcommand{\Ker}{\mathrm{Ker}}
\newcommand{\iim}{\mathrm{Im}}
\newcommand{\iid}{\mathrm{id}}
\newcommand{\comp}{\mathrm{comp}}
\newcommand{\ip}[2]{\langle #1\mid #2 \rangle}
\newcommand{\nnoma}{\!^\noma}
\newcommand{\set}[1]{\ensuremath{\{#1\}}}
\newcommand{\sset}[1]{\ensuremath{\left\{#1\right\}}}
\newcommand{\mynl}{\\[1ex]}
\newcommand{\intf}[1]{\left\llbracket #1 \right\rrbracket}
\newcommand{\interp}[1]{\ensuremath{\intf{#1}}}
\newcommand{\den}[1]{\left\llbracket #1 \right\rrbracket}
\newcommand\sem\den
\newcommand{\rc}[1]{\res{#1^{\circ}}}
\newcommand{\rmctrl}{\ensuremath{\mathrm{ctrl}}}
\newcommand{\rmsucc}{\ensuremath{\mathrm{succ}}}
\newcommand{\rmeval}{\ensuremath{\mathrm{eval}}}
\newcommand{\rmcurry}{\ensuremath{\mathrm{curry}}}
\newcommand{\pinv}{^{\circ}}
\newcommand{\res}[1]{\overline{#1}}
\newcommand{\ov}[1]{\overrightarrow{#1}}
\newcommand{\bra}[1]{\ensuremath{\left\langle #1 \right|}}
\newcommand{\ket}[1]{\ensuremath{\left|  #1 \right\rangle}}
\newcommand{\braket}[2]{\ensuremath{\langle#1|#2\rangle}}
\newcommand{\ketbra}[2]{\ensuremath{\ket{#1}\!\bra{#2}}}
\newcommand{\yo}{\text{\usefont{U}{min}{m}{n}\symbol{'210}}}
\newcommand{\noma}{\text{\usefont{U}{min}{m}{n}\symbol{'005}}\!}
\DeclareFontFamily{U}{min}{}
\DeclareFontShape{U}{min}{m}{n}{<-> udmj30}{}
\newcommand{\secref}[1]{\S \ref{#1}}
\newcommand\summaryname{Abstract}
\newenvironment{abstract}%
    {\small\begin{center}%
    \bfseries{\summaryname} \end{center}
		\begin{changemargin}{2cm}{2cm}
		}{\end{changemargin}}
\newenvironment{citing}[1]{
	\begin{changemargin}{#1}{\rightmargin}
	}{
	\end{changemargin}}
\newmdtheoremenv[linewidth=2,linecolor=A2, topline=false,rightline=false,bottomline=false,backgroundcolor=lightergray,innertopmargin = 0pt]{theorem}{Theorem}[chapter] 
\newmdtheoremenv[linewidth=1.5,linecolor=A5, topline=false,rightline=false,bottomline=false,backgroundcolor=lightergray,innertopmargin = 0pt]{lemma}[theorem]{Lemma} 
\newmdtheoremenv[linewidth=1.2,linecolor=D2, topline=false,rightline=false,bottomline=false,backgroundcolor=lightergray,innertopmargin = 0pt]{corollary}[theorem]{Corollary} 
\newmdtheoremenv[linewidth=1.2,linecolor=B3, topline=false,rightline=false,bottomline=false,backgroundcolor=lightergray,innertopmargin = 0pt]{proposition}[theorem]{Proposition} 
\theoremstyle{definition}
\newtheorem{definition}[theorem]{Definition}
\newtheorem{example}[theorem]{Example}
\theoremstyle{remark}
\newtheorem{remark}[theorem]{Remark}
\begin{document}
	
	\begin{titlepage}

		\newgeometry{left=5cm,bottom=2cm, top=1cm, right=1cm}
		
		\tikz[remember picture,overlay] \node[opacity=1,inner sep=0pt] at
		(-10mm,-270mm){\includegraphics{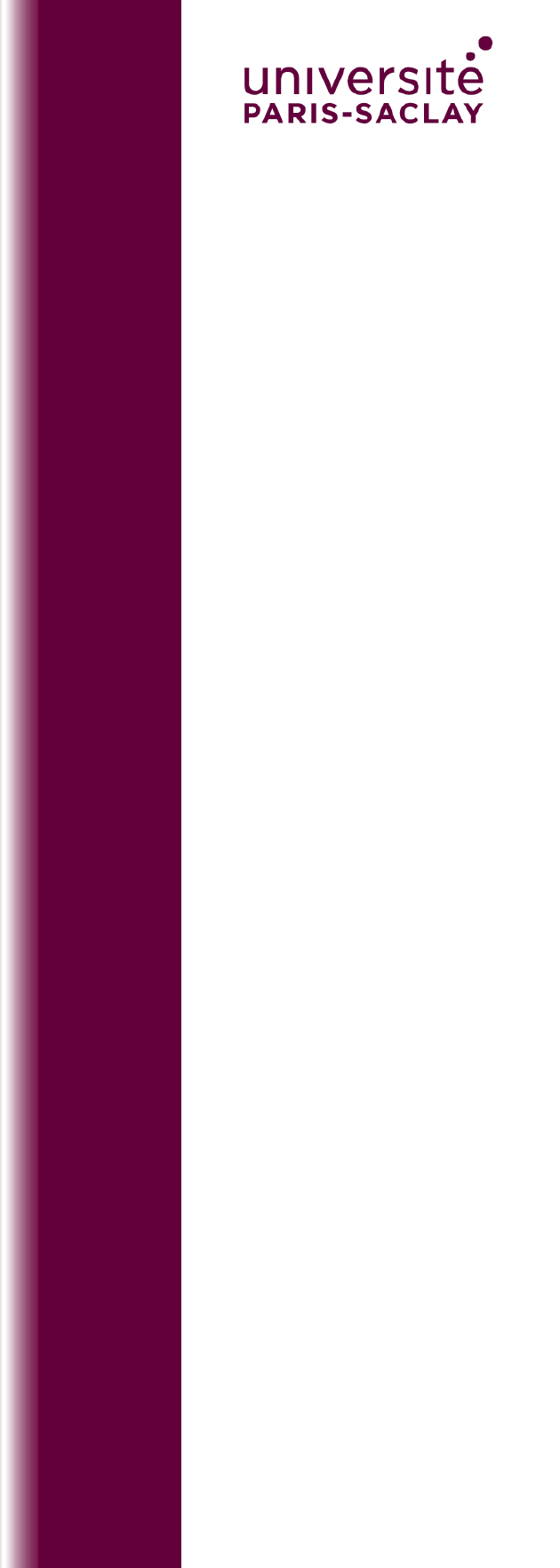}};
		
		
		\color{white}
		
		\begin{picture}(0,0)
			\fontfamily{cmss}\fontseries{m}\fontsize{22}{26}\selectfont
			\put(-137,-743){\rotatebox{90}{\Large\bf T\large{HÈSE DE} \Large{D}\large{OCTORAT}}} \\
			\put(-105,-743){\rotatebox{90}{\large\bf NNT : 2024UPASG030}}
		\end{picture}
		
		
		
		\flushright
		\vspace{10mm} 
		\color{Prune}
		\fontfamily{cmss}\fontseries{m}\fontsize{22}{26}\selectfont
		\selectlanguage{british}

		\Huge The Semantics of Effects: Centrality, Quantum Control
		and Reversible Recursion
		
		\normalsize
		\color{black}
		\selectlanguage{french}
		\Large{\textit{De la sémantique des effets: centralité, contrôle quantique et
		récursivité réversible}} \\
		
		
		\vspace{1.5cm}
		
		\normalsize
		\textbf{Thèse de doctorat de l'université Paris-Saclay} \\
		
		\vspace{6mm}
		
		\small École doctorale n$^{\circ}$ 580, Sciences et Technologies \\
		\small de l’Information et de la Communication (STIC) \\
		\small Spécialité de doctorat: informatique \\
		\small Graduate School : Informatique et Sciences du Numérique \\
		\small Référent : Faculté des sciences d'Orsay \\
		\vspace{6mm}
		
		\footnotesize Thèse préparée au \textbf{Laboratoire Méthodes Formelles}  \\
		\footnotesize (Université Paris-Saclay, CNRS, ENS Paris-Saclay, Inria), \\
		\footnotesize sous la direction de \textbf{Pablo} \textbf{A\scriptsize{RRIGHI}}, 
		professeur Université Paris-Saclay, \\
		\footnotesize et le co-encadrement de \textbf{Benoît V\scriptsize{ALIRON}}, 
		maître de  conférence CentraleSupélec, \\
		\footnotesize et de \textbf{Vladimir Z\scriptsize{AMDZHIEV}}, 
		chercheur Inria. \\
		\vspace{15mm}
		
		\textbf{Thèse soutenue à Gif-sur-Yvette, le 19 juin 2024, par}\\
		\bigskip
		\Large {\color{Prune} \textbf{Louis L\large{EMONNIER}}} 

		\vspace{\fill} 
		
		\bigskip
		
		\flushleft
		\small {\color{Prune} \textbf{Composition du jury}} \\
		{\color{Prune} \scriptsize {Membres du jury avec voix délibérative}} \\
		\vspace{2mm}
		\scriptsize
		\begin{tabular}{|p{9cm}l}
			\arrayrulecolor{Prune}
			\textbf{Jean G\tiny{OUBAULT}--L\tiny{ARRECQ}} &   Président \\ 
			Professeur, ENS Paris-Saclay & \\
			\textbf{Thomas E\tiny{HRHARD}} &  Rapporteur \& Examinateur \\ 
			Directeur de Recherche, CNRS \& Université Paris-Cité   &   \\ 
			\textbf{Laurent R\tiny{EGNIER}} &  Rapporteur \& Examinateur \\ 
			Professeur, Université de Aix-Marseille  &   \\ 
			\textbf{Claudia F\tiny{AGGIAN}} &  Examinatrice \\ 
			Chargée de Recherche, CNRS \& Université Paris-Cité   &   \\ 
			\textbf{Marie K\tiny{ERJEAN}} &  Examinatrice \\ 
			Chargée de Recherche, CNRS \& Université Sorbonne Paris-Nord   &   \\ 
			
		\end{tabular} 
		
		\flushleft
		\small {\color{Prune} \textbf{Membres invités}} \\
		\vspace{2mm}
		\scriptsize
		\begin{tabular}{|p{9cm}l}
			\arrayrulecolor{Prune}
			\textbf{Pablo A\tiny{RRIGHI}} &  Directeur de thèse \\ 
			Professeur, Université Paris-Saclay   &   \\ 
			\textbf{Benoît V\tiny{ALIRON}} &   Co-encadrant \\ 
			Maître de conférence, CentraleSupélec & \\
			\textbf{Vladimir Z\tiny{AMDZHIEV}} &  Co-encadrant \\ 
			Chercheur, Inria   &   
		\end{tabular} 
	\end{titlepage}

\ifthispageodd{\newpage\thispagestyle{empty}\null\newpage}{}
\thispagestyle{empty}
\newgeometry{top=1.5cm, bottom=3.25cm, left=2cm, right=2cm}
\fontfamily{rm}\selectfont

\lhead{}
\rhead{}
\rfoot{}
\cfoot{}
\lfoot{}

\noindent 
\includegraphics[height=2.45cm]{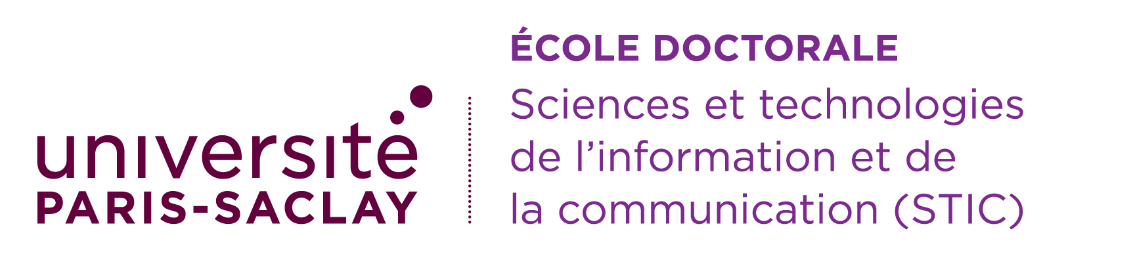}
\vspace{1cm}
\fontfamily{cmss}\fontseries{m}\selectfont

\small

\begin{mdframed}[linecolor=Prune,linewidth=1]

	\textbf{Titre :} De la sémantique des effets: centralité, contrôle quantique
	et récursivité réversible 

	\noindent \textbf{Mots clés :} 
	Informatique quantique
	-- Sémantique
	-- Languages de programmation
	-- Théorie des catégories

	\vspace{-.5cm}
	\begin{multicols}{2}
		\noindent \textbf{Résumé :} 
		Le sujet de cette thèse est axé sur la théorie des langages de
		programmation. Dans un langage de programmation suffisamment bien défini,
		le comportement des programmes peut être étudié à l'aide d'outils empruntés
		à la logique et aux mathématiques, énonçant des résultats sans exécuter le
		code. Ce domaine de l'informatique est appelé \emph{sémantique}. La
		sémantique d'un langage peut se présenter sous plusieurs formes : dans
		notre cas, des sémantiques opérationnelles, des théories équationnelles et
		des sémantiques dénotationnelles. Les premières donnent un sens
		opérationnel aux programmes, au sein de la syntaxe du langage. Elles
		simulent les opérations qu'un ordinateur est censé effectuer s'il exécute
		le programme. Une théorie équationnelle fonctionne également de manière
		syntaxique : elle indique si deux programmes effectuent la même opération
		sans informer sur la procédure.  Enfin, la sémantique dénotationnelle est
		l'étude mathématique des programmes, généralement à l'aide de la théorie
		des catégories. Elle permet par exemple de prouver qu'un programme se
		termine ou non.

		Cette thèse se concentre sur la sémantique des effets dans les langages de
		programmation -- une fonctionnalité ajoutée à un langage, gérant des
		données secondaires ou des résultats probabilistes. Eugenio Moggi, en 1991,
		a publié un travail fondateur sur l'étude de la sémantique des effets,
		soulignant la relation avec les monades en théorie des catégories. La
		première contribution de cette thèse suit directement le travail de Moggi,
		en étudiant la commutativité des effets dans un langage de programmation à
		travers le prisme des monades. Les monades sont la généralisation de
		structures algébriques telles que les monoïdes, qui ont une notion de
		centre : le centre d'un monoïde est une collection d'éléments qui commutent
		avec tous les autres dans le monoïde. Nous fournissons les conditions
		nécessaires et suffisantes pour qu'une monade ait un centre. Nous
		détaillons également la sémantique d'un langage de programmation avec des
		effets qui portent des informations sur les effets qui sont centraux. De
		plus, nous fournissons un lien fort -- un résultat de langage interne --
		entre ses théories équationnelles et sa sémantique dénotationnelle.

		Le deuxième axe de la thèse est l'informatique quantique, perçue comme un
		effet réversible. Le quantique est un domaine émergent de l'informatique
		qui utilise la puissance de la mécanique quantique pour calculer. Au niveau
		des langages de programmation, de nouveaux paradigmes doivent être
		développés pour être fidèles aux opérations quantiques. Les opérations
		quantiques physiquement permises sont toutes réversibles, à l'exception de
		la mesure ; cependant, la mesure peut être reportée à la fin du calcul, ce
		qui nous permet de nous concentrer d'abord sur la partie réversible et
		d'appliquer ensuite la mesure pour obtenir des résultats. Dans le chapitre
		correspondant, nous définissons un langage de programmation réversible,
		avec types simples, qui effectue des opérations quantiques
		\emph{unitaires}. Une sémantique dénotationnelle et une théorie
		équationnelle adaptées au langage sont présentées, et nous prouvons que
		cette dernière est complète. Ce travail vise à fournir des bases solides
		pour l'étude du contrôle quantique d'ordre supérieur.

		En outre, nous étudions la récursion réversible, en fournissant une
		sémantique opérationnelle et dénotationnelle adéquate à un langage de
		programmation fonctionnel, réversible et Turing-complet. La sémantique
		dénotationnelle utilise l'enrichissement dcpo des catégories inverses. Ce
		modèle mathématique sur l'informatique réversible ne se généralise pas
		directement à sa version quantique. Dans la conclusion, nous détaillons les
		limites et l'avenir possible du contrôle quantique d'ordre supérieur.
	\end{multicols}

\end{mdframed}

\vspace{8mm}

\selectlanguage{british}

\begin{mdframed}[linecolor=Prune,linewidth=1]

	\textbf{Title:} The Semantics of Effects: Centrality, Quantum Control and
	Reversible Recursion

	\noindent \textbf{Keywords:} 
	Quantum computing
	-- Semantics
	-- Programming Languages
	-- Category theory

	\begin{multicols}{2}
		\noindent \textbf{Abstract:} 
		The topic of this thesis revolves around the theory of programming
		languages. In a sufficiently well-defined programming language, the
		behaviour of programs can be studied with tools borrowed from logic and
		mathematics, allowing us to state results without executing the code. This
		area of computer science is called ``semantics''. The semantics of a
		programming language can take several forms: in this thesis, we work with
		operational semantics, equational theories, and denotational semantics. The
		former gives an operational meaning to programs but within the language's
		syntax. It simulates the operations a computer is supposed to perform if it
		were running the program. An equational theory also works syntactically: it
		indicates whether two programs perform the same operation without giving
		any information on the procedure. Lastly, denotational semantics is the
		mathematical study of programs, usually done with the help of category
		theory. For example, it allows us to prove whether a program terminates.

		This thesis focuses on the semantics of effects in programming languages --
		namely, a feature added to a language, \emph{e.g.} handling side data or
		probabilistic outputs. Eugenio Moggi, in 1991, published foundational work
		on the study of the semantics of effects, highlighting the relationship
		with monads in category theory. The first contribution of this thesis
		directly follows Moggi's work, studying the commutativity of effects in a
		programming language through the prism of monads. Monads are the
		generalisation of algebraic structures such as monoids, which have a notion
		of centre: the centre of a monoid is a collection of elements which commute
		with all others in the monoid. We provide the necessary and sufficient
		conditions for a monad to have a centre. We also detail the semantics of a
		programming language with effects that carry information on which effects
		are central. Moreover, we provide a strong link -- an internal language
		result -- between its equational theories and its denotational semantics.

		The second focus of the thesis is quantum computing, which is seen as a
		reversible effect. Quantum computing is an emergent field in computer
		science that uses the power of quantum mechanics to compute. At the level
		of programming languages, new paradigms need to be developed to be faithful
		to quantum operations. Physically permissible quantum operations are all
		reversible, except measurement; however, measurement can be deferred at the
		end of the computation, allowing us to focus on the reversible part first
		and then apply measurement to obtain results. In the corresponding chapter,
		we define a simply-typed reversible programming language performing quantum
		operations called ``unitaries''. A denotational semantics and an equational
		theory adapted to the language are presented, and we prove that the latter
		is complete. The aim of this work is to provide a solid foundation for the
		study of higher-order quantum control. 

		Furthermore, we study recursion in reversible programming, providing
		adequate operational and denotational semantics to a Turing-complete,
		reversible, functional programming language. The denotational semantics
		uses the dcpo enrichment of rig join inverse categories. This mathematical
		account of higher-order reasoning on reversible computing does not directly
		generalise to its quantum counterpart. In the conclusion, we detail the
		limitations and possible future for higher-order quantum control.
	\end{multicols}
\end{mdframed}

\selectlanguage{french}

\titleformat{\chapter}[display]{\bfseries\huge\color{Prune}}{Chapter\ \thechapter}{.1ex}
{\vspace{0.1ex}
}
[\vspace{3ex}]

%

\newgeometry{top=3cm, bottom=4cm, left=3cm, right=3cm}


\chapter*{Remerciements/Acknowledgements}
\begin{citing}{4cm}
  ``\'Ecrire une thèse, c'est un peu comme courir un marathon, sauf que tu
  prends du poids au lieu d'en perdre.'' --- LL.
\end{citing}

\vspace{1cm}

\selectlanguage{french}

\paragraph{La langue.}
Quand j'ai commencé à réfléchir à ces remerciements, j'ai rapidement décidé de
me laisser la liberté de passer d'une langue à une autre. C'est bien le seul
endroit où je peux me le permettre, et c'est important que j'y intègre du
français, tout comme de l'anglais, car des futur·es lecteurices de ces
paragraphes ne parlent pas forcément les deux langues.

\paragraph{Avant la thèse.}
Comme nous allons parler de sciences, partons du début. Non pas le début de ma
scolarité, mais disons le moment où j'ai commencé à envisager de faire de la
recherche mon avenir. Je ne suis pas certain que j'aurais suivi cette voie si
mes professeurs de mathématiques de Terminale, Vincent Hugerot et Vincent
Boillon, ne m'avaient pas transmis leur passion. Puis, en sup et en spé, c'est
surtout l'informatique qui attiré mon attention. Cela ne serait jamais arrivé
sans Serge Aubert, le meilleur enseignant que j'ai pu connaître. Il ne cachait
pas sa passion profonde pour l'informatique, qu'elle soit théorique ou pratique
; il transmettait son savoir à la fois avec douceur et fermeté, ce caractère
faisant de lui mon modèle -- et il l'est peut-être toujours.  Je ne peux pas
finir ce paragraphe sans mentionner Pascal Guelfi, qui a profondément bousculé
mon savoir et ma culture mathématiques, et sans qui je n'aurais jamais pu
intégrer une ENS. J'ajouterai que je n'aurais pas pu atteindre la rédaction de
cette thèse sans la formation du département informatique de l'ENS Cachan
(maintenant Paris-Saclay) et ses excellents professeurs.
Observons, à regret, que toutes les personnes mentionnées ci-dessus sont des
hommes. Je n'en ai pas souffert, mais si j'avais été une femme, j'aurais pu
être impactée par ce manque de manque de représentation.

\selectlanguage{british}

\paragraph{Starting into research.}
This thesis has been undirectly helped by how I was introduced in the world of
research and academia, which was first done by Frédéric Dupuis, thanks to Simon
Perdrix. Frédéric m'a permis de découvrir la recherche dans un environnement
sain et encadré, je l'en remercie tout particulièrement. My other research
experience before my PhD journey has happened with Aleks Kissinger and John van
de Wetering. I really enjoyed those months playing with ZH-diagrams and !-boxes
during my stay in Nijmegen \textipa{["nEi""me:G@]}. 

\newpage
\paragraph{My environment.}
During my PhD, I have met and spent time with a crazy amout of nice people.
Merci aux (ex-)membres de l'ADEPS\footnote{Association des Doctorant·es de
l'ENS Paris-Saclay.}, en particulier Céline, Émilie, Giann Karlo, Sofiane.
Merci pour ces soirées, ces conversations qui m'ont permis à la fois de
décrocher du travail et de grandir en tant qu'humain.  Petit mot pour mes
étudiant·es à l'ENS avec qui j'ai passé des moments sympathiques même si ça ne
l'était pas forcément pour elleux.  Je tiens à remercier les personnes qui
dirigent et font tourner le LMF\footnote{Laboratoire d'informatique dans lequel
j'ai réalisé ma thèse.}, ça a été un grand plaisir d'intéragir avec vous.
Merci à Nicolas, pour nos conversations, notre soutien mutuel, nos tentatives
de collaboration, mais qu'on a eu du mal à entretenir par manque de temps.
Merci à Kostia, pour m'avoir accompagné sur toute la première moitié de ma
thèse, pour ces petits moments de non travail au labo ou en conf, tes
régulières victoires à Towerfall et non régulières à Mario Kart, et enfin pour
m'avoir fait découvrir le GT Scalp et plus généralement la communauté
$\lambda$-calcul et logique linéaire. Peut-être un jour sauras-tu que les
normalien·nes peuvent être des gens sympathiques.  Merci à Titouan, qui a été
mon premier collaborateur en dehors de mes encadrants, c'est toujours un
plaisir de travailler avec toi.  Merci à James pour son soutien et pour m'avoir
préparé psychologiquement à déménager au Royaume-Uni. I hope I will keep
hearing from you about \emph{pro}strong \emph{pro}functors.  Merci aux
camarades de conférences pour leur soutien, en particulier Lison, Aloÿs, Tito
et Axel.  I also thank Agustín (ciclismo y asados), Dongho
($\scalerel*{\includegraphics[height=11pt]{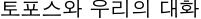}}{B_n}$),
Evi 
(\scalerel*{\includegraphics[height=11pt]{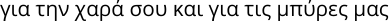}}{By}),
Jérôme (nos conversations du jeudi matin), Kinnari
($\scalerel*{\includegraphics[height=11pt]{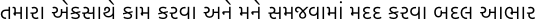}}{{\overline A}}$),
and the members of the QuaCS team
and of the LMF. My stay there has been overall nice and enjoyable.

\paragraph{My supervisors.}
Of course, I would not be writing any of this if Benoît and Vladimir were not
there to help me figure how to write a thesis. 

\selectlanguage{french}

Les mots ne sont pas suffisants pour remercier Benoît de m'avoir pris sous son
aile, de m'avoir fait vivre la recherche comme il l'entendait, avec sa joie,
son humour et sa bienveillance, et surtout de m'avoir laissé une liberté
presque totale, afin d'explorer et de collaborer avec un nombre de personnes
grandissant, tout au long de ma thèse. Benoît est également un expert pour
corriger et améliorer les présentations orales. Tout ce qu'il y a de bon dans
mes interventions est probablement grâce à lui.

\selectlanguage{british}

Before starting at QuaCS, I was supposed to do my PhD thesis with Vladimir in
Nancy. After many \emph{péripéties}\footnote{Roughly, ``twists'' in French.}, I
had funding to work in Saclay and not in Nancy. I was surprised when, a year or
so later, Vladimir announced that he was joining us, and that he could
supervise me. His supervision style is more serious and stricter, which was not
a bad thing, all things considered. \scalerel*{\includegraphics[height=15pt]{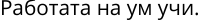}}{By}
 I will probably miss the usual jokes and stories at the coffee break or
at the bar (although I think I know most of them now).

To young people reading this: being successful with a PhD thesis is not only
about having strong scientific skills; it also requires to work with the right
people \emph{for you}. Knowing the people you are going to spend 3+ years with
is very important, and I would even say that this is more important than the
PhD subject.

\paragraph{Collaborating.}
Benoît, Kostia and I have been working with Robin Kaarsgaard, quite
unsuccessfully for now, because the subject I wanted to tackle with them
happened to be harder than what I thought. Thanks to Robin for his patience.
Kommer tid, kommer råd.

\paragraph{Le jury.}
Merci à Laurent Regnier et Thomas Ehrhard qui ont accepté de relire ce
manuscrit en détail, et merci à Marie Kerjean, Claudia Faggian et Jean
Goubault-Larrecq de participer au jury.

\paragraph{Relecture.}
Ce manuscrit est passé sous le peigne fin de plusieurs personnes, à qui je dois
beaucoup. Un gigantesque merci à James, Kostia et Manon, ainsi que mes
encadrants Benoît et Vladimir. À chaque relecture, j'avais des choses à
améliorer ; chaque personne remarquait des typos ou des erreurs différentes,
c'était un processus intéressant, et ça aurait été presque impossible pour moi
de le faire seul.

\paragraph{Mes proches.}
Je tiens à remercier toustes mes ami·es  -- avec une attention particulière
pour Elric, Marie, Adrien, Leela, Lucas, Lucas, Marion, Marion, Jean-Xavier,
Victor, Alexandre, Alexandre, Rosemonde, Cécile, Gaël, Gaspard, Clara, Yoan,
Yan, Valentin, Baptiste, Simon, Loïc, Quentin, Charlie, Arthur, Maxime, Maxime,
Manon, le mot \og ami·e \fg{} n'étant pas toujours suffisant -- qui m'ont
soutenu et accompagné ces dernières années. Il y a presque deux ans, j'ai
commencé à jouer de la basse d'abord grâce à Gaël qui m'a prêté un instrument,
puis en groupe avec Adrien, Hugo, Lucas et Marie. En plus d'avoir eu le plaisir
de découvrir une nouvelle passion, ça m'a permis de me changer les idées quand
j'en avais besoin.

Ma famille a toujours été un pilier inébranlable de ma vie, je leur dois une
grande partie de ce que je suis. Cette thèse est dédiée à ma grand-mère,
Marie-Thérèse, et mon grand-père, Jean-François, qui ne sont plus là pour lire
ces remerciements mais qui m'accompagnent dans tout ce que je fais.

\newgeometry{top=3cm, bottom=4cm, left=2cm, right=2cm}

\tableofcontents

\newgeometry{top=3cm, bottom=4cm, left=3cm, right=3cm}

\chapter*{Résumé en français}
\label{ch:resume}
\addcontentsline{toc}{chapter}{\nameref{ch:resume}}
\begin{citing}{4cm}
	``Lae mathématicien·ne est engagé·e dans la poursuite d'un rêve sans fin,
	comprendre la structure de toute chose.'' --- d'après Charles Ehresmann,
	mathématicien et membre fondateur du groupe Bourbaki.
\end{citing}

\selectlanguage{french}

\vspace{2cm}

Le principal objectif de cette thèse réside dans l'exploration des structures
fondamentales de la programmation, avec un fort aspect théorique. En
particulier, sont utilisées des structures provenant des mathématiques et de la
logique pour prouver des propriétés sur les programmes. Ce domaine de
l'informatique théorique est appelé \emph{méthodes formelles}.  Dans cette
thèse, nous nous concentrons sur les langages de programmation \emph{formels}.
Ces derniers sont développés et étudiés pour affirmer et extraire des
propriétés formelles dans des domaines spécifiques de la conception des
langages de programmation.  Le langage de programmation formel le plus standard
est le $\lambda$-calcul (prononcez \og \emph{lambda} calcul \fg{}), introduit
par Alonzo Church \cite{church1932logic} dans le cadre de son programme de
recherche sur les fondements des mathématiques. La présentation du
$\lambda$-calcul est en apparence simple : dans sa syntaxe, on peut former des
fonctions et appliquer ces fonctions. Cependant, ce langage est \emph{Turing
complet}, ce qui signifie que tout programme \emph{calculable} peut être
représenté dans le $\lambda$-calcul. Il est particulièrement pratique de
travailler en $\lambda$-calcul et il est simple d'y ajouter presque n'importe
quel type de fonctionnalité, en ajoutant par exemple des combinateurs. Il est
également flexible, en présentant plusieurs stratégies de calcul : en effet, on
peut par exemple choisir de calculer d'abord le contenu des fonctions ou
l'argument pris par ces fonctions.

Dans cette thèse, l'objectif est d'étudier deux aspects de la conception des
langages de programmation, à savoir le flot de contrôle et les effets. Le
\emph{flot de contrôle} délimite les prises de décision sous-jacentes à
l'exécution des tâches dans un paradigme de programmation. Il existe
différentes manières de contrôler le flot d'un langage de programmation.  Les
langages impératifs, tels que \texttt{Python}, sont contrôlés par des
instructions comme  \og if \fg{} et \og while \fg{}, tandis que les langages
fonctionnels comme \texttt{Caml} sont plus subtils, avec un contrôle réalisé
par des appels de fonctions ; ce dernier s'inspire du $\lambda$-calcul et de
son interprétation mathématique. L'autre aspect qui nous intéresse est celui
des \emph{effets} : un calcul \emph{à effet} se distingue fondamentalement aux
calculs \emph{purs} ; en d'autres termes, traiter des effets en programmation
signifie que l'on fait la distinction entre les opérations de base d'un langage
et ses interactions avec le monde extérieur. Ces effets se présentent sous
différentes formes. Il y a des effets généraux, parfois appelés \emph{effets de
bord}, qui interagissent directement avec un agent extérieur au programme --
par exemple, l'écriture sur une bande séparée, la gestion des entrées et
sorties de données. Il est également possible de rencontrer des \emph{effets
algébriques}, tels que l'introduction de non-déterminisme ou de comportements
probabilistes. Le calcul quantique peut également être vu comme un effet
algébrique. Ces effets sont \emph{algébriques} dans le sens où ils présentent
des caractéristiques issus de l'algèbre en mathématique. Cela signifie, en
particulier, que les questions qui s'appliquent aux structures algébriques
peuvent également s'appliquer aux effets -- par exemple, si deux effets
commutent entre eux ou non.

Dans cette thèse, nous proposons une étude formelle de ces aspects à travers le
prisme de la sémantique, un paradigme en informatique qui attribue des
interprétations logiques et mathématiques aux programmes. Une étude sémantique
d'un langage de programmation nous permet de déduire des propriétés sur les
programmes -- par exemple, s'ils terminent ou non. Les composantes
mathématiques d'une étude sémantique sont, dans notre cas, réalisées avec
l'aide de la théorie des catégories.

\section*{Sémantique}
\label{sec:fr-intro-sem}
\begin{citing}{7cm}
	Du grec ancien \textgreek{σημᾰντῐκός}, \emph{qui donne du sens}.
\end{citing}

La notion de sémantique en théorie de l'informatique a débuté avec Robert Floyd
\cite{floyd1967meanings}, dans une tentative de formaliser ce qui est attendu
d'un langage de programmation d'un point de vue logique. La sémantique d'un
langage de programmation peut revêtir différentes formes. Dans cette thèse,
nous nous intéressons en particulier à trois d'entre elles.
\begin{itemize}
	\item La première est appelée \emph{sémantique opérationnelle}. Une
		sémantique opérationnelle décrit généralement, à travers des règles de
		réécriture ou des règles d'inférence, les opérations qu'un langage de
		programmation est censé effectuer. Par exemple, étant donné un programme
		\emph{formel} $\mathtt{t}$, la sémantique opérationnelle pourrait détailler
		par exemple l'état du programme après une étape de calcul. Cela s'écrit
		souvent $\mathtt t \to \mathtt t'$.
	\item Une autre forme de sémantique consiste à fournir une \emph{théorie
		équationnelle}. Une théorie équationnelle formalise si deux programmes sont
		censés produire le même résultat, et cela s'écrit généralement $\mathtt t =
		\mathtt t'$. Elle est souvent plus générale qu'une sémantique
		opérationnelle, dans le sens où si $\mathtt t \to \mathtt t'$, alors
		$\mathtt t = \mathtt t'$.
	\item La dernière forme de sémantique utilisée dans cette thèse est la
		sémantique dénotationnelle, où cette fois, ce sont les mathématiques qui
		sont utilisées pour donner un sens à un programme. Cette sémantique
		représente les actions d'un programme sous forme d'une fonction qui prend
		en entrée l'état de départ du programme. Elle est considérée comme une
		manière de s'abstraire de la syntaxe du langage. Elle est pratique à
		plusieurs égards : on peut prouver des propriétés sur un langage sans
		dépendre de la syntaxe ; elle nous permet également d'utiliser des travaux
		mathématiques antérieurs réalisés de manière indépendante, et elle fournit
		parfois de nouvelles intuitions sur le paradigme de programmation utilisé.
\end{itemize}

L'isomorphisme de \emph{Curry-Howard} \cite{curry1934logic, howard1980types}
établit un lien fort entre les programmes et les preuves en logique formelle :
c'est en réalité une correspondance bijective entre la sémantique
opérationnelle des programmes et la réécriture en théorie de la preuve. L'ajout
de la sémantique dénotationnelle dans l'étude des langages de programmation, en
particulier avec l'aide de la théorie des catégories, a conduit à une
correspondance cette fois entre les programmes, les preuves et les catégories,
appelée la \emph{correspondance Curry-Howard-Lambek}.

La théorie des catégories est la science des fonctions, décrivant les
structures mathématiques à travers des morphismes qui peuvent être composés ;
contrairement à la théorie des ensembles où l'accent est mis sans surprise sur
les ensembles, et non sur les fonctions. La composition signifie que, étant
donné un morphisme $X \to Y$ et un morphisme $Y \to Z$, il existe un morphisme
$X \to Z$ qui est le résultat de la composition des deux précédents. Une
catégorie est une collection d'objets et de morphismes, et la théorie des
catégories définit un cadre formel pour parler de ces morphismes. Son
vocabulaire permet de formaliser des énoncés généraux sur diverses structures
mathématiques. Ce vocabulaire est la clé de voûte du contenu mathématique de
cette thèse et il est donné et expliqué tout au long du
Chapitre~\ref{ch:background}.

La théorie des catégories a un impact significatif dans l'étude mathématique
des langages de programmation, car les programmes sont eux-mêmes des morphismes
: ils transforment des ensemblese de données en d'autres ensembles de données.
De plus, deux programmes peuvent être composés ; dans de nombreux langages de
programmation, la composition de deux programmes est obtenue par concaténation.
Il est donc naturel d'utiliser la théorie des catégories pour étudier
mathématiquement les langages de programmation.

\section*{Les effets vus de manière externe}
\label{sec:fr-intro-external}

Dans le domaine des langages de programmation, le concept d'effets englobe les
interactions observables entre un programme et son environnement, encapsulant
des actions qui dépassent le domaine qu'on appelle du calcul \emph{pur}.
Contrairement aux calculs purement fonctionnels, qui présentent un comportement
déterministe, les calculs avec effet permettent aux programmes d'interagir avec
des entités externes, de manipuler des états ou d'effectuer des opérations
d'entrée/sortie. Ces effets jouent un rôle crucial dans la définition du
comportement et de la fonctionnalité des langages de programmation.

Les effets peuvent être traités de manière \emph{externe}. Par exemple, dans un
système typé -- c'est-à-dire un langage où des labels spécifiques, appelées
\emph{types}, sont attribuées aux termes -- on peut séparer les types de
calculs purs des types de calculs à effet. Ces derniers sont souvent attribués
à une \emph{modalité}. C'est ce qui est fait dans le métalangage de Moggi
\cite{moggi}, basé sur un $\lambda$-calcul simplement typé avec des types
supplémentaires pour les effets.

De plus, la sémantique dénotationnelle des effets en théorie des catégories a
été largement étudiée par Moggi \cite{moggi, moggi-lics} (voir les détails dans
la section \secref{sub:sem-lambda-effects}). Il montre que les effets sont
correctement interprétés par des \emph{monades}. Ces dernières sont la
généralisation catégorique des monoïdes de la théorie des ensembles, où un
calcul sans effet correspond à l'élément neutre du monoïde, et une composition
d'effets est similaire à la multiplication. Il est naturel de se demander si
les propriétés sur les monoïdes s'appliquent également aux monades. En
particulier, nous nous concentrons sur la question de la commutativité. Deux
éléments $x$ et $y$ dans un monoïde commutent si le produit de $x$ et $y$ est
le même que le produit de $y$ et $x$. Un élément $x$ est \emph{central} s'il
commute avec tous les autres éléments du monoïde. Par conséquent, on peut se
demander ce qu'est un \emph{effet central} dans une monade, qui est la
généralisation directe de la notion de monoïde. Dans le
Chapitre~\ref{ch:monads}, nous fournissons les réponses à la question de la
centralité des effets.

\section*{Les effets vus de manière interne}
\label{sec:fr-intro-internal}

Contrairement aux effets travaillant avec un périphérique externe -- par
exemple, l'entrée/sortie --, les \emph{effets algébriques} sont souvent
considérés internes au langage et l'informatique quantique n'y fait pas
exception.

\paragraph{Informatique quantique et réversibilité.}
Les données quantiques sont caractérisées par la \emph{superposition}. Alors
qu'un bit classique prend ses valeurs dans l'ensemble $\{0,1\}$, un bit
quantique -- souvent écrit \emph{qubit} -- est donné par une superposition :
\[
	\ket{\varphi} = \alpha \ket{0} + \beta \ket{1}
\]
où $\ket{0}$ et $\ket{1}$ sont des vecteurs dans un espace de Hilbert et
$\alpha$ et $\beta$ sont des nombres complexes. Avec cette présentation, on
peut voir dès lors que l'informatique quantique a sa place parmi les effets
algébriques. Cependant, il existe d'autres conditions sur un qubit pour qu'il
soit physiquement admissible. Les vecteurs $\ket{0}$ et $\ket{1}$ doivent être
orthogonaux et les nombres complexes $\alpha$ et $\beta$ doivent vérifier la
condition suivante : $\abs{\alpha}^2 + \abs{\beta}^2 = 1$. Ce sont les
conditions de \emph{normalisation}, et un état $\ket{\varphi}$ qui vérifie ces
conditions est dit \emph{normalisé}.

Pour préserver cette normalisation, les opérations quantiques admissibles --
appelées \emph{unitaires} -- doivent être réversibles. Il existe d'autres
opérations en informatique quantique qui ne sont pas réversibles : une pour
créer des états quantiques, par exemple initialisant à $\ket{0}$, et une pour
détruire des données quantiques, appelée \emph{mesure}. Cette dernière envoie
l'état $\ket{\varphi}$ sur $0$ avec une probabilité $\abs{\alpha}^2$ et sur $1$
avec une probabilité $\abs{\beta}^2$.

Une première approche de la programmation intégrant du quantique est le
$\lambda$-calcul quantique \cite{selinger2009quantum}. Néanmoins, ce langage ne
traite pas la programmation quantique comme un effet algébrique, car il
nécessite de mesurer les qubits pour en tirer des données classiques pour
contrôler le flux d'exécution.

Pour rester dans un effet algébrique quantique, une solution serait de
considérer uniquement les opérations quantiques réversibles, comme le montre
\cite{sabry2018symmetric}. Dans cet article, la réversibilité des fonctions du
langage est assurée grâce à l'aide du pattern-matching réversible.

\paragraph{Pattern-matching réversible.}
Considérons la fonction booléenne constante valant toujours $1$. Cette
fonction n'est pas \emph{réversible} car elle n'est pas injective. Pour être
réversible, une fonction $f \colon X \to Y$ doit être \emph{déterministe en
avant} et \emph{déterministe en arrière}. Le premier est généralement supposé :
cela signifie que tout $x \in X$ a une seule image par la fonction $f$. Le
second signifie que pour tout $y \in Y$, il existe au plus un $x \in X$ tel que
$f(x) = y$. C'est un synonyme d'\emph{injectivité}.

En informatique, le type des bits est donné par $1 \oplus 1$ où la somme
directe $\oplus$ peut par exemple dénoter l'union disjointe d'ensembles. Nous
introduisons également deux combinateurs : l'injection à gauche $\inl\!$ et
l'injection à droite $\inr\!$, tels que les termes $\inl *$ et $\inr *$
représentent respectivement le bit $0$ et le bit $1$. La fonction constante
peut alors être donnée par :
\[
	\left\{ \begin{array}{lcll}
		1 \oplus 1 &\to & 1 \oplus 1 \\
		\inl * &\mapsto & \inr * \\
		\inr * &\mapsto & \inr *
	\end{array} \right.
\]
Cette présentation fournit une intuition sur l'injectivité grâce à la syntaxe,
car les deux sorties possibles sont injectées du même côté. Pour assurer la
réversibilité de la mise en correspondance de motifs, nous forçons les motifs
d'un même côté à être \emph{orthogonaux}, une notion qui fait écho en algèbre
linéaire, où les vecteurs sont orthogonaux s'ils appartiennent à des parties
distinctes d'une somme directe. Étant donné n'importe quel terme $t$, $\inl t$
et $\inr t$ sont orthogonaux. Nous voyons dans le chapitre~\ref{ch:qu-control}
comment cette notion d'orthogonalité peut être formalisée syntaxiquement.

Étant donnés deux programmes $f \colon \B \to \B$ et $g \colon \B \to \B$ qui
sont réversibles, le programme suivant :
\begin{equation}
	\label{eq:switch}
	\phi =
	\left\{ \begin{array}{lcll}
		\B \otimes \B &\to & \B \otimes \B \\
		(0,x) &\mapsto & (0,fg(x)) \\
		(1,x) &\mapsto & (1,gf(x))
	\end{array} \right.
\end{equation}
est par exemple réversible.

\paragraph{Sémantique de la programmation quantique.}
Il existe, dans la littérature, de nombreux modèles sémantiques différents pour
les langages de programmation quantiques \cite{benoit-phd,
malherbe2013categorical, pagani2014quantitative, cho2016von,
jia2022variational, tsukada2024enriched}. Cependant, toutes ces approches sont
des modèles appropriés pour l'informatique quantique avec un \emph{contrôle
classique} uniquement : les tests, tels que les instructions $\mathtt{if}$ ou
les appels récursifs, sont contrôlés par des données classiques. Dans le
$\lambda$-calcul quantique, un qubit doit être mesuré avant d'influencer le
contrôle d'un programme. Ces modèles ne nous concernent donc pas, car
l'utilisation de la mesure détruit la superposition quantique, et elle ne
préserve donc pas l'effet quantique susmentionné. Dans cette thèse, nous
souhaitons préserver cet effet, d'où un accent sur un \emph{flux de contrôle
quantique}.

En informatique quantique, l'exemple de fonction réversible donné ci-dessus
(\ref{eq:switch}), où les bits sont généralisé à des qubits, est appelé le
\emph{quantum switch} \cite{chiribella2013quantum}. La fonction $\lambda f .
\lambda g . \phi$ ne peut pas être exprimée dans le $\lambda$-calcul quantique
ni dans aucun langage avec un contrôle classique, car le flux du programme
$\phi$ doit être contrôlé par des données quantiques. L'un des objectifs de
cette thèse est donc d'établir des bases solides pour la sémantique d'un
langage de programmation avec un flux de contrôle quantique.

\section*{Contribution de la Thèse}

Cette thèse aborde les effets d'un point de vue algébrique. Tout d'abord, dans
le chapitre~\ref{ch:monads}, nous étudions la question de la commutativité des
effets à travers leur sémantique dénotationnelle -- à savoir, les monades
fortes. Nous commençons par poser les bases catégoriques pour définir ce qu'est
le centre d'une monade (voir le théorème~\ref{th:centralisability}) et ce
qu'est une \emph{sous-monade centrale} (voir le théorème~\ref{th:centrality}).
Nous proposons ensuite une syntaxe proche du métalangage de Moggi pour capturer
les effets centraux, et nous introduisons à la fois des théories et une
sémantique dénotationnelle pour ce métalangage, que nous avons appelé
\emph{Central Submonad Calculus}. Nous montrons un résultat de langage interne
(voir le théorème~\ref{th:internal-language}), prouvant que les théories
équationnelles du Central Submonad Calculus sont essentiellement équivalentes
aux modèles de ce calcul.

En seconde partie, dans le chapitre~\ref{ch:qu-control}, nous nous concentrons
sur un sujet plus spécifique, qui est l'informatique quantique vue comme un
effet algébrique réversible. Nous présentons un langage de programmation
réversible qui capture cet effet de manière interne. En particulier, le langage
est conçu pour manipuler des états quantiques \emph{normalisés} et pour
préserver cette normalisation. Cela se fait par l'introduction d'une notion
syntaxique d'\emph{orthogonalité}, et de l'équivalent syntaxique des bases
orthonormales appelé une \emph{décomposition orthogonale}. Nous proposons une
théorie équationnelle et une sémantique dénotationnelle pour le langage, et
nous prouvons la \emph{complétude} (voir le théorème~\ref{th:qu-completeness})
: étant donnés deux termes bien typés, ils sont égaux dans la théorie
équationnelle si et seulement s'ils sont égaux dans la sémantique
dénotationnelle.

Ensuite, nous abordons la question des types de données infinis et de la
récursivité dans la programmation réversible, dans le but de l'adapter à
l'effet réversible quantique. Dans le Chapitre~\ref{ch:reversible}, nous
introduisons un langage réversible similaire au précédent, cette fois-ci sans
effets quantiques, mais où des types de données inductifs et la récursivité
sont ajoutées. Nous donnons une sémantique opérationnelle au langage, où les
opérations \emph{d'ordre supérieur}, telles que les appels récursifs, sont
considérées séparément des opérations réversibles. Nous fournissons également
une sémantique dénotationnelle dans les \emph{catégories join inverse rig} qui
ont les propriétés exactes nécessaires pour modéliser le langage. Nous montrons
que ce modèle est \emph{adéquat} par rapport à la sémantique opérationnelle
(voir le théorème~\ref{th:rev-adeq}), et nous fournissons ensuite un résultat
proche de la \emph{complétude totale} (voir le théorème~\ref{th:computable}),
montrant que toute fonction calculable dans le modèle concret des injections
partielles est représentable par une fonction dans le langage.

Enfin, le chapitre~\ref{ch:qu-recursion} contient des commentaires sur la
récursion dans le contexte de l'effet réversible quantique. En effet, il
s'avère que cet effet ne peut pas être étudié comme une monade. De plus, les
techniques utilisées dans le Chapitre~\ref{ch:reversible} ne se généralisent
pas au cas quantique : les catégories en jeu ne sont pas enrichies dans $\dcpo$
et ne semblent pas être tracées de manière convenable. Sur une note plus
positive, nous proposons une solution potentielle à la récursion quantique avec
l'aide de la récursion gardée, un cadre dans lequel les appels récursifs sont
\emph{gardés} par des modalités de retard. Pour ce faire, nous établissons un
modèle catégorique pour la récursion quantique gardée, et prouvons que ce
modèle est adapté pour interpréter la récursion (voir le
théorème~\ref{th:contractive-S}) et les types inductifs (voir le
théorème~\ref{th:param}).

\selectlanguage{british}

%

\chapter*{Introduction}
\label{ch:intro}
\addcontentsline{toc}{chapter}{\nameref{ch:intro}}
\begin{citing}{5cm}
	``Mathematical Science shows what is. It is the language of the unseen
	relations between things. But to use \& apply that language we must be able
	fully to appreciate, to feel, to seize, the unseen, the unconscious.'' ---
	according to Ada Lovelace.
\end{citing}

\vspace{2cm}

The primary focus of this thesis resides in the exploration of foundational
structures within programming, with a strong theoretical aspect. We use
structures drawn from mathematics and logic to prove properties on programs.
This area of theoretical computer science is called \emph{formal methods}. In
particular in this thesis, we set our gaze on \emph{formal} programming
languages. They are developed and studied to assert and extract formal
properties in specific areas of programming language design. The most standard
formal programming language is the $\lambda$-calculus, introduced by Alonzo
Church \cite{church1932logic} as a part of his research programme in the
foundations of mathematics. The presentation of the $\lambda$-calculus is in
appearance simple: in its syntax, one can either form functions, or apply those
functions.  However, this language is \emph{Turing complete}, meaning that any
\emph{computable} program can be represented in the $\lambda$-calculus. It is
especially convenient to work with it because it is simple to add almost any
kind of feature to it, by adding combinators for example. It is also flexible,
with different possible computation strategies.

In this thesis, we set ourselves to study two aspects of language design,
namely, control flow and effects. The \emph{control flow} delineates the
decision-making processes underlying task execution within a programming
paradigm. There are various ways to control the flow of a programming language.
Imperative languages, such as \texttt{Python}, are controlled by statements
like ``if'' and ``while'', whereas functional languages such as \texttt{Caml}
are more subtile with control through functions calls; and the latter takes
inspiration in the $\lambda$-calculus and its mathematical interpretation.
The other aspect we care about is \emph{effects}: an
\emph{effectful} computation is a fundamental distinction from \emph{pure}
computation; in other words, dealing with effects in programming means
distinguishing between the core operations of a language and its interactions
with the outside world. These effects come in different forms. There are some
general effects, sometimes called \emph{side} effects, that interact directly
with an agent outside of the program -- for example, writing on a separate
tape, managing inputs and outputs of data. We also encounter \emph{algebraic
effects}, such as the introduction of non-determinism or probabilistic
behaviour. Quantum computation can also be seen as an algebraic effect. These
effects are \emph{algebraic} in the sense that they exhibit algebraic
characteristics. This means, in particular, that the questions that apply to
algebraic structures can also apply to effects -- for example, whether two
effects commute with each other.

In this thesis, we propose a formal study of these aspects through the lens of
semantics, a paradigm in computer science that assigns logical and mathematical
interpretations to programs. A semantic study of a programming language allows
us to derive statements about programs -- for example, whether they terminate.
The mathematical components of a semantic study is, in our case, done with the
help of category theory.

\section*{Semantics}
\label{sec:intro-sem}
\begin{citing}{6cm}
	From Ancient Greek \textgreek{σημᾰντῐκός}, \emph{which gives meaning}.
\end{citing}

Semantics started with Robert Floyd \cite{floyd1967meanings}, as an attempt at
formalising what is expect from a programming language with a logic point of
view. The semantics of a programming language can come in many different
shapes. In this thesis, we are concerned with three in particular. 
\begin{itemize}
	\item The first one is called \emph{operational semantics}. An
		operational semantics usually outlines, through rewriting rules
		or inference rules, the operations a programming language is
		expected to perform. For example, given a \emph{formal} program
		$\mathtt{t}$, the operational semantics could detail for
		example what is the state of the program after one
		computational step.  This is often written $\mathtt t \to
		\mathtt t'$.
	\item Another form of semantics consists in providing an \emph{equational
		theory}. An equational theory formalises whether two programs are expected
		to eventually perform the same operation, and we write that $\mathtt t =
		\mathtt t'$. It is usually more general than an operational semantics, in
		the sense that if $\mathtt t \to \mathtt t'$, then we have $\mathtt t =
		\mathtt t'$.
	\item The final form of semantics used in this thesis is denotational
		semantics. As mentioned above, the denotational semantics of a
		programming language involves mathematics. It represents the
		actions of a program as a function on the inputs. It is thought
		as a way of abstracting away from the syntax of the language.
		It is practical in several ways: one can prove properties about
		a language without depending on the syntax, it also allows us
		to use previous mathematical work realised independently, and
		it sometimes provides new intuitions on the matters at hand.
\end{itemize}

The \emph{Curry-Howard} isomorphism \cite{curry1934logic, howard1980types}
establishes a strong link between programs and proofs in formal logic: it states
a one-to-one correspondence between the operational semantics of programs and
proof-theoretic rewriting. The addition of denotational semantics in the study
of programming language, especially with the help of category theory, led to a
one-to-one correspondence between programs, proofs and categories, called the
\emph{Curry-Howard-Lambek correspondence}.

Category theory describes mathematical structures through morphisms -- for
example, functions or relations -- that can be composed; as opposed to set
theory in which the emphasis is on sets, and not functions. Composition means
that given a morphism $X \to Y$ and a morphism $Y \to Z$, there exists a
morphism $X \to Z$ that is the result of the two former morphisms
\emph{composed}. A category is a collection of objects and morphisms, and
category theory defines a framework for talking about morphisms. This
vocabulary helps formalise general statements on various mathematical
structures. This vocabulary is the cornerstone of the mathematical content of
this thesis. It is given and explained along Chapter~\ref{ch:background}.

Category theory is especially meaningful in the mathematical study of
programming language, because programs are themselves morphisms: they transform
bits of data into bits of data. Moreover, two programs can be composed; in
imperative programming languages for instance, the composition of two programs
is obtained by concatenation. Thus it is only natural to use category theory to
study programming languages.

\section*{Effects as External Behaviour}
\label{sec:intro-external}

In the domain of programming languages, the concept of ``effects'' encompasses
the interactions between a program and its environment, encapsulating actions
that extend beyond the realm of pure computation. Unlike purely functional
computations, which adhere strictly to mathematical principles and exhibit
deterministic behaviour, effectful computations enable programs to interact
with external entities, manipulate states, or perform input/output operations.
These effects play a pivotal role in shaping the behaviour and functionality of
programming languages. 

Because of this distinction between internal states and the environment, a
natural approach is to consider effects \emph{externally} to the program. For
example, in a typed system -- \emph{i.e.}~a language where secific labels,
called \emph{types}, are assigned to terms -- one can separate the types of
pure computations from the types of effectful computations. The latter are
often assigned a \emph{modality}. This is what is done in Moggi's metalanguage
\cite{moggi}, based on a typed $\lambda$-calculus with an additional type for
effects.

Moreover, the denotational semantics of effects in category theory have been
extensively studied by Moggi \cite{moggi, moggi-lics} (see details in
\secref{sub:sem-lambda-effects}). He shows that effects are suitably
interpreted by \emph{monads}. The latter are the category theoretical
generalisation of monoids in set theory, where a computation without effect
corresponds to the neutral element of the monoid, and a composition of effects
is akin to the multiplication. It is only natural to wonder whether properties
on monoids also apply to monads. In particular, we focus on the question of
commutativity. Two elements $x$ and $y$ in a monoid are commutative if the
product of $x$ and $y$ is the same as the product of $y$ and $x$. An element
$x$ is central if it commutes with all other elements in the monoid.
Consequently, one can wonder about what is a \emph{central effect} in a monad.
In Chapter~\ref{ch:monads}, we provide the answers to the question of
centrality of effects.

\section*{Effects as an Internal Behaviour}
\label{sec:intro-internal}

Contrary to effects working with an actual external device -- for example,
input/output --, \emph{algebraic} effects -- for instance, probabilities or non
determinism -- are often considered internally to the language. Quantum
computing is not an exception in that regard.

\paragraph{Quantum computing and reversibility.}
Quantum data is characterised by \emph{superpositions}. While a classical bit
takes its values in the set $\{ 0,1 \}$, a quantum bit -- often written
\emph{qubit} -- is given by the superposition:
\[
	\ket \varphi = \alpha \ket 0 + \beta \ket 1
\]
where $\ket 0$ and $\ket 1$ are vectors in a Hilbert space and $\alpha$ and
$\beta$ are complex numbers. With this presentation, we can already see that
quantum computing has its place among algebraic effects. However, there are
more conditions on a qubit for it to be physically admissible. The vectors
$\ket 0$ and $\ket 1$ need to be orthogonal and the complex numbers $\alpha$
and $\beta$ need to verify $\abs\alpha^2 + \abs\beta^2 = 1$. These are called
\emph{normalisation conditions}, and a state $\ket \varphi$ that verifies these
conditions is said \emph{normalised}.

To preserve this normalisation, the admissible quantum operations -- called
\emph{unitaries} -- need to be reversible. There are other operations in
quantum computing that are not reversible: one to create quantum data, for
example initialising at $\ket 0$, and one to destroy quantum data, called
\emph{measurement}. The latter maps the state $\ket \varphi$ to $0$ with
probability $\abs\alpha^2$ and to $1$ with probability $\abs\beta^2$.

A first approach to programming with quantum effects is the quantum
$\lambda$-calculus \cite{selinger2009quantum}. Nevertheless, that language
does not handle quantum programming as an algebraic effect, since it requires
measurement into classical data to control the flow of execution.

To stay within a quantum algebraic effect, a solution would be to consider only
the quantum reversible operations, as shown in \cite{sabry2018symmetric}. In
that paper, the reversibility of functions of the language is ensured through
the help of reversible pattern-matching.

\paragraph{Reversible pattern-matching.}
Consider the constant boolean function $1$. This function is not
\emph{reversible} because it is not injective. To be reversible, a function $f
\colon X \to Y$ has to \emph{forward deterministic} and \emph{backward
deterministic}. The former is traditionally assumed: it means that any $x \in
X$ has a single image by the function $f$. The latter means that for all $y \in
Y$, there exists a most one $x \in X$ such that $f(x) = y$. It is a synonym of
\emph{injectivity}.

In computer science, the type of bits is given by $1 \oplus 1$ where the direct
sum $\oplus$ can for example be the disjoint union of sets. We also introduce
two combinators, the left injection $\inl\!$ and the right injection $\inr\!$,
such that the terms $\inl *$ and $\inr *$ respectively respresent the bit $0$
and the bit $1$. The constant boolean function can then be given by:
\[
	\left\{ \begin{array}{lcll}
		1 \oplus 1 &\to & 1 \oplus 1 \\
		\inl * &\mapsto & \inr * \\
		\inr * &\mapsto & \inr * 
	\end{array} \right.
\]
This presentation provides an intuition on injectivity thanks to the syntax, as
both possible outputs are injected to the same side.  To ensure reversibility
of pattern-matching, we force the patterns of a same side to be
\emph{orthogonal}, a notion that echoes in linear algebra, where vectors are
orthogonal if they belong to separate parts of a direct sum. Given any term
$t$, $\inl t$ and $\inr t$ are orthogonal. We see in
Chapter~\ref{ch:qu-control} how this notion of orthogonality can be formalised
syntactically.

Given two programs $f \colon \B \to \B$ and $g \colon \B \to \B$ that
are reversible, the following program:
\begin{equation}
	\label{eq:switch}
	\phi =
	\left\{ \begin{array}{lcll}
		\B \otimes \B &\to & \B \otimes \B \\
		(0,x) &\mapsto & (0,fg(x)) \\
		(1,x) &\mapsto & (1,gf(x))
	\end{array} \right.
\end{equation}
is for instance reversible.

\paragraph{Semantics of quantum computing.}
There are, in the literature, many different semantic models of quantum
programming languages \cite{benoit-phd, malherbe2013categorical,
pagani2014quantitative, cho2016von, jia2022variational, tsukada2024enriched}.
However, all those approaches are proper model for quantum computing with
\emph{classical control}: tests, such as $\mathtt{if}$ statements or recursive
calls, are controlled by classical data. In the quantum $\lambda$-calculus, a
qubit has to be measured before influencing the control of a program. These
models are not of interest to this thesis, because the use of measurement
breaks superposition, therefore it does not preserve the aforementioned quantum
effect. In this thesis, we wish to preserve the effect. Hence a focus on a
\emph{quantum-controlled} flow. 

In quantum computing, the example of reversible function given above
(\ref{eq:switch}), where bits are replaced with qubits, is called the
\emph{quantum switch} \cite{chiribella2013quantum}. The function $\lambda f .
\lambda g . \phi$ cannot be expressed in the quantum $\lambda$-calculus nor any
language with classical control, because the flow of the program $\phi$ needs
to be controlled by quantum data. One of the goals of this thesis is to lay
foundations for the semantics of a language with a quantum control flow.

\section*{Contribution of the Thesis}

This thesis tackles effects with an algebraic point of view. First, in
Chapter~\ref{ch:monads}, we study the question of commutativity of effects
through their denotational semantics -- namely, strong monads. We start by
laying out categorical grounds to define what is the centre of a monad (see
Theorem~\ref{th:centralisability}) and what is a \emph{central} submonad (see
Theorem~\ref{th:centrality}). We then provide a syntax close to Moggi's
metalanguage to capture central effects, and we introduce both theories and
denotational semantics for this metalanguage, that we called the \emph{Central
Submonad Calculus}. We show an internal language result (see
Theorem~\ref{th:internal-language}), proving that equational theories of the
Central Submonad Calculus are basically equivalent to models of this calculus.

Secondly, in Chapter~\ref{ch:qu-control}, we focus on a more specific subject,
which is quantum computing seen as a reversible algebraic effect. We provide a
reversible programming language that captures this effect internally. In
particular, the language is designed to manipulate \emph{normalised} quantum
states and to preserve this normalisation. This is done through the
introduction of a syntactical notion of \emph{orthogonality} and of
\emph{orthogonal decomposition}, which is the syntactical equivalent to an
orthonormal basis. We provide an equational theory and a denotational semantics
for the language, and we prove \emph{completeness} (see
Theorem~\ref{th:qu-completeness}): given two well-typed terms, they are equal
in the equational theory if and only if they are equal in the denotational
semantics.

Then, we tackle the question of infinite data types and recursion in reversible
programming, as an attempt to adapt it to the quantum reversible effect. In
Chapter~\ref{ch:reversible}, we introduce a reversible language akin to the one
before, this time without quantum effects, but where inductive data types and
recursion are added. We give an operational semantics to the language, where
\emph{higher-order} operations, such as recursive calls, are considered
separately to reversible operations. We also provide a denotational semantics
in \emph{join inverse rig categories} which have the exact properties needed to
model the language. We show that this model is \emph{adequate} with regard to
the operational semantics (see Theorem~\ref{th:rev-adeq}), and we later provide
a result close to \emph{full completeness} (see Theorem~\ref{th:computable}),
showing that any computable function in the concrete model of partial
injections is representable by a function in the language.

Finally, Chapter~\ref{ch:qu-recursion} contains comments on recursion in the
context of quantum reversible effects. Indeed, it turns out that this effect
cannot be studied as a monad. Moreover, the techniques used in
Chapter~\ref{ch:reversible} do not generalise to the quantum case: the
categories at play are not enriched in $\dcpo$ and do not seem to be properly
traced. On a more positive note, we provide a potential solution to quantum
recursion with the help of guarded recursion, a framework in which recursive
calls are \emph{guarded} by delay modalities. To do so, we lay out a
categorical model for guarded quantum recursion, and prove that this model is
suitable to interpret recursion (see Theorem~\ref{th:contractive-S}) and
inductive types (see Theorem~\ref{th:param}).

\chapter{Mathematical Background}
\label{ch:background}
\begin{citing}{5.5cm}
	``You cannot outsmart the model.'' --- Vladimir Zamdzhiev.
\end{citing}

\begin{abstract}
	We introduce the background material in mathematics, and especially in
	category theory, necessary to navigate the thesis seamlessly. In some
	sections, basic notions of type theory and programming languages are
	presented and linked to their categorical semantics.

	\paragraph{References.} This background chapter is only made up of earlier
	work, published by several different authors. It is meant as an introduction
	to the material this thesis requires, and not as a literature review. The few
	proofs inserted here and there are provided by the author, for didactic
	purposes.
\end{abstract}

\section{Category theory: some definitions}
\label{sec:background-category}

In this thesis, category theory is used as vocabulary to express a mathematical
point of view on programs and programming languages; and what we often
refer to as \emph{interpretation}, \emph{denotational semantics} or
\emph{denotation} is a map from a syntax -- or a \emph{language} -- to a
category, usually written $\sem -$. In other words, given a piece of syntax
$\mathtt t$, which we refer to as a \emph{term} of the syntax, its
interpretation $\sem{\mathtt t}$ is given in a fixed category $\CC$. If a few
coherence properties are satisfied, we allow ourselves to call $\CC$ a
\emph{model} of the language. We provide examples along the chapter, in
\secref{sub:ccc}, \secref{sub:dcpo} and \secref{sub:inductive-types}.

In this section, we recall the definitions required to work with category theory
as a denotational model of programming languages. The author recommends the book of
Tom Leinster \cite{leinster2016basic}, which introduces category theory with more
background, details and examples.

\begin{definition}[Category]
	\label{def:cat-category}
	A \emph{category} $\CC$ is a collection of objects -- usually written with
	capital Latin letters $X,Y,Z,\dots$ --  and a collection of morphisms --
	written $f:X\to Y$ to indicate that $f$ is a morphism from $X$ to $Y$ -- such
	that:	
	\begin{itemize}
		\item for every object $X$, there is a morphism $\iid_X \colon X\to X$,
		\item for every pair of morphisms $f \colon X\to Y$, $g \colon Y \to Z$, there is a
			morphism $g\circ f \colon X\to Z$ called the composition of $f$ and $g$,
		\item composition is associative: $(f \circ g) \circ h = f \circ (g \circ h)$,
		\item and for every morphism $f \colon X\to Y$, we have $\iid_Y \circ f= f =f\circ \iid_X$.
	\end{itemize}
	We write $\CC(X,Y)$ for the collection of morphisms from $X$ to $Y$.
\end{definition}

\begin{example}
	A very well-known category is the one with objects that are sets, and
	morphisms that are functions between sets; we write $\Set$ for this category.
	Note that whenever $X$ and $Y$ are sets, $\Set(X,Y)$ is also a set.
\end{example}

\begin{example}
	Vector spaces -- additive groups together with the outer action of a field
	$\mathbb K$ -- with linear maps as morphisms also form a category, written
	$\Vect$.
\end{example}

A category $\CC$ is called \emph{locally small} if all $\CC(X,Y)$ are sets.
They are then called \emph{homsets}, short for ``sets of homomorphisms''. A
locally small category $\CC$ is \emph{small} if its collection of objects is a
set.

\begin{remark}
	Throughout the thesis, given two morphisms $f \colon X \to Y$ and $g \colon
	Y \to Z$, we write $gf \colon X \to Z$ for the composition $g \circ f \colon
	X \to Z$ when it is not ambiguous.
\end{remark}

Category theory is better pictured with diagrams to represent morphisms. In
a category $\CC$, the composition of two morphisms $f \colon X \to Y$ and 
$g \colon Y \to Z$ is seen as the diagram:
\[ \begin{tikzcd}
	X & Y & Z 
	\arrow["f", from=1-1, to=1-2]
	\arrow["g", from=1-2, to=1-3]
\end{tikzcd}\]
Diagrams are sound thanks to associativity. It allows us to write the following
diagram:
\[ \begin{tikzcd}
	X & Y & Z & T
	\arrow["f", from=1-1, to=1-2]
	\arrow["g", from=1-2, to=1-3]
	\arrow["h", from=1-3, to=1-4]
\end{tikzcd}\]
without any need to be precise about in which order the composition is taken.
Moreover, given $h \colon X \to Z$, the condition $h = g \circ f$ is
described as the \emph{commutativity} of the following diagram:
\[ \begin{tikzcd}
	X & Y & Z 
	\arrow["f", from=1-1, to=1-2]
	\arrow["g", from=1-2, to=1-3]
	\arrow[bend right=30, "h"', from=1-1, to=1-3]
\end{tikzcd}\]

\begin{example}[Opposite Category]
	Given a category $\CC$, one can form the opposite category, written $\CC^{\rm
	op}$, with the same objects as $\CC$, such that there is a morphism $Y \to X$
	for every morphism $X \to Y$ in $\CC$. These new morphisms respect the same
	diagrams as in $\CC$, but with reversed arrows.
\end{example}

Once we master the definition of a category, we can introduce some vocabulary
on morphisms. We define what an isomorphism is. This definition echoes to the
one in set theory.

\begin{definition}[Isomorphism]
	Given a category $\CC$ and a morphism $f \colon X \to Y$ in that category, we
	say that $f$ is an \emph{isomorphism} if there exists a (unique) $f\inv
	\colon Y \to X$ such that $f \circ f\inv = \iid_Y$ and $f\inv \circ f =
	\iid_X$.
\end{definition}

Given an isomorphism $f \colon X \to Y$, we say that the objects $X$ and $Y$
are isomorphic. We also introduce notions akin to injective and to surjective
functions.

\begin{definition}[Monomorphism]
	Given a category $\CC$ and a morphism $f \colon X \to Y$ in that category, we
	say that $f$ is a \emph{monomorphism} -- or that $f$ is \emph{monic} -- if for all
	objects $Z$ and all morphisms $g_1, g_2 \colon Z \to X$, if $f \circ g_1 = f
	\circ g_2$, then $g_1 = g_2$.
\end{definition}

\begin{definition}[Epimorphism]
	Given a category $\CC$ and a morphism $f \colon X \to Y$ in that category, we
	say that $f$ is an \emph{epimorphism} -- or that $f$ is \emph{epic} -- if for all
	objects $Z$ and all morphisms $g_1, g_2 \colon Y \to Z$, if $g_1 \circ f =
	g_2 \circ f$, then $g_1 = g_2$.
\end{definition}

\begin{remark}
	In $\Set$, the category of sets and functions, monomorphisms (resp.
	epimorphisms) are exactly injective (resp.  surjective) functions.
\end{remark}

A morphism that is monic and epic is not necessarily an isomorphism.

\begin{lemma}[\cite{maclane}]
	Given a morphism $f$ in a category $\CC$, $f$ is a monomorphism iff it is an
	epimorphism in $\CC^{\rm op}$.
\end{lemma}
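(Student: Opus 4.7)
The plan is a direct unfolding of definitions, exploiting the fact that passing to $\CC^{\rm op}$ reverses both the direction of morphisms and the order of composition. The entire argument is a symmetry check, so no construction is required; only bookkeeping.

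First, I would fix a morphism $f \colon X \to Y$ in $\CC$, which corresponds to a morphism that I will still call $f$ but of type $Y \to X$ in $\CC^{\rm op}$. I would then recall how composition works in $\CC^{\rm op}$: for morphisms $f \colon Y \to X$ and $g \colon X \to Z$ in $\CC^{\rm op}$, their composite $g \circ^{\rm op} f \colon Y \to Z$ in $\CC^{\rm op}$ is, by definition, the composite $f \circ g \colon Y \to Z$ in $\CC$ (note the two instances of $f$ and $g$ here are the same morphisms viewed in the two categories with reversed typing).

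Next I would spell out what it means for $f$ to be an epimorphism in $\CC^{\rm op}$: for every object $Z$ and every pair $g_1, g_2 \colon X \to Z$ in $\CC^{\rm op}$, if $g_1 \circ^{\rm op} f = g_2 \circ^{\rm op} f$ then $g_1 = g_2$. Substituting the description of composition in $\CC^{\rm op}$ above, and rewriting the morphisms $g_1,g_2$ as morphisms $Z \to X$ in $\CC$, this condition becomes: for every object $Z$ and every pair $g_1, g_2 \colon Z \to X$ in $\CC$, if $f \circ g_1 = f \circ g_2$ then $g_1 = g_2$. This is literally the definition of $f$ being a monomorphism in $\CC$, so the two properties coincide and the equivalence follows in both directions simultaneously.

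There is essentially no obstacle here beyond taking care with the direction of arrows and the reversal of composition in $\CC^{\rm op}$; the only place where a slip is likely is confusing which category each of $g_1, g_2$ live in when translating between the two definitions, so I would be explicit about that translation step.
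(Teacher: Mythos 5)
Your proof is correct and is exactly the standard definitional unfolding that the paper defers to by citing Mac Lane rather than proving it; the bookkeeping of arrow directions and reversed composition is handled accurately. Nothing further is needed.
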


\begin{definition}[Functor]
	\label{def:cat-functor}
	Given two categories $\CC$ and $\DD$, a \emph{functor} $F \colon \CC \to \DD$
	is a function on objects and on morphisms, such that: for all objects $X$ in $\CC$,
	there is an object $F(X)$ in $\DD$, and for all morphsisms $f\colon X \to Y$,
	there is a morphism $F(f)\colon F(X) \to F(Y)$ in $\DD$, and
	\begin{itemize}
		\item for all objects $X$ in $\CC$, $F(X)$ is an object of $\DD$;
		\item for all morphisms $X \to Y$ in $\CC$, $F(f) \colon F(X) \to F(Y)$ is
			a morphism in $\DD$;
		\item for all objects $X$ in $\CC$, $F(\iid_X) = \iid_{F(X)}$;
		\item for all pairs of morphisms $f \colon X \to Y, g \colon Y \to Z$,
			$F(gf) = F(g) F(f)$.
	\end{itemize}
	Functors are often written with Latin capital letters $F$ and $G$.
\end{definition}

We abuse notations and sometimes drop the parenthesis when applying a functor.
For example, the object $F(X)$ is often written $FX$ when it is not ambiguous.

\begin{example}[Identity functor]
	\label{ex:identity-functor}
	Given any category $\CC$, one can define the identity functor $\iid_\CC 
	\colon \CC \to \CC$ that maps any object to itself and any morphism to itself.
\end{example}

\begin{example}
	We define $U \colon \Vect \to \Set$ that maps a vector space to its
	underlying set and that maps a linear map to itself, now seen as a function
	between sets. $U$ is a functor, and is called the \emph{forgetful functor},
	because it forgets the structure of a vector space.
\end{example}

\begin{example}[Hom functor]
	\label{ex:hom-functor}
	Given any locally small category $\CC$ and an object $X$ of $\CC$, the assignment
	$\CC(-,X)$ forms a functor $\CC^{\rm op} \to \Set$ that maps an object $Y$ to
	the set $\CC(Y,X)$ and a morphism $f \colon Y \to Z$ in $\CC$ to the function
	between sets $(- \circ f) \colon \CC(Z,X) \to \CC(Y,X)$. A similar functor,
	namely $\CC(X,-) \colon \CC \to \Set$, can also be defined.
\end{example}

\begin{example}[Category of small categories]
	One can define the category $\Cat$, with small categories as objects and
	functors between them as morphisms. The identity functor $\CC \to \CC$ is
	described in Example~\ref{ex:identity-functor}. Given two functors $F \colon
	\CC \to \DD$ and $G \colon \DD \to \mathbf{E}$, it is routine to show that $G
	\circ F$ is a functor $\CC \to \mathbf E$.
\end{example}

Category theory is the theory of \emph{arrows}, trying to establish morphisms whenever
it is possible. Given two categories $\CC$ and $\DD$, and two functors $F$ and $G$
between them, we obtain the following diagram:
\[
	\begin{tikzcd}
		\CC & \DD
		\arrow["F", bend left=30, from=1-1, to=1-2]
		\arrow["G"', bend right=30, from=1-1, to=1-2]
	\end{tikzcd}
\]
This diagram does not necessarily commute. However, it can be filled with a new
kind of arrow, as pictured below.
\[
	\begin{tikzcd}
		\CC & \DD
		\arrow["F"{name=0}, bend left=30, from=1-1, to=1-2]
		\arrow["G"'{name=1}, bend right=30, from=1-1, to=1-2]
		\arrow[shorten <=2pt, shorten >=2pt, Rightarrow, from=0, to=1]
	\end{tikzcd}
\]
That new arrow is called a \emph{natural transformation} and its definition is
as follows.

\begin{definition}[Natural transformation]
	\label{def:cat-nat}
	Given two categories $\CC$ and $\DD$, given two functors $F,G \colon \CC \to
	\DD$, a \emph{natural transformation} $\alpha \colon F \Rightarrow G$ is a
	collection of morphisms indexed by the objects of $\CC$ such that, for all
	morphisms $f \colon X \to Y$, the following diagram:
	\[\begin{tikzcd}
		F(X) & G(X) \\
		F(Y) & G(Y)
		\arrow["\alpha_X", from=1-1, to=1-2]
		\arrow["\alpha_Y"', from=2-1, to=2-2]
		\arrow["F(f)"', from=1-1, to=2-1]
		\arrow["G(f)", from=1-2, to=2-2]
	\end{tikzcd} \]
	commutes.
\end{definition}

\begin{example}
	Given a set $M$, we define a functor $T \defeq M \times - \colon \Set \to
	\Set$.  Moreover, given	a function $(- \cdot -) \colon M \times M \to M$
	and an element $e \in M$, there is a natural transformation $\eta \colon
	\iid_\Set \natto T$ and a natural transformation $\mu \colon T \circ T \natto
	T$, such that for all sets $X$:
	\[
		\eta_X = \left\{ \begin{array}{lcl}
			X &\to & M \times X \\
			x &\mapsto & (e,x)
		\end{array} \right.
		\qquad
		\mu_X = \left\{ \begin{array}{lcl}
			M \times (M \times X) &\to & M \times X \\
			(m_1, (m_2,x)) &\mapsto & (m_1 \cdot m_2,x)
		\end{array} \right. 
	\]
	In fact, if $(M,\cdot,e)$ is a monoid, then $T$ is a monad (see
	Definition~\ref{def:monad}).
\end{example}

\begin{example}[Functor Category]
	Given two categories $\CC$ and $\DD$, we write $[\CC \to \DD]$ or $\DD^\CC$
	for the category of functors $\CC \to \DD$ and natural transformations
	between them. Given a functor $F \colon \CC \to \DD$, the identity natural
	transformation $\iid_F \colon F \natto F$ is a natural transformation whose components are all
	the identity; and given two natural transformations $\alpha \colon F \natto
	G$ and $\beta \colon G \natto H$, for all $f \colon X \to Y$ in $\CC$, the
	diagram:
	\[\begin{tikzcd}
		F(X) & G(X) & H(X) \\
		F(Y) & G(Y) & H(Y)
		\arrow["\alpha_X", from=1-1, to=1-2]
		\arrow["\beta_X", from=1-2, to=1-3]
		\arrow["\alpha_Y"', from=2-1, to=2-2]
		\arrow["\beta_Y"', from=2-2, to=2-3]
		\arrow["F(f)"', from=1-1, to=2-1]
		\arrow["G(f)", from=1-2, to=2-2]
		\arrow["H(f)", from=1-3, to=2-3]
	\end{tikzcd} \]
	commutes, and thus $\beta \circ \alpha$ defined as the pointwise composition
	is a natural transformation.
	This composition of natural transformations is also called the \emph{vertical composition}
	because of the following diagram:
	\[
		\begin{tikzcd}
			\CC && \DD
			\arrow[""{name=0, anchor=center, inner sep=0}, "F", bend left=50, from=1-1, to=1-3]
			\arrow[""{name=1, anchor=center, inner sep=0}, "G"{description}, from=1-1, to=1-3]
			\arrow[""{name=2, anchor=center, inner sep=0}, "H"', bend right=50, from=1-1, to=1-3]
			\arrow["\alpha", shorten <=2pt, shorten >=4pt, Rightarrow, from=0, to=1]
			\arrow["\beta", shorten <=4pt, shorten >=2pt, Rightarrow, from=1, to=2]
		\end{tikzcd}
	\]
	Some functor categories are often used in the literature, and therefore have
	a name of their own. For example, given a small category $\CC$,
	$\Set^{\CC^{\rm op}}$ is called the \emph{category of presheaves} over $\CC$. 
\end{example}

As a cultural note, a category of presheaves is a topos: a cartesian closed
category whose objects and morphisms carry a logical meaning, we say that a
topos has an \emph{internal logic}. The details on cartesian closed categories
are found later in this chapter.

\begin{example}[Yoneda embedding]
	\label{ex:yoneda}
	Given a category $\CC$, the \emph{Yoneda embedding} is a functor $\yo \colon
	\CC \to \Set^{\CC^{\rm op}}$ 
	($\yo$ is the Japanese hiragana ``yo''
	after \textjap{米田 信夫（よねだ のぶお）}, Yoneda Nobuo) 
	such that for all object $X$ in $\CC$, $\yo(X) = \CC(-,X)$ (see
	Example~\ref{ex:hom-functor}), and for all morphism $f \colon X \to Y$ in
	$\CC$, $\yo(f)$ is a natural transformation $\CC(-,X) \natto \CC(-,Y)$ whose
	components are morphisms $\yo(f)_Z \colon \CC(Z,X) \to \CC(Z,Y) :: g \mapsto
	f \circ g$ in $\Set$.
\end{example}

\begin{definition}[Adjunction]
	\label{def:cat-adjunction}
	Given two categories $\CC$ and $\DD$, we say that two functors $F \colon \CC
	\to \DD$ and $G \colon \DD \to \CC$ are respectfully \emph{left adjoint} and
	\emph{right adjoint} if for all objects $X$ in $\CC$ and $Y$ in $\DD$, there
	is a bijection $\DD(FX,Y) \cong \CC(X,GY)$ that is natural in $X$ and $Y$. An
	adjunction can be written with a diagram, as follows:
	\[ \begin{tikzcd}
		\CC & \bot & \DD 
		\arrow[bend left=30, "F", from=1-1, to=1-3]
		\arrow[bend left=30, "G", from=1-3, to=1-1]
	\end{tikzcd} \]
	and is also written $F \dashv G$.
\end{definition}

An adjunction also gives rise to two natural transformations:
\begin{itemize}
	\item $\varepsilon \colon FG \natto \iid_\DD$, called the \emph{counit},
	\item $\eta \colon \iid_\CC \natto GF$, called the \emph{unit},
\end{itemize}
such that for every object $X$ in $\CC$ and every object $Y$ in $\DD$:
\[
	\iid_{FX} = \varepsilon_{FX} \circ F(\eta_X),
	\qquad
	\iid_{GY} = G(\varepsilon_Y) \circ \eta_{GY}.
\]

\begin{definition}[Initial and terminal object]
	\label{def:cat-init-termi}
	Given a category $\CC$, an object $X$ of $\CC$ is said to be \emph{initial} if for
	every object $Y$ in $\CC$, there is a unique morphism $X \to Y$. Conversely,
	an object $X$ of $\CC$ is said to be \emph{terminal} if for every object $Y$ in
	$\CC$, there is a unique morphism $Y \to X$.
\end{definition}

An initial object is often written $0$, and a terminal object is often written
$1$. Moreover, given a terminal object $1$ and any object $X$, we write $!_X$
for the unique morphism $X \to 1$.

\begin{definition}[Product]
	\label{def:cat-product}
	Given a category $\CC$ and two objects $X_1$ and $X_2$ of $\CC$, a
	\emph{product} of $X_1$ and $X_2$ is an object of $\CC$, usually written $X_1
	\times X_2$, equipped with two morphisms $\pi_1 \colon X_1 \times X_2 \to
	X_1$ and $\pi_2 \colon X_1 \times X_2 \to X_2$, such that for every object
	$Y$ and morphisms $f_1 \colon Y \to X_1$ and $f_2 \colon Y \to X_2$, there is
	a unique morphism $f \colon Y \to X_1 \times X_2$ such that the following
	diagram:
	\[ \begin{tikzcd}
		\ & Y & \ \\
		X_1 & X_1 \times X_2 & X_2
		\arrow["f_1"', from=1-2, to=2-1]
		\arrow["f_2", from=1-2, to=2-3]
		\arrow["\pi_1", from=2-2, to=2-1]
		\arrow["\pi_2"', from=2-2, to=2-3]
		\arrow["f", dashed, from=1-2, to=2-2]
	\end{tikzcd} \]
	commutes. The unique morphism obtained is often written $\pv{f_1}{f_2}$.
\end{definition}

\begin{example}
	Given two categories $\CC$ and $\DD$, their product $\CC \times \DD$ is also
	a category.
\end{example}

\begin{remark}
	Given a category $\CC$ with products for any pair of objects, observe that for
	all objects $X$ of $\CC$, $-\times X \colon \CC \to \CC$ is a functor, as well
	as $X \times - \colon \CC \to \CC$. Actually, $- \times - \colon \CC \times
	\CC \to \CC$ is a functor, and is commonly called a \emph{bifunctor}.
\end{remark}

\subsection{Cartesian closed categories and $\lambda$-calculus}
\label{sub:ccc}

Cartesian closed categories are the main tool to study the semantics of
functional classical programming languages -- \emph{classical} as opposed to
\emph{quantum}, which is one focus of this thesis. The adjective
\emph{cartesian} refers to the products, as introduced in
Definition~\ref{def:cat-product}. The notion of closure is more subtle: one of
the main properties is that function spaces are themselves objects of the
category.  Formally, a cartesian category $\CC$ is closed if for all objects
$Y$, the functor $- \times Y \colon \CC \to \CC$ admits a right adjoint $[Y \to
- ] \colon \CC \to \CC$, sometimes written $- ^ Y$. This adjunction embodies
the notion of currying, meaning that a program $(A \times B) \to C$ is
equivalent to a program $A \to (B \to C)$.

The author recommends reading the lecture notes of Awodey and Bauer
\cite{awodey-bauer-lecture} to have complete details about the topic of this
section, and more about the link between logic, categories and programming
languages.

\begin{definition}[Cartesian closed category]
	\label{def:cat-ccc}
	A \emph{cartesian closed} category $\CC$ is a category that has the following
	properties:
	\begin{itemize}
		\item $\CC$ has a terminal object, usually written $1$;
		\item for all pairs of objects $X$ and $Y$, there is a product $X \times Y$
			in $\CC$;
	\end{itemize}
	such that for all objects $Y$, the assignment $(- \times Y) \colon \CC \to
	\CC$ is a left adjoint functor.
\end{definition}

For all objects $Y$, we write $[Y \to -] \colon \CC \to \CC$ for the right
adjoint of $(- \times Y) \colon \CC \to \CC$. Given a pair of objects $X$ and
$Y$, the object $[X \to Y]$ is called the \emph{exponential}. 

\begin{example}
	The category $\Set$ of sets and functions between them is cartesian closed.
	Any singleton set is a terminal object.  The product of two sets $X$ and $Y$
	is the usual cartesian product $X \times Y$, which is the set:
	\[
		\{ (x,y) \alt x \in X, y \in Y \}
	\]
	and the exponential of $X$ and $Y$ is the set of functions from $X$ to $Y$, namely:
	\[
		\{ f \alt f \colon X \to Y \}.
	\]
\end{example}

There are many more examples of cartesian closed categories, such as the
category of dcpos and Scott continuous functions, introduced later in
\secref{sub:dcpo}.

Cartesian closed categories are remarkable because of their link with
$\lambda$-calculi. The latter is a paradigm for computation, at the same level
as Turing machines and recursive functions, and its raw presentation -- the
untyped $\lambda$-calculus -- is known to represent all computable functions.
In this thesis, we rather focus on typed $\lambda$-calculi, and typed
programming languages in general, because of their link to logic and category
theory. Next, we introduce briefly a simply-typed $\lambda$-calculus.

\paragraph{Simply-typed $\lambda$-calculus.}
First, we give a definition of the types, that are generated by the following grammar:
\begin{equation*}
	A ::= ~1 \alt A \times A \alt A \to A 
\end{equation*}
Note that, throughout the thesis, we might be less formal, writing for example:
\begin{equation}
	\label{eq:simple-types}
	A,B,\dots ~ ::= ~ 1 \alt A \times B \alt A \to B 
\end{equation}
for readability; and the two definitions are to be regarded as identical. To
define a language, one must also provide a set of terms, usually also
introduced by a grammar. In our case, the terms of the simply-typed
$\lambda$-calculus are given by: 
\begin{equation}
	\label{eq:simple-terms}
	M,N,\dots ~ ::= ~x \alt * \alt \lambda x^A. M 
	\alt MN \alt \pv M N \alt \pi_i M
\end{equation}
where $x$ can range among a set of variables $\set{x,y,z, \dots}$, $A$ is a
type as introduced above and $i \in \set{1,2}$. Note that variables can be free
in a term, or bound by a $\lambda$-abstraction. A term without any free
variables is called a \emph{closed} term. In order to avoid conflicts between
variables we will always work up to $\alpha$-conversion and use Barendregt's
convention~\cite[p.26]{henk1984lambda} which consists in keeping all bound and
free variables names distinct.

The types allows us to formalise what a well-typed term is, through typing
rules. A typing judgement is written $\Gamma \vdash M \colon A$, where $M$ is a
term of (\ref{eq:simple-terms}), $A$ is a type of (\ref{eq:simple-types}) and
$\Gamma$ is a context that contains variables each associated with a type $x_1
\colon A_1,~x_2 \colon A_2, \dots,~x_n \colon A_n$. The rule to form
\emph{correct} typing judgements are presented in a way that is usual in logic,
\emph{i.e.}~with inference rules. Those typing rules in the case of the
simply-typed $\lambda$-calculus are introduced in
Figure~\ref{fig:simple-formation-rules}.

\begin{figure}[!h]
	\[ \begin{array}{c}
	\infer{
    \Gamma,x \colon A\vdash x \colon A}{}
  \qquad
  \infer{
    \Gamma\vdash MN \colon B
  }{
    \Gamma\vdash M \colon A\to B
    &
    \Gamma\vdash N \colon A}
  \\[1.5ex]
  \infer{
    \Gamma\vdash * \colon 1}{}
  \qquad
  \infer{\Gamma\vdash\lambda x^A.M \colon A\to B}{\Gamma,x \colon A\vdash M \colon B}
  \qquad
  \infer{
    \Gamma\vdash\pi_i M  \colon  A_i}{\Gamma\vdash M \colon A_1\times A_2}
  \\[1.5ex]
  \infer{
    \Gamma\vdash \pv{M}{N} \colon  A\times B
  }{
    \Gamma\vdash M \colon A
    &
    \Gamma\vdash N \colon B
  }
	\end{array} \]
	\caption{Typing rules of the simply-typed $\lambda$-calculus.}
	\label{fig:simple-formation-rules}
\end{figure}

One important concept in programming language design is substitution, which
allows us to replace each occurrence of a free variable with a term of the syntax; and
this term can also contain variables.

\begin{definition}[Substitution]
	\label{def:lambda-substitution}
	Given two well-typed terms $M,N$ of (\ref{eq:simple-terms}), we write
	$M[N/x]$ for the term where the free occurrences of $x$ in $M$ are replaced
	by $N$.
\end{definition}

Whenever a term is introduced, it is important to show that it can be well-typed
with the typing rules.

\begin{lemma}[\cite{henk1984lambda}]
	\label{lem:type-subst}
	Given two well-typed terms $\Gamma,~x\colon A \vdash M \colon B$ and $\Gamma
	\vdash N \colon A$, the judgement $\Gamma \vdash M[N/x] \colon B$ is valid.
\end{lemma}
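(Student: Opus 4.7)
The plan is to proceed by structural induction on the derivation of the typing judgement $\Gamma, x \colon A \vdash M \colon B$, which is equivalent in this setting to induction on the structure of $M$ since each term constructor corresponds to exactly one typing rule. For each case, I would unfold the definition of substitution, invoke the induction hypothesis on the immediate subterms, and then reapply the appropriate typing rule of Figure~\ref{fig:simple-formation-rules}.

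First I would handle the base cases. For the variable rule, $M$ is some variable $y$: if $y = x$, then $B = A$ and $M[N/x] = N$, so the result is exactly the hypothesis $\Gamma \vdash N \colon A$; if $y \neq x$, then $M[N/x] = y$ and $(y \colon B) \in \Gamma$, so the variable rule directly yields $\Gamma \vdash y \colon B$. For the unit rule, $M = *$ and $B = 1$, and $M[N/x] = *$, so the rule applies unchanged. The pair, projection, and application cases are immediate: one just applies the induction hypothesis to each subterm (noting that the context $\Gamma, x \colon A$ is the same for both premises in the branching rules) and reassembles using the corresponding formation rule.

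The case that requires the most care — and the main obstacle I would expect — is the abstraction rule, $M = \lambda y^C . M_1$ with $B = C \to D$ and premise $\Gamma, x \colon A, y \colon C \vdash M_1 \colon D$. The subtlety is that substitution must avoid variable capture: we must ensure that $y$ is neither equal to $x$ nor free in $N$. This is precisely where Barendregt's convention, invoked just above the lemma, pays off, since it guarantees that all bound and free variable names are kept distinct. To apply the induction hypothesis, however, we also need $\Gamma, y \colon C \vdash N \colon A$, which does not follow directly from the given $\Gamma \vdash N \colon A$. I would therefore first establish (or invoke as a standard) a weakening lemma stating that extending the context with a fresh typed variable preserves well-typedness. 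Combining weakening with the induction hypothesis gives $\Gamma, y \colon C \vdash M_1[N/x] \colon D$, and the abstraction rule then yields $\Gamma \vdash \lambda y^C . M_1[N/x] \colon C \to D$, which equals $\Gamma \vdash (\lambda y^C . M_1)[N/x] \colon B$ by definition of capture-avoiding substitution.

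In summary, the proof is a routine induction whose only non-trivial aspect is managing variable binding in the $\lambda$-abstraction case; Barendregt's convention plus a weakening lemma reduce this to a direct appeal to the induction hypothesis.
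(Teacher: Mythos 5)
The paper does not prove this lemma at all --- it is cited directly from Barendregt \cite{henk1984lambda} as a standard result --- so there is no in-paper argument to compare against. Your proof is the standard structural induction and is correct: you identify the only delicate point (capture avoidance and the need for weakening in the $\lambda$-abstraction case, discharged by Barendregt's convention, which the paper adopts just before stating the lemma), and the remaining cases are routine reassembly via the formation rules.
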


The computational behaviour of a programming language is formalised through an
operational semantics. In a small step operational semantics, $M \to N$
informally means that the term $M$ evaluates to $N$ after one computational
step.  The most prominent rule of the $\lambda$-calculus is called
$\beta$-reduction:
\begin{equation}
	\label{eq:beta-reduction}
	(\lambda x^A . M) N \to M[N/x].
\end{equation}

\paragraph{Equational Theory.}
Instead of working operationally, one can consider equations between terms.
This new point of view loses information on the computational aspect of the
language, but gains in convenience. The equational theory of the simply-typed
$\lambda$-calculus is given in Figure~\ref{fig:moggi-rules}.

\begin{figure}[!h]
	\begin{center}
	\resizebox{.9\hsize}{!}{
		$
		\begin{array}{c}
    \infer[(refl)]{
      \Gamma\vdash M=M \colon A
    }{
      \Gamma\vdash M \colon A
    }
    \qquad
    \infer[(symm)]{
      \Gamma\vdash M=N \colon A
    }{
      \Gamma\vdash N=M \colon A
    }
    \\[1.5ex]
    \infer[(trans)]{
      \Gamma\vdash M=P \colon A
    }{
      \Gamma\vdash M=N \colon A
      &
      \Gamma\vdash N=P \colon A
    }
    \\[1.5ex]
    \infer[(1.\eta)]{
      \Gamma,x \colon 1 \vdash * = x \colon A
    }{}
    \qquad
    \infer[(subst)]{
      \Gamma\vdash N[M/x] = P[M/x] \colon B
    }{
      \Gamma\vdash M \colon A
      &
      \Gamma, x \colon A \vdash N = P \colon B
    }
    \\[1.5ex]
    \infer[(\pv{}{} .eq)]{
      \Gamma\vdash \pv M N = \pv{M'}{N'} \colon A\times B
    }{
      \Gamma\vdash M=M' \colon A
      &
      \Gamma\vdash N=N' \colon B
    }
		\qquad
    \infer[(\times.\beta)]{
      \Gamma\vdash\pi_i\pv{M_1}{M_2} = M_i \colon A_i
    }{
      \Gamma\vdash M_1 \colon A_1
      &
      \Gamma\vdash M_2 \colon A_2
    }
    \\[1.5ex]
    \infer[(\times.\eta)]{
      \Gamma\vdash \pv{\pi_1 M}{\pi_2 M} = M \colon A\times B
    }{
      \Gamma\vdash M \colon A\times B
    }
    \\[1.5ex]
    \infer[(app.eq)]{
      \Gamma\vdash MN=M'N' \colon B
    }{
      \Gamma\vdash M=M' \colon A\to B
      &
      \Gamma\vdash N=N' \colon A
    }
    \\[1.5ex]
    \infer[(\lambda.eq)]{
      \Gamma\vdash \lambda x^A.M = \lambda x^A.N \colon A\to B
    }{
      \Gamma, x \colon A \vdash M = N \colon B
    }
		\qquad
    \infer[(\lambda.\beta)]{
      \Gamma\vdash (\lambda x^A. M) N = M[N/x] \colon B
    }{
      \Gamma,x \colon A \vdash M \colon B
      &
      \Gamma\vdash N \colon A
    }
    \\[1.5ex]
    \infer[(\lambda.\eta)]{
      \Gamma\vdash \lambda x^A. Mx = M \colon A\to B
    }{
      \Gamma\vdash M \colon A\to B
    }
    \qquad
    \infer[(weak)]{
      \Gamma,x \colon A \vdash M=N \colon B
    }{
      \Gamma\vdash M=N \colon B
    }
	\end{array}
	$
	}
\end{center}
  \caption{Equational rules of the simply-typed $\lambda$-calculus.}
  \label{fig:moggi-rules}
\end{figure}

\paragraph{Denotational semantics.}
The point of denotational semantics is to provide a mathematical interpretation
to well-typed terms in a programming language to extract properties, design or
a better understanding of the language. This requires first an interpretation
for the types. Let us fix a cartesian closed category $\CC$, and give a
semantics to types as objects in $\CC$. The semantics of types is defined by
induction on their grammar:
\[
	\sem 1 = 1 
	\qquad \sem{A \times B} = \sem A \times \sem B 
	\qquad \sem{A \to B} = [\sem A \to \sem B]
\]
A context $\Gamma = x_1 \colon A_1, \dots, x_n \colon A_n$ is interpreted as an
object in $\CC$ given by the product of the interpretations of all the types
involved: $\sem{A_1} \times \dots \times \sem{A_n}$. 

%
%

Given a well-typed term $\Gamma \vdash M \colon A$ in the simply-typed
$\lambda$-calculus, we write its denotational interpretation $\sem{\Gamma
\vdash M \colon A}$. If we fix a cartesian closed category $\CC$ (see
Definition~\ref{def:cat-ccc}), the interpretation $\sem{\Gamma \vdash M \colon
A}$ is given as a morphism in $\CC$ from the interpretation of the context
$\Gamma$ to the interpretation of the type $A$.  The interpretation of term
judgements can then be defined by induction on the typing rules. The details
can be found in Figure~\ref{fig:lambda-den-semantics}, where $\rmcurry_{X,Y,Z}$
is the natural isomorphism $\CC(X \times Y,Z) \cong \CC(X, [Y \to Z])$ given by
the adjunction $(- \times Y) \dashv [Y \to -]$, and $\rmeval_{X,Y} \colon [X
\to Y] \times X \to Y$ is the counit of the adjunction.

\begin{figure}[!h]
	\begin{align*}
		\sem{\Gamma \vdash M \colon A} &\in \CC(\sem\Gamma, \sem A) \\
		\sem{\Gamma \vdash * \colon 1} &= ~!_{\sem\Gamma} \\
		\sem{\Gamma, x \colon A \vdash x \colon A} &= \pi_{\sem A} \\
		\sem{\Gamma \vdash MN \colon B} &= \rmeval_{\sem A,\sem B} \circ
		\pv{\sem{\Gamma \vdash M \colon A \to B}}{\sem{\Gamma \vdash N \colon A}}
		\\
		\sem{\Gamma \vdash \lambda x^A . M \colon A \to B} &=
		\rmcurry_{\sem\Gamma,\sem A, \sem B} (\sem{\Gamma, x \colon A \vdash M
		\colon B}) \\
		\sem{\Gamma \vdash \pi_i M \colon A} &= \pi_i \circ \sem{\Gamma \vdash M
		\colon A_1 \times A_2} \\
		\sem{\Gamma \vdash \pv M N \colon A \times B} &=
		\pv{\sem{\Gamma \vdash M \colon A}}{\sem{\Gamma \vdash N \colon B}}
	\end{align*}
	\caption{Denotational semantics of terms in the simply-typed $\lambda$-calculus.}
	\label{fig:lambda-den-semantics}
\end{figure}

\paragraph{Relationship between the semantics.}
So far, we have introduced some operational semantics, an equational theory
and a denotational semantics to a simply-typed $\lambda$-calculus. However,
we have yet to show what links them. 

For example, we can state a \emph{soundness} result between an operational semantics
and the denotational semantics:
\begin{center}
	Given $\Gamma \vdash M \colon A$, if $M \to N$, then $\sem{\Gamma \vdash M
	\colon A} = \sem{\Gamma \vdash N \colon A}$.
\end{center}
which is often simple to prove, by induction on the rules of the operational
semantics. The converse is expected to be trickier, and is not necessarily
true. The converse of a soundness statement is called \emph{adequacy}.

In addition, a relationship between the equational theory and the denotational
semantics can be given, called \emph{completeness}:
\begin{center}
	$\Gamma \vdash M = N \colon A$ if and only if 
	$\sem{\Gamma \vdash M \colon A}
	= \sem{\Gamma \vdash N \colon A}$.
\end{center}

Soundness is often the minimal requirement to consider a category as a model of
a specific language equipped with an operational semantics or an equational
theory. Usually, adequacy or completeness is also expected; with this stronger
property, the model can be used to study and to improve a programming language.
For example, as shown is \secref{sub:dcpo}, if the language can be interpreted
in a model that allows for fixed points, then fixed points can be safely added
to the syntax.

\subsection{Symmetric monoidal categories}

We introduce \emph{symmetric monoidal} categories, which are more general
than cartesian categories. They have applications as models of linear logic
and of quantum computing.

\begin{definition}
	A \emph{monoidal} category $\CC$ is a category equipped with the following
	structure:
	\begin{itemize}
		\item a bifunctor $\otimes \colon \CC \times \CC \to \CC$, called the
			\emph{tensor product};
		\item an object $I$ called the \emph{unit};
		\item a natural isomorphism $\alpha_{X,Y,Z} \colon (X \otimes Y) \otimes Z
			\to X \otimes (Y \otimes Z)$, called the \emph{associator};
		\item a natural isomorphism $\lambda_X \colon I \otimes X \to X$, called
			the \emph{left unitor};
		\item a natural isomorphism $\rho_X \colon X \otimes I \to X$, called the
			\emph{right unitor};
	\end{itemize}
	such that, for all objects $X$, $Y$, $Z$ and $T$, the two diagrams: 
	\[
		\begin{tikzcd}
			((X \otimes Y) \otimes Z) \otimes T 
			&
			(X \otimes Y) \otimes (Z \otimes T)
			&
			X \otimes (Y \otimes (Z \otimes T))
			\\
			(X \otimes (Y \otimes Z)) \otimes T
			&&
			X \otimes ((Y \otimes Z) \otimes T)
			\arrow["\alpha_{X,Y,Z \otimes T}", from=1-2, to=1-3]
			\arrow["\alpha_{X \otimes Y,Z,T}", from=1-1, to=1-2]
			\arrow["\iid_X \otimes \alpha_{Y,Z,T}", from=2-3, to=1-3]
			\arrow["\alpha_{X,Y,Z} \otimes \iid_T", from=1-1, to=2-1]
			\arrow["\alpha_{X,Y \otimes Z,T}", from=2-1, to=2-3]
		\end{tikzcd}
	\]
	\[
		\begin{tikzcd}
			(X \otimes I) \otimes Y 
			&&
			X \otimes (I \otimes Y) 
			\\
			& X \otimes Y &
			\arrow["\alpha_{X,I,Y}", from=1-1, to=1-3]
			\arrow["\iid_X \otimes \lambda_Y", from=1-3, to=2-2]
			\arrow["\rho_X \otimes \iid_Y"', from=1-1, to=2-2]
		\end{tikzcd}
	\]
	commute.
\end{definition}

Diagrams such as the ones above are sometimes called \emph{coherence
conditions}, because they picture out loud conditions that one would expect to
have.

\begin{definition}
	A \emph{symmetric monoidal} category $\CC$ is a monoidal category equipped
	with a natural isomorphism $\sigma_{X,Y} \colon X \otimes Y \to Y \otimes X$,
	called the \emph{symmetry} such that, for all objects $X$, $Y$ and $Z$, the
	diagrams: 
	\[
		\begin{tikzcd}
			X \otimes I
			&&
			I \otimes X \\
			& X &
			\arrow["\sigma_{X,I}", from=1-1, to=1-3]
			\arrow["\rho_X"', from=1-1, to=2-2]
			\arrow["\lambda_X", from=1-3, to=2-2]
		\end{tikzcd}
	\]
	\[
		\begin{tikzcd}
			(X \otimes Y) \otimes Z 
			&&
			(Y \otimes X) \otimes Z \\
			X \otimes (Y \otimes Z)
			&&
			Y \otimes (X \otimes Z) \\
			(Y \otimes Z) \otimes X
			&&
			Y \otimes (Z \otimes X)
			\arrow["\sigma_{X,Y} \otimes \iid_Z", from=1-1, to=1-3]
			\arrow["\alpha_{X,Y,Z}", from=1-1, to=2-1]
			\arrow["\alpha_{Y,X,Z}", from=1-3, to=2-3]
			\arrow["\sigma_{X, Y \otimes Z}", from=2-1, to=3-1]
			\arrow["\iid_Y \otimes \sigma_{X,Z}", from=2-3, to=3-3]
			\arrow["\alpha_{Y,Z,X}", from=3-1, to=3-3]
		\end{tikzcd}
	\]
	commute, and for all objects $X$ and $Y$, $\sigma_{X,Y} \circ \sigma_{Y,X} =
	\iid$.
\end{definition}

\begin{example}
	Any category with finite products (see Definition~\ref{def:cat-product}) is,
	in particular, a symmetric monoidal category. Note that the converse is not
	true.
\end{example}

The example above covers many instances of categories, such as $\Set$ or
$\Vect$.

More examples of symmetric monoidal categories are given in the thesis. One
noticeable difference between a cartesian product and a monoidal product, is
that the monoidal one does not allow for copying in general. Indeed, with
products, the morphism $\pv \iid \iid \colon X \to X \times X$ necessarily
exists, whereas there is in general no canonical morphism $X \to X \otimes X$
in a monoidal category. This hints at the fact that symmetric monoidal
categories are the right tool to reason about a linear $\lambda$-calculus,
where each variable is used exactly once.

\subsection{Enriched categories}

Categories in computer science are usually \emph{locally small}, meaning that
given two objects $A$ and $B$, there is a \emph{set} of morphisms $A\to B$.
Enrichment is the study of the structure of those sets of morphisms, which
could be vector spaces or topological spaces for example, more details can be
found in~\cite{KELLY196515, kelly1982basic, maranda_1965}. 

\begin{definition}[Enriched Category]
	\label{def:enriched-cat}
	Given a monoidal category $(\VV, \otimes, I,\alpha, \lambda, \rho)$, a
	category $\CC$ \emph{enriched} in $\VV$ (sometimes called a $\VV$-category)
	is given by:
	\begin{itemize}
		\item a collection of objects of $\CC$;
		\item an object $\CC(X,Y)$ in $\VV$ for all objects $X$ and $Y$ in $\CC$;
		\item a morphism $\iid_X \colon I \to \CC(X,X)$ in $\VV$, for all objects
			$X$ in $\CC$, and that is called the \emph{identity};
		\item a morphism $\comp_{X,Y,Z} \colon \CC(Y,Z) \otimes \CC(X,Y) \to
			\CC(X,Z)$ in $\VV$ for all objects $X$, $Y$ and $Z$ in $\CC$, called the
			\emph{composition};
	\end{itemize}
	such that for all objects $X$, $Y$, $Z$ and $T$, the following diagrams:
	\[
		\begin{tikzcd}
			(\CC(Z,T) \otimes \CC(Y,Z)) \otimes \CC(X,Y)
			&&
			\CC(Y,T) \otimes \CC(X,Y) \\
			&& \CC(X,T) \\
			\CC(Z,T) \otimes (\CC(Y,Z) \otimes \CC(X,Y))
			&&
			\CC(Z,T) \otimes \CC(X,Z)
			\arrow["\alpha", from=1-1, to=3-1]
			\arrow["\comp_{Y,Z,T} \otimes \iid", from=1-1, to=1-3]
			\arrow["\iid \otimes \comp_{X,Y,Z}", from=3-1, to=3-3]
			\arrow["\comp_{X,Y,T}", from=1-3, to=2-3]
			\arrow["\comp_{X,Z,T}"', from=3-3, to=2-3]
		\end{tikzcd}
	\]
	\[
		\begin{tikzcd}
			I \otimes \CC(X,Y)
			&&
			\CC(Y,Y) \otimes \CC(X,Y) \\
			& \CC(X,Y) &
			\arrow["\iid_Y \otimes \iid_{\CC(X,Y)}", from=1-1, to=1-3]
			\arrow["\rho_{\CC(X,Y)}"', from=1-1, to=2-2]
			\arrow["\comp_{X,Y,Y}", from=1-3, to=2-2]
		\end{tikzcd}
	\]
	\[
		\begin{tikzcd}
			\CC(X,Y) \otimes I
			&&
			\CC(X,Y) \otimes \CC(X,X) \\
			& \CC(X,Y) &
			\arrow["\iid_{\CC(X,Y)} \otimes \iid_X", from=1-1, to=1-3]
			\arrow["\lambda_{\CC(X,Y)}"', from=1-1, to=2-2]
			\arrow["\comp_{X,X,Y}", from=1-3, to=2-2]
		\end{tikzcd}
	\]
	commute.
\end{definition}

\begin{example}
	A locally small category is $\Set$-enriched. This is obtained directly with
	the definition of category and the facts that homsets are sets, thus they are
	objects of the category $\Set$. The coherence conditions are satisfied thanks
	to the associativity of composition in a category and to the axioms of the
	identity morphism.
\end{example}

\begin{example}
	A cartesian closed category is enriched over itself. Indeed, composition is
	obtained by currying the following morphism:
	\[
		\begin{tikzcd}
			{[Y \to Z]} \times {[X \to Y]} \times X 
			&&&
			{[Y \to Z]} \times Y
			&&& 
			Z
			\arrow["\iid_{[Y \to Z]} \times \rmeval_{X,Y}", from=1-1, to=1-4]
			\arrow["\rmeval_{Y,Z}", from=1-4, to=1-7]
		\end{tikzcd}
	\]
	and it is routine to show that it verifies the coherence conditions.
\end{example}

\begin{definition}[Enriched Functor]
	Given two $\VV$-enriched categories $\CC$ and $\DD$, a $\VV$-enriched functor
	$F$ maps every object $X$ of $\CC$ to an object of $\DD$, written $FX$, and
	provides, for all objects $X$ and $Y$ in $\CC$, a morphism $F_{X,Y} \colon
	\CC(X,Y) \to \DD(FX,FY)$ in $\VV$ such that:
	\[
		F_{X,X} \circ \iid_X = \iid_{FX} 
		\qquad F_{X,Z} \circ \comp_{X,Y,Z} = \comp_{FX,FY,FZ} \circ (F_{Y,Z}
		\otimes F_{X,Y})
	\]
	for all objects $X$, $Y$ and $Z$ in $\CC$.
\end{definition}

\section{Fixed Points}
\label{sec:fixed-points}

This section introduces fixed point theorems that are relevant to this thesis;
namely, fixed points in partially ordered sets, which allow for the
interpretation of recursion and while loops, and initial algebras, which are a
canonical tool to provide a semantics for inductive and recursive data types. 

\subsection{Dcpos}
\label{sub:dcpo}

We work with the notion of partially ordered sets, usually called posets.

\begin{definition}[Directed subset]
	\label{def:directed}
	A non-empty subset $D$ of a poset $X$ is \emph{directed} if every pair of
	elements in the subset $D$ has an upper bound also in $D$. 
\end{definition}

\begin{definition}[Dcpo]
	\label{def:dcpo}
	A dcpo -- short for \emph{directed complete partial order} -- is a poset $X$
	such that every directed subset $D \subseteq X$ has a supremum in $X$. A
	\emph{pointed} dcpo $(X,\bot)$ is a dcpo $X$ that has a least element, that
	we usually write $\bot$. If $D$ is directed, we write $\sup D$ for its upper
	bound.
\end{definition}

\begin{example}
	The set of booleans $\B = \{ 0,1 \}$ with equality as an order is a dcpo.
	We can add a bottom element $\bot$, and we write $\B_\bot$ for the set
	$\{ \bot, 0, 1 \}$ with the following order:
	\[
		\bot \leq 0 \qquad \bot \leq 1.
	\]
	The partially ordered set $(\B_\bot, \leq)$ is a pointed dcpo. We say that
	the order is \emph{flat}. A partial order is sometimes drawn for a better
	view of its behaviour. The dcpo $\B_\bot$ is then pictured as:
	\[
		\begin{tikzcd}
			0 && 1 \\
			& \bot &
			\arrow[no head, from=1-1, to=2-2]
			\arrow[no head, from=1-3, to=2-2]
		\end{tikzcd}
	\]
	where $x \leq y$ iff there is a line between $x$ and $y$ and $x$ is
	\emph{below} $y$.
\end{example}

A common example of pointed dcpo used in the semantics of programming
languages, \emph{e.g.} PFC \cite{plotkin1977lcf}, is the flat dcpo of natural
numbers, also called by Plotkin \emph{the standard collection of domains for
arithmetic}. This is given in the next example.

\begin{example}
	The flat dcpo of natural numbers $\N_\bot$ is given by:
	\[
		\begin{tikzcd}
			0 & 1 & \dots & n & \dots \\
			&& \bot &&
			\arrow[no head, from=1-1, to=2-3]
			\arrow[no head, from=1-2, to=2-3]
			\arrow[no head, from=1-3, to=2-3]
			\arrow[no head, from=1-4, to=2-3]
			\arrow[no head, from=1-5, to=2-3]
		\end{tikzcd}
	\]
	Its directed subsets are simple: they are either sigletons or $\{ \bot, n\}$
	with $n \in \N$.
\end{example}

Next, we define functions between dcpos called \emph{Scott continuous} after
Dana Scott. He is famous for his contribution to \emph{domain theory}, which
encompasses all the content of this subsection about dpcos. A detailed account
of domain theory can be found in \cite{abramsky95domain}. A continuous function
needs first to be monotone -- \emph{i.e.}, given $x \leq y$, then $f x \leq f
y$; what French speakers would prefer to call an \emph{increasing}
function.

\begin{definition}[Scott continuous]
	Given two dcpos $X$ and $Y$, a monotone function $f \colon X \to Y$ is
	\emph{Scott continuous} if for every directed subset $D \subseteq X$, $f(\sup
	D) = \sup f(D)$.
\end{definition}

\begin{example}
	The function $f \colon \N_\bot \to \N_\bot$ defined as:
	\[
		f = \left\{ \begin{array}{lcl}
			n &\mapsto & n+1 \\
			\bot &\mapsto & \bot
		\end{array}
		\right.
	\]
	is Scott continuous.
\end{example}

If $X$ and $Y$ are dcpos, then the set of Scott continuous functions
$X \to Y$ is also a dcpo.

Note that dcpos and Scott continuous maps form a category $\DCPO$. The category
of pointed dcpos and Scott continuous maps is written $\dcpob$. The category
$\dcpob$ is cartesian closed, which means that it can be used for the
denotational semantics of a $\lambda$-calculus.

\begin{theorem}[Kleene's Fixed Point \cite{stoltenberg1994domains}]
	\label{th:kleene}
	If $(X,\bot)$ is a pointed dcpo and $f \colon X \to X$ is a Scott continuous
	function, then the function $f$ has a least fixed point, obtained as $\fix f
	= \sup \set{f^n(\bot) \alt n \in \N}$.
\end{theorem}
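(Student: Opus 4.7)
The plan is a very classical one, organised in four small steps. First I would check that the set $S \defeq \{f^n(\bot) \mid n \in \N\}$ is directed so that its supremum actually exists in $X$. Since $\bot$ is the least element, we have $\bot \leq f(\bot)$, and then monotonicity of $f$ (which follows from Scott continuity, since singletons are directed and the equation $f(\sup \{x\}) = \sup\{f(x)\}$ reduces, more usefully, to checking that preservation of suprema on two-element directed sets $\{x,y\}$ with $x \leq y$ gives $f(x) \leq f(y)$) yields $f^n(\bot) \leq f^{n+1}(\bot)$ by a straightforward induction. Hence $S$ is a chain, and every chain is directed, so $\fix f \defeq \sup S$ exists in the dcpo $X$.

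Next I would show that $\fix f$ is a fixed point of $f$. Applying Scott continuity to the directed set $S$ gives
\[
f(\fix f) \;=\; f\bigl(\sup\{f^n(\bot) \mid n \in \N\}\bigr) \;=\; \sup\{f^{n+1}(\bot) \mid n \in \N\}.
\]
Then I would argue that $\sup\{f^{n+1}(\bot) \mid n \in \N\} = \sup\{f^n(\bot) \mid n \in \N\}$: the two sets differ only by the element $f^0(\bot) = \bot$, which is below every other element of $S$, so they have the same upper bounds and hence the same supremum. This yields $f(\fix f) = \fix f$.

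Finally I would establish minimality. Let $y \in X$ be any fixed point of $f$, i.e.\ $f(y) = y$. Since $\bot \leq y$, monotonicity of $f$ gives $f^n(\bot) \leq f^n(y) = y$ by induction on $n$, using $f(y) = y$ at each step. Therefore $y$ is an upper bound of $S$, and by definition of supremum $\fix f = \sup S \leq y$.

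I do not expect a genuine obstacle here; the only place calling for some care is the second step, where one must not simply invoke Scott continuity and stop, but also observe the cofinality argument $\sup\{f^{n+1}(\bot)\} = \sup\{f^n(\bot)\}$. Everything else is routine induction together with the hypotheses that $X$ is pointed (to start the chain at $\bot$) and that $f$ is Scott continuous (to push $f$ through $\sup$).
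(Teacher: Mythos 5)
Your proof is correct and is the standard Kleene iteration argument; the thesis states this theorem as an imported background result (citing Stoltenberg-Hansen et al.) without giving a proof, and what you write is precisely the classical proof found there. All three steps — directedness of $\{f^n(\bot) \mid n \in \N\}$ via monotonicity, the fixed-point identity via Scott continuity plus the observation that dropping $\bot$ does not change the supremum, and minimality by induction against any fixed point $y$ — are sound and complete.
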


\paragraph{Recursion.}
We can add a new term to our $\lambda$-calculus to capture \emph{recursion}.
With the same types given in (\ref{eq:simple-types}), we add to the grammar in
(\ref{eq:simple-terms}) the following:
\begin{equation}
	\label{eq:fix-term}
	M,N,\dots ::= \cdots \alt \ffix M
\end{equation}
with the typing rule given below.
\[
	\infer{\Gamma \vdash \ffix M \colon A}{\Gamma \vdash M \colon A \to A}
\]
The operational behaviour of this new term is as expected:
\[
	\ffix M \to M (\ffix M)
\]
and its denotational semantics in $\dcpob$ is given by Kleene's fixed point
(see Theorem~\ref{th:kleene}), taken pointwise. Details can be found in the
original paper by Plotkin \cite{plotkin1977lcf}. We can easily see that this
semantics is sound, namely
\[
	\sem{\ffix M} = \sem{M (\ffix M)}.
\]

\subsection{Initial Algebras}
\label{sub:inductive-types}

Inductive data types are written in the syntax as some least fixed point of a
type judgement, \emph{e.g.} $\mu X . A$. As an example, the type of natural
numbers is given by $\mu X . 1 + X$ and the type of trees, whose nodes have
type $A$, is $\mu X . 1 + (X \times A \times X)$. As said earlier, types are
represented as objects in the category; but handling inductive types means that
we need to handle type variables too, and thus a type judgement $\Theta \vdash
A$, where $\Theta$ is a set of type variables, is an \emph{object mapping}, or
rather a functor. We show below how to consider fixed points of functors in our
categorical setting. 

\begin{definition}[Algebra]
	\label{def:initial-algebra}
	Given a functor $F \colon \CC \to \CC$, a pair $(X,f)$ composed of an object
	$X$ and a morphism $f \colon FX \to X$ is a called an \emph{$F$-algebra}.
	Given two $F$-algebras $(X,f)$ and $(Y,g)$, a morphism $h \colon X \to Y$
	is an \emph{$F$-algebra homomorphism} if the following diagram commutes:
	\[
		\begin{tikzcd}
			FX & X \\
			FY & Y
			\arrow["f", from=1-1, to=1-2]
			\arrow["g", from=2-1, to=2-2]
			\arrow["Fh"', from=1-1, to=2-1]
			\arrow["h", from=1-2, to=2-2]
		\end{tikzcd}
	\]
	When the category of $F$-algebras and $F$-algebra homomorphisms has an
	initial object, the latter is called the \emph{initial $F$-algebra}.
\end{definition}

\begin{lemma}[Lambek's lemma \cite{adamek2018fixed}]
	\label{lem:lambek}
	Given an endofunctor $F:\CC\to\CC$ and an initial $F$-algebra
	$(X,\alpha\colon FX\to X)$, then $\alpha$ is an isomorphism.
\end{lemma}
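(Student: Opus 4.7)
The plan is to produce an inverse for $\alpha$ by exploiting initiality twice: once to construct a candidate inverse, and once to force it to satisfy both inverse equations.

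First, I would observe that whenever $(X,\alpha \colon FX \to X)$ is an $F$-algebra, applying $F$ yields another $F$-algebra, namely $(FX, F\alpha \colon F(FX) \to FX)$. By the assumed initiality of $(X,\alpha)$ in the category of $F$-algebras, there is a unique $F$-algebra homomorphism $h \colon (X,\alpha) \to (FX, F\alpha)$, that is, a morphism $h \colon X \to FX$ in $\CC$ making the square
\[
\begin{tikzcd}
FX & X \\
F(FX) & FX
\arrow["\alpha", from=1-1, to=1-2]
\arrow["Fh"', from=1-1, to=2-1]
\arrow["F\alpha"', from=2-1, to=2-2]
\arrow["h", from=1-2, to=2-2]
\end{tikzcd}
\]
commute, i.e.\ $h \circ \alpha = F\alpha \circ Fh$. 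This $h$ is the candidate inverse of $\alpha$.

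Next, I would check that $\alpha \circ h \colon X \to X$ is an $F$-algebra homomorphism from $(X,\alpha)$ to itself. Using the square above and functoriality of $F$, one gets $(\alpha \circ h) \circ \alpha = \alpha \circ (h \circ \alpha) = \alpha \circ F\alpha \circ Fh = \alpha \circ F(\alpha \circ h)$, which is exactly the homomorphism condition. Since the identity $\iid_X$ is also such an endomorphism and initiality forces this homomorphism to be unique, one concludes $\alpha \circ h = \iid_X$. For the other composite, I would simply compute $h \circ \alpha = F\alpha \circ Fh = F(\alpha \circ h) = F(\iid_X) = \iid_{FX}$, using functoriality of $F$ and the identity just established.

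Putting the two equations together shows that $h$ is a two-sided inverse of $\alpha$, so $\alpha$ is an isomorphism. The only subtlety I expect is making sure the direction of the homomorphism square is set up correctly (so that the derived equation $\alpha \circ h = \iid_X$ really follows from uniqueness of endomorphisms of the initial object), and remembering to invoke functoriality in the form $Fh \circ F\alpha$ versus $F(\alpha \circ h)$; both are routine once the diagram is drawn carefully.
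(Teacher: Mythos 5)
Your proof is correct and is the standard argument for Lambek's lemma; the paper itself does not include a proof (it simply cites the literature), and your construction of the candidate inverse $h$ via initiality applied to $(FX, F\alpha)$, the uniqueness argument giving $\alpha \circ h = \iid_X$, and the functoriality computation giving $h \circ \alpha = \iid_{FX}$ are all exactly right, including the direction of the homomorphism square.
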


With Lambek's lemma, we know that an initial algebra provides an object $X$
such that $X\cong FX$. Therefore we can see that the object $X$ is a fixed
point of the endofunctor $F$, as requested. However, we need a stonger notion
of algebra for the denotational semantics of inductive types. Hence the next
definition.

\begin{definition}[Parameterised Initial Algebra]
	\label{def:para-initial-algebra}
	Given two categories $\CC$ and $\DD$ and a functor $F \colon \CC \times \DD
	\to \DD$, a \emph{parameterised initial algebra} for $F$ is a pair $(F\nnoma,
	\phi^F)$, such that:
	\begin{itemize}
		\item $F\nnoma \colon \CC \to \DD$ is a functor;
		\item $\phi^F \colon F \circ \pv{\iid}{F\nnoma} \natto F\nnoma \colon \CC \to
			\DD$ is a natural isomorphism;
		\item for every object $X$ in $\CC$, the pair $(F\nnoma,\phi^F_X)$ is an
			initial $F(X,-)$-algebra.
	\end{itemize}
\end{definition}

\begin{remark}
	Observe that the previous definition with $\CC = 1$, the category with one
	object and the identity, we recover Definition~\ref{def:initial-algebra}.
	The notion of parameterised initial algebra is then more general.
\end{remark}

The existence of parameterised initial algebras is given by the theorem found
in \cite[Corollary~7.2.4]{fiore04axiomatic} and recalled in
Theorem~\ref{th:param-alg}.

\begin{definition}
	\label{def:alg-compact}
	A category $\CC$ is \emph{parameterised $\DCPO$-algebraically complete}
	if all functors as described in Definition~\ref{def:para-initial-algebra}
	admit a parameterised initial algebra.
\end{definition}

In the following, we present sufficient conditions, outlined by Fiore in
\cite{fiore04axiomatic}, for a category to be $\dcpo$-algebraically complete.

\begin{definition}[Ep-pair \cite{fiore04axiomatic}]
	\label{def:ep-pair}
	Given a $\dcpo$-category $\CC$, a morphism $e \colon X \to Y$ in $\CC$ is
	called an \emph{embedding} if there exists a morphism $p \colon Y \to X$ such
	that $p \circ e = \iid_X$ and $e \circ p \leq \iid_Y$. The morphisms $e$ and
	$p$ form an \emph{embedding-projection pair} $(e,p)$, also called
	\emph{ep-pair}.
\end{definition}

\begin{remark}
	Similarly to embedding, a morphism $p$ is called a \emph{projector} if it is
	part of an ep-pair $(e,p)$.
\end{remark}

We recall that an \emph{ep-zero} \cite[Definition~7.1.1]{fiore04axiomatic} is
an object $0$ such that:
\begin{itemize}
	\item $0$ is an initial object;
	\item given any morphism $f \colon 0 \to Y$, $f$ is an embedding;
	\item $0$ is a terminal object;
	\item given any morphism $g \colon X \to 0$, $g$ is a projector.
\end{itemize}

\begin{theorem}[\cite{fiore04axiomatic}]
	\label{th:param-alg}
	A $\DCPO$-category $\CC$ with an ep-zero and colimits of $\omega$-chains
	of embeddings is parameterised $\DCPO$-algebraically complete.
\end{theorem}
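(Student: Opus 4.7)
The plan is to follow the classical limit-colimit coincidence argument of Smyth and Plotkin, adapted to the parameterised setting as in Fiore's monograph. First I would fix a functor $F \colon \CC \times \DD \to \DD$ that is locally continuous (which is the implicit requirement in the $\DCPO$-enriched setting, so that $F$ preserves suprema of $\omega$-chains of embeddings in each argument). The goal is to build, for each parameter $X$ in $\CC$, an initial $F(X,-)$-algebra in such a way that the assignment is functorial in $X$ and the structure maps are natural.

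The core construction, for a fixed $X$, is the following. Using the ep-zero $0$, there is a unique embedding $e_0 \colon 0 \to F(X,0)$. Applying $F(X,-)$ repeatedly and noting that locally continuous functors preserve embeddings (the projection of $F(X,e)$ is $F(X,p)$), I get an $\omega$-chain of embeddings
\[
0 \xrightarrow{e_0} F(X,0) \xrightarrow{F(X,e_0)} F(X,F(X,0)) \to \cdots
\]
By hypothesis this $\omega$-chain has a colimit, call it $F\nnoma(X)$, with colimit cocone $(\iota_n)$. Applying $F(X,-)$ to the shifted chain and using local continuity gives a second colimit presentation, which yields a canonical isomorphism $\phi^F_X \colon F(X, F\nnoma(X)) \to F\nnoma(X)$. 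Initiality of this algebra is proved by the standard argument: any other algebra $(Y, g)$ induces by iteration a cocone under the $\omega$-chain, hence a unique mediating morphism, and commutativity of the algebra square is obtained by appealing to the universal property of the colimit together with the ep-pair structure (using $p_n \circ \iota_n \leq \iid$ and $\sup_n \iota_n \circ p_n = \iid$).

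Next I would promote this pointwise construction to a functor $F\nnoma \colon \CC \to \DD$. Given $f \colon X \to X'$ in $\CC$, the $\omega$-chain for $X'$ receives a cocone from the chain for $X$ built inductively from $F(f, -)$ applied to each stage; the universal property produces a unique morphism $F\nnoma(f) \colon F\nnoma(X) \to F\nnoma(X')$. Functoriality ($F\nnoma(\iid) = \iid$, $F\nnoma(g \circ f) = F\nnoma(g) \circ F\nnoma(f)$) is checked by uniqueness of the mediating morphism. Naturality of $\phi^F$ with respect to $X$ is checked on each component of the colimit cocone, again by uniqueness.

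The main obstacle, I expect, will be the parameterised aspect: showing that $\phi^F$ is natural in $X$ and that $F\nnoma$ is genuinely a functor requires carefully juggling two universal properties simultaneously, and one must use local continuity of $F$ in both arguments to commute $F$ with the colimits over $\omega$. The embedding/projection decomposition (Definition~\ref{def:ep-pair}), combined with the ep-zero axiom guaranteeing that the seed morphism $0 \to F(X,0)$ is always an embedding with a canonical projector, is what makes the chain well-defined and locks in the limit-colimit coincidence that drives the whole argument. Once those naturality squares commute, initiality at each $X$ has already been established, so the triple $(F\nnoma, \phi^F)$ is the desired parameterised initial algebra.
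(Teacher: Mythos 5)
This theorem is imported from Fiore's monograph and the paper gives no proof of its own, so there is nothing to diverge from; your sketch reconstructs exactly the standard Smyth--Plotkin/Fiore argument behind the citation (seed the $\omega$-chain at the ep-zero, use local continuity of the $\DCPO$-functor to identify the colimit of the shifted chain with $F(X,F\nnoma(X))$, derive initiality from the limit--colimit coincidence $\sup_n \iota_n \circ p_n = \iid$, and obtain $F\nnoma$ on morphisms and the naturality of $\phi^F$ from uniqueness of mediating maps). Your explicit observation that the functors must be taken to be $\DCPO$-functors (locally continuous) is the right reading of the statement, and the sketch is essentially correct as the intended proof.
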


Actually, a category that verifies the conditions above has stronger
properties: it is parameterised $\DCPO$-algebraically $\omega$-compact, namely
it has parameterised initial algebras and parameterised final coalgebras for
all $\dcpo$-functors. However, we do not need such a strong result in this
thesis.

The latest theorem above means that any $F$ as introduced in
Definition~\ref{def:para-initial-algebra} admits a parameterised initial
algebra given that $\CC$ is a $\DCPO$-category with an ep-zero and colimits of
$\omega$-chains of embeddings.

\paragraph{Inductive types.}
We take an example inspired from the metalanguage FPC
\cite{gunter1992semantics}, with details in \cite[Chapter~8]{fiore04axiomatic}.
We are given the following types:
\begin{equation}
	\label{eq:fpc-types}
	A,B ~::=~ X \alt A + B \alt A \otimes B \alt \alt \mu X . A
\end{equation}
with the typing rules:
\[
	\begin{array}{c}
		\infer{\Theta, X \vdash X}{}
		\qquad
		\infer[\star \in \set{+,\otimes}]{\Theta \vdash A \star B}{
			\Theta \vdash A
			&
			\Theta \vdash B
		}
		\qquad
		\infer{\Theta \vdash \mu X . A}{\Theta, X \vdash A}
	\end{array}
\]

Their semantics is given in a symmetric monoidal $\dcpo$-enriched category
$\CC$ with coproducts that verifies the hypothesis of
Theorem~\ref{th:param-alg}. The semantics of a type context $\Theta$ is
$\sem\Theta = \CC^{\abs\Theta}$ and thus the interpretation of a type judgement
$\Theta \vdash A$ is given by a functor $\sem{\Theta \vdash A} \colon
\CC^{\abs\Theta} \to \CC$. This interpretation is defined by induction on the
typing rules, and the only non-trivial case is the fixed point constructor,
whose semantics is given by:
\[
	\sem{\Theta \vdash \mu X . A} = \sem{\Theta, X \vdash A}\nnoma
\]
where $(-)\nnoma$ is defined in Definition~\ref{def:para-initial-algebra}.

\section{Restriction and Inverse Categories}
\label{sec:restriction}

A significant part of this thesis is the study of reversiblity in programming
languages and in their semantics. This section is concerned with presenting the
categorical tools in the literature that formalise invertibility. Note that
\emph{reversible} does not mean \emph{bijective}: a partial injection is
reversible, in the sense that all of its possible outputs have a unique and
deterministic corresponding input. Let us say more about partial injections.
Given two sets $X$ and $Y$, the image of a partial function $f$ is given as
follows: $f(x)$ if $x$ is in the domain of $f$, and $\bot$ or \emph{undefined}
if $x$ is not in the domain of $f$. A simple example would be the partial
injection $f \colon \set{0,1} \to \set{0,1}$ such that $f(0) = 1$ and $f$ is
undefined on $1$. Sets and partial injections form a category $\PInj$, which is
the canonical example among restriction and inverse categories, the focus of
this section.

For further reading, the author recommends the original work of Guo
\cite{guo2012products}, Giles \cite{giles2014investigation} and Kaarsgaard
\cite{kaarsgaard2017thesis}.

\subsection{Basic structure}
\label{sub:inv-basic}

The axiomatisation of inverse categories gives the conditions for the morphisms
of a category to be \emph{partial injections}. First, the notion of
restriction allows us to capture the \emph{actual} domain of a morphism through a
partial identity function. Historically, \emph{inverse} categories
\cite{kastl1979inverse} were introduced before \emph{restriction} categories,
but the latter are more convenient to introduce the subject.

\begin{definition}[Restriction~\cite{cockett2002restriction-I}]
	\label{def:restr} 
	A \emph{restriction} structure is an operator that maps each morphism $f \colon
	X\rightarrow Y$ to a morphism $\res f \colon X\rightarrow X$ such that for
	all $g \colon X \to Z$ and $h \colon Y \to T$, we have:
	\begin{align*}
		&\ f \circ \res f = f,  &\quad&
		\res f \circ \res g = \res g \circ \res f, \\
		&\ \res{f \circ \res g} = \res f \circ \res g, &\quad&
		\res h \circ f = f \circ \res{h \circ f}. 
	\end{align*}
	A morphism $f$ is said to be \emph{total} if $\res f = \iid_X$. A category
	with a restriction structure is called a \emph{restriction category}.
\end{definition}

\begin{remark}
	Note that the definition implies that for all $f \colon X \to Y$, there is a unique
	$\res f \colon X\rightarrow X$.
\end{remark}

\begin{example}
	Given two sets $X$ and $Y$, any partial injection $f \colon X \to Y$ defined
	on a subset $X'\subseteq X$ and undefined on $X\setminus X'$ is given as
	follows:
	\[
		\left\{ \begin{array}{ll}
			f(x) & \text{when } x \in X' \\
			\text{undefined} & \text{when } x \notin X'
		\end{array} \right.
	\]
	Then, the restriction of $f$, the morphism,
	$\res f \colon X \to X$, is given by:
	\[
		\left\{ \begin{array}{ll}
			x & \text{when } x \in X' \\
			\text{undefined} & \text{when } x \notin X'
		\end{array} \right.
	\]
	which is the identity on $X'\subseteq X$ and undefined on $X \setminus X'$.
	This example shows that $\PInj$ is a restriction category.
\end{example}

\begin{definition}[Restriction Functor~\cite{cockett2002restriction-I}]
	\label{def:restr-cat}
	Given two restriction categories $\CC$ and $\DD$, a functor $F \colon \CC
	\rightarrow \DD$ is a \emph{restriction functor} if $\res{F(f)} =
	F \res f$ for all morphism $f$ of $\CC$. The definition is
	canonically extended to bifunctors. 
\end{definition}

To interpret reversibility, we need to introduce a notion of
reversed process, a process that exactly reverses another process.
This is given by a generalised notion of inverse.

\begin{definition}[Inverse category]
  \label{def:invcat}
	An \emph{inverse category} is a restriction category where all morphisms are
	partial isomorphisms; meaning that for $f \colon X \to Y$, there exists
	$f^{\circ} \colon Y \to X$ such that $f^{\circ} \circ f = \res f$ and
	$f \circ f\pinv = \rc f$, and is called the \emph{partial inverse}. 
\end{definition}

\begin{lemma}
	In an inverse category, the partial inverse $f^{\circ} \colon Y \to X$
	of a morphism $f \colon X \to Y$ is unique.
\end{lemma}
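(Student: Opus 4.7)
The plan is to assume two morphisms $g, h \colon Y \to X$ both satisfy the partial inverse conditions for $f$, namely $g \circ f = \res f = h \circ f$ and $f \circ g = \res g$, $f \circ h = \res h$, and then to deduce $g = h$ by a direct calculation in the restriction axioms of Definition~\ref{def:restr}.

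First, I would derive the two symmetric identities $g = h \circ \res g$ and $h = g \circ \res h$. Starting from the first restriction axiom applied to $g$, we get $g = g \circ \res g$; substituting $\res g = f \circ g$ rewrites this as $g = g \circ f \circ g$, and then using $g \circ f = \res f = h \circ f$ on the leftmost factor yields $g = h \circ f \circ g = h \circ \res g$. The computation for $h = g \circ \res h$ is completely symmetric.

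Next, I would show that the restriction idempotents coincide, i.e., $\res g = \res h$. Starting from $\res g = f \circ g$ and substituting $g = h \circ \res g$, we obtain $\res g = f \circ h \circ \res g = \res h \circ \res g$. By the commutativity of restriction idempotents (the second axiom of Definition~\ref{def:restr}), $\res h \circ \res g = \res g \circ \res h$, and the symmetric argument starting from $\res h = f \circ h$ gives $\res h = \res g \circ \res h$. Hence $\res g = \res h$. Combining this with the first step, $g = h \circ \res g = h \circ \res h = h$, where the last equality is again the restriction axiom $h \circ \res h = h$.

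The proof is essentially a bookkeeping exercise on the four restriction axioms together with the two inverse equations, so there is no real conceptual obstacle; the only subtlety is to remember that the equalities $\res f = g \circ f = h \circ f$ and $\res g = f \circ g$, $\res h = f \circ h$ can be applied in either direction, and to invoke the commutativity of restrictions at the right moment to force $\res g$ and $\res h$ to coincide.
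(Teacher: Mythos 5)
Your proof is correct and follows essentially the same strategy as the paper's: establish the two symmetric identities $g = h \circ \res{g}$ and $h = g \circ \res{h}$, show $\res{g} = \res{h}$, and conclude. Your derivations are in fact slightly more economical (you get the first identities from axiom 1 and direct substitution of the inverse equations, where the paper routes through axiom 4, and you obtain $\res{g} = \res{h}$ directly from $\res{g} = \res{h}\circ\res{g}$, $\res{h} = \res{g}\circ\res{h}$ and commutativity), but this is the same bookkeeping argument.
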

\begin{proof}
	Assume there exists two morphisms $g,h \colon \colon Y \to X$ such that
	\[
		gf = \res f \qquad fg = \res g \qquad hf = \res f \qquad fh \res h.
	\]
	Therefore, we have $gf = \res f = hf$. We have then:
	\begin{align*}
		g \res h &= g f h &\text{(hyp. above)} \\
		&= h f h &\text{(observation above)} \\
		&= \res f h &\text{(hyp. above)} \\
		&= h \res{fh} & \text{(Def. \ref{def:restr}, eq. 4)} \\
		&= h \res{\res h} &\text{(hyp. above)} \\
		&= h \res h & \\
		&= h & \text{(Def. \ref{def:restr}, eq. 1)} 
	\end{align*}
	Thus, $g \res h = h$. Symmetrically, we can prove that $h \res g = g$. We then obtain:
	\begin{align*}
		\res h &= \res{g \res h} &\text{(observation above)} \\
		&= \res g \res h & \text{(Def. \ref{def:restr}, eq. 3)} \\
		&= \res h \res g & \text{(Def. \ref{def:restr}, eq. 2)} \\
		&= \res{h \res g} & \text{(Def. \ref{def:restr}, eq. 3)} \\
		&= \res{g} & \text{(Def. \ref{def:restr}, eq. 1)} 
	\end{align*}
	Finally, we have $h = g \res h = g \res g = g$.
\end{proof}

\begin{remark}
	Given an inverse category $\CC$, $(-)\pinv$ is actually a contravariant
	functor $\CC\op \to \CC$. We can also observe that if $\CC$ is an inverse
	category, then $\CC\op$ is also.
\end{remark}

\begin{example}
	\label{ex:simple-pinj}
	In $\PInj$, let us consider the partial function $f\colon\{0,1\}\to\{0,1\}$
	as $f(0)=1$ and undefined on $1$. Its restriction $\res f$
	is undefined on $1$ also but $\res f(0)=0$. Its \emph{inverse}
	$f^\circ$ is undefined on $0$ and such that $f^\circ(1)=0$.
\end{example}

The example above generalises and $\PInj$ is an actual inverse category. Even
more, it is \emph{the} inverse category: \cite{kastl1979inverse} proves that
every locally small inverse category is isomorphic to a subcategory of $\PInj$.

\begin{definition}[Restriction compatible~\cite{guo2012products}]
  \label{def:restr-compati}
	Two morphisms $f,g \colon X\to Y$ in a restriction category $\CC$ are
	\emph{restriction compatible} if $f\res g = g\res f$. The relation is written
	$f \smile g$. 
\end{definition}

\begin{example}
	\label{ex:pinj-compati}
	We build upon Example~\ref{ex:simple-pinj}, consider an additional partial
	injection $g \colon \set{0,1} \to \set{0,1}$ such that $g(0) = 1$ and $g(1) =
	0$. We have $f \smile g$. The morphism $g$ is \emph{more defined} than $f$,
	and this intuition is made precise by the next definition.
\end{example}

\begin{definition}[Partial order~\cite{cockett2002restriction-I}]
	\label{def:restr-order}
	Let $f,g \colon X\to Y$ be two morphisms in a restriction category. We then
	define $f \leq g$ as $g\res f = f$.
\end{definition}

The next lemma -- which is rather an observation -- links the latest introduced
notions of compatibility and order between maps in a restriction category.

\begin{lemma}
	\label{lem:order-compati}
	Given $\CC$ a restriction category and two morphisms $f,g \colon X \to Y$ in
	$\CC$, if $f \leq g$ then $f \smile g$.
\end{lemma}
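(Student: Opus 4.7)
The plan is to unfold both definitions and chase restriction axioms. From Definition~\ref{def:restr-order}, the hypothesis $f \leq g$ means exactly $g \res f = f$. From Definition~\ref{def:restr-compati}, the goal $f \smile g$ is $f \res g = g \res f$. Since the right-hand side is already $f$ by hypothesis, it suffices to show $f \res g = f$.

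The calculation I would perform starts by substituting $f = g \res f$ on the left:
\[
f \res g \;=\; (g \res f) \res g \;=\; g \,(\res f \, \res g) \;=\; g \,(\res g \, \res f) \;=\; (g \res g)\, \res f \;=\; g \res f \;=\; f.
\]
The first equality uses the hypothesis, the second associativity of composition in $\CC$, the third the commutativity of restriction idempotents (equation~2 of Definition~\ref{def:restr}), the fifth the absorption $g \res g = g$ (equation~1 of Definition~\ref{def:restr}), and the last step is again the hypothesis.

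There is no real obstacle: the proof is a short rewriting exercise in the restriction axioms, and all the needed equalities are stated in Definition~\ref{def:restr}. The only subtlety worth flagging is to check that no axiom is applied in the wrong direction (in particular, one uses commutativity of restriction idempotents among themselves, not the more general interchange). Once this manipulation is performed, the lemma follows immediately and Example~\ref{ex:pinj-compati} serves as a concrete illustration, since $f \leq g$ there and indeed $f \smile g$.
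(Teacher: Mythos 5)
Your proof is correct and follows essentially the same route as the paper: substitute $f = g\,\res f$ into $f\,\res g$, commute the restriction idempotents via equation~2 of Definition~\ref{def:restr}, absorb $g\,\res g = g$ via equation~1, and conclude $f\,\res g = g\,\res f$. Nothing to add.
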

\begin{proof}
	Remember that $f \leq g$ means that $g \res f = f$. We can precompose by
	$\res g$ and get:
	\begin{align*}
		f \res g &= g \res f \res g &\ \\
		&= g \res g \res f & \text{(Def. \ref{def:restr}, eq. 2)} \\
		&= g \res f & \text{(Def. \ref{def:restr}, eq. 1)} 
	\end{align*}
	and $f \res g = g \res f$ is the definition of compatibility.
\end{proof}

\begin{definition}[Inverse compatible~\cite{kaarsgaard2017join}]
	Given $\CC$ an inverse category, $f,g \colon X \to Y$ in $\CC$ are inverse
	compatible if $f\smile g$ and $f^{\circ} \smile g^{\circ}$, noted $f\asymp
	g$.
\end{definition}

\begin{definition}
  A set $S$ of morphisms of the same type $A\to B$ is restriction compatible
  (\textit{resp.}  inverse compatible) if all elements of $S$ are
  pairwise restriction compatible (\textit{resp.} inverse compatible).
\end{definition}

This thesis makes use only of inverse categories, but note that most of the
definitions below have a counterpart for restriction categories.

\begin{definition}[Joins~\cite{guo2012products}]
	\label{def:join}
	An inverse category $\CC$ is equipped with \emph{joins} if for all inverse
	compatible sets $S$ of morphisms $X \to Y$, there exists a morphism in $\CC$
	written $\bigvee_{s\in S} s \colon X \to Y$ such that, for all $t \colon X
	\to Y$ and for all $s\in S$, $s\leq t$, the following holds:
  \begin{align*}
    s &\leq \bigvee\limits_{s\in S} s,
    & \bigvee\limits_{s\in S} s&\leq t,
    &\res{\bigvee\limits_{s\in S} s}
    &=\bigvee\limits_{s\in S} \res s, \\
    f\circ\left(\bigvee\limits_{s\in S} s\right)
    &= \bigvee\limits_{s\in
      S} fs,
    & \left(\bigvee\limits_{s\in
      S} s\right)\circ g
    &=\bigvee\limits_{s\in S} sg.
  \end{align*}
	Such a category is called a \emph{join inverse category}.
\end{definition}

\begin{example}
	The category $\PInj$ is a join inverse category. Following
	Example~\ref{ex:pinj-compati}, $f$ and $g$ are inverse compatible and $f \vee
	g = g$.
\end{example}

Building up from Definition~\ref{def:restr}, a \emph{join restriction
functor} is a restriction functor that preserves all thus constructed joins.

\begin{remark}[Zero]
	\label{rem:zero}
	Given a join inverse category $\CC$, and since $\emptyset \subseteq \CC(X,Y)$
	with all of its elements that are inverse compatible, there exists a morphism
	$0_{X,Y} \doteq\bigvee_{s\in\emptyset} s \colon X \to Y$, called \emph{zero
	map}.  It satisfies the following equations, for all $f \colon Y \to Z$ and
	$g \colon Z \to X$: 
	\[
		f \circ 0_{X,Y}=0_{X,Z} 
		\qquad 0_{X,Y} \circ g=0_{Z,Y} 
		\qquad 0_{X,Y}^{\circ} = 0_{Y,X}
		\qquad \res{0_{X,Y}} = 0_{X,X}.
	\]
	Moreover, $0_{X,Y}$ is the least element
	in $\CC(X,Y)$ for the order introduced above, in
	Definition~\ref{def:restr-order}.
\end{remark}

\begin{example}
	Given two sets $X,Y$, the morphism $0_{X,Y} \colon X \to Y$ mentioned above
	in $\PInj$ is the partial injection $X \to Y$ that is defined nowhere.
\end{example}

\begin{lemma}
	\label{lem:res-zero-zero}
	Given an inverse category $\CC$ and a morphism $f \colon X \to Y$ such that
	$\res f = 0_{X,X}$, then $f = 0_{X,Y}$.
\end{lemma}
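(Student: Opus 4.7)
The plan is very short: exploit the very first restriction axiom $f \circ \res f = f$ from Definition~\ref{def:restr}, together with the absorption property of the zero map recalled in Remark~\ref{rem:zero}, namely $f \circ 0_{X,X} = 0_{X,Y}$.

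More precisely, starting from the hypothesis $\res f = 0_{X,X}$, I would substitute this identity into the expression $f \circ \res f$. On one hand, by the first axiom of a restriction structure, this composite equals $f$ itself. On the other hand, it equals $f \circ 0_{X,X}$, which by the absorption law for the zero map (noted in Remark~\ref{rem:zero}) equals $0_{X,Y}$. Chaining these two equalities yields $f = 0_{X,Y}$, as required.

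There is no real obstacle here: the statement is essentially a one-line consequence of the restriction axiom together with the definition of the zero map in a join inverse category. The only thing to be careful about is to invoke the correct absorption equation (precomposition of $f$ with $0_{X,X}$ yields $0_{X,Y}$, rather than $0_{X,X}$), but this is immediate from the list of identities given for $0_{X,Y}$ in Remark~\ref{rem:zero}.
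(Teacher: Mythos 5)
Your proof is correct and is essentially identical to the paper's: both substitute the hypothesis $\res f = 0_{X,X}$ into the first restriction axiom $f = f \circ \res f$ and then apply the absorption law $f \circ 0_{X,X} = 0_{X,Y}$ from Remark~\ref{rem:zero}. Nothing further is needed.
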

\begin{proof}
	By Definition~\ref{def:restr}, we know that $f = f \res f$, and thus $f = f \res f
	= f 0_{X,X} = 0_{X,Y}$.
\end{proof}

\begin{lemma}
	\label{lem:inv-ortho-compati}
	Given an inverse category $\CC$ and two morphisms $f, g \colon X \to Y$ in
	$\CC$ such that $f\pinv g = 0_{X,X}$ and $fg\pinv = 0_{Y,Y}$, then $f$ and
	$g$ are inverse compatible.
\end{lemma}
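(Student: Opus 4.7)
The plan is to unfold both halves of the definition of inverse compatibility, $f \smile g$ and $f\pinv \smile g\pinv$, using the restriction identities $\res g = g\pinv g$ and $\rc g = g g\pinv$ valid in any inverse category (Definition~\ref{def:invcat}). The hypotheses $f\pinv g = 0_{X,X}$ and $f g\pinv = 0_{Y,Y}$ are then exactly what is needed to collapse every composite appearing in these expressions to a zero map, via the absorption rules of Remark~\ref{rem:zero}.

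First I would establish the two auxiliary identities $g\pinv f = 0_{X,X}$ and $g f\pinv = 0_{Y,Y}$. These follow from the hypotheses by applying the contravariant functor $(-)\pinv$, using $(hk)\pinv = k\pinv h\pinv$, the involutivity $(h\pinv)\pinv = h$, and $0_{X,Y}\pinv = 0_{Y,X}$ (again Remark~\ref{rem:zero}). This gives the full symmetric package of four nullity equations.

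Next, to show $f \smile g$, I would compute
\[
f \res g \;=\; f \, (g\pinv g) \;=\; (f g\pinv)\, g \;=\; 0_{Y,Y}\, g \;=\; 0_{X,Y},
\]
and symmetrically
\[
g \res f \;=\; g\,(f\pinv f) \;=\; (g f\pinv)\, f \;=\; 0_{Y,Y}\, f \;=\; 0_{X,Y},
\]
so $f \res g = g \res f$ as required. The same computation applied to $f\pinv$ and $g\pinv$ gives $f\pinv \res{g\pinv} = f\pinv g g\pinv = 0$ and $g\pinv \res{f\pinv} = g\pinv f f\pinv = 0$, using this time the hypotheses $f\pinv g = 0$ and $g\pinv f = 0$ (the latter from the first step). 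This shows $f\pinv \smile g\pinv$ and hence $f \asymp g$.

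There is no real obstacle here beyond bookkeeping: the only subtlety is tracking the types and orientations of the zero maps so that each composition is legal and reduces to the correct $0_{-,-}$; the use of contravariance of $(-)\pinv$ to convert the two given nullities into their four-fold symmetric form is the one step it would be easy to skip over.
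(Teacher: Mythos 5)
Your proof is correct and follows essentially the same strategy as the paper's: prove $f\smile g$ (and dually $f\pinv\smile g\pinv$) by showing that both $f\res g$ and $g\res f$ are the zero map. The only difference is cosmetic: you compute $f\res g = f g\pinv g = (fg\pinv)g = 0$ directly, whereas the paper first shows $\res{f\res g} = \res f\,\res g = f\pinv f g\pinv g = 0_{X,X}$ and then invokes Lemma~\ref{lem:res-zero-zero} to conclude $f\res g = 0_{X,Y}$; your route is slightly more direct, and your explicit derivation of the dualised nullities $g\pinv f = 0$ and $gf\pinv = 0$ via contravariance of $(-)\pinv$ fills in a step the paper leaves as ``similar.''
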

\begin{proof}
	Our goal is to prove that $f \res g = g \res f$, but we are going to prove
	that both are equal to zero. Definition~\ref{def:restr} ensures that $\res{f \res
	g} = \res f \res g$, and then Definition~\ref{def:invcat} gives that $\res f \res g
	= f\pinv f g\pinv g$, which is then equal to $f\pinv 0_{Y,Y} g$ by
	hypothesis, and thus $\res{f \res g} = \res f \res g = 0_{X,X}$;
	Lemma~\ref{lem:res-zero-zero} ensures then that $f \res g = 0_{X,Y}$. Note
	that $\res{g \res f} = \res f \res g = \res g \res f$ by
	Definition~\ref{def:restr}, thus $g \res f = 0_{X,Y}$ too. We have proven that $f
	\smile g$. The proof that $f\pinv \smile g\pinv$ is similar.
\end{proof}

\subsection{Additional Structure}
\label{sub:inv-additional}

\begin{definition}[Restriction Zero]
	An inverse category $\CC$ has a \emph{restriction zero} object $0$
	iff for all objects $X$ and $Y$, there exists a unique morphism $0_{X,Y}
	\colon X \to Y$ that factors through $0$ and satisfies $\res{0_{X,Y}}
	=0_{X,X}$.
\end{definition}

\begin{remark}
	In a join inverse category with a restriction zero, the zero morphisms
	given by the join structure (see Remark~\ref{rem:zero}) coincide with the
	ones of the previous definition.
\end{remark}

\begin{lemma}
	\label{lem:restr-ep-zero}
	Given an inverse category $\CC$ with a restriction zero $0$ and that is
	$\DCPO$-enriched, $0$ is an ep-zero.
\end{lemma}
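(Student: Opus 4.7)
The plan is to verify the four clauses defining an ep-zero: that $0$ is initial, $0$ is terminal, every $f \colon 0 \to Y$ is an embedding, and every $g \colon X \to 0$ is a projector. In both ep-pair constructions, the missing partner will be supplied by the partial inverse from Definition~\ref{def:invcat}.

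The key preliminary step is to show that $\iid_0 = 0_{0,0}$. Both are restriction idempotents on $0$ (the former because $\res{\iid_0} = \iid_0$, the latter by the very definition of restriction zero), and $\iid_0$ trivially factors through $0$. The $\DCPO$-enrichment makes $0_{0,0}$ the bottom of the hom-dcpo $\CC(0,0)$, while the inverse structure shows that any $h \colon 0 \to 0$ satisfies $\res h = h\pinv \circ h$, again a restriction idempotent on $0$. I expect a short argument combining these two facts, routed through the uniqueness clause of the restriction zero definition, to collapse $\CC(0,0)$ to a singleton and identify $\iid_0$ with $0_{0,0}$.

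Granting that, initiality and terminality follow immediately: any $f \colon 0 \to Y$ factors through $0$ via $\iid_0$, and $\res f \in \CC(0,0)$, so $\res f = 0_{0,0}$ and hence $f = 0_{0,Y}$ by the uniqueness clause of the restriction zero definition. Symmetrically for $g \colon X \to 0$. For the embedding clause I take the ep-pair $(f, f\pinv)$: by the inverse category axioms $f\pinv \circ f = \res f = \iid_0$ (from the preliminary step), and $f \circ f\pinv = \rc f$ is a restriction idempotent on $Y$, which a short computation using Definition~\ref{def:restr-order} shows always lies below $\iid_Y$, as needed for the ep-pair inequality. The projector clause is symmetric, using the ep-pair $(g\pinv, g)$.

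The main obstacle is the preliminary identification $\iid_0 = 0_{0,0}$: the bare restriction zero definition does not force this, so the $\DCPO$-enrichment must be exploited together with the partial inverse operator, which is precisely where the hypothesis of the lemma is used. Once this identity is in place, the remainder of the argument reduces to a routine application of the inverse category axioms.
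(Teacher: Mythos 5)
Your overall architecture matches the paper's (very terse) proof exactly: the ep-pairs are $(f, f\pinv)$ for $f \colon 0 \to Y$ and $(g\pinv, g)$ for $g \colon X \to 0$, and the identities $f\pinv f = \res f$, $f f\pinv = \rc f$, together with the fact that restriction idempotents lie below the identity in the order of Definition~\ref{def:restr-order}, do the work. Your verification of those two clauses is correct and in fact more explicit than the paper's, which only says that the embedding--projector pairs are ``given by any morphism and its inverse''.

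The one place where you diverge is the preliminary step, and it is also the one place where your argument is not actually carried out. The paper simply asserts that the restriction zero is initial and terminal \emph{by definition} (this is the standard reading in the literature: a restriction zero object is in particular a zero object), and once $0$ is initial the identification you are after is immediate -- $\CC(0,0)$ has a unique element, which must be $\iid_0$, so $\res f = \iid_0$ for every $f \colon 0 \to Y$ and $\rc g = \iid_0$ for every $g \colon X \to 0$. You instead try to \emph{derive} the collapse of $\CC(0,0)$ from the $\DCPO$-enrichment, and this route is shaky: $\DCPO$-enrichment does not provide bottom elements, the lemma does not identify the enrichment order with the restriction order a priori, and even granting that $0_{0,0}$ is least in $\CC(0,0)$, leastness gives $0_{0,0} \leq \iid_0$ but not equality. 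So the ``short argument'' you defer to does not exist along the path you sketch. The honest role of the $\DCPO$ hypothesis is much more modest: it is needed only so that the inequality $e \circ p \leq \iid_Y$ in Definition~\ref{def:ep-pair} makes sense (ep-pairs are defined in $\DCPO$-categories), not as a tool for identifying $\iid_0$ with $0_{0,0}$. Replace your preliminary step by the observation that a restriction zero is a zero object (initiality/terminality), and the rest of your proof goes through verbatim.
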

\begin{proof}
	The restriction zero $0$ is an initial and terminal object by definition.
	The pairs of embedding and projector are given by any morphism and its
	inverse $-\pinv$.
\end{proof}

\begin{definition}[Disjointness tensor~\cite{giles2014investigation}]
  \label{def:dis-tensor}
	A join inverse category $\CC$ is said to have a \textit{disjointness tensor} if
	it is equipped with a symmetric monoidal restriction bifunctor $(- \oplus -)
	\colon \CC\times\CC\rightarrow\CC$, with as unit a restriction zero $0$ and
	for all objects $X$ and $Y$, morphisms $\iota_l \colon X\rightarrow X \oplus
	Y$ and $\iota_r \colon Y\rightarrow X \oplus Y$ that are total, jointly epic,
	and such that their inverses are jointly monic and $\rc{\iota_l}~\rc{\iota_r}
	= 0_{X \oplus Y}$. The morphisms $\iota$ are called \emph{injections}.
\end{definition}

\begin{remark}
	A precise way of writing the injections would be $\iota_l^{X,Y} \colon X \to
	X \oplus Y$ and $\iota_r^{X,Y} \colon Y \to X \oplus Y$. However, we choose
	to loosen the notations when it is not ambiguous. This choice is motivated by
	the recurrent use of the contravariant functor $(-)\pinv$, and $\iota_l\pinv$
	appears to be more readable than $(\iota_l^{X,Y})\pinv$.
\end{remark}

The last requirement in the previous definition can be described as the
injections having \emph{orthogonal} outputs; with a set theoretic vocabulary,
we could say that their images have an empty intersection. We show in the next
lemma that this intuition is verified, with a simpler presentation than in
Definition~\ref{def:dis-tensor}.

\begin{lemma}
	\label{lem:orthogonal-injections}
	In a join inverse category $\CC$ with disjointness tensor, for all
	objects $X$ and $Y$, we have $\iota_l\pinv \circ \iota_r = 0_{Y,X}$.
\end{lemma}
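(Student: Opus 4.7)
The plan is to extract the desired identity from the explicit orthogonality condition $\rc{\iota_l} \circ \rc{\iota_r} = 0_{X\oplus Y}$ built into Definition~\ref{def:dis-tensor}, by pre- and post-composing to ``strip off'' the outer $\iota_l$ and $\iota_r$ factors.

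First, I would unfold the restriction idempotents: by definition of the partial inverse (Definition~\ref{def:invcat}), $\rc{\iota_l} = \iota_l \circ \iota_l\pinv$ and $\rc{\iota_r} = \iota_r \circ \iota_r\pinv$, so the hypothesis of the disjointness tensor reads
\[
\iota_l \circ \iota_l\pinv \circ \iota_r \circ \iota_r\pinv = 0_{X \oplus Y,X \oplus Y}.
\]
Next, I would pre-compose this equation with $\iota_l\pinv$ and post-compose with $\iota_r$, yielding on the right-hand side $\iota_l\pinv \circ 0_{X\oplus Y,X\oplus Y} \circ \iota_r = 0_{Y,X}$ by the absorption properties of the zero map (Remark~\ref{rem:zero}).

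The key simplification on the left-hand side will use the inverse-category identity $f \circ f\pinv \circ f = f$, which follows from $f\pinv f = \res f$ combined with the first restriction axiom $f \circ \res f = f$ of Definition~\ref{def:restr}. Applying this both to $\iota_l$ (in the form $\iota_l\pinv \circ \iota_l \circ \iota_l\pinv = \iota_l\pinv$) and to $\iota_r$ (as $\iota_r \circ \iota_r\pinv \circ \iota_r = \iota_r$), the left-hand side collapses to $\iota_l\pinv \circ \iota_r$, giving the desired equality.

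I do not expect any serious obstacle here; the only subtlety is ensuring the bookkeeping of the partial-inverse identity $f f\pinv f = f$ is correct, and that the zero map genuinely absorbs on both sides, both of which are immediate from the basic material recalled in \secref{sub:inv-basic}.
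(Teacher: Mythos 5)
Your proof is correct and follows essentially the same route as the paper's: both start from $\rc{\iota_l}\,\rc{\iota_r} = 0_{X\oplus Y}$, compose with $\iota_l\pinv$ on the left and $\iota_r$ on the right, and cancel using the restriction/partial-inverse axioms (the paper just performs the two cancellations sequentially rather than in one step via $f\circ f\pinv\circ f = f$). The only nitpick is terminological: what you call ``pre-composing with $\iota_l\pinv$'' is standardly post-composition, but your displayed formula makes the intent unambiguous.
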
 
\begin{proof}
	We know from Definition~\ref{def:dis-tensor} that $\rc{\iota_l}~\rc{\iota_r}
	= 0_{X \oplus Y}$, thus by postcomposition
	$\iota_l\pinv~\rc{\iota_l}~\rc{\iota_r} = 0_{X \oplus Y, X}$. Since $\CC$ is
	a restriction category (see Definition~\ref{def:restr}), $\iota_l\pinv~\rc{\iota_l}
	= \iota_l\pinv$, and thus $\iota_l\pinv~\rc{\iota_r} = 0_{X \oplus Y, X}$.
	With Definition~\ref{def:invcat}, we have $\rc{\iota_r} =
	\iota_r\iota_r\pinv$, thus $\iota_l\pinv \iota_r \iota_r\pinv = 0_{X \oplus
	Y, X}$. Finally, we precompose by $\iota_r$, to obtain $\iota_l\pinv \iota_r
	\iota_r\pinv \iota_r = 0_{Y,X}$. Definition~\ref{def:invcat} again tells us
	that $\iota_r\pinv ~\iota_r = \res{\iota_r}$, and also $\iota_r~\res{\iota_r}
	= \iota_r$, hence the equality: $\iota_l\pinv \iota_r = 0_{Y,X}$.
\end{proof}

This orthogonality between morphisms introduced by the disjointness tensor
shows how pattern-matching could be handled. The author of this thesis has
proven in \cite{nous2021invcat} that the disjointness tensor is not the only
way of performing pattern-matching within an inverse category, however it
remains the canonical way.

\begin{remark}
	\label{rem:loose-inner-product}
	As we have started to describe the previous lemma as an \emph{orthogonality}
	assertion, we are only a few inches away from defining an \emph{inner
	product}, in a vector space fashion -- although we are definitely not working
	with vector spaces. In the same scenery as
	Lemma~\ref{lem:orthogonal-injections}, given two morphisms $f \colon X \to Z$
	and $g \colon Y \to Z$, we allow ourselves to call the morphism $f\pinv \circ
	g \colon Y \to X$ the inner product of $f$ and $g$, in a very loose way. Note
	that $g\pinv \circ f \colon X \to Y$ can also be said to be their inner
	product, and is not the same morphism in general.
\end{remark}

A programming language usually involves pairs or tuples of terms, often denoted
with a symmetric monoidal category. The next definition proposes a definition
of a model for a (simple) reversible programming language handling
pattern-matching.

\begin{definition}[Rig]
	\label{def:rig}
	Let us consider a join inverse category equipped with a symmetric monoidal
	tensor product $(\otimes,1)$ and a disjointness tensor $(\oplus,0)$ that are
	join preserving, and such that there are isomorphisms $\delta_{A,B,C} : A
	\otimes (B\oplus C) \rightarrow (A\otimes B)\oplus (A\otimes C)$ and $\nu_A
	\colon A \otimes 0 \rightarrow 0$. This is called a \emph{join inverse rig
	category}.
\end{definition}

It is proven in \cite{kaarsgaard2017join} that a join inverse category can be
considered enriched in $\DCPO$ without loss of generality, showing that there
is a way of working with fixed points and general recursion in reversible
settings. The next results prove that the operations involved in a reversible
programming language preserve this $\dcpo$-enriched structure.

\begin{lemma}[\cite{kaarsgaard2017join}]
	\label{lem:restr-fun-dcpo}
	Let $\CC$ and $\DD$ be join inverse rig $\dcpo$-categories, and $F \colon \CC
	\to \DD$ be a join preserving restriction functor. Then $F$ is a
	$\DCPO$-functor.
\end{lemma}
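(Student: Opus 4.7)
The plan is to exploit the fact, recalled just before the lemma statement and established in \cite{kaarsgaard2017join}, that in any join inverse category the $\DCPO$-enrichment on homsets is given by the order $\leq$ of Definition~\ref{def:restr-order}, and that the supremum of a directed subset $D \subseteq \CC(X,Y)$ coincides with its inverse-compatible join $\bigvee_{d \in D} d$. Consequently, showing that $F$ is a $\DCPO$-functor reduces to showing that its action on homsets is monotone and that it transports directed suprema to directed suprema, and these should both follow from the two hypotheses on $F$: being a restriction functor and preserving joins.

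First I would record a preliminary observation: any restriction functor between inverse categories automatically preserves partial inverses. Indeed, given $f \colon X \to Y$, the morphism $F(f^{\circ})$ satisfies $F(f^{\circ}) \circ F(f) = F(f^{\circ} f) = F(\res f) = \res{F(f)}$ and, dually, $F(f) \circ F(f^{\circ}) = \res{F(f^{\circ})}$, so by uniqueness of partial inverses (the lemma immediately following Definition~\ref{def:invcat}) we get $F(f^{\circ}) = F(f)^{\circ}$. Combining this with the defining equality $\res{F(f)} = F\res f$ of a restriction functor, I would then verify that $F$ preserves restriction compatibility and hence inverse compatibility: if $f \asymp g$, unfolding the definitions using functoriality gives $F(f) \res{F(g)} = F(g) \res{F(f)}$, and similarly for the partial inverses.

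Next I would prove monotonicity: if $f \leq g$, i.e.~$g \res f = f$, then applying $F$ gives $F(g) \res{F(f)} = F(g) F(\res f) = F(g \res f) = F(f)$, so $F(f) \leq F(g)$. In particular, if $D \subseteq \CC(X,Y)$ is directed then $F(D) \subseteq \DD(FX,FY)$ is directed, and by the compatibility-preservation above it is also inverse compatible, so its join $\bigvee_{d \in D} F(d)$ exists in $\DD$ and is its supremum in the $\DCPO$-structure of $\DD(FX,FY)$.

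Finally I would conclude by a direct computation: since the supremum of $D$ in $\CC(X,Y)$ is $\bigvee_{d \in D} d$, the join-preservation hypothesis yields
\[
F\Bigl(\sup D\Bigr) \;=\; F\Bigl(\bigvee_{d \in D} d\Bigr) \;=\; \bigvee_{d \in D} F(d) \;=\; \sup F(D),
\]
which together with monotonicity is exactly what it means for $F_{X,Y} \colon \CC(X,Y) \to \DD(FX,FY)$ to be Scott continuous, i.e.~for $F$ to be a $\DCPO$-functor. The only non-routine step is really the preliminary preservation of partial inverses (and thus of inverse compatibility), without which the join-preservation hypothesis could not be applied to the family $F(D)$; everything else is a straightforward chase through the definitions of restriction functor and join.
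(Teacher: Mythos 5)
Your proof is correct, and it is essentially the standard argument from the cited reference \cite{kaarsgaard2017join}; the thesis itself imports this lemma without proof, so there is nothing to diverge from. The one tacit assumption you should make explicit is that the $\DCPO$-enrichment on the homsets is the canonical one induced by the restriction order of Definition~\ref{def:restr-order}, with directed suprema given by joins — which is exactly the sense intended here — after which your chain (restriction functors preserve partial inverses, hence inverse compatibility, hence monotonicity and directedness of images, hence $F(\sup D)=F(\bigvee D)=\bigvee F(D)=\sup F(D)$) goes through without gaps.
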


A conclusion of this lemma is that the functors used to interpret simple data
types in a programming language are $\dcpo$-functors.

\begin{corollary}
	\label{cor:dcpofun}
  Let $\CC$ be a join inverse rig category. The functors
		$(-\otimes -): \CC\times\CC\to\CC$ and
		$(-\oplus -): \CC\times\CC\to\CC$
  are $\DCPO$-functors.
\end{corollary}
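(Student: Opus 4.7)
The plan is to apply Lemma~\ref{lem:restr-fun-dcpo} directly to each of the two bifunctors, treating them as functors out of the product category $\CC \times \CC$.

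First, I would check that $\CC \times \CC$ inherits the structure of a join inverse rig $\dcpo$-category from $\CC$ in the componentwise manner: restrictions, partial inverses, and joins of inverse compatible families are computed pointwise in each coordinate, the disjointness tensor and the monoidal product are given coordinatewise (with units $(1,1)$ and $(0,0)$), and the $\dcpo$-enrichment on each hom $\CC\times\CC((X_1,X_2),(Y_1,Y_2)) \cong \CC(X_1,Y_1)\times\CC(X_2,Y_2)$ is the product dcpo. None of this is deep: everything reduces to the axioms of Definitions~\ref{def:restr}--\ref{def:rig} applied coordinatewise.

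Next, I would argue that $(-\otimes-)$ and $(-\oplus-)$ are join preserving restriction bifunctors $\CC\times\CC \to \CC$. For $(-\oplus-)$, this is immediate: Definition~\ref{def:dis-tensor} states that the disjointness tensor is a symmetric monoidal restriction bifunctor, and Definition~\ref{def:rig} requires it to be join preserving. For $(-\otimes-)$, Definition~\ref{def:rig} states that the monoidal product is join preserving; the fact that it is a restriction bifunctor then follows because in any join inverse category the restriction can be recovered from joins (one has $\res{f\otimes g} = \res f \otimes \res g$ by a short computation using the join preservation on both arguments and the interaction of restriction with composition from Definition~\ref{def:restr}).

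Having established that both $\otimes$ and $\oplus$ are join preserving restriction functors between join inverse rig $\dcpo$-categories, I would invoke Lemma~\ref{lem:restr-fun-dcpo} to conclude that both are $\DCPO$-functors, which is exactly the statement of the corollary. The only step that requires any care is the verification that $\CC \times \CC$ is itself a join inverse rig $\dcpo$-category and that the restriction and join structure on this product agrees with what is needed for the bifunctors to qualify as restriction functors in the sense of Definition~\ref{def:restr-cat}; but this is routine bookkeeping rather than a genuine obstacle.
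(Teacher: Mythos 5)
Your proposal is correct and follows the same route as the paper, which states the corollary as an immediate consequence of Lemma~\ref{lem:restr-fun-dcpo} (with Definition~\ref{def:restr-cat} already extending the notion of restriction functor to bifunctors), so the only work is the routine verification that $\CC\times\CC$ carries the componentwise join inverse rig $\dcpo$-structure and that both tensors are join preserving restriction bifunctors. One small correction: the identity $\res{f\otimes g}=\res f\otimes\res g$ is not obtained by ``recovering restriction from joins''; rather, in an inverse category the partial inverse is the unique generalised inverse (so any functor between inverse categories preserves it), and since $\res f=f\pinv\circ f$ every such functor is automatically a restriction functor.
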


Separately, it is important that the inverse structure also preserves the
enrichment.

\begin{lemma}[\cite{kaarsgaard2017join}]
	Let $\CC$ be a join inverse rig category. The functor
	$(-)\pinv:\CC^{op}\to\CC$ is a $\dcpo$-functor.
\end{lemma}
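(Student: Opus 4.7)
The plan is to verify, for every pair of objects $X, Y$, that the map $\sigma_{X,Y} \colon \CC(X,Y) \to \CC(Y,X)$ sending $f$ to $f\pinv$ is Scott continuous, where both homsets carry the dcpo structure induced by the join structure on inverse-compatible sets. A useful preliminary observation is that $\sigma_{X,Y}$ is its own inverse, since $(f\pinv)\pinv = f$. Consequently, once monotonicity of $\sigma_{X,Y}$ is established, $\sigma_{X,Y}$ is automatically an order isomorphism $\CC(X,Y) \cong \CC(Y,X)$, and any such order isomorphism between dcpos preserves all existing directed suprema; so the entire Scott continuity claim reduces to monotonicity.

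The heart of the argument is therefore monotonicity: if $f \leq g$, then $f\pinv \leq g\pinv$. The assumption $g \res f = f$ can be rewritten as $g f\pinv f = f$ via Definition~\ref{def:invcat}. Applying $(-)\pinv$ and using that restriction idempotents are self-inverse under $(-)\pinv$ yields $f\pinv = \res f \, g\pinv$. The goal $g\pinv \res{f\pinv} = f\pinv$ then reduces to a short rewriting: expand $\res{f\pinv} = f f\pinv$, substitute $f = g \res f$ once, and apply the four restriction axioms of Definition~\ref{def:restr}. A useful auxiliary identity to factor out first is $\res g \, g\pinv = g\pinv$, which itself follows from the restriction axiom $\res h \circ k = k \circ \res{h k}$ (with $h = g$, $k = g\pinv$) together with the idempotency $\res{\res g} = \res g$.

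Putting the pieces together, each $\sigma_{X,Y}$ is a monotone involution and hence Scott continuous; since this holds for every pair $X, Y$, the contravariant assignment $(-)\pinv$ is a $\dcpo$-functor $\CC\op \to \CC$. The one step that genuinely requires care is the monotonicity calculation, where tracking how $(-)\pinv$ interacts with $\res{(-)}$ is the point at which the inverse-category axioms do the real work; once the auxiliary identity $\res g \, g\pinv = g\pinv$ is in hand, the rest is formal bookkeeping using the involution property.
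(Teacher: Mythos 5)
Your proof is correct. The key reduction — that $f \mapsto f\pinv$ is an involution between hom-dcpos, so Scott continuity follows from monotonicity alone because a monotone bijection with monotone inverse is an order isomorphism and hence preserves all existing directed suprema — is valid, and the monotonicity computation goes through exactly as you sketch: from $g\res f = f$ one gets $f\pinv = \res f\, g\pinv$ (using $(\res f)\pinv = \res f$), and then $g\pinv\,\res{f\pinv} = g\pinv f f\pinv = \res g\,\res f\, g\pinv = \res f\,(\res g\, g\pinv) = \res f\, g\pinv = f\pinv$, where your auxiliary identity $\res g\, g\pinv = g\pinv$ follows from the fourth restriction axiom as you indicate. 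Note that the paper itself states this lemma as an import from the literature and gives no proof, so there is no in-paper argument to compare against; the standard proof in the cited source establishes monotonicity and then verifies preservation of joins of compatible families directly, so your route through the order-isomorphism observation is, if anything, a slightly more economical way to dispatch the continuity part.
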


Finally, we import a result from the literature ensuring that a join
inverse rig category can safely be generalised to carry the interpretation of
inductive types.

\begin{proposition}[\cite{kaarsgaard2017join}]
	\label{prop:omega-chains}
	Any join inverse rig category can be faithfully embedded in a join
	inverse rig category with colimits of $\omega$-chains of embeddings. 
\end{proposition}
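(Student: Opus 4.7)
The plan is to exhibit an embedding of $\CC$ into a larger category $\widehat\CC$ obtained by a controlled free cocompletion, and then check that all of the relevant structure (inverse, joins, tensor, disjointness tensor, distributor) transports along this embedding. The natural candidate for $\widehat\CC$ is an ``ep-ideal'' completion: one defines objects as formal $\omega$-chains $X_0 \to X_1 \to X_2 \to \dots$ of embeddings in $\CC$, and morphisms from $(X_n)$ to $(Y_n)$ as suitable equivalence classes of compatible families of morphisms $X_n \to Y_{\phi(n)}$ in $\CC$, ordered under the restriction order of Definition~\ref{def:restr-order}. Since each embedding is part of an ep-pair, one may equivalently work with projection-indexed systems and take joins along cofinal subchains, which makes the passage to colimits of embeddings literal rather than formal.

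First I would define $\widehat\CC$ and the embedding $J\colon \CC\to\widehat\CC$ sending $X$ to the constant chain $X\to X\to\cdots$ and $f\colon X\to Y$ to the componentwise morphism. Faithfulness is immediate because the first component of $J(f)$ is already $f$. Then I would transport the inverse structure: $(-)\pinv$ acts chainwise and is well-defined because the partial inverse of an embedding is the corresponding projector (Definition~\ref{def:ep-pair}), and $(-)\pinv$ is a $\dcpo$-functor. The restriction is also computed chainwise. For the joins, given an inverse compatible family in $\widehat\CC(\widehat X,\widehat Y)$, compatibility lifts back to the finite levels, where Definition~\ref{def:join} produces pointwise joins; cofinality then gives the join in $\widehat\CC$ and the five axioms of Definition~\ref{def:join} follow level by level.

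Next, the tensor $\otimes$ and disjointness tensor $\oplus$ are extended to $\widehat\CC$ by the formulas $(X_n)\otimes(Y_n)=(X_n\otimes Y_n)$ and analogously for $\oplus$. Here one uses Corollary~\ref{cor:dcpofun}: $\otimes$ and $\oplus$ are $\dcpo$-functors on $\CC$, hence they send $\omega$-chains of embeddings to $\omega$-chains of embeddings (embeddings are preserved because a restriction bifunctor preserves partial inverses). The coherence isomorphisms $\alpha,\lambda,\rho,\sigma$, the injections $\iota_l,\iota_r$, the distributor $\delta$ and the annihilator $\nu$ are defined chainwise using the corresponding data in $\CC$, and the required equations (including the orthogonality equation $\overline{\iota_l}\,\overline{\iota_r}=0$ of Definition~\ref{def:dis-tensor}) hold level by level.

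Finally, one checks that $\widehat\CC$ has colimits of $\omega$-chains of embeddings: given such a chain in $\widehat\CC$, one forms a diagonal $\omega$-chain of embeddings in $\CC$, and its formal image is the colimit by construction. Faithfulness and preservation of all of the aforementioned structure by $J$ then make $J$ the desired embedding of join inverse rig categories. The main obstacle is the bookkeeping around joins and the disjointness tensor simultaneously: one must ensure that inverse compatibility, the orthogonality equation $\overline{\iota_l}\,\overline{\iota_r}=0_{X\oplus Y}$, and distributivity all interact properly with the cofinal joins used to identify morphisms in $\widehat\CC$. This boils down to showing that the ep-structure on $\CC$ is compatible with $\oplus$ and with $\otimes$, which is what Lemma~\ref{lem:restr-fun-dcpo} provides, and then a careful verification that the join structure on $\widehat\CC$ agrees with the one induced by colimits along embeddings.
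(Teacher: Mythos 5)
Your construction is essentially correct: the $\omega$-chain (ep-pair) completion, with the inverse, join, $\otimes$ and $\oplus$ structure transported levelwise via Lemma~\ref{lem:restr-fun-dcpo} and Corollary~\ref{cor:dcpofun}, is the standard argument and matches the strategy of the cited source \cite{kaarsgaard2017join}. Note that the thesis itself states this proposition without proof, deferring entirely to that reference, so there is no in-paper proof to compare against; your sketch supplies the missing argument correctly, modulo the routine verifications you already flag.
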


This shows, with Lemma~\ref{lem:restr-ep-zero}, that a join inverse rig
$\dcpo$-category can verify the hypotheses of Theorem~\ref{th:param-alg}
whithout loss of generality; and thus this kind of category is a model for
inductive types. The details of join inverse rig $\dcpo$-categories used
as a denotational model is found in Chapter~\ref{ch:reversible}.

\section{Hilbert spaces}
\label{sec:back-hilbert}

One of the main focus in the study of mathematical quantum mechanics is Hilbert
spaces. We assume basic knowledge of linear algebra, such as: vectors, linear
maps, bases, kernels, \emph{etc.} The author recommends the book written by
Heunen and Vicary \cite{heunen2019categories} for a more detailed introduction
to Hilbert spaces, and to category theory applied to quantum computing.

\subsection{Introductory Definitions}
\label{sub:hilb-intro}

Formally, a Hilbert space is a complex vector space equipped with an inner product,
written $\ip - -$, such that this inner product induces a complete metric space. The
inner product is used to compute probabilities of measurement outcomes in
quantum theory. Note that, given a complex number $\alpha$, we write
$\res\alpha$ for its conjugate.

\begin{definition}[Inner product]
	An inner product on a complex vector space $V$ is a function $\ip - - \colon
	V \times V \to \C$, such that:
	\begin{itemize}
		\item for all $x,y \in V$, $\ip x y = \res{\ip y x}$;
		\item for all $x,y,z \in V$ and $\alpha \in \C$,
			\[
				\ip{x}{\alpha y} = \alpha \ip x y,
				\qquad
				\ip{x}{y + z} = \ip x y + \ip x z;
			\]
		\item for all $x \in V$, $\ip x x \geq 0$, and if $\ip x x = 0$, then
			$x = 0$.
	\end{itemize}
	Given a vector space with an inner product, the canonical norm of a vector
	$x$ is defined as $\Vert x \Vert \defeq  \sqrt{\ip x x}$.
\end{definition}

This is enough to state the definition of a Hilbert space.

\begin{definition}[Hilbert space]
	A Hilbert space is a complex vector space $H$ equipped with an inner product
	such that $H$ is \emph{complete} with regard to its canonical norm. By
	complete, we mean that: if a sequence of vectors $(v_i)_{i \in \N}$ is such
	that $\sum_{i = 0}^\infty \Vert v_i \Vert < \infty$, then there exists a
	vector $v \in H$ such that $\Vert v - \sum_{i = 0}^\infty v_i \Vert$ tends to
	zero as $n$ goes to the infinity. The vector $v$ is called a \emph{limit}.
\end{definition}

It is interesting to observe that all finite-dimentional vector spaces with an
inner product are complete. Moreover, any vector space with an inner product
can be \emph{completed}, by adding the adequate limit vectors.

\begin{remark}
	A basis in a Hilbert space is not exactly defined the same way as a basis in
	a vector space. A basis in a Hilbert space is such that any vector is
	\emph{limit} of linear combinations of the elements of the basis. An
	orthonormal basis in a Hilbert space is such that its elements are pairwise
	orthogonal, have norm $1$, and their linear span is dense in the Hilbert
	space.
\end{remark}

\begin{definition}[Bounded linear map]
	Given two Hilbert spaces $H_1$ and $H_2$, a linear map $f \colon H_1 \to H_2$
	is \emph{bounded} if there exists $\alpha \in \R$ such that $\Vert f x \Vert
	\leq \alpha \Vert x \Vert$ for all $x \in H_1$.
\end{definition}

\subsection{Additional Structure}
\label{sub:hilb-additional}

We make use of differents kinds of structure in vector spaces, such as direct
sums and tensor products.

\begin{definition}[Direct sum]
	Given two complex vector spaces $V$ and $W$, one can form their \emph{direct
	sum} $V \oplus W$, whose elements are $(v,w)$ with $v \in V$ and $w \in W$,
	such that, for all $v,v' \in V$ and $w,w' \in W$ and $\alpha,\beta \in \C$,
	$\alpha (v,w) + \beta (v',w') = (\alpha v + \beta v', \alpha w + \beta w')$.
\end{definition}

\begin{remark}
	$V \oplus \set 0$ is isomorphic to $V$, and given $v \in V$, the vector
	$(v,0)$ can be written $v$ when there is no ambiguity. Given $v \in V$ and $w
	\in W$, the vector $(v,w)$ can sometimes be written $v + w$.
\end{remark}

Hilbert spaces are closed under direct sums, with the following inner product:
$\ip{(x_1,y_1)}{(x_2,y_2)}_{X \oplus Y} = \ip{x_1}{x_2}_{X} +
\ip{y_1}{y_2}_{Y}$. However, this is not true for the tensor product: the
\emph{linear algebraic} tensor product of two Hilbert spaces is not necessarily
a Hilbert space. We explain below how we can get around this issue with
\emph{completion}.

\begin{definition}[Tensor product]
	Given two complex vector spaces $V, W$, there is a vector space $V \otimes
	W$, together with a bilinear map $- \otimes- \colon V \times W \to V \otimes
	W :: (v,w) \mapsto v \otimes w$, such that for every bilinear map $h \colon V
	\times W \to Z$, there is a unique linear map $h' \colon V \otimes W \to Z$,
	such that $h = h' \circ \otimes$.
\end{definition}

The tensor product of two Hilbert spaces $X$ and $Y$ is obtained through the
tensor product of the underlying vector spaces, with the inner product $\ip{x_1
\otimes y_1}{x_2 \otimes y_2}_{X \otimes Y} = \ip{x_1}{x_2}_X \ip{y_1}{y_2}_Y$
and then the completion of this space gives the desired Hilbert spaces. We
abuse notation and write $X \otimes Y$ for the resulting Hilbert space.

The category of Hilbert spaces and bounded linear maps between them, written
$\Hilb$, admits several different monoidal structures: with $(\otimes, \C)$ and
with $(\oplus, \set 0)$ -- the latter is in fact a biproduct. They even give it
a \emph{rig} structure, in the sense of Definition~\ref{def:rig}. Classical
computers operate on bits, while quantum computers apply operations on qubits,
written $\ket 0$ and $\ket 1$. They are usually denoted as vectors in the
Hilbert space $\C \oplus \C$ with $\ket 0 \defeq (1,0)$ and $\ket 1 \defeq
(0,1)$ the elements of its canonical basis.

The next lemma outlines another important structure to Hilbert spaces, called
the \emph{adjoint}. It is due to Frigyes Riesz
[\textipa{'fri{\textbardotlessj}ES 'ri:s}].

\begin{lemma}[\cite{riesz1955functional}]
	Given a bounded linear map $f \colon H_1 \to H_2$ between Hilbert spaces,
	there is a unique bounded linear map $f\dg \colon H_2 \to H_1$ such that for
	all $x \in H_1$ and $y \in H_2$, $\ip{f(x)}{y} = \ip{x}{f\dg(y)}$. The map
	$f\dg$ is called the adjoint of $f$.
\end{lemma}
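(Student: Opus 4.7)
The plan is to build $f\dg$ pointwise by invoking the Riesz representation theorem, which is the standard tool for this kind of result and which Riesz himself introduced precisely to enable constructions of this sort. Concretely, fix $y \in H_2$ and consider the map $\phi_y \colon H_1 \to \C$ defined by $\phi_y(x) = \ip{y}{f(x)}$. I would first check that $\phi_y$ is linear (this uses linearity of $f$ and linearity of the inner product in its second argument, as specified in the paper's convention) and bounded, with $\abs{\phi_y(x)} \leq \Vert y \Vert \cdot \Vert f \Vert \cdot \Vert x \Vert$ by Cauchy--Schwarz combined with the bound on $f$.

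Next, by the Riesz representation theorem, there exists a unique $z_y \in H_1$ such that $\phi_y(x) = \ip{z_y}{x}$ for every $x \in H_1$. Define $f\dg(y) \defeq z_y$. Taking complex conjugates of the identity $\ip{y}{f(x)} = \ip{z_y}{x}$ and using the conjugate symmetry of the inner product yields $\ip{f(x)}{y} = \ip{x}{f\dg(y)}$, which is the desired defining equation.

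It then remains to verify that $f\dg$ is a bounded linear map. Linearity follows from the uniqueness clause of Riesz: for $y_1, y_2 \in H_2$ and $\alpha, \beta \in \C$, both $f\dg(\alpha y_1 + \beta y_2)$ and $\alpha f\dg(y_1) + \beta f\dg(y_2)$ satisfy the characterising equation, hence must coincide. For boundedness, I would compute $\Vert f\dg(y) \Vert^2 = \ip{f\dg(y)}{f\dg(y)} = \ip{f(f\dg(y))}{y}$, apply Cauchy--Schwarz and the bound on $f$, and divide by $\Vert f\dg(y) \Vert$ to obtain $\Vert f\dg(y) \Vert \leq \Vert f \Vert \cdot \Vert y \Vert$ (handling the trivial case $f\dg(y) = 0$ separately). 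Finally, uniqueness of $f\dg$ itself is immediate: if $g$ is another such map, then $\ip{x}{g(y) - f\dg(y)} = 0$ for all $x \in H_1$, so taking $x = g(y) - f\dg(y)$ forces $g(y) = f\dg(y)$.

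The main (minor) obstacle is simply that the paper uses the convention in which the inner product is linear in its second argument, so care is needed when applying Riesz --- which is typically stated for functionals of the form $x \mapsto \ip{z}{x}$ --- and when taking conjugates to recover the equation in the form stated in the lemma. Everything else reduces to a careful bookkeeping exercise around a single appeal to Riesz.
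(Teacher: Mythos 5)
Your proof is correct and is the standard argument via the Riesz representation theorem; the paper itself states this lemma as a cited background result from \cite{riesz1955functional} without giving a proof, and your construction (fixing $y$, representing the bounded functional $x \mapsto \ip{y}{f(x)}$, then checking linearity, boundedness and uniqueness of the resulting map) is exactly what one finds there. Your attention to the paper's convention of linearity in the second argument is the right care to take, and all the verification steps go through as you describe.
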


The category $\Hilb$ is equipped with a structure of symmetric monoidal
\emph{dagger} category, meaning that $(-)\dg$ is an involutive contravariant
endofunctor which is the identity on objects. Moreover, the dagger and the
monoidal tensor respect some coherence conditions. For example, given two
bounded linear maps $f \colon H_1 \to H_2$ and $g \colon H'_1 \to H'_2$,
\[
	(f \otimes g)\dg = f\dg \otimes g\dg \colon H_2 \otimes H'_2 \to H_1 \otimes
	H'_1.
\]

\begin{remark}
	Remember that throughout the thesis, given two maps $f \colon H_1 \to H_2$
	and $g \colon H_2 \to H_3$, we sometimes write $gf \colon H_1 \to H_3$ for
	the composition $g \circ f \colon H_1 \to H_3$. In addition, given a complex
	number $\alpha$ and a map $f \colon H_1 \to H_2$, we write $\alpha f \colon
	H_1 \to H_2$ for the multiplication of the vector space $\alpha . f \colon
	H_1 \to H_2$.
\end{remark}

\begin{definition}
	Given a morphism $f \colon H_1 \to H_2$ in $\Hilb$, we say that $f$ is:
	\begin{itemize}
		\item a unitary, if it is an isomophism and $f^{-1} = f\dg$;
		\item a contraction (or contractive map), if for all $x \in A$, $\norm{fx}
			\leq \norm x$;
		\item an isometry, if $f\dg f = \mathrm{id}$;
		\item a coisometry, if $ff\dg = \mathrm{id}$.
	\end{itemize}
\end{definition}

The category $\Hilb$ enjoys many properties, such as being its own opposite
category and having a zero object -- the zero space $\{ 0 \}$. However, it is
not monoidal closed. Besides, neither $\Hilb$ nor its subcategories obtained
with the above definition are inverse categories, because projectors do not
commute.

\subsection{Quantum Computing}
\label{sub:hilb-computing}

Quantum physics is the science of the infinitesimal. Its laws rule the world of
small particules, in a way often described as hardly understandable to the
macroscopic human intuition. In quantum theory, a particle can be in several
states at the same time -- this is called a \emph{superposition} of states.
This superposition holds as long as the particule is not \emph{observed} -- by
the operator by the environment. Once we have a look at the particule, it fixes
itself in one particular state, with a certain probability.

This description can be simplified to the level of bits. Imagine that the two
possible states are $0$ or $1$, usually written respectively $\ket 0$ and $\ket
1$. The general state of a \emph{quantum} bit -- or \emph{qubit} -- is given by
the expression $\alpha \ket 0 + \beta \ket 1$, where $\alpha$ and $\beta$ are
complex numbers.  Once this qubit is observed, it becomes either $0$ with
probability $\abs\alpha^2$, or $1$ with probability $\abs\beta^2$. Because $0$
and $1$ are the only two possibilities in this simple case, we need to have a
proper probability distribution, and thus $\abs\alpha^2 + \abs\beta^2 = 1$.
When this condition is verified, we say that the state $\alpha \ket 0 + \beta
\ket 1$ is \emph{normalised}. There are several states that represent a qubit
with an equal probability to be measured to $0$ and $1$; the most usual ones
are $\ket + = \frac{1}{\sqrt 2} \ket 0 + \frac{1}{\sqrt 2} \ket 1$ and $\ket -
= \frac{1}{\sqrt 2} \ket 0 - \frac{1}{\sqrt 2} \ket 1$. When manipulating two
qubits in states $\ket x$ and $\ket y$ in parallel, we write $\ket x \otimes
\ket y$ or even $\ket{xy}$ for the resulting state.

Quantum computing is the science of performing operations on those qubits --
and more generally, on a quantum superposition of data -- to compute. The most
traditional way of expliciting a quantum algorithm is with a quantum circuit;
the latter is the quantum generalisation of logical circuits. A quantum circuit
is thus a sequence of quantum \emph{logic gates} applied to a fixed number of
qubits. Among quantum logic gates, one can find the $\mathtt{not}$ gate,
similar to the classical one, the Hadamard gate which maps $\ket 0$ to $\ket +$
and $\ket 1$ to $\ket -$, rotation gates that map $\ket 0$ to $\ket 0$ and
$\ket 1$ to $e^{i \pi \theta} \ket 1$, where $\theta$ is a real number, and the
controlled $\mathtt{not}$ on two qubits, which applies the $\mathtt{not}$ to
the second qubits when the first one is $1$, else is the identity.

These operations are usually represented with complex matrices, and the states
are given by vectors in finite-dimensional Hilbert spaces.
\[
	\ket 0 = \begin{pmatrix} 1 \\ 0 \end{pmatrix}
	\qquad
	\ket 1 = \begin{pmatrix} 0 \\ 1 \end{pmatrix}
	\qquad
	\alpha \ket 0 + \beta \ket 1 = \begin{pmatrix} \alpha \\ \beta \end{pmatrix}
\]
Multiples states in parallel, such as $\ket 1 \otimes \ket 0$, are obtained with
the usual tensor product. Thus the gates described above are the following:
\[
	\mathtt{not} = \begin{pmatrix} 0 & 1 \\ 1 & 0 \end{pmatrix}
	\qquad
	\mathtt{had} = \begin{pmatrix}
		\frac{1}{\sqrt 2} & \frac{1}{\sqrt 2} \\
		\frac{1}{\sqrt 2} & - \frac{1}{\sqrt 2}
	\end{pmatrix}
	\qquad
	R_{\theta} = \begin{pmatrix} 1 & 0 \\ 0 & e^{i \pi \theta} \end{pmatrix}
	\qquad
	\mathtt{cnot} = \begin{pmatrix}
		1 & 0 & 0 & 0 \\
		0 & 1 & 0 & 0 \\
		0 & 0 & 0 & 1 \\
		0 & 0 & 1 & 0
	\end{pmatrix}
\]

Any quantum algorithm can be expressed as a finite sequence of gates
\cite{nielsen02quantum}, and we say that quantum circuits are \emph{universal}.

Another important notation in quantum computing besides $\ket \cdot$ is $\bra \cdot$,
which is obtained by taking the dagger:
\[
	\bra 0 = \ket 0 \dg = \begin{pmatrix} 1 & 0 \end{pmatrix}
	\qquad
	\bra 1 = \ket 1 \dg = \begin{pmatrix} 0 & 1 \end{pmatrix}
	\qquad
	\alpha \bra 0 + \beta \bra 1  = \alpha \ket 0 \dg + \beta \ket 1 \dg  =
	\begin{pmatrix} \alpha & \beta \end{pmatrix}
\]

Note that Hilbert spaces and unitaries (resp. contractions) form a dagger
category. They are wide subcategories of $\Hilb$. Unitary maps are of central
importance because they are the proper quantum operations, as solutions of the
Schrödinger equation. One of the most significant unitary maps in quantum
computing is the basis change in $\C \oplus \C$, also known as the
\emph{Hadamard gate}: $\ketbra 0 + + \ketbra 1 -$. We observe here that
$\ketbra 0 +$ and $\ketbra 1 -$ are contractions, and that unitary maps can be
formulated as linear combination of compatible contractive maps. Note also that
the \emph{states}, like $\ket 0$, are isometries. 

We write $\Contr$, the category of \emph{countably-dimensional} Hilbert spaces
and contractive maps, $\Iso$, the category of \emph{countably-dimensional}
Hilbert and isometries between them, and $\Coiso$ the category of
\emph{countably-dimensional} Hilbert spaces and coisometries between them. The
category of \emph{countably-dimensional} Hilbert spaces and bounded linear maps
is written $\Hilbc$ in this thesis.

\begin{definition}[Zero map]
	Given any pair of Hilbert spaces $H_1$ and $H_2$, we write $0_{H_1,H_2}
	\colon H_1 \to H_2$ for the linear map whose image is $\set 0$ (we also write
	that $\Ker (0_{H_1,H_2}) = H_1$). When it is not ambiguous, we write $0 \colon
	H_1 \to H_2$. It is a contractive map.
\end{definition}

\begin{remark}
	Given a Hilbert space $H$, the morphism $0 \colon H \to \set 0$ is unique for
	every $H$ and makes $\set 0$ a terminal object in $\Contr$ and in $\Coiso$.
\end{remark}

Contractions are widely used in the literature for the denotational semantics
of quantum programming languages \cite{heunen2022information,
pablo2022universal, carette2023quantum}. Some recent developments expose axioms
for the categories involved in this thesis \cite{heunen2022axioms,
heunen2022contractions, dimeglio2024dagger}. This better mathematical understanding
of the category theory behind Hilbert spaces can only be beneficial for the
theory of programming languages.

\paragraph{The $\ell^2$ functor.}
As said in \cite{heunen2013l2}, the $\ell^2$ construction is the closest thing
there is to a free Hilbert space. Given a set $X$, the following:
\begin{equation}
	\label{eq:l2}
	\ell^2(X) \defeq \set{\phi \colon X \to \C \alt \sum_{x \in X}
	\abs{\phi(x)}^2 < \infty}
\end{equation}
is actually a Hilbert space. Even more, $\ell^2(-)$ is a functor $\PInj \to
\Hilb$; given a morphism $f \colon X \to Y$ in $\PInj$, we have:
\[
	\ell^2(f)(\phi) = \phi \circ f\pinv.
\]
This functor comes with many properties (see \cite{heunen2013l2}) except the one
programming language theorists would want: it has no adjoints. Because of this,
a \emph{quantum effect} based on $\ell^2$ cannot be studied as a usual
computational effect (see \secref{sub:l2effect} where this point is discussed,
and see the next section \secref{sec:background-monads} for the \emph{usual}
semantics of effects).

Given an element $x \in X$, its counterpart in $\ell^2(X)$ -- in other words,
the $\phi$ such that $\phi(x) = 1$ and for all $y \neq x$, $\phi(y) = 0$ -- is
written $\ket x$. The family $(\ket x)_{x \in X}$ is called the \emph{canonical
basis} of $\ell^2(X)$.

\section{Monads}
\label{sec:background-monads}

We introduce some background on strong and commutative monads and their
premonoidal structure. Monads appear to be the most usual tool to interpret
effects in a programming language. This is thanks to Moggi's work \cite{moggi,
moggi-lics}.

Monads are the generalisation of monoids in category theory, where the
operation is the composition. One is likely to hear at least one the sentence
``a monad is just a monoid in the category of endofunctors''. While this
sentence is correct, we introduce a bit more background to understand the later
use of monads in Chapter~\ref{ch:monads}.

\subsection{Strong and Commutative Monads}

We begin by recalling the definition of a monad.

\begin{definition}[Monad]\label{def:monad}
	A \emph{monad} over a category $\CC$ is an endofunctor $\TT \colon \CC \to
	\CC$ equipped with two natural transformations $\eta \colon \iid \Rightarrow
	\TT$ and $\mu \colon \TT^2 \Rightarrow \TT$ such that the following diagrams
    \[\tikzfig{monad-def}\qquad\qquad\tikzfig{monad-def2}\]
  commute. We call $\eta$ the \emph{unit} of $\TT$ and we say that $\mu$ is the
  \emph{multiplication} of $\TT$.
\end{definition}

Next, we recall the definition of a \emph{strong} monad, which is the main
object of study in Chapter~\ref{ch:monads}. As we already explained in the introduction,
these monads are more computationally relevant (compared to non-strong ones) for
most use cases. The additional structure, called the \emph{monadic strength},
ensures the monad interacts appropriately with the monoidal structure of the
base category.

\begin{definition}[Strong Monad]\label{def:strong-monad}
  A \emph{strong monad} over a monoidal category $(\CC,\otimes,I,\alpha,\lambda, \rho)$
  is a monad $(\TT,\eta,\mu)$ equipped with a natural transformation
  $\tau_{X,Y}:X\otimes\TT Y\to\TT(X\otimes Y),$ called \emph{left strength},
  such that the following diagrams commute:
    \[\stikz[0.83]{../figures/strength-def.tikz}\]
\end{definition}

We now recall the definition of a \emph{commutative} monad which is of central
importance here and in Chapter~\ref{ch:monads}. Compared to a strong monad, a commutative monad
enjoys even stronger coherence properties with respect to the monoidal
structure of the base category (see also \secref{sub:premonoidal}).

\begin{definition}[Commutative Monad]
  \label{def:commutative-monad}
	Let $(\TT, \eta, \mu, \tau)$ be a strong monad on a \emph{symmetric} monoidal
	category $(\CC, \otimes, I, \gamma)$.  The \emph{right strength} $\tau'_{X,Y}
	\colon \TT X \otimes Y \to \TT(X \otimes Y)$ of $\TT$ is given by the
	assignment $\tau'_{X,Y} \eqdef
	\TT(\gamma_{Y,X})\circ\tau_{Y,X}\circ\gamma_{\TT X,Y}$. Then, $\TT$ is said
	to be \emph{commutative} if the following diagram commutes:
  \begin{equation}
    \label{eq:commutative-monad}
    \stikz[0.83]{./figures/commutative-monad-def.tikz}
  \end{equation}
\end{definition}

\begin{remark}
	In the literature, the left and right strengths are sometimes called
	``strength" and ``costrength" respectively.
\end{remark}

\begin{definition}[Morphism of Strong Monads \cite{jacobs-coalgebra}]\label{def:morph-monad}
	Given two strong monads $(\TT, \eta^\TT, \mu^\TT, \tau^\TT)$ and
	$(\PP,\eta^\PP, \mu^\PP, \tau^\PP)$ over a category $\CC$, a \emph{morphism
	of strong monads} is a natural transformation $\iota : \TT \Rightarrow \PP$
	that makes the following diagrams commute:
  \[\stikz[0.83]{figures/map-of-monads-def.tikz}\]
\end{definition}

Strong monads over a (symmetric) monoidal category $\CC$
and strong monad morphisms between them form a category which we denote by
writing $\mathbf{StrMnd}(\CC).$
In the situation of Definition \ref{def:morph-monad}, if $\iota$ is a
monomorphism in $\mathbf{StrMnd}(\CC)$, then $\TT$ is said
to be a \emph{strong submonad} of $\PP$ and $\iota$ is said to be a
\emph{strong submonad morphism}.

\begin{definition}[Kleisli category]\label{def:kleisli}
	Given a monad $(\TT,\eta,\mu)$ over a category $\CC$, the \emph{Kleisli
	category} $\CC_\TT$ of $\TT$ is the category whose objects are the same as
	those of $\CC$, but whose morphisms are given by $\CC_\TT(X,Y)=\CC(X,\TT Y)$.
	Composition in $\CC_\TT$ is given by $g\odot f \eqdef \mu_Z\circ\TT g\circ f$
	where $f:X\to\TT Y$ and $g:Y\to\TT Z$. The identity at $X$ is given by the
	monadic unit $\eta_X \colon X \to \TT X.$
\end{definition}

\begin{proposition}[\cite{jacobs-coalgebra}]
  \label{prop:embed}
  If $\iota : \TT \naturalto \PP$ is a submonad morphism, then
  the functor $\II:\CC_\TT\to \CC_\PP,$ defined by $\II(X) = X$ on objects and
  $\II(f:X\to \TT Y) = \iota_Y\circ f \colon X \to \PP Y$ on morphisms,
  is an embedding of categories.
\end{proposition}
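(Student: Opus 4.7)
The plan is to verify that $\II$ satisfies the three requirements of an embedding of categories: (i) it is a functor, (ii) it is injective on objects, and (iii) it is faithful. The injectivity on objects is immediate from the definition $\II(X) = X$, so the real content lies in (i) and (iii).

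For functoriality, I would check identities and composition separately. For identities, the identity at $X$ in $\CC_\TT$ is $\eta^\TT_X$, and the first square of Definition~\ref{def:morph-monad} gives $\II(\eta^\TT_X) = \iota_X \circ \eta^\TT_X = \eta^\PP_X$, which is the identity at $X$ in $\CC_\PP$. For composition, take $f \colon X \to \TT Y$ and $g \colon Y \to \TT Z$ in $\CC_\TT$. Unfolding the Kleisli composition in $\CC_\PP$, I would compute
\[
\II(g) \odot_\PP \II(f) = \mu^\PP_Z \circ \PP(\iota_Z \circ g) \circ \iota_Y \circ f = \mu^\PP_Z \circ \PP(\iota_Z) \circ \PP(g) \circ \iota_Y \circ f,
\]
then use naturality of $\iota$ at $g$ to rewrite $\PP(g) \circ \iota_Y = \iota_{\TT Z} \circ \TT(g)$, and finally apply the second (multiplication-preservation) square of Definition~\ref{def:morph-monad} to reduce $\mu^\PP_Z \circ \PP(\iota_Z) \circ \iota_{\TT Z}$ to $\iota_Z \circ \mu^\TT_Z$. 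This yields $\iota_Z \circ \mu^\TT_Z \circ \TT(g) \circ f = \II(g \odot_\TT f)$, as required.

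The main obstacle is faithfulness. Unwinding the definitions, $\II$ is faithful precisely when, for every object $Y$ of $\CC$, the component $\iota_Y \colon \TT Y \to \PP Y$ is a monomorphism in $\CC$: two Kleisli morphisms $f, g \colon X \to \TT Y$ are identified by $\II$ exactly when $\iota_Y \circ f = \iota_Y \circ g$. So the task reduces to showing that a monomorphism $\iota$ in $\mathbf{StrMnd}(\CC)$ has monic components. The cleanest route is to observe that the forgetful functor $U \colon \mathbf{StrMnd}(\CC) \to [\CC, \CC]$ creates limits (the strong monad structure lifts pointwise along the relevant equalisers, since $\eta$, $\mu$, and $\tau$ are themselves natural transformations and the required coherence diagrams can be verified pointwise), hence preserves monomorphisms; then $U$ composed with evaluation at $Y$ gives a forgetful functor to $\CC$ that also preserves monomorphisms, so $\iota_Y$ is monic in $\CC$.

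With componentwise monicity in hand, faithfulness is immediate, and combined with functoriality and injectivity on objects this shows $\II$ is an embedding. I would expect the functoriality check to go through routinely using the two monad morphism axioms and naturality of $\iota$, while the faithfulness step is where one must be careful about what ``submonad morphism'' really means, and is the place where the lemma about limits in $\mathbf{StrMnd}(\CC)$ does the real work.
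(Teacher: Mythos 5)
Your overall architecture is right, and the functoriality computation is correct: the unit axiom of Definition~\ref{def:morph-monad} handles Kleisli identities, and naturality of $\iota$ together with the multiplication axiom handles Kleisli composition exactly as you describe. The paper imports this proposition from the literature without giving a proof, so there is no in-text argument to compare against; judged on its own terms, your proof stands or falls on the faithfulness step, which you correctly reduce to the claim that every component $\iota_Y$ is a monomorphism in $\CC$.

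That reduction is the right crux, but its justification has a genuine gap. The paper defines a submonad morphism as a monomorphism in $\mathbf{StrMnd}(\CC)$, and it is far from automatic that such a mono has monic components. Your route --- $U\colon \mathbf{StrMnd}(\CC)\to[\CC,\CC]$ creates limits, hence preserves monos, and evaluation at $Y$ preserves monos --- silently uses two facts that are unavailable for an arbitrary symmetric monoidal $\CC$: a limit-creating functor is only guaranteed to preserve monomorphisms when the relevant kernel pairs exist in its codomain, and monomorphisms in $[\CC,\CC]$ are only known to be componentwise when $\CC$ has pullbacks (or when evaluation admits a left adjoint, which requires copowers). Neither hypothesis is assumed in the proposition. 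The honest repair is to read ``submonad morphism'' as a strong monad morphism whose components are monic in $\CC$ --- which is how the proposition is actually deployed in this paper: Lemma~\ref{lem:monic} establishes precisely the componentwise monicity of $\iota$ for the centre before Proposition~\ref{prop:embed} is invoked --- at which point faithfulness is immediate and the limit-creation lemma is not needed at all.
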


The functor $\II$ above is the canonical embedding of $\CC_\TT$ into $\CC_\PP$
induced by the submonad morphism $\iota \colon \TT \naturalto \PP .$

\subsection{Semantics of the $\lambda$-calculus with effects}
\label{sub:sem-lambda-effects}

We present here a brief summary of the work by Moggi \cite{moggi, moggi-lics}
on computational effects. This work has been very influencial, resulting in the
development of the programming language Haskell, among others.

The grammar and typing rules for Moggi's metalanguage is found in
Figure~\ref{fig:moggi-grammars}. Compared to the simply-typed
$\lambda$-calculus, a type construction $\TT(-)$ is added. Given a type $A$,
the type $\TT A$ is called a \emph{monadic} type, and represents the
computational effects of type $A$ allowed in the language. The $\mathtt{re}t$
constructor is to be seen as an introduction rule for monadic types, and
embodies the fact that a \emph{pure} or \emph{non-effectful} computation can be
seen as a monadic computation, but with no effect. The $\mathtt{do}$ operation
performs the sequencing of monadic computations.

The equational theory for monadic types, added on top of the equational theory
for the simply-typed $\lambda$-calculus presented in
Figure~\ref{fig:moggi-rules}, is found in Figure~\ref{fig:moggi-monad-rules}.

\begin{figure}[!h]
  \noindent $\text{(Types)}\quad  A, B ~~::= 1 \alt A \to  B\alt A\times B
  \alt \TT A$ \\
  $\quad$\\
  $\text{(Terms)} \quad M,N ~~ ::=  x \alt * \alt \lambda x^A.M \alt MN \alt \pv M N$ \\
  $\text{ }\qquad\alt \pi_i M \alt \rret M \alt \ddo x M N $
  $\quad$\\
  \[\begin{array}{c}
    \infer{
      \Gamma,x \colon A\vdash x \colon A}{}
    \qquad
    \infer{
      \Gamma\vdash MN \colon B
    }{
      \Gamma\vdash M \colon A\to B
      &
      \Gamma\vdash N \colon A}
    \\[1.5ex]
    \infer{
      \Gamma\vdash * \colon 1}{}
    \qquad
    \infer{\Gamma\vdash\lambda x^A.M \colon A\to B}{\Gamma,x \colon A\vdash M \colon B}
    \qquad
    \infer{
      \Gamma\vdash\pi_i M \colon A_i}{\Gamma\vdash M \colon A_1\times A_2}
    \\[1.5ex]
    \infer{
      \Gamma\vdash \pv{M}{N} \colon  A\times B
    }{
      \Gamma\vdash M \colon A
      &
      \Gamma\vdash N \colon B
    }
    \qquad
    \infer{
      \Gamma\vdash \rret M  \colon \TT A}{\Gamma\vdash M \colon A}
		\qquad
    \infer{
      \Gamma\vdash\ddo x M N \colon \TT B
    }{
      \Gamma\vdash M \colon \TT A
      &
      \Gamma, x \colon A\vdash N \colon \TT B
    }
  \end{array}\]
  \caption{Grammars and typing rules.}
  \label{fig:moggi-grammars}
\end{figure}

\begin{figure}[!h]
  \resizebox{\hsize}{!}{
    $
    \begin{array}{c}
      \infer[(ret.eq)]{
        \Gamma\vdash \rret M = \rret N \colon \TT A
      }{
        \Gamma\vdash M = N \colon A
      }
      \qquad
      \infer[(do.eq)]{
        \Gamma\vdash \ddo x M N = \ddo{x}{M'}{N'} \colon \TT B
      }{
        \Gamma\vdash M=M' \colon \TT A
        &
        \Gamma, x \colon A\vdash N=N' \colon \TT B
      }
      \\[1.5ex]
      \infer[(\TT.\beta)]{
        \Gamma\vdash \ddo{x}{\rret M}{N} = N[M/x] \colon \TT B
      }{
        \Gamma\vdash M \colon A
        &
        \Gamma,x \colon A\vdash N \colon \TT B
      }
      \qquad
      \infer[(\TT.\eta)]{
        \Gamma\vdash \ddo{x}{M}{\rret x} = M \colon \TT A
      }{
        \Gamma\vdash M \colon \TT A
      }
		\end{array}
		$
	}
  \caption{Moggi's equational rules for terms of monadic types.}
  \label{fig:moggi-monad-rules}
\end{figure}

\paragraph{Denotational semantics.}
The denotational semantics of Moggi's metalanguage is obtained in a cartesian
closed category $\CC$ equipped with a strong monad $\TT$. Pure computations
shall still be interpreted in as morphisms in the category $\CC$; while monadic
computations, \emph{e.g.} $\Gamma \vdash M \colon \TT A$, are interpreted as
morphisms $\sem\Gamma \to \TT \sem A$, thus living in the Kleisli category of
$\TT$. The interpretation of $\mathtt{ret}$ is given by the unit of the monad
$\TT$, and the interpretation of of $\mathtt{do}$ is obtained with the
composition in the Kleisli category.

\subsection{Premonoidal Structure of Strong Monads}
\label{sub:premonoidal}

Let $\TT$ be a strong monad on a symmetric monoidal category $(\CC, I,
\otimes)$. Then, its Kleisli category $\CC_\TT$ does \emph{not} necessarily
have a canonical monoidal structure. However, it does have a canonical
\emph{premonoidal structure} as shown by Power and Robinson
\cite{power-premonoidal}. In fact, they show that this premonoidal structure is
monoidal iff the monad $\TT$ is commutative. Next, we briefly recall the
premonoidal structure of $\CC_\TT$ as outlined by them.

For every two objects $X$ and $Y$ of $\CC_\TT$, their tensor product $X \otimes
Y$ is also an object of $\CC_\TT$, but the monoidal product $\otimes$ of $\CC$
does not necessarily induce a bifunctor on $\CC_\TT$. However, by using
the left and right strengths of $\TT$, we can define two families of functors
as follows:
\begin{itemize}
  \item for any object $X$, a functor $(-\otimes_l X) \colon \CC_\TT \to
    \CC_\TT$ whose action on objects sends $Y$ to $Y\otimes X$, and sends
    $f:Y\to \TT Z$ to $\tau'_{Z,X}\circ(f\otimes X):Y\otimes X\to\TT(Z\otimes X)$;
  \item for any object $X$, a functor $(X\otimes_r -) \colon \CC_\TT \to
    \CC_\TT$ whose action on objects sends $Y$ to $X\otimes Y$, and sends
    $f:Y\to \TT Z$ to $\tau_{X,Z}\circ(X\otimes f):X\otimes Y\to\TT(X\otimes Z)$.
\end{itemize}
This categorical data satisfies the axioms and coherence properties of
\emph{premonoidal categories} as explained in \cite{power-premonoidal}, but
which we omit here because it is not essential for the development of our
results.  What is important is to note that in a premonoidal category, $f \otimes_l X'$ and
$X\otimes_r g$ do not always commute. This leads us to the following definition,
which plays a crucial role in the theory of premonoidal categories and
has important links to our development.

\begin{definition}[Premonoidal Centre \cite{power-premonoidal}]\label{def:central-morph}
	Given a strong monad $(\TT, \eta, \mu, \tau)$ on a symmetric monoidal
	category $(\CC, I, \otimes)$, we say that a morphism $f:X\to Y$ in $\CC_\TT$
	is \emph{central} if for any morphism $f':X'\to Y'$ in $\CC_\TT$, the diagram
  \[ \tikzfig{def-central-morphism} \]
  commutes in $\CC_\TT$.
  The \emph{premonoidal centre} of $\CC_\TT$ is the
  subcategory $Z(\CC_\TT)$ which has the same objects as those of $\CC_\TT$ and
  whose morphisms are the central morphisms of $\CC_\TT$.
\end{definition}

In \cite{power-premonoidal}, the authors prove that $Z(\CC_\TT)$, is a
symmetric \emph{monoidal} subcategory of $\CC_\TT$. In particular, this means
that Kleisli composition and the tensor functors $(- \otimes_l X)$ and $(X
\otimes_r -)$ preserve central morphisms. However, it does not necessarily hold
that the subcategory $Z(\CC_\TT)$ is the Kleisli category for a monad over
$\CC$. Nevertheless, in this situation, the left adjoint of the Kleisli
adjunction $\JJ \colon \CC \to \CC_\TT$ always corestricts to $Z(\CC_\TT)$.  We
write $\hat \JJ \colon \CC \to Z(\CC_\TT)$ to indicate this corestriction
(which need not be a left adjoint).

\begin{remark}
  In \cite{power-premonoidal}, the subcategory $Z(\CC_\TT)$ is called the
  centre of $\CC_\TT$.  However, we refer to it as the \emph{premonoidal
  centre} of a premonoidal category to avoid confusion with the new
  notion of the centre of a monad that we introduce next. In the sequel, we show
  that the two notions are very strongly related to each other (Theorem
  \ref{th:centralisability}).
\end{remark}

\chapter{Monads and Commutativity}
\label{ch:monads}
\begin{citing}{8cm}
	``Everyone likes monads.'' --- Nima Motamed.
\end{citing}

\begin{abstract}
	Monads in category theory are algebraic structures that can be used to model
	computational effects in programming languages.  We show how the notion of
	``\emph{centre}'', and more generally ``\emph{centrality}'', \textit{i.e.}, the
	property for an effect to commute with all other effects, may be formulated
	for strong monads acting on symmetric monoidal categories. We identify three
	equivalent conditions which characterise the existence of the centre of a
	strong monad (some of which relate it to the premonoidal centre of Power and
	Robinson) and we show that every strong monad on many well-known naturally
	occurring categories does admit a centre, thereby showing that this new
	notion is ubiquitous. More generally, we study \emph{central submonads},
	which are necessarily commutative, just like the centre of a strong monad. We
	provide a computational interpretation by formulating
	equational theories of lambda calculi equipped with central submonads, we
	describe categorical models for these theories and prove soundness,
	completeness and internal language results for our semantics.

	\paragraph{References.} This chapter, apart from some additions made by the
	author, is a paper \cite{nous23monads} presented at LICS'2023, and coauthered
	with Titouan Carette and Vladimir Zamdzhiev.
\end{abstract}

\section{Introduction}

The importance of monads in programming semantics has been demonstrated in
seminal work by Moggi \cite{moggi-lics,moggi}. The main idea is that monads
allow us to introduce computational effects (e.g., state, input/output,
recursion, probability, continuations) into pure type systems in a controlled
way. The mathematical development surrounding monads has been very successful
and it directly influenced modern programming language design through the
introduction of monads as a programming abstraction into languages such as
Haskell, Scala and others (see \cite{benton-monads}). Inspired by this, we
follow in the same spirit: we start with a mathematical question about monads,
we provide the answer to it and we present a computational interpretation. The
mathematical question that we ask is simple and it is inspired by the theory of
monoids and groups:
\begin{center}
	\label{eq:question}
	\emph{Is there a suitable notion of ``centre'' that may be formulated for
	monads and what is a ``central'' submonad?}
\end{center}

We show that, just as every monoid $M$ (on $\Set$) has a centre, which is a
\emph{commutative} submonoid of $M$, so does every (canonically strong) monad
$\TT$ on $\Set$ and the centre of $\TT$ is a \emph{commutative} submonad of
$\TT$ (\secref{sub:sets}). A central\footnote{Given a group $G$, a
\emph{central subgroup} is a subgroup of the centre of $G$, equivalently, a
subgroup whose elements commute with every element of $G$.} submonad of $\TT$
is simply a submonad of the centre of $\TT$ (Definition \ref{def:central-sub})
and the analogy to the case of monoids and groups is completely preserved. Note
that our construction has nothing to do with the folklore characterisation of
monads as monoid objects in a functor category, wherein the notion of
commutativity is unclear. The relevant analogy with monoids in $\Set$ is fully
explained in Example \ref{ex:free-monad}.  Generalising away from the category
$\Set$, the answer is a little bit more complicated: not every monoid object
$M$ on a symmetric monoidal category $\CC$ has a centre, and neither does every
strong monad on $\CC$ (\secref{sub:bad-monad}).  However, we show that under
some reasonable assumptions, the centre does exist (Theorem
\ref{th:centralisability}) and we have not found any naturally occurring monads
in the literature that are not centralisable (\emph{i.e.}, monads other than the
artificially constructed one we used as a counter-example).  Furthermore, we
show that for many categories of interest, all strong monads on them are
centralisable (\secref{sub:examples}) and we demonstrate that the notion of
centre is ubiquitous. The centre of a strong monad satisfies interesting
universal properties (Theorem \ref{th:centralisability}) which may be
equivalently formulated in terms of our novel notion of \emph{central cone} or
via the \emph{premonoidal centre} of Power and Robinson
\cite{power-premonoidal}. The notion of a central submonad is more general and
it may be defined without using the centre.  When the centre exists, a central
submonad may be equivalently defined as a strong submonad of the centre
(Theorem \ref{th:centrality}).

The computational significance of these ideas is easy to understand: given an
effect, modelled by a strong monad, such that perhaps not \emph{every} pair of
effectful operations commute (\emph{i.e.}, the order of monadic sequencing matters),
identify only those effectful operations which do commute with any other
possible effectful operation. The effectful operations that satisfy this
property are called \emph{central}. When the monad is centralisable, the
collection of \emph{all} central operations determine the centre of the monad
(which is a commutative submonad). \emph{Any} collection of central operations
that may be organised into a strong submonad determines a central submonad
(which also is commutative). We argue that central submonads have greater
computational significance compared to the centre of a strong monad
(\secref{sub:theories}) for two main reasons: (1) central submonads are
strictly more general; (2) central submonads have a simpler and considerably
more practical axiomatisation via an equational theory, whereas the centre of a
monad requires an axiomatisation using a more complicated logical theory.  We
cement our categorical semantics by proving soundness, completeness and
internal language results (See \cite{maietti2005relating} for a
convincing argument why internal language results are important and why
soundness and completeness \emph{alone} might not be sufficient).

\subsection{Related Work}

A notion of commutants for enriched algebraic theories has been defined in
\cite{commutants} from which the author derives a notion of centre of an
enriched algebraic theory. In the case of enriched monads, in other words,
strong monads arising from enriched algebraic theories, their notion of
commutant extends to monad morphisms. While not explicitly stated in the paper,
applying the commutant construction on the identity monad morphism from a monad
to itself provides a notion of centre of a monad that appears to coincide with
ours. However, enriched algebraic theories correspond to $\mathcal{J}$-ary
$\VVV$-enriched monads (See \cite{commutants} for a definition of
$\mathcal{J}$-ary monads w.r.t.  a system of arities $\mathcal{J}$) on a
symmetric monoidal \emph{closed} category $\VVV$ (equivalently $\JJ$-ary strong
monads on $\VVV$). In this chapter, we show that monoidal closure of $\VVV$ is not
necessary to define the centre and neither is the $\mathcal J$-ary assumption
on the monad.  Other related work \cite{garner2016commutativity} considers a
very general notion of commutativity in terms of certain kinds of duoidal
categories.  As a special case of their treatment, the authors are able to
recover the commutativity of bistrong monads and with some additional effort
(not outlined in the paper), it is possible to construct the centre of a
bistrong monad acting on a monoidal \emph{biclosed} category.  Our construction
of the centre appears to coincide with theirs in the special case of strong
monads defined on symmetric monoidal \emph{closed} categories, but as discussed
above, our method does not require any kind of closure of the category.
Therefore, compared to both works \cite{garner2016commutativity,commutants}, as
far as symmetric monoidal (not necessarily closed) categories are concerned,
our methods can be used to construct the centre for a larger class of strong
monads and we establish our main results, together with our universal
characterisation of the centre, under these assumptions. Furthermore, we also
place a heavy emphasis on \emph{central} submonads in this chapter and these kinds
of monads are not discussed in either of these works and neither is there a
computational interpretation (which is our main result in
\secref{sec:computational}).

Another related work is \cite{power-premonoidal}, which introduces
premonoidal categories. We have established important links between our
development and the premonoidal centre (Theorem \ref{th:centralisability}).
While premonoidal categories have been influential in our understanding of
effectful computation, it was less clear (to us) how to formulate an appropriate
computational interpretation of the premonoidal centre for higher-order
languages. We show that under some mild assumptions (which are easily
satisfied see \secref{sec:examples}), the premonoidal centre of the Kleisli
category of a strong monad induces an adjunction into the base category
(Theorem \ref{th:centralisability}) and this allows us to formulate a suitable
computational interpretation by using monads, which are already well-understood
\cite{moggi,moggi-lics} and well-integrated into many programming languages
\cite{benton-monads}.

Staton and Levy introduce the novel notion of \emph{premulticategories}
\cite{premulticategories} in order to axiomatise impure/effectful computation
in programming languages. The notion of centrality plays an important role in
the development of the theory there as well.  However, they do not focus, as we
do, on providing suitable programming abstractions that identify both central
and non-central computations (e.g., by separating them into different types
like us) and from what we can tell from our reading, there are no universal
properties stated for the collection of central morphisms.  Also, our results
provide a computational interpretation in terms of monads, which are standard
and well-understood, so it is easier to incorporate them into existing
languages.

Central morphisms in the context of computational effects have been studied in
\cite{fuhrmann1999lambda}, among other sorts of \emph{varieties} of morphisms:
thunkable, copyable, and discardable.  The author links their notion of central
morphisms with the ones from the premonoidal centre in Power and Robinson
\cite{power-premonoidal}, and also proves under some conditions that those
varieties form a subcategory with similar properties to the original category.
However, they do not mention that a central submonad or a centre can be
constructed out of those central morphisms. More generally, the fact that monads
could be derived from those varieties is not studied at all in that paper.

The work in \cite{fuhrmann1999lambda} has an impact in \cite{dylan2022galois},
where a Galois connection is established between call-by-value
and call-by-name. In that paper, the order in which operations
are done matters, and central computations are mentioned. Again,
the central computations are not linked to submonads in there.

\subsection{Work of the Author}

The author has contributed to the following points.
\begin{itemize}
	\item Equivalent characterisations for a monad to be centralisable (see
		Theorem~\ref{th:centralisability}).
	\item Subsection~\ref{sub:lawvere}, which details the link with Lawvere
		theories.
	\item A language for central submonads, inspired from Moggi's metalanguage.
	\item A notion of equational theories for such a language.
	\item A completeness and internal language result, linking the categorical
		model to a syntactic notion of central submonad.
\end{itemize}

Compared to the paper \cite{nous23monads}, the proofs produced by the author
are added, as well as a section (see \secref{sub:lawvere}) in which we show the
link with Lawvere theories.

\section{The Centre of a Strong Monad}
\label{sec:centralisable}

We begin by showing that any (necessarily strong) monad on $\Set$ has a
centre (\secref{sub:sets}) and we later show how to define the centre of a
strong monad on an arbitrary symmetric monoidal category
(\secref{sub:centre-general}). Unlike the former, the latter submonad does not
always exist, but it does exist under mild assumptions and we show that the
notion is ubiquitous.

\subsection{The Centre of a Monad on $\Set$}
\label{sub:sets}

The results we present next are a special case of our more general development
from \secref{sub:centre-general}, but we choose to devote special attention to
monads on $\Set$ for illustrative purposes.

\begin{definition}[Centre]\label{def:centre-set}
	Given a strong monad $(\TT, \eta, \mu, \tau)$ on $\Set$
	with right strength $\tau'$,
	we say that the \emph{centre} of $\TT$ at $X$, written $\ZZ X$, is the set
	\begin{align*}
		\ZZ X \eqdef \{ t \in \TT X \ & |\ \forall Y \in \Ob(\Set). \forall s \in \TT Y. 
		\mu(\TT\tau'(\tau(t,s))) = \mu(\TT\tau(\tau'(t,s)))  \} .
	\end{align*}
	We write $\iota_X \colon \ZZ X \subseteq \TT X$ for the indicated subset inclusion.
\end{definition}

In other words, the centre of $\TT$ at $X$ is the subset of $\TT X$ which
contains all monadic elements for which \eqref{eq:commutative-monad} holds when
the set $X$ is fixed and the set $Y$ ranges over all sets.


Notice that $\ZZ X \supseteq \eta_X(X),$ \emph{i.e.}, the centre of $\TT$ at $X$
always contains all monadic elements which are in the image of the monadic
unit. This follows easily from the axioms of strong monads.  In fact, the
assignment $\ZZ(-)$ extends to a \emph{commutative submonad} of $\TT$. In
particular, the assignment $\ZZ(-)$ extends to a functor $\ZZ \colon \Set \to \Set$
when we define \( \ZZ f \eqdef \TT f|_{\ZZ X} \colon \ZZ X \to \ZZ Y, \) for any
function $f: X \to Y,$ where $\TT f|_{\ZZ X}$ indicates the restriction of $\TT
f \colon \TT X \to \TT Y$ to the subset $\ZZ X.$ Moreover, for any two sets $X$ and
$Y$, the monadic unit $\eta_X \colon X \to \TT X$, the monadic multiplication $\mu_X
: \TT^2 X \to \TT X$, and the monadic strength $\tau_{X,Y} \colon X \times \TT Y \to
\TT(X \times Y)$ (co)restrict respectively to functions $\eta_X^\ZZ \colon X \to \ZZ
X$, $\mu_X^\ZZ \colon \ZZ^2 X \to \ZZ X$ and $\tau_{X,Y}^\ZZ \colon X \times \ZZ Y \to
\ZZ(X \times Y)$.  That the above four classes of functions (co)restrict as
indicated follows from our more general treatment presented in the next
section. It then follows, as a special case of Theorem
\ref{th:centralisability}, that the data we just described constitutes a
commutative submonad of $\TT$.

\begin{theorem}
	\label{thm:centre-set}
	The assignment $\ZZ(-)$ can be extended to a \emph{commutative submonad} $(\ZZ, \eta^\ZZ, \mu^\ZZ, \tau^\ZZ)$ of $\TT$
	with the inclusions $\iota_X \colon \ZZ X \subseteq \TT X$ being the submonad morphism.
	Furthermore, there is a canonical isomorphism of categories $\Set_\ZZ \cong
	Z(\Set_\TT)\footnote{Theorem \ref{th:centralisability} states precisely in what sense this isomorphism is canonical.}.$
\end{theorem}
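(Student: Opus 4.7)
My plan is to deduce the theorem by specialising the general Theorem~\ref{th:centralisability} to $\CC = \Set$, while verifying the pointwise behaviour of the monadic structure that $\Set$ lets us exploit. The work breaks up into four tasks: functoriality of $\ZZ$, corestriction of $\eta$, $\mu$ and $\tau$ to $\ZZ$, checking that the axioms of a strong commutative monad hold, and finally exhibiting the isomorphism $\Set_\ZZ \cong Z(\Set_\TT)$.

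First, I would show that for any function $f \colon X \to Y$, the restriction $\TT f|_{\ZZ X}$ lands in $\ZZ Y$. Taking any $W$ and any $s \in \TT W$, the defining equation of $\ZZ Y$ for the element $\TT f(t)$ is obtained from the one for $t$ by chasing naturality of $\tau$ and $\tau'$ in one variable together with naturality of $\mu$; no use is made yet of the centrality of $s$. Once $\ZZ \colon \Set \to \Set$ is a functor, I would corestrict the three pieces of structure: (i) $\eta^\ZZ_X \colon X \to \ZZ X$ is well-defined because any returned value is central, which is the standard consequence of the strong-monad axioms applied with $\eta$ in one slot; (ii) $\tau^\ZZ_{X,Y} \colon X \times \ZZ Y \to \ZZ(X \times Y)$ is well-defined by a direct diagram chase using the strength axioms of Definition~\ref{def:strong-monad}; (iii) $\mu^\ZZ_X \colon \ZZ^2 X \to \ZZ X$ is well-defined by combining the monad associativity with two applications of the defining equation of $\ZZ$, one for the outer element and one for the inner elements already known to be central. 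This last verification is the main obstacle: it is the only step that manipulates three nested occurrences of $\TT$ and must interleave associativity of $\mu$ with the strength coherence diagrams before centrality can be invoked.

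Next, since each inclusion $\iota_X \colon \ZZ X \subseteq \TT X$ is a monomorphism in $\Set$, all monad and strength axioms of $(\ZZ, \eta^\ZZ, \mu^\ZZ, \tau^\ZZ)$ lift from those of $\TT$: each axiom diagram for $\ZZ$ is obtained from the corresponding one for $\TT$ by cancelling the common postcomposition with $\iota$. Commutativity of $\ZZ$ is then immediate from Definition~\ref{def:centre-set}: if $t \in \ZZ X$ and $s \in \ZZ Y$, then $\iota_Y(s) \in \TT Y$, so the defining equation of $\ZZ$ applied to $t$ against $\iota_Y(s)$ supplies exactly diagram~(\ref{eq:commutative-monad}) for $\ZZ$. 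That $\iota$ is a strong submonad morphism in the sense of Definition~\ref{def:morph-monad} is by construction of $\eta^\ZZ$, $\mu^\ZZ$, $\tau^\ZZ$ as (co)restrictions.

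For the second assertion, I would define a comparison functor $\Set_\ZZ \to Z(\Set_\TT)$ acting as the identity on objects and sending a Kleisli arrow $f \colon X \to \ZZ Y$ to $\iota_Y \circ f \colon X \to \TT Y$. Functoriality is Proposition~\ref{prop:embed}, so the image lies in $\Set_\TT$. To see that it actually lands in the premonoidal centre $Z(\Set_\TT)$, unpack Definition~\ref{def:central-morph} elementwise: the commutativity requirement on $\iota_Y \circ f$ against any second Kleisli arrow reduces precisely to the defining condition of $\ZZ Y$ applied pointwise at each $f(x)$. Conversely, given a central Kleisli arrow $g \colon X \to \TT Y$ in $Z(\Set_\TT)$, the same unpacking shows $g(x) \in \ZZ Y$ for every $x$, hence $g$ factors uniquely through $\iota_Y$. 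The comparison functor is therefore a bijection on hom-sets, and identity on objects, yielding the claimed isomorphism. The naturality and universal content of this isomorphism is whatever Theorem~\ref{th:centralisability} ultimately formulates in the general setting; in $\Set$ it is just the pointwise statement that $\ZZ Y$ is exactly the set of central elements of $\TT Y$.
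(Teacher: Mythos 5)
Your proposal is correct and follows essentially the same route as the paper: the paper proves this theorem precisely by exhibiting $\ZZ X$ (with the subset inclusion) as the terminal central cone of $\TT$ at $X$ and then invoking the general Theorem~\ref{th:centralisability}, whose proof consists of exactly the steps you list — corestricting $\eta$, $\mu$, $\tau$ via centrality of the unit, preservation of central morphisms under Kleisli composition and the premonoidal tensors, lifting the monad axioms by cancelling the monomorphisms $\iota_X$, and identifying central Kleisli arrows with arrows factoring through $\ZZ$. The only difference is presentational: you carry out these verifications pointwise in $\Set$, whereas the paper defers them to the general categorical development of \secref{sub:centre-general}.
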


The final statement of Theorem \ref{thm:centre-set}
shows that the Kleisli category of $\ZZ$ is canonically
isomorphic to the premonoidal centre of the Kleisli category of $\TT$. Because
of this, we are justified in saying that $\ZZ$ is not just \emph{a} commutative
submonad of $\TT$, but rather it is \emph{the centre} of $\TT,$ which
is necessarily commutative (just like the centre of a monoid is a commutative
submonoid). In \secref{sub:specific-examples} we provide concrete examples of
monads on $\Set$ and their centres and we see that the construction
of the centre aligns nicely with our intuition.

\subsection{The General Construction of the Centre}
\label{sub:centre-general}

Throughout the remainder of
the section, we assume we are given a symmetric monoidal category
$(\CC,\otimes,I,\alpha,\lambda, \rho, \gamma)$ and a strong monad $(\TT, \eta,
\mu, \tau)$ on it with right strength $\tau'$.

In $\Set$, the centre is defined pointwise through subsets of $\TT X$ which
only contain elements that satisfy the coherence condition for a commutative
monad.  However, $\CC$ is an arbitrary symmetric monoidal category, so we
cannot easily form subojects in the required way.  This leads us to the
definition of a \emph{central cone} which allows us to overcome this problem.

\begin{definition}[Central Cone]\label{def:central-cone}
	Let $X$ be an object of $\CC$. A \emph{central cone} of $\TT$ at $X$ is given
	by a pair $(Z, \iota)$ of an object $Z$ and a morphism $\iota \colon Z \to
	\TT X,$ such that for any object $Y,$ the diagram
	\[\tikzfig{def-central-cone}\]
	commutes.
	If $(Z, \iota)$ and $(Z', \iota')$ are two central cones of $\TT$ at $X$,
	then a \emph{morphism of central cones} $\varphi \colon (Z', \iota') \to (Z,
	\iota)$ is a morphism $\varphi \colon Z' \to Z,$ such that $\iota \circ \varphi =
	\iota'.$ Thus central cones of $\TT$ at $X$ form a category.  A
	\emph{terminal central cone} of $\TT$ at $X$ is a central cone $(Z, \iota)$
	for $\TT$ at $X$, such that for any central cone $(Z', \iota')$ of $\TT$ at
	$X$, there exists a unique morphism of central cones $\varphi \colon (Z', \iota')
	\to (Z, \iota).$ In other words, it is the terminal object in the category of
	central cones of $\TT$ at $X$.
\end{definition}

In particular, Definition \ref{def:centre-set} gives a terminal central cone
for the special case of monads on $\Set.$ The names ``central morphism'' (in
the premonoidal sense, see \secref{sub:premonoidal}) and ``central cone''
(above) also hint that there should be a relationship between them. In fact, the
two notions are equivalent.

\begin{proposition}\label{prop:central}
	Let $f:X\to\TT Y$ be a morphism in $\CC$. The pair $(X,f)$ is a central cone
	of $\TT$ at $Y$ iff $f$ is central in $\CC_\TT$ in the premonoidal sense
	(Definition~\ref{def:central-morph}).
\end{proposition}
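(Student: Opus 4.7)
The plan is to prove the equivalence by a direct calculation, unfolding the definitions of the premonoidal tensors and Kleisli composition in terms of $\tau$, $\tau'$ and $\mu$, and using naturality of the strengths to bring both sides into a common form.

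For the forward direction, assume $(X,f)$ is a central cone of $\TT$ at $Y$. Given an arbitrary $f' \colon X' \to \TT Y'$, I would spell out the two Kleisli composites appearing in Definition \ref{def:central-morph} using the formulas $f \otimes_l X' = \tau'_{Y,X'} \circ (f \otimes \iid_{X'})$, $Y \otimes_r f' = \tau_{Y,Y'} \circ (\iid_Y \otimes f')$, and $g \odot h = \mu \circ \TT g \circ h$. Applying naturality of $\tau_{-,Y'}$ to move $f$ past $\tau_{X,Y'}$, and naturality of $\tau'_{Y,-}$ to move $f'$ past $\tau'_{Y,X'}$, both composites can be rewritten in the common shape $\mu \circ \TT\tau^{(\prime)} \circ \tau^{(\prime)} \circ (\iid_{\TT Y} \otimes f') \circ (f \otimes \iid_{X'})$, differing only in whether $\tau$ or $\tau'$ comes first. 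Their equality is then obtained by precomposing the central cone diagram of $(X,f)$, instantiated at the parameter $Y'$, with the morphism $\iid_X \otimes f' \colon X \otimes X' \to X \otimes \TT Y'$ and using the factorisation $(f \otimes \iid_{\TT Y'}) \circ (\iid_X \otimes f') = (\iid_{\TT Y} \otimes f') \circ (f \otimes \iid_{X'})$.

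For the converse, assume $f$ is premonoidally central. Given any object $W$, I would instantiate the premonoidal centrality condition at $X' = \TT W$, $Y' = W$, and $f' = \iid_{\TT W}$, viewed as a morphism of $\CC_\TT(\TT W, W) = \CC(\TT W, \TT W)$. The same naturality rewrites as above reduce the two sides of the premonoidal equation precisely to the two composites that appear in the central cone diagram of $(X,f)$ at the parameter $W$, so that equation becomes the central cone condition at $W$. Since $W$ was arbitrary, $(X,f)$ is a central cone.

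No genuine obstacle is expected: the argument is a diagram chase using only naturality of $\tau$ and $\tau'$ (together with the functoriality of $\otimes$), and uses no further structure on $\CC$ beyond what is already assumed. The main care required is bookkeeping of indices on the strengths to ensure the naturality squares are being applied in the correct components.
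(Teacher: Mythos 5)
Your proof is correct and follows essentially the same route as the paper's: both directions are a diagram chase unfolding the premonoidal tensors and Kleisli composition, using only bifunctoriality of $\otimes$ and naturality of $\tau$ and $\tau'$ to reduce the premonoidal centrality square to the central cone equation precomposed with $\iid_X\otimes f'$. Your choice of $f'=\iid_{\TT W}\in\CC_\TT(\TT W,W)$ for the converse is the same standard instantiation the paper's diagram relies on.
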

\begin{proof}
	Let $(X,f)$ be a central cone and let $f' \colon X'\to\TT Y'$ be a morphism.
	The following diagram: \[\scalebox{0.8}{\tikzfig{central-cone-to-morph}} \]
	commutes because: (1) $\CC$ is monoidal; (2) $\tau'$ is natural; (3) $\tau$
	is natural; and (4) the pair $(X, f)$ is a central cone.  Therefore, the
	morphism $f$ is central in the premonoidal sense.\\ For the other direction,
	if $f$ is central in $\CC_\TT$, the following diagram:
	\[\tikzfig{central-morph-to-cone} \] commutes because: (1) $\tau$ is natural;
	(2) $f$ is a central morphism; all remaining subdiagrams commute trivially.
	This shows the pair $(X,f)$ is a central cone.
\end{proof}

From now on, we rely heavily on the fact that central cones and central
morphisms are equivalent notions, and we use Proposition \ref{prop:central}
implicitly in the sequel.  On the other hand, \emph{terminal} central cones are
crucial for our development, but it is unclear how to introduce a similar
notion of ``terminal central morphism'' that is useful. For this reason, we
prefer to work with (terminal) central cones.

It is easy to see that if a terminal central cone for $\TT$ at $X$ exists, then
it is unique up to a unique isomorphism of central cones. Also, one can easily
prove that if $(Z, \iota)$ is a terminal central cone, then $\iota$ is a
monomorphism.  The main definition of this subsection follows next and gives
the foundation for constructing the centre of a strong monad.

\begin{definition}[Centralisable Monad]
	\label{def:centralisable}
	We say that the monad $\TT$ is \emph{centralisable} if, for any object $X$, a
	terminal central cone of $\TT$ at $X$ exists. In this situation, we write
	$(\ZZ X, \iota_X)$ for the terminal central cone of $\TT$ at $X$.
\end{definition}

In fact, for a centralisable monad $\TT$, its terminal central cones induce a
commutative submonad $\ZZ$ of $\TT$, as the next theorem shows, and its proof
reveals constructively how the monad structure arises from them.

\begin{theorem}\label{th:submonad-from-cone}
	If the monad $\TT$ is centralisable, then the assignment $\ZZ(-)$ extends to a
	commutative monad $(\ZZ, \eta^\ZZ, \mu^\ZZ, \tau^\ZZ)$ on $\CC$. Moreover,
	$\ZZ$ is a commutative submonad of $\TT$ and the morphisms $\iota_X \colon \ZZ X
	\to \TT X$ constitute a monomorphism of strong monads $\iota \colon \ZZ
	\Rightarrow \TT$.
\end{theorem}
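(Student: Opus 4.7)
The overall strategy is to leverage the universal property of terminal central cones: every piece of structure on $\ZZ$ will be obtained as the \emph{unique} factorisation through $\iota_X \colon \ZZ X \to \TT X$ of a suitable morphism into $\TT X$, once that morphism has been shown to underlie a central cone. Since $\iota_X$ is a monomorphism (this is a standard consequence of terminality), \emph{any} equation between parallel morphisms into $\ZZ X$ reduces to the corresponding equation after postcomposition with $\iota_X$, which lives in $\TT$-land and can be checked directly from the strong-monad axioms. Thus the axioms of a commutative strong monad and of a strong submonad morphism $\iota$ will all drop out essentially for free once the constructions are in place.

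\textbf{Constructions.} First, for every $f \colon X \to Y$ in $\CC$, the composite $\TT f \circ \iota_X \colon \ZZ X \to \TT Y$ is shown to be a central cone at $Y$ (using naturality of $\tau$ and $\tau'$ together with centrality of the cone $(\ZZ X, \iota_X)$), and $\ZZ f$ is defined as its unique factorisation through $\iota_Y$; functoriality then follows from uniqueness. Second, $(X, \eta_X)$ is a central cone (by one of the strength axioms for $\TT$), so we obtain $\eta^\ZZ_X$ by factoring $\eta_X$ through $\iota_X$. Third, the composite $\mu_X \circ \TT\iota_X \circ \iota_{\ZZ X} \colon \ZZ\ZZ X \to \TT X$ is shown to be a central cone at $X$, giving $\mu^\ZZ_X$ by factorisation. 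Fourth, $\tau_{X,Y} \circ (\iid_X \otimes \iota_Y) \colon X \otimes \ZZ Y \to \TT(X \otimes Y)$ is shown to be a central cone at $X \otimes Y$, yielding the strength $\tau^\ZZ_{X,Y}$. Naturality of $\eta^\ZZ$, $\mu^\ZZ$, $\tau^\ZZ$, and of $\iota$ itself, are all immediate from uniqueness: two candidate factorisations coincide after postcomposition with the relevant $\iota$.

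\textbf{Main obstacle and commutativity.} The real work is in verifying that each of the composites proposed above is indeed a central cone, \emph{i.e.}, makes the diagram of Definition~\ref{def:central-cone} commute for every $Y$. These verifications are diagram chases combining: (a) the fact that $\iota_X$ is itself central (equivalently, central in the premonoidal sense by Proposition~\ref{prop:central}); (b) the strong-monad axioms for $\TT$; and (c) in the case of $\mu^\ZZ$, the naturality of the strength applied to the central morphism $\iota_{\ZZ X}$. This is where most of the bookkeeping lives, but none of the individual steps is conceptually hard. Once the structure is in place, the strong-monad axioms for $(\ZZ, \eta^\ZZ, \mu^\ZZ, \tau^\ZZ)$ reduce, after postcomposing with $\iota_X$ and using the monicness of $\iota_X$, to the corresponding axioms for $\TT$. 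Likewise, the two diagrams of Definition~\ref{def:morph-monad} witnessing that $\iota$ is a morphism of strong monads hold by construction. Finally, commutativity of $\ZZ$ (the diagram of Definition~\ref{def:commutative-monad}) follows because, after postcomposition with $\iota_{X \otimes Y}$, both sides unfold via the defining equations of $\tau^\ZZ$ and $\mu^\ZZ$ into the two parallel $\TT$-composites appearing in the central-cone diagram for $(\ZZ Y, \iota_Y)$ (instantiated at $X = \ZZ X'$ for the appropriate $X'$); these coincide precisely because $(\ZZ Y, \iota_Y)$ is a central cone. Monicness of $\iota_{X \otimes Y}$ then gives commutativity of $\ZZ$, and the monomorphism status of $\iota$ in $\mathbf{StrMnd}(\CC)$ follows from pointwise monicness of each $\iota_X$.
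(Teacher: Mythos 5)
Your proposal follows essentially the same route as the paper's proof: each piece of structure ($\ZZ f$, $\eta^\ZZ$, $\mu^\ZZ$, $\tau^\ZZ$) is obtained as the unique factorisation through the terminal central cone of the corresponding $\TT$-composite after checking that composite is itself a central cone (via the pre/post-composition lemmas and the closure of central morphisms under Kleisli composition and the premonoidal tensors), and all equations — naturality, the monad and strength axioms, and commutativity via centrality of $\iota$ — are verified after postcomposing with the monic $\iota$. The only cosmetic difference is that you justify centrality of $\eta_X$ and of $\mu_X \circ \TT\iota_X \circ \iota_{\ZZ X}$ by direct appeal to the strength axioms and naturality, where the paper instead cites that $\eta_X$ is the identity of $Z(\CC_\TT)$ and that $Z(\CC_\TT)$ is closed under Kleisli composition; these are equivalent.
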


This theorem relies on several lemmas that are detailed below.

\begin{lemma}\label{lem:precompose-central}
	If $(X,f \colon X\to \TT Y)$ is a central cone of $\TT$ at $Y,$ then for any
	$g \colon Z\to X$, it follows that $(Z,f\circ g)$ is a central cone of $\TT$
	at $Y$.
\end{lemma}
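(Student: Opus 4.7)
The plan is to verify that the pair $(Z, f \circ g)$ satisfies the central cone condition directly. Let $W$ be an arbitrary object of $\CC$; we must show that the diagram of Definition~\ref{def:central-cone} for $(Z, f\circ g)$ at $W$ commutes. The key observation is that this diagram is obtained from the analogous diagram for $(X, f)$ at $W$ by pre-composing both of its paths with the morphism $g \otimes \iid_{\TT W} \colon Z \otimes \TT W \to X \otimes \TT W$.

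First I would rewrite $(f \circ g) \otimes \iid_{\TT W}$ as $(f \otimes \iid_{\TT W}) \circ (g \otimes \iid_{\TT W})$ using functoriality of $\otimes$, so that one side of the $(Z, f\circ g)$-diagram factors through the corresponding side of the $(X, f)$-diagram after pre-composition with $g \otimes \iid_{\TT W}$. The same reasoning, combined with naturality of the left strength $\tau$ (and symmetrically of the right strength $\tau'$), handles the other side: pre-composing with $g \otimes \iid_{\TT W}$ and then applying the strength is the same as applying the strength and then transporting $g$ along the monadic functor. Thus both paths of the $(Z,f\circ g)$-diagram coincide with the corresponding paths of the $(X,f)$-diagram, pre-composed with $g \otimes \iid_{\TT W}$. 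Since $(X,f)$ is a central cone, the outer square commutes, and we conclude that $(Z, f\circ g)$ is also a central cone.

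As an alternative, one may appeal to Proposition~\ref{prop:central}. The hypothesis that $(X,f)$ is a central cone is equivalent to $f$ being central in $\CC_\TT$ in the premonoidal sense. The Kleisli composite of $f \colon X\to \TT Y$ with the image of $g$ under the Kleisli embedding, namely $\eta_X \circ g \colon Z \to \TT X$, is
\[
  \mu_Y \circ \TT f \circ \eta_X \circ g \;=\; \mu_Y \circ \eta_{\TT Y} \circ f \circ g \;=\; f \circ g,
\]
where we used naturality of $\eta$ and the monad unit law. Since the Kleisli embedding $\hat{\JJ} \colon \CC \to Z(\CC_\TT)$ corestricts to the premonoidal centre and central morphisms are closed under Kleisli composition, $f\circ g$ is central in $\CC_\TT$, and Proposition~\ref{prop:central} concludes.

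The only (very mild) obstacle is the bookkeeping for the diagram chase, which is entirely mechanical given naturality of $\tau$ and $\tau'$ and functoriality of $\otimes$; no use of the monad laws beyond what is already packaged into these naturality statements is required.
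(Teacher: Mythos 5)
Your first argument is exactly the paper's proof: precompose the central-cone diagram for $(X,f)$ with $g \otimes \iid$ and conclude by functoriality of $\otimes$ together with naturality of the strengths. Your alternative route via Proposition~\ref{prop:central} (writing $f \circ g$ as the Kleisli composite $f \odot \eta_X g$ and using that $Z(\CC_\TT)$ is closed under Kleisli composition and contains the image of $\JJ$) is also correct and non-circular, but the paper opts for the direct diagram chase here.
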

\begin{proof}
	This is obtained by precomposing the definition of central cone by $g\otimes \id$.
	\[ \tikzfig{central-precomposing} \]
	commutes directly from the definition of central cone for $f$.
\end{proof}

\begin{lemma}\label{lem:postcompose-central}
	If $(X,f \colon X\to \TT Y)$ is a central cone of $\TT$ at $Y$ then for any
	$g \colon Y\to Z$, it follows that $(X,\TT g\circ f)$ is a central cone of
	$\TT$ at $Z$.
\end{lemma}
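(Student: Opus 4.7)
The cleanest route is to pass through Proposition~\ref{prop:central}, which tells us that central cones of $\TT$ at $Y$ are exactly the morphisms into $\TT Y$ that are central in the Kleisli category $\CC_\TT$ in the premonoidal sense. So it suffices to show that if $f \colon X \to \TT Y$ is central in $\CC_\TT$, then so is $\TT g \circ f \colon X \to \TT Z$.

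The key observation is that $\TT g \circ f$ is the Kleisli composition of $f$ with the ``pure'' morphism $\eta_Z \circ g = \hat{\JJ}(g) \colon Y \to \TT Z$. Indeed, using the monad law $\mu \circ \TT \eta = \iid$, one computes
\[
(\eta_Z \circ g) \odot f \;=\; \mu_Z \circ \TT(\eta_Z \circ g) \circ f \;=\; \mu_Z \circ \TT \eta_Z \circ \TT g \circ f \;=\; \TT g \circ f.
\]
Now $\hat{\JJ}(g)$ is central because, as recalled just after Definition~\ref{def:central-morph}, the left adjoint of the Kleisli adjunction $\JJ \colon \CC \to \CC_\TT$ corestricts to the premonoidal centre $Z(\CC_\TT)$. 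Furthermore, $Z(\CC_\TT)$ is a (symmetric monoidal) subcategory of $\CC_\TT$, so in particular it is closed under Kleisli composition. Therefore $\hat{\JJ}(g) \odot f$ is central, i.e.\ $\TT g \circ f$ is central in $\CC_\TT$, and Proposition~\ref{prop:central} then gives that $(X, \TT g \circ f)$ is a central cone of $\TT$ at $Z$.

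An alternative, more elementary approach would be to verify the central cone square for $(X, \TT g \circ f)$ directly: write out the two composites
\(\mu \circ \TT\tau' \circ \tau_{\TT Z, W} \circ ((\TT g \circ f) \otimes \iid)\)
and
\(\mu \circ \TT\tau \circ \tau'_{\TT Z, W} \circ \sigma \circ (\iid \otimes (\TT g \circ f))\)
for an arbitrary object $W$, and use naturality of $\tau$ and $\tau'$ (in the $Y$ variable, against the map $g$) together with the strong monad axioms to reduce both sides to the corresponding composites built from $f$, at which point the central cone hypothesis on $(X, f)$ closes the diagram. The only mildly delicate step is the bookkeeping of where $\TT g$ slides through $\tau$ and $\tau'$, but this is routine naturality. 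The categorical route above avoids this entirely and is the one I would write up.
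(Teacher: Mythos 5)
Your proof is correct, but it takes a different route from the paper. The paper proves this lemma by a direct diagram chase: it writes out the central-cone square for $(X,\TT g\circ f)$ and uses naturality of $\tau$, $\tau'$ and $\mu$ to slide $\TT g$ past the strengths and multiplications until the central-cone hypothesis on $(X,f)$ can be applied — exactly the ``alternative, more elementary approach'' you sketch in your last paragraph but do not carry out. Your primary argument instead factors $\TT g\circ f$ as the Kleisli composite $\JJ(g)\odot f$ (the computation $\mu_Z\circ\TT\eta_Z\circ\TT g\circ f=\TT g\circ f$ is right), invokes the corestriction of $\JJ$ to $Z(\CC_\TT)$ and the fact that $Z(\CC_\TT)$ is a subcategory closed under Kleisli composition, and then translates back through Proposition~\ref{prop:central}. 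Both of these imported facts are explicitly recalled in the paper's background on premonoidal categories, and the paper itself uses the closure of $Z(\CC_\TT)$ under Kleisli composition later (in the construction of $\mu^\ZZ$), so there is no circularity. What each approach buys: yours is shorter and makes the conceptual content transparent (postcomposition by $\TT g$ is just Kleisli composition with a pure, hence central, morphism); the paper's is self-contained at the level of central cones and does not lean on the Power--Robinson results about the structure of the premonoidal centre, which keeps the chain of dependencies for Theorem~\ref{th:submonad-from-cone} closer to first principles.
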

\begin{proof}
	The naturality of $\tau$ and $\mu$ allow us to push the application of
	$g$ to the last postcomposition, in order to use the central property
	of $f$. In more details, the following diagram:
	\[\scalebox{0.8}{\tikzfig{central-postcomposing}}\] commutes, because:
	(1) $f$ is a central cone, (2) $\tau'$ is natural, (3) $\tau$ is
	natural, (4) $\mu$ is natural (5) $\tau$ is natural, (6) $\tau'$ is
	natural, (7) $\mu$ is natural.
\end{proof}

\begin{lemma}\label{lem:monic}
	If $(Z,\iota)$ is a terminal central cone of $\TT$ at $X$, then $\iota$ is a monomorphism.
\end{lemma}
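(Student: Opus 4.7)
The plan is to use the universal property of the terminal central cone directly, together with Lemma~\ref{lem:precompose-central} to produce central cones from precompositions.

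Suppose we are given two morphisms $f, g \colon W \to Z$ satisfying $\iota \circ f = \iota \circ g$. We would like to conclude that $f = g$. Since $(Z, \iota)$ is a central cone of $\TT$ at $X$, Lemma~\ref{lem:precompose-central} applied to $f$ (resp.\ to $g$) tells us that $(W, \iota \circ f)$ (resp.\ $(W, \iota \circ g)$) is a central cone of $\TT$ at $X$. By our hypothesis, these two central cones are in fact the same; call it $(W, h)$ where $h \eqdef \iota \circ f = \iota \circ g$.

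Now I invoke terminality: since $(Z, \iota)$ is the terminal central cone of $\TT$ at $X$, there exists a \emph{unique} morphism of central cones $(W, h) \to (Z, \iota)$. Both $f$ and $g$ qualify as such morphisms by construction (indeed, $\iota \circ f = h$ and $\iota \circ g = h$ are exactly the defining conditions for being a morphism of central cones). Uniqueness then forces $f = g$, which is what we wanted.

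There is no real obstacle here: the argument is a standard consequence of the universal property, using only Lemma~\ref{lem:precompose-central} to transport the central cone structure along precomposition. The whole proof amounts to two lines and does not require any coherence calculations involving $\tau$, $\tau'$, or $\mu$.
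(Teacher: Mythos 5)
Your proof is correct and follows exactly the same route as the paper: precompose to get a central cone via Lemma~\ref{lem:precompose-central}, then observe that both $f$ and $g$ are morphisms of central cones into the terminal one, so uniqueness forces $f = g$. Nothing further is needed.
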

\begin{proof}
	Let us consider $f,g:Y\to Z$ such that $\iota\circ f=\iota\circ g$; this
	morphism is a central cone at $X$ (Lemma~\ref{lem:precompose-central}), and
	since $(Z,\iota)$ is a terminal central cone, it factors uniquely through
	$\iota$. Thus $f = g$ and therefore $\iota$ is monic.
\end{proof}

\begin{proof}[Proof of Theorem~\ref{th:submonad-from-cone}]
	First let us describe the functorial structure of $\ZZ$. Recall that $\ZZ$
	maps every object $X$ to its terminal central cone at $X$.  Let $f:X\to
	Y$ be a morphism. We know that $\TT f\circ\iota_X:\ZZ X\to \TT Y$ is a
	central cone according to Lemma~\ref{lem:postcompose-central}.
	Therefore, we define $\ZZ f$ as the unique map such that the following
	diagram commutes:
	\[\tikzfig{z-functor}\]

	It follows directly that $\ZZ$ maps the identity to the identity, and that
	$\iota$ is natural.  $\ZZ$ also preserves composition, which follows by
	the commutative diagram below.
	\[\tikzfig{z-functor-comp}\]
	This proves that $\ZZ$ is a functor. Next, we describe its monad structure
	and after that we show that it is commutative. \\ The monadic unit
	$\eta_X$ is central, because it is the identity morphism in
	$Z(\CC_\TT)$, thus it factors through $\iota_X$ to define $\eta^\ZZ_X$.
	\[\tikzfig{central-unit}\]
	Next, observe that, by definition, $\mu_X\circ\TT\iota_X\circ\iota_{\ZZ X} =
	\iota_X\odot\iota_{\ZZ X}$, where $(- \odot -)$ indicates Kleisli
	composition. Since $\iota$ is central and Kleisli composition preserves
	central morphisms (see Definition~\ref{def:central-morph}, central morphisms
	form a subcategory of the Kleisli category), it follows that this morphism
	factors through $\iota_X$ and we use this to define $\mu^\ZZ_X$ as in the
	diagram below.
	\[\tikzfig{central-mult}\]
	Again, by definition, $\tau_{A,B}\circ(A\otimes\iota_B) = A\otimes_r\iota_B$.
	Central morphisms are preserved by the premonoidal products (see
	\ref{sub:premonoidal}) and therefore, this morphism factors through
	$\iota_{A\otimes B}$ which we use to define $\tau^\ZZ_{A,B}$ as in the
	diagram below.
	\[\tikzfig{central-strength}\]
	Note that the last three diagrams are exactly those of a morphism of strong
	monads (see Definition \ref{def:morph-monad}). Using the fact that $\iota$ is
	monic (see Lemma~\ref{lem:monic}), the following commutative diagram shows
	that $\eta^\ZZ$ is natural.
	\[\tikzfig{eta-z-natural}\]
	(1) definition of $\eta^\ZZ$, (2) $\iota$ is natural, (3) $\eta$ is natural
	and (4) definition of $\eta^\ZZ$. Thus, we have proven that for any $f \colon
	X\to Y$, $\iota_Y\circ\ZZ f\circ\eta^\ZZ_X = \iota_Y\circ\eta^\ZZ_Y\circ f$.
	Besides, $\iota$ is monic, thus $\ZZ f\circ\eta^\ZZ_X =\eta^\ZZ_Y\circ f$
	which proves that $\eta^\ZZ$ is natural.  We will prove all the remaining
	diagrams with the same reasoning. 

	The following commutative diagram shows that $\mu^\ZZ$ is natural.
	\[\tikzfig{mu-z-natural}\]
	(1) definition of $\mu^\ZZ$, (2) $\iota$ is natural, (3) $\mu$ is natural,
	(4) $\iota$ is natural and (5) definition of $\mu^\ZZ$. \\ The
	following commutative diagrams shows that $\tau^\ZZ$ is natural.
	\[\tikzfig{tau-z-natural}\]
	(1) definition of $\tau^\ZZ$, (2) $\iota$ is natural, (3) $\tau$ is natural,
	(4) $\iota$ is natural and (5) definition of $\tau^\ZZ$. \\
	\[\tikzfig{tau-z-natural-1}\]
	(1) definition of $\tau^\ZZ$, (2) $\iota$ is natural, (3) $\tau$ is natural,
	(4) $\iota$ is natural and (5) definition of $\tau^\ZZ$. \\ The following
	commutative diagrams prove that $\ZZ$ is a monad.
	\[\scalebox{0.8}{\tikzfig{z-monad-1}}\]
	(1) and (2) involve the definition of $\mu^\ZZ$ and the naturality of $\iota$
	and $\mu^\ZZ$, (3) is by definition of monad, (4) definition of $\mu^\ZZ$ and
	(5) also.  (6) and (7) involve the definition of $\eta^\ZZ$ and the
	naturality of $\iota$ and $\eta^\ZZ$, (8) is by definition of monad, (9)
	definition of $\mu^\ZZ$ and (10) also. \\
	$\ZZ$ is proven strong with very similar diagrams. The commutative diagram:
	\begin{equation}\label{eq:proof-commutative}
		\scalebox{0.7}{\tikzfig{z-monad-commutative}}
	\end{equation}
	proves that $\ZZ$ is a commutative monad, with (1) $\tau'^\ZZ$ is natural,
	(2) definition of $\tau^\ZZ$, (3) $\tau^\ZZ$ is natural, (4) $\CC$ is
	monoidal, (5) definition of $\tau'^\ZZ$, (6) $\iota$ is natural, (7)
	definition of $\mu^\ZZ$, (8) definition of $\tau^\ZZ$, (9) $\iota$ is
	central, (10) definition of $\tau'^\ZZ$, (11) $\iota$ is natural and (12)
	definition of $\mu^\ZZ$.
\end{proof}

Theorem~\ref{th:submonad-from-cone} shows that centralisable monads always
induce a canonical commutative submonad. Next, we justify why this submonad
should be seen as the centre of $\TT.$ Note that since $\ZZ$ is a submonad of
$\TT$, we know that $\CC_\ZZ$ canonically embeds into $\CC_\TT$ (see
Proposition \ref{prop:embed}).  The next theorem shows that this embedding
factors through the premonoidal centre of $\CC_\TT$, and moreover, the two
categories are isomorphic.

\begin{theorem}\label{th:iso-of-categories}
	In the situation of Theorem \ref{th:submonad-from-cone}, the canonical
	embedding functor $\II \colon \CC_\ZZ\to\CC_\TT$ corestricts to an isomorphism of
	categories $\CC_\ZZ\cong Z(\CC_\TT)$.
\end{theorem}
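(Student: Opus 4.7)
The plan is to show that the corestriction $\hat\II$ exists, is bijective on objects, and is bijective on hom-sets. Since both categories $\CC_\ZZ$ and $Z(\CC_\TT)$ share the object class of $\CC$, and $\II$ acts as the identity on objects, bijectivity on objects will be immediate. The real work is (i) verifying that $\II$ actually lands in $Z(\CC_\TT)$, and (ii) that it is full and faithful onto it.

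For (i), given a morphism $f \colon X \to \ZZ Y$ in $\CC$, which is a Kleisli morphism $X \to Y$ in $\CC_\ZZ$, its image under $\II$ is $\iota_Y \circ f \colon X \to \TT Y$. In $\CC_\TT$ this Kleisli morphism factorises as the Kleisli composition of $\iota_Y$ (viewed as a Kleisli arrow $\ZZ Y \to Y$) with $\eta_{\ZZ Y} \circ f$ (a Kleisli arrow $X \to \ZZ Y$ coming from the base category). The morphism $\iota_Y$ is central by Proposition~\ref{prop:central} because $(\ZZ Y, \iota_Y)$ is (by construction) a central cone, and any Kleisli morphism in the image of $\hat\JJ \colon \CC \to Z(\CC_\TT)$ is central. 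Since the central morphisms form a subcategory of $\CC_\TT$ (by the Power--Robinson result recalled in \secref{sub:premonoidal}), Kleisli composition of two central morphisms is again central, so $\iota_Y \circ f$ is central. Hence $\II$ corestricts to a functor $\hat\II \colon \CC_\ZZ \to Z(\CC_\TT)$.

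For (ii), faithfulness reduces to the fact, proven in Lemma~\ref{lem:monic}, that each $\iota_Y$ is a monomorphism: if $\iota_Y \circ f = \iota_Y \circ g$ then $f = g$. Fullness is precisely the universal property of the terminal central cone. Indeed, let $h \colon X \to \TT Y$ be central in $\CC_\TT$. By Proposition~\ref{prop:central}, the pair $(X, h)$ is a central cone of $\TT$ at $Y$. Since $(\ZZ Y, \iota_Y)$ is the terminal central cone, there is a unique morphism $\tilde h \colon X \to \ZZ Y$ in $\CC$ such that $\iota_Y \circ \tilde h = h$, that is, $\hat\II(\tilde h) = h$. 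This gives a bijection $\CC(X, \ZZ Y) \cong Z(\CC_\TT)(X, Y)$ natural in the obvious sense.

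The main obstacle is really just bookkeeping: one should verify that the construction also respects identities and composition, but this follows because $\hat\II$ is by definition the corestriction of the embedding $\II \colon \CC_\ZZ \to \CC_\TT$ from Proposition~\ref{prop:embed}, which is already known to be a functor. Putting everything together, $\hat\II$ is a bijective-on-objects, fully faithful functor, hence an isomorphism of categories $\CC_\ZZ \cong Z(\CC_\TT)$.
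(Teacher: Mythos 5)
Your proof is correct and follows essentially the same route as the paper's: the corestriction lands in $Z(\CC_\TT)$ because $\iota_Y \circ f$ is central, faithfulness comes from $\iota_Y$ being monic (Lemma~\ref{lem:monic}), and fullness is exactly the universal property of the terminal central cone --- the paper merely packages these last two facts as an explicit inverse functor $G$ with $Gf \eqdef f^\ZZ$. The only cosmetic difference is that you establish centrality of $\iota_Y \circ f$ via a Kleisli factorisation through $\hat\JJ$ and closure of the premonoidal centre under composition, where the paper invokes Lemma~\ref{lem:precompose-central} directly.
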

\begin{proof}
	$\II$ corestricts as indicated follows easily: for any morphism $f \colon X \to
	\ZZ Y$, we have that $\II f = \iota_Y \circ f$ which is central by Lemma
	\ref{lem:precompose-central}.  Let us write $\hat \II$ for the corestriction
	of $\II$ to $Z(\CC_\TT)$. Next, to prove that $\hat \II \colon \CC_\ZZ \to
	Z(\CC_\TT)$ is an isomorphism, we define the inverse functor
	$G:Z(\CC_\TT)\to\CC_\ZZ$.

	On objects, we have $G(X) \eqdef X.$ To define its mapping on morphisms,
	observe that if $f \colon X\to\TT Y$ is a central morphism (in the
	premonoidal sense), then $(X,f)$ is a central cone of $\TT$ at $Y$
	(Proposition \ref{prop:central}) and therefore there exists a unique morphism
	$f^\ZZ \colon X\to\ZZ Y$ such that $\iota_Y \circ f^\ZZ = f$; we define $Gf
	\eqdef f^\ZZ$.  The proof that $G$ is a functor is direct considering that
	any $f^\ZZ$ is a morphism of central cones and that all components of $\iota$
	are monomorphisms.

	To show that $\hat \II$ and $G$ are mutual inverses, let $f:X\to \TT Y$ be a
	morphism of $Z(\CC_\TT)$, \emph{i.e.}, a central morphism. Then, $\hat \II
	Gf=\iota_Y\circ f^\ZZ = f$ by definition of morphism of central cones (see
	Definition~\ref{def:central-cone}). For the other direction, let $g:X\to \ZZ
	Y$ be a morphism in $\CC$.  Then, $\iota_Y\circ G\hat \II g=\iota_Y\circ
	(\iota_Y\circ g)^\ZZ = \iota_Y \circ g$ by Definition~\ref{def:central-cone}
	and thus $G\hat \II g=g$ since $\iota_Y$ is a monomorphism (Lemma
	\ref{lem:monic}).
\end{proof}

It should now be clear that Theorem \ref{th:submonad-from-cone} and Theorem
\ref{th:iso-of-categories} show that we are justified in naming the submonad
$\ZZ$ as \emph{the} centre of $\TT$.  The existence of terminal central cones
is not only sufficient to construct the centre (as we just showed), but it also
is necessary and we show this next. Furthermore, we provide another equivalent
characterisation in terms of the premonoidal structure of the monad.

\begin{theorem}[Centre]\label{th:centralisability}
	Let $\CC$ be a symmetric monoidal category and $\TT$ a strong monad on it.
	The following are equivalent:
	\begin{enumerate}
		\item \label{cond:1} For any object $X$ of $\CC$, $\TT$ admits a terminal
			central cone at $X$;
		\item \label{cond:2} There exists a commutative submonad $\ZZ$ of $\TT$
			(which we call \emph{the centre} of $\TT$) such that the canonical
			embedding functor $\II:\CC_\ZZ\to\CC_\TT$ corestricts to an
			isomorphism of categories $\CC_\ZZ\cong Z(\CC_\TT)$;
		\item \label{cond:3} The corestriction of the Kleisli left adjoint
			$\JJ:\CC\to\CC_\TT$ to the premonoidal centre $\hat \JJ:\CC\to
			Z(\CC_\TT)$ also is a left adjoint.
	\end{enumerate}
\end{theorem}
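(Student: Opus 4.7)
The implication $(\ref{cond:1}) \Rightarrow (\ref{cond:2})$ is essentially already done: Theorem~\ref{th:submonad-from-cone} constructs the commutative submonad $\ZZ$ from the terminal central cones, and Theorem~\ref{th:iso-of-categories} establishes the isomorphism $\CC_\ZZ \cong Z(\CC_\TT)$. So I only need to treat the remaining implications, closing the circle via $(\ref{cond:2}) \Rightarrow (\ref{cond:1})$, $(\ref{cond:2}) \Leftrightarrow (\ref{cond:3})$.

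\textbf{$(\ref{cond:2}) \Rightarrow (\ref{cond:1})$.} Fix $X$ and consider the pair $(\ZZ X, \iota_X)$ with $\iota_X \colon \ZZ X \to \TT X$ the component of the submonad inclusion. Under the isomorphism $\CC_\ZZ \cong Z(\CC_\TT)$, the morphism $\iota_X$ corresponds to $\id_{\ZZ X}$ in $\CC_\ZZ$, hence is central in $\CC_\TT$; by Proposition~\ref{prop:central}, $(\ZZ X, \iota_X)$ is a central cone. For universality, let $(W, f \colon W \to \TT X)$ be any central cone. By Proposition~\ref{prop:central}, $f$ is central, so it is a morphism $W \to X$ in $Z(\CC_\TT)$; the isomorphism $\CC_\ZZ \cong Z(\CC_\TT)$ yields a unique $g \colon W \to \ZZ X$ in $\CC$ with $\iota_X \circ g = f$, which is exactly the required morphism of central cones. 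Uniqueness follows because the isomorphism is identity-on-objects and faithful (equivalently, because $\iota_X$ is monic, as in Lemma~\ref{lem:monic}).

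\textbf{$(\ref{cond:2}) \Rightarrow (\ref{cond:3})$.} The Kleisli left adjoint $\JJ_\ZZ \colon \CC \to \CC_\ZZ$ is a genuine left adjoint. Composing it with the isomorphism $\CC_\ZZ \cong Z(\CC_\TT)$ from condition~(\ref{cond:2}) gives a left adjoint $\CC \to Z(\CC_\TT)$. It remains to verify this composite equals $\hat{\JJ}$. On objects both send $X$ to $X$. On a morphism $f \colon X \to Y$ in $\CC$, $\JJ_\ZZ(f) = \eta^\ZZ_Y \circ f$, which under the isomorphism becomes $\iota_Y \circ \eta^\ZZ_Y \circ f = \eta^\TT_Y \circ f$ because $\iota$ is a morphism of strong monads (Definition~\ref{def:morph-monad}); this is exactly $\hat{\JJ}(f)$.

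\textbf{$(\ref{cond:3}) \Rightarrow (\ref{cond:1})$.} Let $G \colon Z(\CC_\TT) \to \CC$ be a right adjoint to $\hat{\JJ}$, with counit $\varepsilon \colon \hat{\JJ} G \Rightarrow \id$. For each object $X$, the counit component $\varepsilon_X$ is a morphism $GX \to X$ in $Z(\CC_\TT)$, i.e.\ a central morphism $GX \to \TT X$ in $\CC_\TT$; set $\iota_X \defeq \varepsilon_X$. By Proposition~\ref{prop:central}, $(GX, \iota_X)$ is a central cone at $X$. For universality, given any central cone $(W, f \colon W \to \TT X)$, Proposition~\ref{prop:central} identifies $f$ with a morphism $W \to X$ in $Z(\CC_\TT)$, and the adjunction yields a unique transpose $f^\sharp \colon W \to GX$ in $\CC$ with $\varepsilon_X \odot \hat{\JJ}(f^\sharp) = f$ in $Z(\CC_\TT)$; unwinding the Kleisli composition using $\hat{\JJ}(f^\sharp) = \eta^\TT_{GX} \circ f^\sharp$ gives $\iota_X \circ f^\sharp = f$ in $\CC$, as required.

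\textbf{Main obstacle.} The bulk of the work is $(\ref{cond:1}) \Rightarrow (\ref{cond:2})$, already established. For the new implications, the main subtlety lies in carefully translating between three viewpoints: morphisms $W \to \TT X$ in $\CC$, central morphisms $W \to X$ in $Z(\CC_\TT)$, and morphisms $W \to \ZZ X$ in $\CC$ (or $W \to GX$ in the adjunction case). The bookkeeping with Kleisli composition versus ordinary composition when computing $\varepsilon \odot \hat{\JJ}(f^\sharp)$ is the likeliest place to slip, but in each case it reduces to the monad-unit law $\mu \circ \TT \varepsilon \circ \eta^\TT = \varepsilon$.
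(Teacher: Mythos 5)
Your proposal is correct and follows essentially the same route as the paper's proof: $(1)\Rightarrow(2)$ by citing Theorems~\ref{th:submonad-from-cone} and~\ref{th:iso-of-categories}, $(2)\Rightarrow(3)$ by factoring $\hat\JJ$ as the Kleisli left adjoint $\JJ^\ZZ$ followed by the isomorphism $\hat\II$, and $(3)\Rightarrow(1)$ by exhibiting the counit components $(\mathcal{R}X,\varepsilon_X)$ as terminal central cones via the adjunction's universal property. Your additional direct argument for $(2)\Rightarrow(1)$ is sound but redundant, since the cycle $1\Rightarrow 2\Rightarrow 3\Rightarrow 1$ already yields all equivalences.
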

\begin{proof}
	We follow a circular strategy in order to prove that each of the points
	implies the others.

	$(1\Rightarrow 2):$
	By Theorem~\ref{th:submonad-from-cone} and Theorem~\ref{th:iso-of-categories}.

	$(2\Rightarrow 3):$
	Let us consider the Kleisli left adjoint $\JJ^\ZZ$ associated to the
	monad $\ZZ$. All our hypotheses can be summarised by the diagram
	\[\tikzfig{main-theorem-2-to-3}\] where $\hat \II \colon \CC_\ZZ\cong
	Z(\CC_\TT)$ is the corestriction of $\II$. This diagram commutes,
	because $\ZZ$ is a submonad of $\TT$ (recall also that $\hat \JJ$ is
	the indicated corestriction of $\JJ$, see \secref{sub:premonoidal}).
	Since $\hat \II$ is an isomorphism, then $\hat \JJ = \hat \II \circ
	\JJ^\ZZ$ is the composition of two left adjoints and it is therefore
	also a left adjoint.

	$(3\Rightarrow 1):$
	Let $\mathcal R \colon Z(\CC_\TT)\to\CC$ be the right adjoint of $\hat \JJ$
	and let $\varepsilon$ be the counit of the adjunction.  We will show that the
	pair $(\mathcal R X, \varepsilon_X)$ is the terminal central cone of $\TT$ at
	$X$.\\ First, since $\varepsilon_X$ is a morphism in $Z(\CC_\TT)$, it follows
	that it is central. Thus the pair $(\mathcal R X, \varepsilon_X)$ is a
	central cone of $\TT$ at $X$. Next, let $\Phi \colon Z(\CC_\TT)[\hat \JJ Y,
	X] \cong \CC[Y, \mathcal R X]$ be the natural bijection induced by the
	adjunction. If $f \colon Y \to \TT X$ is central, meaning a morphism of
	$Z(\CC_\TT)$, the diagram below left commutes in $Z(\CC_\TT)$, or
	equivalently, the diagram below right commutes in $\CC$:
	\[\tikzfig{adjoint-diagram-unique-centre}\] Note that the pair $(Y, f)$ is
	equivalently a central cone for $\TT$ at $X$ (by Proposition
	\ref{prop:central}).  Thus $f$ uniquely factors through the counit
	$\varepsilon_X: \mathcal RX\to\TT X$ and therefore $(\mathcal
	RX,\varepsilon_X)$ is the terminal central cone of $\TT$ at $X$.
\end{proof}

This theorem shows that Definition \ref{def:centralisable} may be stated by
choosing any one of the above equivalent criteria. We note that the first
condition is the easiest to verify in practice. The second one is the
most useful for providing a computational interpretation, as we do in the
sequel. The third condition provides an important link to premonoidal
categories.

\begin{example}
	\label{ex:free-monad}
	Given a monoid $(M,e,m)$, consider the free monad induced by $M$, also known
	as the \emph{writer monad}, which we write as $\TT = (- \times M) \colon \Set
	\to \Set$.  The centre $\ZZ$ of $\TT$ is given by the commutative monad
	$(-\times Z(M)) \colon \Set \to \Set$, where $Z(M)$ is the centre of the
	monoid $M$ and where the monad data is given by the (co)restrictions of the
	monad data of $\TT$. Note that $\TT$ is a commutative monad iff $M$ is a
	commutative monoid. See also Example \ref{ex:counter}.
\end{example}

\subsection{A Non-centralisable Monad}
\label{sub:bad-monad}

In $\Set$, the terminal central cones used to define the centre are defined by
taking appropriate subsets.  One may wonder what happens if not every subset of
a given set is an object of the category. The following example describes such
a situation, which gives rise to a non-centralisable strong monad.

\begin{example}
	\label{ex:counter}
	Consider the Dihedral group $\mathbb{D}_4,$ which has $8$ elements. Its
	centre $Z(\mathbb{D}_4)$ is non-trivial and has 2 elements. Let $\CC$ be the
	full subcategory of $\Set$ with objects that are finite products of the set
	$\mathbb{D}_4$ with itself. This category has a cartesian structure, and the
	terminal object is the singleton set (which is the empty product). Notice
	that every object in this category has a cardinality that is a power of $8$.
	Therefore the cardinality of every homset of $\CC$ is a power of $8$. Since
	$\CC$ has a cartesian structure and since $\mathbb{D}_4$ is a monoid, we can
	consider the writer monad $\MM \eqdef (- \times \mathbb D_4) \colon \CC \to \CC$
	induced by $\mathbb{D}_4$, which can be defined in the same way as in Example
	\ref{ex:free-monad}. It follows that $\MM$ is a strong monad on $\CC$.
	However, it is easy to show that this monad is not centralisable. Assume (for
	contradiction) that there is a monad $\ZZ \colon \CC \to \CC$ such that
	$\CC_\ZZ\cong Z(\CC_\MM)$ (see Theorem \ref{th:centralisability}). Next,
	observe that the homset $Z(\CC_\MM)[1,1]$ has the same cardinality as the
	centre of the monoid $\mathbb D_4$, \emph{i.e.}, its cardinality is $2$. However,
	$\CC_\ZZ$ cannot have such a homset since $\CC_\ZZ[X,Y] = \CC[X,\ZZ Y]$ which
	must have cardinality a power of $8$.  Therefore there exists no such monad
	$\ZZ$ and $\MM$ is not centralisable.
\end{example}

Besides this example and any further attempts at constructing non-centralisable
monads for this sole purpose, we do not know of any other strong monad in the
literature that is not centralisable.
In the next section, we present many
examples of centralisable monads and classes of centralisable monads which show
that our results are widely applicable.

\section{Examples of Centres of Strong Monads}
\label{sec:examples}

In this section, we show how we can make use of the mathematical results we
already established in order to reason about the centres of monads of interest.

\subsection{Categories whose Strong Monads are Centralisable}
\label{sub:examples}

We saw earlier that every (strong) monad on $\Set$ is centralisable. In fact,
this is also true for many other naturally occurring categories.  For example,
in many categories of interest, the objects of the category have a suitable
notion of subobject (e.g., subsets in $\Set$, subspaces in $\mathbf{Vect}$) and
the centre can be constructed in a similar way to the one in $\Set$.

\begin{example}
	Let $\DCPO$ be the category whose objects are directed-complete partial
	orders and whose morphisms are Scott-continuous maps between them.  Every
	strong monad on $\DCPO$ with respect to its cartesian structure is
	centralisable. The easiest way to see this is to use Theorem
	\ref{th:centralisability} \eqref{cond:1}. Writing $\TT \colon \DCPO \to \DCPO$ for
	an arbitrary strong monad on $\DCPO$, the terminal central cone of $\TT$ at
	$X$ is given by the subdcpo $\ZZ X \subseteq \TT X$ which has the underlying
	set
	\( \ZZ X \eqdef \{ t \in \TT X \ |\ \forall Y \in \Ob(\DCPO). \forall s \in \TT Y.\  \) \(\mu(\TT\tau'(\tau(t,s))) = \mu(\TT\tau(\tau'(t,s)))  \} . \)
	That $\ZZ X$ (with the inherited order) is a subdcpo of $\TT X$ follows
	easily by using the fact that $\mu, \tau, \tau'$ and $\TT$ are
	Scott-continuous. Therefore, the construction is fully analogous to the one
	in $\Set$.
\end{example}

\begin{example}
	Let $\TOP$ be the category whose objects are topological
	spaces, and whose morphisms are continuous maps between them.
	Every strong monad on $\TOP$ with respect to its cartesian structure
	is centralisable.
	Using Theorem \ref{th:centralisability} \eqref{cond:1} and writing $\TT :
	\TOP \to \TOP$ for an arbitrary strong monad on $\TOP$, the terminal central
	cone of $\TT$ at $X$ is given by the space $\ZZ X \subseteq \TT X$ which has
	the underlying set $ \ZZ X \eqdef \{ t \in \TT X \ |\ \forall Y \in
	\Ob(\TOP). \forall s \in \TT Y.$  $\mu(\TT\tau'(\tau(t,s))) =
	\mu(\TT\tau(\tau'(t,s)))  \} $ and whose topology is the subspace topology
	inherited from $\TT X$.
\end{example}

\begin{example}
	\label{ex:meas}
	Every strong monad on the category $\Meas$ (whose objects are measurable
	spaces and the morphisms are measurable maps between them) is centralisable.
	The construction is fully analogous to the previous example, but instead of
	the subspace topology, we equip the underlying set with the subspace
	$\sigma$-algebra inherited from $\TT X$ (which is the smallest
	$\sigma$-algebra that makes the subset inclusion map measurable).
\end{example}

\begin{example}
	Let $\mathbf{Vect}$ be the category whose objects are vector spaces, and
	whose morphisms are linear maps between them. Every strong monad on
	$\mathbf{Vect}$ with respect to the usual symmetric monoidal structure is
	centralisable. One simply defines the subset $\ZZ X$ as in the other examples
	and shows that this is a linear subspace of $\TT X$. That this is the
	terminal central cone is then obvious.
\end{example}

The above categories, together with the category $\Set$, are not meant to
provide an exhaustive list of categories for which all strong monads are
centralisable. Indeed, there are many more categories for which this is true.
The purpose of these examples is to illustrate how we may use Theorem
\ref{th:centralisability} \eqref{cond:1} to construct the centre of a strong
monad.  Changing perspective, the proof of the next proposition
uses Theorem \ref{th:centralisability} \eqref{cond:3}.

\begin{proposition}\label{prop:suffcond}
	Let $\mathbf{C}$ be a symmetric monoidal closed category that is total --
	\emph{i.e.}, a locally small category whose Yoneda embedding has a left
	adjoint. Then all strong monads over $\mathbf{C}$ are centralisable.
\end{proposition}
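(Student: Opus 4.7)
The plan is to apply condition~\eqref{cond:3} of Theorem~\ref{th:centralisability}: I will show that the corestriction $\hat \JJ \colon \CC \to Z(\CC_\TT)$ of the Kleisli left adjoint admits a right adjoint, and the tool for this is the adjoint functor theorem for total categories (due to Street and Walters), which says that a functor from a total category to a locally small category is a left adjoint iff it preserves small colimits. The basic hypotheses are easy to verify: $Z(\CC_\TT)$ is locally small since $\CC$ is and its morphisms form a subclass of $\CC_\TT(X,Y) = \CC(X, \TT Y)$; and $\CC$ is cocomplete, because every total category is. So the only nontrivial task is to prove that $\hat \JJ$ preserves small colimits.

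For cocontinuity, suppose $X$ is the colimit of a small diagram $D \colon \II \to \CC$ with cocone $d_i \colon D(i) \to X$. Since $\JJ$ is a left adjoint, $\JJ X$ is the colimit of $\JJ \circ D$ in $\CC_\TT$. A cocone in $Z(\CC_\TT)$ under $\hat \JJ \circ D$ with vertex $W$ is the same data as a cocone in $\CC_\TT$ with central legs $c_i \colon D(i) \to \TT W$; by the universal property in $\CC_\TT$ it factors uniquely through a morphism $c \colon X \to \TT W$ in $\CC$ with $c_i = c \circ d_i$, and the key remaining claim is that $c$ is itself central.

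To prove centrality of $c$ at an arbitrary object $Y$, I will exploit monoidal closure: since $(-) \otimes \TT Y$ has right adjoint $[\TT Y, -]$ it preserves colimits, so $X \otimes \TT Y = \mathrm{colim}_i\, D(i) \otimes \TT Y$ with cocone maps $d_i \otimes \iid_{\TT Y}$. The two sides of the centrality square of Definition~\ref{def:central-cone} for $c$ at $Y$ are parallel morphisms $X \otimes \TT Y \rightrightarrows \TT(W \otimes Y)$, so to establish their equality it suffices to check after precomposition with each $d_i \otimes \iid_{\TT Y}$; by naturality of $\tau$, $\tau'$, $\mu$ and functoriality of $\TT$, each such precomposition coincides with the centrality square at $Y$ for $c_i = c \circ d_i$, which commutes because $c_i$ is central. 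This shows $c$ is central, hence $\hat \JJ X$ is the colimit in $Z(\CC_\TT)$; cocontinuity established, the adjoint functor theorem for total categories delivers the desired right adjoint. The main obstacle is the centrality-preservation step for the universal factor $c$, which crucially relies on monoidal closure to ensure $(-) \otimes \TT Y$ is cocontinuous; without it, one could not reduce centrality of $c$ to centrality of the components $c_i$.
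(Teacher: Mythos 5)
Your proof is correct and follows exactly the route the paper indicates for this proposition: condition~(3) of Theorem~\ref{th:centralisability}, realised by showing $\hat\JJ \colon \CC \to Z(\CC_\TT)$ is a left adjoint via the total-category adjoint functor theorem, with monoidal closure entering precisely to make $(-)\otimes \TT Y$ cocontinuous so that centrality of the universal factorisation can be checked on the colimit legs (the paper itself omits the details and defers to \cite{nous23monads}). One small caution: the Street--Walters theorem is most safely stated for functors preserving \emph{all} colimits rather than only small ones, but your preservation argument applies verbatim to diagrams of arbitrary size since both $\JJ$ and $(-)\otimes \TT Y$ are left adjoints, so nothing is lost.
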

\begin{proof}
	This proof was provided by Titouan Carette in \cite{nous23monads}.
\end{proof}

\begin{example}
	Any category which is the Eilenberg-Moore category of a commutative monad
	over $\Set$ is total \cite{kelly1986survey}. Furthermore it is symmetric
	monoidal closed \cite{keigher1978symmetric}, thus all strong monads on it are
	centralisable.  This includes: the category $\Set_* $ of pointed sets and
	point preserving functions (algebras of the lift monad); the category
	$\mathbf{CMon}$ of commutative monoids and monoid homomorphisms (algebras of
	the commutative monoid monad); the category $\mathbf{Conv}$ of convex sets
	and linear functions (algebras of the distribution monad); and the category
	$\mathbf{Sup}$ of complete semilattices and sup-preserving functions
	(algebras of the powerset monad).
\end{example}

\begin{example}
	Any presheaf category $\Set^{\mathbf{C}^{\mathrm{op}}}$ over a small category
	$\mathbf{C}$ is total \cite{kelly1986survey} and cartesian closed, thus all strong monads on it (with respect to the cartesian structure) are
	centralisable. This includes: the category $\Set^{A^{\mathrm{op}}}$, where $A$ is the
	category with two objects and two parallel arrows, which can be seen as the
	category of directed multi-graphs and graph homomorphisms; the category
	$\Set^{G^{\mathrm{op}}}$, where $G$ is a group seen as a category, which can be seen
	as the category of $G$-sets (sets with an action of $G$) and equivariant
	maps; and the topos of trees $\Set^{\mathbb{N}^{\mathrm{op}}}$. If
	$\mathbf{C}$ is symmetric monoidal, then the Day convolution product makes
	$\Set^{\mathbf{C}^{\mathrm{op}}}$ symmetric monoidal closed \cite{day1970construction},
	hence all strong monads on it with respect to the Day convolution monoidal
	structure also are centralisable.
\end{example}

\begin{example}
	Any Grothendieck topos is cartesian closed and total, therefore it satisfies
	the conditions of Proposition \ref{prop:suffcond}.
\end{example}

\subsection{Specific Examples of Centralisable Monads}
\label{sub:specific-examples}

In this subsection, we consider specific monads and construct their centres.

\begin{example}
	\label{ex:commutative}
	Every commutative monad is naturally isomorphic to its centre.
\end{example}

\begin{example}
	\label{ex:continuation}
	Let $S$ be a set and consider the well-known continuation monad $\TT =
	[[-,S],S]:\Set\to\Set$.  Note that, if $S$ is the empty set or a singleton
	set, then $\TT$ is commutative, so we are in the situation of Example
	\ref{ex:commutative}.  Otherwise, when $S$ is not trivial, one can prove
	(details omitted here) that $\ZZ X = \eta_X(X) \cong X$. Therefore, the
	centre of $\TT$ is trivial and it is naturally isomorphic to the identity
	monad.
\end{example}

\begin{example}
	\label{ex:list}
	Consider the well-known list monad $T:\Set\to\Set$ that is given by $TX = \bigsqcup_{n\geq 0} X^n.$
	Then, the centre of $\TT$ is naturally isomorphic to the identity monad.
\end{example}

Example \ref{ex:continuation} shows that the centre of a monad may be trivial
in the sense that it is precisely the image of the monadic unit and this is the
least it can be.  At the other extreme, Example \ref{ex:commutative} shows that
the centre of a commutative monad coincides with itself, as one would expect.
Thus, the monads that have interesting centres are those monads which are
strong but not commutative, and which have non-trivial centres, such as the one
in Example \ref{ex:free-monad}. Another interesting example of a strong monad
with a non-trivial centre is provided next.

\begin{example}
	\label{ex:semiring}
	Every semiring $(S,+,0,\cdot,1)$ induces a monad $\TT_S:\Set\to\Set$
	\cite{jakl2022bilinear}.  This monad maps a set $X$ to the set of finite
	formal sums of the form $\sum s_i x_i$, where $s_i$ are elements of
	$S$ and $x_i$ are elements of $X$.  The monad $\TT_S$ is commutative iff
	$S$ is commutative as a semiring.  The centre
	$\ZZ$ of $\TT_S$ is induced by the commutative semiring $Z(S)$, \emph{i.e.},
	by the centre of $S$ in the usual sense. Therefore, $\ZZ = \TT_{Z(S)}.$
\end{example}

\begin{example}
	Any Lawvere theory $\lawT$ \cite{hyland2007lawvere}
	induces a finitary monad on $\Set$. The centre
	of this monad is the monad induced by the centre of $\lawT$
	in the sense of Lawvere theories \cite{wraith-lecture}.
	This is detailed in \secref{sub:lawvere}.
\end{example}

\begin{example}
	\label{ex:valuations-monad}
	The valuations monad $\mathcal V \colon \DCPO \to \DCPO$
	\cite{jones-plotkin,jones90} is similar in spirit to the Giry monad on
	measurable spaces \cite{giry}. It is an important monad in domain theory
	\cite{domain-theory} that is used to combine probability and recursion for
	dcpo's.
	Given a dcpo $X$, the valuations monad $\VVV$ assigns
	the dcpo $\VVV X$ of all Scott-continuous \emph{valuations} on $X$, which are
	Scott-continuous functions $\nu \colon \mathcal \sigma(X) \to [0,1]$ from the
	Scott-open sets of $X$ into the unit interval that satisfy some additional
	properties that make them suitable to model probability (details omitted
	here, see \cite{jones90} for more information).
	The category $\DCPO$ is cartesian closed and the valuations monad $\mathcal V
	\colon \DCPO \to \DCPO$ is strong, but its commutativity on $\DCPO$ has been
	an open problem since 1989 \cite{jones90,jones-plotkin,central-valuations,valuation-monads,valuation-statistical}.  The
	difficulty in (dis)proving the commutativity of $\VVV$ boils down to
	(dis)proving the following Fubini-style equation
	\[ \int_X \int_Y \chi_{U}(x,y)d \nu d \xi = \int_Y \int_X \chi_{U}(x,y)d \xi d \nu  \]
	holds for any dcpo's $X$ and $Y$, any Scott-open subset $U \in \sigma(X \times Y)$
	and any two valuations $\xi \in \VVV X$ and $\nu \in \VVV Y.$
	In the above equation, the notion of integration is given by the
	\emph{valuation integral} (see \cite{jones90} for more information).

	The \emph{central valuations monad} \cite{central-valuations}, is the submonad $\ZZ \colon \DCPO
	\to \DCPO$ that maps a dcpo $X$ to the dcpo $\ZZ X$ which has all
	\emph{central valuations} as elements. Equivalently:
	\begin{align*}
		\ZZ X \eqdef \Biggl\{ \xi \in \VVV(X)\ |\ \forall Y \in \Ob(\DCPO). \forall U \in \sigma(X \times Y). \\
		\forall \nu \in \VVV(Y) .
		\int_X \int_Y \chi_{U}(x,y)d \nu d \xi = \int_Y \int_X \chi_{U}(x,y)d \xi d \nu \Biggr\} .
	\end{align*}
	But this is precisely the centre of $\VVV$, which can be seen using Theorem
	\ref{th:centralisability} \eqref{cond:1} after unpacking the definition of
	the monad data of $\VVV$.  Therefore, we see that the main result of
	\cite{central-valuations} is a special case of our more general categorical
	treatment.  We wish to note, that the centre of $\VVV$ is quite large. It
	contains all three commutative submonads identified in
	\cite{valuation-monads} and all of them may be used to model lambda calculi
	with recursion and discrete probabilistic choice (see
	\cite{valuation-monads,central-valuations}).
\end{example}

\subsection{Link with Lawvere theories}
\label{sub:lawvere}

Commutants for Lawvere theories \cite{hyland2007lawvere} were defined
in Wraith's lecture notes \cite{wraith-lecture} but were only
studied in details by Lucyshyn-Wright \cite{rory2018convex} later.
The centre of a Lawvere theory is a special case of commutant.

In a Lawvere theory $\lawT$, we say that $f:A^n\to A^{n'}$ and $g:A^m\to
A^{m'}$ commute if and only if $f^{m'}\circ g^n$ (also written $f\star g$) and
$g^{n'}\circ f^m$ (also written $g\star f$) are equal, up to isomorphism.  If
$\lawS$ is a full subcategory of $\lawT$, one can define the commutant of
$\lawS$ in $\lawT$, meaning a full subcategory of $\lawT$ whose morphisms
commute with the morphisms of $\lawS$. This commutant is written $\lawS^\bot$,
and is also a Lawvere subtheory of $\lawT$.  Considering this, $\lawT^\bot$ is
seen as the \emph{centre} of the Lawvere theory $\lawT$ \cite{wraith-lecture};
and any subtheory of $\lawT^\bot$ is a \emph{central subtheory} of $\lawT$.

What about monads? Models of a Lawvere theory $\lawT$ are
finite-product-preserving functors $\lawT\to\Set$ and they form a category
$\Mod(\lawT,\Set)$.  This category is adjoint to $\Set$ through a forgetful and
free functors. Those adjunctions give birth to a monad. This monad is on
$\Set$, thus it is strong, centralisable and finitary since it originates from
a Lawvere theory. Thus given a Lawvere theory $\lawT$, we obtain a monad $\TT$
whose centre $\ZZ$ is a commutative submonad of $\TT$ and is finitary, which
means that there exists a corresponding Lawvere theory. This Lawvere theory is
a commutative subtheory of $\lawT$, as proven next.

The connection between Lawvere theories and finitary monads
is extensively detailed in \cite{street1972formal,garner2014lawvere,garner2018enriched}.
To get a Lawvere theory out of a finitary monad $\ZZ$ on $\Set$,
one needs to look at the opposite category of a skeleton
of $\Set_\ZZ$ \cite{hyland2007lawvere}, noted here $\mathfrak s\Set_\ZZ^{op}$.
This Lawvere theory is commutative because
$\Set_\ZZ$ is monoidal. Moreover, $\Set_\ZZ$ is embedded in
$\Set_\TT$, then $\mathfrak s\Set_\ZZ^{op}$ is embedded in
$\mathfrak s\Set_\TT^{op}$ ; the latter being equivalent
to $\lawT$.

\begin{theorem}
	\label{th:linklawvere}
	Given a Lawvere theory $\lawT$, its $\Set$-monad $\TT$ is
	centralisable and its centre $\ZZ$
	has a corresponding Lawvere theory $\mathfrak s\Set_\ZZ^{op}$
	that is equivalent to $\lawT^\bot$.
\end{theorem}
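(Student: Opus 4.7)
The plan is to build on the centralisability of $\TT$ (which holds since $\TT$ is a strong monad on $\Set$, by the results of \secref{sub:sets}), to show that its centre $\ZZ$ is finitary, and then to match the associated Lawvere theory with $\lawT^\bot$ via the standard correspondence between Kleisli categories and Lawvere theories.

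First, I would record that, since $\TT$ is a finitary strong monad on $\Set$, the centre $\ZZ$ exists as the commutative submonad given pointwise by $\ZZ X \subseteq \TT X$ (Definition~\ref{def:centre-set} and Theorem~\ref{thm:centre-set}). The first nontrivial step is to verify that $\ZZ$ is itself finitary. Filtered colimits in $\Set$ are computed at the level of underlying sets, and $\TT$ is finitary, so for any directed union $X = \bigcup_i X_i$ and any $t \in \ZZ X \subseteq \TT X$ there exist $i$ and $t_i \in \TT X_i$ with $\TT(X_i \hookrightarrow X)(t_i) = t$. Naturality of $\tau$, $\tau'$ and $\mu$ along the inclusion $X_i \hookrightarrow X$ pulls the central-cone equation defining $\ZZ X$ back to the analogous equation defining $\ZZ X_i$, so $t_i \in \ZZ X_i$ and $\ZZ$ is finitary.

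Being finitary and commutative, $\ZZ$ corresponds under the standard equivalence (see \cite{hyland2007lawvere}) to a commutative Lawvere theory, namely $\mathfrak s\Set_\ZZ^{op}$, the opposite of a skeleton of the full subcategory of $\Set_\ZZ$ on finite powers of a singleton. Theorem~\ref{th:centralisability} gives an isomorphism of categories $\Set_\ZZ \cong Z(\Set_\TT)$, and the canonical embedding $\II \colon \Set_\ZZ \to \Set_\TT$ of Proposition~\ref{prop:embed} restricts to an embedding on skeleta. This identifies $\mathfrak s\Set_\ZZ^{op}$ with a subtheory of $\mathfrak s\Set_\TT^{op} \simeq \lawT$.

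The main obstacle, and the core of the proof, is to show that the image of this embedding is precisely $\lawT^\bot$. On the Lawvere side, $f \colon A^n \to A^{n'}$ belongs to $\lawT^\bot$ iff $f \star g = g \star f$ for every $g \colon A^m \to A^{m'}$ of $\lawT$. Under the equivalence $\mathfrak s\Set_\TT^{op} \simeq \lawT$, a morphism in $\lawT$ translates to a Kleisli arrow between finite powers of a singleton in $\Set_\TT$; unpacking the definition of $f \star g$ as the induced map on powers, one sees it corresponds exactly to one of the two Kleisli composites $\mu \circ \TT\tau' \circ \tau \circ (f \otimes g)$ appearing in the central-cone diagram of Definition~\ref{def:central-cone}, and $g \star f$ to the other. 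Hence the condition that $f$ commutes with every $g$ in $\lawT$ is, via Proposition~\ref{prop:central}, exactly the condition that $f$ be central in $\Set_\TT$, i.e.~that it factors through $\iota$ as a morphism of $Z(\Set_\TT) \cong \Set_\ZZ$. Restricting to finite powers of a singleton and taking opposites then yields $\mathfrak s\Set_\ZZ^{op} \simeq \lawT^\bot$ as subtheories of $\lawT$, completing the proof. The translation between the strength-based central-cone equation and the $\star$-product equation is where the real work lies, and it should be carried out on a generator-by-generator basis using the universal property of finite products in $\lawT$.
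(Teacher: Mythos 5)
Your proof is correct and follows essentially the same route as the paper, whose own proof is just an invocation of point (2) of Theorem~\ref{th:centralisability} together with the surrounding discussion in \secref{sub:lawvere} identifying $Z(\Set_\TT)$, restricted to finite powers, with $\lawT^\bot$; you have merely made explicit the steps the paper leaves implicit (finitariness of $\ZZ$, and the translation between $\star$-commutation and the central-cone condition via Proposition~\ref{prop:central}). The one place needing slightly more care is your finitariness argument: naturality of $\tau$, $\tau'$, $\mu$ only pushes the centrality equation \emph{forward} along $X_i \hookrightarrow X$, so to conclude $t_i \in \ZZ X_i$ from $t \in \ZZ X$ you should also invoke the fact that $\TT$ preserves the injections $X_i \times Y \hookrightarrow X \times Y$ (true for any monad on $\Set$, up to the usual remark about empty domains).
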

\begin{proof}
	This is a direct application of the point (2) of
	Theorem~\ref{th:centralisability}.
\end{proof}

This connection helps motivate a similar theory for commutants in the general
context of strong monads. However, the litterature on Lawvere theories is not
enough to grasp all those monads on symmetric monoidal category: in this
subsection, we have only given the example for the category $\Set$, and in
general, in the literature, the category is often required to be closed.

\section{Central Submonads}
\label{sec:real-central}

So far, we focused primarily on \emph{the} centre of a strong monad. Now we
focus our attention on \emph{central submonads} of a strong monad which we
define by taking inspiration from the notion of central subgroup in group
theory. Just like central subgroups, central submonads are more general
compared to the centre. The centre of a strong monad, whenever it
exists, can be intuitively understood as the largest central submonad, so the
two notions are strongly related. We will later see that central submonads are
more interesting computationally.

\begin{theorem}[Centrality]\label{th:centrality}
	Let $\CC$ be a symmetric monoidal category and $\TT$ a strong monad on it.
	Let $\SC$ be a strong submonad of $\TT$ with $\iota:\SC\Rightarrow\TT$ the
	strong submonad monomorphism.  The following are equivalent:
	\begin{enumerate}
		\item[1)] \label{ccond:1} For any object $X$ of $\CC$,
			$(\SC X,\iota_X)$ is a central cone for $\TT$ at $X$;
		\item[2)] \label{ccond:2} the canonical embedding functor
			$\II:\CC_\SC\to\CC_\TT$ corestricts to an embedding
			of categories $\hat\II:\CC_\SC\to Z(\CC_\TT)$.
	\end{enumerate}
	Furthermore, these conditions imply that $\SC$ is a commutative submonad
	of $\TT$. Under the additional assumption that $\TT$ is centralisable,
	these conditions also are equivalent to:
	\begin{enumerate}
		\item [3)] $\SC$ is a submonad of the centre of $\TT$, and thus is
			commutative.
	\end{enumerate}
\end{theorem}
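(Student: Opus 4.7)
The plan is to prove $(1) \Leftrightarrow (2)$ directly, then to derive as a consequence that $\SC$ is a commutative submonad of $\TT$, and finally to establish the equivalence with condition~$(3)$ under the additional assumption that $\TT$ is centralisable. Throughout, I rely on Proposition~\ref{prop:central} (central cones are exactly central morphisms in the Kleisli category), Lemma~\ref{lem:precompose-central} (central cones are stable under precomposition), Proposition~\ref{prop:embed} (the canonical $\II$ is an embedding), and Lemma~\ref{lem:monic} (components of terminal central cones are monic).

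For $(1 \Rightarrow 2)$: any Kleisli arrow $f \colon X \to \SC Y$ is sent by $\II$ to $\iota_Y \circ f$; by hypothesis $(\SC Y, \iota_Y)$ is a central cone, so Lemma~\ref{lem:precompose-central} yields that $(X, \iota_Y \circ f)$ is one too, and Proposition~\ref{prop:central} then gives centrality of $\II f$ in $\CC_\TT$. Thus $\II$ corestricts to $\hat\II \colon \CC_\SC \to Z(\CC_\TT)$, which remains an embedding because $\II$ already is by Proposition~\ref{prop:embed}. Conversely, for $(2 \Rightarrow 1)$: the $\CC$-morphism $\iid_{\SC Y} \colon \SC Y \to \SC Y$ is a Kleisli arrow $\SC Y \to Y$ of $\CC_\SC$ whose $\hat\II$-image is $\iota_Y$, so $\iota_Y$ is central in $\CC_\TT$, and $(\SC Y, \iota_Y)$ is a central cone by Proposition~\ref{prop:central}.

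To deduce commutativity of $\SC$, the idea is to post-compose both legs of the $\SC$-version of diagram~\eqref{eq:commutative-monad} with $\iota_{X \otimes Y}$ and use repeatedly that $\iota$ is a morphism of strong monads, so it intertwines $\mu^\SC, \tau^\SC$, and $(\tau^\SC)'$ with their $\TT$-counterparts. This rewrites the square, up to $\iota$, as the analogous $\TT$-diagram evaluated on $\iota_X \otimes \iota_Y$, which commutes by centrality of $\iota$ exactly as in the chase~\eqref{eq:proof-commutative} of Theorem~\ref{th:submonad-from-cone}. Since $\iota_{X \otimes Y}$ is monic, the original $\SC$-square commutes.

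Under centralisability of $\TT$, $(3 \Rightarrow 1)$ is immediate: factoring $\iota = \iota^\ZZ \circ j$ through the submonad morphism $j \colon \SC \Rightarrow \ZZ$ and applying Lemma~\ref{lem:precompose-central} to each terminal central cone $(\ZZ X, \iota^\ZZ_X)$ yields the central cone $(\SC X, \iota_X)$. For $(1 \Rightarrow 3)$, the terminal property of $(\ZZ X, \iota^\ZZ_X)$ produces a unique $j_X \colon \SC X \to \ZZ X$ with $\iota^\ZZ_X \circ j_X = \iota_X$; naturality of $j$ and its compatibility with the units, multiplications, and strengths of $\SC$ and $\ZZ$ all follow by monic cancellation through $\iota^\ZZ$, while $j_X$ inherits monicity from $\iota_X$. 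The main technical obstacle is the commutativity chase, which must carefully thread many coherence and naturality squares while tracking precisely where centrality of $\iota$ is applied; the remaining implications reduce to short applications of Proposition~\ref{prop:central}, Lemma~\ref{lem:precompose-central}, and the universal property of the centre.
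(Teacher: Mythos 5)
Your proposal is correct and follows essentially the same route as the paper: $(1)\Leftrightarrow(2)$ via Lemma~\ref{lem:precompose-central} and the centrality of $\hat\II(\iid_{\SC X})=\iota_X$, commutativity by reusing the diagram chase of \eqref{eq:proof-commutative} with $\SC$ in place of $\ZZ$, and the equivalence with $(3)$ by factoring $\iota$ through the terminal central cones. Your write-up is in fact slightly more explicit than the paper's (e.g.\ identifying which Kleisli arrow maps to $\iota_Y$, and spelling out the monic-cancellation checks for the induced morphism $\SC\Rightarrow\ZZ$), but there is no substantive difference.
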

\begin{proof}
	\

	$(1\Rightarrow 2):$
	The proof of Th.~\ref{th:iso-of-categories} contains the necessary
	elements for this proof.  In details, we know that all the components
	of $\iota$ are central, and we also know that precomposing a central
	morphism keeps being central (see Lemma~\ref{lem:precompose-central}).

	$(2\Rightarrow 1):$
	The hypothesis ensures that $\hat\II(id_X)=\iota_X$ is central.

	The diagram in (\ref{eq:proof-commutative}) proves that the centre of a
	centralisable monad is commutative.  Assuming (1) -- or (2) -- is true,
	then the same diagram replacing $\ZZ$ by $\SC$ proves that $\SC$ is a
	commutative monad.

	$(1\Rightarrow 3):$
	Moreover, each $\iota^\SC_X:\SC X\Rightarrow\TT X$ factorises through
	the terminal central cone $\iota^\ZZ_X$.  A strong monad morphism
	$\SC\Rightarrow\ZZ$ arises from those factorisations.

	$(3\Rightarrow 1):$
	Let us write $\ZZ$ the centre of $\TT$, $\iota^\SC:\SC\Rightarrow\ZZ$
	and $\iota^\ZZ:\ZZ\Rightarrow\TT$ the submonad morphisms. The
	components of $\iota^\ZZ$ are terminal central cones, and are in
	particular central, so $\iota^\ZZ\circ\iota^\SC$ is also central by
	Lemma~\ref{lem:precompose-central}.  Thus the components of the
	submonad morphism from $\SC$ to $\TT$ are central.
\end{proof}

\begin{definition}[Central Submonad]\label{def:central-sub}
	Given a strong submonad $\SC$ of $\TT$, we say that $\SC$ is a \emph{central
	submonad} of $\TT$ if it satisfies any one of the above equivalent criteria
	from Theorem \ref{th:centrality}.
\end{definition}

Just like the centre of a strong monad, any central submonad also is
commutative and the above theorem (Theorem~\ref{def:central-sub}) shows that
central submonads have a similar structure to the centre of a strong monad. The
final statement shows that we may see the centre (whenever it exists) as the
largest central submonad of $\TT$. The centre of a strong monad often does
exist (as we already argued), so the last criterion also provides a simple way
to determine whether a submonad is central or not.

\begin{example}
	By the above theorem, every centre described in \secref{sec:examples} is a
	central submonad.
\end{example}

\begin{example}
	Let $\TT$ be a strong monad on a symmetric monoidal category $\CC,$ such that
	all unit maps $\eta_X \colon X \to \TT X$ are monomorphisms (this is often
	the case in practice). Then, the identity monad on $\CC$ is a central
	submonad of $\TT.$
\end{example}

\begin{example}
	Given a monoid $M,$ let $\TT = (M \times -)$ be the monad on $\Set$ from
	Example~\ref{ex:free-monad}. Any submonoid $S$ of $Z(M)$ induces a central
	submonad $(S \times -)$ of $\TT$.
\end{example}

\begin{example}
	Given a semiring $R$, consider the monad $\TT_R$ from Example
	\ref{ex:semiring}. Any subsemiring $S$ of $Z(R)$ induces a central submonad
	$\TT_{S}$ of $\TT_R.$
\end{example}

\begin{example}
	A notion of \emph{central Lawvere subtheory} can be introduced in an obvious
	way. It induces a central submonad of the monad induced by the
	original Lawvere theory.
\end{example}

\begin{example}
	The three commutative submonads identified in \cite{valuation-monads} are
	central submonads of the valuations monad $\VV$ from Example
	\ref{ex:valuations-monad}, because each one of them is a commutative submonad
	of the centre of $\VV$ \cite{central-valuations}.
\end{example}

\begin{remark}
	Given an arbitrary monoid $M$ (on $\Set$), there could be a commutative
	submonoid $S$ of $M$ that is not central (\emph{i.e.}, its elements do not commute
	with all elements of $M$). The same holds for strong monads.  For instance,
	let $M = \mathbb{D}_4$ (see Example \ref{ex:counter}) and let $S$ be the
	submonoid of $M$ that contains only the rotations (of which there are
	four). Then, $S$ is a commutative submonoid that is not central.  By taking
	the free monads induced by these monoids (see Example \ref{ex:free-monad})
	on $\Set$, we get an example of a commutative submonad that is not central.
	Moreover, if we take $\DD$ to be the full subcategory of $\Set$ whose
	objects have cardinality that is different from two, then $\DD$ has a
	cartesian structure and the writer monads induced by $S$ and $M$ on $\DD$
	give an example of a non-centralisable strong monad that admits a
	commutative non-central submonad. In this situation, the identity
	monad on $\DD$ gives an example of a central (commutative) submonad
	even though the ambient monad (induced by $M$) is not centralisable.
\end{remark}

\section{Computational Interpretation}
\label{sec:computational}

In this section, we provide a computational interpretation of our ideas. We
consider a simply-typed lambda calculus together with a strong monad $\TT$ and
a \emph{central submonad} $\SC$ of $\TT$.  We call this system the
\emph{Central Submonad Calculus (CSC)}.  We describe its equational theories,
formulate appropriate categorical models for it and we prove soundness,
completeness and internal language results for our semantics.

\subsection{Syntactic Structure of the Central Submonad Calculus}
\label{sub:computational-syntax}

We begin by describing the types we use. The grammar of types (see Figure
\ref{fig:grammars}) are just the usual ones with one addition -- we extend the
grammar by adding the family of types $\SC A$. The type $\TT A$ represents the
type of monadic computations for our monad $\TT$ that produce values of type
$A$ (together with a potential side effect described by $\TT$).  The type $\SC
A$ represents the type of \emph{central} monadic computations for our monad
$\TT$ that produce values of type $A$ (together with a potential \emph{central}
side effect that is in the submonad $\SC$).  Some terms and formation rules can
be expressed in the same way for types of the form $\SC A$ or $\TT A$ and in
this case we simply write $\XX A$ to indicate that $\XX$ may range over $\{\SC,
\TT \}.$

The grammar of terms and their formation rules are described in Figure
\ref{fig:grammars}.  The first six rules in Figure \ref{fig:grammars} are just
the usual formation rules for a simply-typed lambda calculus with pair types.
Contexts are considered up to permutation and without repetition and all
judgements we consider are implicitly closed under weakening (which is
important when adding constants).  The $\mathtt{ret}_\XX\ M$ term is used as an
introduction rule for the monadic types and it allows us to see the pure (\emph{i.e.},
non-effectful) computation described by the term $M$ as a monadic one.  The
term $\iota M$ allows us to view a \emph{central} monadic computation as a
monadic (not necessarily central) one. Semantically, it corresponds to applying
the $\iota$ submonad inclusion we saw in previous sections.  Finally, we have
two terms for monadic sequencing that use the familiar $\mathtt{do}$-notation.
The monadic sequencing of two central computations remains central, which is
represented via the $\mathtt{do}_\SC$ terms; the $\mathtt{do}_\TT$ terms are
used for monadic sequencing of (not necessarily central) computations.

\begin{figure}
	\noindent $\text{(Types)}\quad  A, B ~~::= 1 \alt A \to  B\alt A\times B
	\alt \SC A \alt \TT A$ \\
	$\quad$\\
	$\text{(Terms)} \quad M,N ~~ ::=  x \alt * \alt \lambda x^A.M \alt MN \alt \pv M N$ \\
	$\text{ }\qquad\alt \pi_i M \alt \sret M \alt \iota M \alt \tret M$\\
	$\text{ }\qquad\alt \zdo x M N \alt \tdo x M N $
	$\quad$\\
	\[\begin{array}{c}
		\infer{
			\Gamma,x:A\vdash x:A}{}
		\qquad
		\infer{
			\Gamma\vdash MN:B
		}{
			\Gamma\vdash M:A\to B
			&
			\Gamma\vdash N:A}
		\\[1.5ex]
		\infer{
			\Gamma\vdash *:1}{}
		\qquad
		\infer{\Gamma\vdash\lambda x^A.M:A\to B}{\Gamma,x:A\vdash M:B}
		\qquad
		\infer{
			\Gamma\vdash\pi_i M \colon A_i}{\Gamma\vdash M:A_1\times A_2}
		\\[1.5ex]
		\infer{
			\Gamma\vdash \pv{M}{N}: A\times B
		}{
			\Gamma\vdash M:A
			&
			\Gamma\vdash N:B
		}
		\qquad
		\infer{
			\Gamma\vdash \xret M :\XX A}{\Gamma\vdash M:A}
		\\[1.5ex]
		\infer{
			\Gamma\vdash \iota M :\TT A}{\Gamma\vdash M:\SC A}
		\qquad
		\infer{
			\Gamma\vdash\xdo x M N \colon \XX B
		}{
			\Gamma\vdash M:\XX A
			&
			\Gamma, x:A\vdash N:\XX B
		}
	\end{array}\]
	\caption{Grammars and formation rules.}
	\label{fig:grammars}
\end{figure}

\subsection{Equational Theories of the Central Submonad Calculus}
\label{sub:theories}

Next, we describe equational theories for our calculus. We follow the
vocabulary and the terminology in \cite{maietti2005relating} in order to
formulate an appropriate notion of $\CSC$-theory.

\begin{definition}[$\CSC$-theory]
	A $\CSC$-theory is an extension of the Central Submonad Calculus (see
	\secref{sub:computational-syntax}) with new ground types, new term constants
	(which we assume are well-formed in any context, including the empty one) and
	new equalities between types and between terms.
\end{definition}

In a $\CSC$-theory, we have four types of judgements: the judgement $ \vdash A :
\typ$ indicates that $A$ is a (simple) type; the judgement $\vdash A = B :
\typ$ indicates that types $A$ and $B$ are equal; the judgement $\Gamma \vdash
M \colon A$ indicates that $M$ is a well-formed term of type $A$ in context
$\Gamma$, as usual; finally, the judgement $\Gamma \vdash M = N \colon A$ indicates
that the two well-formed terms $M$ and $N$ are equal.

Type judgements and term judgements are described in Figure \ref{fig:grammars}
and type equality judgements in Figure \ref{fig:type-rules}. Following the
principle of judgemental equality, we add type conversion rules in Figure
\ref{fig:variable-conversion-rule}. The rules in Figure \ref{fig:moggi-rules}
are the usual rules that describe the equational theory of the simply-typed
lambda calculus.  As often done by many authors, we implicitly identify terms
that are $\alpha$-equivalent.  The rules for $\beta$-equivalence and
$\eta$-equivalence are explicitly specified.

In Figure \ref{fig:rules}, we present the equational rules for monadic
computation.  The rules on the first three lines -- \emph{(ret.eq), (do.eq),
$(\XX.\beta)$, $(\XX.\eta)$, ($\XX$.assoc)} -- axiomatise the structure of a
strong monad. Because of this, these rules are stated for both monads $\TT$ and
$\SC.$ The rules \emph{($\iota$.mono), ($\iota \SC$.ret)} and \emph{($\iota
\SC$.comp)} are used to axiomatise the structure of $\SC$ as a submonad of
$\TT.$ Intuitively, these rules can be understood as specifying that central
monadic computations can be seen as (general) monadic computations of the
ambient monad $\TT$.  The remainder of the rules are used to axiomatise the
behaviour of $\SC$ as a \emph{central} submonad of $\TT$.  The rule
$(\SC.central)$ is undoubtedly the most important one, because it ensures that
central computations commute with any other (not necessarily central)
computation when performing monadic sequencing with the $\TT$ monad.

\begin{figure*}
	\[\begin{array}{c}
		\infer{
			\vdash A = A \colon \typ
		}{
			\vdash A \colon \typ
		}
		\qquad
		\infer{
			\vdash B=A:\typ
		}{
			\vdash A=B:\typ
		}
		\qquad
		\infer{
			\vdash A=C:\typ
		}{
			\vdash A=B:\typ
			&
			\vdash B=C:\typ
		}
		\\[1.5ex]
		\infer{
			\vdash A \times B = A'\times B' \colon \typ
		}{
			\vdash A = A' :\typ
			&
			\vdash B = B' :\typ
		}
		\qquad
		\infer{
			\vdash A \to B = A'\to B' \colon \typ
		}{
			\vdash A = A' :\typ
			&
			\vdash B = B' :\typ
		}
		\\[1.5ex]
		\infer{
			\vdash \XX A = \XX B :\typ
		}{
			\vdash A = B:\typ
		}
	\end{array}\]
	\caption{Equational rules for types.}
	\label{fig:type-rules}
\end{figure*}

\begin{figure*}
	\[\begin{array}{c}
		\infer{
			\Gamma, x \colon B \vdash M \colon D
		}{
			\vdash A=B:\typ
			&
			\vdash C=D:\typ
			&
			\Gamma, x \colon A \vdash M \colon C
		}
		\\[1.5ex]
		\infer{
			\Gamma, x \colon B \vdash M = N \colon D
		}{
			\vdash A=B:\typ
			&
			\vdash C=D:\typ
			&
			\Gamma, x \colon A \vdash M = N \colon C
		}
	\end{array}\]
	\caption{Type conversion rules.}
	\label{fig:variable-conversion-rule}
\end{figure*}

\begin{figure*}
	\resizebox{\hsize}{!}{
		$
		\begin{array}{c}
			\infer[(ret.eq)]{
				\Gamma\vdash \xret M = \xret N \colon \XX A
			}{
				\Gamma\vdash M = N \colon A
			}
			\qquad
			\infer[(do.eq)]{
				\Gamma\vdash \xdo x M N = \xdo{x}{M'}{N'} \colon \XX B
			}{
				\Gamma\vdash M=M' \colon \XX A
				&
				\Gamma, x:A\vdash N=N' \colon \XX B
			}
			\\[1.5ex]
			\infer[(\XX.\beta)]{
				\Gamma\vdash \xdo{x}{\xret M}{N} = N[M/x] \colon \XX B
			}{
				\Gamma\vdash M \colon A
				&
				\Gamma,x:A\vdash N \colon \XX B
			}
			\qquad
			\infer[(\XX.\eta)]{
				\Gamma\vdash \xdo{x}{M}{\xret x} = M \colon \XX A
			}{
				\Gamma\vdash M \colon \XX A
			}
			\\[1.5ex]
			\infer[(\XX.assoc)]{
				\Gamma\vdash \xdo{y}{(\xdo x M N)}{P} = \xdo{x}{M}{\xdo y N P} \colon \XX C
			}{
				\Gamma\vdash M \colon \XX A
				&
				\Gamma\vdash N \colon \XX B
				&
				\Gamma,x:A,y:B\vdash P \colon \XX C
			}
			\\[1.5ex]
			\infer[(\SC.central)]{
				\Gamma \vdash \tdo{x}{\iota M}{\tdo{y}{N}{P}} = \tdo{y}{N}{\tdo{x}{\iota M}{P}} \colon \TT C
			}{
				\Gamma\vdash M:\SC A
				&
				\Gamma\vdash N \colon \TT B
				&
				\Gamma,x:A,y:B\vdash P:\TT C
			}
			\\[1.5ex]
			\infer=[(\iota.mono)]{
				\Gamma\vdash\iota M=\iota N:\TT A
			}{
				\Gamma\vdash M=N:\SC A
			}
			\qquad
			\infer[(\iota\SC.ret)]{
				\Gamma\vdash \iota~\sret M = \tret M \colon \TT A
			}{
				\Gamma\vdash M \colon A
			}
			\\[1.5ex]
			\infer[(\iota\SC.comp)]{
				\Gamma\vdash \tdo{x}{\iota M}{\iota N} = \iota~\zdo x M N \colon \TT B
			}{
				\Gamma \vdash M \colon \SC A
				&
				\Gamma, x:A \vdash N \colon \SC B
			}
		\end{array}
		$
		}
	\caption{Equational rules for terms of monadic types of $\mathrm{CSC}$.}
	\label{fig:rules}
\end{figure*}

\begin{example}
	\label{ex:theory-writer}
	Let us consider an example of a $\CSC$-theory.  Given a monoid $(M,e,m)$ we now
	axiomatise the writer monad induced by $M$.  A theory for this monad does not
	add any new types, but it adds constants for each element $c$ of $M$:
	$\Gamma\vdash\tact:\TT 1.$
	In this specific theory, we may think of the side-effect computed by monadic
	sequencing as being simply an element of $M$.  The term $\tact$ can be
	understood as performing the monoid multiplication on the right with argument
	$c$, \emph{i.e.}, it applies the function $m(-,c)$ to whatever is the current state
	of the program.

	Let $S$ be a submonoid of the centre $Z(M)$ of $M$.  This makes $S$ a
	\emph{central} submonoid of $M$ (this can be defined in a similar way to
	central subgroups).  We enrich the theory with the following constant and
	rule for each $s$ in $S$:
	\[
		\infer{\Gamma\vdash\zact[s]:\SC 1}{}
		\qquad
		\infer{\Gamma\vdash\iota~\zact[s] = \tact[s]:\TT 1}{}
	\]
	The application of $\xret$ is equivalent to acting on the monoid
	data with the neutral element:
	\[
		\infer{\Gamma\vdash\xret * = \xact[e] \colon \SC 1}{}
	\]
	Of course, the actions compose:
	\[
		\infer{
			\begin{array}{c}
				\Gamma\vdash\xdo{*}{\xact}{\xdo{*}{\xact[c']}{M}} \\
				= \xdo{*}{\xact[m(c,c')]}{M}:\XX A
			\end{array}
		}{
			\Gamma\vdash M:\XX A
		}
	\]
	where we have used some (hopefully obvious) syntactic sugar.
	We write $\Th_M$ to refer to this theory.
\end{example}

\begin{remark}
	\label{rem:why-not-centre}
	As we have now seen, the equational theories of central submonads admit a
	presentation that is similar in spirit to that of the simply-typed
	$\lambda$-calculus.  However, that is not the case with \emph{the} centre of
	a strong monad. The reason is that the theory $\Th$ can introduce a central
	effect -- one that commutes with all others -- as a constant $c$ that is not
	assigned the type $\SC A$, but the type $\TT A$, for some $A$. However, the
	centre, being the largest central submonad, must contain all such effects, so
	the constant $c$ has to be equal to a term of the form $\iota c'$.  One
	solution to this problem would be to use a more expressive logic and
	introduce a rule as follows (writing inline because of space):
	given $c \colon \TT A$ and $x:A,y:B\vdash P \colon \TT C$, such that
	$\forall N:\TT B.~\vdash \tdo{x}{c}{\tdo{y}{N}{P}} = \tdo{y}{N}{\tdo{x}{c}{P}} \colon \TT C$
	then $\exists c':\SC A.~\vdash c = \iota c' \colon \TT A.$
	However, the addition of such
	a rule seems unnecessary to prove our main point and it increases the
	complexity of the logic.  Because of this, our choice is to focus on central
	submonads.  Another reason to prefer central submonads over the centre is
	that they are more general and it is not required to identify \emph{all}
	central effects (which would be the case for the centre). Overall, our choice
	for central submonads is motivated by the advantages they provide in terms of
	generality, simplicity and practicality of their equational theories compared
	to the centre.
\end{remark}

Now that we have introduced theories, we explain how they can be translated
into one another in an appropriate way.

\begin{definition}[$\mathrm{CSC}$-translation]
	A \emph{translation} $V$ between two $\CSC$-theories $\Th$ and $\Th'$ is
	a function that maps types of $\Th$ to types of $\Th'$ and terms of $\Th$ to
	terms of $\Th'$ that preserves the provability of all type judgements, term
	judgements, type equality judgements and term equality judgements.
	Moreover, such a translation is required to satisfy the following
	structural requirements on types:
	\[\begin{array}{c}
		V(1)=1
		\qquad
		V(\TT A) = \TT V(A)
		\qquad
		V(\SC A) = \SC V(B)
		\\
		V(A\to B) = V(A)\to V(B)
		\qquad
		V(A\times B) = V(A)\times V(B)
	\end{array}\]
	and on terms:
	\[\begin{array}{c}
		V(*) = *
		\\
		V(\lambda x^A.M) = \lambda x^A.V(M)
		\qquad
		V(MN) = V(M)V(N)
		\\
		V(\pv M N) = \pv{V(M)}{V(N)}
		\qquad
		V(\pi_i M) = \pi_i V(M)
		\\
		V(\iota M) = \iota V(M)
		\qquad
		V(\xret M) = \xret V(M)
		\\
		V(\xdo x M N) = \xdo{x}{V(M)}{V(N)}
	\end{array}\]
\end{definition}

\begin{remark}
	The above equations do not imply preservation of the relevant judgements for
	\emph{constants}. Because of this, the first part of the definition also is
	necessary.
\end{remark}

Of course, it is easy to see that $\CSC$-theories and $\CSC$-translations form a
category. However, in order to precisely state our main result, we have to
consider the 2-categorical structure of $\CSC$-theories.  Intuitively, we may view
every $\CSC$-theory as a category itself (with types as objects
and terms as morphisms) and every $\CSC$-translation as a functor that strictly
preserves the relevant structure. Then, intuitively, an appropriate
notion of a 2-morphism would be a natural transformation between such functors.
This is made precise (in non-categorical terms) by our next definition.

\begin{definition}[$\CSC$-translation Transformation]\label{def:trans-trans}
	Given two $\CSC$-theories $\Th$ and $\Th'$,
	and two $\CSC$-translations $V$ and $V'$ between
	them, a \emph{$\CSC$-translation transformation} $\alpha:V\Rightarrow V'$
	is a type-indexed family of term judgements $x:V(A)\vdash \alpha_A \colon V'(A)$
	such that, for any valid judgement $x:A\vdash f:B$ in $\Th$
	\[ x:V(A)\vdash \alpha_B[V(f)/x] = V'(f)[\alpha_A/x] \colon V'(B) \]
	also is derivable in $\Th'.$
\end{definition}

\begin{proposition}
	$\mathrm{CSC}$-theories, $\mathrm{CSC}$-translations
	and $\mathrm{CSC}$-translation transformations
	form a $2$-category $\TR(\mathrm{CSC})$.
\end{proposition}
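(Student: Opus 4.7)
The plan is to exhibit, by explicit syntactic construction, all the structural data of a strict 2-category on $\TR(\CSC)$, and to verify the required axioms by unpacking the definitions and appealing to the substitution rule $(subst)$ together with the congruence rules from Figure~\ref{fig:rules} and Figure~\ref{fig:moggi-rules}. Because every piece of data is syntactic, all axioms reduce to the derivability of specific term-equality judgements in the target theory.

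First I would equip $\CSC$-theories and $\CSC$-translations with a 1-categorical structure. The identity translation $\iid_\Th \colon \Th \to \Th$ acts as the identity on both types and terms, and every structural preservation clause holds trivially. Composition $V' \circ V$ of two $\CSC$-translations is defined pointwise on types and on terms; each of the required structural equations for $V' \circ V$ follows immediately from the corresponding equations for $V$ and $V'$. Preservation of provable type, term and equality judgements is inherited from $V$ and $V'$ in turn. Associativity and unit laws hold strictly, since both sides act identically on every type and on every term.

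Next I define 2-morphisms and their vertical composition. The identity $\iid_V \colon V \Rightarrow V$ is the family $(\iid_V)_A \eqdef x$ given by $x \colon V(A) \vdash x \colon V(A)$; the naturality condition of Definition~\ref{def:trans-trans} reduces to $V(f)[x/x] = V(f)[x/x]$, which is just reflexivity. Given $\alpha \colon V \Rightarrow V'$ and $\beta \colon V' \Rightarrow V''$, vertical composition is defined by $(\beta \cdot \alpha)_A \eqdef \beta_A[\alpha_A/x]$, which is well-typed by Lemma~\ref{lem:type-subst}. Its naturality square is obtained by pasting the naturality squares of $\alpha$ and $\beta$ together using $(subst)$ twice. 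Associativity $(\gamma \cdot \beta)\cdot \alpha = \gamma \cdot (\beta \cdot \alpha)$ and the unit laws reduce to the fact that substitution in a single variable is strictly associative and that the variable is a strict two-sided unit for substitution.

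For horizontal composition I proceed via whiskering. Given $\alpha \colon V_1 \Rightarrow V_1'$ of translations $\Th \to \Th'$ and $\beta \colon V_2 \Rightarrow V_2'$ of translations $\Th' \to \Th''$, set $(V_2 \alpha)_A \eqdef V_2(\alpha_A)$ and $(\beta V_1)_A \eqdef \beta_{V_1(A)}$; naturality of each whiskering uses that $\CSC$-translations preserve provable equalities and commute with substitution. The horizontal composite $\beta * \alpha$ is then defined as the vertical composite $(\beta V_1') \cdot (V_2 \alpha)$. The main obstacle is the \emph{interchange law}, which simultaneously requires that $(\beta V_1') \cdot (V_2 \alpha) = (V_2' \alpha) \cdot (\beta V_1)$ and that horizontal composition is functorial with respect to vertical composition. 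Both follow from a single instance of the naturality of $\beta$ applied to the judgement $x \colon V_1(A) \vdash \alpha_A \colon V_1'(A)$, namely $\beta_{V_1'(A)}[V_2(\alpha_A)/x] = V_2'(\alpha_A)[\beta_{V_1(A)}/x]$, combined with the fact that $V_2$ and $V_2'$ commute with substitution. Once interchange is established, associativity and unitality of horizontal composition reduce to routine syntactic bookkeeping of substitutions and to the identity/composition clauses for translations proved in the first step.
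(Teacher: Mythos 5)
Your proposal is correct and fills in, with explicit syntactic constructions, exactly the direct verification that the paper dispatches with ``Direct with Definition~\ref{def:trans-trans}'': in particular your identification of the interchange law with one instance of naturality of $\beta$ at the judgement $x \colon V_1(A)\vdash \alpha_A \colon V_1'(A)$ is the right key observation. The only point worth making explicit is that 2-cells must be identified when their components are \emph{provably} equal in the target theory, since interchange (unlike associativity of vertical composition) holds only up to derivable term equality rather than syntactic identity.
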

\begin{proof}
	Direct with Definition~\ref{def:trans-trans}.
\end{proof}

\subsection{Categorical Models of CSC}
\label{sub:models}

Now we describe what are the appropriate categorical models for providing a
semantic interpretation of our calculus.

\begin{definition}[$\CSC$-model]
	A \emph{$\CSC$-model} is a cartesian closed category $\CC$ equipped with both a
	strong monad $\TT$ and a central submonad $\SC^\TT$ of $\TT$ with submonad
	monomorphism written as $\iota^\TT \colon \SC^\TT \naturalto \TT.$ We often
	use a quadruple $(\CC,\TT,\SC^\TT, \iota^\TT)$ to refer to a $\CSC$-model.
\end{definition}

We will soon show that $\CSC$-models correspond to $\CSC$-theories in a precise way.
This correspondence covers $\CSC$-translations too and for
this we introduce our next definition.

\begin{definition}[$\CSC$-model Morphism]
	Given two $\CSC$-models $(\CC,\TT,\SC^\TT, \iota^\TT)$ and $(\DD,\MM,\SC^\MM,
	\iota^\MM)$, a \emph{$\CSC$-model morphism} is a strict cartesian closed functor
	$F:\CC\to\DD$ that satisfies the following additional coherence properties:
	\[\begin{array}{c}
		F(\TT X) = \MM(FX)
		\qquad
		F(\SC^\TT X) = \SC^\MM (FX)
		\\[1.5ex]
		F\iota^\TT_X = \iota^\MM_{FX}
		\qquad
		F\eta^\TT_X = \eta^\MM_{FX}
		\\[1.5ex]
		F\mu^\TT_X = \mu^\MM_{FX}
		\qquad
		F\tau^\TT_{X,Y} = \tau^\MM_{FX,FY}.
	\end{array}\]
\end{definition}

Notice that a $\CSC$-model morphism \emph{strictly} preserves all of the relevant
categorical structure. This is done on purpose so that we can establish an
exact correspondence with $\CSC$-translations, which also strictly preserve the
relevant structure. To match the notion of a $\CSC$-translation transformation, we
just have to consider natural transformations between $\CSC$-model morphisms.

\begin{proposition}
	$\CSC$-models, $\CSC$-model morphisms and natural transformations between
	them form a $2$-category $\Mod(\CSC)$.
\end{proposition}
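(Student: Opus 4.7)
The plan is to verify the 2-categorical axioms by leveraging the fact that $\Mod(\CSC)$ embeds naturally into $\Cat$, the 2-category of categories, functors, and natural transformations. Since the objects of $\Mod(\CSC)$ are (in particular) categories, the 1-cells are (in particular) functors, and the 2-cells are (by definition) natural transformations, most of the structural equations will follow for free from the corresponding axioms in $\Cat$. The only genuine content is therefore closure: checking that composition of 1-cells remains a $\CSC$-model morphism, that identity functors on $\CSC$-models are $\CSC$-model morphisms, and that both vertical and horizontal composition of natural transformations behave coherently with the other structure.

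First, I would verify identities: for any $\CSC$-model $(\CC, \TT, \SC^\TT, \iota^\TT)$, the identity functor $\iid_\CC \colon \CC \to \CC$ is strictly cartesian closed and trivially satisfies $\iid_\CC \TT X = \TT X = \TT \iid_\CC X$, and likewise for $\SC^\TT$, $\iota^\TT$, $\eta^\TT$, $\mu^\TT$, and $\tau^\TT$. Hence $\iid_\CC$ is a $\CSC$-model morphism. Next, I would verify 1-cell composition: given $\CSC$-model morphisms $F \colon (\CC,\TT,\SC^\TT,\iota^\TT) \to (\DD,\MM,\SC^\MM,\iota^\MM)$ and $G \colon (\DD,\MM,\SC^\MM,\iota^\MM) \to (\mathbf{E},\NN,\SC^\NN,\iota^\NN)$, a quick computation of the form $GF(\TT X) = G(\MM FX) = \NN GFX$, and similarly for the submonad and the natural transformations (e.g.\ $GF(\tau^\TT_{X,Y}) = G(\tau^\MM_{FX,FY}) = \tau^\NN_{GFX,GFY}$), shows that $GF$ is again a $\CSC$-model morphism. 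Associativity and unit laws for 1-cell composition are then inherited from $\Cat$.

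For the 2-cells, I would take vertical and horizontal composition of natural transformations to be those of $\Cat$. Since $\CSC$-model morphisms between fixed models are a (possibly empty) subclass of functors between the underlying categories, natural transformations between them form a subcollection of natural transformations in $\Cat$ — no additional closure condition needs to be verified, because the definition of 2-cell places no constraint beyond naturality. Consequently the vertical composition $\beta \circ \alpha$ and horizontal composition $\beta * \alpha$ yield natural transformations in the ambient 2-category $\Cat$, and all the axioms they must satisfy (associativity and unitality of vertical composition, functoriality of horizontal composition, middle-four interchange) are immediate restrictions of the corresponding axioms in $\Cat$.

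The main obstacle is really just bookkeeping: one must carefully check that the six coherence equations defining a $\CSC$-model morphism are each preserved under 1-cell composition. Each check reduces to a two-step rewrite using the corresponding equation for $F$ followed by the one for $G$, so none of them is subtle, but the list is long enough to be worth writing out in a single display. Once this verification is complete, the 2-categorical structure on $\Mod(\CSC)$ is entirely determined by, and inherits its axioms from, the 2-category $\Cat$.
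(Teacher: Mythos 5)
Your proposal is correct and matches the paper's approach: the paper's own proof is just the single word ``Direct,'' meaning exactly the routine verification you spell out — closure of $\CSC$-model morphisms under identity and composition, with all 2-categorical axioms inherited from $\Cat$. Your write-up is a faithful (and more explicit) elaboration of that argument.
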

\begin{proof}
	Direct.
\end{proof}

\subsection{Semantic Interpretation}
\label{sub:denotational}

Now we explain how to introduce a denotational semantics for our theories using
our models.  An interpretation of a $\CSC$-theory $\Th$ in a $\CSC$-model $\CC$
is a function $\sem{-}$ that maps types of $\Th$ to objects of $\CC$ and
well-formed terms of $\Th$ to morphisms of $\CC$. We provide the details below.

For each ground type $G$, we assume there is an appropriate corresponding
object $\sem{G}$ of $\CC$.  The remaining types are interpreted as objects in
$\CC$ as follows:
$
\sem{1} \eqdef 1 ;
\sem{A\to B} \eqdef \sem B^{\sem A} ;
\sem{A\times B} \eqdef \sem A\times\sem B ;
\sem{\SC A} \eqdef \SC\sem A ;
\sem{\TT A} \eqdef \TT\sem A .
$
Variable contexts $\Gamma = x_1 \colon A_1 \dots x_n \colon A_n$ are interpreted as usual as
$\sem\Gamma\defeq\sem{A_1}\times\dots\times \sem{A_n}$. Terms are interpreted
as morphisms $ \sem{\Gamma\vdash M:A} \colon \sem\Gamma \to \sem A $ of $\CC$.  When
the context and the type of a term $M$ are understood, then we simply write
$\sem M$ as a shorthand for $\sem{\Gamma\vdash M:A}$.  The interpretation of
term constants and the terms of the simply-typed $\lambda$-calculus is defined
in the usual way (details omitted). The interpretation of the monadic terms is
given by:
\begin{align*}
	\sem{\Gamma\vdash \xret M :\XX A} &= \eta^\XX_{\sem A}\circ\sem M \\
	\sem{\Gamma\vdash \iota M :\TT A} &= \iota_{\sem A}\circ\sem M \\
	\sem{\Gamma\vdash\xdo x M N \colon \XX B} &= \mu^\XX_{\sem B}\circ\XX\sem
	N\circ\tau^\XX_{\sem\Gamma,\sem A}\circ\pv{\iid}{\sem M}
\end{align*}
where we use $\XX$ to range over $\TT$ or its central submonad $\SC.$

\begin{definition}[Soundness and Completeness]
	An interpretation $\sem{-}$ of a $\CSC$-theory $\Th$ in a $\CSC$-model $\CC$
	is said to be \emph{sound} if for any type equality judgement $\vdash
	A=B:\typ$ in $\Th$, we have that $\sem A=\sem B$ in $\CC$, and for any
	equality judgement $\Gamma\vdash M=N :A$ in $\Th$, we have that
	$\sem{\Gamma\vdash M \colon A}=\sem{\Gamma\vdash N:A}$ in $\CC$. An
	interpretation $\sem{-}$ is said to be \emph{complete} when $\vdash A=B:\typ$
	iff $\sem A = \sem B$ and $\Gamma\vdash M=N \colon A$ iff $\sem{\Gamma\vdash
	M \colon A}=\sem{\Gamma\vdash N \colon A}.$ If, moreover, the interpretation
	is clear from context, then we may simply say that the model $\CC$ itself is
	sound and complete for the $\CSC$-theory $\Th$.
\end{definition}

\begin{remark}
	There are different definitions of what constitutes a ``model'' in the
	literature. For example, a ``model'' in \cite{crole1994} corresponds to a
	sound interpretation in our sense.
\end{remark}

\begin{example}
	A categorical model for the $\CSC$-theory $\Th_M$ of
	Example~\ref{ex:theory-writer} is given by the category $\Set$ together with
	the writer monad $\TT \eqdef (-\times M) \colon \Set \to \Set$ and the central
	submonad $\SC \eqdef (-\times S) \colon \Set \to \Set$.  More specifically, the
	monad data for $\TT$ is given by:
	\begin{align*}
		& \eta_A     \colon A \to A\times M :: a \mapsto (a,e) \\
		& \mu_A      \colon (A\times M)\times M \to A\times M :: ((a,c),c') \mapsto (a,m(c,c')) \\
		& \tau_{A,B} \colon A\times(B\times M) \to (A\times B)\times M ::\\
		& \qquad\quad  (a,(b,c)) \mapsto  ((a,b),c)
	\end{align*}
	and the monad data for $\SC$ is defined in the same way by (co)restricting to
	the submonoid $S$.  The interpretation of the term constants is given by:
	\begin{align*}
		\sem{\Gamma\vdash \tact:\TT 1} &\colon \sem\Gamma \to 1\times M :: \gamma \mapsto (*,c) \\
		\sem{\Gamma\vdash \zact:\SC 1} &\colon \sem\Gamma \to 1\times S :: \gamma \mapsto (*,c)
	\end{align*}
	This interpretation of the theory $\Th_M$ is sound and complete.
\end{example}

\subsection{Equivalence between Theories and Models}
\label{sub:equivalence}

Our final result in this chapter is to show that $\CSC$-theories and
$\CSC$-models are strongly related. To do this, we define the \emph{syntactic
$\CSC$-model} $S(\Th)$ of $\CSC$-theory $\Th$, and the \emph{internal language}
$L(\CC)$ that maps a $\CSC$-model $\CC$ to its internal language viewed as a
$\CSC$-theory.  These two assignments give rise to our desired equivalence
(Theorem \ref{th:internal-language}).

\paragraph{The Syntactic $\CSC$-model.}
\label{sub:syntactic-fun}
Assume throughout the subsection that we are given a $\mathrm{CSC}$-theory
$\Th$.  We show how to construct a sound and complete model $S(\Th)$ of $\Th$
by building its categorical data using the syntax provided by $\Th.$

\begin{definition}[Syntactic Category]
	Let $S(\Th)$ be the category whose objects are the types of $\Th$ modulo type
	equality, \emph{i.e.}, the objects are equivalence classes $[A]$ of types with $A'
	\in [A]$ iff $\vdash A' = A \colon \typ$ in $\Th.$ The morphisms $S(\Th)([A],
	[B])$ are equivalence classes of judgements $[x:A\vdash f:B],$ where $(x:A'
	\vdash f' \colon B') \in [x:A\vdash f \colon B]$ iff $\vdash A' = A \colon
	\typ$ and $\vdash B' = B \colon \typ$ and $x:A \vdash f = f' \colon B.$
	Identities are given by $[x \colon A \vdash x \colon A]$ and composition is
	defined by
	\[ [y \colon B \vdash g \colon C] \circ [x \colon A \vdash f \colon B'] = [x
	\colon A \vdash g[f/y] \colon C] , \]
	with $B' \in [B].$
\end{definition}

\begin{lemma}
	\label{lem:syntactic-category}
	The above definition is independent of the choice of representatives and the
	syntactic category $S(\Th)$ is a well-defined cartesian closed category.
\end{lemma}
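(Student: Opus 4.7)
The proof is a routine but careful syntactic argument in three stages: (i) well-definedness of the categorical data on equivalence classes; (ii) the category axioms; (iii) the cartesian closed structure. Throughout, the strategy is to reduce every claim to an application of a single rule from Figure~\ref{fig:moggi-rules} or Figure~\ref{fig:type-rules}, together with the type conversion rules of Figure~\ref{fig:variable-conversion-rule}.

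First I would verify that $S(\Th)$ is well-defined. The identities $[x:A \vdash x:A]$ are independent of the choice of representative in $[A]$ because the type conversion rule from Figure~\ref{fig:variable-conversion-rule} allows one to reassign the type $A$ of both the variable and the term to any equal type. For composition, I would show that if $(x:A \vdash f = f':B)$ and $(y:B \vdash g = g':C)$ are derivable, then so is $(x:A \vdash g[f/x] = g'[f'/x] : C)$; this follows from the $(subst)$ rule in Figure~\ref{fig:moggi-rules} applied twice, once to replace $f$ by $f'$ and once to replace $g$ by $g'$. Compatibility with type equality follows again by the conversion rules.

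Next I would check the category axioms. The unit laws reduce to the equality $x:A \vdash f[x/y] = f : B$, which is immediate from the definition of substitution (capture-avoiding substitution of a variable by itself is the identity), and symmetrically for $x[f/x] = f$. Associativity of composition is the syntactic equation $h[(g[f/x])/y] = (h[g/y])[f/x]$, which is the standard substitution lemma for the simply-typed $\lambda$-calculus and can be shown by induction on the structure of $h$ using Lemma~\ref{lem:type-subst} and Barendregt's convention.

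For the cartesian closed structure, I would take the terminal object to be $[1]$ with unique morphism $[x:A \vdash * : 1]$, whose uniqueness is exactly the rule $(1.\eta)$. The product $[A] \times [B]$ is $[A \times B]$ with projections $[z:A \times B \vdash \pi_i z : A_i]$; the pairing of $[x:C \vdash M:A]$ and $[x:C \vdash N:B]$ is $[x:C \vdash \pv M N : A \times B]$, and the two product diagrams commute by $(\times.\beta)$, while uniqueness is $(\times.\eta)$ together with $(\pv{}{}.eq)$. The exponential $[A] \Rightarrow [B]$ is $[A \to B]$ with evaluation $[z:(A\to B)\times A \vdash (\pi_1 z)(\pi_2 z) : B]$; the transpose of $[x:C\times A \vdash M:B]$ is $[y:C \vdash \lambda x^A . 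M[\pv{y}{x}/\,\text{--}] : A \to B]$ after the obvious context manipulation using products. The triangle equation follows from $(\lambda.\beta)$ combined with $(\times.\beta)$, while uniqueness of the transpose follows from $(\lambda.\eta)$ combined with $(\lambda.eq)$.

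The main obstacle, though minor, is bookkeeping: one must consistently treat variable contexts as (equivalence classes of) single-variable contexts $x:C$ rather than lists, which is justified since iterated products exist in the syntactic category and the formation rules of Figure~\ref{fig:grammars} are closed under weakening and permutation. Once this packaging is fixed, every clause becomes a direct appeal to a single equational rule.
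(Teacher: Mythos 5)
Your overall plan matches the paper's: check well-definedness of composition on equivalence classes, then the category axioms, then observe that the cartesian closed structure is the usual one for a syntactic category of a simply-typed $\lambda$-calculus (the paper in fact says much less than you do about stage (iii), simply citing it as standard).

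There is, however, one concrete misstep in stage (i), and it happens to be exactly the point to which the paper devotes its entire displayed derivation. You claim that independence of composition from the choice of representatives ``follows from the $(subst)$ rule applied twice, once to replace $f$ by $f'$ and once to replace $g$ by $g'$.'' Look at the rule: from $\Gamma \vdash M : A$ and $\Gamma, x:A \vdash N = P : B$ it concludes $\Gamma \vdash N[M/x] = P[M/x] : B$. This lets you vary the term being substituted \emph{into} (replacing $g$ by $g'$), but it gives you no way to vary the term being \emph{substituted} — there is no premise of the form $\Gamma \vdash M = M' : A$, so ``applying $(subst)$ a second time to replace $f$ by $f'$'' is not an instance of the rule. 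The needed congruence $x:A \vdash g[f/y] = g[f'/y] : C$ is derivable, but only by a detour: rewrite $g[f/y]$ as $(\lambda y^{B}.g)\,f$ via $(\lambda.\beta)$ and $(trans)$, use $(\lambda.eq)$ and $(app.eq)$ to change both the abstraction body and the argument simultaneously, and then apply $(\lambda.\beta)$ again. This is precisely the argument the paper spells out with its derivations $\Pi_1$ and $\Pi_2$. Your proof as written would stall at this step; with the $\beta$-trick substituted in, the rest of your argument (identities via the conversion rules, unit laws by triviality of $f[x/x]$, associativity by the substitution lemma, and the CCC structure via $(1.\eta)$, $(\times.\beta)$, $(\times.\eta)$, $(\lambda.\beta)$, $(\lambda.\eta)$ and the congruence rules) goes through and is, if anything, more explicit than the paper's.
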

\begin{proof}
	Suppose given two morphisms $f \colon  A\to B, g \colon B\to C$, and a choice
	$[x \colon A'\vdash f' \colon B'_f] = f$ and $[y \colon B'_g\vdash g' \colon
	C']= g$. Note that $B=[B'_f]=[B'_g]$, and in particular $y \colon B'_f\vdash
	g' \colon C'$ is derivable with $[y \colon B'_g\vdash g' \colon C']=[y \colon
	B'_f\vdash g' \colon C']$. Thus, $x \colon A'\vdash g'[f'/y]  \colon C'$ is
	derivable.  We then prove that the choice $[x \colon A'\vdash f' \colon B'_f]
	= f$ and $[y \colon B'_f\vdash g' \colon C']= g$ does not matter.  We
	consider now new term judgments for some terms $f''$ and $g''$ such that $[x
	\colon A'\vdash f' \colon B'_f] = [x \colon A''\vdash f'' \colon B''_f]$ and
	$[y \colon B'_f\vdash g' \colon C'] = [y \colon B''_f\vdash g'' \colon C'']$.
	By definition, $[A']=[A'']$, $[B'_f]=[B''_f]$ and $[C']=[C'']$, and we wish
	to prove that $[x \colon A'\vdash g'[f'/y]  \colon C'] = [x \colon A''\vdash
	g''[f''/y]  \colon C'']$.

	\[
		\begin{array}{l}
			\Pi_2 = \left\{
				\begin{array}{l}
					\infer[(\lambda.eq)]{
						x \colon A'\vdash (\lambda y^{B'_f}.g')f' =
						(\lambda y^{B''_f}.g'')f'' \colon C'
					}{
						\infer{
							x \colon A'\vdash  \lambda y^{B'_f}.g'
							= \lambda y^{B''_f}.g''  \colon 
							B'_f\to C'
						}{
							x \colon A,y \colon B'\vdash g'=g'' \colon C'
						}
						&
						x \colon A'\vdash f'=f'' \colon C'
					}
				\end{array} \right.
		\end{array}
	\]
	\[
		\begin{array}{l}
			\Pi_1 = \left\{
				\begin{array}{l}
					\infer[(trans)]{
						x \colon A'\vdash (\lambda y^{B'_f}.g')f' = g''[f''/y] \colon C'
					}{
						\Pi_2
						&
						\infer[(\lambda.\beta)]{
							x \colon A'\vdash (\lambda y^{B''_f}.g'')f'' =
							g''[f''/y] \colon C'
						}{
							x \colon A',y \colon B''_f \vdash g'' \colon C'
							&
							x \colon A' \vdash f'' \colon B''_f
						}
				} \end{array} \right.
		\end{array}
	\]
	\[
		\infer[(trans)]{
			x \colon A'\vdash g'[f'/y] = g''[f''/y] \colon C'
		}{
			\infer[(\lambda.\beta)]{
				x \colon A'\vdash g'[f'/y] = (\lambda y^{B'_f}.g')f' \colon C'
			}{
				x \colon A',y \colon B'_f\vdash g' \colon C'
				&
				x \colon A'\vdash f' \colon C'
			}
			&
			\Pi_1
		}
	\]

	Thus, it is safe to define $g\circ f$ as $[x \colon A'\vdash g'[f'/y]  \colon C']$.

	Given a choice of $A'$ in $[A]$, $[x \colon A'\vdash x \colon A']$ is the
	identity morphism for the type $[A]$. Considering $[x \colon A'\vdash
	f \colon B']$ and $[y \colon C'\vdash g \colon A']$, we have:
	\[
		[x \colon A'\vdash f \colon B']\circ [x \colon A'\vdash x \colon A']
		= [x \colon A' \vdash f[x/x] \colon B']
		= [x \colon A' \vdash f \colon B'],\]
	and
	\[ [x \colon A'\vdash x \colon A'] \circ [y \colon C'\vdash g \colon A']
	= [y \colon C'\vdash x[g/x] \colon A']
	= [y \colon C'\vdash g  \colon A']. \]
	One can notice that, for example, $x \colon A'\vdash f \colon B'$ has
	conveniently be chosen with the right type $A'$. It is
	authorised, because we have proven above that the choice of
	representative does not matter in composition matters.

	The cartesian closure is a usual result for a syntactic
	category from a simply-typed $\lambda$-calculus, and it is
	preserved in our context.
\end{proof}

\begin{remark}
	Note that by using \emph{Scott's trick} \cite{scott-trick} we can take
	quotients without having to go up higher in the class hierarchy, so foundational
	issues can be avoided.
\end{remark}

\begin{lemma}[\cite{awodey-bauer-lecture}]
	\label{lem:strong-syntactic}
	The following assignments:
	\begin{align*}
		\TT([A])   &= [\TT A] \\
		\TT( [x \colon A \vdash f \colon B] ) &= [y \colon \TT A \vdash \tdo x y {\tret f} \colon \TT B] \\
		\eta_{[A]}     &= [ x \colon A\vdash \tret x \colon \TT A ] \\
		\mu_{[A]}      &= [ x \colon \TT\TT A \vdash \tdo y x y \colon \TT A ] \\
		\tau_{[A],[B]} &= [ x \colon A\times \TT B \vdash 
		\tdo{y}{\pi_2 x}{\tret\pv{\pi_1 x}{y}} \colon \TT(A\times B) ]
	\end{align*}
	are independent of the choice of representatives and define a strong monad $(\TT, \eta, \mu, \tau)$ on $S(\Th)$.
\end{lemma}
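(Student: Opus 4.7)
The strategy is to verify each required property by syntactic calculation in the equational theory, relying almost exclusively on the rules from Figures~\ref{fig:moggi-rules} and~\ref{fig:rules} (instantiated with $\XX = \TT$). I will proceed in five stages.

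First, I would establish well-definedness of each assignment, \emph{i.e.}~independence of the choice of representatives. For objects this is immediate from the rule $\vdash \TT A = \TT B:\typ$ whenever $\vdash A = B:\typ$ (Figure~\ref{fig:type-rules}). For morphisms, if $x:A\vdash f = f':B$ is derivable, then $(ret.eq)$ gives $\tret f = \tret{f'}$, and $(do.eq)$ propagates this through the $\tdo{-}{-}{-}$ constructor; a similar argument handles the assignments defining $\eta$, $\mu$ and $\tau$. Second, I would verify functoriality of $\TT$: preservation of identities reduces, after a $(\TT.\beta)$ step, to the rule $(\TT.\eta)$; preservation of composition is an application of $(\TT.\text{assoc})$ followed by a $(\TT.\beta)$ to simplify the nested $\tret$'s introduced by the definition on morphisms.

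Third, I would check naturality of $\eta$, $\mu$ and $\tau$. For $\eta$, naturality follows from a single $(\TT.\beta)$ reduction after unfolding $\TT(f)\circ\eta$. For $\mu$, one side unfolds via $(\TT.\beta)$ and $(\TT.\text{assoc})$; it reduces to comparing $\tdo{y}{x}{\tdo{z}{y}{\tret{f(z)}}}$ with the corresponding re-associated form on the other side, which is exactly an instance of $(\TT.\text{assoc})$. Naturality of $\tau$ is similar but more bureaucratic, since it must be checked against both the left and the right components of the product; the only new ingredient beyond $(\TT.\beta)$ and $(\TT.\text{assoc})$ is the projection equations from the $\lambda$-calculus fragment. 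Fourth, I would verify the monad laws (left/right unit and associativity of $\mu$): these are direct consequences of $(\TT.\beta)$, $(\TT.\eta)$ and $(\TT.\text{assoc})$, combined with $\alpha/\eta$-conversion for bound variables.

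Fifth, I would verify the four strength coherence diagrams of Definition~\ref{def:strong-monad}. Two of them (compatibility with $\lambda$ and with $\eta$) reduce after a $(\TT.\beta)$ to an equation of the simply-typed $\lambda$-calculus, hence hold by the rules of Figure~\ref{fig:moggi-rules}. The other two (compatibility with the associator $\alpha$ of $(\CC,\times)$ and with $\mu$) require chained applications of $(\TT.\text{assoc})$ together with the projection/pairing equations $(\times.\beta)$, $(\times.\eta)$, and this is where the bookkeeping becomes heaviest. The main obstacle I anticipate is precisely this: each side of the pentagon expands into a deeply nested $\tdo$-term involving projections $\pi_i$ applied to freshly bound variables, and one must carefully choose fresh names (Barendregt's convention) and apply $(\TT.\text{assoc})$ in a suitable order to bring both sides to a common normal form. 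No rule beyond those already cited is needed; the difficulty is purely combinatorial. Note that the rules $(\iota\SC.\text{ret})$, $(\iota\SC.\text{comp})$ and $(\SC.\text{central})$ play no role here, which is consistent with the lemma asserting only the strong monad structure on $\TT$ (the submonad structure of $\SC$ and the centrality condition will be treated separately).
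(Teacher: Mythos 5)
The paper does not actually prove this lemma: it is stated with a citation to the Awodey--Bauer lecture notes and no proof is given in the text (unlike the companion Lemma~\ref{lem:syntactic-central-submonad}, which is proved in full by exactly the kind of equational calculation you describe). So there is nothing in the paper to diverge from; your plan fills a gap the authors chose to outsource, and it is the right plan. The five stages are the standard ones and the rule attributions are essentially correct: well-definedness follows from $(ret.eq)$, $(do.eq)$ and the type-conversion rules; functoriality, naturality, the monad laws and the strength axioms all come down to $(\TT.\beta)$, $(\TT.\eta)$, $(\TT.assoc)$ and the product equations, exactly as in the paper's explicit proof of the analogous facts for $\SC$ and $\iota$. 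Two small points. First, preservation of identities is a single application of $(\TT.\eta)$ to $\tdo{x}{y}{\tret x}$ --- no $(\TT.\beta)$ step is involved. Second, for independence of representatives on morphisms you only discuss varying the term $f$ within its equivalence class; a representative may also carry provably equal but syntactically distinct types $A'$, $B'$, and this is where the type-conversion rules of Figure~\ref{fig:variable-conversion-rule} (which you invoke only for objects) are needed, together with the congruence rule $\vdash \TT A = \TT A' \colon \typ$. Neither point affects the soundness of the argument.
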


\begin{lemma}\label{lem:syntactic-central-submonad}
	In a similar way to Lemma \ref{lem:syntactic-category}, we can define a
	strong monad $(\SC, \eta^\SC, \mu^\SC, \tau^\SC)$ on $S(\Th)$ by using the
	corresponding monadic primitives.  Then, the assignment:
	\[
		\iota_{[A]} = [x \colon \SC A\vdash \iota x \colon \TT A]
	\]
	is independent of the choice of representative and gives a strong submonad
	monomorphism $\iota \colon \SC \naturalto \TT$ that makes $\SC$ a central submonad
	of $\TT.$
\end{lemma}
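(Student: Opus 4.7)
The plan is to mirror the proof of Lemma \ref{lem:strong-syntactic} by transporting the $\TT$-primitives to the $\SC$-primitives, then use the rules $(\iota.\mathrm{mono})$, $(\iota\SC.\mathrm{ret})$, $(\iota\SC.\mathrm{comp})$ and $(\SC.\mathrm{central})$ to establish that $\iota$ is a central submonad morphism. I would proceed in four groups of steps.

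\smallskip
\noindent\textbf{Step 1: The monad $\SC$ on $S(\Th)$.} I would define the functorial action and the monadic structure of $\SC$ on $S(\Th)$ by copying the formulas of Lemma \ref{lem:strong-syntactic} verbatim, replacing $\TT$, $\tret{\cdot}$, and $\tdo{x}{\cdot}{\cdot}$ with $\SC$, $\sret{\cdot}$, and $\sdo{x}{\cdot}{\cdot}$. Well-definedness on objects is immediate from the rule $\vdash \SC A = \SC B$ whenever $\vdash A = B$, and independence on morphisms follows from $(\SC.\beta)$, $(\SC.\eta)$ and the congruence rules $(ret.eq)$, $(do.eq)$. The verification that this assignment preserves identities and composition, and that $\eta^\SC$, $\mu^\SC$, $\tau^\SC$ are natural and satisfy the monad and strength axioms, is word-for-word the same as in Lemma \ref{lem:strong-syntactic}: all the required equalities (unit, associativity, strength coherence) are instances of the $\SC$-versions of the rules in Figure~\ref{fig:rules}, which are explicitly stated with $\XX$ ranging over $\set{\SC,\TT}$.

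\smallskip
\noindent\textbf{Step 2: Well-definedness and naturality of $\iota$.} The assignment $\iota_{[A]} = [x \colon \SC A \vdash \iota x \colon \TT A]$ is independent of the representative of $[A]$ by the type-conversion rules of Figure~\ref{fig:variable-conversion-rule}. For naturality, given a morphism $[x \colon A \vdash g \colon B]$, I must show $[y \colon \SC A \vdash \iota(\sdo{x}{y}{\sret g}) \colon \TT B] = [y \colon \SC A \vdash \tdo{x}{\iota y}{\tret g} \colon \TT B]$. This chain of equalities is
\[
  \iota(\sdo{x}{y}{\sret g}) \;=\; \tdo{x}{\iota y}{\iota(\sret g)} \;=\; \tdo{x}{\iota y}{\tret g}
\]
by $(\iota\SC.\mathrm{comp})$ followed by $(\iota\SC.\mathrm{ret})$.

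\smallskip
\noindent\textbf{Step 3: Strong monad morphism and monomorphism.} The three diagrams of Definition~\ref{def:morph-monad} reduce to the following term equalities. The unit diagram is $\iota(\sret x) = \tret x$, which is $(\iota\SC.\mathrm{ret})$. The multiplication diagram asks that $\iota(\sdo{y}{x}{y})$ equals the term obtained by composing $\iota_{\SC[A]}$, $\TT(\iota_{[A]})$ and $\mu^\TT_{[A]}$; unfolding the Kleisli composition and simplifying via $(\TT.\mathrm{assoc})$ and $(\TT.\beta)$ reduces the right-hand side to $\tdo{y}{\iota x}{\iota y}$, which equals the left-hand side by one application of $(\iota\SC.\mathrm{comp})$. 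The strength diagram is handled analogously, using the same two rules together with $(do.eq)$. Finally, the rule $(\iota.\mathrm{mono})$ gives that each $\iota_{[A]}$ is a monomorphism in $S(\Th)$, and since morphisms of strong monads are natural transformations, this is enough to show $\iota$ is a strong submonad morphism in the sense of Section~\ref{sub:premonoidal}.

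\smallskip
\noindent\textbf{Step 4: Centrality.} I would verify condition (1) of Theorem~\ref{th:centrality}, namely that $(\SC[A],\iota_{[A]})$ is a central cone of $\TT$ at $[A]$ for every $[A]$. Unfolding the central cone diagram of Definition~\ref{def:central-cone} in the Kleisli syntactic category, what has to be shown is precisely (up to applying $(\TT.\beta)$ and $(\TT.\mathrm{assoc})$ to reduce the two $\tau$/$\tau'$-paths to pure $\mathtt{do}$-notation) the rule $(\SC.\mathrm{central})$. I expect the main obstacle, and the step that requires most bookkeeping, to be this last translation between the diagrammatic condition of Definition~\ref{def:central-cone} and the syntactic rule $(\SC.\mathrm{central})$: one must carefully unfold both $\tau$ and $\tau'$ (recall that $\tau'$ involves the symmetry $\gamma$) into their syntactic formulas from Lemma~\ref{lem:strong-syntactic}, and then use monadic associativity and $\beta$ to eliminate the intermediate $\mathtt{ret}$s until the equation is exactly an instance of $(\SC.\mathrm{central})$. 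Once this identification is made, conditions (1)--(3) of Theorem~\ref{th:centrality} all apply and $\SC$ is a central submonad of $\TT$ in $S(\Th)$.
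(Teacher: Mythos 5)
Your proposal is correct and follows essentially the same route as the paper's proof: the submonad-morphism diagrams are discharged by exactly the computations you describe (unit via $(\iota\SC.ret)$, multiplication via $(\TT.assoc)$, $(\TT.\beta)$, $(\iota\SC.comp)$, strength via $(\iota\SC.comp)$ and $(\iota\SC.ret)$), the monomorphism property comes from $(\iota.mono)$, and centrality is established by unfolding the central-cone condition into the two double-strength composites and reducing to $(\SC.central)$ with $(\TT.assoc)$ and $(\TT.\beta)$, just as in the paper.
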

\begin{proof}
	In all the following proofs, we consider convenient members of equivalence
	classes, because the choice of representative does not change the result,
	thanks to Lemma~\ref{lem:syntactic-category}.

	We prove that $\iota$ is a submonad morphism:

	$\begin{array}{cl}
		&\iota_A\circ\eta^\SC_A \\
		\stackrel{def.}{=}
		& [y  \colon \SC A\vdash \iota y \colon \TT A] \circ [x \colon A\vdash \zret x \colon \SC A] \\
		\stackrel{comp.}{=}
		& [x \colon A\vdash \iota~\zret x \colon \TT A] \\
		\stackrel{(\iota\SC.ret)}{=} & [x \colon A\vdash \tret x  \colon \TT A] \\
		\stackrel{def.}{=} & \eta^\TT_A
	\end{array}$
		\\[3.5ex]
	$\begin{array}{cl}
		&\mu^\TT_A\circ\TT\iota_A\circ\iota_{\SC A} \\
		\stackrel{def.}{=}
		& [z \colon \TT\TT A\vdash \tdo{y}{z}{y} \colon \TT A] \\
		&\circ~[y' \colon \TT\SC
		A\vdash \tdo{x}{y'}{\tret\iota x} \colon \TT\TT A] \circ [x' \colon \SC\SC
		A \vdash \iota x' \colon \TT\SC A] \\
		\stackrel{comp.}{=}
		& [x' \colon \SC\SC A\vdash\tdo{y}{\left(\tdo{x}{\iota x'}{\tret \iota
		x}\right)}{y} \colon \TT A] \\
		\stackrel{(\TT.assoc)}{=} & [x' \colon \SC\SC A\vdash \tdo{x}{\iota
		x'}{\tdo{y}{\tret \iota x}{y}} \colon \TT A] \\
		\stackrel{(\TT.\beta)}{=} & [x' \colon \SC\SC A\vdash \tdo{x}{\iota
		x'}{\iota x} \colon \TT A] \\
		\stackrel{(\iota\SC.comp)}{=} & [x' \colon \SC\SC A\vdash
		\iota~\zdo{x}{x'}{x} \colon \TT A] \\
		\stackrel{comp.}{=}
		& [y \colon \SC A\vdash \iota y  \colon \TT A] \circ [x' \colon \SC\SC A\vdash
		\zdo{x}{x'}{x} \colon \SC A] \\
		\stackrel{def.}{=} & \iota_A\circ\mu^\SC_A
	\end{array}$
		\\[3.5ex]
	$\begin{array}{cl}
		&\iota_{A\times B}\circ\tau^\SC_{A,B} \\
		\stackrel{def.}{=}
		& [x \colon \SC(A\times B)\vdash \iota x \colon \TT(A\times B)] \\
		&\circ~
		[z \colon A\times\SC B \vdash \zdo{y}{\pi_2 z}{\zret\pv{\pi_1 z}{y}}  \colon 
		\SC(A\times B) ] \\
		\stackrel{comp.}{=} & [z  \colon A\times\SC B \vdash
		\iota\left(\zdo{y}{\pi_2 z}{\zret\pv{\pi_1
		z}{y}}\right) \colon \TT(A\times B)] \\
		\stackrel{(\iota\SC.comp)}{=} & [z  \colon A\times\SC
		B\vdash\tdo{y}{\iota~\pi_2 z}{\iota~\zret\pv{\pi_1
		z}{y}} \colon \TT(A\times B)] \\
		\stackrel{(\iota\SC.ret)}{=} & [z  \colon A\times\SC
		B\vdash\tdo{y}{\iota~\pi_2 z}{\tret\pv{\pi_1 z}{y}} \colon \TT(A\times
		B)] \\
		\stackrel{(\times.\beta)}{=} & [z  \colon A\times\SC
		B\vdash\tdo{y}{\pi_2\pv{\pi_1 z}{\iota \pi_2 z}}{\tret\pv{\pi_1
		\pv{\pi_1 z}{\iota \pi_2 z}}{y}} \colon \TT(A\times B)] \\
		\stackrel{comp.}{=} & [x \colon A\times\TT B \vdash \tdo{y}{\pi_2
		x}{\tret\pv{\pi_1 x}{y}} \colon \TT(A\times B)] \\
		&\circ~ [z  \colon A\times\SC
		B\vdash \pv{\pi_1 z}{\iota \pi_2 z} \colon A\times \TT B] \\
		\stackrel{def.}{=} & \tau^\TT_{A,B}\circ(A\times\iota_B)
	\end{array}$

	Moreover, $\iota$ is a monomorphism because of the $(\iota.mono)$ rule.

	Finally,
	$\ZZ$ is a central submonad of $\TT$:
	\[\begin{array}{cl}
		& dst_{A,B}\circ(\iota\times\TT B) \\
		\stackrel{def.+comp.}{=} & [z \colon \SC A\times\TT B\vdash \\
		&\hspace{-.5cm} \tdo{x}{\left(\tdo{y}{\iota~\pi_1 z}{\tret\left(\tdo{y'}{\pi_2
		z}{\tret\pv{y}{y'}}\right)}\right)}{x}  \colon \TT(A\times B)] \\
		\stackrel{(\TT.assoc)}{=} & [z \colon \SC A\times\TT
		B\vdash \\
		&\hspace{-.5cm} \tdo{y}{\iota~\pi_1 z}{\tdo{x}{\tret\left(\tdo{y'}{\pi_2
		z}{\tret\pv{y}{y'}}\right)}{x}} \colon \TT(A\times B)] \\
		\stackrel{(\TT.\beta)}{=} & [z \colon \SC A\times\TT
		B\vdash\tdo{y}{\iota~\pi_1 z}{\tdo{y'}{\pi_2
		z}{\tret\pv{y}{y'}}} \colon  \TT(A\times B)] \\
		\stackrel{(\SC.central)}{=} & [z \colon \SC A\times\TT
		B\vdash\tdo{y'}{\pi_2 z}{\tdo{y}{\iota~\pi_1
		z}{\tret\pv{y}{y'}}} \colon  \TT(A\times B)] \\
		\stackrel{(\TT.\beta)}{=} & [z \colon \SC A\times\TT
		B\vdash \\
		&\hspace{-.5cm} \tdo{y'}{\pi_2 z}{\tdo{x}{\tret\left(\tdo{y}{\iota~\pi_1
		z}{\tret\pv{y}{y'}}\right)}{x}} \colon \TT(A\times B)] \\
		\stackrel{(\TT.assoc)}{=} & [z \colon \SC A\times\TT
		B\vdash \\
		&\hspace{-.5cm} \tdo{x}{\left(\tdo{y'}{\pi_2
		z}{\tret\left(\tdo{y}{\iota~\pi_1
		z}{\tret\pv{y}{y'}}\right)}\right)}{x}  \colon \TT(A\times B)] \\
		\stackrel{comp.+def.}{=} & dst'_{A,B}\circ(\iota\times\TT B)
	\end{array}\]
\end{proof}

Now we can prove our completeness result.

\begin{theorem}[Completeness]
	\label{th:completeness}
	The quadruple $(S(\Th), \TT, \SC, \iota)$ is a sound and complete $\CSC$-model
	for the $\CSC$-theory $\Th$.
\end{theorem}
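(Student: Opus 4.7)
The plan is to reduce both soundness and completeness to a single structural lemma stating that the denotational interpretation in $S(\Th)$ is essentially a syntactic identity map. Concretely, I would first show that for every derivable judgement $\Gamma \vdash M \colon A$ with $\Gamma = x_1 \colon A_1, \dots, x_n \colon A_n$, the morphism $\sem{\Gamma \vdash M \colon A} \colon \sem{\Gamma} \to \sem A$ in $S(\Th)$ is exactly the equivalence class $[z \colon A_1 \times \cdots \times A_n \vdash M[\pi_1 z / x_1, \dots, \pi_n z / x_n] \colon A]$. Granting this lemma, both parts of the theorem fall out almost immediately: if $\Gamma \vdash M = N \colon A$ is derivable, the equivalence relation defining the morphisms of $S(\Th)$ forces $\sem M = \sem N$ (soundness), and conversely if $\sem M = \sem N$, then the two substituted terms are provably equal in $\Th$, from which an application of $(\lambda.\beta)$ repeatedly (or equivalently the $\pi_i$ projection rules) recovers $\Gamma \vdash M = N \colon A$ (completeness). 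Type equality is handled similarly, since $\sem A = [A]$ by construction.

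The proof of the structural lemma is by induction on the typing derivation of $M$. The base cases -- variables, $*$, and constants -- are handled by the definition of identity, terminal, and the product projections in the syntactic category, combined with the unit/eta laws for products. For the simply-typed connectives ($\lambda$-abstraction, application, pairing, projections), the computation reduces, via the cartesian closed structure described in Lemma~\ref{lem:syntactic-category}, to standard substitution manipulations governed by $(\lambda.\beta)$, $(\lambda.\eta)$, $(\times.\beta)$ and $(\times.\eta)$. For the monadic connectives $\mathtt{ret}_\XX$, $\iota$, and $\mathtt{do}_\XX$, I would invoke the explicit formulas for $\eta^\XX$, $\mu^\XX$, $\tau^\XX$ and $\iota$ given in Lemmas~\ref{lem:strong-syntactic} and~\ref{lem:syntactic-central-submonad} and unwind Kleisli composition in $S(\Th)$; the resulting term is then equal to the original one using the axioms $(\XX.\beta)$, $(\XX.\eta)$, $(\XX.\text{assoc})$, $(\iota\SC.\text{ret})$ and $(\iota\SC.\text{comp})$.

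Soundness then amounts to checking, one rule at a time in Figure~\ref{fig:rules} and Figure~\ref{fig:moggi-rules}, that each equational axiom forces equality of the interpretations. Most of these are by definition since the morphisms of $S(\Th)$ are equivalence classes modulo $\vdash$-provable equality; the only non-trivial axioms to check are those which involve the monad structure explicitly, and those were already verified in Lemma~\ref{lem:syntactic-central-submonad} in the form needed (in particular, $(\SC.\text{central})$ is exactly the condition needed to make $\iota$ a central submonad morphism, which was established there). Completeness follows because the structural lemma shows that $\sem M = \sem N$ in $S(\Th)$ unpacks directly to the existence of a derivation of $\Gamma \vdash M = N \colon A$ in $\Th$.

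The main obstacle is bookkeeping rather than mathematical depth: the interpretation is defined in terms of strength, multiplication and pairing, so one must massage the resulting Kleisli compositions (which involve $\tdo x y {\tret f}$-style expressions) back into the ``naked'' term $M$ using the $\beta$ and $\eta$ laws for both monads and for products. The $(\SC.central)$ equation plays a subtle role here: it is what ensures that reorderings of $\mathtt{do}_\TT$ applied to an $\iota$-ed central subterm --- which are invisible at the categorical level because $\iota$ is central --- remain derivable at the syntactic level. I expect the bulk of the chapter's remaining text to consist of these routine, but carefully indexed, substitution calculations.
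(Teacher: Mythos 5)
Your proposal is correct and follows essentially the same route as the paper: the paper's own proof is a one-line appeal to the ``obvious'' interpretation of $\Th$ into $S(\Th)$ plus ``standard arguments,'' and your structural lemma (the interpretation of $\Gamma \vdash M \colon A$ is the class of $M$ with the context packaged into a product via projections) is precisely the standard argument being invoked, resting on the same preceding lemmas about the syntactic category and its (sub)monad structure. The only nitpick is that recovering $\Gamma \vdash M = N \colon A$ from the substituted equality uses $(subst)$ together with $(\times.\beta)$ rather than $(\lambda.\beta)$, but this does not affect correctness.
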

\begin{proof}
	There exists an (obvious) interpretation $\sem{-}$ of $\Th$ into $S(\Th)$
	which follows the structure outlined in \secref{sub:denotational}.  Standard
	arguments then show that $\Gamma\vdash M = N \colon A$ in $\Th$ iff
	$\sem{\Gamma\vdash M \colon A} = \sem{\Gamma\vdash N \colon A}$ in $S(\Th)$.
\end{proof}

\begin{remark}
	Note that the obvious canonical interpretation of $\Th$ in $S(\Th)$ is
	initial as one may expect: any sound interpretation of $\Th$ in a
	$\CSC$-model $\CC$ factorises uniquely through the canonical interpretation
	via a $\CSC$-model morphism.
\end{remark}

\paragraph{Internal Language.}
\label{sub:internal-language}
With completeness proven, we now wish to establish an internal language result.

\begin{definition}[Internal Language]
	\label{def:language}
	Given a $\CSC$-model $\CC$, we define a $\CSC$-theory $L(\CC)$
	as follows:
	\begin{itemize}
		\item For each object $A$ of $\CC$ we add a ground type which we name $A^*$.
		\item Every ground type $A^*$ is interpreted in $\CC$ by setting $\sem{A^*}
			\eqdef A.$ This uniquely determines an interpretation on all types.
		\item If $A$ and $B$ are two (not necessarily ground) types, we add a type
			equality $\vdash A = B  \colon \typ$ iff $\sem A = \sem B$.
		\item For every morphism $f \colon A \to B$ in $\CC$, we add a term constant
			$\vdash c_f \colon A^* \to B^*.$ Its interpretation in $\CC$ is defined to be
			$\sem{c_f} \eqdef \mathrm{curry}(f \circ \cong) \colon 1 \to B^A$, \emph{i.e.}, it is
			defined by currying the morphism $f$ in the obvious way. This uniquely
			determines an interpretation on all well-formed terms.
		\item New term equality axioms $\Gamma\vdash M=N \colon B$ iff $\sem{\Gamma \vdash
			M \colon B} = \sem{\Gamma \vdash N \colon B}.$
	\end{itemize}
\end{definition}

\begin{theorem}
	For any $\CSC$-model $\CC$ the above definition gives a well-defined $\CSC$-theory
	$L(\CC).$ Moreover, the model $\CC$ is sound and complete for $L(\CC).$
\end{theorem}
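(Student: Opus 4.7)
The proof will proceed in two stages: first establishing that $L(\CC)$ is a well-defined $\CSC$-theory with a coherent interpretation into $\CC$, and then deducing soundness and completeness essentially from the construction.

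For the well-definedness, the plan is to extend the interpretation $\sem{-}$ from the ground types $A^*$ (with $\sem{A^*} \eqdef A$) to all types by induction on their grammar, using the $\CSC$-model structure of $\CC$ (the cartesian closed structure, the strong monad $\TT$ and its central submonad $\SC^\TT$). This uniquely determines $\sem{-}$ on types. Likewise, the interpretation on terms extends uniquely from the constants $c_f$ by the semantic clauses of \secref{sub:denotational}. What must then be checked is that the added axiom schemes ($\vdash A = B \colon \typ$ whenever $\sem A = \sem B$, and $\Gamma \vdash M = N \colon A$ whenever $\sem{M}=\sem{N}$) are compatible with the rules of $\CSC$-theories in Figures~\ref{fig:type-rules}, \ref{fig:variable-conversion-rule}, \ref{fig:moggi-rules}, and \ref{fig:rules}. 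This amounts to observing: (i) semantic equality is an equivalence relation, so reflexivity/symmetry/transitivity hold; (ii) the categorical constructors $(\times), (\to), \TT, \SC$ are functorial, so congruence rules preserve semantic equality; (iii) all derivable rules of $\CSC$ (including $\beta$, $\eta$, associativity of $\mathtt{do}$, $(\SC.\text{central})$, $(\iota.\text{mono})$, $(\iota\SC.\text{ret})$, $(\iota\SC.\text{comp})$) are validated by any $\CSC$-model by definition of the semantic clauses and the fact that $\iota$ is a central submonad morphism. Therefore the resulting set of judgements is closed under the rules, making $L(\CC)$ a genuine $\CSC$-theory.

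For soundness, the plan is to verify by rule induction that every derivable equality $\vdash A = B \colon \typ$ in $L(\CC)$ yields $\sem A = \sem B$, and every derivable $\Gamma \vdash M = N \colon A$ yields $\sem{\Gamma \vdash M \colon A} = \sem{\Gamma \vdash N \colon A}$. The base cases are the added axioms, which are sound by construction; the inductive cases cover the standard $\CSC$ rules and are handled exactly as in the general soundness argument for $\CSC$-models, relying on functoriality of the interpretation and the categorical laws of strong monads, central submonads, and cartesian closure.

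For completeness, the argument is immediate and constitutes the elegance of the internal language construction: by the very definition of $L(\CC)$, whenever $\sem A = \sem B$ in $\CC$ we added $\vdash A = B \colon \typ$ as an axiom, and whenever $\sem{M} = \sem{N}$ we added the corresponding term equality. Hence the converses of the soundness implications hold by definition. The only real work is therefore the well-definedness check, and the main (mild) obstacle I anticipate is the bookkeeping required to confirm that adding semantic equalities as axioms does not make the theory inconsistent in any way that breaks the uniqueness of the inductive interpretation --- but since the $\CSC$ rules themselves are already sound in $\CC$, no such conflict can arise.
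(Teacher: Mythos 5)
Your proposal is correct and follows essentially the same route as the paper: well-definedness is established by induction using the fact that the semantic interpretation of \secref{sub:denotational} is always sound in a $\CSC$-model, and completeness is immediate from the last clauses of Definition~\ref{def:language}, which add a (type or term) equality axiom exactly when the corresponding denotations coincide. The paper's proof is just a terser statement of the same argument.
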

\begin{proof}
	Well-definedness is straightforward and follows by a simple induction
	argument using the fact that the semantic interpretation $\sem{-}$ defined
	in \secref{sub:denotational} is always sound.  Completeness is then
	immediate by the last condition in Definition~\ref{def:language}.
\end{proof}

\paragraph{Equivalence Theorem.}
Finally, we show that both the construction of the syntactic category and the
assignment of the internal language give rise to appropriate equivalences.

\begin{theorem}
	\label{th:internal-language}
	The relationship between the internal language and the syntactic model enjoys
	the following properties in the 2-categories $\Mod(\CSC)$ and $\TR(\CSC)$,
	respectively:
	\begin{enumerate}
		\item For any $\CSC$-model $\CC$, we have that $\CC \simeq SL(\CC)$,
		\emph{i.e.}, there exist $\CSC$-model morphisms $F \colon \CC \to SL(\CC)$
	and $G \colon  SL(\CC) \to \CC$ such that $F \circ G \cong \Id$ and $\Id
\cong G \circ F.$ \item For any $\CSC$-theory $\Th$, we have that $\Th \simeq
	LS(\Th)$, \emph{i.e.}, there exist $\CSC$-translations $V \colon \Th \to
			LS(\Th)$ and $W \colon  LS(\Th) \to \Th$ such that $V \circ W \cong \Id$
			and $\Id \cong W \circ V.$
	\end{enumerate}
\end{theorem}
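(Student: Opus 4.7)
The plan is to construct the functors and translations in each part explicitly, verify that they preserve the required structure strictly, and then exhibit pseudo-inverse relationships via explicit $2$-cells.

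For part (1), I would define $G \colon SL(\CC) \to \CC$ on objects by $G([A]) = \sem{A}$ and on morphisms by $G([x \colon A \vdash M \colon B]) = \widehat{\sem{x \colon A \vdash M \colon B}}$, where the hat denotes the canonical transpose $\sem{A} \to \sem{B}$ of a morphism whose source is the singleton context. Well-definedness is immediate from the last two clauses of Definition~\ref{def:language}: equality of types (resp.\ terms) in $L(\CC)$ is defined as semantic equality, so representatives wash out.  In the reverse direction, $F \colon \CC \to SL(\CC)$ sends an object $A$ to the class $[A^*]$ of the ground type added in $L(\CC)$ for $A$, and sends $f \colon A \to B$ to the class $[x \colon A^* \vdash c_f\,x \colon B^*]$. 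That both assignments are strict CSC-model morphisms uses Lemmas~\ref{lem:syntactic-category}, \ref{lem:strong-syntactic}, and \ref{lem:syntactic-central-submonad}, which show that the monad, strength, multiplication, and the submonad inclusion $\iota$ on $SL(\CC)$ are all defined via the corresponding syntactic constructors, which $G$ unfolds to the honest monad operations on $\CC$, and conversely for $F$.

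To see that $G \circ F = \Id_\CC$, observe that on objects $G(F(A)) = \sem{A^*} = A$ by definition of $\sem{\cdot}$ in $L(\CC)$, and on morphisms $G(F(f)) = \widehat{\sem{c_f\,x}} = f$ by Definition~\ref{def:language} and the semantics of application. The composite $F \circ G$ is only naturally isomorphic to $\Id_{SL(\CC)}$: on an object $[A]$ we have $F(G([A])) = [\sem{A}^*]$, and this equals $[A]$ because $\vdash A = \sem{A}^* \colon \typ$ holds in $L(\CC)$ precisely when $\sem{A} = \sem{\sem{A}^*}$, which is automatic.  Hence the $2$-cell is the family of identity terms $x \colon A \vdash x \colon \sem{A}^*$, and the naturality square reduces to term equality in $L(\CC)$, checked by evaluating both sides in $\CC$.

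For part (2), the translation $V \colon \Th \to LS(\Th)$ sends each ground type $G$ of $\Th$ to the ground type $[G]^*$ of $LS(\Th)$ (extended structurally to compound types by the required clauses $V(\XX A) = \XX V(A)$, $V(A \times B) = V(A) \times V(B)$, etc.); it sends each term constant $c$ of $\Th$ of type $A$ to the closed term obtained by instantiating the corresponding constant $c_{[y : 1 \vdash c \colon A]}$ of $LS(\Th)$ at $\ast$. The inverse translation $W \colon LS(\Th) \to \Th$ sends each ground type $[A]^*$ of $LS(\Th)$ to a fixed choice of representative $A$ of the class $[A]$ in $S(\Th)$ (extended structurally), and sends each constant $c_{[x:A \vdash g : B]}$ to $\lambda x^A . g$. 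Both translations preserve all type and term equalities because the relations defining them in $L(S(\Th))$ are semantic equalities in $S(\Th)$, which by Theorem~\ref{th:completeness} coincide with provable equalities in $\Th$.  The natural isomorphisms $W \circ V \cong \Id_\Th$ and $V \circ W \cong \Id_{LS(\Th)}$ are given by identity term judgements $x \colon WV(A) \vdash x \colon A$ and $x \colon VW(A) \vdash x \colon A$, which are valid because the type equalities $\vdash WV(A) = A \colon \typ$ and $\vdash VW(A) = A \colon \typ$ are provable: the first by induction on types of $\Th$, the second by the defining clause of type equality in the internal language.

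The main obstacle will be the careful bookkeeping of representatives when defining $W$ and verifying the translation-transformation condition of Definition~\ref{def:trans-trans}: one must check, by induction on the formation rules in Figure~\ref{fig:grammars}, that for every $x \colon A \vdash f \colon B$ in $\Th$ the equation $x \colon WV(A) \vdash \alpha_B[WV(f)/x] = f[\alpha_A/x] \colon B$ is derivable. Although this reduces to inserting identities, each case (products, exponentials, and especially the monadic constructors $\mathtt{ret}_\XX$, $\iota$, and $\mathtt{do}_\XX$) must be matched against the corresponding equations in Figure~\ref{fig:rules} and the strictness clauses governing $V$ and $W$; the monadic cases require the rules $(\XX.\beta)$, $(\iota\SC.\mathrm{ret})$, and $(\iota\SC.\mathrm{comp})$ to collapse the composed syntactic-then-semantic detour back to the original term.
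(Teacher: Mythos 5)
Your construction is essentially identical to the paper's own proof: the same $F$, $G$ (with $F(A)=[A^*]$, $G([A])=\sem{A}$), the same $V$, $W$ on types and constants, and the same identity-term $2$-cells justified by the type equalities built into the internal language and the syntactic model. The only differences are presentational — you define $V$ and $W$ on ground types and extend structurally, which is if anything a more careful reading of the definition of $\CSC$-translation than the paper's shorthand, and you note (correctly) that $F\circ G$ ends up being an equality rather than merely an isomorphism.
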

\begin{proof}
	Given $\CC$ an object of $\Mod(\CSC)$, we wish to prove that $\CC$ is
	equivalent to $SL(\CC)$. To do so, we introduce two strict cartesian
	closed functors $F \colon \CC\to SL(\CC)$ and $G \colon SL(\CC)\to\CC$, such that
	there are isomorphisms $\Id \Rightarrow GF$ and $FG\Rightarrow \Id$.
	\begin{itemize}
		\item $F$ maps an object $A$ of $\CC$ to $[A^*]$.  It maps a
			morphism $f \colon A \to B$ to $[x:A^*\vdash c_f x \colon B^*]$.
		\item $G$ maps an object $[A]$ to $\sem{A}$, the interpretation of the type
			$A$ in $\CC$, because the choice of representative of $[A]$ does not
			change the interpretation. $G$ maps a morphism $[x \colon A \vdash g
			\colon B]$ to $\sem{x \colon A\vdash g \colon B}$.
	\end{itemize}
	Then it is easy to check that $GF = \Id$ and $FG = \Id.$ Therefore $\CC$ is
	isomorphic to $SL(\CC)$.  Furthermore, given a $\CSC$-theory $\Th$, we wish
	to prove that $\Th$ is equivalent to $LS(\Th)$. To do so, we introduce two
	$\CSC$-translations $V \colon \Th\to LS(\Th)$ and $W \colon LS(\Th)\to\Th$
	such that there are isomorphic $\CSC$-translation transformations
	$VW\Rightarrow \Id$ and $\Id \Rightarrow WV$.
	\begin{itemize}
		\item $V$ maps a type $A$ in $\Th$ to $[A]^*$, and term
			judgements $x \colon A\vdash f \colon B$ to $x \colon [A]^*\vdash
			c_{[x \colon A\vdash f \colon B]} x \colon [B]^*$.
		\item Observe that for each type $A$ in $LS(\Th)$, there is a
			type of the form $[B]^*$ such that $\vdash A=[B]^*
			 \colon \typ$ in $LS(\Th)$.  We define $W(A) \eqdef B$ (the
			choice of $B$ does not matter).  Then, for term
			constants we define $W( \vdash c_{[x \colon A \vdash f \colon B]}
			\colon B^*) \eqdef (\vdash \lambda x. f  \colon A \to B)$ and
			this uniquely determines the action of $W$ on the
			remaining terms (the choice of $f$ does not matter).
	\end{itemize}
	Given a type $A$ in $\Th$, $x \colon W(V(A))\vdash x \colon A$ is derivable
	in $\Th$ because $\vdash W(V(A))=A \colon \typ$, and $\alpha_A \colon x
	\colon W(V(A))\vdash x \colon A$ defines an isomorphic $\CSC$-translation
	transformation: postcomposing (resp.  composing) it with $x \colon A\vdash x
	\colon W(V(A))$ gives $x \colon W(V(A))\vdash x \colon W(V(A))$ (resp. $x
	\colon A\vdash x \colon A$). Given a type $A'$ in $LS(\Th)$, the same is true
	for $\beta_{A'} = x \colon A'\vdash x \colon V(W(A'))$. Thus, for every
	$\CSC$-theory, $\Th$ is equivalent to $LS(\Th)$.
\end{proof}

\begin{remark}
	We introduced type equalities so that we can prove Theorem
	\ref{th:internal-language}. This is also the approach taken in
	\cite{maietti2005relating} and without this, technical difficulties arise.
	Theory translations are defined strictly (up to equality, not up to
	isomorphism) and in order to match this with the corresponding notion of
	model morphism, we use type equalities. Without type equalities, the symmetry
	within Theorem~\ref{th:internal-language} can only be established if we make
	further changes. One potential solution would be to weaken the notion of
	theory translation by requiring that it preserves types up to type
	isomorphism (\emph{i.e.}, make it strong instead of strict), but this is technically
	cumbersome.
\end{remark}

\section{Conclusion and Future Work}
\label{sec:conclude}

We showed that, under some mild assumptions, strong monads indeed admit a
centre, which is a commutative submonad, and we provided three equivalent
characterisations for the existence of this centre (Theorem
\ref{th:centralisability}) which also establish important links to the theory
of premonoidal categories. In particular, every (canonically strong) monad on
$\Set$ is centralisable (\secref{sub:sets}) and we showed that the same is true
for many other categories of interest (\secref{sub:examples}) and we identified
specific monads with interesting centres (\secref{sub:specific-examples}).
More generally, we considered central submonads and we provided a computational
interpretation of our ideas (\secref{sec:computational}) which has the added
benefit of allowing us to easily keep track of which monadic operations are
central, \emph{i.e.}, which effectful operations commute under monadic sequencing with
any other (not necessarily central) effectful operation.  We cemented our
semantics by proving soundness, completeness and internal language results.

One direction for future work is to consider a theory of \emph{commutants} or
\emph{centralisers} for monads (in the spirit of
\cite{commutants,garner2016commutativity}) and to develop a computational
interpretation with the expected properties (soundness, completeness and
internal language).
Another opportunity for future work includes studying the relationship between
the centres of strong monads and distributive laws. In particular, given two
strong monads and a strong/commutative distributive law between them, can we
show that the distributive law also holds for their centres (or for some
central submonads)? If so, this would allow us to use the distributive law to
combine not just the original monads, but their centres/central submonads as
well. Moreover, the interaction of the centre with operations on monadic
theories can be investigated.

Our definition of central submonads makes essential use of the notion of
monomorphism of strong monads. Another possibility for future work is to
investigate an alternative approach where we consider an appropriate class of
factorisation systems instead of monomorphisms to define central submonads. Yet
another possibility for future work is to investigate if central submonads of a
given strong monad have some interesting poset structure.

A natural generalisation of monads is the notion of \emph{arrows} -- or
\emph{strong promonads}. A promonad is a monoid in the category of profunctors,
and profunctors are to functors what relations are to functions. Arrows give
then a more general framework to study computational effects, and are
particularly meaningful for effects in reversible computing
\cite{alimarine2005arrows, heunen2018arrows}. A final direction for future work
is the study of equational theories and internal language for arrows.

\chapter{Simply-typed Quantum Control}
\label{ch:qu-control}
\begin{citing}{5cm}
	``Okay, this is a quantum computer, right? We barely know how it works, it's
	basically magic.'' --- Black Mirror, S06E01.
\end{citing}

\begin{abstract}
	Quantum control is a recent notion in the literature, and many of its facets
	are still poorly understood, especially in terms of programming languages.
	Our goal is to build up solid foundations for the study of quantum control,
	syntactically and semantically. We provide syntax and semantics for a
	simply-typed calculus based on pattern-matching, developed to represent
	quantum reversible operations, and quantum control is ensured with the help
	of a quantum algebraic effect. To enforce reversibility, a syntactic notion
	of orthonormal basis is introduced, called here \emph{orthogonal
	decomposition}. A denotational semantics and an equational theory are
	developed, and we prove that the former is complete with regard to the
	latter.
	
	\paragraph{References.} This work has been the focus of many conversations
	between Kostia Chardonnet, Robin Kaarsgaard, Benoît Valiron and the author.
	It has then been enriched for a paper coauthored with Kinnari Dave, Romain
	Péchoux and Vladimir Zamdzhiev, where this language aimed at quantum control
	is intergrated in a larger language, that also handles classical control.
	That paper is under submission.
\end{abstract}

\section{Introduction}
\label{sec:quantum-simple-intro}

Quantum superposition is an important computational resource that is utilised
by many quantum algorithms and protocols. Therefore, when designing quantum
programming languages and quantum type systems, it is natural to consider how
to introduce quantum superposition into the languages and systems under
consideration. One approach, that we investigate in this chapter, aims to
introduce quantum superposition as a principal feature of the language. That
is, we provide language features that allow us to form superpositions of terms
(\emph{i.e.}~programs) in the language. In particular, instead of adopting a
gate-level view of quantum computation, wherein the computational process is
described through a series of low-level atomic gates, we take an approach that
allows us to abstract away from such details and that allows us to focus on the
linear-algebraic structure of the computation.

Quantum computing embodies several types of operations. The principal
operations are the unitary maps, which are reversible transformations of
quantum states -- \emph{e.g.}, when working with quantum circuits,
\emph{quantum gates} are reversible. State preparation is another kind of
operation, that initialises one or several qubits. Finally, measurement
\emph{breaks} quantum superposition to obtain a classical result with a certain
probability. For example, when measuring a quantum bit $\alpha \ket 0 + \beta
\ket 1$, the output can be $0$ with probability $\abs\alpha ^2$ and $1$ with
probability $\abs\beta ^2$. It is known that, in the design of a quantum
program, measurement can be deferred to the end of the execution
\cite{cross2012topics, nielsenchuang}. Using this principle of deferred
measurement, each program is then divided into two separate parts: the first is
entirely reversible, followed by a measurement at the end. It is then sensible
to focus on \emph{reversible} programming to design a quantum programming
language.

The idea of reversible computation comes from Landauer and
Bennett~\cite{Landauer61, bennett1973logical} with the analysis of its
expressivity, and the relationship between irreversible computing and
dissipation of energy. This leads to an interest in reversible
computation~\cite{bennett2000notes, aman2020foundations}, both with a low-level
approach~\cite{caroe2012design, wille2016syrec, saeedi2013synthesis}, and from
a high-level perspective~\cite{lutz1986janus, yokoyama2007reversible,
yokoyama2016fundamentals, james2012information, james2014theseus,
sabry2018symmetric, yokoyama2011reversible, thomsen2015interpretation,
JacobsenKT18}.

Reversible programming lies on the latter side of the spectrum, and two main
approaches have been followed. Embodied by Janus~\cite{lutz1986janus,
yokoyama2007reversible, yokoyama2010reversible, yokoyama2016fundamentals} and
later R-CORE and R-WHILE~\cite{gluck2019reversible}, the first one focuses on
imperative languages whose control flow is inherently reversible -- the main
issue with this aspect being tests and loops. The other approach is concerned
with the design of functional languages with structured data and related
case-analysis, or \emph{pattern-matching}~\cite{yokoyama2011reversible,
thomsen2015interpretation, james2014theseus, sabry2018symmetric, JacobsenKT18}.
To ensure reversibility, strong constraints have to be established on the
pattern-matching in order to maintain reversibility.

Those developments were utilised to introduce a programming language aimed at
reversible quantum programming \cite{sabry2018symmetric}, which is the work
this chapter builds upon. The goal of that paper is more specific: it aims to
formalise a programming language that performs \emph{quantum control}.
Quantum control, as opposed to classical control, is the ability to realise
the control schemes -- such as \emph{if} statements or \emph{while} loops --
with quantum data, which means, a superposition of states. Informally, one can
say that quantum control is allowing not only superposition of states, but also
superposition of programs. A practical example of quantum control is the
\emph{quantum switch} \cite{chiribella2013quantum}, which works as follows:
given two quantum states $x$ and $y$, and two unitary operations $U,V$, the
quantum switch performs the operation $UV$ on $\ket y$ when $x$ is in the state
$\ket 0$, and $VU$ when $x$ is in the state $\ket 1$. In general, the obtained
operation sends the state $(\alpha \ket 0 + \beta \ket 1) \otimes \ket y$ to
$\alpha \ket 0 \otimes (UV \ket y) + \beta \ket 1 \otimes (VU \ket y)$. Besides
being mathematically feasible, the quantum switch is also doable in a lab
\cite{procopio2014experimental, rubino2017experimental}.

The literature on quantum control is fairly recent, because it was thought to
be not feasible in a realistic quantum computer. Since the introduction of the
quantum switch, quantum control is starting to be studied with a programming
language point of view \cite{sabry2018symmetric, qunity, pablo2022unbounded,
diazcaro2023feasible}.  In some cases, the presentation lacks a proper
denotational study; in some others, it lacks an operational account. One can
summarise by saying that quantum control in programming languages lacks solid
foundations, as much syntactically as mathematically. The work in this chapter
is an early attempt to resolve this issue.

\subsection{Related work} 
\label{sub:qua-related-work}

\paragraph{Reversible computation.} Two successive papers \cite{nous2021invcat,
nous2023invrec} -- that are also the focus of the next chapter -- provide a
categorical semantics of a reversible programming language based on Theseus.
That language is closely related to the one presented in our work, however it
only handles reversibility on classical data. More generally,
\cite{kaarsgaard2017join} details the structure of inverse categories used to
interpret reversible programming.  However, a category interpreting a quantum
programming language also has to take into account quantum states, usually
represented by isometries; this is why it makes sense to work with contractive
maps as morphisms. The category with Hilbert spaces as objects and contraction
as morphisms is not an inverse category. This shows that the work on classical
reversible programming languages cannot be applied to our goal.  Furthermore,
reversible quantum programming is not a monadic effect over reversible
programming: the sensible way of going from an inverse category to Hilbert
spaces is the functor $\ell^2$ \cite{heunen2013l2}, which is not an adjoint
functor, and thus cannot provide a monad.

\paragraph{Quantum control.} 
Our work is based on the paper of Sabry, Valiron and Vizzotto
\cite{sabry2018symmetric} where a functional reversible programming language is
introduced and extended to a quantum programming language handling quantum
control with recursive functions over lists. However, the denotational
semantics given in that paper is not compositional, nor it is proven sound, nor
adequate with regard to the operational semantics. In general, that paper lays
out great ideas on how to work with quantum control, but with few proven
statements. This chapter aims to provide stronger foundations, syntactical and
mathematical. This will hopefully help tackle the question of the denotational
semantics of quantum structural recursion, as introduced in the paper cited.

PUNQ \cite{diazcaro2023feasible} is a programming language which is close to
ours in some aspects: it relies on a notion of orthogonality between terms to
form linear combinations, and its goal is to work with unitaries, to ensure
quantum control. There are some differences: its design is based on linear
logic and is closer to a linear $\lambda$-calculus; the base type is the one of
\emph{bits} -- which is quickly generalised to \emph{qubits} --, even the
authors still work with a specific basis, while one would expect that working
with qubits directly solves this issue (see discussion in
\secref{sec:qu-simple-conclusion}). Finally, to ensure normalisation of
well-typed terms, the authors introduce an orthogonality predicate, akin to one
in \cite{sabry2018symmetric} and in this chapter. However, the former requires
that the terms are computed to check whether they are orthogonal. This means
that the type checking is not static, and therefore not necessarily efficient.

Similarly, there are many approaches to quantum control, but that do not ensure
unitarity of operations, usually related to the $\lambda$-calculus
\cite{diazcaro2022semimodules, diazcaro2022unitary, diazcaro2019unitary,
altenkirch2005functional, chardonnet2022manyworlds}. These developments set
themselves in the long list of papers revolving around algebraic
$\lambda$-calculi \cite{arrighi2017vectorial, arrighi2017lineal,
vaux2009algebraic, selinger2009quantum}. Alejandro D\'iaz-Caro has written a
nice paper on that topic \cite{diaz2022quick}. Note that these approaches
struggle to scale to infinite dimensions. 

Other programming languages handling quantum control have been introduced, such
as Qunity \cite{qunity}. Qunity is based on reversible pattern-matching
\cite{sabry2018symmetric}, like this chapter. However, Qunity assumes built-in
unitary operations, and has then $\lambda$-abstractions that are not proven to
be unitary operations. This last point is tackled with an \emph{error handling}
scheme, which does not appear to be suitable, and there is no semantics to show
how this scheme would behave operationally.

\subsection{Contribution} 
\label{sub:qua-contribution}

We provide a reversible programming language with simple types, inspired by the
direct sum and tensor product of vector spaces, and natural numbers, to show
that it is possible to work with infinite data types. This language relies on
an orthogonality notion between values to define \emph{linear combinations} of
values to represent quantum superposition. A notion of orthonormal basis in the
syntax is introduced, which helps prove that the abstractions are unitary
operations.

Regarding the syntax, the presentation has been improved, some lemmas were
fixed compared to previous presentations \cite{sabry2018symmetric, phd-kostia},
and many more lemmas have been proven concerning the orthogonal decomposition.

This chapter is organised as follows: the first section (see
\secref{sec:simple-language}) outlines the syntax of the language (see
Figure~\ref{fig:qu-syntax}, the grammar of terms, their typing rules, utilising
orthogonality (see Definition~\ref{def:orthogonality}) and orthogonal
decomposition (see Definition~\ref{def:od-ext}). Then, substitutions are
introduced (\secref{sub:qua-substitution}), described in a way that fits the
language, allowing us to write our equivalent of $\beta$-reduction in a
comprehensible manner. We then introduce an equational theory
(\secref{sub:qua-eq-typing}), in the vein of the equational theory of the
simply-typed $\lambda$-calculus, and we prove that it verifies a normalisation
property.  A later section (\secref{sec:qu-simple-maths}) is dedicated to
introducing the mathematical notions and definition required to establish the
denotational semantics, given in \secref{sec:very-simple-semantics}.  Finally,
we prove completeness of the denotational semantics with regard to this
equational theory (\secref{sub:qua-completeness}).

\subsection{Work of the author} 
\label{sub:qua-author}

Within this chapter, the author has contributed to the following points.
\begin{itemize}
	\item A new kind of linear combinations of terms, which is more fit to ensure
		normalisation (see Figure~\ref{fig:qu-syntax}). This required the further
		development of the type system with different conditions than are already
		present in the literature.
	\item A new definition of orthogonality (see
		Definition~\ref{def:orthogonality}), close to the ones in
		\cite{sabry2018symmetric, phd-kostia, nous2021invcat, nous2023invrec,
		chardonnet2023curry}, but it fits a larger collection of terms of the
		language, including linear combinations and function application.
	\item Formalised \emph{well-formed} substitutions (see
		Definition~\ref{def:well-valuation}), to help the denotational semantics.
	\item An equational theory for the language (see \secref{sub:qua-eq-typing}),
		establishing stronger foundations to quantum control than in
		\cite{sabry2018symmetric} where there is a mix of equations and operational
		semantics.
	\item A compositional denotational semantics (see
		\secref{sec:very-simple-semantics}), which is sound and complete with
		regard to the equational theory.
\end{itemize}

\section{The Language}
\label{sec:simple-language}

In this section, we present the syntax of the programming language studied in
this chapter dedicated to simply-typed quantum control. The quantum aspect of
the language is provided as an algebraic effect with the introduction of linear
combinations with the combinator $\Sigma$.

\subsection{Syntax of the Language}
\label{sub:simple-syntax}

\begin{figure}[h]
	\begin{alignat*}{10}
		&\text{(Ground types)}\quad & A, B &~~&&::= ~&\quad& \one \alt A \oplus B \alt A
		\otimes B 
		\alt \nat
		\\
		&\text{(Unitary types)} & T &&&::=&& A \iso B \\[1ex]
		&\text{(Basis Values)} & b &&&::=&& * \alt x \alt \inl{b} \alt \inr{b} \alt
		b \otimes b \alt \zero \alt \suc b \\
		&\text{(Values)} & v &&& ::= && \Sigma_{i \in I} (\alpha_i \cdot b_i) \\
		&\text{(Expressions)} & e &&& ::= && * \alt x \alt \inl{e} \alt \inr{e} \alt
		\pair{e}{e} \alt \zero \alt \suc e \\
		&&&&&&&   \alt \Sigma_{i \in I} (\alpha_i \cdot e_i)
		\\
		&\text{(Unitaries)} & \isoterm &&&::=&& \unibasique \\ 
		&&&&&&& \alt \isoterm \otimes \isoterm \alt \isoterm \oplus \isoterm \alt
		\isoterm \circ \isoterm \alt \isoterm\inv \alt \ctrl \isoterm \\
		&\text{(Terms)} & t &&& ::= && * \alt x \alt \inl{t} \alt \inr{t} \alt
		\pair{t}{t} \alt \zero \alt \suc t \\
		&&&&&&& \alt \omega~t \alt \Sigma_{i \in I}
		(\alpha_i \cdot t_i)
	\end{alignat*}
	\caption{Syntax of simply-typed quantum control.}
	\label{fig:qu-syntax}
\end{figure}

The syntax of the programming language studied in this chapter is described in
a usual way, with grammars, such as the ones in Chapter~\ref{ch:background}
(see the grammars for the simply-typed $\lambda$-calculus in
(\ref{eq:simple-types}) and (\ref{eq:simple-terms})). It is given in
Figure~\ref{fig:qu-syntax}. 

\paragraph{Types.}
The ground types are given by a unit type $\one$ and
the usual connectives $\oplus$ and $\otimes$, which are respectively called
\emph{direct sum} and \emph{tensor product}. We also have the inductive type
$\nat$, as a witness that it is possible to work with infinite data types,
and thus infinite-dimensional spaces in the model. We equip functions, called
\emph{unitaries} in this chapter, with a separate type, written $A \iso B$ when
$A$ and $B$ are two ground types. This double arrow notation is inherited from
\cite{sabry2018symmetric}, as a way of picturing that the operations are indeed
reversible.

\paragraph{Terms.}
The terms of the language are given as follows:
\begin{itemize}
	\item variables $x,y,z,\dots$, given as elements of a set of variables
		$\mathtt{Var}$, assumed to be totally ordered;
	\item a term $*$ called the \emph{unit} corresponding to the unit type $\one$;
	\item usual connectives for the direct sum, $\inl\!$ and $\inr\!$ which are
		respectively called the \emph{left injector} and \emph{right injector};
	\item a connective corresponding to the tensor product, that is also
		written $\otimes$;
	\item terms for natural numbers, $\zero$ and the connective $\suc\!$ that
		gives the successor of a term;
	\item the application of a unitary to a term, written $\omega~t$ when
		$\omega$ is a unitary and $t$ a term;
	\item finally, given a finite set of indices $I$ -- which could be a finite
		set of numbers $\{0,1,2,\dots,n\}$ --, assumed to be totally ordered, a family of
		complex numbers $(\alpha_i)_{i \in I}$ and a family of terms $(t_i)_{i \in
		I}$, one can form the term $\Sigma(I, (\alpha_i)_i, (t_i)_i)$, representing
		the linear combination of the terms with complex scalars. This last term
		construction embodies the quantum effect of the language. In the rest of
		the chapter, we write $\Sigma_{i \in I} (\alpha_i \cdot t_i)$ for
		$\Sigma(I, (\alpha_i)_i, (t_i)_i)$ to make it more readable.
\end{itemize}
In quantum theory, a linear combination of vectors $\sum_{i \in I} \alpha_i
\ket{i}$ is normalised if $\sum_{i \in I} \abs{\alpha_i}^2 = 1$. The family
of real numbers $(\abs{\alpha_i}^2)_{i \in I}$ is then seen as a probability
distribution. This work does not focus on the probabilistic aspect of quantum
theory; however, we want to work with well-formed states, and thus normalised
states. This is why we ensure later that a linear combination of terms is
normalised. Throughout the chapter, a term $\Sigma_{i \in \set{1,2}} (\alpha_i
\cdot t_i)$ might be written $\alpha_1 \cdot t_1 + \alpha_2 \cdot t_2$,
regarded as syntactic sugar. In some examples, a term $\Sigma_{i \in \set *} 1
\cdot t$ can be written $1 \cdot t$ or even $t$ for readability; but note that
$\Sigma_{i \in \set *} 1 \cdot t$ and $t$ are different terms in the syntax.

\begin{remark}
	Since we are working with a programming language which handles complex
	numbers, we might want to ensure that the complex numbers are
	\emph{computable} \cite{turing1937computable}; this is not a terrible
	assumption, since the set of computable complex numbers keeps the structure
	of a field \cite{rice1954recursive}. However, we do not wish to focus on this
	point; we then assume that we work with complex numbers achievable in a given
	quantum hardware, and that those still form a field.
\end{remark}

The terms that are unitary-free -- in the sense that they do not contain any
function application -- and that do not involve linear combinations are called
\emph{basis values}. These terms are naturally classical, as opposed to
quantum. Their name comes from the fact that we use them as a syntactic
representation of the canonical basis of a Hilbert space, as introduced in
\secref{sub:hilb-computing}. Note that basis values are totally ordered.

Values, on the other hand, are linear combinations of basis values. In a value
$\sum_{i \in I} (\alpha_i \cdot b_i)$, we assume that the family $(b_i)_{i \in
I}$ is an increasing sequence, and that none of the scalars are equal to $0$;
this allows us to work with unique normal forms later in the chapter. Since
this definition is restrictive, we need to introduce a more general piece of
syntax, which still does not include unitary application, called an
\emph{expression}. They are used as outputs of functions, that we introduce
below.

\begin{example}[Qubits]
	\label{ex:qua-qubits}
	In our presentation, the type of \emph{qubits} is $\one \oplus \one$. The term
	$\inl *$ represents the quantum state $\ket 0$ and $\inr *$, the other
	element of the canonical basis $\ket 1$. The general state of a qubit is
	given by the term $\alpha \cdot (\inl *) + \beta \cdot (\inr *)$.This type $\one
	\oplus \one$ can also be seen as the type of quantum booleans. The basis value
	$\inl *$ represents \emph{false} and $\inr *$, \emph{true}. 
\end{example}

\paragraph{Unitaries.}
Unitaries are firstly obtained by what we call \emph{unitary abstractions},
written as a set of clauses $\unibasique$ or $\unibreduit$, given a family of
basis values and a family of expressions, both indexed by a set $I$. We will
see that several conditions have to be verified to ensure that this unitary
actually performs a unitary operation (in other words, a reversible operation
between normalised states). The grammar for unitaries also contains operations
such as the direct sum $\oplus$, the tensor product $\otimes$, the inverse
$(-)\inv$ and the qubit control $\ctrl\!$. These operations can be seen as
syntactic sugar, since they can all be performed within unitary abstractions.
Their presence in the grammar is an instance of operations at the level of
unitaries, which are at a \emph{higher order} than the operations at the ground
level. We will see in the next chapter that, provided some conditions, any
calculus can be put on top of unitaries.

Unitary abstractions can be seen as a mix of $\lambda$-abstractions and
pattern-matching; the latter is sometimes written $\mathtt{case}$ or
$\mathtt{match}$ in a functional programming language. We give some examples to
make this intuition clearer.

\begin{equation}
	\label{eq:untyped-uni}
	\set{\mid x \iso x} 
	\qquad 
	\set{\mid x \otimes y \iso y \otimes x}
	\qquad
	\sset{\begin{array}{lcl}
		\mid \inl x & \iso & \inr x \\
		\mid \inr y & \iso & \inl y
	\end{array}
	}
\end{equation}
The example on the left performs the identity; the one in the middle swaps the
two elements of a tensor product; and the last one swaps the two parts of a
direct sum. Part of the conditions for a unitary abstraction to be well-typed,
is to have the same variables on each side of a clause. This is verified by the
examples above. We will see that all three examples are well-typed, with the
typing rules described in the next section. Note that, in those examples, the
terms on the right-hand side of the abstraction should be values, in the form
of a sum. We have simplified notations here, because the sum would involve only
one element.

A simple example that involves quantum superposition is the following:

\[
	\sset{\begin{array}{lcl}
		\mid \inl * & \iso & 
		\frac{1}{\sqrt 2} \cdot (\inl *) + \frac{1}{\sqrt 2} \cdot (\inr *) \\ 
		\mid \inr * & \iso & 
		\frac{1}{\sqrt 2} \cdot (\inl *) - \frac{1}{\sqrt 2} \cdot (\inr *) 
	\end{array}
	}
\]
which operates the well-known \emph{Hadamard} operator on a qubit, given below. 
\begin{equation}
	\label{eq:matrix-had}
	\left(
	\begin{array}{rr}
		\frac{1}{\sqrt 2} & \frac{1}{\sqrt 2} \\
		\frac{1}{\sqrt 2} & - \frac{1}{\sqrt 2}
	\end{array}
	\right)
\end{equation}
This operation is significant in quantum computing: it is the one that allows
one to introduce quantum superposition in a system, as well as entanglement,
when followed by a controlled \emph{not} operator. The Hadamard operator
performs a \emph{change of basis}. It is an important notion in linear algebra,
and is an equally important notion in this chapter, where unitary abstractions
are precisely defined as a change between two bases; the latter are introduced
syntactically, and are more general than linear algebraic bases.


\subsection{Types and Typing Rules}

As usual, a \emph{typing context} consists of a set of pairs of a variable and
a type, written $\Delta$ and generated by $\Delta ::= \emptyset \alt \set{x
\colon A} \cup \Delta$. A comma between two contexts represents the union of
the contexts, \emph{i.e.}~$\Delta, \Delta' = \Delta \cup \Delta'$. The
variables in a context $\Delta$ need to be all different to one another. We
have two levels of judgements: the one for terms, where sequents are written
$\Delta \vdash t \colon A$ and a typing judgement for isos, noted $\entailiso
\omega \colon A \iso B$.  Variables in a context $\Delta$ are strictly linear:
given $\Delta \entail t \colon A$, every element of $\Delta$ has to occur
\emph{exactly once} in the term $t$.

\paragraph{Orthogonality.}
We have seen in \secref{sub:hilb-computing} that a quantum state has to be
normalised. The first focus of this section to ensure normalisation through the
type system. Let us recall that, in quantum computing, there are two necessary
conditions for a superposition of states $\alpha \ket \varphi + \beta \ket
\psi$ to be normalised. The first one is concerned with the probability
distribution condition, $\abs\alpha^2 + \abs\beta^2 = 1$. Secondly, the vectors
$\ket \varphi$ and $\ket \psi$ need to be orthogonal. Indeed, the following
vector:
\[
	\frac{1}{\sqrt 2} \ket 0 + \frac{1}{\sqrt 2} \ket 0 = \sqrt 2 \ket 0
\]
is not normalised. Therefore, its corresponding term 
\[
	\frac{1}{\sqrt 2} \cdot (\inl *) + \frac{1}{\sqrt 2} \cdot (\inl *)
\]
should not be accepted by our type system. To do so, we introduce a notion of
orthogonality for terms in our syntax (see Definition~\ref{def:orthogonality}),
in order to express later a typing rule ensuring normalisation.

Our orthogonality predicate is primarily based on direct sums and injections.
Given two Hilbert spaces $H_1$ and $H_2$ and vectors $x_1 \in H_1$ and $x_2 \in
H_2$, their respective injections into $H_1 \oplus H_2$, namely $(x_1,0)$ and
$(0,x_2)$, are orthogonal. Therefore, given two terms $t_1$ and $t_2$, we fix
that their respective projections $\inl t_1$ and $\inr t_2$ are orthogonal.

Our orthogonality predicate is then generalised on all terms through congruence
rules. The final rule presented in Definition~\ref{def:orthogonality} outlines
the orthogonality obtained with a \emph{change of basis}. For example, in the
Hilbert space $\C^2$, the two vectors:
\[
	\ket + = \frac{1}{\sqrt 2} \ket 0 + \frac{1}{\sqrt 2} \ket 1
	\quad
	\text{and}
	\quad
	\ket - = \frac{1}{\sqrt 2} \ket 0 - \frac{1}{\sqrt 2} \ket 1
\]
are orthogonal.

In the next definition, we introduce the predicate of orthogonality, written
$\bot$. In particular, it means that given $t_1 ~\bot~t_2$, a normalised linear
combination of the two terms can be formed.

\begin{definition}[Orthogonality]
	\label{def:orthogonality}
	We introduce a symmetric binary relation $\bot$ on terms. Given two terms
	$t_1, t_2$, we have $t_1~\bot~t_2$ if it can be derived inductively with the
	rules below; when $t_1~\bot~t_2$ can be derived, we say that $t_1$ and $t_2$
	are orthogonal. The relation $\bot$ is defined as the smallest symmetric
	relation such that:
  \[
    \begin{array}{c}
      \infer{\inl{t_1}~\bot~\inr{t_2}}{}
			\quad
			\infer{\zero~\bot~\suc t}{}
			\quad
			\infer{\suc t_1~\bot~\suc t_2}{t_1~\bot~t_2}
      \quad
      \infer{\pair{t}{t_1}~\bot~\pair{t'}{t_2}}{t_1~\bot~t_2}
      \quad
      \infer{\pair{t_1}{t}~\bot~\pair{t_2}{t'}}{t_1~\bot~t_2}
      \\[1.5ex]
      \infer{\inl{t_1}~\bot~\inl{t_2}}{t_1~\bot~t_2}
      \quad
      \infer{\inr{t_1}~\bot~\inr{t_2}}{t_1~\bot~t_2}
			\quad
			\infer{\omega~t_1~\bot~\omega~t_2}{t_1~\bot~t_2}
 			\quad
 			\infer[(\star)]{
				t~\bot~\Sigma_{i \in I} (\alpha_i \cdot t_i)
 			}{
 				\forall i \in I, t~\bot~t_i
 			}
      \\[1.5ex]
 			\infer{
				t~\bot~\Sigma_{i \in I \cup \set *} (\alpha_i \cdot t_i)
 			}{
 				\forall i \in I, t~\bot~t_i
				&
				\alpha_* = 0
				&
				t_* = t
 			}
			\quad
			\infer[(\star)]{\Sigma_{j \in J} (\alpha_j \cdot t_j)~\bot~\Sigma_{k \in K}
			(\beta_k \cdot t_k)}
				{\forall i\neq j \in I, t_i~\bot~t_j
				&
				J,K \subseteq I
				&
				\sum_{i \in J \cap K} \bar{\alpha_i}\beta_i = 0
				}
    \end{array}
  \]
\end{definition}

\begin{remark}
	Among the two inference rules marked by a $\star$ above, it could seem like the last one
	implies the first one. However, we remind that $\Sigma$ is a contructor, and therefore
	$t$ and $\Sigma_{j \in J} (\alpha_j \cdot t_j)$ are distinct cases of the grammar.
\end{remark}

\begin{remark}[Orthogonality of variables]
	Given two variables $x$ and $y$, the terms $x$ and $y$ are not orthogonal.
	The main reason is that they could be instantiated with the same value. On
	the other hand, $\inl x$ and $\inr y$ are orthogonal, for example.
\end{remark}

\begin{remark}
	We recall the third unitary presented in (\ref{eq:untyped-uni}), which swaps
	$\inl{}$ and $\inr{}$:
	\[
		\sset{\begin{array}{lcl}
			\mid \inl x & \iso & \inr x \\
			\mid \inr y & \iso & \inl y
		\end{array}
		}
	\]
	and we call this unitary $\omega$. The term $\omega~(\inl *)$ represents the
	application of the unitary $\omega$ to $\inl *$. We later show within our
	equational theory (see \secref{sec:qu-simple-eq-theory}), that the terms
	$\omega~(\inl *)$ and $\inr *$ are equal, as expected. However, we cannot
	derive that $\omega~(\inl *)$ and $\inl *$ are orthogonal with the rules of
	orthogonality given above (see Definition~\ref{def:orthogonality}), even if
	$\inl *$ and $\inr *$ are orthogonal. This is because we wish to be able to
	derive orthogonality \emph{statically}.
\end{remark}

\begin{lemma}
	If $t~\bot~t'$, then $t$ and $t'$ are different terms.
\end{lemma}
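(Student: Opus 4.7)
The plan is to proceed by induction on the derivation of $t~\bot~t'$, verifying in every rule that $t$ and $t'$ are syntactically distinct.

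The straightforward cases are dispatched quickly. The two base rules $\inl{t_1}~\bot~\inr{t_2}$ and $\zero~\bot~\suc{t}$ produce terms with different head constructors. The congruence rules---for $\suc$, the two variants for $\otimes$, $\inl$, $\inr$, and unitary application $\omega~(-)$---all proceed by applying the induction hypothesis to the premise $t_1~\bot~t_2$ to obtain $t_1 \neq t_2$, then wrapping with the common outer constructor.

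The three rules involving $\Sigma$ require more care. For the first marked rule $t~\bot~\Sigma_{i \in I}(\alpha_i \cdot t_i)$, a head-constructor mismatch handles the case where $t$ itself is not a $\Sigma$-term; when $t$ \emph{is} a $\Sigma$-term, I would suppose for contradiction that $t = \Sigma_{i \in I}(\alpha_i \cdot t_i)$ and then apply the induction hypothesis to the smaller premise derivations $t~\bot~t_i$, using the fact that the candidate equality would make $t$ strictly contain each $t_i$ as a proper subterm. The middle rule is easiest: its right-hand side has index set $I \cup \set{*}$ with $t_* = t$, so the sum strictly contains $t$ as a subterm and is of greater syntactic size than $t$.

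The main obstacle is the final marked rule, where both sides are $\Sigma$-terms. If we had $\Sigma_{j \in J}(\alpha_j \cdot t_j) = \Sigma_{k \in K}(\beta_k \cdot t_k)$ then $J = K$ and $\alpha_i = \beta_i$ for every $i \in J$, so the premise $\sum_{i \in J \cap K}\bar{\alpha_i}\beta_i = 0$ would collapse to $\sum_{i \in J} \abs{\alpha_i}^2 = 0$. Appealing to the convention that scalars in a well-formed sum are non-zero (introduced in \secref{sub:simple-syntax} for values), this forces $J$ to be empty, a degenerate case that I would argue is excluded by the grammar or dispatched as a separate base case. The key methodological point is to run the induction on the \emph{derivation} rather than on the term, so that the IH applies uniformly to all subderivations arising from the premises of the three $\Sigma$-rules.
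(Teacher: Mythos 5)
The paper states this lemma without proof, so there is nothing to compare against; judged on its own merits, your argument has a genuine gap in the first starred rule, $t~\bot~\Sigma_{i \in I}(\alpha_i \cdot t_i)$, in the subcase where $t$ is itself a $\Sigma$-term. Supposing for contradiction that $t = \Sigma_{i \in I}(\alpha_i \cdot t_i)$, the induction hypothesis applied to the premises $t~\bot~t_i$ gives only $t \neq t_i$, and the observation that each $t_i$ would then be a proper subterm of $t$ gives the same conclusion again; neither contradicts the supposed equality, which is precisely the assertion that the $t_i$ are the (necessarily distinct-from-$t$) immediate subterms of $t$. No contradiction is derived, so the case is not closed. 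To finish it you would need to analyse \emph{how} a premise $t~\bot~t_{i_0}$ --- an orthogonality between a sum and one of its own summands --- can itself have been derived, which in practice means proving the strengthened statement $\neg(t~\bot~t)$ by induction on derivations and inspecting the last rule of each premise.

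Your appeal to the non-zero-scalar convention in the final case is also on shaky ground: the paper imposes that convention only on \emph{values}, the lemma is stated for arbitrary terms (the surrounding text stresses that orthogonality is independent of typing), and the middle $\Sigma$-rule explicitly introduces a summand with coefficient $0$. This is not cosmetic. Taking $I = \emptyset$ in that middle rule yields $t~\bot~\Sigma_{\set{*}}(0 \cdot t)$ for any $t$, and feeding this (via symmetry) as the premise of the first starred rule derives $\Sigma_{\set{*}}(0 \cdot t)~\bot~\Sigma_{\set{*}}(0 \cdot t)$ --- a reflexive instance, so the statement as literally written fails for terms with zero coefficients. A correct proof therefore has to restrict attention to terms whose sums have non-empty index sets and non-zero scalars (or to values), and must invoke that restriction uniformly, not only in the last case; once that restriction is in place, the argument for the remaining constructor and congruence cases that you give is fine.
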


Note that orthogonality holds without any typing rules or notion of type.
Figure~\ref{fig:typing-values-simple} introduces the typing rules for
expressions, and therefore for basis values and values, thanks to the notion of
orthogonality.

\begin{figure}[!h]
	\[\begin{array}{c}
		\infer{
			\emptyset \entail * \colon \one,
		}
		{}
		\qquad
		\infer{
			x \colon A \entail x \colon A,
		}{}
		\qquad
		\infer{
			\Delta_1,\Delta_2\entail \pair{e_1}{e_2} \colon A\otimes B,
		}{
			\Delta_1\entail e_1 \colon A
			&
			\Delta_2\entail e_2 \colon B
		}
		\\[1.5ex]
		\infer{
			\Delta\entail \inl{e} \colon A\oplus B,
		}{
			\Delta\entail e \colon A
		}
		\qquad
		\infer{
			\Delta\entail \inr{e} \colon A\oplus B,
		}{
			\Delta\entail e \colon B
		}
		\\[1.5ex]
		\infer{\entail \zero \colon \nat,}{}
		\qquad
		\infer{
			\Delta \entail \suc e \colon \nat,
		}{
			\Delta \entail e \colon \nat
		}
		\\[1.5ex]
		\infer{
			\Delta \entail \Sigma_i (\alpha_i \cdot e_i) \colon A.
		}{
			\Delta \entail e_i \colon A
			&
			\Sigma_i \abs{\alpha_i}^2 = 1
			&
			\forall i\not= j, e_i~\bot~e_j
		}
	\end{array}
	\]

	\caption{Typing rules of (basis) values and expressions.}
	\label{fig:typing-values-simple}
\end{figure}

Once we know how basis values are formed in a certain type, we can discuss some
orthogonality properties among specific types. In linear algebra, given a
vector space and an orthogonal basis $B$ of that space, if two elements of the
basis are not orthogonal, then they are equal. Our case is more subtle,
because, for example, $\inr{} (\inl *)$ and $\inr x$ are not equal, but also
not orthogonal. We will see, later in this chapter, that they are linked by
substitution.

\begin{example}
	The values $\inl *$ and $\inr *$ are orthogonal, but of course, they are
	both not orthogonal to $\frac{1}{\sqrt 2} \cdot (\inl *) - \frac{1}{\sqrt 2}
	\cdot (\inr *)$.
\end{example}

\paragraph{Orthogonal decomposition.}
This motivates a syntactic definition for a basis, containing expressions of
the language. Given a set of orthogonal expressions, we want to ensure that
this set spans the whole type, so that it can be seen as an orthonormal basis.
This is given by the notion of \emph{orthogonal decomposition}. We first
introduce this notion in Definition~\ref{def:OD} with basis values only, and we
then extend it to expressions in general in Definition~\ref{def:od-ext}. 

Before outlying the definitions of orthogonal decompositions, we introduce some
notations. Given a set $S = \set{\pair{e_1}{e_1'}, \dots, \pair{e_n}{e_m'}}$,
we define $\pi_1(S) = \set{e_1, \dots, e_n}$ and $\pi_2(S) = \set{e_1', \dots,
e_m'}$.  Finally, we define $S_e^1$ and $S_e^2$ respectively as $\set{e' \alt
\pair{e}{e'} \in S}$ and $\set{e' \alt \pair{e'}{e} \in S}$.

\begin{definition}[Orthogonal Decomposition]
	\label{def:OD}
	We introduce a predicate $\OD{A}{(-)}$ on finite sets of basis values. Given
	a finite set of values $S$, $\OD A S$ holds if it can be derived with the
	following rules. The predicate is defined inductively as the smallest
	predicate such that:
	\[
    \begin{array}{c}
			\infer{\OD{A}{(\set{x})}}{} 
			\qquad 
			\infer{\OD{\one}{(\set{*})}}{} 
			\qquad
    	\infer{
				\OD{A\oplus B}{(\{\inl{s} \alt s \in S \} \cup \{\inr{t} \alt t\in T \})}
			}{
				\OD A S \qquad \OD B T
			} 
    	\mynl
			\infer{
				\OD{\nat}{(\set{\zero} \cup \set{\suc s \alt s \in S})}
			}{
				\OD \nat S
			}
			\qquad
    	\infer{
				\OD{A\otimes B}{S}  
			}{
				\begin{array}{c}
					\OD{B}{(\pi_2(S))} \text{ and } \forall
					b \in \pi_2(S), \OD{A}{(S^2_b)}
				\end{array}
			}
			\mynl
    	\infer{
				\OD{A\otimes B}{S}  
			}{
				\begin{array}{c}
					\OD{A}{(\pi_1(S))} \text{ and } \forall b \in \pi_1(S),
					\OD{B}{(S^1_b)} 
				\end{array}
			}
		\end{array}
	\]
\end{definition}

To simplify the notations in coming proofs, we will write $S \boxplus T$ for
the set $\{\inl{s} \alt s \in S \} \cup \{\inr{t} \alt t\in T \}$, and
$S^{\oplus 0}$ for the set $\set{\zero} \cup \set{\suc s \alt s \in S}$. We
adopt functional programming convention regarding parentheses: when readable,
we write $\OD A S$ instead of $\OD{A}{(S)}$. Also, we call ${\rm OD}$ the
predicate introduced above in general, without precision of type, to facilitate
the later discussions.

\begin{example}
	Following Example~\ref{ex:qua-qubits}, we have $\OD{\one \oplus \one}{\set{\inl *,
	\inr *}}$. The two qubits $\ket 0$ and $\ket 1$ are indeed an orthonormal
	basis for qubit states.
\end{example}

\begin{remark}
	Note that the precondition to derive $\OD{A \otimes B}{S}$ cannot be
	simplified: there are sets $S$ such that $\OD{A}{(\pi_1(S))}$ and for all $b
	\in \pi_1(S)$, $\OD{B}{(S^1_b)}$ but not all $b \in \pi_2(S)$ is such that
	$\OD{B}{(S^2_b)}$, \textit{e.g.} $S = \set{(\inl *) \otimes y, (\inr x)
	\otimes (\inl *), (\inr x) \otimes (\inr *)}$ with the type $(\one \oplus (\one
	\oplus \one)) \otimes (\one \otimes \one)$.
\end{remark}

\begin{example}
	Given any type $A$, we have $\OD{A}{\{ x \}}$, where $x$ is a variable. Since
	it is a variable, it can be substituted with any terms of type $A$, therefore
	it \emph{parses} the whole type. We make precise the notion of subsitution
	for our syntax in \secref{sub:qua-substitution}.
\end{example}

\begin{example}
	\label{ex:od-nat}
	We have $\OD{\nat}{\{\zero, \suc\zero, \suc{} \suc x \}}$. Indeed, any closed
	basis value of type $\nat$ is either $\zero$, or $\suc\zero$, or $\suc{} \suc
	b$.
\end{example}

The predicate ${\rm OD}_A$ defined above ensures that a finite set $S$ of
values represents an orthonormal basis of $A$, that we can view as the
canonical basis. Note that even for $\nat$, the set representing the basis is
finite, \textit{e.g.} $\set{\zero, \suc \zero, \suc (\suc x)}$. With knowledge
of linear algebra, one can say that the bases represented by ${\rm OD}$ are not all
the possible orthonormal bases. A change of basis through a unitary matrix also
provides an orthonormal basis. This is the purpose of the next definition.

\begin{definition}
	\label{def:od-ext}
	We extend the previous definition to general values. $\ODe{A}{(-)}$
	is a predicate on finite sets of values, and 
	$\ODe A S$ holds if it can be derived from the following rule.
	\[ \begin{array}{c}
			\infer{\ODe{A}{(\set{x})}}{} 
			\qquad 
			\infer{\ODe{\one}{(\set{*})}}{} 
			\qquad
    	\infer{
				\ODe{A\oplus B}{(\{\inl{e} \alt e \in S \} \cup \{\inr{e} \alt e\in T \})}
			}{
				\ODe A S \qquad \ODe B T
			} 
    	\mynl
			\infer{
				\ODe{\nat}{(\set{\zero} \cup \set{\suc v \alt v \in S})}
			}{
				\ODe \nat S
			}
			\qquad
    	\infer{
				\ODe{A\otimes B}{S}  
			}{
				\begin{array}{c}
					\ODe{B}{(\pi_2(S))} \text{ and } \forall
					e \in \pi_2(S), \ODe{A}{(S^2_e)}
				\end{array}
			}
			\mynl
    	\infer{
				\ODe{A\otimes B}{S}  
			}{
				\begin{array}{c}
					\ODe{A}{(\pi_1(S))} \text{ and } \forall e \in \pi_1(S),
					\ODe{B}{(S^1_e)} 
				\end{array}
			}
			\qquad
			\infer{
				\ODe{A}{(\set{\Sigma_{e'\in S} (\alpha_{e, e'} \cdot e') \alt e\in S})}
			}{
				\ODe A S
				& (\alpha_{e, e'})_{(e,e') \in S \times S}\text{ is a unitary matrix} 
			}
	\end{array} \]
	Given $\ODe A S$, we say that $S$ is an \emph{orthogonal decomposition} of type $A$.
\end{definition}

To simplify notations, we write $S^\alpha$ for the set
$\set{\Sigma_{e'\in S} (\alpha_{e, e'} \cdot e') \alt e\in S}$.

\begin{example}
	Following Example~\ref{ex:qua-qubits}, we have 
	\[\ODe{\one \oplus
	\one}{\left\{\frac{1}{\sqrt 2} \cdot \inl * + \frac{1}{\sqrt 2} \cdot \inr *, \quad
	\frac{1}{\sqrt 2} \cdot \inl * - \frac{1}{\sqrt 2} \cdot \inr *\right\}}.
	\] 
	The two qubits $\ket +$ and $\ket -$ are indeed an orthonormal basis for
	qubit states.
\end{example}

\begin{example}
	Following Example~\ref{ex:od-nat} and the previous example, we have:
	\[
		\ODe{\nat}{
			\left\{\frac{1}{\sqrt 2} \zero + \frac{1}{\sqrt 2} (\suc \zero), \quad
			\frac{1}{\sqrt 2} \zero - \frac{1}{\sqrt 2} (\suc \zero), \quad
			\suc{} \suc x
			\right\}}.
	\]
\end{example}

The predicate ${\rm OD}$ is defined without the help of orthogonality (see
Definition~\ref{def:orthogonality}), but there is a link: the elements of an
orthogonal decomposition, are in particular pairwise orthogonal.

\begin{lemma}[$\mathrm{OD}$ implies $\bot$]
	\label{lem:od-imp-bot}
	Given $\ODe A S$, for all $t_1 \neq t_2 \in S$, $t_1~\bot~t_2$.
\end{lemma}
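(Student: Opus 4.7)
The plan is to proceed by structural induction on the derivation of $\ODe A S$, according to the rules in Definition~\ref{def:od-ext}. The two base cases ($S = \set x$ and $S = \set *$) are vacuous since no two distinct elements exist.

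For the direct sum rule, $S$ is partitioned into $\{\inl e \alt e \in S_1\}$ and $\{\inr e \alt e \in S_2\}$ with $\ODe A {S_1}$ and $\ODe B {S_2}$. Given two distinct elements of $S$, if they have different injectors, the structural rule $\inl{t_1}\bot\inr{t_2}$ applies directly; if they share an injector (say both $\inl{e_1}$ and $\inl{e_2}$ with $e_1\neq e_2$), the induction hypothesis gives $e_1\bot e_2$, and the congruence rule lifts this to $\inl{e_1}\bot\inl{e_2}$. The case for $\nat$ is analogous, using $\zero\bot\suc t$ when the two elements differ in shape, and the congruence rule for $\suc$ otherwise.

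For the tensor product rule, assume the premise $\ODe{B}{\pi_2(S)}$ together with $\ODe{A}{S^2_e}$ for every $e\in\pi_2(S)$ (the other presentation is symmetric). Given two distinct $e_1\otimes e'_1, e_2\otimes e'_2 \in S$, I would split on whether $e'_1 = e'_2$: if not, the induction hypothesis on $\ODe{B}{\pi_2(S)}$ yields $e'_1\bot e'_2$, and the congruence rule $\pair{t}{t_1}\bot\pair{t'}{t_2}$ concludes; if $e'_1 = e'_2$, then $e_1\neq e_2$ both lie in $S^2_{e'_1}$, so the induction hypothesis gives $e_1\bot e_2$, and the other congruence rule for pairs applies.

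The main obstacle is the change-of-basis rule, which produces $S^\alpha = \set{\Sigma_{e'\in S}(\alpha_{e,e'}\cdot e') \alt e\in S}$ from a unitary matrix $(\alpha_{e,e'})$. Let $v_{e_1}, v_{e_2}\in S^\alpha$ correspond to distinct $e_1\neq e_2$ in $S$. By the induction hypothesis applied to $\ODe A S$, the family $(e)_{e\in S}$ is pairwise orthogonal. I would then invoke the last $(\star)$-rule of Definition~\ref{def:orthogonality} with common index set $I = J = K = S$ and coefficient families $(\alpha_{e_1, e'})_{e'\in S}$ and $(\alpha_{e_2, e'})_{e'\in S}$. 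Its two structural premises are exactly pairwise orthogonality of $(e)_{e\in S}$ and $J, K\subseteq I$; its numerical premise
\[
    \sum_{e'\in S} \overline{\alpha_{e_1,e'}}\,\alpha_{e_2,e'} = 0
\]
is precisely the statement that distinct rows of the unitary matrix $(\alpha_{e,e'})$ are orthogonal in the standard Hermitian inner product, which is guaranteed by the unitarity hypothesis. This yields $v_{e_1}\bot v_{e_2}$ and closes the induction. A minor technicality to handle is that some coefficients $\alpha_{e,e'}$ may be zero, so the sums $v_{e_1}$ and $v_{e_2}$ may contain explicit zero-scalar summands; this causes no issue because the $(\star)$-rule is stated for arbitrary subsets $J, K$ of the common index set $I$, and the scalars in absent indices do not enter the orthogonality condition.
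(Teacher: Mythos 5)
Your proof is correct and follows essentially the same route as the paper: structural induction on the derivation of $\ODe A S$, using the base and congruence rules of $\bot$ for the structural cases and the row-orthogonality of the unitary matrix (via the last $(\star)$-rule) for the change-of-basis case. If anything, your treatment of the tensor case (explicitly splitting on whether the shared component is equal) is slightly more careful than the paper's.
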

\begin{proof}
	The proof is done by induction on ${\rm OD}$.
	\begin{itemize}
		\item $\ODe{A}{\set x}$. There is no pair of different terms in $\set x$.
		\item $\ODe{\one}{\set *}$. There is no pair of different terms in $\set *$.
		\item $\ODe{A \oplus B}{S \boxplus T}$. There are several cases: either both
			terms are of the form $\inl -$, namely $t_1 = \inl t'_1$ and $t_2 = \inl
			t'_2$, with $t'_1$ and $t'_2$ in $S$, in which case the induction
			hypothesis on $\ODe A S$ gives that $t'_1~\bot~t'_2$, and thus $\inl
			t'_1~\bot~\inl t'_2$; or both are of the $\inr -$, the case is similar
			with the induction hypothesis on $\ODe B T$; or $t_1 = \inl t'_1$ and $t_2
			= \inr t'_2$, then we have directly $\inl t'_1~\bot~\inr t'_2$.
		\item $\ODe{A \otimes B}{S}$. Both cases are similar. Suppose that
			$\ODe{A}{\pi_1(S)}$.  We know that $t_1 = t'_1 \otimes t''_1$ and $t'_2
			\otimes t''_2$. The induction hypothesis on $\ODe{A}{\pi_1(S)}$ gives that
			$t'_1~\bot~t'_2$ and thus $t'_1 \otimes t''_1~\bot~t'_2 \otimes t''_2$.
		\item $\ODe{\nat}{S^{\oplus 0}}$. If one of $t_1$ or $t_2$ is $\zero$, the
			conclusion is direct; else, $t_1 = \suc t'_1$ and $t_2 = \suc t'_2$ and
			the induction hypothesis on $\ODe \nat S$ gives that $t'_1~\bot~t'_2$ and
			thus $\suc t'_1~\bot~\suc t'_2$.
		\item $\ODe{A}{S^\alpha}$. By induction hypothesis, all the terms in $S$
			are pairwise orthogonal; and the matrix $\alpha$ is unitary, which means
			that the inner product of its columns is zero when the columns are
			different, which ensures that two different terms in $S^\alpha$ are
			orthogonal.
	\end{itemize}
\end{proof}

Note that while orthogonality ensures non-overlapping, it does not
ensure exhaustivity, only ${\rm OD}_A$ does. The next lemma details
the exhaustivity of ${\rm OD}_A$, and is a consequence of a later result,
that uses the notion of substitution (see \secref{sub:qua-substitution}
and Lemma~\ref{lem:od-output-exhaustive}).

\begin{proposition}
	\label{lem:od-exhaustive}
	If $\ODe A S$ and $\Delta \vdash e \colon A$, then there exists $e' \in S$
	such that $\neg(e~\bot~e')$.
\end{proposition}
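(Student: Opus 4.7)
The plan is to proceed by induction on the derivation of $\ODe A S$, using inversion on the typing derivation $\Delta \vdash e \colon A$ at each step. The strategy at every case is contrapositive: assuming $e$ is orthogonal to \emph{every} element of $S$, we derive a contradiction, or we directly exhibit a witness $e' \in S$ with $\neg(e~\bot~e')$ by reading off the top-level constructor of $e$.

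For the base cases, I would argue by inspection of Definition~\ref{def:orthogonality}: no inference rule produces $e~\bot~x$ when $x$ is a variable (variables appear in no conclusion of an orthogonality rule, since orthogonality is forced to be derivable \emph{statically} from the head constructor), so in $\ODe{A}{\set{x}}$ we take $e' = x$. Likewise, no rule concludes $e~\bot~*$ for $e$ of type $\one$, so $e' = *$ works for $\ODe{\one}{\set *}$. For the $\oplus$ and $\nat$ cases, I would invert on $e$: since direct sums (resp.\ natural numbers) are strictly linear, a well-typed $e \colon A \oplus B$ is a sum of terms each having head $\inl$, $\inr$, or being a variable; the only rules producing orthogonality between $\inl t_1$ and $\inl t_2$ (or $\inr/\inr$, or $\suc/\suc$) require the immediate subterms to be orthogonal, so applying the induction hypothesis on the subterms and on the corresponding sub-decomposition $S$ or $T$ yields a non-orthogonal element of $S \boxplus T$ (resp.\ of $S^{\oplus 0}$). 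The tensor case $\ODe{A \otimes B}{S}$ proceeds analogously using the two-stage structure of OD at tensor types: by inversion $e = \pair{e_1}{e_2}$ (or a variable, or a sum), and we iteratively apply IH first on $\pi_1(S)$ (resp.\ $\pi_2(S)$) to locate a candidate column, then on the corresponding $S^1_b$ (resp.\ $S^2_b$).

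The genuinely delicate case, which I expect to be the main obstacle, is the change-of-basis rule $\ODe{A}{S^\alpha}$ with $\alpha = (\alpha_{e,e'})_{(e,e') \in S \times S}$ a unitary matrix. Here I would reason by contradiction: suppose $e~\bot~e''$ for every $e'' = \Sigma_{e' \in S}(\alpha_{e, e'} \cdot e') \in S^\alpha$. Unfolding the two sum rules of Definition~\ref{def:orthogonality}, each such orthogonality reduces, after writing $e$ itself in a common-support sum form over $S$ (possibly with some coefficients equal to $0$, using the second sum rule to pad), to an inner-product condition $\sum_{e' \in S} \bar{\beta_{e'}}\, \alpha_{e, e'} = 0$ for every row $e$ of $\alpha$. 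Because $\alpha$ is unitary its rows form an orthonormal family, hence span; the only solution is $\beta_{e'} = 0$ for all $e'$, meaning $e$ has vanishing coefficients on every element of $S$. Combined with the induction hypothesis applied to $\ODe A S$, which guarantees that $e$ must be non-orthogonal to some element of $S$, this forces a contradiction with the computation above.

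To make the change-of-basis step go through cleanly, I would first establish an auxiliary lemma that, for a well-typed $e \colon A$ and a decomposition $\ODe A S$, one may present $e$ (up to the provable equalities implicit in Definition~\ref{def:orthogonality}) as a linear combination supported on elements of $S$; this is precisely the ``reduction to canonical basis'' idea that motivates the reference to the stronger Lemma~\ref{lem:od-output-exhaustive} and to the machinery of substitution mentioned in the statement. Once that is in place, all inductive cases uniformly reduce to a finite-dimensional non-degeneracy argument about the matrix $\alpha$, and the exhaustivity follows.
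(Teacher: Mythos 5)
You have correctly located the crux: the paper gives no standalone induction for this proposition, but derives it as a corollary of the decomposition result Lemma~\ref{lem:od-output-exhaustive} (every closed expression of type $A$ is provably equal to $\Sigma_i(\alpha_i\cdot\sigma_i(s_i))$ with $s_i\in S$), which is exactly the auxiliary lemma you end up requiring. Your extra structural induction on the $\ODe{A}{S}$ derivation is therefore mostly scaffolding around the same key ingredient; the base, $\oplus$, $\otimes$ and $\nat$ cases you describe are fine (and the observation that no rule ever concludes $e~\bot~x$ or $e~\bot~*$ is correct).

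The one step I would not let stand as written is the $S^\alpha$ case. You propose to assume $e~\bot~e''$ for every $e''\in S^\alpha$, \emph{rewrite} $e$ into a common-support sum over $S$, and then \emph{invert} the two-$\Sigma$ orthogonality rule to extract inner-product conditions. This runs into two problems: the derivation you are inverting is a derivation about the literal syntactic $e$, not its rewritten form, and even after transporting along Lemma~\ref{lem:equal-ortho} the rewritten $e$ is supported on \emph{substitution instances} $\sigma_i(s_i)$ while the $e''$ are supported on $S$ itself, so the common pairwise-orthogonal family $I\supseteq J,K$ demanded by that rule need not exist and the clean condition $\sum_{e'}\bar{\beta}_{e'}\alpha_{e,e'}=0$ is not forced. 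The argument the paper intends goes in the opposite direction and avoids all of this: suppose $e~\bot~s$ for every $s\in S$; by Lemma~\ref{lem:ortho-subst-equiv} then $e~\bot~\sigma_i(s_i)$ for each $i$, hence $e~\bot~\Sigma_i(\alpha_i\cdot\sigma_i(s_i))$ by the $\Sigma$-rule of Definition~\ref{def:orthogonality}, hence $e~\bot~e$ by Lemma~\ref{lem:equal-ortho} and Lemma~\ref{lem:od-output-exhaustive}, contradicting the fact that orthogonal terms are distinct. Building the orthogonality derivation forward rather than inverting one is what makes the unitary-matrix case (and indeed every case) go through uniformly; your unitarity computation is then not needed here, since it is already absorbed into the $S^\beta$ case of Lemma~\ref{lem:od-output-exhaustive}.
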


This notion of orthogonal decomposition allows us to introduce \emph{unitary}
abstractions in our syntax. A basic unitary has the form $\unibasique$ and is
well-typed only if $(b_i)_i$ forms a basis and $(e_i)_i$ also forms a basis; we
then retrieve the intuition that a unitary should be equivalent to a change
of orthonormal basis. The typing rules for deriving unitaries and unitary
operations are detailed in Figure~\ref{fig:typisos}.

\begin{figure}[!h]
	\[
		\begin{array}{c}
			\infer{
				\entailiso \unibasique \colon A \iso B,
			}{
				\begin{array}{@{}l@{}}
					\Delta_i \entail b_i \colon A \\
					\Delta_i \entail e_i \colon B
				\end{array}
				&
				\begin{array}{@{}l@{}}
					\OD{A}{\{b_1, \dots, b_n\}} \\
					\ODe{B}{\{e_1, \dots, e_n\}}
				\end{array}
			}
			\qquad
			\infer{
				\entailiso \omega\inv \colon B \iso A,
			}{
				\entailiso \omega \colon A \iso B
			}
			\mynl
			\infer{
				\entailiso \omega_2 \circ \omega_1 \colon A \iso C,
			}{
				\entailiso \omega_1 \colon A \iso B
				&
				\entailiso \omega_2 \colon B \iso C
			}
			\qquad
			\infer{
				\entailiso \omega_1 \otimes \omega_2 \colon A_1 \otimes A_2 \iso B_1 \otimes B_2,
			}{
				\entailiso \omega_1 \colon A_1 \iso B_1
				&
				\entailiso \omega_2 \colon A_2 \iso B_2
			}
			\mynl
			\infer{
				\entailiso \omega_1 \oplus \omega_2 \colon A_1 \oplus A_2 \iso B_1 \oplus B_2,
			}{
				\entailiso \omega_1 \colon A_1 \iso B_1
				&
				\entailiso \omega_2 \colon A_2 \iso B_2
			}
			\qquad
			\infer{
				\entailiso \ctrl \isoterm \colon (\one \oplus \one) \otimes A \iso
				(\one \oplus \one) \otimes A.
			}{
				\entailiso \isoterm \colon A \iso A
			}
		\end{array}
	\]
	\caption{Typing rules of unitaries.}
	\label{fig:typisos}
\end{figure}

Terms in our syntax are either expressions or an application of a unitary to a
term, in a similar style to the $\lambda$-calculus; however, we have seen that
abstractions are considered separately in the grammar. The typing rules are
the same as the one for values given in Figure~\ref{fig:typing-values-simple},
with the addition of a rule that enables the application of a unitary to a
term. The details are in Figure~\ref{fig:typterms}. 

\begin{figure}[!h]
	\[\begin{array}{c}
		\infer{
			\emptyset \entail * \colon \one,
		}
		{}
		\qquad
		\infer{
			x \colon A \entail x \colon A,
		}{}
		\qquad
		\infer{
			\Delta_1,\Delta_2\entail \pair{t_1}{t_2} \colon A\otimes B,
		}{
			\Delta_1\entail t_1 \colon A
			&
			\Delta_2\entail t_2 \colon B
		}
		\\[1.5ex]
		\infer{
			\Delta\entail \inl{t} \colon A\oplus B,
		}{
			\Delta\entail t \colon A
		}
		\qquad
		\infer{
			\Delta\entail \inr{t} \colon A\oplus B,
		}{
			\Delta\entail t \colon B
		}
		\\[1.5ex]
		\infer{\vdash \zero \colon \nat,}{}
		\qquad
		\infer{
			\Delta \vdash \suc t \colon \nat,
		}{
			\Delta \vdash t \colon \nat
		}
		\\[1.5ex]
		\infer{
			\Delta \entail \Sigma_i (\alpha_i \cdot t_i) \colon A,
		}{
			\Delta \entail t_i \colon A
			&
			\sum_i \abs{\alpha_i}^2 = 1
			&
			\forall i\not= j, t_i~\bot~t_j
		}
		\qquad
		\infer{
			\Delta \entail \isoterm~t \colon B.
		}{
			\entailiso \isoterm \colon A \iso B
			&
			\Delta \entail t \colon A
		}
	\end{array}
	\]
	\caption{Typing rules of terms.}
	\label{fig:typterms}	
\end{figure}

The application of a unitary to a term is what carries the computational power
of the language. We have seen that in the $\lambda$-calculus, the
$\beta$-reduction reduces an application to a term where a substitution is
performed. A similar mechanism is at play in this syntax; however, we have seen
that the left-hand side in a unitary abstraction can contain several variables.
Hence the introduction of \emph{valuations}, which we use to perform
substitution and to define our equivalent of $\beta$-reduction.

\subsection{Valuations and Substitution}
\label{sub:qua-substitution}

We recall the formalisation proposed in~\cite{sabry2018symmetric}, with the
notion of valuation: a partial map from a finite set of variables (the support)
to a set of values. Given two basis values $b$ and $b'$, we build the smallest
valuation $\sigma$ such that the patterns of $b$ and $b'$ match and such that the
application of the substitution to $b$, written $\sigma(b)$, is equal to $b'$.
We denote the matching of a basis value $b'$ against a pattern $b$ and its
associated valuation $\sigma$ as the predicate $\match{\sigma}{b}{b'}$. Thus,
$\match{\sigma}{b}{b'}$ means that $b'$ matches with $b$ and gives a smallest
valuation $\sigma$, while $\sigma(b)$ is the substitution performed. The
predicate $\match{\sigma}{b}{b'}$ is defined as follows, with $\ini{b}$
being either $\inl{b}$ or $\inr{b}$.
\[
	\begin{array}{c}
		\infer{\match{\sigma}{*}{*}}{}
		\quad
		\infer{\match{\sigma}{x}{b'}}{\sigma = \{ x \mapsto b'\}}
		\quad
		\infer{\match{\sigma}{\ini b}{\ini b'}}{\match{\sigma}{b}{b'}}
		\mynl
  	\infer{
			\match{\sigma}{b_1 \otimes b_2}{b'_1 \otimes b'_2}
		}{
  		\match{\sigma}{b_1}{b'_1}
  		&
  		\match{\sigma}{b_2}{b'_2}
  		&
  		\text{supp}(\sigma_1) \cap \text{supp}(\sigma_2) = \emptyset
  		&
  		\sigma = \sigma_1\cup\sigma_2
    }
		\mynl
		\infer{\match{\sigma}{\zero}{\zero}}{}
		\quad
		\infer{\match{\sigma}{\suc b}{\suc b'}}{\match{\sigma}{b}{b'}}
	\end{array}
\]
Besides basis values, we authorise valuations to replace variables with any
expression, \emph{e.g.} $\set{x \mapsto e}$. Whenever $\sigma$ is a valuation
whose support contains the variables of $t$, we write $\sigma(t)$ for the value
where the variables of $t$ have been replaced with the corresponding terms in
$\sigma$, as follows: 
\begin{itemize}
	\item $\sigma(x) = e$ if $\{x\mapsto e\}\subseteq \sigma$, 
	\item $\sigma(*) = *$,
	\item $\sigma(\inl{t}) = \inl{\sigma(t)}$,
	\item $\sigma(\inr{t}) = \inr{\sigma(t)}$, 
	\item $\sigma(\pair{t_1}{t_2}) = \pair{\sigma(t_1)}{\sigma(t_2)}$, 
	\item $\sigma(\suc t) = \suc \sigma(t)$,
	\item $\sigma(\Sigma_i (\alpha_i \cdot t_i)) = \Sigma_i (\alpha_i \cdot
		\sigma(t_i))$,
	\item $\sigma(\omega~t) = \omega~\sigma(t)$.
\end{itemize}

\begin{remark}
	If $\match{\sigma}{b}{b'}$, then $\sigma(b) = b'$.
\end{remark}

\begin{example}
	Given a valuation $\sigma$ such that $\{ x \mapsto \inl{\inr *} \} \subseteq
	\sigma$, then $\sigma(x)$ is the expression $\inl{\inr *}$.
\end{example}

\begin{example}
	Given a valuation $\sigma$ such that $\{ x \mapsto \inl *, y \mapsto \inr *
	\} \subseteq \sigma$, then $\sigma(x \otimes y)$ is the expression $(\inl *)
	\otimes (\inr *)$.
\end{example}

We can now show the soundness of orthogonality with regard to pattern-matching:
in other words, orthogonality is stable by substitution, and thus the previous
remark ensures there cannot be any match between two basis values if they are
orthogonal.

\begin{lemma}
	\label{lem:ortho-subst-equiv}
	Given two terms $t_1$ and $t_2$, if $t_1~\bot~t_2$, then for all valuations
	$\sigma_1$ and $\sigma_2$, $\sigma_1(t_1)~\bot~\sigma_2(t_2)$.
\end{lemma}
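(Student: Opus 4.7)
The plan is to proceed by structural induction on the derivation of $t_1 \bot t_2$ given by Definition~\ref{def:orthogonality}. The guiding observation is that substitution by any valuation $\sigma$ commutes with every term constructor: $\sigma(\inl t) = \inl \sigma(t)$, $\sigma(\inr t) = \inr \sigma(t)$, $\sigma(\suc t) = \suc \sigma(t)$, $\sigma(t \otimes t') = \sigma(t) \otimes \sigma(t')$, $\sigma(\omega~t) = \omega~\sigma(t)$, and $\sigma(\Sigma_i \alpha_i t_i) = \Sigma_i \alpha_i \sigma(t_i)$. Thus each orthogonality rule can be lifted through substitution as long as the induction hypothesis is available on its subterm premises.

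First I would dispatch the two axiom cases. For $\inl t_1 \bot \inr t_2$, after substitution the terms become $\inl \sigma_1(t_1)$ and $\inr \sigma_2(t_2)$, which are orthogonal by the very same axiom; similarly $\sigma_1(\zero) = \zero$ and $\sigma_2(\suc t) = \suc \sigma_2(t)$ are orthogonal by the corresponding axiom. Next, the congruence rules for $\suc$, $\inl$, $\inr$, both pair rules and $\omega$-application all share the pattern ``from $t_1 \bot t_2$ (possibly with free sides unconstrained) derive $C[t_1] \bot C[t_2]$''. In each, applying the induction hypothesis to the premise yields $\sigma_1(t_1) \bot \sigma_2(t_2)$, and reapplying the same congruence rule concludes, using commutation of $\sigma_i$ with $C$.

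For the sum rules where one side is a single term $t$ orthogonal to a sum $\Sigma_i \alpha_i t_i$, the premise gives $t \bot t_i$ for every $i \in I$. The induction hypothesis delivers $\sigma_1(t) \bot \sigma_2(t_i)$ for every $i$, and reapplying the same rule yields $\sigma_1(t) \bot \sigma_2(\Sigma_i \alpha_i t_i)$. The variant involving the distinguished index $*$ with $\alpha_* = 0$ and $t_* = t$ is handled in exactly the same way, since the side conditions on the coefficients are preserved verbatim by substitution.

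The hard part will be the final rule, which derives $\Sigma_{j \in J}\alpha_j t_j \bot \Sigma_{k \in K}\beta_k t_k$ from pairwise orthogonality of a family $(t_i)_{i \in I}$ with $J, K \subseteq I$ and $\sum_{i \in J \cap K} \bar\alpha_i\beta_i = 0$. After substitution the candidate orthogonal sums are $\Sigma_j \alpha_j \sigma_1(t_j)$ and $\Sigma_k \beta_k \sigma_2(t_k)$, whose underlying ``bases'' are the distinct families $(\sigma_1(t_i))_i$ and $(\sigma_2(t_i))_i$. My plan is to introduce a common index set obtained by taking two disjoint copies of $I$, assigning $(\sigma_1(t_i))$ to the first copy and $(\sigma_2(t_i))$ to the second, and to reapply the sum-versus-sum rule on this enlarged family. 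Pairwise orthogonality on distinct same-copy indices, and on cross-copy indices with $i \neq j$, follows from the induction hypothesis applied with valuations $(\sigma_1, \sigma_1)$, $(\sigma_2, \sigma_2)$ and $(\sigma_1, \sigma_2)$ respectively. The delicate accounting is on cross-copy pairs with $i = j$: whenever $\sigma_1(t_i)$ and $\sigma_2(t_i)$ coincide, the two copies of the index must be merged and the coefficient contribution transferred to the merged index, which is precisely where the original coefficient condition $\sum_{i \in J \cap K} \bar\alpha_i\beta_i = 0$ is used; whenever they differ, the two copies remain separate and the product $\bar\alpha_i\beta_i$ does not contribute to the new coefficient sum. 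Carefully bookkeeping this merging and splitting, and verifying that the resulting coefficient condition still sums to zero, is the technical crux of the proof.
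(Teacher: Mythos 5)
Your high-level plan --- induction on the derivation of $t_1~\bot~t_2$, using the fact that every valuation commutes with every term constructor --- is exactly the paper's plan; the paper's own proof only writes out the base case $\inl{t_1}~\bot~\inr{t_2}$ and declares the remaining cases ``direct by induction''. You are right that the constructor cases and the single-term-versus-sum cases go through mechanically, and you are also right to isolate the sum-versus-sum rule as the crux. But your treatment of that case has a genuine gap. Your dichotomy on cross-copy pairs at a shared index $i \in J \cap K$ (``$\sigma_1(t_i)$ and $\sigma_2(t_i)$ coincide, so merge'' versus ``they differ, so keep separate'') omits a third, fatal possibility: when $\sigma_1 \neq \sigma_2$ the two images can be distinct yet \emph{not} orthogonal --- take $t_i = \inl{x}$ with $\sigma_1(x) = \inl{*}$ and $\sigma_2(x) = \frac{1}{\sqrt 2}\cdot(\inl{*}) + \frac{1}{\sqrt 2}\cdot(\inr{*})$ --- in which case they can neither be merged into one index nor occupy two distinct indices of the pairwise-orthogonal family that the rule demands. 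Independently, dropping the products $\bar{\alpha_i}\beta_i$ of the unmerged indices from the coefficient sum breaks the zero-sum side condition, since only the total over $J \cap K$ was assumed to vanish, not its restriction to the indices whose images happen to coincide.

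This is not a bookkeeping issue that more care will resolve: for genuinely distinct valuations the statement fails on this rule. With $t_1 = \inl{x}$, $t_2 = \inr{x}$, $u = \frac{1}{\sqrt 2}\cdot t_1 + \frac{1}{\sqrt 2}\cdot t_2$ and $v = \frac{1}{\sqrt 2}\cdot t_1 - \frac{1}{\sqrt 2}\cdot t_2$ one derives $u~\bot~v$, yet with the two valuations above the closed terms $\sigma_1(t_1) = \inl{(\inl{*})}$ and $\sigma_2(t_1) = \inl{\left(\frac{1}{\sqrt 2}\cdot(\inl{*}) + \frac{1}{\sqrt 2}\cdot(\inr{*})\right)}$ are distinct and non-orthogonal, so no pairwise-orthogonal indexing family containing all the summands of $\sigma_1(u)$ and $\sigma_2(v)$ exists, and a short exhaustive check shows that no rule of Definition~\ref{def:orthogonality} can conclude $\sigma_1(u)~\bot~\sigma_2(v)$. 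The instance the development actually needs (e.g.\ in Lemma~\ref{lem:qua-subst-type}) is $\sigma_1 = \sigma_2 = \sigma$, where the difficulty disappears entirely: the induction hypothesis makes the single family $(\sigma(t_i))_{i \in I}$ pairwise orthogonal, the index sets and coefficients are untouched, and the rule reapplies verbatim. You should therefore prove the two-valuation statement only for the constructor rules (which is all that is needed for basis values and pattern matching) and restrict the $\Sigma$-cases to a common valuation, rather than pursue the merge-and-split construction.
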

\begin{proof}
	Observe that $\sigma_1(\inl t_1) = \inl{\sigma_1(t_1)}$ and $\sigma_2(\inr
	t_2) = \inr{\sigma_2(t_2)}$ and thus, whatever the valuations are, those two
	terms are orthogonal.  The rest of the proof falls directly by induction on
	the definition of $\bot$.
\end{proof}

If one of the basis values is closed, we observe that there is an equivalence
between matching the pattern and not being orthogonal; which also implies that
different patterns are orthogonal.

\begin{proposition}
	\label{prop:patterns-are-orthogonal}
	Given two well-typed basis values $\Delta \entail b \colon A$ and $~\entail
	b' \colon A$, $\neg(b~\bot~b')$ iff there exists $\sigma$ such that
	$\match{\sigma}{b}{b'}$.
\end{proposition}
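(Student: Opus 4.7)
I would prove both directions separately; the statement is a biconditional about closed $b'$ matching a pattern $b$ that may contain variables.

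For the direction $(\Leftarrow)$, I would proceed by induction on the derivation of $\match{\sigma}{b}{b'}$. In each case I would enumerate the inference rules of Definition~\ref{def:orthogonality} that could possibly conclude $b \mathbin{\bot} b'$ and show they do not apply. For the base cases: $\match{\sigma}{*}{*}$ yields $*$ vs.\ $*$, and no rule of $\bot$ derives $* \mathbin{\bot} *$; for $\match{\sigma}{x}{b'}$, no rule derives $x \mathbin{\bot} b'$ since there is no rule with a bare variable on either side (basis values exclude $\Sigma$, so the $\Sigma$-rules do not apply). For the inductive cases $\ini b \mathbin{\bot} \ini b'$, $b_1 \otimes b_2 \mathbin{\bot} b'_1 \otimes b'_2$ and $\suc b \mathbin{\bot} \suc b'$, the only applicable orthogonality rules are congruence rules whose premises contradict the induction hypothesis. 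For $\zero$ vs.\ $\zero$, no rule applies.

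For the direction $(\Rightarrow)$, I would proceed by induction on the structure of $b$, using the typing hypothesis $\emptyset \vdash b' \colon A$ and type inversion to constrain the shape of $b'$. The cases $b = *$ and $b = x$ are immediate: the closed $b'$ of type $\one$ must be $*$, giving $\match{\emptyset}{*}{*}$; and $\match{\set{x \mapsto b'}}{x}{b'}$ always holds. If $b = \inl b_1$ of type $A \oplus B$, then $b'$ is either $\inl b'_1$ or $\inr b'_2$; the latter would give $b \mathbin{\bot} b'$ by the rule $\inl t_1 \mathbin{\bot} \inr t_2$, so $b' = \inl b'_1$. Then $\neg(b \mathbin{\bot} b')$ forces $\neg(b_1 \mathbin{\bot} b'_1)$, and the induction hypothesis gives a matching valuation. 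The case $b = \inr b_1$ is symmetric, and $\zero$ vs.\ $\suc$ is analogous. For $b = b_1 \otimes b_2$, type inversion gives $b' = b'_1 \otimes b'_2$; both $\neg(b_1 \mathbin{\bot} b'_1)$ and $\neg(b_2 \mathbin{\bot} b'_2)$ must hold (else a congruence rule would derive $b \mathbin{\bot} b'$), so the induction hypothesis yields valuations $\sigma_1, \sigma_2$ with disjoint support by linearity of the typing context, and their union is the desired $\sigma$.

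The main obstacle is the verification that the orthogonality rules admit only the expected congruence-style derivations when restricted to basis values; once this is established, the structural recursion is straightforward. A small subtlety is that the rules involving $\Sigma$ in Definition~\ref{def:orthogonality} could in principle apply, but since basis values contain no $\Sigma$-constructor and no unitary applications, those rules are vacuous in this setting. One must also be careful in the tensor case that the disjointness requirement of the matching rule for pairs is satisfied, which follows from the linear discipline of typing contexts.
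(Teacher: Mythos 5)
Your proposal is correct and follows essentially the same route as the paper, which proves the statement by a direct induction on the typing derivation $\Delta \entail b \colon A$ (equivalently, on the structure of the basis value $b$), using closedness of $b'$ for type inversion and the observation that the $\Sigma$- and $\omega$-rules of orthogonality are vacuous on basis values. Splitting the two directions into an induction on the matching derivation and an induction on $b$ is only a cosmetic reorganisation of the same argument.
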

\begin{proof}
	This is proven by a direct induction on $\Delta \entail b \colon A$.
\end{proof}

The next two lemmas provide a strong link between orthogonal decompositions and
substitutions.

\begin{lemma}[Exhaustivity and non-overlapping]
	\label{lem:od-substitution}
	Assume that $\OD A S$; then for all closed basis values $~\entail b' \colon
	A$, there exists a unique $b \in S$ and a unique $\sigma$ such that
	$\match{\sigma}{b}{b'}$.
\end{lemma}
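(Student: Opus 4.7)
The plan is to proceed by induction on the derivation of $\OD A S$, establishing existence and uniqueness of the pair $(b,\sigma)$ simultaneously at each step. The base cases are essentially forced by the syntax: if $S = \set x$ for a variable $x$, then necessarily $b = x$ and $\sigma = \set{x\mapsto b'}$ by the variable rule of $\match - -$; and if $\OD{\one}{\set *}$, then typing forces the only closed basis value of type $\one$ to be $*$ itself, so $b = *$ and $\sigma = \emptyset$.

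For the sum rule $\OD{A\oplus B}{S\boxplus T}$ (and analogously for the $\nat$ rule), the key step is to invert on the form of the closed basis value $b'$: by typing, it must be either $\inl b''$ with $\vdash b''\colon A$ or $\inr b''$ with $\vdash b''\colon B$, and I apply the induction hypothesis to the corresponding subderivation ($\OD A S$ or $\OD B T$) to obtain the unique matching pattern. Cross-side uniqueness is secured because $\inl$ and $\inr$ are distinct constructors, so the two halves $\{\inl s \alt s\in S\}$ and $\{\inr t \alt t\in T\}$ cannot both produce a match for the same $b'$. The $\nat$ case is identical with $\zero$ and $\suc-$ playing the roles of $\inl*$ and $\inr-$.

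The main obstacle is the tensor case. Assume $\OD{A\otimes B}{S}$ was derived via $\OD{A}{\pi_1(S)}$ together with $\OD{B}{S^1_b}$ for every $b\in\pi_1(S)$ (the symmetric derivation through $\pi_2$ is handled in the same way). A closed basis value of type $A\otimes B$ decomposes uniquely in the syntax as $b' = b'_1 \otimes b'_2$ with $\vdash b'_1 \colon A$ and $\vdash b'_2 \colon B$. By the induction hypothesis on $\OD{A}{\pi_1(S)}$, I obtain a unique $b_1\in\pi_1(S)$ and $\sigma_1$ with $\match{\sigma_1}{b_1}{b'_1}$; then, since $b_1\in\pi_1(S)$, the induction hypothesis on $\OD{B}{S^1_{b_1}}$ yields a unique $b_2\in S^1_{b_1}$ and $\sigma_2$ with $\match{\sigma_2}{b_2}{b'_2}$. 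The candidates $b = b_1\otimes b_2 \in S$ and $\sigma = \sigma_1\cup\sigma_2$ then provide the desired match, provided the side condition $\mathrm{supp}(\sigma_1)\cap\mathrm{supp}(\sigma_2)=\emptyset$ of the tensor rule of $\match--$ holds.

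This disjointness of supports is the delicate point, and I expect it to be the main technical obstacle: it follows from the linearity of the typing of basis values (distinct bound variables by Barendregt's convention), which ensures that the variables occurring in $b_1$ and $b_2$ are disjoint, hence their induced valuations have disjoint supports. Uniqueness in the tensor case follows because any $b\in S$ matching $b'$ must syntactically be of the form $c_1\otimes c_2$ (since $\match--$ preserves top-level constructors), with $c_1\in\pi_1(S)$ matching $b'_1$ and $c_2\in S^1_{c_1}$ matching $b'_2$; the inductive uniqueness at each factor then forces $c_1 = b_1$ and $c_2 = b_2$, and the two valuations coincide piecewise.
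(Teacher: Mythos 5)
Your proof is correct and follows essentially the same route as the paper's: induction on the derivation of $\OD A S$, with inversion on the shape of the closed basis value $b'$ in each case and the induction hypotheses applied to the subderivations (first on $\pi_1(S)$, then on $S^1_{b_1}$ in the tensor case). The paper's own argument is terser and leaves implicit the two points you spell out — the disjointness of supports coming from linearity of the typing of basis values, and cross-side uniqueness for $\oplus$ and $\nat$ — but these are exactly the right justifications.
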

\begin{proof}
	This is proven by induction on the derivation of $\OD A S$.
	\begin{itemize}
		\item If $\OD{A}{\set x}$. There is only $x$ in $S$ and $\set{x \mapsto b'}$
			is the only possible substitution.
		\item If $\OD{\one}{\set *}$, we have $b' = *$ and there is nothing to do.
		\item If $\OD{A \oplus B}{S \boxplus T}$, there are two cases:
			\begin{itemize}
				\item either $b' = \inl b'_A$, in which case the induction hypothesis
					gives a unique $b_A \in S$ and a unique $\sigma$ such that
					$\match{\sigma}{b_A}{b'_A}$, and thus $\match{\sigma}{\inl b_A}{b}$
					in a unique way,
				\item or $b' = \inr b'_B$, and a similar argument gives a unique match
					$\match{\sigma}{\inr b_B}{b'}$.
			\end{itemize}
		\item If $\OD{A \otimes B}{S}$, $b' = b'_A \otimes b'_B$, in both cases to
			derive ${\rm OD}$, we get unique $b_A$, $b_B$, $\sigma_A$ and $\sigma_B$
			such that $\match{\sigma_A \cup \sigma_B}{b_A \otimes b_B}{b'}$.
		\item If $\OD{\nat}{S^{\oplus 0}}$, there are two cases: either $b' =
			\zero$, in which case there is nothing to do, or $b' = \suc b''$, and the
			induction hypothesis gives a unique $b$ and a unique $\sigma$ such that
			$\match{\sigma}{b}{b''}$ and thus $\match{\sigma}{\suc b}{b'}$.
	\end{itemize}
\end{proof}

Observe that some of the results in this section focus on \emph{basis values},
and threfore do not involve linear combinations. This is because unitary
abstractions are formed as a set of clauses such as $b_i \iso e_i$, where the
terms on the left can only be basis values, and this allows us to narrow down
the pattern-matching to basis values only.

The definition of valuation $\sigma$ does not involve any condition on types or
type judgements. However, we need this sort of condition to formulate a
substitution lemma, hence the next definition.

\begin{definition}
	\label{def:well-valuation}
	A valuation $\sigma$ is said to be well-formed with regard to a context
	$\Delta$ if for all $(x_i \colon A_i) \in \Delta$, we have $\set{x_i \mapsto
	e_i} \subseteq \sigma$ and $~\entail e_i \colon A_i$. We write $\Delta
	\Vdash \sigma$ for a well-formed valuation with regard to $\Delta$. 
\end{definition}

\begin{remark}
	The valuation $\sigma$ obtained in Lemma~\ref{lem:od-substitution} is
	well-formed iff $b$ is well-typed.
\end{remark}

\begin{lemma}
	\label{lem:qua-subst-type}
	Given a well-typed term $\Delta \entail t \colon A$ and a well-formed
	valuation $\Delta \Vdash \sigma$, then we have $\entail \sigma(t) \colon A$.
\end{lemma}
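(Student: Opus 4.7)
The plan is to proceed by induction on the derivation of $\Delta \vdash t \colon A$, handling each typing rule from Figure~\ref{fig:typterms} in turn. The base cases are largely unfolding definitions: for $\emptyset \vdash * \colon \one$ we have $\sigma(*) = *$; for $x \colon A \vdash x \colon A$ the well-formedness hypothesis $\Delta \Vdash \sigma$ directly provides a term $e$ with $\vdash e \colon A$ such that $\sigma(x) = e$; and for $\vdash \zero \colon \nat$ there is nothing to substitute.

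The inductive cases for $\inl$, $\inr$ and $\suc$ are straightforward applications of the induction hypothesis combined with the corresponding typing rule. For the pair case $\Delta_1, \Delta_2 \vdash \pair{t_1}{t_2} \colon A \otimes B$, I would first use the linearity convention to split $\sigma = \sigma_1 \cup \sigma_2$, where each $\sigma_i$ contains exactly the mappings for variables in $\Delta_i$, then observe that $\Delta_i \Vdash \sigma_i$ and apply the induction hypothesis to each component. The unitary application case $\Delta \vdash \omega~t \colon B$ simply applies the induction hypothesis to $t$ and reuses the derivation $\entailiso \omega \colon A \iso B$.

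The most delicate case is the linear combination $\Delta \vdash \Sigma_i (\alpha_i \cdot t_i) \colon A$. Here there are three premises to re-establish after substitution. Well-typedness of each $\sigma(t_i)$ follows directly from the induction hypothesis applied to $\Delta \vdash t_i \colon A$. The normalisation condition $\sum_i \abs{\alpha_i}^2 = 1$ is untouched by $\sigma$. The crucial point is preservation of pairwise orthogonality: for $i \neq j$ we have $t_i \bot t_j$ by hypothesis, and Lemma~\ref{lem:ortho-subst-equiv} immediately yields $\sigma(t_i) \bot \sigma(t_j)$ (taking $\sigma_1 = \sigma_2 = \sigma$). Combining these three facts with the sum typing rule gives $\vdash \sigma(\Sigma_i (\alpha_i \cdot t_i)) \colon A$.

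The main obstacle is essentially bookkeeping: the linearity constraint on contexts makes the splitting in the tensor case canonical and painless, so there is no real subtlety there. The only genuine content of the proof is the appeal to Lemma~\ref{lem:ortho-subst-equiv} in the sum case, which has already been established and packages exactly the right stability property of $\bot$ under substitution.
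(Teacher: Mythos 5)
Your proof is correct and follows essentially the same route as the paper: induction on the typing derivation, with the only substantive step being the appeal to Lemma~\ref{lem:ortho-subst-equiv} to preserve pairwise orthogonality in the linear-combination case. The extra care you take splitting $\sigma$ in the tensor case is harmless bookkeeping that the paper leaves implicit.
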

\begin{proof}
	The proof is done by induction on $\Delta \entail t \colon A$.
	\begin{itemize}
		\item $~\entail * \colon \one$. Direct.
		\item $x \colon A \entail x \colon A$. Since $x \colon A \Vdash \sigma$,
			there is a well-typed term $t$ such that $\set{x \mapsto e} \subseteq
			\sigma$ and thus $\sigma(x)$ is well-formed.
		\item $\Delta_1, \Delta_2 \entail t_1 \otimes t_2 \colon A \otimes B$.  By
			induction hypothesis, $\sigma(t_1)$ and $\sigma(t_2)$ are well-formed,
			and thus $\sigma(t_1 \otimes t_2) = \sigma(t_1) \otimes \sigma(t_2)$ is
			also well-typed.
		\item $\Delta \entail \ini t \colon A_1 \oplus A_2$.
			By induction hypothesis, $\sigma(t)$ is well-typed, and thus
			$\ini{\sigma(t)} = \sigma(\ini t)$ is also well-typed.
		\item $~\entail \zero \colon \nat$. Direct.
		\item $\Delta \entail \suc t \colon \nat$.
			By induction hypothesis, $\sigma(t)$ is well-typed, and thus
			$\suc{\sigma(t)} = \sigma(\suc t)$ is also well-typed.
		\item $\Delta \entail \Sigma_i (\alpha_i \cdot t_i) \colon A$.
			By induction hypothesis, $\sigma(t_i)$ is well-typed for all $i$, and
			thus $\Sigma_i (\alpha_i \cdot \sigma(t_i)) = \sigma(\Sigma_i (\alpha_i
			\cdot t_i))$ is also well-typed, thanks to
			Lemma~\ref{lem:ortho-subst-equiv}.
		\item $\Delta \entail \omega~t \colon B$.
			By induction hypothesis, $\sigma(t)$ is well-typed, and thus
			$\omega~\sigma(t) = \sigma(\omega~t)$ is also well-typed.
	\end{itemize}
\end{proof}

Substitutions, now properly defined, help us formalise how the language handles
operations. In the paper this work is based on \cite{sabry2018symmetric}, the
computational behaviour of the language is presented through an operational
semantics, while terms are considered up to linear algebraic equalities. In
this chapter, we choose to work entirely with an \emph{equational theory},
similar to the ones introduced in the previous chapters (see \secref{sub:ccc},
\secref{sub:sem-lambda-effects} and \secref{sub:theories}), whose details are
outlined in the next section. This equational theory contains both linear
algebraic considerations and the computational aspects of the language.

\section{Equational Theory}
\label{sec:qu-simple-eq-theory}

In this section, we define an equational theory for our language, akin to the
ones presented in the previous chapters for the $\lambda$-calculus (see
Figure~\ref{fig:moggi-rules}), for Moggi's metalanguage (see
Figure~\ref{fig:moggi-monad-rules}) and for the Central Submonad Calculus (see
Figure~\ref{fig:rules}). An equation judgement is written $\Delta \entail t_1 =
t_2 \colon A$, where $\Delta$ is a context, $A$ is a type and $t_1$ and $t_2$
are terms. We do not need to assume that $t_1$ and $t_2$ are well-typed, it is
derived (see Proposition~\ref{prop:equal-to-typed}). We provide the main
equational theory of our language in Figure~\ref{fig:eq-rules-qu-control}, with
the rules of reflexivity, symmetry and transitivity in
Figure~\ref{fig:eq-rules-base}, linear algebraic identities in
Figure~\ref{fig:eq-rules-vector-space}, and congruence identities in
Figure~\ref{fig:eq-rules-qu-control-cong}.

\begin{remark}
	Among the equational rules presented in this section, only the equations in
	Figure~\ref{fig:eq-rules-qu-control} provide an operational account of the
	language. They can be seen as a reduction system from left to right.  On the
	other hand, the equations in Figure~\ref{fig:eq-rules-vector-space} show that
	the algebraic power of Hilbert spaces and isometries can be captured within
	the type system.
\end{remark}

\begin{figure}[!h]
	\[ \begin{array}{c}
		\infer[(refl)]{
			\Delta \entail t = t \colon A
		}{
			\Delta \entail t \colon A
		}
		\qquad
		\infer[(symm)]{
			\Delta \entail t_2 = t_1 \colon A
		}{
			\Delta \entail t_1 = t_2 \colon A
		}
		\mynl
		\infer[(trans)]{
			\Delta \entail t_1 = t_3 \colon A
		}{
			\Delta \entail t_1 = t_2 \colon A
			& \Delta \entail t_2 = t_3 \colon A
		}
	\end{array}
	\]
	\caption{Basic equational rules.}
	\label{fig:eq-rules-base}
\end{figure}

\begin{figure}[!h]
	\[ \begin{array}{c}
		\infer[(perm)]{\Delta \entail \Sigma_i (\alpha_i \cdot t_i) = \Sigma_i
		(\alpha_{\ell(i)} \cdot t_{\ell(i)}) \colon A}{\ell \colon \set{1,\dots,n}
		\to \set{1,\dots,n}\text{ is a bijection} & \Delta \entail \Sigma_i
		(\alpha_i \cdot t_i) \colon A}
		\mynl
		\infer[(0.scal)]{\Delta \entail \Sigma_{i=1}^n (\alpha_i \cdot t_i) =
		\Sigma_{i=1}^{n-1} (\alpha_i \cdot t_i) \colon A}{\Delta \entail
		\Sigma_{i=1}^n (\alpha_i \cdot t_i) \colon A & \alpha_n = 0}
		\qquad
		\infer[(1.scal)]{\Delta \entail \Sigma_{i=1}^1 (1 \cdot t) = t \colon
		A}{\Delta \entail t \colon A}
		\mynl
		\infer[(fubini)]{\Delta \entail \Sigma_i \left( \alpha_i \cdot \Sigma_j (\beta_{ij}
		\cdot t_{j}) \right) = \Sigma_j ( \left( \sum_i \alpha_i \beta_{ij} \right) \cdot t_{j})
		\colon A}{
			\Delta \entail \Sigma_i \left( \alpha_i \cdot \Sigma_j (\beta_{ij} \cdot t_{j})
			\right) \colon A
		}
		\mynl
		\infer[(double)]{\Delta \entail \Sigma_i \left( \alpha_i \cdot \Sigma_j (\beta_{ij}
		\cdot t_{ij}) \right) = \Sigma_{ij} (\alpha_i \beta_{ij}\cdot t_{ij})
		\colon A}{
			\Delta \entail \Sigma_{ij} (\alpha_i \beta_{ij}\cdot t_{ij})
			\colon A
		}
		\mynl
		\infer[(\omega.linear)]{\Delta \entail \omega~\Sigma_i (\alpha_i \cdot
		t_i) = \Sigma_i (\alpha_i \cdot \omega~t_i) \colon B}
		{
			\entailiso \omega \colon A \iso B
			&
			\Delta \entail \Sigma_i (\alpha_i \cdot t_i) \colon A
		}
		\mynl
		\infer[(\iota.linear_1)]{\Delta \entail \inl{} \Sigma_i (\alpha_i \cdot
		t_i) = \Sigma_i (\alpha_i \cdot \inl t_i) \colon A \oplus B}{\Delta \entail
		\Sigma_i (\alpha_i \cdot t_i) \colon A}
		\mynl
		\infer[(\iota.linear_2)]{\Delta \entail \inr{} \Sigma_i (\alpha_i \cdot
		t_i) = \Sigma_i (\alpha_i \cdot \inr t_i) \colon A \oplus B}{\Delta \entail
		\Sigma_i (\alpha_i \cdot t_i) \colon B}
		\mynl
		\infer[(\otimes.linear_1)]{\Delta_1,\Delta_2 \entail t \otimes (\Sigma_i
		(\alpha_i \cdot t_i)) = \Sigma_i (\alpha_i \cdot t \otimes t_i) \colon A
		\otimes B}{\Delta_1 \entail t \colon A & \Delta_2 \entail \Sigma_i
		(\alpha_i \cdot t_i) \colon B}
		\mynl
		\infer[(\otimes.linear_2)]{\Delta_1,\Delta_2 \entail (\Sigma_i (\alpha_i
		\cdot t_i)) \otimes t = \Sigma_i (\alpha_i \cdot t_i \otimes t) \colon A
		\otimes B}{\Delta_1 \entail t \colon B & \Delta \entail \Sigma_i (\alpha_i
		\cdot t_i) \colon A}
		\mynl
		\infer[(S.linear)]{\Delta \entail \suc \Sigma_i (\alpha_i \cdot t_i) =
		\Sigma_i (\alpha_i \cdot \suc t_i) \colon \nat}{\Delta \entail \Sigma_i
		(\alpha_i \cdot t_i) \colon \nat}
	\end{array}
	\]
	\caption{Vector space and linear applications equational rules.}
	\label{fig:eq-rules-vector-space}
\end{figure}

\begin{figure}
	\[ \begin{array}{c}
		\infer[(\iota.eq_1)]{
			\Delta \entail \inl t_1 = \inl t_2 \colon A \oplus B
		}{
			\Delta \entail t_1 = t_2 \colon A
		}
		\qquad
		\infer[(\iota.eq_2)]{
			\Delta \entail \inr t_1 = \inr t_2 \colon A \oplus B
		}{
			\Delta \entail t_1 = t_2 \colon B
		}
		\mynl
		\infer[(\otimes.eq_1)]{
			\Delta, \Delta' \entail \pair{t_1}{t} = \pair{t_2}{t} \colon A \otimes B
		}{
			\Delta \entail t_1 = t_2 \colon A & \Delta' \entail t \colon B
		}
		\qquad
		\infer[(\otimes.eq_2)]{
			\Delta, \Delta' \entail \pair{t}{t_1} = \pair{t}{t_2} \colon A \otimes B
		}{
			\Delta \entail t_1 = t_2 \colon B & \Delta' \entail t \colon A
		}
		\mynl
		\infer[(S.eq)]{
			\Delta \entail \suc t_1 = \suc t_2 \colon \nat
		}{
			\Delta \entail t_1 = t_2 \colon \nat
		}
		\qquad
		\infer[(\omega.eq)]{
			\Delta \entail \omega~t_1 = \omega~t_2 \colon B
		}{
			\Delta \entail t_1 = t_2 \colon A & \entailiso \omega \colon A \iso B
		}
		\mynl
		\infer[(\Sigma.eq)]{
			\Delta \entail \Sigma_i (\alpha_i \cdot t_i) 
			= \Sigma_i (\alpha_i \cdot t'_i) \colon A
		}{
			\Delta \entail \Sigma_i (\alpha_i \cdot t_i) \colon A
			& \forall i, \Delta \entail t_i = t'_i \colon A
		}
	\end{array}
	\]
	\caption{Congruence equational rules of simply-typed quantum control.}
	\label{fig:eq-rules-qu-control-cong}
\end{figure}

\begin{figure}
	\[ \begin{array}{c}
		\infer[(\omega.\beta)]{\entail \unibasique~b' = \sigma(e_i) \colon B
		}{
			\entail b'\colon A & \entailiso \unibasique \colon A \iso B &
			\match{\sigma}{b_i}{b'} 
		}
		\mynl
		\infer[(\omega.inv)]{
			\entail \omega\inv~v = b \colon A
		}{
			\entail \omega~b = v \colon B
		}
		\qquad
		\infer[(\omega.comp)]{
			\Delta \entail (\omega_2 \circ \omega_1)~b = \omega_2~(\omega_1~b) \colon C
		}{
			\entailiso \omega_1 \colon A \iso B & \entailiso \omega_2 \colon B \iso C
			& \Delta \entail b \colon A
		}
		\mynl
		\infer[(\omega.\otimes)]{
			\entail (\omega_1 \otimes \omega_2)~(b_1 \otimes b_2) = (\omega_1~b_1)
			\otimes (\omega_2~b_2) \colon B_1 \otimes B_2
		}{
			\entailiso \omega_1 \colon A_1 \iso B_1 & \entailiso \omega_2 \colon A_2 \iso B_2
			& \entail b_1 \colon A_1 & \entail b_2 \colon A_2
		}
		\mynl
		\infer[(\omega.\oplus_1)]{
			\entail (\omega_1 \oplus \omega_2)~(\inl b) = \inl (\omega_1~b) \colon B_1 \oplus B_2
		}{
			\entailiso \omega_1 \colon A_1 \iso B_1 & \entailiso \omega_2 \colon A_2 \iso B_2
			& \entail b \colon A_1
		}
		\mynl
		\infer[(\omega.\oplus_2)]{
			\entail (\omega_1 \oplus \omega_2)~(\inr b) = \inr (\omega_2~b) \colon B_1 \oplus B_2
		}{
			\entailiso \omega_1 \colon A_1 \iso B_1 & \entailiso \omega_2 \colon A_2 \iso B_2
			& \entail b \colon A_2
		}
		\mynl
		\infer[(\omega.ctrl_1)]{
			\entail (\ctrl \omega)~((\inl *) \otimes b) = (\inl *) \otimes b \colon
			(\one \oplus \one) \otimes A
		}{
			\entailiso \omega \colon A \iso A
			& \entail b \colon A
		}
		\mynl
		\infer[(\omega.ctrl_2)]{
			\entail (\ctrl \omega)~((\inr *) \otimes b) = (\inr *) \otimes (\omega~b)
			\colon (\one \oplus \one) \otimes A
		}{
			\entailiso \omega \colon A \iso A
			& \entail b \colon A
		}
	\end{array}
	\]
	\caption{Computational equational rules of simply-typed quantum control.}
	\label{fig:eq-rules-qu-control}
\end{figure}

\subsection{Equations and typing}
\label{sub:qua-eq-typing}

We start by proving that the equational theory presented is sound with the
typing rules of the language. In other words, we show that if two terms are
equal in our theory, they are both well-typed. To do so, we need to show that
equality between terms preserve orthogonality.

\begin{lemma}
	\label{lem:equal-ortho}
	Given two terms $t_1$ and $t_2$ such that $t_1~\bot~t_2$ and $\Delta \entail
	t_1 = t'_1 \colon A$, then $t'_1~\bot~t_2$.
\end{lemma}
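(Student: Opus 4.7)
The plan is to proceed by induction on the derivation of $\Delta \entail t_1 = t'_1 \colon A$. The rules $(refl)$, $(symm)$ and $(trans)$ are handled immediately (the latter two using the symmetry of $\bot$ and a straightforward composition of the inductive hypothesis). For the congruence rules $(\iota.eq_i)$, $(\otimes.eq_i)$, $(S.eq)$, $(\omega.eq)$ and $(\Sigma.eq)$, we perform a secondary induction on the derivation of $t_1 \bot t_2$. In each such case, whenever the outermost constructor of $t_1$ agrees with that of the equation, the derivation of $t_1 \bot t_2$ must conclude with either the structural orthogonality rule for that constructor (to which the inductive hypothesis applies at the immediate subterm) or one of the sum rules on $t_2$ (which reduce the goal to componentwise orthogonalities and recurse).

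For the equations of Figure~\ref{fig:eq-rules-vector-space}, the key observation is that the orthogonality between sums and any other term is witnessed exclusively through two mechanisms: pairwise orthogonality of components via the rule $t~\bot~\Sigma_{i \in I} (\alpha_i \cdot t_i)$, and the inner-product-like identity $\sum_{i \in J \cap K} \bar{\alpha_i} \beta_i = 0$ of the bilateral sum rule. Both conditions are invariant under the relabellings of $(perm)$, $(fubini)$ and $(double)$, and under the removal/addition of terms with scalar $0$ handled by $(0.scal)$ and $(1.scal)$. For the linearity rules $(\omega.linear)$, $(\iota.linear_i)$, $(\otimes.linear_i)$ and $(S.linear)$, one checks that commuting a constructor past a sum leaves the same componentwise orthogonality obligations to be discharged before and after the rewrite.

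The main obstacle is the computational rule $(\omega.\beta)$, where $t_1 = \omega~b'$ reduces to $t'_1 = \sigma(e_i)$ with $\match{\sigma}{b_i}{b'}$. Here we again do a secondary induction on the derivation of $\omega~b' \bot t_2$; the sum subcases reduce componentwise by the inductive hypothesis. The essential subcase is $t_2 = \omega~b''$ produced from the structural $\omega$-rule with premise $b' \bot b''$. Since $b''$ is well-typed of type $A$, Lemma~\ref{lem:od-substitution} applied to the input decomposition $\OD{A}{\{b_1,\dots,b_n\}}$ yields a unique pattern $b_j$ and substitution $\sigma'$ with $\match{\sigma'}{b_j}{b''}$. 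Applying $(\omega.\beta)$ symmetrically on the right-hand side and using the symmetry-plus-transitivity closure then reduces the goal to $\sigma(e_i) \bot \sigma'(e_j)$. When $i \neq j$, Lemma~\ref{lem:od-imp-bot} applied to the output decomposition $\ODe{B}{\{e_1,\dots,e_n\}}$ gives $e_i \bot e_j$, and Lemma~\ref{lem:ortho-subst-equiv} delivers the conclusion. When $i = j$, the orthogonality $b' = \sigma(b_i) \bot \sigma'(b_i) = b''$ must be witnessed at some variable position, and the strict linearity of the typing context $\Delta_i$ in $b_i$ and $e_i$ ensures that the same variable occurs in $e_i$, propagating the orthogonality to $\sigma(e_i) \bot \sigma'(e_i)$; this is established by a side induction on $e_i$.

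The remaining computational rules $(\omega.inv)$, $(\omega.comp)$, $(\omega.\otimes)$, $(\omega.\oplus_i)$ and $(\omega.ctrl_i)$ are handled by analogous but notably simpler case analyses, as they do not involve pattern matching and only rearrange $\omega$ prefixes that are already matched by the structural $\omega$-rule of $\bot$. The hard part will be the case $i = j$ in $(\omega.\beta)$: this is where the entire design of the language---the linearity of variables, the orthogonal decomposition predicate, and Lemma~\ref{lem:ortho-subst-equiv}---must cooperate to ensure that syntactic orthogonality is robust under computation.
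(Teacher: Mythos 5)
Your top-level strategy --- induction on the derivation of $\Delta \entail t_1 = t'_1 \colon A$ with a secondary induction on the derivation of $t_1~\bot~t_2$ in each case --- matches the paper's (whose proof is stated only as ``by induction on the rules of the equational theory''), and your treatment of the basic, congruence and permutation rules is fine. The problem is the case you yourself identify as the crux, $(\omega.\beta)$, where your resolution does not close. The lemma rewrites only $t_1$; the term $t_2$ is left untouched, so in the subcase $t_2 = \omega~b''$ with $b'~\bot~b''$ the goal is literally $\sigma(e_i)~\bot~\omega~b''$, not $\sigma(e_i)~\bot~\sigma'(e_j)$. Your move of ``applying $(\omega.\beta)$ symmetrically on the right-hand side and using the symmetry-plus-transitivity closure'' is an appeal to the very lemma being proven, now applied to the derivation of $t_2 = \sigma'(e_j)$, which is not a subderivation of $t_1 = t'_1$; the induction does not license it, and stability of $\bot$ under rewriting is exactly what is at stake.

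Worse, the judgement actually required, $\sigma(e_i)~\bot~\omega~b''$, is in general not derivable at all: Definition~\ref{def:orthogonality} has no rule relating an $\omega$-headed term to a term with a different head constructor, apart from the sum rules, which do not apply here. Concretely, take $\omega = \set{\clause{\inl x}{\inr x} \clause{\inr y}{\inl y}}$, $t_1 = \omega~(\inl *)$ and $t_2 = \omega~(\inr *)$; then $t_1~\bot~t_2$ and $~\entail t_1 = \inr * \colon \one \oplus \one$ by $(\omega.\beta)$, but $\inr *~\bot~\omega~(\inr *)$ has no derivation. The same head-constructor mismatch already defeats your sketch for $(\omega.linear)$: with $a~\bot~b$ one derives $\omega~(\frac{1}{\sqrt 2}\cdot a + \frac{1}{\sqrt 2}\cdot b)~\bot~\omega~(\frac{1}{\sqrt 2}\cdot a - \frac{1}{\sqrt 2}\cdot b)$, yet after rewriting the left term to $\frac{1}{\sqrt 2}\cdot(\omega~a) + \frac{1}{\sqrt 2}\cdot(\omega~b)$ the only applicable rules force $(\frac{1}{\sqrt 2}\cdot a - \frac{1}{\sqrt 2}\cdot b)~\bot~a$, which is not derivable. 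So the real obstacle is not the $i = j$ propagation you worry about but the fact that the purely syntactic relation $\bot$ is not stable under one-sided rewriting of $\omega$-applications; closing these cases requires either a strengthened or reformulated statement (for instance rewriting both sides, or arguing up to the normal forms of Lemma~\ref{lem:values-are-combinations}) or a detour through the semantic orthogonality of interpretations, and your plan supplies neither.
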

\begin{proof}
	By induction on the rules of the equational theory. 
\end{proof}

The proof of the next proposition heavily relies on the previous lemma. Indeed,
to prove that a linear combination is well-typed, one needs to prove that all
the terms involved in the linear combination are pairwise orthogonal. 

\begin{proposition}
	\label{prop:equal-to-typed}
	If $\Delta \entail t_1 = t_2 \colon A$ is well-formed, then
	$\Delta \entail t_1 \colon A$ and $\Delta \entail t_2 \colon A$
	also are.
\end{proposition}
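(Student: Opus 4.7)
The plan is to proceed by induction on the derivation of $\Delta \entail t_1 = t_2 \colon A$, inspecting each rule from Figures~\ref{fig:eq-rules-base}, \ref{fig:eq-rules-vector-space}, \ref{fig:eq-rules-qu-control-cong} and~\ref{fig:eq-rules-qu-control}. The base cases are trivial: reflexivity directly gives the well-typedness of both sides, while symmetry and transitivity follow immediately from the induction hypothesis. For the congruence rules in Figure~\ref{fig:eq-rules-qu-control-cong} involving the injectors, the tensor product, the successor and unitary application, the induction hypothesis provides typing derivations for each immediate subterm which can then be reassembled using the corresponding typing rule from Figure~\ref{fig:typterms}.

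The main subtlety is concentrated in the linear combinations. For the rule $(\Sigma.eq)$, the premise asserts $\Delta \entail \Sigma_i (\alpha_i \cdot t_i) \colon A$, which already guarantees that $\sum_i |\alpha_i|^2 = 1$ and that the family $(t_i)_i$ is pairwise orthogonal. By induction hypothesis, each $t'_i$ is well-typed of type $A$. To conclude that $\Sigma_i (\alpha_i \cdot t'_i)$ is well-typed, I need pairwise orthogonality of the $t'_i$, and this is precisely where Lemma~\ref{lem:equal-ortho} is used: applied repeatedly on the two indices of each pair, it transports $t_i~\bot~t_j$ to $t'_i~\bot~t'_j$. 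The same lemma is also what allows the other linear-algebraic and congruence rules to go through smoothly, since most of them rearrange linear combinations and the preservation of orthogonality under the involved reorganisations has to be checked explicitly.

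For the computational rules in Figure~\ref{fig:eq-rules-qu-control}, the key case is $(\omega.\beta)$. From the premises, we have a unitary abstraction $\unibasique$ of type $A \iso B$ together with the typing derivation of each clause $\Delta_i \entail b_i \colon A$ and $\Delta_i \entail e_i \colon B$, and we have a closed basis value $\entail b' \colon A$ matching $b_i$ via $\sigma$ (so $\Delta_i \Vdash \sigma$). Lemma~\ref{lem:qua-subst-type} then yields $\entail \sigma(e_i) \colon B$, which is exactly the required typing of the right-hand side. The remaining rules $(\omega.inv)$, $(\omega.comp)$, $(\omega.\otimes)$, $(\omega.\oplus_i)$ and $(\omega.ctrl_i)$ are direct applications of the typing rules of Figure~\ref{fig:typisos} combined with Figure~\ref{fig:typterms}.

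The main obstacle I expect is the bookkeeping around the linear-algebraic rules $(fubini)$, $(double)$ and the linearity rules for $\omega$, $\iota$, $\otimes$ and $\suc$: each rearrangement of a nested sum reshuffles the orthogonality requirements, and one has to verify carefully that the pairwise orthogonality of the reindexed family can be recovered from that of the original family — possibly after using Lemma~\ref{lem:equal-ortho} to transport orthogonality through the intermediate equalities introduced by the rearrangement. Once this is in place, the rest of the cases are mechanical.
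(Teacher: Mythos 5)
Your proposal is correct and follows essentially the same route as the paper, whose proof is simply the induction on the rules of the equational theory, with the surrounding text singling out Lemma~\ref{lem:equal-ortho} as the key ingredient for the linear-combination cases exactly as you do. Your additional identification of Lemma~\ref{lem:qua-subst-type} for the $(\omega.\beta)$ case and your flagging of the normalisation/orthogonality bookkeeping in the reindexing rules are consistent with, and more detailed than, what the paper records.
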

\begin{proof}
	By induction on the rules of the equational theory. 
\end{proof}

\subsection{Bases}
\label{sub:qua-eq-bases}

As expected, and thanks to the equational theory, an orthogonal decomposition
gives a finite representation of an orthonormal basis. This is proven in
Lemma~\ref{lem:od-output-exhaustive}. We start by proving that any expression
is equal to a combination of basis values.

\begin{lemma}
	\label{lem:values-are-combinations}
	Given a well-typed closed expression $~\entail e \colon A$, there exists
	a set of indices $I$, a family of basis values $(b_i)_{i \in I}$, $(\alpha_i)_{i
	\in I}$ a family of complex numbers, such that $~\entail e =
	\Sigma_i (\alpha_i \cdot b_i) \colon A$.
\end{lemma}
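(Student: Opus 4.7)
The plan is to proceed by structural induction on the closed expression $e$, using the linearity rules of Figure~\ref{fig:eq-rules-vector-space} to push every $\Sigma$ outward, and then a merging step to consolidate coincident basis values into a well-typed sum. A key preliminary observation is that for \emph{closed} basis values, distinct implies orthogonal: indeed, by Proposition~\ref{prop:patterns-are-orthogonal}, $\neg(b \bot b')$ gives a match $\match{\sigma}{b}{b'}$, but for closed $b$ the valuation $\sigma$ must be empty, hence $b = b'$. This fact, combined with Proposition~\ref{prop:equal-to-typed}, means we only need to produce an \emph{equality} to a sum of distinct basis values with nonzero scalars; well-typedness of the RHS (orthogonality and $\sum |\alpha_i|^2 = 1$) is then automatic.

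For the base cases $e = *$ and $e = \zero$, applying $(symm)$ to $(1.scal)$ yields $e = \Sigma_{i=1}^{1}(1 \cdot e)$. The variable case $e = x$ does not arise since $e$ is closed. For the unary constructors $\inl$, $\inr$, and $\suc$, one applies the induction hypothesis to obtain $e' = \Sigma_i(\alpha_i \cdot b_i)$, then uses the corresponding congruence rule followed by $(\iota.linear_1)$, $(\iota.linear_2)$, or $(S.linear)$ to push the constructor inside the sum; the resulting basis values $\{\inl b_i\}$, $\{\inr b_i\}$, or $\{\suc b_i\}$ are pairwise distinct because the $b_i$ are. For a tensor $e_1 \otimes e_2$, apply the induction hypothesis to both sides, use $(\otimes.eq_1)$, $(\otimes.eq_2)$ together with $(\otimes.linear_1)$ and $(\otimes.linear_2)$ to obtain $e_1 \otimes e_2 = \Sigma_i(\alpha_i \cdot \Sigma_j(\beta_j \cdot b_i \otimes b'_j))$, and finish with $(double)$ to flatten to $\Sigma_{ij}(\alpha_i \beta_j \cdot b_i \otimes b'_j)$. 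Here distinctness of the $b_i \otimes b'_j$ follows from that of the $b_i$ and $b'_j$.

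The main obstacle is the case $e = \Sigma_i(\alpha_i \cdot e_i)$. By the induction hypothesis, each $e_i$ equals a sum $\Sigma_j(\beta_{ij} \cdot b_{ij})$ of distinct basis values. Using $(\Sigma.eq)$ and $(double)$ we obtain $e = \Sigma_{(i,j)}(\alpha_i \beta_{ij} \cdot b_{ij})$, but now the $b_{ij}$ may coincide \emph{across different values of $i$}, so the RHS is not immediately well-typed. To merge, let $\{c_1, \dots, c_m\}$ enumerate the distinct basis values appearing among the $b_{ij}$, and let $J_k = \{(i,j) \mid b_{ij} = c_k\}$. Pad each inner sum with zero-scalar terms (inverse of $(0.scal)$) so that all inner sums range over the common index set $\{1, \dots, m\}$ with basis values $c_k$, producing a term of the form $\Sigma_i(\alpha_i \cdot \Sigma_k(\gamma_{ik} \cdot c_k))$ where $\gamma_{ik} = \beta_{ij}$ if $(i,j) \in J_k$ and $0$ otherwise. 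Apply $(fubini)$ to obtain $\Sigma_k\bigl((\sum_i \alpha_i \gamma_{ik}) \cdot c_k\bigr)$, and then $(0.scal)$ to drop any $k$ with vanishing total coefficient.

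At the end, the resulting sum $\Sigma_k(\delta_k \cdot c_k)$ ranges over distinct closed basis values, which are pairwise orthogonal by the observation above; the normalization condition $\sum_k |\delta_k|^2 = 1$ is forced by Proposition~\ref{prop:equal-to-typed}, since $e$ is well-typed and equational derivability preserves typing. The delicate point is bookkeeping the padding and permutations when applying $(fubini)$; this is mechanical but cumbersome, and is the only step where the proof is more than a direct application of a linearity axiom.
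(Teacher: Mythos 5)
Your proof is correct and follows essentially the same route as the paper's: structural induction, pushing constructors through sums via the linearity rules, flattening tensors with $(double)$, and handling nested sums by padding with zero-scalar terms so that $(fubini)$ applies (the paper phrases this padding as a ``without loss of generality'' remark). Your additional care about distinctness of closed basis values and well-typedness of the resulting sum is a welcome tightening of what the paper defers to the remark following the lemma.
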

\begin{proof}
	The proof is done by induction on $~\entail e \colon A$.
	\begin{itemize}
		\item $~\entail * \colon \one$. Nothing to do.
		\item $~\entail e_1 \otimes e_2 \colon A \otimes B$. The induction hypothesis
			gives $I$, $(b^1_i)$ and $(\alpha_i)$, and $J$, $(b^2_j)$ and $(\beta_j)$,
			such that $~\entail e_1 = \Sigma_i (\alpha_i \cdot b^1_i) \colon A$
			and $~\entail e_2 = \Sigma_j (\beta_j \cdot b^2_j) \colon B$. Thus,
			we have that 
 			\begin{align*}
 				&\ \entail e_1 \otimes e_2 & \\
 				&= \left( \Sigma_i (\alpha_i \cdot b^1_i) \right)
 				\otimes \left( \Sigma_j (\beta_j \cdot b^2_j) \right) \colon A \otimes B 
				&\text{(induction hypothesis)} \\
 				&= \Sigma_i \left( \alpha_i \cdot b^1_i \otimes \left( \Sigma_j
 				(\beta_j \cdot b^2_j) \right) \right) \colon A \otimes B 
 				&(\otimes.linear_2) \\
 				&= \Sigma_i \left( \alpha_i \cdot \Sigma_j ( \beta_j \cdot 
 				b^1_i \otimes b^2_j ) \right) \colon A \otimes B 
 				& (\otimes.linear_1) \\
 				&= \Sigma_{ij} (\alpha_i \beta_j \cdot b^1_i \otimes b^2_j) \colon A
 				\otimes B
 				& (double)
 			\end{align*}
		\item $~\entail \inl e \colon A \oplus B$. The induction hypothesis
			gives $~\entail e = \Sigma_i (\alpha_i \cdot b_i) \colon A$,
			and observe that $~\entail \inl{} \Sigma_i (\alpha_i \cdot b_i)
			= \Sigma_i (\alpha_i \cdot \inl b_i) \colon A \oplus B$.
		\item $~\entail \inr e \colon A \oplus B$ has the same conclusion.
		\item $~\entail \suc e \colon \nat$ is similar to the previous point.
		\item $~\entail \Sigma_i (\alpha_i \cdot e_i) \colon A$. The induction
			hypothesis gives $(\beta_{ij})$ and $(b_j)$ (the $b$ does not depend in
			$i$ without loss of generality, because $0 \cdot b$ can be added to any
			sum term $t$, as long as $b$ is orthogonal to $t$). Finally, we have
			$~\entail \Sigma_i (\alpha_i \cdot \Sigma_j (\beta_{ij} \cdot b_j)) =
			\Sigma_j ((\sum_i \alpha_i \beta_{ij}) \cdot b_j) \colon A$.
	\end{itemize}
\end{proof}

\begin{remark}
	The resulting term in the previous lemma, written $\Sigma_i (\alpha_i \cdot
	b_i)$ is a value if the basis values are correctly ordered and if all the
	scalars are non zero. Thanks to the $(perm)$ and $(0.scal)$ rules in
	Figure~\ref{fig:eq-rules-vector-space}, we can assume so. Thus, the lemma
	above shows that expressions have a unique normal form. 
\end{remark}

Lemma~\ref{lem:values-are-combinations} shows that any expressions can be
decomposed as a linear combination of elements of the \emph{canonical}
orthogonal decomposition -- namely, made of basis values only. We can
generalise this lemma to any orthogonal decomposition, with the help of
substitutions. We show that, given an orthogonal decomposition $S$, a closed
expression $e$ can be written as a normalised decomposition of elements of $S$,
where variables are substituted. The elements of $S$ can appear several times
in the decomposition. For example, $\{ x \}$ is an orthogonal decomposition of
$\one \oplus \one$, and the term $\frac{1}{\sqrt 2} \cdot (\inl *) +
\frac{1}{\sqrt 2} (\inr *)$ is of the form $\frac{1}{\sqrt 2} \cdot \sigma_1(x)
+ \frac{1}{\sqrt 2} \cdot \sigma_2(x)$, where $\sigma_1$ replaces $x$ with
$\inl *$ and $\sigma_2$ replaces $x$ with $\inr *$.

\begin{lemma}
	\label{lem:od-output-exhaustive}
	Given $\ODe B S$, where all elements of $S$ are well-typed, and a
	well-typed closed expression $~\entail e \colon B$, there exists $I$ a set
	of indices, $(s_i)_{i \in I}$ a family of elements of $S$, $(\alpha_i)_{i \in
	I}$ a family of complex numbers and $(\sigma_i)_{i \in I}$ a family of
	valuations such that $~\entail e = \Sigma_i (\alpha_i \cdot \sigma_i(s_i))
	\colon B$.
\end{lemma}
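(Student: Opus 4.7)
My plan is to proceed by induction on the derivation of $\ODe B S$, not on the structure of $e$. In each case, I first apply Lemma~\ref{lem:values-are-combinations} to reduce $e$ to the canonical form $\Sigma_k(\gamma_k \cdot b_k)$ of a normalised combination of basis values, and then rewrite the result using the elements of $S$ and the linearity rules of Figure~\ref{fig:eq-rules-vector-space}. The base cases $\ODe A {\set x}$ and $\ODe \one {\set *}$ are essentially immediate: in the first one I take a single substitution $\sigma = \{x \mapsto e\}$ and scalar $1$, using $(1.scal)$; in the second one any closed expression of type $\one$ simplifies (by Lemma~\ref{lem:values-are-combinations} and orthogonality, which cannot separate two copies of $*$) to $1 \cdot *$.

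For the case $\ODe{A \oplus B}{S \boxplus T}$, I split the indices of $\Sigma_k (\gamma_k \cdot b_k)$ according to whether $b_k$ is of the form $\inl b^L_k$ or $\inr b^R_k$. Extracting the norms $\mu_L^2 = \sum_{k \in L}|\gamma_k|^2$ and $\mu_R^2 = \sum_{k \in R}|\gamma_k|^2$ (with $\mu_L^2 + \mu_R^2 = 1$), the rules $(\iota.linear_1)$ and $(\iota.linear_2)$ let me write $e$ as $\mu_L \cdot \inl e^L + \mu_R \cdot \inr e^R$ for well-typed closed $e^L \colon A$ and $e^R \colon B$ (treating the degenerate case $\mu_L = 0$ or $\mu_R = 0$ by omission via $(0.scal)$). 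The inductive hypothesis on $\ODe A S$ and $\ODe B T$ then decomposes $e^L$ and $e^R$, and because substitution commutes with $\inl$ and $\inr$, flattening with $(double)$ and $(perm)$ produces the required decomposition over $S \boxplus T$. The case $\ODe \nat {S^{\oplus 0}}$ is completely analogous, using $(S.linear)$ to separate the $\zero$ summand from the $\suc-$ summands.

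The tensor case is where the bookkeeping gets delicate. Assume $\ODe{A\otimes B}{S}$ via $\ODe{A}{\pi_1(S)}$ and $\ODe{B}{S^1_{e'}}$ for every $e' \in \pi_1(S)$. I fix a basis term $b^A \otimes b^B$ appearing in the expansion of $e$ and apply the inductive hypothesis on $\ODe A {\pi_1(S)}$ to the closed expression $b^A$, producing $b^A = \Sigma_j(\delta_j \cdot \rho_j(s^A_j))$ with $s^A_j \in \pi_1(S)$. Using $(\otimes.linear_2)$ I push $\otimes b^B$ inside, and then for each $s^A_j$ I apply the inductive hypothesis on $\ODe B {S^1_{s^A_j}}$ to $b^B$, producing $b^B = \Sigma_l(\epsilon_{jl} \cdot \rho'_{jl}(s^B_{jl}))$. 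Working up to Barendregt convention so that the variables of $s^A_j$ and $s^B_{jl}$ are disjoint, I combine substitutions into $\rho_j \cup \rho'_{jl}$ so that $\rho_j(s^A_j) \otimes \rho'_{jl}(s^B_{jl}) = (\rho_j \cup \rho'_{jl})(s^A_j \otimes s^B_{jl})$; by definition of $S^1_{-}$, the element $s^A_j \otimes s^B_{jl}$ lives in $S$. Summing over all the basis values $b^A \otimes b^B$ of $e$ and flattening via $(fubini)$, $(double)$ and $(perm)$ yields the desired decomposition. The symmetric case (when the derivation uses $\ODe B {\pi_2(S)}$ instead) is handled identically, using $(\otimes.linear_1)$.

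The main obstacle, and the case I would save for last, is the change-of-basis rule $\ODe A {S^\alpha}$. The idea is to use the inductive hypothesis on $\ODe A S$ to obtain $e = \Sigma_k(\beta_k \cdot \sigma_k(e_k))$ with $e_k \in S$, and then invert the unitary relation. Writing $e'_i = \Sigma_{j}(\alpha_{ij} \cdot e_j)$ for the elements of $S^\alpha$, I compute in the equational theory
\[
\Sigma_i \bigl(\overline{\alpha_{ik}} \cdot e'_i\bigr) \;=\; \Sigma_i \Bigl(\overline{\alpha_{ik}} \cdot \Sigma_j (\alpha_{ij} \cdot e_j)\Bigr) \;=\; \Sigma_j \Bigl(\bigl(\textstyle\sum_i \overline{\alpha_{ik}}\alpha_{ij}\bigr) \cdot e_j\Bigr) \;=\; e_k,
\]
using $(fubini)$, $(double)$, $(0.scal)$ and $(1.scal)$, where the last equality is unitarity $\alpha^*\alpha = \mathrm{Id}$. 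Applying $\sigma_k$ on both sides (substitution commutes with $\Sigma$) and plugging back into the expansion of $e$, another use of $(double)$ gives $e = \Sigma_{k,i}(\beta_k \overline{\alpha_{ik}} \cdot \sigma_k(e'_i))$, a decomposition over $S^\alpha$. The most subtle point in this last step is checking that the intermediate sums used in the computation of $e_k$ are well-typed (which requires verifying orthogonality of the $e'_i$, granted by Lemma~\ref{lem:od-imp-bot} applied to $S^\alpha$) so that every equational rule invoked is applicable; this is exactly the kind of hypothesis that Proposition~\ref{prop:equal-to-typed} and Lemma~\ref{lem:equal-ortho} were designed to make painless.
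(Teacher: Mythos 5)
Your proposal is correct and follows essentially the same route as the paper's proof: induction on the derivation of $\ODe B S$, first reducing $e$ to a normalised combination of basis values via Lemma~\ref{lem:values-are-combinations}, decomposing each basis value according to the structure of the orthogonal decomposition in the structural cases, and handling the change-of-basis case by expressing each $s\in S$ as the conjugate-weighted combination $\Sigma_i(\overline{\alpha_{is}}\cdot e'_i)$ of elements of $S^\alpha$ using unitarity. Your write-up is somewhat more explicit about the well-typedness of intermediate sums and the bookkeeping of substitutions in the tensor case, but the underlying argument is the same.
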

\begin{proof}
	This is proven by induction on ${\rm OD}$. The previous lemma gives a term
	equal to $e$ in the equational theory written as a finite sum of basis values
	$\Sigma_i (\alpha_i \cdot b_i)$.
	\begin{itemize}
		\item $\ODe{A}{(\set x)}$. The substitution $\sigma = \set{x \mapsto e}$
			is suitable with $e = \sigma(x)$.
		\item $\ODe{\one}{(\set *)}$. Nothing to do.
		\item $\ODe{A \oplus B}{S \boxplus T}$. Each $b_i$ is either $\inl b'_i$,
			with gives a suitable substitution $\sigma_i$ and $s_i \in S$, thus $\inl
			s_i \in S \boxplus T$, or $b_i$ is $\inr b'_i$, giving suitable
			substitution $\sigma_i$ and $s_i \in T$, thus $\inr s_i \in S \boxplus
			T$; all this by induction hypothesis.
		\item $\ODe{A \otimes B}{S}$. Each $b_i$ is of the form $b'_i \otimes b
			''_i$, the induction hypothesis gives suitable $\sigma'_i$, $s'_i$,
			$\sigma''_i$, $s''_i$, that can be assembled into $\sigma_i = \sigma'_i
			\cup \sigma''_i$ and $s_i = s'_i \otimes s''_i$.
		\item $\ODe{\nat}{S^{\oplus 0}}$. Each $b_i$ is either $\zero$, for which
			there is nothing to do, or $\suc b'_i$, in which case the induction
			hypothesis concludes.
		\item $\ODe{B}{S^\beta}$. First, we show that each $s_i \in S$ can
			be written as a linear combination of elements of $S^\beta$. Indeed,
			in the equational theory: 
			\begin{align*}
				\Sigma_{s \in S} (\res\beta_{s,s_i} \cdot
				\Sigma_{s' \in S} (\beta_{s,s'} \cdot s')) 
				&= \Sigma_{s' \in S} \left(\sum_{s \in S} \res\beta_{s,s_i}
				\beta_{s,s'} \right) \cdot s' \\
				&= \Sigma_{s' \in S} (\delta_{s' = s_i} \cdot s') = s_i
			\end{align*}
			and the conclusion is then direct.
	\end{itemize}
\end{proof}

\subsection{Normal Forms}
\label{sub:qua-normalisation}

The following lemma is loosely equivalent to progress for an operational
semantics, and involves both directions: the application of a unitary to a
value reduces to a value, and given a unitary and a value, there exists a value
that is the inverse image of the latter. 

\begin{lemma}
	\label{lem:equational-progress}
	Given $~\entailiso \omega \colon A \iso B$ (see Figure~\ref{fig:typisos}), we
	have the following:
	\begin{itemize}
		\item for all $~\entail e \colon A$, there exists a value judgement
			$~\entail v \colon B$ such that $~\entail \omega~e = v \colon B$;
		\item for all $~\entail e \colon B$, there exists a value judgement
			$~\entail u \colon A$ such that $~\entail \omega~u = e \colon B$.
	\end{itemize}
\end{lemma}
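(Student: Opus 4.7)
The plan is to proceed by induction on the derivation $\vdash_{\isoterm} \omega \colon A \iso B$ from Figure~\ref{fig:typisos}, proving both directions simultaneously (the second one essentially follows from the first by using $(\omega.inv)$ together with the induction hypothesis for $\omega^{-1}$, which is available in every case because the grammar of unitaries is closed under inversion and the inverse rule has the same structure on both sides).

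For the base case $\omega = \{\!\mid b_i \iso e_i\!\mid\!\}_{i \in I}$, I would handle the forward direction as follows. Given $\vdash e \colon A$, use Lemma~\ref{lem:values-are-combinations} to rewrite $e = \Sigma_k (\alpha_k \cdot b'_k)$ where each $b'_k$ is a closed basis value of type $A$. Apply $(\omega.linear)$ to obtain $\omega~e = \Sigma_k (\alpha_k \cdot \omega~b'_k)$. Lemma~\ref{lem:od-substitution} applied to the hypothesis $\mathrm{OD}_A(\{b_i\}_{i})$ yields, for every $k$, a unique index $j_k \in I$ and a unique well-formed valuation $\sigma_k$ with $\mathtt{match}(\sigma_k, b_{j_k}, b'_k)$; the rule $(\omega.\beta)$ then gives $\omega~b'_k = \sigma_k(e_{j_k})$. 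Each $\sigma_k(e_{j_k})$ is a well-typed closed expression (Lemma~\ref{lem:qua-subst-type}), so Lemma~\ref{lem:values-are-combinations} again produces a value normal form; the Fubini and double-sum rules of Figure~\ref{fig:eq-rules-vector-space}, combined with $(perm)$ and $(0.scal)$, flatten and tidy the result into a value judgement $\vdash v \colon B$. For the backward direction on the base case, Lemma~\ref{lem:od-output-exhaustive} applied to $\mathrm{OD}^{ext}_B(\{e_i\}_i)$ expresses $e$ as $\Sigma_k (\alpha_k \cdot \sigma_k(e_{j_k}))$ for suitable valuations $\sigma_k$; I then set $u = \Sigma_k (\alpha_k \cdot \sigma_k(b_{j_k}))$ and check by $(\omega.linear)$, $(\omega.\beta)$ and $(\Sigma.eq)$ that $\omega~u = e$. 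The well-typedness of $u$, in particular the orthogonality of the summands, follows because the $b_{j_k}$ are basis values taken from an orthogonal decomposition (Lemma~\ref{lem:od-imp-bot}) and orthogonality is preserved by substitution (Lemma~\ref{lem:ortho-subst-equiv}).

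The inductive cases are more routine. For $\omega = \omega_2 \circ \omega_1$, the rule $(\omega.comp)$ together with two uses of the induction hypothesis (producing a value after $\omega_1$, then after $\omega_2$) does the job, after again using $(\omega.linear)$ to push $\omega_2$ through the intermediate sum. For $\omega = \omega_1 \otimes \omega_2$ and $\omega = \omega_1 \oplus \omega_2$, I would first reduce to a sum of basis values by Lemma~\ref{lem:values-are-combinations}, then apply the congruence-friendly decomposition rules ($\otimes.linear_i$, $\iota.linear_i$) to split the argument, and conclude with the $(\omega.\otimes)$ and $(\omega.\oplus_i)$ rules combined with the induction hypothesis on $\omega_1$ and $\omega_2$. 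The $\mathtt{ctrl}~\omega'$ case is analogous, splitting on whether the first tensor factor becomes $\inl *$ or $\inr *$ in the canonical basis-value decomposition, and using $(\omega.ctrl_1)$, $(\omega.ctrl_2)$. The inverse case $\omega = \omega'^{-1}$ is handled by noting that the two directions of the statement swap under inversion, so the induction hypothesis for $\omega'$ directly gives what is needed via $(\omega.inv)$.

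The main obstacle, as I see it, is not any single step but rather the bookkeeping required to show that at each stage the sum we produce is actually a \emph{value}, i.e.\ that the basis values appearing in it are pairwise orthogonal and that the square moduli of the scalars add up to $1$. Orthogonality is preserved because each $(\omega.\beta)$ step acts on distinct patterns (Lemma~\ref{lem:od-substitution}, Proposition~\ref{prop:patterns-are-orthogonal}) and substitution preserves orthogonality (Lemma~\ref{lem:ortho-subst-equiv}); the normalisation condition propagates through the tensor/direct sum/compose decompositions by straightforward linear-algebraic identities encoded in the equational rules of Figure~\ref{fig:eq-rules-vector-space}. The $\mathtt{ctrl}$ case deserves particular care because the argument may be a superposition that crosses the control qubit, and one must show that the two branches $\inl *$ and $\inr *$ produce orthogonal summands, which follows from $\inl * \bot \inr *$ together with the congruence rules for $\otimes$.
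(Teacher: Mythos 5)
Your proposal is correct and follows essentially the same route as the paper: induction on the typing derivation of $\omega$, with the iso-abstraction base case handled by decomposing the argument against the orthogonal decompositions $\OD{A}{\{b_i\}}$ and $\ODe{B}{\{e_i\}}$, firing $(\omega.\beta)$ clause by clause, and renormalising to a value via Lemma~\ref{lem:values-are-combinations}, and the remaining constructors handled by reduction to basis values plus the corresponding computational rules. The only cosmetic difference is that in the forward direction of the base case the paper invokes Lemma~\ref{lem:od-output-exhaustive} directly where you go through Lemma~\ref{lem:values-are-combinations} followed by Lemma~\ref{lem:od-substitution}; these are interchangeable, since the former is proved from the latter.
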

\begin{proof}
	This is proven by induction on the judgement $~\entailiso \omega \colon A
	\iso B$.
	\begin{itemize}
		\item Assume $~\entailiso \unibasique \colon A \iso B$, we have in particular
			that $\OD{A}{(\{b_i\}_{i \leq n})}$. In the case $~\entail e \colon A$,
			Lemma~\ref{lem:od-output-exhaustive} then gives a set $J$ and a
			decomposition of $e$ as follows: $~\entail e = \Sigma_j (\alpha_j \cdot
			\sigma_j(b_{i_j})) \colon A$. Therefore,
			\begin{align*}
				\entail 
				& \unibasique~e & \\
				&= \unibasique~\Sigma_j (\alpha_j \cdot \sigma_j(b_{i_j})) \colon B 
				& (\omega.eq) \\
				&= \Sigma_j (\alpha_j \cdot \unibasique~\sigma_j(b_{i_j})) \colon B 
				& (\omega.linear) \\
				&= \Sigma_j (\alpha_j \cdot \sigma_j(\unibasique~b_{i_j})) \colon B 
				& (\text{definition}) \\
				&= \Sigma_j (\alpha_j \cdot \sigma_j(v_{i_j})) \colon B
				& (\omega.\beta)
			\end{align*}
			The latter term is a closed expression, thus
			Lemma~\ref{lem:values-are-combinations} ensures that there exists a value
			$~\entail v \colon A$ such that $~\entail \unibasique~e = v \colon B$.
			
			On the other hand, we have $\ODe{B}{(\{v_i\}_{i \leq n})}$. Assume $~\entail
			e \colon B$, then Lemma~\ref{lem:od-output-exhaustive} provides a set $K$
			and a decomposition $~\entail e = \Sigma_k (\alpha_k \cdot \sigma_k(v_{i_k}))
			\colon B$. With the same computation as above, we have $~\entail
			\unibasique~u' = e \colon B$ with $u'$ being the expression $\Sigma_k
			(\alpha_k \cdot \sigma_k(b_{i_k}))$. Since it is an expression,
			Lemma~\ref{lem:values-are-combinations} ensures that there is value
			$~\entail u \colon A$ such that $~\entail u' = u \colon A$ and therefore
			$~\entail \unibasique~u = e \colon B$.
		\item Assume $~\entailiso \omega\inv \colon B \iso A$. Given $~\entail e \colon
			B$, the induction hypothesis gives $~\entail u \colon A$ such that
			$~\entail \omega~u = e \colon B$, and thus $~\entail \omega\inv~e = u
			\colon A$.  Moreover, given $~\entail e \colon A$, the induction
			hypothesis gives $~\entail v \colon B$ such that $~\entail \omega~e = v
			\colon B$, and thus $~\entail \omega\inv~v = e \colon A$.
		\item Assume $~\entailiso \omega_2 \circ \omega_1 \colon A \iso C$. The
			induction hypothesis gives us $v_1$ such that $~\entail \omega_1~e = v_1
			\colon B$ and then $v_2$ such that $~\entail \omega_2~v_1 = v_2 \colon
			C$, which ensures the result. A related reasoning proves the second
			point.
		\item Assume $~\entailiso \omega_1 \oplus \omega_2 \colon A_1 \oplus A_2 \iso
			B_1 \oplus B_2$. Lemma~\ref{lem:values-are-combinations} ensures that $e$
			is given as a combination of basis values $~\entail e = \Sigma_i
			(\alpha_i \cdot b_i) \colon A_1 \oplus A_2$. Moreover, we know that
			$~\entail (\omega_1 \oplus \omega_2)~e = \Sigma_i (\alpha_i \cdot
			(\omega_1 \oplus \omega_2)~b_i) \colon B_1 \oplus B_2$. Therefore, it is
			sufficient to consider the case of basis values. There are two similar
			cases, namely $\inl b$ and $\inr b$. In the first case, the induction
			hypothesis gives $v_1$ such that $~\entail \omega_1~b = v_1 \colon B_1$,
			thus $~\entail (\omega_1 \oplus \omega_2)~(\inl b) = \inl v_1 \colon B_1
			\oplus B_2$. The other case is similar. A related reasoning proves the
			second point.
		\item Assume $~\entailiso \omega_1 \otimes \omega_2 \colon A_1 \otimes A_2 \iso
			B_1 \otimes B_2$. Like above, it is sufficient to prove the result for
			basis values. We write $b_1 \otimes b_2$ for $b$, and the induction
			hypothesis provides $v_1$ and $v_2$ such that $~\entail \omega_1~b_1 =
			v_1 \colon B_1$ and $~\entail \omega_2~b_2 = v_2 \colon B_2$, ensuring
			that $~\entail (\omega_1 \otimes \omega_2)~(b_1 \otimes b_2) = v_1
			\otimes v_2 \colon B_1 \otimes B_2$. A related reasoning proves the
			second point.
		\item Assume $~\entailiso \ctrl \omega \colon (\one \oplus \one) \otimes A \iso (\one
			\oplus \one) \otimes A$. Once again, it is sufficient to prove the result
			for basis values. In the case $(\inl *) \otimes b$, there is nothing to
			do. The other case is $(\inr *) \otimes b$, and the induction hypothesis
			gives $v$ such that $~\entail \omega~b = v \colon A$, and then $~\entail
			(\ctrl \omega)~((\inr *) \otimes b) = (\inr *) \otimes v \colon (\one \oplus
			\one) \otimes A$. A related reasoning proves the second point.
	\end{itemize}
\end{proof}

We have proven that unitary applications progress and reduce to values, if one
wishes to have an operational point of view. This allows us to prove, with the
same operational view, that the system admits unique normal forms; this means
that any term $t$ is equal to a single value.

\begin{theorem}
	\label{th:equational-sn}
	Given $~\entail t \colon A$, there exists $~\entail v \colon A$ such that
	$~\entail t = v \colon A$.
\end{theorem}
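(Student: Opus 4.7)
The proof proceeds by structural induction on the term $t$, splitting into the unitary-application case and the expression-like cases. The heavy lifting has already been done in Lemmas~\ref{lem:values-are-combinations} and~\ref{lem:equational-progress}, so this theorem is essentially a corollary that stitches them together.

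For the case $t = \omega~t'$: the induction hypothesis yields a value $v$ with $\vdash t' = v \colon A'$, where $A'$ is the domain type of $\omega$. Applying the congruence rule $(\omega.eq)$ from Figure~\ref{fig:eq-rules-qu-control-cong} then gives $\vdash \omega~t' = \omega~v \colon A$. Since $v$ is, in particular, a closed well-typed expression of type $A'$, the first bullet of Lemma~\ref{lem:equational-progress} produces a value $v'$ with $\vdash \omega~v = v' \colon A$. Transitivity (rule $(trans)$) closes the case.

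For every other constructor (that is, $*,\ \zero,\ \inl t',\ \inr t',\ \suc t',\ t_1 \otimes t_2,\ \Sigma_i (\alpha_i \cdot t_i)$), the plan is to first invoke the induction hypothesis on each immediate subterm to replace it by a value. Using the appropriate congruence rule from Figure~\ref{fig:eq-rules-qu-control-cong} (namely $(\iota.eq_i)$, $(\otimes.eq_i)$, $(S.eq)$, or $(\Sigma.eq)$), one obtains that $t$ is equal to a term $t^\star$ whose immediate subterms are values. But a value is, in particular, a closed expression, so $t^\star$ itself is a closed expression of type $A$ (it contains no unitary application). Lemma~\ref{lem:values-are-combinations} then produces a sum of basis values $\Sigma_i (\alpha_i \cdot b_i)$ with $\vdash t^\star = \Sigma_i (\alpha_i \cdot b_i) \colon A$. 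Finally, the rewrites $(perm)$ (to reorder the basis values according to the assumed total order) and $(0.scal)$ (to discard zero-coefficient summands) bring this sum into the canonical shape required of a value, as noted in the remark following Lemma~\ref{lem:values-are-combinations}.

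The only point that deserves extra care — and which I expect to be the main bookkeeping obstacle — is to check that what comes out is genuinely a value in the grammar sense, i.e.\ that the remaining basis values are pairwise orthogonal and the coefficients satisfy the normalisation condition required by the typing rule for $\Sigma$. Here we rely on Proposition~\ref{prop:equal-to-typed} to know that each intermediate term is well-typed (and hence satisfies these side conditions), and on Lemma~\ref{lem:equal-ortho} to know that orthogonality is preserved by the rewrites. Once this invariant is maintained through the induction, extracting the canonical value is purely syntactic housekeeping with $(perm)$ and $(0.scal)$.
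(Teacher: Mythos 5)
Your proof is correct and follows essentially the same route as the paper's: structural induction on the typing derivation, with congruence rules for the constructors and Lemma~\ref{lem:equational-progress} discharging the unitary-application case. If anything, your treatment is slightly more careful than the paper's, since you explicitly invoke Lemma~\ref{lem:values-are-combinations} together with $(perm)$ and $(0.scal)$ to bring terms like $v_1 \otimes v_2$ or $\inl v$ back into the canonical value shape $\Sigma_i(\alpha_i \cdot b_i)$, a normalisation step the paper's proof leaves implicit.
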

\begin{proof}
	This is proven by induction on the typing rules of $~\entail t \colon A$.
	\begin{itemize}
		\item The cases $*$, $x$, $\zero$ and sum are straightforward.
		\item In the case $t_1 \otimes t_2$, the induction hypothesis gives
			corresponding $v_1$ and $v_2$, that ensure $~\entail t_1 \otimes t_2 =
			v_1 \otimes v_2 \colon A \otimes B$.
		\item In the case $\ini t$, the induction hypothesis provides $v$ such that
			$~\entail t = v \colon A_i$, thus $~\entail \ini t = \ini v \colon A_1
			\oplus A_2$.
		\item In the case $\suc t$, the induction hypothesis provides $v$ such that
			$~\entail t = v \colon \nat$, thus $~\entail \suc t = \suc v \colon
			\nat$.
		\item In the case $\omega~t$, the induction hypothesis gives $v$ such that
			$~\entail t = v \colon A$. The previous lemma,
			Lemma~\ref{lem:equational-progress}, provides $v'$ such that $~\entail
			\omega~v = v' \colon B$, thus $~\entail t = v' \colon B$.
	\end{itemize}
\end{proof}

\subsection{Discussion: Operational Semantics}

We briefly discuss the operational semantics for terms presented in
\cite{sabry2018symmetric}. This section builds up the comparison with the
$\lambda$-calculus by defining our version of $\beta$-reduction that suits the
language. This reduction is given by the rule $(\omega.\beta)$ in
Figure~\ref{fig:eq-rules-qu-control}, when read left to right.

In \cite{sabry2018symmetric}, values and terms are considered modulo
associativity and commutativity of the addition, and modulo the equational
theory of modules; and they consider the value and term constructs
$\pair{-}{-}$, $\inl(-)$, $\inr(-)$, $\suc -$ and $\omega~-$ distributive over
sum and scalar multiplication, only in this subsection also.

Therefore, in that setting, an expression $e$ is equal to some combination of
basis values $\Sigma_i (\alpha_i \cdot b_i)$; and the application of a unitary
$\omega$ to $e$ is equal to $\Sigma_i (\alpha_i \cdot \omega~b_i)$ thanks to
linearity. Thus, it is sufficient to give a $\beta$-reduction rule for unitaries
applied to basis values, as follows.
\[
	\begin{array}{c}
		\infer{ \unibasique~b' \to \sigma(v_i)}{
			\match{\sigma}{b_i}{b'}
		}
	\end{array}
\]
This rule is the same as $(\omega.\beta)$, this time oriented left to right.
Note that the reduction defined this way can only be applied with a closed
$b'$. However, this formulation is not satisfying, because it requires working
up to linear algebra equalities, which are then mixed with an operational
semantics. Our solution in this chapter is to completely embrace the equational
theory aspect and only work up to equalities, keeping in mind which rules bear
a computational meaning such as the one above.

Another solution is to work only with rewriting rules, which has been the focus
of several papers around algebraic $\lambda$-calculi
\cite{arrighi2017vectorial, arrighi2017lineal, vaux2009algebraic,
selinger2009quantum}, where all the rules, whether they are computational or
linear algebraic, have a direction.

\section{Mathematical Development: Hilbert spaces for semantics}
\label{sec:qu-simple-maths}

This section heavily relies on the notations and definitions in
\secref{sec:back-hilbert}, where introductory notions on Hilbert spaces are
outlined. The goal of this section is to provide the tools to define the
denotational semantics of the programming language given above. To do so, we
work with contractions for convenience, since the main mathematical objects
that we need -- namely isometries and unitaries -- are in particular
contractions.

\paragraph{Sums.}
Given two maps $f,g \colon X \to Y$ in $\Contr$, their linear algebraic sum $f
+ g$ is not necessarily a contraction. We introduce the notion of
\emph{compatibility}. This notion is inherited from the one in restriction and
inverse categories (see Definition~\ref{def:restr-compati}). We use, in
particular, an observation that links zero morphisms to compatibility in
inverse categories (see Lemma~\ref{lem:inv-ortho-compati}); but this is adapted
to Hilbert spaces in this chapter. In this context, the \emph{join} is also
different, as we use the algebraic sum inherited from the vector space
structure.


\begin{definition}[Compatibility]
	\label{def:compati}
	Given $f,g \colon X \to Y$ two maps in $\Contr$, $f$ and $g$ are said to be
	compatible if $(\Ker f)^\bot \bot (\Ker g)^\bot$ and $\iim f \bot \iim g$.
\end{definition}

This definition of compatibility ensures that there is no overlap between the
inputs, and also between the outputs. The next lemma is then direct.

\begin{lemma}
	\label{lem:compati}
	Given two compatible contractive maps $f,g \colon X \to Y$, $f+g$ is
	also contractive.
\end{lemma}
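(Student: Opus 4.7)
The plan is to exploit both orthogonality conditions to split any vector $x \in X$ into pieces that $f$ and $g$ each act on essentially independently, and then use the orthogonality of the images to bound the norm of $(f+g)(x)$.

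First I would unpack the meaning of $(\Ker f)^\bot \perp (\Ker g)^\bot$. Since kernels of bounded linear maps are closed subspaces of a Hilbert space, we have $((\Ker g)^\bot)^\bot = \Ker g$, and the orthogonality relation between subspaces gives $(\Ker f)^\bot \subseteq \Ker g$ and symmetrically $(\Ker g)^\bot \subseteq \Ker f$. This is the key structural observation: any vector lying off the kernel of $f$ is automatically killed by $g$, and vice versa.

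Next I would decompose $X$ using orthogonal projections. Write $x = x_1 + x_2$ with $x_1 \in (\Ker f)^\bot$ and $x_2 \in \Ker f$, then further split $x_2 = y_1 + y_2$ with $y_1 \in (\Ker g)^\bot$ (which forces $y_1 \in \Ker f$ by the previous step) and $y_2 \in \Ker g \cap \Ker f$. By orthogonality of the summands we get $\|x\|^2 = \|x_1\|^2 + \|y_1\|^2 + \|y_2\|^2$. Applying $f + g$ now collapses three of the four possible terms: $fx_2 = 0$, $gx_1 = 0$, and both $fy_2 = gy_2 = 0$, leaving $(f+g)(x) = fx_1 + gy_1$.

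Finally I would invoke the second compatibility hypothesis $\iim f \perp \iim g$ to write $\|fx_1 + gy_1\|^2 = \|fx_1\|^2 + \|gy_1\|^2$, and then use contractivity of $f$ and $g$ separately to bound this by $\|x_1\|^2 + \|y_1\|^2 \leq \|x\|^2$. Taking square roots yields $\|(f+g)(x)\| \leq \|x\|$, so $f+g$ is contractive. The argument is essentially routine once the two orthogonality conditions are properly exploited; the only mild subtlety is checking that the decomposition $x_2 = y_1 + y_2$ really sits inside $\Ker f$, which is handled by the inclusion $(\Ker g)^\bot \subseteq \Ker f$ derived in the first step.
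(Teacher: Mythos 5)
Your proof is correct and follows essentially the same route as the paper: decompose $x$ via orthogonal projections onto $(\Ker f)^\bot$ and $(\Ker g)^\bot$, use the orthogonality of the two kernel-complements to get the Pythagorean bound $\|x_1\|^2 + \|y_1\|^2 \leq \|x\|^2$, and use $\iim f \perp \iim g$ to split the norm of the sum. Your write-up is in fact more careful than the paper's (which elides the projection argument and conflates $\|\cdot\|$ with $\|\cdot\|^2$), so there is nothing to fix.
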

\begin{proof}
	Let $x \in X$. Given the assumptions, there exists $x' \in (\Ker f)^\bot$
	and $x'' \in (\Ker g)^\bot$ such that $(f+g)(x) = f(x') + g(x'')$, $g(x') = 0$,
	$f(x'') = 0$ and $\ip{f(x')}{g(x'')}$. Therefore,
	\begin{align*}
		\Vert (f+g)(x) \Vert 
		&= \Vert f(x') + g(x'') \Vert 
		= \ip{f(x') + g(x'')}{f(x') + g(x'')} \\
		&= \ip{f(x')}{f(x')} + \ip{g(x'')}{g(x'')}
		\leq \ip{x'}{x'} + \ip{x''}{x''} \\ 
		&\leq \ip{x}{x} = \Vert x \Vert.
	\end{align*}
\end{proof}

As mentioned above, the conditions in Def.~\ref{def:compati} can be simplified
in more algebraic expressions, in the spirit of
Lemma~\ref{lem:inv-ortho-compati}. We prove a quick lemma first.

\begin{lemma}
	\label{lem:dagger-ker-im}
	Given $f \colon X \to Y$ in $\Contr$, we have:
	\begin{itemize}
		\item $\Ker(f\dg) = (\iim f)^\bot$;
		\item $\iim(f\dg)^\bot = \Ker f$.
	\end{itemize}
\end{lemma}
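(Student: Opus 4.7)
The plan is to use the defining property of the adjoint, namely $\ip{fx}{y} = \ip{x}{f\dg y}$, together with the fact that in a Hilbert space a vector $v$ is zero if and only if $\ip{w}{v} = 0$ for all $w$ (which follows from positive definiteness of the inner product, since taking $w = v$ yields $\ip{v}{v} = 0$, hence $v = 0$).

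For the first equality, I would argue as follows. Given $y \in Y$, the vector $f\dg y$ equals zero if and only if $\ip{x}{f\dg y} = 0$ for every $x \in X$. Using the adjoint identity, this is equivalent to $\ip{fx}{y} = 0$ for every $x \in X$, which says exactly that $y$ is orthogonal to every element of $\iim f$, i.e., $y \in (\iim f)^\bot$. This chain of equivalences yields $\Ker(f\dg) = (\iim f)^\bot$.

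For the second equality, the cleanest route is to apply the first equality to the morphism $f\dg \colon Y \to X$ and then invoke the fact that $(f\dg)\dg = f$ (which is part of the dagger-category structure of $\Contr$ recalled earlier). This immediately gives $\Ker f = \Ker((f\dg)\dg) = (\iim f\dg)^\bot$. Alternatively, one can run the same argument directly: $x \in \Ker f$ iff $\ip{fx}{y} = 0$ for all $y \in Y$ iff $\ip{x}{f\dg y} = 0$ for all $y \in Y$ iff $x \in (\iim f\dg)^\bot$.

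There is no genuine obstacle here: both parts are routine manipulations of the adjoint identity, and the only subtle point is the implicit appeal to positive definiteness of the inner product to turn a ``for all inner products vanish'' statement into a ``the vector is zero'' statement. I would make this explicit at the first use and then invoke it silently in the second.
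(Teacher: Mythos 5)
Your proposal is correct and follows essentially the same route as the paper: both parts reduce to the adjoint identity $\ip{fx}{y} = \ip{x}{f\dg y}$ together with positive definiteness of the inner product, with the paper writing the argument as two inclusions where you write a chain of equivalences. Your shortcut of deducing the second identity from the first via $(f\dg)\dg = f$ is a harmless streamlining of the paper's direct (and equivalent) computation.
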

\begin{proof}
	Let us prove both points separately.
	\begin{itemize}
		\item We proceed by double inclusion.
			\begin{itemize}
				\item Let $x \in \Ker(f\dg)$. Let $y \in X$. We have \( \ip{x}{f y} =
					\ip{f\dg x}{y} = \ip{0}{y} = 0 \).
					Therefore, $\Ker(f\dg) \subseteq (\iim f)^\bot$.
				\item Let $x \in (\iim f)^\bot$. Thus, for all $y \in X$, we have
					$\ip{x}{f y} = 0$, which implies that $\ip{f\dg x}{y} = 0$. Since it
					is true for all $y$, we have $f\dg x = 0$. Therefore, $(\iim f)^\bot
					\subseteq \Ker(f\dg)$.
			\end{itemize}
		\item We proceed by double inclusion. 
			\begin{itemize}
				\item Let $x \in \iim(f\dg)^\bot$. Thus, for all $y \in Y$,
					$\ip{x}{f\dg y} = 0$; then $\ip{f x}{y} = $ for all $y$, thus $f x = 0$.
					Therefore, $\iim(f\dg)^\bot \subseteq \Ker f$.
				\item Let $x \in \Ker f$. Let $y \in Y$. We have \( \ip{x}{f\dg y} =
					\ip{f x}{y} = 0 \). Therefore, $\Ker f \subseteq \iim(f\dg)^\bot$.
			\end{itemize}
	\end{itemize}
\end{proof}

We can now express a sufficient condition for compatibility in algebraic terms.

\begin{lemma}
	\label{lem:algebra-compati}
	Given two contractive maps $f,g \colon X \to Y$, $f\dg g = 0$ and $fg\dg = 0$
	iff $f$ and $g$ are compatible.
\end{lemma}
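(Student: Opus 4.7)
The plan is to unpack each equation $f\dg g = 0$ and $fg\dg = 0$ into a containment statement about images and kernels, and then use Lemma~\ref{lem:dagger-ker-im} to translate between these containments and the two orthogonality conditions defining compatibility. Concretely, $f\dg g = 0$ is equivalent to saying that for every $x \in X$, $g(x)$ lies in $\Ker(f\dg)$, i.e.\ $\iim g \subseteq \Ker(f\dg)$. By Lemma~\ref{lem:dagger-ker-im}, $\Ker(f\dg) = (\iim f)^\bot$, so this is exactly the statement $\iim g \subseteq (\iim f)^\bot$, equivalently $\iim f \bot \iim g$. The equation $fg\dg = 0$ unpacks symmetrically to $\iim(g\dg) \subseteq \Ker f$.

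Next, I would relate $\iim(g\dg) \subseteq \Ker f$ to $(\Ker f)^\bot \bot (\Ker g)^\bot$. For the forward direction, Lemma~\ref{lem:dagger-ker-im} gives $\iim(g\dg)^\bot = \Ker g$, hence $(\Ker g)^\bot = \iim(g\dg)^{\bot\bot} = \overline{\iim(g\dg)}$. Since $\Ker f$ is closed (as the preimage of $\{0\}$ under a continuous map), the containment $\iim(g\dg) \subseteq \Ker f$ passes to the closure, giving $(\Ker g)^\bot \subseteq \Ker f$, which is precisely the orthogonality $(\Ker f)^\bot \bot (\Ker g)^\bot$. For the reverse direction, assuming $(\Ker f)^\bot \bot (\Ker g)^\bot$, we get $(\Ker g)^\bot \subseteq \Ker f$; and since $\iim(g\dg) \subseteq \overline{\iim(g\dg)} = (\Ker g)^\bot$, this yields $\iim(g\dg) \subseteq \Ker f$, whence $fg\dg = 0$.

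Putting the two translations together gives the biconditional: $f\dg g = 0$ corresponds exactly to $\iim f \bot \iim g$, and $fg\dg = 0$ corresponds exactly to $(\Ker f)^\bot \bot (\Ker g)^\bot$. Together these are precisely the two conditions in Definition~\ref{def:compati}.

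The main subtlety will be the closure argument: $\iim(g\dg)$ is not in general a closed subspace of $X$, so one cannot simply apply $V = V^{\bot\bot}$ to the image directly. The argument has to go through $\overline{\iim(g\dg)} = (\Ker g)^\bot$, exploiting the fact that $\Ker f$ is closed in order to swap a raw containment for a closed-subspace containment. This is the only nontrivial point; the rest is essentially bookkeeping using Lemma~\ref{lem:dagger-ker-im} and the adjointness $\ip{f\dg y}{x} = \ip{y}{fx}$.
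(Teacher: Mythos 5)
Your proposal is correct and follows essentially the same route as the paper: translate $f\dg g = 0$ into $\iim g \subseteq \Ker(f\dg) = (\iim f)^\bot$ and $fg\dg = 0$ into $\iim(g\dg) \subseteq \Ker f$ via Lemma~\ref{lem:dagger-ker-im}, then match these with the two clauses of Definition~\ref{def:compati}. If anything, you are more careful than the paper on the one genuine subtlety — passing from $\iim(g\dg) \subseteq \Ker f$ to $(\Ker g)^\bot = \overline{\iim(g\dg)} \subseteq \Ker f$ using the closedness of $\Ker f$ — which the paper's proof glosses over with a ``similarly''.
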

\begin{proof}
	We prove this lemma by double implication.
	\begin{itemize}
		\item If $f$ and $g$ are compatible, then for all $x \in X$, $g x$ is in
			$\iim g \subseteq \iim f ^\bot = \Ker (f\dg)$, thus $f\dg g = 0$.
			Similarly, for all $y \in Y$, $g\dg y$ is in $\iim(g\dg) \subseteq
			(\iim(f\dg))^\bot = \Ker f$.
		\item If $f\dg g = 0$ and $fg\dg = 0$. The first equality implies that
			$\iim g \subseteq \Ker (f\dg) = (\iim f)^\bot$ and therefore $\iim g \bot \iim f$.
			The second equality similarly implies that $(\Ker f)^\bot \bot (\Ker g)^\bot$.
	\end{itemize}
\end{proof}

\begin{remark}
	It might seem that the conditions introduced above are not symmetric on $f$
	and $g$. But one can observe that $0\dg = 0$ and $(f\dg g)\dg = g\dg
	f\dg{\dg} = g\dg f$, thus  $f\dg g = 0$ iff $g\dg f = 0$. Similarly, $fg\dg
	= 0$ iff $gf\dg = 0$.
\end{remark}

Lemma~\ref{lem:algebra-compati} introduces a new point of view on
compatibility, through an \emph{orthogonality} between morphisms, as it was
observed for inverse categories in Remark~\ref{rem:loose-inner-product}. This
new point of view of orthogonality is a generalisation of the orthogonality in
Hilbert spaces.  Indeed, two vectors $\ket x$ and $\ket y$ in a Hilbert space
$H$ are orthogonal if $\braket x y = 0$. In our generalised view, $\ket x$ and
$\ket y$ are orthogonal if $\ket x \dg \ket y = 0$. Since $\ket x \dg = \bra
x$, our orthogonality between morphisms generalises the usual notion of
orthogonality.

\begin{example}
	The morphisms $\ketbra 0 0 \colon \C^2 \to \C^2$ and $\ketbra 1 1 \colon \C^2 \to
	\C^2$ are orthogonal in our generalised sense, because the vectors $\ket 0$ and
	$\ket 1$ are orthogonal in the linear algebraic sense. This justifies that
	their linear sum $\ketbra 0 0 + \ketbra 1 1$ is a contraction (and, in this
	case, it is also a unitary).
\end{example}

\paragraph{Direct sum.}
Unsurprisingly, the unit type is to be represented by the one-dimensional
Hilbert space $\C$, the line of complex numbers. In the syntax, orthogonality
and thus pattern-matching, depend on direct sums. The latter are interpreted
as direct sums of Hilbert spaces. We show that this interpretation gives
rise to orthogonality in the sense of contractions.

\begin{definition}
	We write $\iota^{X,Y}_l \colon X \to X \oplus Y$ for the isometry such that
	for all $x \in X$, $\iota^{X,Y}_l x = (x,0)$. We call this the \emph{left
	injection}.  Similarly, the \emph{right injection} is written $\iota^{X,Y}_r
	\colon Y \to X \oplus Y$.
\end{definition}

\begin{lemma}[\cite{heunen2019categories}]
	\label{lem:injection-compati}
	Given two Hilbert spaces $X,Y$, $(\iota^{X,Y}_l)\dg
	\iota^{X,Y}_r = 0$ and $(\iota^{X,Y}_r)\dg \iota^{X,Y}_l = 0$.
\end{lemma}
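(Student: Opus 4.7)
The plan is to prove this by direct computation, first identifying the adjoints of the injections as the corresponding projections, and then composing.

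First, I would compute $(\iota^{X,Y}_l)\dg \colon X \oplus Y \to X$. By the defining property of the adjoint, for all $x \in X$ and all $(x',y') \in X \oplus Y$, we require
\[
\langle \iota^{X,Y}_l x , (x',y') \rangle_{X \oplus Y} = \langle x, (\iota^{X,Y}_l)\dg (x',y') \rangle_X .
\]
Using the inner product on a direct sum of Hilbert spaces recalled in \secref{sub:hilb-additional}, the left-hand side equals
\[
\langle (x,0), (x',y') \rangle_{X \oplus Y} = \langle x, x' \rangle_X + \langle 0, y' \rangle_Y = \langle x, x' \rangle_X.
\]
By uniqueness of the adjoint, this forces $(\iota^{X,Y}_l)\dg (x',y') = x'$, i.e., the left adjoint is the first projection. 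Symmetrically, $(\iota^{X,Y}_r)\dg (x',y') = y'$ is the second projection.

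Once these adjoints are identified, the lemma follows by a one-line calculation: for all $y \in Y$,
\[
(\iota^{X,Y}_l)\dg \iota^{X,Y}_r (y) = (\iota^{X,Y}_l)\dg (0,y) = 0,
\]
and symmetrically $(\iota^{X,Y}_r)\dg \iota^{X,Y}_l (x) = (\iota^{X,Y}_r)\dg (x,0) = 0$ for all $x \in X$. There is no real obstacle here; the only subtlety is to pin down the adjoint via the defining universal property of the inner product, after which both identities are immediate from the fact that the two summands of $X \oplus Y$ are orthogonal in the linear-algebraic sense.
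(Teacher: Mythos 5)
Your proof is correct: you identify $(\iota^{X,Y}_l)\dg$ and $(\iota^{X,Y}_r)\dg$ as the two projections via the defining property of the adjoint and the inner product on $X \oplus Y$, and the two identities then follow immediately. The paper gives no proof of its own (it cites \cite{heunen2019categories}), and your computation is exactly the standard argument that reference would supply, so there is nothing to add.
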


\begin{example}
	\label{ex:qua-iota}
	The previous lemma ensures that $\iota^{X,Y}_l (\iota^{X,Y}_l)\dg$ and
	$\iota^{X,Y}_r (\iota^{X,Y}_r)\dg$ are compatible. Note that $\iota^{X,Y}_l
	(\iota^{X,Y}_l)\dg + \iota^{X,Y}_r (\iota^{X,Y}_r)\dg = \iid$.
\end{example}

Note that given a complex number $\alpha$ and a contraction $f \colon A \to B$,
the outer product $\alpha \cdot f$ is written $\alpha f$ when it is not
ambiguous. Given a set $S$, we write $(\alpha_i)_{i \in S}$ for a family of
complex numbers indexed by $S$. Given two sets $S$ and $S'$, we write
$(\alpha_{i,j})_{(i,j) \in S \times S'}$ for a matrix of complex numbers
indexed by $S$ and $S'$. The sets of indices can be omitted if there is no
ambiguity, as in Lemma~\ref{lem:normalised-sum}.

\begin{lemma}
	\label{lem:normalised-sum}
	Given a family of pairwise output compatible isometries $f_i \colon A \to B$,
	and a family of complex numbers $\alpha_i$ such that $\sum_i \abs{\alpha_i}^2
	= 1$, $\sum_i \alpha_i f_i$ is an isometry.
\end{lemma}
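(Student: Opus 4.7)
The plan is to prove that $g \defeq \sum_i \alpha_i f_i$ is an isometry by directly verifying the defining equation $g\dg g = \iid_A$. The key algebraic identity I would exploit is bilinearity of composition with respect to the dagger:
\[
g\dg g \;=\; \Bigl(\sum_i \alpha_i f_i\Bigr)\dg \Bigl(\sum_j \alpha_j f_j\Bigr) \;=\; \sum_{i,j} \overline{\alpha_i}\,\alpha_j \, f_i\dg f_j .
\]
So the whole proof reduces to controlling the cross terms $f_i\dg f_j$ for $i \neq j$, and the diagonal terms $f_i\dg f_i$.

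For the diagonal terms, I would simply use the fact that each $f_i$ is an isometry, so $f_i\dg f_i = \iid_A$ by definition. For the off-diagonal terms, I would unpack what ``output compatible'' means: referring to Definition~\ref{def:compati}, this is the image-side half of compatibility, namely $\iim f_i \bot \iim f_j$ for $i \neq j$ (the kernel-side half is automatic for isometries, whose kernels are trivial). Then by the first statement of Lemma~\ref{lem:dagger-ker-im}, $\Ker(f_i\dg) = (\iim f_i)^\bot$, so $\iim f_j \subseteq \Ker(f_i\dg)$, which exactly says $f_i\dg f_j = 0$.

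Substituting these facts back into the bilinear expansion gives
\[
g\dg g \;=\; \sum_i \abs{\alpha_i}^2 \, \iid_A \;+\; \sum_{i \neq j} \overline{\alpha_i}\alpha_j \cdot 0 \;=\; \Bigl(\sum_i \abs{\alpha_i}^2\Bigr)\, \iid_A \;=\; \iid_A ,
\]
where the last equality uses the normalisation hypothesis $\sum_i \abs{\alpha_i}^2 = 1$. This is exactly the defining equation of an isometry, completing the proof.

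I do not expect any real obstacle here; the computation is clean and each step is already justified by the earlier lemmas in the section. The only point worth flagging is convergence if the index set were infinite, but throughout this chapter the families $(\alpha_i)_i$ arising from linear combinations and orthogonal decompositions are indexed by finite sets, so this concern does not arise.
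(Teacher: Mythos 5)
Your proof is correct and follows essentially the same route as the paper: expand $g\dg g$ bilinearly, kill the cross terms using compatibility of images (the paper cites this as ``pairwise compatibility'', which is exactly the content of Lemma~\ref{lem:algebra-compati}, while you rederive it from Lemma~\ref{lem:dagger-ker-im}), reduce the diagonal terms to $\iid_A$ by the isometry hypothesis, and conclude by normalisation. Your remark that the kernel-side condition of Definition~\ref{def:compati} is automatic for isometries, so that ``output compatible'' suffices, is a correct and worthwhile clarification of the statement's wording.
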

\begin{proof}
	\begin{align*} 
		&\ (\sum_i \alpha_i f_i)\dg \circ (\sum_j \alpha_j f_j) & \\
		&= (\sum_i \res\alpha_i f_i\dg) \circ (\sum_j \alpha_j f_j) 
		& (\text{dagger and sum commute}) \\
		&= \sum_{i,j} (\res\alpha_i \alpha_j) f_i\dg \circ f_j
		& (\text{composition and sum commute}) \\
		&= \sum_i (\res\alpha_i \alpha_i) f_i\dg \circ f_i
		& (\text{pairwise compatibility}) \\
		&= \left( \sum_i \res\alpha_i \alpha_i \right) \iid 
		& (\text{isometry}) \\
		&= \left( \sum_i \abs{\alpha_i}^2 \right) \iid = \iid. &
	\end{align*}
\end{proof}

We recall that given two maps $f \colon A \to C$ and $g \colon B \to C$ in $\Contr$, 
if $f\dg g = 0_{C,A}$, we say that $f$ and $g$ are orthogonal. We show that this
orthogonality is preserved by postcomposing with an isometry.

\begin{lemma}
	\label{lem:ortho-postiso}
	Given two orthogonal maps $f \colon A \to C$ and $g \colon B \to C$ in
	$\Contr$, and given an isometry $h \colon C \to D$, then $h \circ f \colon A
	\to D$ and $h \circ g \colon B \to D$ are also orthogonal.
\end{lemma}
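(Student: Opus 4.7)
The plan is to unpack the definition of orthogonality and exploit the defining property of an isometry, namely that $h\dg h = \iid_C$. Since the statement is purely algebraic and involves only the composition and dagger structure, I expect this to be a short one-line computation rather than anything that requires a structural argument.

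Concretely, I would start from the hypothesis $f\dg \circ g = 0_{B,A}$, which is what it means for $f$ and $g$ to be orthogonal in the generalised sense introduced just before the lemma (see also the discussion following Lemma~\ref{lem:algebra-compati}). Then I would compute $(h \circ f)\dg \circ (h \circ g)$ using the contravariant functoriality of the dagger on $\Contr$, rewriting it as $f\dg \circ h\dg \circ h \circ g$. Using the isometry condition $h\dg \circ h = \iid_C$, this collapses to $f\dg \circ g$, which is $0_{B,A}$ by hypothesis. Hence $(h \circ f)\dg \circ (h \circ g) = 0_{B,A}$, which is exactly the orthogonality of $h \circ f$ and $h \circ g$.

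There is no real obstacle here: the only thing worth being slightly careful about is keeping track of the types -- $h\dg \colon D \to C$ so that $f\dg \circ h\dg \circ h \circ g \colon B \to A$ is well-formed -- and noting that the isometry hypothesis is used precisely at the middle, eliminating the two copies of $h$. No facts beyond the definition of the dagger, the composition law $(h \circ f)\dg = f\dg \circ h\dg$, and the defining property of an isometry are needed.
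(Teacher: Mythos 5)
Your proposal is correct and is essentially identical to the paper's proof: both compute $(h \circ f)\dg \circ (h \circ g) = f\dg \circ h\dg \circ h \circ g = f\dg \circ g = 0$ using contravariance of the dagger and the isometry condition. Your attention to the types is fine (and indeed the zero morphism should have type $B \to A$, as you write).
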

\begin{proof}
	\begin{align*}
		&\ (h \circ f)\dg \circ h \circ g & \\
		&= f\dg \circ h\dg \circ h \circ g
		& (\text{dagger is contravariant}) \\
		&= f\dg \circ \iid_C \circ g = f\dg \circ g
		& (\text{isometry}) \\
		&= 0_{C,A}.
		& (\text{hypothesis})
	\end{align*}
\end{proof}

In a similar vein, the postcomposition of an isometry with an isometry is still
an isometry. This was already observed when we mentionned that Hilbert spaces
and isometries form a category.

The canonical countably-dimensional Hilbert space is $\ell^2(\N)$, defined in
\secref{sec:back-hilbert}. We recall that we write $\ket n$ for the elements
of the canonical basis in $\ell^2(\N)$. This is an abuse of notation, since the
symbols $\ket 0$ and $\ket 1$ are already used for the canonical basis of
$\C^2$. This is not an issue, since there is an isometric embedding $\C^2 \to
\ell^2(\N)$ which maps $\ket 0$ to $\ket 0$ and $\ket 1$ to $\ket 1$.

\begin{definition}
	We write $\rmsucc \colon \ell^2(\N) \to \ell^2(\N)$ for the linear map
	$\ell^2(\N) \to \ell^2(\N)$ which maps $\ket n$ to $\ket{n+1}$.
\end{definition}

\begin{remark}
	Note that $\rmsucc$ can also be seen as the image of the successor function
	in the natural numbers by the functor $\ell^2$. The linear map $\rmsucc$ is
	an isometry.
\end{remark}

\begin{example}
	\[
		\rmsucc \ket 7 = \ket 8
		\qquad
		\rmsucc \left(\frac{\sqrt 3}{2}\ket 9 + \frac 1 2 \ket{11} \right)
		= \frac{\sqrt 3}{2}\ket{10} + \frac 1 2 \ket{12}
	\]
\end{example}

\paragraph{Unitaries.}
The denotational semantics of our programming language involves unitary maps to
interpret the functions. Those maps live in the category $\Uni$, which is a rig
category: it has bifunctors $\oplus$ and $\otimes$ inherited from $\Hilb$.
Hence the next lemma.

\begin{lemma}
	Given two maps $f \colon A \to B$ and $g \colon C \to D$ in $\Uni$,
	$f \otimes g \colon A \otimes C \to B \otimes D$ is a map in $\Uni$
	and $f \oplus g \colon A \oplus C \to B \oplus D$ is a map in $\Uni$.
\end{lemma}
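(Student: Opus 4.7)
The plan is to verify unitarity by exploiting the fact that the dagger functor on $\Hilb$ is a monoidal functor with respect to both $\otimes$ and $\oplus$, together with the bifunctoriality of these two products. Concretely, I will reduce the two equations defining a unitary ($h\dg h = \iid$ and $hh\dg = \iid$) for $h = f\otimes g$ or $h = f\oplus g$ to the corresponding equations for $f$ and $g$.

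First I would recall the coherence $(f\otimes g)\dg = f\dg \otimes g\dg$, which was stated explicitly in the paper when introducing the dagger structure of $\Hilb$. Using bifunctoriality of $\otimes$, I compute
\[
(f\otimes g)\dg \circ (f\otimes g) = (f\dg \otimes g\dg)\circ(f\otimes g) = (f\dg f)\otimes (g\dg g) = \iid_A \otimes \iid_C = \iid_{A\otimes C},
\]
and by an entirely symmetric computation $(f\otimes g)\circ (f\otimes g)\dg = \iid_{B\otimes D}$. This establishes closure under $\otimes$.

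Next I would do the same for $\oplus$, where the analogous coherence $(f\oplus g)\dg = f\dg \oplus g\dg$ holds in $\Hilb$ (this follows directly from the fact that $\oplus$ is a biproduct and the dagger preserves the biproduct structure, or can be verified pointwise by noting that $(f\oplus g)(a,c) = (fa,gc)$ and checking the inner product on $A\oplus C$ with its canonical sum inner product). With this in hand, bifunctoriality of $\oplus$ gives
\[
(f\oplus g)\dg \circ (f\oplus g) = (f\dg f)\oplus (g\dg g) = \iid_A \oplus \iid_C = \iid_{A\oplus C},
\]
and likewise for the other composition.

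There is no real obstacle here: the statement is a routine closure property, and the only non-trivial point is the distribution of the dagger over $\oplus$, which is standard and mirrors the tensor case recalled earlier in the excerpt. If desired, the argument generalises to show that both $\otimes$ and $\oplus$ restrict to bifunctors $\Uni \times \Uni \to \Uni$, which is exactly what is needed for the denotational semantics of the unitary constructors $\omega_1 \otimes \omega_2$ and $\omega_1 \oplus \omega_2$.
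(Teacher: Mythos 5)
Your argument is correct and is exactly the content of the paper's one-line proof, which simply observes that $\otimes$ and $\oplus$ are $\dagger$-functors; you have just unfolded that statement into the explicit computations $(f\otimes g)\dg\circ(f\otimes g)=(f\dg f)\otimes(g\dg g)=\iid$ and its analogues. No differences in approach worth noting.
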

\begin{proof}
	Direct since the functors $\oplus$ and $\otimes$ are $\dagger$-functors.
\end{proof}

Finally, we present an operation that is common to quantum computing, and thus
preserves the unitary structure.

\begin{lemma}[Controlled unitary]
	Given a unitary map $f \colon A \to A$, there is a unitary map $\rmctrl_A(f)
	\colon (\C \oplus \C) \otimes A \to (\C \oplus \C) \otimes A$ such that
	$\rmctrl_A (f) = \ketbra 0 0 \otimes \iid + \ketbra 1 1 \otimes f$.
\end{lemma}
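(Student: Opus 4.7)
The plan is to define $U \eqdef \ketbra 0 0 \otimes \iid_A + \ketbra 1 1 \otimes f$, then to verify in two stages that (i) this expression is a well-defined bounded linear map on $(\C \oplus \C) \otimes A$, and (ii) it satisfies $U^\dagger U = \iid = U U^\dagger$, hence is unitary.

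For the first stage, I would observe that under the identification $\C \oplus \C \cong \ell^2(\{0,1\})$, the maps $\ketbra 0 0$ and $\ketbra 1 1$ are the orthogonal projections $\iota_l \iota_l^\dagger$ and $\iota_r \iota_r^\dagger$ from Example~\ref{ex:qua-iota}; they are contractions, they satisfy $\ketbra 0 0 + \ketbra 1 1 = \iid$, and by Lemma~\ref{lem:injection-compati} we have $\ketbra 0 0 \ketbra 1 1 = 0 = \ketbra 1 1 \ketbra 0 0$. Tensoring preserves these identities, so the two summands $\ketbra 0 0 \otimes \iid$ and $\ketbra 1 1 \otimes f$ are contractions whose products in either order vanish. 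By Lemma~\ref{lem:algebra-compati} they are compatible, so Lemma~\ref{lem:compati} ensures that $U$ is itself a well-defined contraction.

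For the second stage, I would just compute directly. Using $\ketbra i i^\dagger = \ketbra i i$ and that the dagger distributes over $\otimes$ and over compatible sums, one obtains $U^\dagger = \ketbra 0 0 \otimes \iid + \ketbra 1 1 \otimes f^\dagger$. Expanding $U^\dagger U$ produces four terms; the two cross-terms involve $\ketbra 0 0 \ketbra 1 1$ and $\ketbra 1 1 \ketbra 0 0$ and thus vanish, while the two diagonal terms simplify using $\ketbra i i \ketbra i i = \ketbra i i$ together with $f^\dagger f = \iid_A$ (since $f$ is unitary). What remains is $(\ketbra 0 0 + \ketbra 1 1) \otimes \iid_A = \iid_{\C \oplus \C} \otimes \iid_A$. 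The symmetric computation for $U U^\dagger$ uses $f f^\dagger = \iid_A$ and yields the same conclusion.

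There is no real obstacle here: the only bookkeeping is making sure the distributive and sesquilinear identities for the tensor and the dagger are invoked cleanly, and that the compatibility argument is properly cited so that the algebraic manipulations on the sum are justified rather than assumed.
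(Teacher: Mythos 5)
Your proposal is correct and matches the paper, whose own proof of this lemma is literally just ``Direct.'' You have simply supplied the routine details the paper omits: the compatibility argument (via Lemmas~\ref{lem:injection-compati}, \ref{lem:algebra-compati} and \ref{lem:compati}) showing the sum stays in $\Contr$, and the four-term expansion of $U^\dagger U$ and $UU^\dagger$ in which the cross-terms vanish and the diagonal terms collapse by unitarity of $f$ — both steps are exactly what ``direct'' is standing in for.
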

\begin{proof}
	Direct.
\end{proof}

\section{Denotational Semantics}
\label{sec:very-simple-semantics}

As usual, we write $\sem -$ for the interpretation of types and term
judgements. As mentioned in the previous section, the presentation makes
extensive use of contractions for the denotational semantics. However, values
and terms are directly announced to be isometries, for clarity. It will also
help us highlight the fact that values and terms represent sound quantum
states. In the same vein, the interpretation of unitaries is given as unitary
maps between two Hilbert spaces; but the proof that the semantics of a unitary
abstraction is unitary requires the mathematical development at the level of
contractions.

\subsection{Detailed presentation of the Semantics}
\label{sub:qua-detailed-semantics}

\paragraph{Types.}
The interpretation of a type $A$ is given by a countably-dimensional Hilbert
space. It is given by induction on the grammar of the types. This
interpretation is detailed in
Figure~\ref{fig:-very-simple-type-interpretation}.

\begin{figure}[!h]
	\begin{align*}
		\sem{A} &\colon \Hilb \\
		\sem{\one} &= \C \\
		\sem{A\otimes B} &= \sem{A} \otimes \sem{B} \\
		\sem{A\oplus B} &= \sem{A} \oplus \sem{B} \\
		\sem{\nat} &= \ell^2(\N)
	\end{align*}
	\caption{Interpretation of types.}
	\label{fig:-very-simple-type-interpretation}
\end{figure}

\paragraph{Expressions.}
We start with \emph{expressions}, whose typing rules are introduced in
Figure~\ref{fig:typing-values-simple}. Expressions are terms without unitary
application. The semantics of general terms in given below, once the semantics
of unitaries is defined. Judgements for expressions are first interpreted as
contractions between Hilbert spaces, and we then show that they are isometries.
A judgement is of the form $\Delta \entail e \colon A$, and its interpretation
is written $\sem{\Delta \entail e \colon A}$. Contexts $\Delta = x_1 \colon
A_1 \dots x_n \colon A_n$ are given a denotation $\sem\Delta = \sem{A_1}
\otimes \dots \otimes \sem{A_n}$. When it is not ambiguous, the interpretation
of the judgement $\Delta \entail e \colon A$ is written $\sem e$.

\begin{figure}
	\begin{align*}
		\sem{\Delta \entail e \colon A} &\colon \Iso(\sem	\Delta,\sem A) \\
		\sem{\entail * \colon \one} &= \iid_{\sem \one} \\
		\sem{x \colon A \entail x \colon A} &= \iid_{\sem A} \\
		\sem{\Delta \entail \inl e \colon A\oplus B} &= \iota_l^{\sem A, \sem B}
		\circ \sem{\Delta \entail e\colon A} \\
		\sem{\Delta \entail \inr e \colon A\oplus B} &= \iota_r^{\sem A, \sem B}
		\circ \sem{\Delta \entail e\colon A} \\
		\sem{\Delta_1, \Delta_2 \entail \pair{e}{e'}\colon A\otimes B} &=
		\sem{\Delta_1 \entail e\colon A} \otimes \sem{\Delta_2 \entail
		e'\colon B} \\
		\sem{\entail \zero \colon \nat} &= \ket 0 \\
		\sem{\Delta \entail \suc e \colon \nat} &= \rmsucc \circ \sem{\Delta
		\entail e \colon \nat} \\
		\sem{\Delta \entail \Sigma_{i\leq k} (\alpha_i \cdot  e_i) \colon A} &= \sum_{i\leq
		k} \alpha_i \sem{\Delta \entail e_i \colon A}
	\end{align*}
	\caption{Interpretation of expression judgements as morphisms in $\Iso$.}
	\label{fig:very-simple-value-interpretation}
\end{figure}

\begin{lemma}[Isometry]
	\label{lem:value-isometry}
	If $\Delta \entail e \colon A$ is a well-formed expression judgement, then
	$\sem{\Delta \entail e \colon A}$ is an isometry.
\end{lemma}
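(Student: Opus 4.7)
The plan is to prove the statement by induction on the derivation of the typing judgement $\Delta \vdash e \colon A$, with all cases except the linear combination rule being essentially routine. The only step requiring substantial work is the $\Sigma_i(\alpha_i \cdot e_i)$ case, which hinges on an auxiliary lemma that syntactic orthogonality is reflected by the denotational semantics.

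First I would dispose of the base and congruence cases. The judgements $\vdash * \colon \one$ and $x \colon A \vdash x \colon A$ are interpreted as identities, which are trivially isometries; $\vdash \zero \colon \nat$ is interpreted as the state $\ket 0 \colon \C \to \ell^2(\N)$, which is an isometry. For $\inl e$, $\inr e$, and $\suc e$, the interpretation is a composition of the form $h \circ \sem e$ where $h$ is one of $\iota_l$, $\iota_r$, $\rmsucc$; all three are isometries (the first two by definition, the third as noted just before the definition of $\rmsucc$), the inductive hypothesis gives that $\sem e$ is an isometry, and the composition of isometries is an isometry since they form a category. For $e \otimes e'$, the interpretation is $\sem e \otimes \sem{e'}$, and the tensor product bifunctor on $\Hilb$ preserves isometries (this can be verified directly from the inner product $\langle x_1 \otimes y_1 \mid x_2 \otimes y_2 \rangle = \langle x_1 \mid x_2 \rangle \langle y_1 \mid y_2 \rangle$ on the tensor product).

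The core of the proof lies in the linear combination case $\Delta \vdash \Sigma_i (\alpha_i \cdot e_i) \colon A$. The typing rule gives $\sum_i |\alpha_i|^2 = 1$ together with pairwise orthogonality $e_i \bot e_j$ for $i \neq j$, and by the inductive hypothesis each $\sem{e_i}$ is an isometry. To apply Lemma~\ref{lem:normalised-sum}, it remains to show that the family $\{\sem{e_i}\}_i$ is pairwise output-compatible, i.e.\ $\sem{e_i}\dg \sem{e_j} = 0$ whenever $i \neq j$. This requires an auxiliary lemma:
\begin{lemma*}
If $\Delta_1 \vdash e_1 \colon A$ and $\Delta_2 \vdash e_2 \colon A$ are well-typed expressions with $e_1 \bot e_2$, then $\sem{e_1}\dg \circ \sem{e_2} = 0$.
\end{lemma*}
The hard part will be proving this lemma, which I would do by induction on the derivation of $e_1 \bot e_2$. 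The base case $\inl t_1 \bot \inr t_2$ follows from Lemma~\ref{lem:injection-compati}: $\sem{\inl t_1}\dg \sem{\inr t_2} = \sem{t_1}\dg \iota_l\dg \iota_r \sem{t_2} = \sem{t_1}\dg \circ 0 \circ \sem{t_2} = 0$. The case $\zero \bot \suc t$ is immediate because $\ket 0\dg \rmsucc = 0$ (the image of $\rmsucc$ is the orthogonal complement of $\ket 0$). The congruence cases for $\inl$, $\inr$, $\suc$, and the tensor cases follow by direct computation, using that the relevant pre/post-composed isometry can be cancelled or absorbed via the dagger. The case $\omega~t_1 \bot \omega~t_2$ requires noting that the interpretation of $\omega$ (to be defined later) is unitary, so $\sem\omega\dg \sem\omega = \iid$. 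The main difficulty is the two sum cases: for $t \bot \Sigma_i (\alpha_i \cdot e_i)$ with $t \bot e_i$ for all $i$, we compute $\sem t\dg \sum_i \alpha_i \sem{e_i} = \sum_i \alpha_i (\sem t\dg \sem{e_i}) = 0$ by the inductive hypothesis; and the symmetric sum-sum case $\Sigma_j(\alpha_j \cdot t_j) \bot \Sigma_k(\beta_k \cdot t_k)$ with pairwise orthogonal $t_i$'s, expanded by bilinearity, produces a double sum $\sum_{j,k} \bar\alpha_j \beta_k \sem{t_j}\dg \sem{t_k}$ in which the $j \neq k$ terms vanish by the inductive hypothesis, leaving $\sum_{i \in J \cap K} \bar\alpha_i \beta_i \sem{t_i}\dg \sem{t_i} = \left(\sum_{i \in J \cap K} \bar\alpha_i \beta_i\right) \iid = 0$ by the side condition of the orthogonality rule and the fact that each $\sem{t_i}$ is an isometry (which itself uses the main theorem's inductive hypothesis on subterms).

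With the auxiliary lemma in hand, pairwise output-compatibility of the family $\{\sem{e_i}\}_i$ is established, and Lemma~\ref{lem:normalised-sum} applied to $\{\alpha_i, \sem{e_i}\}_i$ together with the normalisation condition $\sum_i |\alpha_i|^2 = 1$ yields that $\sum_i \alpha_i \sem{e_i} = \sem{\Sigma_i(\alpha_i \cdot e_i)}$ is an isometry, completing the induction. Note that the main theorem and the auxiliary lemma are, strictly speaking, proved by simultaneous induction, since the $\omega~t$ orthogonality case of the lemma uses the fact that $\sem\omega$ is unitary; this interleaving with the semantics of unitaries should be made explicit in the final write-up.
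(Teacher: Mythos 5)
Your proposal is correct and follows essentially the same route as the paper: the paper defers this lemma to the more general term-level statement (Lemma~\ref{lem:term-isometry}), which it proves by simultaneous induction with the lemma that syntactic orthogonality implies $\sem{e_1}\dg \circ \sem{e_2} = 0$ (Lemma~\ref{lem:orthogonal-semantics-ortho}), closing the $\Sigma$ case via Lemma~\ref{lem:normalised-sum} exactly as you do. Your closing remark about the mutual induction and the interleaving with the unitarity of $\sem\omega$ is precisely how the paper organises the argument.
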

\begin{proof}
	Given later with the semantics of terms in general, see
	Lemma~\ref{lem:term-isometry}.
\end{proof}

In quantum physics, the state of a particle is usually described as an
isometry. Showing that our expressions are interpreted as isometries, we can
justify that they are correct \emph{quantum states}. The proof of the previous
lemma is included in one of a larger result, showing that the denotation of all
terms are isometries (see Lemma~\ref{lem:term-isometry}). Moreover, expressions
are used to define the unitary abstractions as a collection of patterns: it is
sensible to prove that these patterns are interpreted with compatible
morphisms, in the sense of Definition~\ref{def:compati}.

\begin{lemma}
	\label{lem:orthogonal-semantics-ortho-value}
	Given two judgements $\Delta_1 \entail e_1 \colon A$ and
	$\Delta_2 \entail e_2 \colon A$, such that $e_1~\bot~e_2$, we have
	$\sem{e_1}\dg \circ \sem{e_2} = 0$.
\end{lemma}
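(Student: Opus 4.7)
The plan is to proceed by induction on the derivation of $e_1~\bot~e_2$ as given in Definition~\ref{def:orthogonality}, exploiting at each step the structure of the semantic interpretation described in Figure~\ref{fig:very-simple-value-interpretation}. Since the isometry lemma (Lemma~\ref{lem:value-isometry}) is stated as being proved together with the term-level semantics, I will treat it as available here, but only its use in the sum-versus-sum case is strictly necessary.

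The base cases are the ones that feed the induction with the needed orthogonalities. For $\inl t_1~\bot~\inr t_2$, I compute
\[
\sem{\inl t_1}\dg\circ\sem{\inr t_2}
= \sem{t_1}\dg\circ(\iota_l^{\sem A,\sem B})\dg\circ\iota_r^{\sem A,\sem B}\circ\sem{t_2},
\]
which vanishes by Lemma~\ref{lem:injection-compati}. For $\zero~\bot~\suc t$, I use $\sem{\zero}=\ket 0$ and $\sem{\suc t}=\mathrm{succ}\circ\sem t$, observing that $\bra 0\circ\mathrm{succ}=0$ because $\mathrm{succ}$ shifts the canonical basis upward.

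For the congruence cases I just unfold the semantics and let the induction hypothesis kick in. The injection cases $\inl t_1~\bot~\inl t_2$ and $\inr t_1~\bot~\inr t_2$ use that $\iota_l\dg\iota_l=\iid=\iota_r\dg\iota_r$, so the computation reduces to $\sem{t_1}\dg\sem{t_2}=0$. The successor case uses that $\mathrm{succ}$ is an isometry, hence $\mathrm{succ}\dg\mathrm{succ}=\iid$. The tensor cases use the functoriality of $\otimes$ and of the dagger, e.g.\
\[
\sem{\pair{t}{t_1}}\dg\circ\sem{\pair{t'}{t_2}}
=(\sem t\dg\circ\sem{t'})\otimes(\sem{t_1}\dg\circ\sem{t_2}),
\]
which is $0$ since the second factor is $0$ by induction. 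The unitary-application case $\omega~t_1~\bot~\omega~t_2$ uses that $\sem\omega$ is unitary, hence $\sem\omega\dg\sem\omega=\iid$, reducing to $\sem{t_1}\dg\sem{t_2}=0$.

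The two sum cases require a bit more care. For the rule introducing a scalar $0$, the corresponding summand contributes nothing to $\sem{\Sigma_i\alpha_i t_i}$, so the statement follows from the preceding inductive hypothesis on the smaller sum. The main obstacle, and the most delicate case, is the general sum-versus-sum rule: given $\Sigma_{j\in J}(\alpha_j\cdot t_j)~\bot~\Sigma_{k\in K}(\beta_k\cdot t_k)$ with $t_i~\bot~t_j$ for all $i\neq j$ in $I=J\cup K$ and $\sum_{i\in J\cap K}\bar\alpha_i\beta_i=0$, I expand by bilinearity:
\[
\Bigl(\sum_{j\in J}\alpha_j\sem{t_j}\Bigr)\dg\circ\Bigl(\sum_{k\in K}\beta_k\sem{t_k}\Bigr)
=\sum_{j\in J,\,k\in K}\bar\alpha_j\beta_k\,\sem{t_j}\dg\circ\sem{t_k}.
\]
For $j\ne k$ (both in $I$) the induction hypothesis gives $\sem{t_j}\dg\circ\sem{t_k}=0$; for $j=k\in J\cap K$, Lemma~\ref{lem:value-isometry} gives $\sem{t_j}\dg\circ\sem{t_j}=\iid$. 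The expression collapses to $\bigl(\sum_{i\in J\cap K}\bar\alpha_i\beta_i\bigr)\iid$, which is $0$ by the side condition of the rule. Symmetry of the relation $\bot$ is handled implicitly, since $\sem{e_1}\dg\circ\sem{e_2}=0$ iff $\sem{e_2}\dg\circ\sem{e_1}=0$ by taking the dagger.
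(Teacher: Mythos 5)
Your case analysis is correct and the individual computations match what the paper does: the paper proves this statement as a sub-case of the combined proof of Lemma~\ref{lem:orthogonal-semantics-ortho} and Lemma~\ref{lem:term-isometry}, and every identity you use ($\iota_l\dg\iota_r=0$, $\bra{0}\circ\rmsucc=0$, isometries cancelling under the dagger, bilinearity of the sum, the side condition $\sum_{i\in J\cap K}\bar\alpha_i\beta_i=0$) appears there in the same role. The organisational difference is that you induct on the derivation of $e_1~\bot~e_2$, whereas the paper inducts on the typing derivation of $e_1$ and proves the isometry property and the orthogonality property \emph{simultaneously}.

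That difference is where the one genuine gap lies: you ``treat Lemma~\ref{lem:value-isometry} as available,'' but in the paper that lemma is not available independently --- its proof (via Lemma~\ref{lem:normalised-sum}, whose compatibility hypothesis is exactly semantic orthogonality of the summands) relies on the very statement you are proving. So as written your argument is circular at the sum-versus-sum case, which is precisely the case where you need $\sem{t_j}\dg\circ\sem{t_j}=\iid$. The fix is standard and is what the paper does: run the two statements as a single mutual induction, or equivalently observe that the whole dependency is well-founded on term size (the isometry of $t_j$ needed in your sum case concerns a strictly smaller term, and in turn needs orthogonality only for strictly smaller pairs). You should also state the easy $(\star)$ case $t~\bot~\Sigma_{i\in I}(\alpha_i\cdot t_i)$ explicitly, though it follows by the same linearity computation as the cases you do treat. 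With the induction restructured this way, your proof goes through.
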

\begin{proof}
	The proof is done by induction on the derivation of $\bot$. It is
	a subproof of the one for Lemma~\ref{lem:orthogonal-semantics-ortho}.
\end{proof}

This result can also be stated for the predicate ${\rm OD}$, with the help of
Lemma~\ref{lem:od-imp-bot}, where it is shown that two values in an orthogonal
decomposition are orthogonal.

\begin{lemma}
	\label{lem:orthogonal-semantics}
	Given two judgements $\Delta_1 \entail e_1 \colon A$ and
	$\Delta_2 \entail e_2 \colon A$, a set of $S$ that contains $e_1$ and $e_2$
	and such that $\ODe A S$, we have $\sem{e_1}\dg \circ
	\sem{e_2} = 0$.
\end{lemma}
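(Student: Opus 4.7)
The plan is to deduce this result directly as a corollary of the two lemmas stated just before it, \textit{viz.}~Lemma~\ref{lem:od-imp-bot} (syntactic orthogonality inside an orthogonal decomposition) and Lemma~\ref{lem:orthogonal-semantics-ortho-value} (syntactic orthogonality implies semantic orthogonality). Essentially no new ingredient is required; the work has already been carried out in the preceding statements.

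First, I would observe that the statement is meaningful only when $e_1$ and $e_2$ are distinct elements of $S$: if $e_1 = e_2$, then $\sem{e_1} = \sem{e_2}$ is an isometry by Lemma~\ref{lem:value-isometry}, so $\sem{e_1}\dg \circ \sem{e_2} = \iid \neq 0$ in general. Under the implicit assumption $e_1 \neq e_2$ (which is the only case of interest, since $S$ is a set), Lemma~\ref{lem:od-imp-bot} applied to $\ODe A S$ with the two distinct elements $e_1, e_2 \in S$ yields $e_1~\bot~e_2$.

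Then I would apply Lemma~\ref{lem:orthogonal-semantics-ortho-value} to the judgements $\Delta_1 \entail e_1 \colon A$ and $\Delta_2 \entail e_2 \colon A$, which precisely states that $e_1~\bot~e_2$ implies $\sem{e_1}\dg \circ \sem{e_2} = 0$. This concludes the proof.

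The main (and only) subtlety is that Lemma~\ref{lem:orthogonal-semantics-ortho-value} is itself stated as being proven ``as a subproof of the one for Lemma~\ref{lem:orthogonal-semantics-ortho}'', so if at this point in the exposition that lemma is only promised rather than established, one has to be careful not to introduce a circular dependency. Assuming the inductive proof of Lemma~\ref{lem:orthogonal-semantics-ortho-value} is genuinely independent of Lemma~\ref{lem:orthogonal-semantics} (it is an induction on the derivation of $\bot$, which makes no reference to orthogonal decompositions), the chaining of the two lemmas is sound and the present statement follows in two lines. No induction on the structure of $\ODe A S$ is needed, because the work of lifting the definition of $\mathrm{OD}^{\mathit{ext}}_A$ to pairwise orthogonality has already been done once and for all in Lemma~\ref{lem:od-imp-bot}.
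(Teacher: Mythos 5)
Your proposal is correct and matches the paper's intent exactly: the paper gives no explicit proof but introduces the lemma with the remark that it follows from Lemma~\ref{lem:od-imp-bot} combined with Lemma~\ref{lem:orthogonal-semantics-ortho-value}, which is precisely your two-step chaining. Your observation that the statement only makes sense for $e_1 \neq e_2$ is a fair catch of an implicit assumption the paper leaves unstated.
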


An important property of an orthonormal basis in a Hilbert space is the
\emph{resolution of the identity}. We show that, given $\ODe A S$, a similar
property holds. This conforts us in calling $S$ a \emph{syntactic basis}.

\begin{lemma}
	\label{lem:towards-unitary}
	Given $\ODe{A}{\set{e_i}}_{i \leq n}$, and $\Delta_i \entail e_i \colon A$
	for all $i$, we have 
	\[
		\sum_{i \leq n} \sem{e_i} \circ \sem{e_i}\dg =
		\iid_{\sem A}.
	\]
\end{lemma}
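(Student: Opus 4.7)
The plan is to proceed by induction on the derivation of $\ODe A S$, expressing the sum $\sum_i \sem{e_i} \circ \sem{e_i}\dg$ explicitly in each clause and reducing it to the identity via the induction hypothesis together with the elementary properties of $\iota_l, \iota_r, \otimes$ and $\rmsucc$.

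The base cases $\ODe{A}{\set x}$ and $\ODe{\one}{\set *}$ are trivial, because the interpretation of $x \colon A \entail x \colon A$ and of $\entail * \colon \one$ are both the identity on their respective spaces. For the direct-sum case $\ODe{A \oplus B}{S \boxplus T}$, the semantics of $\inl s$ is $\iota_l^{\sem A, \sem B} \circ \sem s$ and similarly on the right, so $\sum_{s \in S} \sem{\inl s}\sem{\inl s}\dg + \sum_{t \in T} \sem{\inr t}\sem{\inr t}\dg = \iota_l (\sum_s \sem s \sem s\dg) \iota_l\dg + \iota_r (\sum_t \sem t \sem t\dg) \iota_r\dg$. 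Two applications of the induction hypothesis reduce this to $\iota_l \iota_l\dg + \iota_r \iota_r\dg$, which equals $\iid_{\sem A \oplus \sem B}$ by Example~\ref{ex:qua-iota}. For the tensor case $\ODe{A \otimes B}{S}$, each element of $S$ has the form $e_A \otimes e_B$, and the denotation factors as $\sem{e_A}\sem{e_A}\dg \otimes \sem{e_B}\sem{e_B}\dg$; using the decomposition of $S$ along $\pi_2(S)$ and the sets $S^2_e$, one applies the induction hypothesis first to $\ODe{A}{S^2_e}$ (for each fixed $e \in \pi_2(S)$), yielding $\iid_{\sem A} \otimes \sem e \sem e\dg$, and then to $\ODe{B}{\pi_2(S)}$ to conclude. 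The case $\ODe{\nat}{S^{\oplus 0}}$ splits the sum into $\sem\zero \sem\zero\dg = \ket 0 \bra 0$ and $\sum_{v \in S} \rmsucc \sem v \sem v\dg \rmsucc\dg = \rmsucc \rmsucc\dg$ by the induction hypothesis; since $\rmsucc\rmsucc\dg$ is the orthogonal projector onto $\overline{\mathrm{span}}(\ket n)_{n \geq 1}$, their sum is $\iid_{\ell^2(\N)}$.

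The delicate case, and the one I expect to be the main obstacle, is $\ODe{A}{S^\alpha}$ where the elements are $f_e \defeq \Sigma_{e' \in S}(\alpha_{e,e'} \cdot e')$ for $e \in S$ and $(\alpha_{e,e'})$ is unitary. Here $\sem{f_e} = \sum_{e'} \alpha_{e,e'} \sem{e'}$, so the sum expands to a triple sum
\[
\sum_{e \in S} \sem{f_e}\sem{f_e}\dg = \sum_{e', e'' \in S} \Bigl(\sum_{e \in S} \alpha_{e,e'} \res{\alpha_{e,e''}}\Bigr) \sem{e'} \sem{e''}\dg.
\]
The inner coefficient is exactly the $(e'', e')$-entry of $\alpha\dg \alpha$, which equals $\delta_{e', e''}$ by unitarity of $\alpha$, so the triple sum collapses to $\sum_{e' \in S} \sem{e'}\sem{e'}\dg$, and the induction hypothesis on $\ODe A S$ finishes the case. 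The only subtlety is checking that the matrix manipulations are consistent with the conjugation conventions and that the induction hypothesis genuinely applies to $S$ (not to $S^\alpha$), which follows because the premise of the rule asserts $\ODe A S$.
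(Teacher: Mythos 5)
Your proposal is correct and follows essentially the same route as the paper's proof: induction on the derivation of $\ODe A S$, with the direct-sum case closed by $\iota_l\iota_l\dg + \iota_r\iota_r\dg = \iid$, the tensor case handled by nesting the two induction hypotheses (you take the $\pi_2$-variant of the rule where the paper spells out the symmetric $\pi_1$-variant, which is immaterial), and the $S^\alpha$ case resolved by exactly the same triple-sum collapse via unitarity of $(\alpha_{e,e'})$ followed by the induction hypothesis on $S$. No gaps.
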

\begin{proof}
	The proof is done by induction on ${\rm OD}$.
	\begin{itemize}
		\item $\ODe{A}{\set x}$.
			$\sem x \circ \sem x \dg = \iid_{\sem A} \circ \iid_{\sem A} = \iid_{\sem A}$.
		\item $\ODe{\one}{\set *}$. 
			$\sem * \circ \sem * \dg = \iid_{\sem \one} \circ \iid_{\sem \one} = \iid_{\sem \one}$.
		\item $\ODe{A \oplus B}{S \boxplus T}$. 
			\begin{align*}
				&\ \sum_{e \in S \boxplus T} \sem e \circ \sem e \dg & \\
				&= \sum_{s \in S} \sem{\inl s} \circ \sem{\inl s}\dg
				+
				\sum_{t \in T} \sem{\inr t} \circ \sem{\inr t}\dg
				&\text{(by definition)} \\
				&= \iota_l \circ \left( \sum_{s \in S} \sem s \circ \sem s \dg \right) \circ \iota_l\dg
				+
				\iota_r \circ \left( \sum_{t \in T} \sem t \circ \sem t \dg \right) \circ \iota_r\dg
				&\text{(by linearity)} \\
				&= \iota_l \circ \iid_{\sem A} \circ \iota_l\dg
				+
				\iota_r \circ \iid_{\sem B} \circ \iota_r\dg
				&\text{(by IH)} \\
				&= \iota_l \iota_l\dg + \iota_r \iota_r\dg = \iid_{\sem{A \oplus B}}.
				&\text{(Ex.~\ref{ex:qua-iota})} 
			\end{align*}
		\item $\ODe{A \otimes B}{S}$. Suppose that $\ODe{A}{\pi_1(S)}$ and
			$\ODe{B}{S^1_e}$ for all $e \in \pi_1(S)$.  
			\begin{align*}
 				&\ \sum_{(e \otimes e') \in S} \sem{e \otimes e'} \circ \sem{e \otimes e'}\dg & \\
 				&= \sum_{(e \otimes e') \in S} (\sem{e} \otimes \sem{e'}) \circ
				(\sem{e} \otimes \sem{e'}\dg)
 				&\text{(by definition)} \\
 				&= \sum_{(e \otimes e') \in S} (\sem{e} \circ \sem e \dg) \otimes 
 				(\sem{e'} \circ \sem {e'}\dg )
 				&\text{(by monoidal } \dagger\text{-category)} \\
 				&= \sum_{e \in \pi_1(S)} (\sem{e} \circ \sem e \dg)
 				\otimes \left( \sum_{e' \in S^1_b} \sem{e'} \circ \sem {e'}\dg \right) & \\
 				&= \sum_{e \in \pi_1(S)} (\sem{e} \circ \sem e \dg)
 				\otimes \iid_{\sem B}
 				&\text{(by IH)} \\
 				&= \left( \sum_{e \in \pi_1(S)} \sem{e} \circ \sem e \dg \right)
 				\otimes \iid_{\sem B}
 				&\text{(by linearity)} \\
 				&= \iid_{\sem A}
 				\otimes \iid_{\sem B}
 				= \iid_{\sem{A \otimes B}}. 
 				&\text{(by IH)} 
			\end{align*}
		\item $\ODe{\nat}{S^{\oplus 0}}$. 
			\begin{align*}
				&\ \sum_{e \in S^{\oplus 0}} \sem e \circ \sem e \dg & \\
				&= \sem\zero \circ \sem\zero +
				\sum_{s \in S} \sem{\suc s} \circ \sem{\suc s}\dg
 				&\text{(by definition)} \\
				&= \sem\zero \circ \sem\zero\dg +
				\rmsucc \circ \left( \sum_{s \in S} \sem s \circ \sem s \dg \right)
				\circ \rmsucc \dg
 				&\text{(by linearity)} \\
				&= \sem\zero \circ \sem\zero\dg +
				\rmsucc \circ \iid_{\sem\nat} \circ \rmsucc \dg
				= \iid_{\sem\nat}.
 				&\text{(by IH)} 
			\end{align*}
		\item $\ODe{A}{S^\alpha}$. 
			\begin{align*}
				&\ \sum_{e \in S^\alpha} \sem e \circ \sem e \dg & \\
				&= \sum_{s \in S}
				\sem{\Sigma_{s' \in S} (\alpha_{s,s'} \cdot s')}
				\circ
				\sem{\Sigma_{s' \in S} (\alpha_{s,s'} \cdot s')} \dg
 				&\text{(by definition)} \\
				&= \sum_{s \in S}
				\left(\sum_{s' \in S} \alpha_{s,s'} \sem{s'} \right)
				\circ
				\left(\sum_{s'' \in S} \alpha_{s,s''} \sem{s''} \right) \dg
 				&\text{(by definition)} \\
				&= \sum_{s \in S}
				\sum_{s',s'' \in S}
				\alpha_{s,s'} \res{\alpha_{s,s''}}
				\sem{s'} \circ \sem{s''} \dg
 				&\text{(by linearity)} \\
				&= \sum_{s',s'' \in S}
				\left( \sum_{s \in S} \alpha_{s,s'} \res{\alpha_{s,s''}} \right)
				\sem{s'} \circ \sem{s''} \dg & \\
				&= \sum_{s',s'' \in S}
				\delta_{s' = s''}
				\sem{s'} \circ \sem{s''} \dg 
				= \sum_{s' \in S}
				\sem{s'} \circ \sem{s'} \dg 
				&\text{(by unitarity)} \\
				&= \iid_{\sem A}.
 				&\text{(by IH)}
			\end{align*}
	\end{itemize}
\end{proof}

One final development on the interpretation of values is the link
with substitutions, detailed in the next proposition.

\begin{proposition}
	\label{prop:substitution-interpretation}
	Given a well-typed term $\Delta \entail t \colon A$ and for all $(x_i \colon
	A_i) \in \Delta$, a well-typed expression $~\entail e_i \colon A_i$; if $\sigma =
	\set{x_i \mapsto e_i}_i$, then:
	\[
		\sem{\entail \sigma(t) \colon A} = \sem{\Delta \entail t \colon A} \circ
		\left(\bigotimes_i \sem{\entail e_i \colon A_i}\right).
	\]
	We define then $\sem\sigma = \bigotimes_i \sem{\entail e_i \colon A_i}$.
\end{proposition}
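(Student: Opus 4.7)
The plan is to proceed by induction on the typing derivation of $\Delta \vdash t \colon A$. Most cases will be essentially bookkeeping: the semantics of each term constructor is a composition or monoidal product of the semantics of its subterms (see Figures~\ref{fig:very-simple-value-interpretation} and the extension to unitary application), and substitution commutes structurally with each constructor by definition (see \secref{sub:qua-substitution}), so the induction hypothesis plugs in directly. The only genuinely delicate cases are the tensor rule, the sum rule, and (once terms are included) the unitary application.

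For the base cases, $\vdash * \colon \one$ has an empty context and $\sigma$ is trivial, so both sides equal $\iid_{\sem{\one}}$; for $x \colon A \vdash x \colon A$, we have $\sigma = \{x \mapsto e\}$, so $\sem{\vdash \sigma(x) \colon A} = \sem{\vdash e \colon A}$ while the right-hand side is $\iid_{\sem A} \circ \sem{\vdash e \colon A}$, which matches. For the injection and successor cases, I would pull $\iota_l$, $\iota_r$ or $\rmsucc$ out of the composition using the inductive hypothesis, since the semantics is defined by postcomposition with these fixed isometries.

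The tensor case is the first real step: given $\Delta_1, \Delta_2 \vdash t_1 \otimes t_2 \colon A \otimes B$, the context splits as $\Delta_1 \uplus \Delta_2$, and the valuation $\sigma$ splits accordingly into $\sigma_1$ (on $\Delta_1$) and $\sigma_2$ (on $\Delta_2$). Since $\sigma(t_1 \otimes t_2) = \sigma_1(t_1) \otimes \sigma_2(t_2)$, one computes
\[
\sem{\vdash \sigma(t_1 \otimes t_2) \colon A \otimes B} = \sem{\vdash \sigma_1(t_1) \colon A} \otimes \sem{\vdash \sigma_2(t_2) \colon B},
\]
and each factor simplifies by the induction hypothesis to $\sem{t_i} \circ \sem{\sigma_i}$. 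Using functoriality of $\otimes$ (the interchange law $(f_1 \circ g_1) \otimes (f_2 \circ g_2) = (f_1 \otimes f_2) \circ (g_1 \otimes g_2)$) and the fact that $\sem{\sigma} = \sem{\sigma_1} \otimes \sem{\sigma_2}$ up to the canonical reordering of the tensor factors in $\sem{\Delta_1} \otimes \sem{\Delta_2}$, we conclude. This is the step where I expect the bookkeeping to be most tedious, since one has to keep track of the order on variables in the context and the fact that $\sem{\sigma}$ is defined as a specific tensor product indexed by that order.

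For the sum case $\Delta \vdash \Sigma_i (\alpha_i \cdot t_i) \colon A$, observe that $\sigma(\Sigma_i (\alpha_i \cdot t_i)) = \Sigma_i (\alpha_i \cdot \sigma(t_i))$, and the semantics of a sum is the linear combination $\sum_i \alpha_i \sem{t_i}$. Postcomposition by $\sem{\sigma}$ distributes over linear combinations, so $\left(\sum_i \alpha_i \sem{t_i}\right) \circ \sem{\sigma} = \sum_i \alpha_i (\sem{t_i} \circ \sem{\sigma})$, and each summand is handled by the induction hypothesis. The unitary application case $\Delta \vdash \omega\, t \colon B$ is immediate: $\sigma(\omega\,t) = \omega\,\sigma(t)$, the semantics is postcomposition with $\sem{\omega}$, and the induction hypothesis on $t$ together with associativity of composition concludes. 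The remaining cases ($\inl$, $\inr$, $\suc$) are analogous one-sided instances of the tensor/application pattern.
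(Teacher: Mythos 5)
Your proof is correct and follows exactly the route the paper takes: the paper's own proof is the one-line remark that the claim follows ``by induction on the typing rules for $t$'', and your case analysis is a faithful expansion of that induction, including the permutation bookkeeping in the tensor case that the paper explicitly sweeps under the rug in the remark following the proposition.
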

\begin{proof}
	The proof is straightforward by induction on the typing rules for $t$.
\end{proof}

\begin{remark}
	The definition of the interpretation of a substitution above is somewhat
	informal. It would require a lot of care and unnecessary details to make
	the denotation of $\sigma$ fit the denotation of a particular context
	$\Delta$. Since we are working in symmetric monoidal categories, those
	details will be overlooked when working with substitutions. We assume that we
	work up to permutations, and that when an interpretation of a substitution is
	involved, it is with the right permutation.
\end{remark}

Substitutions $\sigma$ emerge from the matching of two basis values, thus
we can prove that the interpretation of the matching gives the interpretation
of the substitution, as stated in the next lemma.

\begin{lemma}
	\label{lem:matching-semantics}
	Given two well-typed basis values $\Delta \entail b \colon A$ and $~\entail
	b' \colon A$, and a substitution $\sigma$, if $\match{\sigma}{b}{b'}$ then
	$\sem b\dg \circ \sem{b'} = \sem\sigma$.
\end{lemma}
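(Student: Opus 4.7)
The plan is to proceed by induction on the derivation of $\match{\sigma}{b}{b'}$, exploiting the inductive structure of the matching predicate given just before Definition~\ref{def:well-valuation}. Each of the six matching rules corresponds directly to a case of the interpretation function in Figure~\ref{fig:very-simple-value-interpretation}, and the isometric nature of the injections $\iota_l, \iota_r$, of the successor map $\rmsucc$, and of the ket $\ket 0$, together with the monoidal $\dagger$-functoriality of $\otimes$, will drive each case.

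For the two base cases, $\match{\sigma}{*}{*}$ and $\match{\sigma}{\zero}{\zero}$, the substitution $\sigma$ is empty, so $\sem\sigma$ is the identity on $\C$ (the empty tensor product), and we simply compute $\sem{*}\dg \circ \sem{*} = \iid_\C$ and $\bra 0 \ket 0 = 1 = \iid_\C$. For the variable case $\match{\{x \mapsto b'\}}{x}{b'}$, we have $\sem{x \colon A \entail x \colon A} = \iid_{\sem A}$, so $\sem{x}\dg \circ \sem{b'} = \sem{b'} = \sem\sigma$ by definition of the interpretation of a singleton substitution.

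For the injection cases $\match{\sigma}{\inl b}{\inl b'}$ (resp.\ $\inr$) we use $\sem{\inl b} = \iota_l \circ \sem b$ and the key identity $\iota_l\dg \circ \iota_l = \iid$ (isometry of $\iota_l$, established via Lemma~\ref{lem:injection-compati}) to reduce to the inductive hypothesis $\sem b \dg \circ \sem{b'} = \sem\sigma$. Similarly, for $\match{\sigma}{\suc b}{\suc b'}$ we use that $\rmsucc$ is an isometry so that $\rmsucc\dg \circ \rmsucc = \iid_{\sem\nat}$, reducing the claim to the inductive hypothesis.

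The tensor case $\match{\sigma_1 \cup \sigma_2}{b_1 \otimes b_2}{b'_1 \otimes b'_2}$ is where the main care is needed. We compute
\[
  (\sem{b_1}\otimes\sem{b_2})\dg \circ (\sem{b'_1}\otimes\sem{b'_2}) = (\sem{b_1}\dg \circ \sem{b'_1}) \otimes (\sem{b_2}\dg \circ \sem{b'_2}) = \sem{\sigma_1}\otimes\sem{\sigma_2},
\]
using that $\otimes$ is a $\dagger$-functor and applying the inductive hypothesis to both subderivations. The remaining step is to identify $\sem{\sigma_1 \cup \sigma_2}$ with $\sem{\sigma_1}\otimes\sem{\sigma_2}$; this follows from the disjointness of supports imposed by the matching rule and the definition of $\sem\sigma$ as a tensor product indexed over $\Delta$, up to the reordering convention noted in the remark following Proposition~\ref{prop:substitution-interpretation}. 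This bookkeeping of permutations is the only nontrivial point; everything else is a direct computation using isometricity.
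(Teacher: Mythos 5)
Your proof is correct and follows exactly the route the paper intends: a case analysis by induction on the derivation of $\match{\sigma}{b}{b'}$, using isometricity of $\iota_l$, $\iota_r$, $\rmsucc$ and $\ket 0$, the $\dagger$-monoidality of $\otimes$, and the up-to-permutation convention for $\sem{\sigma_1\cup\sigma_2}$ in the tensor case. One cosmetic point: the identity $\iota_l\dg\circ\iota_l=\iid$ comes from the definition of $\iota_l$ as an isometry rather than from Lemma~\ref{lem:injection-compati}, which instead gives $\iota_l\dg\circ\iota_r=0$.
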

\begin{proof}
	The proof is straightforward by induction on $\match{\sigma}{b}{b'}$ (see
	\secref{sub:qua-substitution} for the definition).
\end{proof}

\paragraph{Unitaries.}
The type of unitaries are given as $A \iso B$, and they are first interpreted
as morphisms $\sem A \to \sem B$ in $\Contr$, before showing that their
interpretation actually lies in $\Uni$. We also show that the ${\rm OD}$
conditions ensure that the denotation of (syntactic) untaries is not only a
contractive map, but a unitary between Hilbert spaces. Working with
contractions is necessary to use the notion of compatibility: given a unitary
$\unibasique \colon A \iso B$, we provide an interpretation to each clause
$b_i \iso e_i$ as a contraction $\sem A \to \sem B$, and prove that all
the contractions thus obtained are compatible, and can be summed. Unitary
judgments are of the form $\entailiso \omega \colon A \iso B$, and their
semantics is given by a morphism in $\Uni$:
\[ \sem{\entailiso \omega \colon A \iso B} \colon
\Uni(\sem A,\sem B). \]

Given $\entailiso \unibasique \colon A \iso B$, the interpretation of a clause
$b_i \iso e_i$ is the following contraction: $\sem{\Delta_i \entail e_i \colon
B} \circ \sem{\Delta_i \entail b_i \colon A}\dg$. It should be read as follows:
if the input of type $A$ matches with $b_i$, it provides a substitution through
$\Delta_i$, that is applied to $e_i$. This is better understood through a
diagram:
\[ \begin{tikzcd}
	\sem A && \sem{\Delta_i} && \sem B
	\arrow["\sem{\Delta_i \entail b_i \colon A}\dg", from=1-1, to=1-3]
	\arrow["\sem{\Delta_i \entail e_i \colon B}", from=1-3, to=1-5]
\end{tikzcd} \]

The interpretation of a unitary abstraction is then: \[\sem{\entailiso
\unibasique \colon A\iso B} = \sum_{i\leq n} \sem{\Delta_i \entail e_i \colon
B} \circ \sem{\Delta_i \entail b_i \colon A}\dg.\]

It is left to prove that it is well-defined, and then that it is a proper
unitary operation.

\begin{corollary}
	\label{cor:iso-sem-defined}
	Given $~\entailiso \unibasique \colon A\iso B$,
	its interpretation
	$\sem{\entailiso \unibasique \colon A\iso B}$ 
	is a well-defined morphism in $\Contr$.
\end{corollary}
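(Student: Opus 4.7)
The plan is to show that each of the $n$ summands defining $\sem{\entailiso \unibasique \colon A \iso B}$ is a contraction, and then that these $n$ contractions are pairwise compatible in the sense of Definition~\ref{def:compati}, so that Lemma~\ref{lem:compati} (iterated) gives that the total sum is contractive, i.e.\ a well-defined morphism of $\Contr$.

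First, for each $i$, the map $\sem{\Delta_i \entail e_i \colon B}$ is an isometry by Lemma~\ref{lem:value-isometry}, and hence a contraction; similarly $\sem{\Delta_i \entail b_i \colon A}$ is an isometry, so its dagger $\sem{b_i}\dg$ is a coisometry and a standard orthogonal-decomposition argument on $\sem{A}$ shows that coisometries are contractive. Composing a contraction with an isometry yields a contraction, so each $\sem{e_i} \circ \sem{b_i}\dg \colon \sem{A} \to \sem{B}$ is in $\Contr$.

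Next, I would use the algebraic criterion of Lemma~\ref{lem:algebra-compati}: two contractions $f,g$ are compatible iff $f\dg g = 0$ and $fg\dg = 0$. For $i \neq j$, setting $f = \sem{e_i}\sem{b_i}\dg$ and $g = \sem{e_j}\sem{b_j}\dg$, the two conditions expand to
\[
\sem{b_i} \circ (\sem{e_i}\dg \circ \sem{e_j}) \circ \sem{b_j}\dg = 0
\qquad \text{and} \qquad
\sem{e_i} \circ (\sem{b_i}\dg \circ \sem{b_j}) \circ \sem{e_j}\dg = 0.
\]
For the first, the hypothesis $\ODe{B}{\{e_1,\dots,e_n\}}$ together with Lemma~\ref{lem:od-imp-bot} gives $e_i \bot e_j$, and then Lemma~\ref{lem:orthogonal-semantics-ortho-value} (or directly Lemma~\ref{lem:orthogonal-semantics}) yields $\sem{e_i}\dg \circ \sem{e_j} = 0$, so the inner bracket vanishes. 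For the second, the hypothesis $\OD{A}{\{b_1,\dots,b_n\}}$ gives in the same way $\sem{b_i}\dg \circ \sem{b_j} = 0$. Hence the summands are pairwise compatible.

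Finally, compatibility is preserved when taking partial sums (this follows directly from the algebraic characterisation, since the inner products distribute over sums), so iterating Lemma~\ref{lem:compati} over $i = 1, \dots, n$ yields that $\sum_{i \leq n} \sem{e_i} \circ \sem{b_i}\dg$ is a contraction, which is precisely the required statement. The main conceptual obstacle is the bookkeeping around the differing contexts $\Delta_i$ in the two sides of each clause, which we handle up to the conventions of symmetric monoidal categories announced after Proposition~\ref{prop:substitution-interpretation}; no genuinely new ingredient beyond Lemmas~\ref{lem:compati}, \ref{lem:algebra-compati}, \ref{lem:od-imp-bot} and \ref{lem:orthogonal-semantics-ortho-value} is needed.
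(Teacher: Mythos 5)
Your proposal is correct and follows essentially the same route as the paper: establish $\sem{b_i}\dg \circ \sem{b_j} = 0$ and $\sem{e_i}\dg \circ \sem{e_j} = 0$ for $i \neq j$ from the two orthogonal-decomposition hypotheses (the paper invokes Lemma~\ref{lem:orthogonal-semantics} directly, which is itself obtained from Lemma~\ref{lem:od-imp-bot} and Lemma~\ref{lem:orthogonal-semantics-ortho-value} as you do), deduce pairwise compatibility of the clause interpretations via Lemma~\ref{lem:algebra-compati}, and conclude with Lemma~\ref{lem:compati}. Your additional remarks — checking each summand is itself a contraction, and noting that compatibility is stable under partial sums so the two-map Lemma~\ref{lem:compati} can be iterated — are minor technical points the paper leaves implicit, not a difference in approach.
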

\begin{proof}
	Given $~\entailiso \unibasique \colon A\iso B$, we know that
	$\ODe{B}{(\set{e_i}_{i\leq n})}$ and $\OD{A}{(\set{b_i}_{i\leq n})}$ hold,
	and for all $i \leq n$, $\Delta_i \entail b_i \colon A$ and $\Delta_i \entail
	e_i \colon B$.

	Since $\OD{A}{(\set{b_i}_{i\leq n})}$ holds, Lemma~\ref{lem:orthogonal-semantics}
	ensures that for all $i \neq j\leq k$, $\sem{b_i}\dg \circ
	\sem{b_j} = 0_{\sem{\Delta_j},\sem{\Delta_i}}$. The same lemma with
	$\ODe{B}{(\set{e_i}_{i\leq n})}$ ensures that for all $i\neq j \leq n$,
	$\sem{e_i}\dg \circ \sem{e_j} =
	0_{\sem{\Delta_j},\sem{\Delta_i}}$. This proves that, for all $i \neq j \leq
	n$, $(\sem{e_i} \circ \sem{b_i}\dg)\dg \circ \sem{e_j}
	\circ \sem{b_j}\dg = 0_{\sem{\Delta_j},\sem{\Delta_i}}$ and
	$\sem{e_i} \circ \sem{b_i}\dg \circ (\sem{e_j}
	\circ \sem{b_j}\dg)\dg = 0_{\sem{\Delta_j},\sem{\Delta_i}}$. This
	proves that for all $i \neq j \leq n$, $\sem{e_i} \circ
	\sem{b_i}\dg$ and $\sem{e_j} \circ \sem{b_j}\dg$
	are compatible, thanks to Lemma~\ref{lem:algebra-compati}. Then,
	Lemma~\ref{lem:compati} ensures that $\sum_{i\leq n} \sem{e_i} \circ
	\sem{b_i}\dg$ is a contraction.
\end{proof}

\begin{theorem}
	Given $~\entailiso \unibasique \colon A \iso B$, its interpretation 
	$\sem{\entailiso \unibasique \colon A \iso B}$ is unitary.
\end{theorem}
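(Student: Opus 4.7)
The plan is to verify the two unitary equations $\sem{\omega}^\dagger \circ \sem{\omega} = \iid_{\sem A}$ and $\sem{\omega} \circ \sem{\omega}^\dagger = \iid_{\sem B}$ by direct calculation, using the interpretation
\[ \sem{\omega} = \sum_{i \leq n} \sem{e_i} \circ \sem{b_i}^\dagger \]
established in the definition, together with the orthogonality and resolution-of-identity lemmas already proved for $\mathrm{OD}$ and its extension.

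First I would expand $\sem{\omega}^\dagger \circ \sem{\omega}$ as a double sum $\sum_{i,j} \sem{b_i} \circ \sem{e_i}^\dagger \circ \sem{e_j} \circ \sem{b_j}^\dagger$, using the fact that $(-)^\dagger$ is a contravariant involution and distributes over finite sums. The key reduction is to eliminate the cross terms: by Lemma~\ref{lem:orthogonal-semantics} applied to $\ODe B \{e_i\}_{i \leq n}$, we get $\sem{e_i}^\dagger \circ \sem{e_j} = 0$ whenever $i \neq j$. For the diagonal terms, each $\sem{\Delta_i \vdash e_i \colon B}$ is an isometry by Lemma~\ref{lem:value-isometry}, so $\sem{e_i}^\dagger \circ \sem{e_i} = \iid_{\sem{\Delta_i}}$. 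The double sum therefore collapses to $\sum_i \sem{b_i} \circ \sem{b_i}^\dagger$, which equals $\iid_{\sem A}$ by Lemma~\ref{lem:towards-unitary} applied to $\OD A \{b_i\}_{i \leq n}$ (recalling that $\OD A S$ is a special case of $\ODe A S$).

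The second equation $\sem{\omega} \circ \sem{\omega}^\dagger = \iid_{\sem B}$ is proved by the mirror computation: expand as $\sum_{i,j} \sem{e_i} \circ \sem{b_i}^\dagger \circ \sem{b_j} \circ \sem{e_j}^\dagger$; use Lemma~\ref{lem:orthogonal-semantics} on $\OD A \{b_i\}$ to kill the off-diagonal terms, and use that each $\sem{b_i}$ is an isometry to reduce the diagonal to $\sum_i \sem{e_i} \circ \sem{e_i}^\dagger$; then apply Lemma~\ref{lem:towards-unitary} to $\ODe B \{e_i\}$ to conclude.

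There is no real obstacle here: the nontrivial work was done in Lemmas~\ref{lem:orthogonal-semantics} and \ref{lem:towards-unitary} and in establishing that $\sem{b_i}$ and $\sem{e_i}$ are isometries. The only care needed is bookkeeping the typing, since $\sem{b_i}$ and $\sem{e_i}$ both have source $\sem{\Delta_i}$ (not the ambient $\sem A$ or $\sem B$), so the diagonal identities $\sem{e_i}^\dagger \sem{e_i} = \iid_{\sem{\Delta_i}}$ live on the context space, which is exactly what the subsequent composition with $\sem{b_i}$ and $\sem{b_i}^\dagger$ expects.
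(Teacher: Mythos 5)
Your proof is correct and follows essentially the same route as the paper: expand the double sum, kill the off-diagonal terms with Lemma~\ref{lem:orthogonal-semantics}, collapse the diagonal using that $\sem{b_i}$ and $\sem{e_i}$ are isometries (Lemma~\ref{lem:value-isometry}), and conclude with the resolution of the identity (Lemma~\ref{lem:towards-unitary}). The paper carries out exactly this computation for $\sem{\omega}^\dagger \circ \sem{\omega} = \iid_{\sem A}$ and dismisses the other direction as similar, which you have simply written out explicitly.
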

\begin{proof}
	Given $~\entailiso \unibasique \colon A\iso B$, we know that
	$\ODe{B}{(\set{e_i}_{i\leq n})}$ and $\OD{A}{(\set{b_i}_{i\leq n})}$ hold,
	and for all $i \leq n$, $\Delta_i \entail b_i \colon A$ and $\Delta_i \entail
	e_i \colon B$.

	First, we prove that $\sem\omega\dg \circ \sem\omega = \iid_{\sem A}$,
	with $\omega = \unibasique$.
	\begin{align*}
		&\ \sem\omega\dg \circ \sem\omega & \\
		&= \left( \sum_{i\leq n} \sem{e_i} \circ \sem{b_i}\dg \right)\dg \circ
		\sum_{j\leq n} \sem{e_j} \circ \sem{b_j}\dg
		& \text{(by definition)} \\
		&= \sum_{i\leq n} (\sem{e_i} \circ \sem{b_i}\dg)\dg \circ
		\sum_{j\leq n} \sem{e_j} \circ \sem{b_j}\dg
		& \text{(dagger distributes over sum)} \\
		&= \sum_{i\leq n} \sem{b_i} \circ \sem{e_i}\dg \circ
		\sum_{j\leq n} \sem{e_j} \circ \sem{b_j}\dg
		& \text{(dagger is contravariant)} \\
		&= \sum_{i,j \leq n} \sem{b_i} \circ \sem{e_i}\dg \circ \sem{e_j} \circ
		\sem{b_j}\dg 
		&\text{(linearity)} \\
		&= \sum_{i \leq n} \sem{b_i} \circ \sem{e_i}\dg \circ \sem{e_i} \circ
		\sem{b_i}\dg 
		&\text{(Lemma~\ref{lem:orthogonal-semantics})} \\
		&= \sum_{i \leq n} \sem{b_i} \circ
		\sem{b_i}\dg 
		&\text{(Lemma~\ref{lem:value-isometry})} \\
		&= \iid_{\sem A}
		&\text{(Lemma~\ref{lem:towards-unitary})}
	\end{align*}
	The other direction $\sem\omega \circ \sem\omega\dg = \iid_{\sem B}$ is similar.
\end{proof}

Note that we have only proven so far that unitary abstractions have a sound
denotational semantics in $\Uni$. The interpretation of operations on unitaries
is given in Figure~\ref{fig:unitary-semantics}. It is explained in
\secref{sec:qu-simple-maths} why this interpretation is in $\Uni$, although
this does not come as a surprise.

\begin{figure}[!h]
	\begin{align*}
		\sem{\entailiso \omega \colon A \iso B} &\colon \Uni(\sem A, \sem B) \\
		\sem{\unibreduit} &= \sum_{i \in I} \sem{e_i} \circ \sem{b_i}\dg \\
		\sem{\omega_2 \circ \omega_1} &= \sem{\omega_2} \circ \sem{\omega_1} \\
		\sem{\omega_1 \otimes \omega_2} &= \sem{\omega_1} \otimes \sem{\omega_2} \\
		\sem{\omega_1 \oplus \omega_2} &= \sem{\omega_1} \oplus \sem{\omega_2} \\
		\sem{\omega\inv} &= \sem{\omega}\dg \\
		\sem{\ctrl \omega} &= {\rm ctrl}_{\sem A} (\sem\omega)
	\end{align*}
	\caption{Interpretation of unitaries in $\Uni$.}
	\label{fig:unitary-semantics}
\end{figure}

\paragraph{Terms.}
One remaining term is the application of a unitary. 
\[\sem{\Delta \entail \omega~t \colon B} = \sem{\entailiso \omega \colon A\iso
B} \circ \sem{\Delta \entail t \colon A}.\] 
The interpretation of all term judgements is found in
Figure~\ref{fig:simple-term-interpretation}. 

\begin{figure}[!h]
	\begin{align*}
		\sem{\Delta \entail t \colon A} &\colon \Iso(\sem	\Delta,\sem A) \\
		\sem{\entail * \colon \one} &= \iid_{\sem \one} \\
		\sem{x \colon A \entail x \colon A} &= \iid_{\sem A} \\
		\sem{\Delta \entail \inl t \colon A\oplus B} &= \iota_l^{\sem A, \sem B}
		\circ \sem{\Delta \entail t\colon A} \\
		\sem{\Delta \entail \inr t \colon A\oplus B} &= \iota_r^{\sem A, \sem B}
		\circ \sem{\Delta \entail t\colon A} \\
		\sem{\Delta_1, \Delta_2 \entail \pair{t}{t'}\colon A\otimes B} &=
		\sem{\Delta_1 \entail t\colon A} \otimes \sem{\Delta_2 \entail
		t'\colon B} \\
		\sem{\entail \zero \colon \nat} &= \ket 0 \\
		\sem{\Delta \entail \suc t \colon \nat} &= \rmsucc \circ \sem{\Delta
		\entail t \colon \nat} \\
		\sem{\Delta \entail \Sigma_{i\leq k} (\alpha_i \cdot t_i) \colon A} &=
		\sum_{i\leq k} \alpha_i \sem{\Delta \entail t_i \colon A} \\
		\sem{\Delta \entail \omega~t \colon B} &= \sem{\entailiso \omega \colon
		A\iso B} \circ \sem{\Delta \entail t \colon A}
	\end{align*}
	\caption{Interpretation of term judgements as morphisms in $\Iso$.}
	\label{fig:simple-term-interpretation}
\end{figure}

We can already show that this interpretation of terms is sound with the sketch
of operational semantics given in the previous section.

\begin{proposition}[Operational Soundness]
	\label{prop:qua-op-soundness}
	Given a well-typed unitary abstraction $~\entailiso \unibasique \colon A
	\iso B$ and a well-typed basis value $~\entail b' \colon A$, if
	$\match{\sigma}{b_i}{b'}$, then 
	\[
		\sem{\entail \unibasique~b' \colon B} 
		= \sem{\entail \sigma(e_i) \colon B}.
	\]
\end{proposition}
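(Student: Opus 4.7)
The plan is to unfold the denotational semantics on both sides and show that only the $i$-th clause of the unitary abstraction survives, yielding the interpretation of the substituted right-hand side. By the interpretation of unitary application and of the abstraction,
\[
\sem{\entail \unibasique~b' \colon B} = \sem{\unibasique} \circ \sem{b'} = \sum_{j} \sem{e_j} \circ \sem{b_j}\dg \circ \sem{b'} .
\]
I would argue that every term with $j \neq i$ vanishes, while the $i$-th term equals $\sem{\sigma(e_i)}$.

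For the vanishing of off-diagonal terms, the key inputs are the exhaustivity/non-overlapping result for orthogonal decompositions and the compatibility between orthogonality and pattern-matching. Since $\OD{A}{\{b_1,\dots,b_n\}}$ holds and $\entail b' \colon A$, Lemma~\ref{lem:od-substitution} guarantees that the matching $\match{\sigma}{b_i}{b'}$ is the \emph{unique} match of $b'$ against a pattern in $\{b_1,\dots,b_n\}$. Consequently, for every $j \neq i$, there is no valuation matching $b_j$ with $b'$, so Proposition~\ref{prop:patterns-are-orthogonal} gives $b_j~\bot~b'$. Applying Lemma~\ref{lem:orthogonal-semantics-ortho-value} then yields $\sem{b_j}\dg \circ \sem{b'} = 0$, which kills the corresponding summand.

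For the surviving term, I would invoke Lemma~\ref{lem:matching-semantics}, which says that $\match{\sigma}{b_i}{b'}$ implies $\sem{b_i}\dg \circ \sem{b'} = \sem{\sigma}$. Hence the $i$-th summand rewrites as $\sem{e_i} \circ \sem{\sigma}$. Finally, Proposition~\ref{prop:substitution-interpretation} identifies $\sem{e_i} \circ \sem{\sigma}$ with $\sem{\entail \sigma(e_i) \colon B}$, giving the desired equality.

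The routine steps are the unfolding and the applications of the cited lemmas; the main technical point, already handled by the previous lemmas, is ensuring that the pattern-matching behaves coherently with the denotation, both in the sense that non-matching patterns produce orthogonal (and thus semantically orthogonal) morphisms, and in the sense that matching produces precisely the denoted substitution. Type-correctness of $\sigma(e_i)$ follows from Lemma~\ref{lem:qua-subst-type}, noting that $\sigma$ is well-formed with respect to $\Delta_i$ since $\match{\sigma}{b_i}{b'}$ arose from a closed basis value of the correct type, so the right-hand side of the claimed equality is meaningful.
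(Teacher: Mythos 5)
Your proof is correct and follows essentially the same route as the paper: unfold the sum of clause interpretations, kill the $j \neq i$ summands via orthogonality of non-matching patterns and Lemma~\ref{lem:orthogonal-semantics-ortho-value}, and identify the surviving term with $\sem{\sigma(e_i)}$ via Lemmas~\ref{lem:matching-semantics} and~\ref{prop:substitution-interpretation}. Your derivation of $b_j~\bot~b'$ from uniqueness of the match (Lemma~\ref{lem:od-substitution} plus Proposition~\ref{prop:patterns-are-orthogonal}) is a welcome explicit justification of a step the paper's proof leaves implicit.
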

\begin{proof}
	First, we deduce from the assumption $\match{\sigma}{b_i}{b'}$ that
	\begin{itemize}
		\item $\sem{b_i}\dg \circ \sem{b'} = \sem\sigma$, thanks to
			Lemma~\ref{lem:matching-semantics}.
		\item for all $j \neq i$, $b_j~\bot~b'$, and thus $\sem{b_j}\dg \circ
			\sem{b'} = 0$, thanks to
			Lemma~\ref{lem:orthogonal-semantics-ortho-value}.
	\end{itemize}
	We can then compute the semantics, with $\omega \defeq \unibasique$:
	\begin{align*}
		&\ \sem{\omega~b'} & \\
		&= \sem\omega \circ \sem{b'} & \text{(by definition)} \\
		&= \left( \sum_j \sem{e_j} \circ \sem{b_j}\dg \right) \circ \sem{b'}
		& \text{(by definition)} \\
		&= \sum_j \sem{e_j} \circ \sem{b_j}\dg \circ \sem{b'}
		& \text{(linearity)} \\
		&= \sem{e_i} \circ \sem{b_i}\dg \circ \sem{b'}
		& \text{(Lemma~\ref{lem:orthogonal-semantics-ortho-value})} \\
		&= \sem{e_i} \circ \sem\sigma
		& \text{(Lemma~\ref{lem:matching-semantics})} \\
		&= \sem{\sigma(e_i)}
		& \text{(Prop.~\ref{prop:substitution-interpretation})}
	\end{align*}
\end{proof}

We prove that, like expressions, terms are indeed interpreted as isometries,
reinforcing the link with quantum physics. This requires several lemmas. The
first lemma shows that the denotational orthogonality is preserved by linear
combinations.

\begin{lemma}
	\label{lem:sum-ortho}
	Given $\Delta_1 \entail t \colon A$ and $\Delta_2 \entail \Sigma_i (\alpha_i
	\cdot t_i) \colon A$ such that for all $i$, $\sem{\Delta_1 \entail t \colon
	A}\dg \circ \sem{\Delta_2 \entail t_i \colon A} =
	0_{\sem{\Delta_2},\sem{\Delta_1}}$; then $\sem{\Delta_1 \entail t \colon
	A}\dg \circ \sem{\Delta_2 \entail \Sigma_i (\alpha_i \cdot t_i) \colon A} =
	0_{\sem{\Delta_2},\sem{\Delta_1}}$.
\end{lemma}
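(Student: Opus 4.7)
The plan is to prove this by direct computation, unfolding the denotation of the sum and using the bilinearity of composition with respect to the linear-algebraic sum and scalar multiplication in $\Contr$ (and hence in the monoidal $\dagger$-category $\Hilb$ in which all these morphisms live).

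First, I would apply the definition of the interpretation of a linear combination (see Figure~\ref{fig:simple-term-interpretation}), namely
\[
\sem{\Delta_2 \vdash \Sigma_i (\alpha_i \cdot t_i) \colon A} = \sum_i \alpha_i \sem{\Delta_2 \vdash t_i \colon A}.
\]
Then I would precompose with $\sem{\Delta_1 \vdash t \colon A}\dg$ and push the composition inside the sum and past the scalars, using the fact that composition of bounded linear maps between Hilbert spaces is bilinear. This gives
\[
\sem{t}\dg \circ \sum_i \alpha_i \sem{t_i} \;=\; \sum_i \alpha_i \left( \sem{t}\dg \circ \sem{t_i} \right).
\]

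At this point the hypothesis does all the work: each term $\sem{t}\dg \circ \sem{t_i}$ is the zero map $0_{\sem{\Delta_2}, \sem{\Delta_1}}$, so the sum collapses to $0_{\sem{\Delta_2}, \sem{\Delta_1}}$, as required.

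There is no real obstacle here -- the lemma is essentially a restatement of bilinearity of composition in the ambient category. The only care needed is notational: making sure that the interpretation of the sum term is treated as the algebraic sum of contractions (which is justified by the typing rule for sums, ensuring pairwise orthogonality and hence that the sum is again a well-defined isometry by Lemma~\ref{lem:normalised-sum}), so that the rewriting steps above are literally equalities of morphisms in $\Hilb$.
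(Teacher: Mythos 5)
Your proof is correct and matches the paper's own argument exactly: unfold the interpretation of the sum, use (bi)linearity of composition to distribute the precomposition over the scalars and summands, and apply the hypothesis to each term so the whole sum collapses to zero. No gaps; the remarks on well-definedness of the summed interpretation are a reasonable extra precaution but not needed for the computation itself.
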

\begin{proof}
	The proof involves few steps, without surprises.
	\begin{align*}
		&\ \sem{\Delta_1 \entail t \colon A}\dg
		\circ \sem{\Delta_2 \entail \Sigma_i (\alpha_i \cdot t_i) \colon A}
		& \\
		&= \sem{\Delta_1 \entail t \colon A}\dg \circ
		\sum_i \alpha_i \sem{\Delta_2 \entail t_i \colon A}
		& \text{(by definition)} \\
		&= \sum_i \alpha_i \sem{\Delta_1 \entail t \colon A}\dg \circ \sem{\Delta_2
		\entail t_i \colon A}
		& \text{(by linearity)} \\
		&= \sum_i \alpha_i\, 0_{\sem{\Delta_2},\sem{\Delta_1}}
		& \text{(hypothesis)} \\
		&= 0_{\sem{\Delta_2},\sem{\Delta_1}}. &
	\end{align*}
\end{proof}

The next lemma states that syntactic orthogonality (defined in
Definition~\ref{def:orthogonality}) implies denotational orthogonality.

\begin{lemma}
	\label{lem:orthogonal-semantics-ortho}
	Given two judgements $\Delta_1 \entail t_1 \colon A$ and
	$\Delta_2 \entail t_2 \colon A$, such that $t_1~\bot~t_2$, we have
	$\sem{t_1}\dg \circ \sem{t_2} = 0_{\sem{\Delta_2},\sem{\Delta_1}}$.
\end{lemma}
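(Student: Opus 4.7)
The plan is to prove the statement by induction on the derivation of $t_1 \bot t_2$ following the inference rules given in Definition~\ref{def:orthogonality}. Each case reduces to a short algebraic computation using the specific shape of the semantics from Figure~\ref{fig:simple-term-interpretation} together with the fact that the morphisms we introduce on top of subterms (injections $\iota_l,\iota_r$, the successor $\rmsucc$, unitaries $\sem\omega$, and the tensor bifunctor) behave well under the dagger.

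First, I would handle the structural base cases. For $\inl t_1 \bot \inr t_2$, expand $\sem{\inl t_1}\dg\circ\sem{\inr t_2}=\sem{t_1}\dg\circ \iota_l\dg\circ\iota_r\circ\sem{t_2}$ and conclude via Lemma~\ref{lem:injection-compati}. For $\zero \bot \suc t$, use that $\ket 0\dg \circ \rmsucc=0$ because $\rmsucc\ket n=\ket{n+1}$ is orthogonal to $\ket 0$. Then I would do the congruence cases $\inl t_1\bot\inl t_2$, $\inr t_1\bot\inr t_2$, $\suc t_1\bot\suc t_2$ and $\omega t_1\bot\omega t_2$: each reduces to the IH after cancelling an isometry ($\iota_l\dg\iota_l=\iid$, $\rmsucc\dg\rmsucc=\iid$, and unitarity of $\sem\omega$). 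The two tensor cases are equally direct: $\sem{\pair{t}{t_1}}\dg\circ\sem{\pair{t'}{t_2}}=(\sem t\dg\sem{t'})\otimes(\sem{t_1}\dg\sem{t_2})=(\sem t\dg\sem{t'})\otimes 0=0$ by bilinearity of the tensor, using the IH on the orthogonal pair.

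Next, I would treat the two cases involving a single sum $t\bot\Sigma_{i\in I}(\alpha_i\cdot t_i)$. Both follow immediately from Lemma~\ref{lem:sum-ortho} applied to the IH (in the variant with a distinguished index $*$ where $\alpha_*=0$ and $t_*=t$, the summand at $*$ contributes $0$ anyway, and the rest follows from the IH on the remaining indices).

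The main work is the last case $\Sigma_{j\in J}(\alpha_j\cdot t_j)\bot\Sigma_{k\in K}(\beta_k\cdot t_k)$ under the hypotheses that $t_i\bot t_j$ for all $i\neq j$ in $I$ and $\sum_{i\in J\cap K}\overline{\alpha_i}\beta_i=0$. Expanding by linearity and using that the dagger distributes over finite sums, one obtains
\[
\sem{\Sigma_j\alpha_jt_j}\dg\circ\sem{\Sigma_k\beta_kt_k}=\sum_{j\in J,\,k\in K}\overline{\alpha_j}\beta_k\,\sem{t_j}\dg\circ\sem{t_k}.
\]
By IH, $\sem{t_j}\dg\circ\sem{t_k}=0$ whenever $j\neq k$, so only the diagonal indices $i\in J\cap K$ survive, yielding $\sum_{i\in J\cap K}\overline{\alpha_i}\beta_i\,\sem{t_i}\dg\circ\sem{t_i}$. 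Here the expected obstacle is that the $\sem{t_i}\dg\sem{t_i}$ are a priori different operators, so the scalar sum cannot be factored out directly. I would resolve this by invoking the isometry property $\sem{t_i}\dg\circ\sem{t_i}=\iid$, which is Lemma~\ref{lem:term-isometry} (the statement that terms are interpreted as isometries). This forces the proof to be read as part of a simultaneous induction with Lemma~\ref{lem:term-isometry}: the orthogonality lemma is used to establish isometry of sum-terms, and isometry of the constituent terms (by IH on smaller derivations) is used to finish this final case. Once $\sem{t_i}\dg\sem{t_i}=\iid$, the expression collapses to $(\sum_{i\in J\cap K}\overline{\alpha_i}\beta_i)\cdot\iid=0\cdot\iid=0$, which concludes the proof.
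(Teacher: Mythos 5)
Your proof is correct and follows essentially the same route as the paper: the paper also proves this lemma by a simultaneous induction with Lemma~\ref{lem:term-isometry}, using Lemma~\ref{lem:injection-compati} for the injection case, Lemma~\ref{lem:sum-ortho} for the single-sum cases, and exactly your diagonal computation (killing off-diagonal terms by orthogonality, then collapsing $\sem{t_i}\dg\sem{t_i}$ to $\iid$ by isometry so that $\sum_{i\in J\cap K}\overline{\alpha_i}\beta_i=0$ finishes) for the sum-versus-sum case. The only cosmetic difference is that the paper organises the mutual induction on the typing derivation of $t_1$ (with a case analysis on $t_2$ inside each case) rather than on the derivation of $\bot$, which does not change the substance of the argument.
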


The proof of this lemma is interdependent with the proof of the next lemma,
where it is proven that the interpretations of term judgements are normalised
quantum states -- namely, isometries.

\begin{lemma}[Isometry]
	\label{lem:term-isometry}
	If $\Delta \entail t \colon A$ is a well-formed judgement, then $\sem{\Delta
	\entail t \colon A}$ is an isometry.
\end{lemma}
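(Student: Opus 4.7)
The plan is to prove Lemma~\ref{lem:term-isometry} simultaneously with Lemma~\ref{lem:orthogonal-semantics-ortho} by a mutual induction: the first on the typing derivation $\Delta \vdash t \colon A$, the second on the derivation of $t_1 \bot t_2$. The interdependence is essential because the isometry of a linear combination $\Sigma_i (\alpha_i \cdot t_i)$ follows from Lemma~\ref{lem:normalised-sum} only once we know that the denotations $\sem{t_i}$ are pairwise compatible, which is delivered by the orthogonality lemma; conversely, several congruence cases for orthogonality (the $\suc$, $\otimes$, and unitary-application cases, in particular) are most cleanly handled by post-composing with a morphism that the isometry lemma has just shown to be an isometry.

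For the isometry lemma, most cases are immediate and purely structural. The base cases $*$, $x$, $\zero$ are identities or fixed basis vectors. The injection cases follow because $\iota_l, \iota_r$ are isometries (they preserve inner products) and the composition of isometries is an isometry; likewise $\rmsucc$ is an isometry. For the tensor case $t_1 \otimes t_2$, the tensor product of isometries in $\Hilb$ is an isometry. For the unitary application $\omega\,t$, one uses that $\sem{\omega}$ has already been shown to be unitary (hence an isometry) and composes with $\sem t$. The only nontrivial case is the sum: given $\Delta \vdash \Sigma_i (\alpha_i \cdot t_i) \colon A$, well-formedness yields $\sum_i \abs{\alpha_i}^2 = 1$ and pairwise syntactic orthogonality $t_i \bot t_j$ for $i \neq j$; by the induction hypothesis (on the smaller typing derivations) each $\sem{t_i}$ is an isometry, and by the mutual induction hypothesis on orthogonality we obtain $\sem{t_i}\dg \circ \sem{t_j} = 0$ for $i \neq j$, so the family $(\sem{t_i})$ is pairwise output-compatible in the sense of Lemma~\ref{lem:normalised-sum}, and the conclusion follows directly.

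For the orthogonality lemma, I would proceed case-by-case on the rules defining $\bot$. The base case $\inl t_1 \bot \inr t_2$ reduces to $\iota_l\dg \iota_r = 0$ (Lemma~\ref{lem:injection-compati}) sandwiched between $\sem{t_1}\dg$ and $\sem{t_2}$. The case $\zero \bot \suc t$ uses $\braket 0 {n+1} = 0$ for all $n$. The congruence cases $\inl t_1 \bot \inl t_2$, $\inr t_1 \bot \inr t_2$ and $\suc t_1 \bot \suc t_2$ follow by Lemma~\ref{lem:ortho-postiso} (post-composition with the appropriate isometry $\iota_l$, $\iota_r$, or $\rmsucc$), using the inductive orthogonality hypothesis $t_1 \bot t_2$. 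The tensor congruences follow from the monoidal $\dagger$-structure together with the fact that, by the isometry lemma applied to the common factor, the relevant identity factor arises from $\sem t \dg \circ \sem t = \iid$. The case $\omega t_1 \bot \omega t_2$ is again Lemma~\ref{lem:ortho-postiso} with the unitary $\sem\omega$. The two cases bridging a term with a sum reduce, by Lemma~\ref{lem:sum-ortho}, to the elementary orthogonalities furnished by the induction hypothesis. Finally, for the general sum-sum case $\Sigma_{j \in J}(\alpha_j \cdot t_j) \bot \Sigma_{k \in K}(\beta_k \cdot t_k)$, one expands both denotations by linearity of composition and dagger, so that
\[
\Bigl(\sum_{j \in J} \alpha_j \sem{t_j}\Bigr)\!\dg \circ \Bigl(\sum_{k \in K} \beta_k \sem{t_k}\Bigr) = \sum_{j \in J, k \in K} \res{\alpha_j} \beta_k\, \sem{t_j}\dg \circ \sem{t_k};
\]
by the mutual induction hypothesis, the cross terms with $t_j \neq t_k$ vanish, while on the diagonal $t_j = t_k$ (i.e.\ $i \in J \cap K$) the isometry lemma gives $\sem{t_i}\dg \circ \sem{t_i} = \iid$, so the whole sum collapses to $\bigl(\sum_{i \in J \cap K} \res{\alpha_i} \beta_i\bigr)\iid = 0$ by the numerical hypothesis attached to the rule.

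The delicate point, and the one I would be most careful about, is the interaction between these two inductions in the sum cases: one must make sure that, whenever the orthogonality lemma invokes ``$\sem{t}\dg \sem{t} = \iid$'' for some sub-term $t$, that sub-term's typing derivation is strictly smaller than the one currently under consideration, so that the appeal to the isometry lemma is legitimate. This is guaranteed because the derivations of $t_1 \bot t_2$ always refer to strict sub-terms of the enclosing typing derivation; formally, one can package the two statements as a single induction on the sum of the sizes of the typing and orthogonality derivations, which makes the termination transparent.
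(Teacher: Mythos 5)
Your proposal is correct and takes essentially the same route as the paper: the paper also proves Lemma~\ref{lem:term-isometry} and Lemma~\ref{lem:orthogonal-semantics-ortho} simultaneously, by induction on the typing derivation of $t_1$, using the same auxiliary lemmas (Lemmas~\ref{lem:normalised-sum}, \ref{lem:injection-compati}, \ref{lem:ortho-postiso} and~\ref{lem:sum-ortho}) and the same explicit bilinear expansion for the sum--sum orthogonality case. The only difference is organisational: the paper runs a single induction on the typing derivation, quantifying the orthogonality claim over all $t_2$ with $t_1~\bot~t_2$, whereas you phrase it as a mutual induction on the two derivations (or a combined size measure), which amounts to the same well-founded argument.
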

\begin{proof}[Proof of Lemma~\ref{lem:orthogonal-semantics-ortho} and 
	Lemma~\ref{lem:term-isometry}]
	We prove both theorems together, because they are interdependent. We recall
	the whole statement: given a well-formed judgement $\Delta_1 \entail t_1
	\colon A$,
	\begin{itemize}
		\item $\sem{\Delta_1 \entail t_1 \colon A}\dg \circ \sem{\Delta_1 \entail t_1
			\colon A} = \iid_{\sem{\Delta_1}}$;
		\item for all $\Delta_2 \entail t_2 \colon A$ such that
			$t_1 \bot t_2$, $\sem{\Delta_1 \entail t_1 \colon A}\dg \circ \sem{\Delta_2
			\entail t_2 \colon A} = 0_{\sem{\Delta_2},\sem{\Delta_1}}$.
	\end{itemize}
	We prove by induction on the derivation of $\Delta_1 \entail t_1 \colon A$.
	\begin{itemize}
		\item $~\entail * \colon \one$. $\sem * \dg \circ \sem * = \iid_{\sem \one} \circ
			\iid_{\sem \one} = \iid_{\sem \one}$. This term is not orthogonal to any other
			well-typed term.
		\item $x \colon A \entail x \colon A$. 
		$\sem x \dg \circ \sem x = \iid_{\sem A} \circ
			\iid_{\sem A} = \iid_{\sem A}$. This term is not orthogonal to any other.
		\item $\Delta_1, \Delta_2 \entail t_1 \otimes t_2 \colon A \otimes B$.
			We first prove that its denotation is an isometry.
			\begin{align*}
				&\ \sem{t_1 \otimes t_2}\dg \circ \sem{t_1 \otimes t_2} & \\
				&= (\sem{t_1} \otimes \sem{t_2})\dg \circ (\sem{t_1} \otimes \sem{t_2})
				& \text{(by definition)} \\
				&= (\sem{t_1}\dg \otimes \sem{t_2}\dg) \circ (\sem{t_1} \otimes \sem{t_2})
				& \text{(dagger is a monoidal functor)} \\
				&= (\sem{t_1}\dg \circ \sem{t_1}) \otimes (\sem{t_2}\dg \circ \sem{t_2}) 
				& (\otimes \text{ is monoidal)} \\
				&= \iid_{\sem{\Delta_1}} \otimes \iid_{\sem{\Delta_2}}
				& \text{(IH)} \\
				&= \iid_{\sem{\Delta_1} \otimes \sem{\Delta_2}} = \iid_{\sem{\Delta_1
				, \Delta_2}} &
			\end{align*}
			Then we show that, if it is orthogonal to any other well-typed term, say
			$\Delta_3 \entail t_3 \colon C$, then their interpretations are also
			orthogonal. We reason on a case by case basis. The term $t_3$ can be of
			the form $t'_3 \otimes t''_3$, in which case, either $t_1~\bot~t'_3$ or
			$t_2~\bot~t''_3$. In both cases, the result is direct, because the zero
			morphism tensored with any other morphism is still zero. The other cases
			is $t_3$ being a linear combination; this case is covered by
			Lemma~\ref{lem:sum-ortho}.
		\item $\Delta \entail \ini t \colon A_1 \oplus A_2$.
			The denotation of this term is an isometry because the injections $\iota$
			also are and a composition of isometries keeps being an isometry. Now,
			given another term $\Delta_1 \entail t_2 \colon B$ such that
			$\ini t~\bot~t_2$, we have several cases:
			\begin{itemize}
				\item either $t_2$ is of the form $\ini t'_2$ with the same injection
					as $\ini t$, and the result is given by
					Lemma~\ref{lem:ortho-postiso};
				\item or $\ini t = \inl t$ and $t_2 = \inr t'_2$, in which case the conclusion
					is direct thanks to Lemma~\ref{lem:injection-compati} and the induction hypothesis;
				\item or $t_2$ is a linear combination, and Lemma~\ref{lem:sum-ortho}
					concludes.
			\end{itemize}
		\item $~\entail \zero \colon \nat$. $\sem\zero\dg \circ \sem\zero = \braket
			0 0 = 1 = \iid_{\sem \one}$. Moreover, given any natural number $n$,
			$\braket{0}{n+1} = 0$. Lemma~\ref{lem:sum-ortho} concludes for linear
			combinations.
		\item $\Delta \entail \suc t \colon \nat$. The interpretation is an
			isometry by composition of isometries. The orthogonality part is either
			ensured with the last point, or by Lemma~\ref{lem:ortho-postiso} and
			the induction hypothesis.
		\item $\Delta \entail \Sigma_i (\alpha_i \cdot t_i) \colon A$.  The
			interpretation is an isometry by induction hypothesis and
			Lemma~\ref{lem:normalised-sum}. Moreover, the only case of orthogonality
			not yet covered is $t_2 = \Sigma_k (\beta_k \cdot t_k)$ where for all $j
			\neq k$, $t_j \bot t_k$ and $\sum \alpha_j \res\beta_k = 0$.

			given that for all $i\neq j \in I$, $t_i~\bot~t_j$, $J,K
			\subseteq I$, and $\sum_{i \in J \cap K} \bar{\alpha_i}\beta_i = 0$:
			\begin{align*}
				&\ \sem{\Sigma_j (\alpha_j \cdot t_j)}\dg 
				\circ \sem{\Sigma_k (\beta_k \cdot t_k)} & \\
				&= \left( \sum_j \alpha_j \sem{t_j} \right)\dg \circ
				\sum_k \beta_k \sem{t_k}
				& \text{(by definition)} \\
				&= \left( \sum_j \res\alpha_j \sem{t_j}\dg \right) \circ
				\sum_k \beta_k \sem{t_k}
				& \text{(dagger distributes over the sum)} \\
				&= \sum_{j,k} \res\alpha_j \beta_k \sem{t_j}\dg \circ \sem{t_k}
				& \text{(by linearity)} \\
				&= \sum_{i \in J \cap K} \res\alpha_i \beta_i \sem{t_i}\dg \circ
				\sem{t_i}
				& \text{(by induction hypothesis)} \\
				&= \sum_{i \in J \cap K} \res\alpha_i \beta_i \, \iid_{\sem\Delta}
				& \text{(by induction hypothesis)} \\
				&= \left( \sum_{i \in J \cap K} \res\alpha_i \beta_i \right) \iid_{\sem\Delta}
				= 0_{\sem{\Delta_2},\sem{\Delta_1}} &
			\end{align*}
		\item $\Delta \entail \omega~t \colon B$. The interpretation is an isometry
			by composition of isometries. The orthogonality is covered by either
			Lemma~\ref{lem:ortho-postiso}, because a unitary is in particular an
			isometry, or Lemma~\ref{lem:sum-ortho}, similarly to the previous points.
	\end{itemize}
\end{proof}

Providing an interpretation that fits the expectations from quantum physics is
meaningful, but not completely satisfying. We prove a stronger link in the
coming section between the syntax and the semantics through the equational
theory, in the vein of the previous chapter.

\subsection{Completeness}
\label{sub:qua-completeness}

We prove a strong link between the denotational semantics and the equational
theory, namely that an equality statement in one is also an equality statement
in the other. We start by showing soundness, meaning that two terms equal in
the equational theory have the same denotational interpretation.

\begin{proposition}
	\label{prop:qua-soundness}
	Given $\Delta \entail t_1 = t_2 \colon A$, then $\sem{\Delta \entail t_1
	\colon A} = \sem{\Delta \entail t_2 \colon A}$.
\end{proposition}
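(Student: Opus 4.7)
The plan is to proceed by induction on the derivation of the equality judgement $\Delta \entail t_1 = t_2 \colon A$, treating each of the rules from Figures~\ref{fig:eq-rules-base},~\ref{fig:eq-rules-vector-space},~\ref{fig:eq-rules-qu-control-cong} and~\ref{fig:eq-rules-qu-control} in turn. Well-typedness of both sides is guaranteed by Proposition~\ref{prop:equal-to-typed}, so we may freely speak of $\sem{t_1}$ and $\sem{t_2}$.

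The basic rules $(refl)$, $(symm)$ and $(trans)$ are immediate since equality of morphisms in $\Hilb$ is an equivalence relation. The congruence rules of Figure~\ref{fig:eq-rules-qu-control-cong} all go through by direct application of the induction hypothesis: each one lifts an equality $\sem{t_1} = \sem{t_2}$ through a syntactic constructor that is interpreted as post/pre-composition with a fixed morphism (the injections $\iota_l, \iota_r$, the successor $\rmsucc$, the interpreted unitary $\sem\omega$) or as a tensor $\sem{t_1} \otimes \sem{t}$. For $(\Sigma.eq)$, one uses bilinearity of the sum of morphisms together with the induction hypothesis on each $t_i = t'_i$.

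The linear-algebraic rules of Figure~\ref{fig:eq-rules-vector-space} all reduce to properties of sums and composition of contractive maps. Specifically: $(perm)$ is commutativity of addition; $(0.scal)$ and $(1.scal)$ are clearing a zero summand and $1 \cdot f = f$; $(fubini)$ and $(double)$ are the bilinearity and associativity of $\sum$ and scalar multiplication. For the six linearity rules $(\omega.linear)$, $(\iota.linear_{1,2})$, $(\otimes.linear_{1,2})$ and $(S.linear)$, the key observation is that every syntactic constructor from Figure~\ref{fig:simple-term-interpretation} is interpreted as composition (or tensor) with a \emph{linear} map, and composition distributes over finite sums of morphisms in $\Hilb$; each case is a one-line computation using the definition in Figure~\ref{fig:simple-term-interpretation}.

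The most interesting cases are the computational rules of Figure~\ref{fig:eq-rules-qu-control}. The rule $(\omega.\beta)$ is precisely the content of Proposition~\ref{prop:qua-op-soundness}, so we can apply it directly. For $(\omega.inv)$, we use $\sem{\omega\inv} = \sem\omega\dg$ together with unitarity of $\sem\omega$: from $\sem\omega \circ \sem b = \sem v$ we obtain $\sem\omega\dg \circ \sem v = \sem\omega\dg \circ \sem\omega \circ \sem b = \sem b$. The rule $(\omega.comp)$ is immediate from $\sem{\omega_2 \circ \omega_1} = \sem{\omega_2} \circ \sem{\omega_1}$. For $(\omega.\otimes)$, $(\omega.\oplus_{1,2})$ and $(\omega.ctrl_{1,2})$, one expands the definitions in Figure~\ref{fig:unitary-semantics} and checks the identity on the image of a basis value; the controlled-unitary cases use $\ketbra 0 0 \otimes \iid + \ketbra 1 1 \otimes \sem\omega$ applied to $\sem{\inl *} \otimes \sem b$ and $\sem{\inr *} \otimes \sem b$ respectively, which collapses to the claimed form by orthogonality of $\ket 0$ and $\ket 1$.

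The main obstacle is really concentrated in $(\omega.\beta)$, but this has already been dispatched by Proposition~\ref{prop:qua-op-soundness}, which itself relies on the semantic interpretation of matching (Lemma~\ref{lem:matching-semantics}), orthogonality of the patterns (Lemma~\ref{lem:orthogonal-semantics-ortho-value}) and the substitution lemma (Proposition~\ref{prop:substitution-interpretation}). The remaining bookkeeping is routine, with the only mild subtlety being the tacit use of the monoidal coherence isomorphisms when interpreting substitutions across the context split, as already acknowledged in the remark following Proposition~\ref{prop:substitution-interpretation}.
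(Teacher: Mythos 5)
Your proposal is correct and follows exactly the route the paper takes: induction on the derivation of the equality judgement, with every rule handled by routine linearity/congruence arguments except $(\omega.\beta)$, which is discharged by Proposition~\ref{prop:qua-op-soundness}. The paper's own proof is just a terse statement of this same plan, so your write-up is essentially an expanded version of it.
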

\begin{proof}
	By induction on the rules of the equational theory. The only non-trivial case
	is done within Proposition~\ref{prop:qua-op-soundness}.
\end{proof}

We prove then completeness, starting with terms without unitary functions.

\begin{lemma}[Completeness of values]
	\label{lem:value-completeness}
	Given $~\entail v_1 \colon A$ and $~\entail v_2 \colon A$, if
	$\sem{\entail v_1 \colon A} = \sem{\entail v_2 \colon A}$, then
	$~\entail v_1 = v_2 \colon A$.
\end{lemma}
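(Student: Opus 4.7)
The plan is to reduce both values to a canonical normal form of the shape $\Sigma_{i \in I} (\alpha_i \cdot b_i)$ with the $b_i$ closed, pairwise distinct, in increasing order, and all $\alpha_i \neq 0$, and then to show that such a representation is uniquely determined by its denotation. The equational rules $(perm)$, $(0.scal)$ and $(1.scal)$ of Figure~\ref{fig:eq-rules-vector-space} already allow us to bring any closed value to this form (possibly using $(\Sigma.eq)$ to rewrite sub-terms), so combined with soundness (Proposition~\ref{prop:qua-soundness}) we may assume without loss of generality that both $v_1 = \Sigma_{i \in I} (\alpha_i \cdot b_i)$ and $v_2 = \Sigma_{j \in J} (\beta_j \cdot c_j)$ are already in this normal form.

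The key ingredient is that for closed basis values the interpretation is an orthonormal family. First, Lemma~\ref{lem:term-isometry} (specialised to an empty context) tells us that each $\sem{b_i}$ and $\sem{c_j}$ is a unit vector, i.e.\ $\sem{b}\dg \circ \sem{b} = \iid_{\sem{\mathbf I}} = 1$. Second, Proposition~\ref{prop:patterns-are-orthogonal} implies that any two distinct closed basis values of the same type are orthogonal in the syntactic sense, and then Lemma~\ref{lem:orthogonal-semantics-ortho} gives $\sem{b}\dg \circ \sem{b'} = 0$ whenever $b \neq b'$. So the family $\{\sem{b} : b \text{ closed basis value of type } A\}$ behaves like an orthonormal family in $\sem A$.

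Now the assumption $\sem{v_1} = \sem{v_2}$ unfolds to $\sum_{i} \alpha_i \sem{b_i} = \sum_{j} \beta_j \sem{c_j}$. Pre-composing with $\sem{b_k}\dg$ on the left, for any fixed $k \in I$, and using bilinearity together with the orthonormality just established, the left-hand side collapses to $\alpha_k$ while the right-hand side collapses to $\beta_j$ if some $c_j = b_k$ and to $0$ otherwise. Since $\alpha_k \neq 0$, there must be a (unique) $j$ with $c_j = b_k$ and $\beta_j = \alpha_k$. Symmetrically, every $c_j$ appears among the $b_i$ with matching coefficient. Because in the normal form both lists are strictly increasing sequences of pairwise distinct basis values, this forces $I = J$ (up to relabelling by the common total order) and $b_i = c_i$, $\alpha_i = \beta_i$ for all indices. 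Hence $v_1$ and $v_2$ are literally the same term, so $\vdash v_1 = v_2 \colon A$ follows by $(refl)$.

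The only delicate point is step two, which relies on Proposition~\ref{prop:patterns-are-orthogonal}: for \emph{closed} basis values, non-equality implies syntactic orthogonality. This is where the restriction to closed values is essential, since open basis values containing variables need not be syntactically orthogonal even when distinct. Once this is in hand, the rest is a routine orthonormal-basis argument transported from linear algebra through the denotational semantics.
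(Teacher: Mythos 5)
Your proof is correct, and it reaches the conclusion by a somewhat different route than the paper. The paper argues by induction on the typing derivation of $v_1$: the basis-value constructors are handled by peeling off the corresponding semantic operation (pre-composing with $(\iota_l^{A,B})\dg$ in the injection case, splitting the tensor in the pair case) and invoking the induction hypothesis, and only the top-level linear combination is treated with the projection argument. You instead collapse the whole structural induction into one step: two \emph{distinct closed} basis values admit no match, so Proposition~\ref{prop:patterns-are-orthogonal} makes them syntactically orthogonal, Lemma~\ref{lem:orthogonal-semantics-ortho} then makes their denotations orthogonal, and together with Lemma~\ref{lem:term-isometry} the denotations of closed basis values of type $A$ form an orthonormal (hence linearly independent) family; projecting with $\sem{b_k}\dg$ recovers each coefficient and identifies each basis value syntactically with no induction hypothesis needed. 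This buys a cleaner argument: in particular it sidesteps the paper's tensor case, where $\sem{b_1}\otimes\sem{b'_1}=\sem{b_2}\otimes\sem{b'_2}$ is used to deduce $\sem{b_1}=\sem{b_2}$ and $\sem{b'_1}=\sem{b'_2}$, a step that is valid for these canonical basis vectors but would fail for arbitrary unit vectors (which only agree up to a phase). Two minor remarks: your initial normalisation step is essentially vacuous, since the grammar already forces a value to be a linear combination of increasing, pairwise-orthogonal basis values with non-zero scalars; and the final equality need not be obtained by $(refl)$ alone if any rewriting was performed, but transitivity with the soundness you invoked closes that gap.
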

\begin{proof}
	This is proven by induction on $~\entail v_1 \colon A$.
	\begin{itemize}
		\item The cases of $*$ and $\zero$ are straightforward.
		\item If $v_1 = b_1 \otimes b'_1$ with type $A \otimes B$, then also $v_2 =
			b_2 \otimes b'_2$, and $\sem{b_1} \otimes \sem{b'_1} = \sem{b_2} \otimes
			\sem{b'_2}$, thus $\sem{b_1} = \sem{b_2}$ and $\sem{b'_1} = \sem{b'_2}$,
			the induction hypothesis ensures that $~\entail b_1 = b_2 \colon A$ and
			$~\entail b'_1 = b'_2 \colon B$, and thus $~\entail b_1 \otimes b'_1 =
			b_2 \otimes b'_2 \colon A \otimes B$, since we are working with basis
			values.
		\item If $v_1 = \ini b_1$ of type $A_1 \oplus A_2$, with $\sem{v_1} =
			\sem{v_2}$, necessarily $v_2 = \ini b_2$, and $\iota^{A,B}_l \circ
			\sem{b_1} = \iota^{A,B}_l \circ \sem{b_2}$, thus $(\iota^{A,B}_l)\dg
			\circ \iota^{A,B}_l \circ \sem{b_1} = (\iota^{A,B}_l)\dg \circ
			\iota^{A,B}_l \circ \sem{b_2}$ which ends with $\sem{b_1} = \sem{b_2}$,
			and the induction hypothesis gives that $~\entail b_1 = b_2 \colon A_i$,
			and thus $~\entail \ini b_1 = \ini b_2 \colon A_i$.
		\item Else, with type $A$, $v_1 = \Sigma_i (\alpha_i \cdot b^1_i)$ with the
			$(b^1_i)$ that are pairwise orthogonal, and $v_2 = \Sigma_j (\beta_j
			\cdot b^2_j)$ with the $(b^2_j)$ that are also pairwise orthogonal. We
			know that \[ \sem{v_1} = \sum_i \alpha_i \sem{b^1_i} = \sum_j \beta_j
			\sem{b^2_j} = \sem{v_2}. \]
			Thus, for all $~\entail b \colon A$, $\sem b\dg \circ \sum_i \alpha_i
			\sem{b^1_i} = \sem b \dg \circ \sum_j \beta_j \sem{b^2_j}$; by
			orthogonality, we have $\sem b\dg \circ \sum_i \alpha_i \sem{b^1_i} =
			\alpha_k \sem{b^1_k}$ for some $k$, and $\sem b\dg \circ \sum_j \beta_j
			\sem{b^2_j} = \beta_{k'} \sem{b^2_{k'}}$ for some $k'$. Thus, $\alpha_k
			\sem{b^1_k} = \beta_{k'} \sem{b^2_{k'}}$, and because they are basis
			values, we have $\alpha_k = \beta_{k'}$ and  $\sem{b^1_k} =
			\sem{b^2_{k'}}$, and the induction hypothesis ensures that $~\entail
			b^1_k = b^2_{k'} \colon A$. Note that this is done for all $~\entail b
			\colon A$, and thus $~\entail v_1 = v_2 \colon A$. 
	\end{itemize}
\end{proof}

We then prove the final result of this section -- namely, completeness on
closed terms --, meaning that the equality statement of the equational theory
and of the denotational semantics on closed terms are equivalent.

\begin{theorem}[Completeness]
	\label{th:qu-completeness}
	\(
		~\entail t_1 = t_2 \colon A 
		\text{ iff } 
		\sem{\entail t_1 \colon A} = \sem{\entail t_2 \colon A}.
	\)
\end{theorem}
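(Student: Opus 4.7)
The plan is to deduce the completeness theorem by assembling three results already established earlier in the chapter: soundness of the denotational semantics (Proposition~\ref{prop:qua-soundness}), existence of normal forms for closed terms (Theorem~\ref{th:equational-sn}), and completeness restricted to values (Lemma~\ref{lem:value-completeness}). The forward direction is exactly Proposition~\ref{prop:qua-soundness}, so no further work is needed there; the substance lies in the reverse direction.

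For the reverse implication, I would proceed as follows. Assume $\sem{\entail t_1 \colon A} = \sem{\entail t_2 \colon A}$. By Theorem~\ref{th:equational-sn} applied to $t_1$ and to $t_2$, there exist values $v_1$ and $v_2$ with $\entail t_1 = v_1 \colon A$ and $\entail t_2 = v_2 \colon A$ derivable in the equational theory. Applying Proposition~\ref{prop:qua-soundness} to these two equations gives $\sem{v_1} = \sem{t_1}$ and $\sem{v_2} = \sem{t_2}$, whence by hypothesis $\sem{v_1} = \sem{v_2}$. Lemma~\ref{lem:value-completeness} is precisely the statement that two closed values with equal denotation are equal in the equational theory, so $\entail v_1 = v_2 \colon A$. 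Three applications of the $(trans)$ rule from Figure~\ref{fig:eq-rules-base} then combine $\entail t_1 = v_1 \colon A$, $\entail v_1 = v_2 \colon A$, and (after $(symm)$) $\entail v_2 = t_2 \colon A$ into $\entail t_1 = t_2 \colon A$, as desired.

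The main obstacle, at this late stage, is not in the proof of completeness itself but in the supporting lemmas whose weight has already been absorbed: normalisation rests on the progress lemma for unitary applications (Lemma~\ref{lem:equational-progress}), which in turn required the exhaustivity and non-overlapping properties of orthogonal decompositions (Lemmas~\ref{lem:od-substitution} and~\ref{lem:od-output-exhaustive}); and completeness on values depends on the uniqueness of the normal form for expressions, together with the orthogonality arguments of Lemma~\ref{lem:orthogonal-semantics}. Once these are in hand, the completeness theorem reduces to a short bureaucratic combination, and no new ideas or computations are needed to finish it.
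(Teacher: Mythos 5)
Your proof is correct and follows exactly the same route as the paper: soundness (Proposition~\ref{prop:qua-soundness}) for the forward direction, and for the converse, normalisation (Theorem~\ref{th:equational-sn}) to reduce to values, soundness again to transfer the semantic equality to the normal forms, completeness on values (Lemma~\ref{lem:value-completeness}), and transitivity to conclude. Nothing is missing.
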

\begin{proof}
	We prove both directions.
	\begin{itemize}
		\item For the implication $~\entail t_1 = t_2 \colon A$ to $\sem{\entail
			t_1 \colon A} = \sem{\entail t_2 \colon A}$, see
			Proposition~\ref{prop:qua-soundness}.
		\item The other direction uses Theorem~\ref{th:equational-sn}, the
			equivalent of strong normalisation, that gives $v_1$ and $v_2$ such that
			$~\entail t_1 = v_1 \colon A$ and $~\entail t_2 = v_2 \colon A$. Our
			hypothesis is that $\sem{t_1} = \sem{t_2}$, and thus $\sem{v_1} =
			\sem{t_1} = \sem{t_2} = \sem{v_2}$. Lemma~\ref{lem:value-completeness}
			gives then that $~\entail v_1 = v_2 \colon A$ and transitivity ensures
			that $~\entail t_1 = t_2 \colon A$.
	\end{itemize}
\end{proof}

\section{Discussion and conclusion}
\label{sec:qu-simple-conclusion}

This section concludes with personal comments on the choices made. We discuss
the parts of the language presented in \cite{sabry2018symmetric} that are
\emph{missing} in this chapter, and we give a short explanation on why they
could or could not be included in the chapter. 

\subsection{Inductive types, Higher-order, Recursion}
\label{sub:qu-inductive}

The points discussed in this section are concerned with features that we add to
the language in the next chapter. In that chapter, we work with the reversible
language based on symmetric pattern-matching, but without the quantum effect.
This happens for two reasons. Firstly, it is new in the literature even in the
classical case; and secondly, because the mathematical structure for
\emph{classical} reversibility is more suitable. We explain later in the thesis
the limitations encountered in adding these features to quantum computing.

\paragraph{Inductive types.} 
General inductive types can be added to the language without any issue. In
\cite{sabry2018symmetric}, the quantum language only works with lists, but all
their results regarding the syntax hold even when generalising with a fixed
point combinator for types, usually written $\mu X .  A$ (see the explanations
in \secref{sub:inductive-types}). The denotational semantics of inductive
types, in the context of countably-dimensional Hilbert spaces and isometries,
is not hard to achieve, as observed by Michael Barr \cite[Theorem
3.2]{barr92compact}. This means that, instead of working with a type $\nat$,
the language presented in this chapter could work with a general fixed point
combinator for types.

\paragraph{Higher-order.} 
In this chapter, the functions of the language are called \emph{unitaries}
because of their interpretation as unitary maps between Hilbert spaces. They
are the only operations allowed by quantum mechanics, alongside state
preparation and measurement.  Their treatment in the grammar and in the type
system is separated from terms; even more, unitaries are not terms. This is
different from the traditional $\lambda$-calculus. There are multiple reasons
that justify this choice. We adopt a denotational point of view and comment on
this choice through the mathematical model. Indeed, models of classical
programming languages often involve a \emph{closed} category -- meaning that
function types can be interpreted as objects of the category. However, the
category of countably-dimensional Hilbert spaces and unitary maps is not
closed. We will see in the next chapter how to go around this limitation with
the enrichment of categories. Regarding the syntax, the language can be
extended with \emph{extended values} as they are called in the original paper;
they are expressions that can contain not only variables at the ground level,
but also function variables. These extended values can be used on the
right-hand side of an abstraction, as in the following example:
\[
	\{\mid x \iso \letv{y}{\phi~x}{y} \}
\]
where $\phi$ can be any well-typed function, including a function variable.
Once we include those variables, we can introduce higher-order functions such
as $\lambda \phi . \omega$, and application of those functions. 

\paragraph{Recursion.} 
Once function variables are introduced, one can add a fixed point combinator
for recursion: for example, given a function $\omega$, we can introduce a
function $\ffix \phi . \omega$ that is the fixed point of $\omega$ with regard
to the function variable $\phi$. However, the functions $\ffix \phi . \phi$ and
$\ffix \phi . \{\mid x \iso \letv{y}{\phi~x}{y} \}$ do not terminate, and thus
cannot be interpreted as unitaries between Hilbert spaces. Moreover, we have
yet to find a mathematical interpretation for such fixed points in Hilbert
spaces; and Hilbert spaces might not be the solution. This is discussed in
Chapter~\ref{ch:qu-recursion}.

%

\subsection{Conclusion}
\label{sub:qu-conclusion}

We have presented a programming language equipped with simple types aimed at
quantum control through an algebraic effect. This is done through a syntax that
allows for linear combinations of terms, and a type system which ensures that
the latter are normalised. Then, we have formalised an equational theory,
preserving the typing judgements and handling linear algebra as well as the
computational aspect of the language. Finally, we have provided a denotational
semantics, proven complete with regard to the equational theory.

%
%

\chapter{Reversibility and Fixed Points}
\label{ch:reversible}
\begin{citing}{3cm}
	``Should mathematical semantics still be conceived as following in the track
	of pre-existing languages, trying to explain their novel features, and to
	provide ﬁrm foundations for them? Or should it be seen as operating in a more
	autonomous fashion, developing new semantic paradigms, which may then give
	rise to new languages?'' --- Samson Abramsky, in \cite{abramsky2020whither}.
\end{citing}

\begin{abstract}
	This chapter is concerned with the expressivity and denotational semantics of
	a functional higher-order reversible programming language based on Theseus.
	In this language, pattern-matching is used to ensure the reversibility of
	functions.	We then build a sound and adequate categorical semantics based on
	join inverse categories, with additional structures to capture
	pattern-matching. We show how one can encode any Reversible Turing Machine in
	said language. Finally, we derive a full completeness result, stating that
	any computable, partial injective function is the image of a term in the
	language.

	\paragraph{References.} This work is based upon a paper, under submission,
	coauthored with Kostia Chardonnet and Benoît Valiron. The preprint is
	available at \cite{nous2023invrec}.
\end{abstract}

\section{Introduction}
\label{sec:rev-introduction}

As said in the previous chapter, reversible computation has emerged as a
energy-preserving model of computation in which no data is ever erased. This
comes from Laundauer's principle which states that the erasure of information
is linked to the dissipation of energy as heat~\cite{Landauer61,
berut2012experimental}. In reversible computation, given some process $f$,
there always exists an inverse process $f^{-1}$ such that their composition is
equal to the identity: it is always possible to ``\emph{go back in time}'' and
recover the input of your computation. Although this can be seen as very
restrictive, as shown for instance in~\cite{bennett1973logical} non-reversible
computation can be emulated in a reversible setting, by keeping track of
intermediate results.

In order to avoid erasure of information, reversible computation often makes
use of \emph{garbage} or \emph{auxiliary wires}: additional information kept in
order to ensure both reversibility and the non-erasure of information. In
programming languages, this is done by ensuring both \emph{forward} and
\emph{backward} determinism. Forward determinism is almost always ensured in
programming languages: it is about making sure that, given some state of your
system, there is a unique next state that it can go to. Backward determinism on
the other hand makes sure that for given a state, there is only one original
state possible.

Reversible computation has since been shown to be a versatile model. In the
realm of quantum computation, reversible computing is at the root of the
construction of \emph{oracles}, subroutines describing problem instances in
quantum algorithms~\cite{nielsen02quantum}. Most of the research in reversible
circuit design can then been repurposed to design efficient quantum circuits.
On the theoretical side, reversible computing serves as the main ingredient in
several operational models of linear logics, whether through token-based
Geometry of Interaction~\cite{mackie1995geometry} or through the Curry-Howard
correspondence for $\mu$MALL~\cite{chardonnet2023curry, phd-kostia}.

Reversible programming has been approached in two different ways. The first
one, based on Janus and later R-CORE and R-WHILE~\cite{lutz1986janus,
yokoyama2007reversible, gluck2019reversible, yokoyama2016fundamentals},
considers imperative and flow-chart languages. The other one follows a
functional approach~\cite{yokoyama2011reversible, thomsen2015interpretation,
james2014theseus, JacobsenKT18, sabry2018symmetric, chardonnet2023curry}: a
function $A\to B$ in the language represents a function -- a bijection --
between values of type $A$ and values of type $B$. In this approach, types are
typically structured, and functional reversible languages usually feature
pattern-matching to discriminate on values.

One of the issue reversible programming has to deal with is
\emph{non-termination}: in general, a reversible program computes a
\emph{partial injective map}. This intuition can be formalised with the concept
of \emph{inverse categories}~\cite{kastl1979inverse, cockett2002restriction-I,
cockett2003restriction-II, cockett2007restriction-III}: categories in which
every morphism comes with a partial inverse, for which the category $\PInj$ of
sets and partial injective maps is the emblematic concrete instance.

This categorical setting has been successfully used in the study of reversible
programming semantics, whether based on flow-charts~\cite{gluck2017categorical,
kaarsgaard2019condition}, with recursion~\cite{axelsen2016join,
kaarsgaard2017join, kaarsgaard2019inversion, kaarsgaard2019engarde}, with
side-effects~\cite{heunen2015reversible, heunen2018reversible}, \textit{etc}.

Although much work has been dedicated to the categorical analysis of reversible
computation, the \emph{adequacy} of the developed categorical constructs with
reversible functional programming languages has only recently been under
scrutiny, either in \emph{concrete} categories of partial
isomorphisms~\cite{kaarsgaard2019engarde, kaarsgaard2021join}, or for simple,
\emph{non Turing-complete} languages~\cite{nous2021invcat}. A formal, categorical
analysis of a Turing-complete, reversible language is still missing.
Turing-completeness for a reversible programming language might come as a
surprise; however, the literature is already filled with evidence that any
irreversible computation can be simulated in a reversible setting
\cite{bennett1973logical, bennet1982thermo, abramsky2005structural}.

\subsection{Related work} 
\label{sub:rev-related-work}

The work in \cite{kaarsgaard2017join} is foundational in the development of the
semantics of reversible programming languages. In that paper, the authors lay
the foundations for the interpretation of what a programmer needs: data types
and loops. However, no example of practical use is shown for this model. This
is covered in another paper \cite{kaarsgaard2021join}, where a denotational
semantics of the reversible programming language $\mathtt{RFun}$ is provided.
The latter is not typed and is conceptually further from, say, a simply-typed
$\lambda$-calculus. We argue that the language in \cite{sabry2018symmetric} is
a more interesting case study.

The language we bring under scrutiny is the one introduced in
\cite{sabry2018symmetric}. The main goal of that paper was to shed light on the
possibility to program with quantum control at a higher level than the one of
circuits. This was partially achieved: the authors have indeed brought forward a
novel syntax that handles reversibility through pattern-matching, and where
reversible quantum effects can be added. Nevertheless, the denotational
semantics they provide is not satisfactory, as it is not compositional. Without
compositionality, there is no certainty that, for instance, substitution
preserves the interpretation (see the previous chapter for a compositional
denotation semantics for the so-called language with quantum effects).

Our denotational semantics of the classical, reversible language involves
categorical enrichment. However new in the development of reversible
programming languages, this technique has been used in several other instances
\cite{fiore-phd, rennala2018enriched, zamdzhiev2018enriching,
zamdzhiev2020causality, zamdzhiev2021linear, huot2023spaces}.

Other kinds of techniques can be used for the denotation of reversibility, as
compact closed categories \cite{sabry2021compact}, or traced monoidal
categories in the sense of \cite{karvonen-phd}.

\subsection{Contribution}
\label{sub:rev-contribution}

In this chapter, we aim to close the gap: we propose a Turing-complete,
reversible language, together with a categorical semantics. In particular, the
contributions of this paper are as follows.
\begin{itemize}
	\item A Turing-complete, higher-order reversible language with inductive types
		-- this language is described as \emph{classical}, as opposed to a
		\emph{quantum} programming language. Building on the Theseus-based family
		of languages studied in \cite{sabry2018symmetric, nous2021invcat,
		chardonnet2023curry, phd-kostia}, we consider an extension with
		\emph{inductive types}, general \emph{recursion} and \emph{higher-order}
		functions. Note that the Turing-completeness has been the work of Kostia
		Chardonnet, and will not be presented in details here. See the paper
		\cite{nous2023invrec}.
	\item A sound and adequate categorical semantics. We show how the language can
		be interpreted in join inverse rig categories. The result relies on the
		$\DCPO$-enrichments of join inverse rig categories. This part of the
		chapter is entirely the author's work.
	\item A full completeness result for \emph{computable} functions. We finally
		discuss how the interpretation of the language in the category $\PInj$ is
		fully complete in the sense that any computable, partial injective
		set-function on the images of types is realisable within the language.
		This part was produced in collaboration with Kostia Chardonnet.
\end{itemize}

\subsection{Work of the Author}
\label{sub:rev-author}

The author of this thesis has contributed to the following points.
\begin{itemize}
	\item A generalisation of the syntax presented in \cite{sabry2018symmetric}
		with a call-by-name $\lambda$-calculus on top of functions -- called
		\emph{isos} in this chapter. The operational semantics has been updated
		with regard to these changes. The author has also provided the
		corresponding proofs of the usual substitution lemma, subject reduction and
		progress for this system.
	\item A denotational semantics of the language in join inverse rig
		$\dcpo$-categories.
	\item A proof that the denotational semantics is adequate with regard to the
		operational semantics.
	\item A statement -- akin to a completeness result -- saying that any
		computable injection is captured by a function in the language.
\end{itemize}

\section{The Language: Classical Symmetric Pattern-Matching}
\label{sec:rev-language}

In this section, we present a reversible language, unifying and extending the
Theseus-based variants presented in the literature~\cite{sabry2018symmetric,
nous2021invcat, chardonnet2023curry}. In particular, the language we propose
features higher-order (unlike~\cite{nous2021invcat}), pairing, injection,
inductive types (unlike~\cite{sabry2018symmetric}) and general recursion
(unlike~\cite{chardonnet2023curry}). Functions in the language are based on
pattern-matching, following a strict syntactic discipline: term variables in
patterns should be used linearly, and clauses should be non-overlapping on the
left \emph{and} on the right (therefore enforcing non-ambiguity and
injectivity). In~\cite{sabry2018symmetric, nous2021invcat, chardonnet2023curry}
one also requires exhaustivity for totality. In this paper, we drop this
condition in order to allow non-terminating behaviour.

\begin{figure}[!h]
\begin{alignat*}{100}
		&\text{(Base types)} \quad& A, B &&&::=~ && \one \alt A \oplus B
		\alt A \otimes B \alt \mu X. A \alt X \\
    &\text{(Isos)} & T &&&::=&&A\iso B \alt
    T_1 \to T_2 \\[1.5ex]
    &\text{(Values)} & v &&&::=&& * \alt x \alt \inl{v} \alt \inr{v}
    \alt \pair{v_1}{v_2} \alt \fold{v} \\
    &\text{(Patterns)} & p &&&::=&& x \alt \pair{p_1}{p_2} \\
    &\text{(Expressions)} & e &&&::=&& v \alt
    \letv{p_1}{\omega~p_2}{e} \\
    &\text{(Isos)} & \isoterm &&&::=&& \isobasique \alt \ffix
    \isovar.\isoterm \\
		& & &&& && \alt \lambda\isolambdavar.
    \omega \alt \isovar \alt \omega_1~\omega_2\\
    &\text{(Terms)} & t &&&::=&& * \alt x \alt \inl{t} \alt \inr{t}
    \alt
    \pair{t_1}{t_2}  \\
    & & &&& && \alt \fold{t} \alt \isoterm~t \alt
    \letv{p}{t_1}{t_2}
\end{alignat*}
\caption{Terms and types.}
\label{fig:termtypes}
\end{figure}

The language is presented in Figure~\ref{fig:termtypes}. It consists of
two layers.
\begin{itemize}
\item Base types: The base types consist of the unit type $\one$ along with its
	sole constructor $*$, coproduct $A\oplus B$ and tensor product $A\otimes B$
		with their respective constructors, $\inl{(t)}, \inr{(t)}$ and
		$\pair{t_1}{t_2}$.  Finally, the language features inductive types of the
		form $\mu X. A$ where $X$ is a type variable occurring in $A$ and $\mu$ is
		its binder. Its associated constructor is $\fold{(t)}$. The inductive type
		$\mu X. A$ can then be unfolded into $A[\mu X. A/X]$, \emph{i.e.}, substituting
		each occurrence of $X$ by $\mu X. A$ in $A$. Typical examples of inductive
		types that can be encoded this way are the natural number, as $\natT = \mu
		X. (\one\oplus X)$ or the lists of types $A$, noted $[A] = \mu X.
		\one\oplus (A\otimes X)$. Note that we only work with closed types. We
		shall denote term-variables with $x, y, z$.
  
\item Isos types: The language features isos, denoted $\omega$, higher order
  reversible functions whose types $T$ consist either of a pair of base
  type, noted $A\iso B$ or function types between isos, $T_1 \to T_2$. A
  first-order iso of type $A\iso B$ consists of a finite set of
  \emph{clauses}, written $v\iso e$ where $v$ is a value of type $A$ and
  $e$ an expression of type $B$. An expression consists of a succession of
  applications of isos to some argument, described by {\lett} constructions:
  $\letv{(x_1, \dots, x_n)}{\omega~(y_1, \dots, y_n)}{e}$.
  Isos can take other, first-order isos as arguments through the
  $\lambda \phi . \omega$ construction. 
  Finally, isos can also represent \emph{recursive computation} through the $\ffix
  \isovar. \omega$ construction, where $\isovar$ is an
  \emph{iso-variable}.
\end{itemize}

\begin{remark}
	What was called a unitary in the previous chapter is called an iso. Note that
	the term \emph{iso} is the vocabulary in the original paper
	\cite{sabry2018symmetric}. The use of the word \emph{unitary} before now was
	an emphasis on the quantum aspect of the language. The language described in
	this chapter being classical, we choose to use the orginal terminology.
\end{remark}


\paragraph{Formation rules.}
While~\cite{chardonnet2023curry},~\cite{sabry2018symmetric} and the previous
chapter require isos to be exhaustive (\emph{i.e.}~to cover all the possible values of
their input types) and non-overlapping (\emph{i.e.}~two clauses cannot match a same
value), we relax the exhaustivity requirement in this paper, in the spirit of
what was done in~\cite{nous2021invcat}. However, the syntax still depends on a form
of orthogonality between values (resp. expressions) to define isos and to ensure
pattern-matching. 

\begin{remark}
	The definition below mentions terms in the fashion of being as general as
	possible; but the notion of orthogonality is only used within values and
	expressions, because we do not have linear combinations of terms anymore.
\end{remark}

\begin{definition}[Orthogonality]
	\label{def:rev-orthogonality}
	We introduce a binary relation $\bot$ on terms. Given two terms $t_1, t_2$,
	$t_1~\bot~t_2$ holds if it can be derived inductively with the rules below;
	we say that $t_1$ and $t_2$ are orthogonal. The relation $\bot$ is defined as
	the smallest relation such that:
  \[
    \begin{array}{c}
      \infer{\inl{t_1}~\bot~\inr{t_2}}{}
      \qquad
      \infer{\inr{t_1}~\bot~\inl{t_2}}{}
      \qquad
			\infer{C_\bot[t_1]~\bot~C_\bot[t_2]}{t_1~\bot~t_2}
    \end{array}
  \]
	with 
	\begin{align*}
		C_\bot[-] & ::= & - \alt \inl C_\bot[-] \alt \inr C_\bot[-] \alt 
		C_\bot[-] \otimes t \alt t \otimes C_\bot[-] \\
		&& \alt	\fold C_\bot[-] \alt \letv{p}{t}{C_\bot[-]}.
	\end{align*}
\end{definition}

\begin{figure}
	\[
		\begin{array}{c}
			\infer{\Psi;\emptyset\entail * \colon \one}{}
			\qquad
			\infer{\Psi;x \colon A \entail x \colon A}{}
			\mynl
			\infer{\Psi;\Delta\entail \inl{t}\colon A \oplus B}{\Psi;\Delta\entail
			t\colon A}
			\qquad
			\infer{\Psi;\Delta\entail\inr{t}\colon A \oplus B}{\Psi;\Delta\entail t\colon B}
			\mynl
			\infer{
				\Psi;\Delta_1,\Delta_2;\entail \pair{t_1}{t_2} \colon  A \otimes B
			}{
				\Psi;\Delta_1\entail t_1 \colon  A
				&
				\Psi;\Delta_2\entail t_2 \colon  B
			}
			\qquad
			\infer{\Psi; \Delta \entail \fold{t} \colon  \mu X. A }{\Psi; \Delta
			\entail t \colon  A[\mu X. A/X]}
			\mynl
			\infer{
				\Psi;\Delta\entail \isoterm~t \colon  B
			}{
				\Psi\entailiso \isoterm \colon  A \iso B
				&
				\Psi;\Delta\entail t \colon  A
			}
			\mynl
			\infer{\Psi;\Delta_1,\Delta_2\entail \letv{x_1 \otimes \dots \otimes
			x_n}{t_1}{t_2} \colon B}{\Psi;\Delta_1\entail t_1 \colon  A_1 \otimes
			\dots \otimes A_n\qquad\Psi;\Delta_2, x_1 \colon  A_1, \dots, x_n
			\colon  A_n\entail t_2 \colon  B}
		\end{array}
	\]
	\[
		\begin{array}{c}
			\infer{
				\Psi \entailiso
				\isobasique \colon  A \iso B.
			}{
				\begin{array}{l@{\quad}l@{\quad}l@{\qquad}l}
					\Psi;\Delta_1 \entail v_1 \colon  A
					&
					\ldots
					&
					\Psi; \Delta_n\entail v_n \colon  A
					&
					\forall i\not= j, v_i~\bot~v_j \\
					\Psi;\Delta_1\entail e_1 \colon  B
					&
					\ldots
					& \Psi;\Delta_n\entaile e_n \colon  B & \forall i\not= j, e_i~\bot~ e_j
				\end{array}
			}
			\mynl
			\infer{\Psi, \isovar \colon T \entailiso \isovar \colon T}{}
			\qquad
			\infer{
				\Psi\entailiso \ffix \isovar.\omega \colon T
			}
			{
				\Psi, \isovar \colon T \entailiso \omega \colon T
			}
			\qquad
			\infer{
				\Psi\entailiso \lambda \isovar. \omega \colon  T_1 \to T_2
			}
			{
				\Psi, \isovar \colon T_1 \entailiso \omega \colon T_2
			}
			\mynl
			\infer{
				\Psi\entailiso \omega_2~\omega_1 \colon T_2
			}
			{
				\Psi\entailiso \omega_1 \colon T_1
				&
				\Psi\entailiso \omega_2 \colon  T_1 \to T_2
			}
		\end{array}
	\]
	\caption{Typing rules of terms and isos.}
	\label{fig:rev-typterm}
\end{figure}

The typing rules are then given in Figure~\ref{fig:rev-typterm}. While the
rules to form terms do not come as a surprise, note the addition of a context
$\Psi$ which represents the iso-variables. The latter is non-linear, in the
sense that given a well-typed term $\Psi ; \Delta \entail t \colon A$, a
variable $\phi$ in $\Psi$ can appear once, several times, or can also not
appear in $t$. The context $\Delta$ is linear, in the same way as in the last
chapter: a variable $x$ in $\Delta$ is present exactly once in $t$. Note that
the rule for applying an iso to a term $\omega~t$ requires $\omega$ to have an
iso ground type $A \iso B$; this means in particular that the term $(\lambda
\phi . \omega)~t$ cannot be well-typed. 

An iso abstraction $\isobreduit$ is well-typed iff all the $v_i$ and $e_i$
are well-typed and the $v_i$ (resp. the $e_i$) are pairwise orthogonal. This
is necessary to ensure both forward and backward determinism. As mentioned above,
there is no request for exhaustivity, and for example the \emph{empty} iso
$\set \cdot$ is well-typed at all ground iso types. At the same level of isos,
we have a simply-typed $\lambda$-calculus as detailed in \secref{sub:ccc},
this time without the product type.

\begin{remark}
	As a further note, we will observe that our language has a sound and adequate
	denotational semantics in a $\dcpo$-enriched category; and the interpretation
	of the $\lambda$-calculus of isos happens at the dcpo level. This means in
	particular that any language, as long as it is adequately interpreted in the
	cartesian closed category $\dcpo$, can replace the simply-typed
	$\lambda$-calculus of isos. It also means that any usual structure on top
	of a $\lambda$-calculus can be added to the current language.
\end{remark}

Iso abstractions handle the use of several variables, thus our version of
$\beta$-reduction needs to behave accordingly. To do so, we introduce the
notion of valuation, akin to the one in the previous chapter; and the
application of a valuation to a term performs a substitution.

\paragraph{Substitutions.}
We recall the definitions of the last chapter, adapted to the language
presented here. The difference is only embodied by inductive types formalism,
where the $\fold\!$ constructor replaces contructors for natural numbers.
Given two values $v$ and $v'$, we build the smallest
valuation $\sigma$ such that the patterns of $v$ and $v'$ match and that the
application of the substitution to $v$, written $\sigma(v)$, is equal to $v'$.
we denote the matching of a value $v'$ against a pattern $v$ and its
associated valuation $\sigma$ as $\match{\sigma}{v}{v'}$. Thus,
$\match{\sigma}{v}{v'}$ means that $v'$ matches with $v$ and gives a smallest
valuation $\sigma$, while $\sigma(v)$ is the substitution performed. The
predicate $\match{\sigma}{v}{v'}$ it is defined as follows.
\[
	\begin{array}{c}
		\infer{\match{\sigma}{*}{*}}{}
		\quad
		\infer{\match{\sigma}{x}{v'}}{\sigma = \{ x \mapsto v'\}}
		\quad
		\infer{\match{\sigma}{\ini v}{\ini v'}}{\match{\sigma}{v}{v'}}
		\mynl
		\infer{
			\match{\sigma}{v_1 \otimes v_2}{v'_1 \otimes v'_2}
		}{
			\match{\sigma}{v_1}{v'_1}
			&
			\match{\sigma}{v_2}{v'_2}
			&
			\text{supp}(\sigma_1) \cap \text{supp}(\sigma_2) = \emptyset
			&
			\sigma = \sigma_1\cup\sigma_2
		}
		\mynl
		\infer{\match{\sigma}{\fold v}{\fold v'}}{\match{\sigma}{v}{v'}}
	\end{array}
\]
Whenever $\sigma$ is a valuation whose support contains the variables of $t$,
we write $\sigma(t)$ for the value where the variables of $t$ have been
replaced with the corresponding terms in $\sigma$, as follows:
\begin{itemize}
	\item $\sigma(x) = t'$ if $\{x\mapsto t'\}\subseteq \sigma$,
	\item $\sigma(*) = *$,
	\item $\sigma(\inl{t}) = \inl{\sigma(t)}$,
	\item $\sigma(\inr{t}) = \inr{\sigma(t)}$,
	\item $\sigma(\pair{t_1}{t_2}) = \pair{\sigma(t_1)}{\sigma(t_2)}$,
	\item $\sigma(\fold t) = \fold \sigma(t)$,
	\item $\sigma(\omega~t) = \omega~\sigma(t)$,
	\item $\sigma(\letv{p}{t_1}{t_2}) = \letv{p}{\sigma(t_1)}{\sigma(t_2)}$.
\end{itemize}

\begin{remark}
	Definition~\ref{def:well-valuation} is reminded here. Even if it involves a
	slightly different syntax, it still holds. A valuation $\sigma$ is said to be
	well-formed with regard to contexts $\Psi$ and $\Delta$ if for all $(x_i
	\colon A_i) \in \Delta$, we have $\set{x_i \mapsto t_i} \subseteq \sigma$ and
	$\Psi ; \emptyset \entail t_i \colon A_i$. We write $\Psi ; \Delta \Vdash
	\sigma$.
\end{remark}

\begin{lemma}
	\label{lem:rev-term-subst}
	If $\Psi ; \Delta \entail t \colon A$ and $\Psi ; \Delta \Vdash \sigma$ 
	are well-formed, then $\Psi ; \emptyset \entail \sigma(t)
	\colon A$ is well-formed.
\end{lemma}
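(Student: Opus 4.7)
The plan is to proceed by induction on the derivation of the typing judgement $\Psi ; \Delta \entail t \colon A$, using the definition of the application $\sigma(t)$ given just above the statement. At each step I will need to split the well-formed valuation $\sigma$ into sub-valuations that match the way the linear context $\Delta$ is split by the typing rule, and then invoke the induction hypothesis on the sub-derivations.

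First I would handle the base cases. For the unit rule $\Psi ; \emptyset \entail * \colon \one$, we have $\sigma(*) = *$ and there is nothing to substitute, so the conclusion is immediate from the typing rule for $*$. For the variable rule $\Psi ; x \colon A \entail x \colon A$, well-formedness of $\sigma$ with respect to the context $x \colon A$ means that $\set{x \mapsto t_x} \subseteq \sigma$ with $\Psi ; \emptyset \entail t_x \colon A$, and since $\sigma(x) = t_x$ the claim follows. The injection cases ($\inl t$, $\inr t$), the fold case $\fold t$, and the iso-application case $\omega~t$ are direct applications of the induction hypothesis followed by the corresponding typing rule; note that in the iso case the premise $\Psi \entailiso \omega \colon A \iso B$ does not involve $\Delta$ and so transfers unchanged.

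The interesting cases are those where the linear context is split. For the pair rule with conclusion $\Psi ; \Delta_1, \Delta_2 \entail \pair{t_1}{t_2} \colon A \otimes B$, I would define $\sigma_i$ as the restriction of $\sigma$ to the variables of $\Delta_i$. Since the supports are disjoint and $\Psi ; \Delta_i \Vdash \sigma_i$ by construction, the induction hypothesis gives $\Psi ; \emptyset \entail \sigma_i(t_i) \colon A_i$. Because variables in $\Delta_2$ do not occur free in $t_1$ and vice versa (linearity, combined with Barendregt's convention), we have $\sigma(t_i) = \sigma_i(t_i)$, whence the pair typing rule concludes. The let-binding case follows the same recipe: split $\sigma$ along $\Delta_1, \Delta_2$, apply the induction hypothesis to $t_1$ under $\sigma_1$ and to $t_2$ under $\sigma_2$ extended with the bound pattern variables (which are fresh and therefore outside the support of $\sigma$), and close with the typing rule for $\mathtt{let}$.

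The main obstacle is the careful bookkeeping around the splitting of $\sigma$ and the proof that restricting $\sigma$ to a sub-context yields a well-formed valuation; this rests on linearity of $\Delta$ and freshness of bound pattern variables, both of which we can assume by Barendregt's convention. Once this is established, every case reduces to a one-step application of the relevant typing rule.
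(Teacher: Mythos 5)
Your proposal is correct and follows essentially the same route as the paper: a straightforward induction on the typing derivation of $\Psi ; \Delta \entail t \colon A$, pushing $\sigma$ through each term constructor and closing with the corresponding typing rule. The paper's proof is in fact terser than yours — it silently elides the splitting of $\sigma$ along $\Delta_1, \Delta_2$ and the treatment of the bound pattern variables in the $\mathtt{let}$ case, both of which you handle explicitly — so your extra bookkeeping is a refinement of, not a departure from, the published argument.
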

\begin{proof}
	The proof is done by induction on $\Psi ; \Delta \entail t \colon A$.
	\begin{itemize}
		\item $\Psi ; \emptyset \entail * \colon \one$. Nothing to do.
		\item $\Psi ; x \colon A \entail x \colon A$. 
			Since $x \colon A \Vdash \sigma$ is well-formed, there is $~\entail t
			\colon A$ such that $\set{x \mapsto t} \subseteq \sigma$ and $\sigma(x) =
			t$.
		\item $\Psi ; \Delta_1, \Delta_2 \entail t_1 \otimes t_2 \colon A \otimes B$.
			The induction hypothesis ensures that $\sigma(t_1)$ and $\sigma(t_2)$ are
			well-typed, and thus $\sigma(t_1) \otimes \sigma(t_2) = \sigma(t_1
			\otimes t_2)$ also is.
		\item $\Psi ; \Delta \entail \ini t \colon A_1 \oplus A_2$.
			The induction hypothesis ensures that $\sigma(t)$ is well-typed, and
			thus $\ini \sigma(t) = \sigma(\ini t)$ is.
		\item $\Psi ; \Delta \entail \fold t \colon \mu X . A$.
			The induction hypothesis ensures that $\sigma(t)$ is well-typed, and
			thus $\fold \sigma(t) = \sigma(\fold t)$ is.
		\item $\Psi ; \Delta \entail \omega~t \colon B$.
			The induction hypothesis ensures that $\sigma(t)$ is well-typed, and
			thus $\omega~\sigma(t) = \sigma(\omega~t)$ is.
		\item $\Psi ; \Delta_1, \Delta_2 \entail \letv{p}{t_1}{t_2} \colon B$.
			The induction hypothesis ensures that $\sigma(t_1)$ and $\sigma(t_2)$ are
			well-typed, and thus $\letv{p}{\sigma(t_1)}{\sigma(t_2)}
			= \sigma(\letv{p}{t_1}{t_2})$ also is.
	\end{itemize}
\end{proof}

Once substitutions are defined, we can make our way through the operational
semantics.

\paragraph{Operational semantics.}  
The language is equipped with a small-step operational semantics, that revolves
around a $\beta$-reduction at the level of isos and a reversible equivalent to
$\beta$-reduction at the level of terms. First, we introduce an operational
semantics for isos, similar to a call-by-name reduction strategy in a
$\lambda$-calculus. It is given in Figure~\ref{fig:iso-operational}. The
$\beta$-reduction and the congruence rule are usually for a simply-typed
$\lambda$-calculus. The $\ffix\!$ operator is handled operationally \emph{à la}
PFC \cite{plotkin1977lcf}.

\begin{figure}[!h]
	\[
		\begin{array}{c}
			\infer{
				\ffix \phi . \omega \to \omega[\ffix \phi . \omega / \phi]
			}{}
			\qquad
			\infer{
				(\lambda \phi . \omega_1) \omega_2 \to \omega_1[\omega_2/\phi]
			}{}
			\qquad
			\infer{
				\omega_1\omega_2 \to \omega'_1 \omega_2
			}{
				\omega_1 \to \omega'_1
			}
		\end{array}
	\]
	\caption{Isos operational semantics.}
	\label{fig:iso-operational}
\end{figure}

As in any formal programming language given an operational semantics, progress
requires a notion of value, that is defined below. Note that the iso values are
the ones for closed isos. 

\begin{definition}[Iso values]
	We call \emph{iso values} the following isos:
	\[
		\omega ::= \isobasique \alt \lambda \phi . \omega .
	\]	
\end{definition}

\begin{lemma}
	\label{lem:iso-subst-type}
	If $\Psi, \psi \colon T_1 \entailiso \omega_2 \colon T_2$ and $\Psi
	\entailiso \omega_1 \colon T_1$ are well-formed, then $\Psi \entailiso
	\omega_2[\omega_1/\psi] \colon T_2$.
\end{lemma}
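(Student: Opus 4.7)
The plan is to prove this by a straightforward structural induction, but because isos and terms are mutually referential in the syntax (isos appear inside terms via $\omega~t$, and terms -- hence isos again via the $\letv{p_1}{\omega~p_2}{e}$ construct -- appear inside iso abstractions on the right-hand side of clauses), I would first strengthen the statement to a simultaneous substitution lemma, asserting both that
\begin{itemize}
\item if $\Psi, \psi \colon T_1 \entailiso \omega_2 \colon T_2$ and $\Psi \entailiso \omega_1 \colon T_1$ then $\Psi \entailiso \omega_2[\omega_1/\psi] \colon T_2$, and
\item if $\Psi, \psi \colon T_1 ; \Delta \entail t \colon A$ and $\Psi \entailiso \omega_1 \colon T_1$ then $\Psi ; \Delta \entail t[\omega_1/\psi] \colon A$,
\end{itemize}
working up to $\alpha$-conversion via Barendregt's convention so that bound term-variables and bound iso-variables can always be assumed fresh with respect to $\omega_1$ and $\psi$.

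The induction proceeds on the given typing derivation. For the iso part, the case of the iso-variable $\psi$ itself is immediate since $\psi[\omega_1/\psi] = \omega_1$ and we are given $\Psi \entailiso \omega_1 \colon T_1$; other iso-variables $\phi \neq \psi$ are unaffected by the substitution. For $\lambda \phi . \omega$ and $\ffix \phi . \omega$, we apply the induction hypothesis in the extended context $\Psi, \phi \colon T$, using weakening on $\omega_1$ (which is silent because $\phi$ is fresh). The application case $\omega_a~\omega_b$ follows directly by applying the induction hypothesis to both premises and re-applying the typing rule. The only substantive case on the iso side is the iso abstraction $\isobasique \colon A \iso B$: here the $v_i$ are values (pure, no iso-variable occurrences) so they are untouched by the substitution, while each expression $e_i$ may contain iso applications, and we appeal to the term part of the induction hypothesis to type $e_i[\omega_1/\psi]$, noting that the orthogonality side-conditions $v_i \bot v_j$ and $e_i \bot e_j$ are preserved since substitution at iso-variables does not rewrite the term constructors appearing in the $\bot$-rules.

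For the term part of the mutual statement, all constructors except $\omega~t$ and the $\mathtt{let}$ binding are handled by a direct congruence step from the induction hypotheses. For $\omega~t$ with $\Psi, \psi \colon T_1 \entailiso \omega \colon A \iso B$ and $\Psi, \psi \colon T_1 ; \Delta \entail t \colon A$, the iso part of the induction hypothesis gives $\Psi \entailiso \omega[\omega_1/\psi] \colon A \iso B$, the term part gives $\Psi ; \Delta \entail t[\omega_1/\psi] \colon A$, and the application typing rule rebuilds the judgement. The $\mathtt{let}$ case is analogous, with freshness of the bound term-variables $x_i$ ensuring the contexts split correctly.

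The only mild subtlety -- and the one place where I would be most careful -- is checking that in the iso abstraction case the well-formedness conditions (the two orthogonality requirements and the clause-by-clause typing of the $e_i$) are all preserved under $[\omega_1/\psi]$; but since $\bot$ is defined purely by structural congruence rules on term constructors, none of which mention iso-variables, this reduces to invoking the term-level induction hypothesis on each $e_i$ and observing that $e_i \bot e_j$ is stable under identical substitution into both sides. No genuine difficulty arises, and the lemma follows.
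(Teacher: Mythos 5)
Your proposal is correct and follows essentially the same route as the paper: the paper's proof also strengthens the statement to a mutual claim about term judgements and iso judgements and proceeds by mutual induction on the typing derivations, with the same case analysis. Your extra care about preservation of the orthogonality side-conditions in the iso-abstraction case is a point the paper's proof glosses over, but your justification (that $\bot$ is a structural congruence on term constructors untouched by iso-variable substitution) is exactly right.
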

\begin{proof}
	Formally, the inductive definition of an iso judgement also depends on term
	judgements, in other words we also prove the following statement:
	If $\Psi, \psi \colon T_1 ; \Delta \entail t \colon A$ and $\Psi
	\entailiso \omega_1 \colon T_1$ are well-formed, then $\Psi ; \Delta \entail
	t[\omega_1/\psi] \colon A$.
	The proof is done by mutual induction on the term and iso judgements.
	\begin{itemize}
		\item $\Psi, \psi \colon T_1 ; \emptyset \entail * \colon \one$. Direct.
		\item $\Psi, \psi \colon T_1 ; x \colon A \entail x \colon A$. Direct.
		\item $\Psi, \psi \colon T_1 ; \Delta_1, \Delta_2 \entail t_1 \otimes t_2 \colon A \otimes B$.
			We observe that $(t_1 \otimes t_2)[\omega_1/\psi] = t_1[\omega_1/\psi]
			\otimes t_2[\omega_1/\psi]$ and the induction hypothesis concludes.
		\item $\Psi, \psi \colon T_1 ; \Delta \entail \ini t \colon A_1 \oplus A_2$.
			Similar to the previous point.
		\item $\Psi, \psi \colon T_1 ; \Delta \entail \fold t \colon \mu X . A$.
			Similar to the previous point.
		\item $\Psi, \psi \colon T_1 ; \Delta \entail \omega~t \colon B$.
			We observe that $(\omega~t)[\omega_1/\psi] =
			\omega[\omega_1/\psi]~t[\omega_1/\psi]$ and the induction hypothesis
			concludes.
		\item $\Psi, \psi \colon T_1 ; \Delta_1, \Delta_2 \entail \letv{p}{t_1}{t_2} \colon B$.
			With the induction hypothesis, similar to the previous point.
		\item $\Psi, \psi \colon T_1 \entailiso \isobreduit \colon A \iso B$.
			By induction hypothesis, given any iso $\omega$ present in $e_i$,
			$\omega$ is well-typed and $\omega[\omega_1/\psi]$ is also.
		\item $\Psi, \psi \colon T_1, \phi \colon T \entailiso \phi \colon T$. Direct.
		\item $\Psi, \psi \colon T_1 \entailiso \ffix \phi . \omega \colon T$. Note that $(\ffix
			\phi . \omega)[\omega_1/\psi] = \ffix \phi . (\omega[\omega_1/\psi])$,
			and by induction hypothesis $\omega[\omega_1/\psi]$ is well-typed.
		\item $\Psi, \psi \colon T_1 \entailiso \lambda \phi . \omega \colon T_2 \to T'_2$.
			Similar to the previous point.
		\item $\Psi, \psi \colon T_1 \entailiso \omega' \omega \colon T_2$.  Note that $(\omega'
			\omega)[\omega_1/\psi] = \omega' [\omega_1/\psi] \omega[\omega_1/\psi]$,
			and by induction hypothesis, $\omega'[\omega_1/\psi]$ and
			$\omega[\omega_1/\psi]$ are well-typed.
	\end{itemize}
\end{proof}

\begin{lemma}[Iso Subject Reduction]
	\label{lem:iso-subject-reduction}
	If $\Psi \entailiso \omega \colon T$ is well-formed and $\omega \to \omega'$,
	then $\Psi \entailiso \omega' \colon T$.
\end{lemma}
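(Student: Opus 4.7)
The plan is to proceed by induction on the derivation $\omega \to \omega'$, with a case analysis on the three reduction rules from Figure~\ref{fig:iso-operational}. The substitution lemma just established, Lemma~\ref{lem:iso-subst-type}, will do essentially all the work in the two base cases; the congruence rule handles itself with the inductive hypothesis. No auxiliary lemmas beyond what has already been proven should be needed.

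In the case $\ffix \phi . \omega_0 \to \omega_0[\ffix \phi . \omega_0 / \phi]$, I would invert the typing derivation of $\Psi \entailiso \ffix \phi . \omega_0 \colon T$, which must end with the fixpoint rule. This yields the premise $\Psi, \phi \colon T \entailiso \omega_0 \colon T$. Applying Lemma~\ref{lem:iso-subst-type} with $\omega_1 \defeq \ffix \phi . \omega_0$ (of type $T$, by the original hypothesis) and $\omega_2 \defeq \omega_0$ gives $\Psi \entailiso \omega_0[\ffix \phi . \omega_0 / \phi] \colon T$, as required.

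In the case $(\lambda \phi . \omega_1) \omega_2 \to \omega_1[\omega_2/\phi]$, inverting the typing of the left-hand side, the last rule must be the application rule, providing some $T_1$ together with $\Psi \entailiso \lambda \phi . \omega_1 \colon T_1 \to T$ and $\Psi \entailiso \omega_2 \colon T_1$. Inverting the abstraction typing gives $\Psi, \phi \colon T_1 \entailiso \omega_1 \colon T$. Applying Lemma~\ref{lem:iso-subst-type} to these two judgements yields the result $\Psi \entailiso \omega_1[\omega_2/\phi] \colon T$.

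Finally, in the congruence case $\omega_1 \omega_2 \to \omega'_1 \omega_2$ with $\omega_1 \to \omega'_1$, inverting the application typing rule gives $\Psi \entailiso \omega_1 \colon T_1 \to T$ and $\Psi \entailiso \omega_2 \colon T_1$ for some $T_1$. The induction hypothesis applied to the smaller reduction $\omega_1 \to \omega'_1$ delivers $\Psi \entailiso \omega'_1 \colon T_1 \to T$, and reapplying the application rule with $\omega_2$ concludes $\Psi \entailiso \omega'_1 \omega_2 \colon T$. There is no real obstacle here: the only non-trivial point is that the substitution lemma (Lemma~\ref{lem:iso-subst-type}) was stated and proved in the correct generality, allowing substitution of a closed-context-compatible iso for an iso-variable while preserving typing; since that has already been established, subject reduction falls out immediately.
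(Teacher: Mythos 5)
Your proposal is correct and follows exactly the paper's argument: induction on the reduction relation, inversion of the typing derivation in each case, with Lemma~\ref{lem:iso-subst-type} discharging the two substitution cases and the induction hypothesis handling the congruence case. The paper's own proof is just a terser version of the same reasoning.
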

\begin{proof}
	The proof is done by induction on $\to$.
	\begin{itemize}
		\item $\ffix \phi . \omega \to \omega[\ffix \phi . \omega/\phi]$.
			The iso $\ffix \phi . \omega$ is well-typed, thus $\omega$ is also,
			and the previous lemma concludes.
		\item $(\lambda \phi . \omega_1) \omega_2 \to \omega_1[\omega_2/\phi]$.
			For the application to be well-typed, we need both $\lambda \phi .
			\omega_1$ and $\omega_2$ to be well-typed. The former ensures that
			$\omega_1$ is well-typed, and the previous lemma concludes.
		\item $\omega_1 \omega_2 \to \omega'_1 \omega_2$. The induction
			hypothesis ensures that $\omega'_1$ is well-typed, and $\omega_2$
			is also because the application $\omega_1 \omega_2$ is.
	\end{itemize}
\end{proof}

\begin{lemma}[Iso Progress]
	\label{lem:iso-progress}
	If $~\entailiso \omega \colon T$ is well-formed, $\omega$ is either an
	iso value or there exists $\omega'$ such that $\omega \to \omega'$.
\end{lemma}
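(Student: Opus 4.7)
The plan is to proceed by structural induction on the derivation of the typing judgement $\emptyset \entailiso \omega \colon T$, and inspect which rule of Figure~\ref{fig:rev-typterm} concludes the derivation. The five possible cases split cleanly into (i) those where $\omega$ is already an iso value, (ii) the impossible variable case, and (iii) those requiring either a direct reduction rule or a use of the induction hypothesis together with a congruence step.

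First, when $\omega = \isobasique$ or $\omega = \lambda \psi . \omega_0$, the iso is already an iso value by definition and there is nothing more to do. The case $\omega = \psi$ is vacuous because the iso-context is empty. When $\omega = \ffix \psi . \omega_0$, the first rule of Figure~\ref{fig:iso-operational} directly provides the reduction $\ffix \psi . \omega_0 \to \omega_0[\ffix \psi . \omega_0/\psi]$.

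The only genuinely interesting case is the application $\omega = \omega_1 \omega_2$. The final typing rule must produce $\entailiso \omega_1 \colon T_1 \to T_2$ together with $\entailiso \omega_2 \colon T_1$. Applying the induction hypothesis to $\omega_1$ gives two subcases. If $\omega_1 \to \omega_1'$, then the congruence rule of Figure~\ref{fig:iso-operational} yields $\omega_1 \omega_2 \to \omega_1' \omega_2$. Otherwise, $\omega_1$ is an iso value; a short canonical forms argument (inversion on the last typing rule used for $\omega_1$) shows that an iso abstraction $\isobasique$ has type $A \iso B$ and hence cannot have the function type $T_1 \to T_2$, so $\omega_1$ must be of the form $\lambda \psi . \omega_0$, and the $\beta$-rule gives $\omega \to \omega_0[\omega_2/\psi]$.

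The only real obstacle is this canonical forms lemma at function type, but it is essentially immediate from the shape of the two value-producing typing rules and does not require any heavy machinery; no reliance on substitution or subject reduction (Lemmas~\ref{lem:iso-subst-type} and~\ref{lem:iso-subject-reduction}) is needed for progress itself.
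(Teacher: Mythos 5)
Your proof is correct and follows essentially the same route as the paper's: induction on the typing derivation, with the abstraction and $\lambda$ cases being values, the $\ffix$ case reducing directly, and the application case combining the induction hypothesis on the function position with a canonical-forms observation that an iso value of arrow type must be a $\lambda$-abstraction. The only cosmetic difference is that you make the vacuous variable case and the canonical-forms step explicit, which the paper leaves implicit.
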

\begin{proof}
	The proof is done by induction on $~\entailiso \omega \colon T$.
	\begin{itemize}
		\item $~\entailiso \isobreduit \colon A \iso B$ is an iso value.
		\item $~\entailiso \ffix \phi . \omega \colon T$ reduces.
		\item $~\entailiso \lambda \phi . \omega \colon T_1 \to T_2$ is an iso
			value.
		\item $~\entailiso \omega_2 \omega_1 \colon T_2$. By induction hypothesis,
			either $\omega_2$ is a value or it reduces. If it reduces, $\omega_2
			\omega_1$ reduces. If it is a value, it cannot be an iso abstraction:
			being applied to another iso, it must have a type $T_1 \to T_2$. Thus, it
			is of the form $\lambda \phi . \omega'_2$, and $(\lambda \phi .
			\omega'_2) \omega_1$ reduces.
	\end{itemize}
\end{proof}

\begin{corollary}
	\label{cor:iso-progress}
	If $~\entailiso \omega \colon A \iso B$ is well-formed, either there is some
	$\Delta_i \entail v_i \colon A$ and $\Delta_i \entail e_i \colon B$ such that
	$\omega = \isobasique$, or there exists $\omega'$ such that $\omega \to
	\omega'$.
\end{corollary}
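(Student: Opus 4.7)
The plan is to deduce the corollary directly from Lemma~\ref{lem:iso-progress} by a canonical-forms argument at the ground iso type $A \iso B$. Applying the lemma to $~\entailiso \omega \colon A \iso B$ yields two cases: either $\omega$ reduces, in which case the second disjunct of the corollary is satisfied immediately; or $\omega$ is an iso value, and it remains to show that it must be of the form $\isobasique$.

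By the definition of iso values, the remaining case is either $\omega = \isobasique$ or $\omega = \lambda \phi . \omega'$. I would rule out the $\lambda$ case by inversion on the iso typing rules in Figure~\ref{fig:rev-typterm}: the only rule whose conclusion has $\lambda \phi . \omega'$ in the subject forces the type to be of the form $T_1 \to T_2$. Since $A \iso B$ is syntactically disjoint from any $T_1 \to T_2$ in the grammar of iso types $T ::= A \iso B \alt T_1 \to T_2$, the $\lambda$ case is impossible and $\omega$ must be an iso abstraction $\isobasique$.

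Finally, the additional content of the corollary — the existence of contexts $\Delta_i$ and the typing judgements $\Delta_i \entail v_i \colon A$ and $\Delta_i \entail e_i \colon B$ — is obtained by one more inversion on the typing rule for iso abstractions, which directly supplies these premises from the derivation of $~\entailiso \isobasique \colon A \iso B$.

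No real obstacle is expected: the argument is a specialisation of the general iso progress lemma together with standard inversion on the (syntactically unambiguous) iso typing rules. The only subtle point is to be explicit that the type grammar keeps ground iso types $A \iso B$ separate from function types $T_1 \to T_2$, so that the canonical-forms reasoning that eliminates the $\lambda$ case is justified.
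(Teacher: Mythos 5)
Your proof is correct and matches the intended argument: the paper states this as an immediate corollary of Lemma~\ref{lem:iso-progress}, and the canonical-forms step you make explicit (ruling out $\lambda\phi.\omega'$ because its type must be $T_1 \to T_2$, which is syntactically disjoint from $A \iso B$, then inverting the iso-abstraction typing rule to recover the $\Delta_i \entail v_i \colon A$ and $\Delta_i \entail e_i \colon B$ premises) is exactly the reasoning the paper leaves implicit.
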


We move to an operational semantics for terms of the language, which
requires the introduction of congruence context. The operational semantics is detailed
in Figure~\ref{fig:rev-operational-semantics}.

\begin{figure}[!h]
	\begin{align*}
		C_\to[-] &::= &- \alt \inl C_\to[-] \alt \inr C_\to[-] \alt 
		C_\to[-] \otimes t \alt v \otimes C_\to[-] \\
		&& \alt \isobreduit~C_\to[-] \alt
		\fold C_\to[-] \\
		&& \alt \letv{p}{C_\to[-]}{t}. 
	\end{align*}
	\[
		\begin{array}{c}
			\infer{ \isobasique~v' \to \sigma(e_i)}{
				\match{\sigma}{v_i}{v'}}
			\qquad
			\infer{\letv{p}{v}{t} \to \sigma(t)}{\match{\sigma}{p}{v}}
			\\[1.5ex]
			\infer{C_\to[t_1] \to C_\to[t_2]}{t_1 \to t_2}
			\qquad
			\infer{\omega~t \to \omega'~t}{\omega \to \omega'}
		\end{array}
	\]
	\caption{Term Operational Semantics}
	\label{fig:rev-operational-semantics}
\end{figure}

\begin{lemma}[Subject Reduction]
	\label{lem:rev-subject-reduction}
	If $\Psi; \Delta \entail t \colon A$ is well-formed and $t \to t'$,
	then $\Psi; \Delta \entail t' \colon A$ is also well-formed.
\end{lemma}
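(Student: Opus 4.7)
The plan is a standard induction on the derivation of $t \to t'$, splitting into the two base reduction rules and the two congruence rules. The main technical ingredients I would rely on are the substitution lemma for terms (Lemma~\ref{lem:rev-term-subst}), iso subject reduction (Lemma~\ref{lem:iso-subject-reduction}), and an auxiliary ``matching-produces-a-well-formed-valuation'' lemma that I will need to spell out.

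First, I would handle the two base cases. For $\isobasique~v' \to \sigma(e_i)$, the well-formedness of the application forces $\Psi \entailiso \isobasique \colon A \iso B$ and $\Psi;\Delta \entail v' \colon A$; the typing rule for iso abstractions gives $\Psi;\Delta_i \entail e_i \colon B$. The missing piece is that the predicate $\match{\sigma}{v_i}{v'}$, together with $\Psi;\Delta_i \entail v_i \colon A$ and $\Psi;\Delta \entail v' \colon A$, implies $\Psi;\Delta \Vdash \sigma$ (proved by a routine induction on the matching rules, mimicking the paragraph following Definition~\ref{def:well-valuation}). Once this is in hand, Lemma~\ref{lem:rev-term-subst} gives $\Psi;\Delta \entail \sigma(e_i) \colon B$ as required. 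The $\letv{p}{v}{t} \to \sigma(t)$ case is entirely analogous: inversion of the let-typing rule yields $\Psi;\Delta_1 \entail v \colon A_1 \otimes \dots \otimes A_n$ and $\Psi;\Delta_2, p \colon A_1 \otimes \dots \otimes A_n \entail t \colon B$, the matching lemma gives $\Psi;\Delta_1 \Vdash \sigma$ on the appropriate variables, and Lemma~\ref{lem:rev-term-subst} concludes.

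Next I would treat the two congruence rules. For $\omega~t \to \omega'~t$, inversion of the iso-application rule gives $\Psi \entailiso \omega \colon A \iso B$ and $\Psi;\Delta \entail t \colon A$, then Lemma~\ref{lem:iso-subject-reduction} yields $\Psi \entailiso \omega' \colon A \iso B$, and the same typing rule rebuilds $\Psi;\Delta \entail \omega'~t \colon B$. For $C_\to[t_1] \to C_\to[t_2]$, I would proceed by induction on the structure of $C_\to[-]$: in each case, inversion of the corresponding typing rule exposes a typing judgement $\Psi;\Delta' \entail t_1 \colon A'$ for some subcontext and subtype, the inductive hypothesis on $t_1 \to t_2$ delivers $\Psi;\Delta' \entail t_2 \colon A'$, and reapplying the same typing rule reassembles $\Psi;\Delta \entail C_\to[t_2] \colon A$.

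The main obstacle will be the auxiliary lemma about matching producing well-formed valuations, since it tacitly relies on the linearity discipline of the context $\Delta$ (no variable repeats) and on the fact that closed well-typed values decompose along the structure of the pattern. I expect this to be manageable but it must be proved carefully so that the $\text{supp}(\sigma_1)\cap\text{supp}(\sigma_2) = \emptyset$ side-condition in the matching rule for tensors lines up with the context splitting $\Delta_1,\Delta_2$ in the typing rule for pairs; a small induction on the matching judgement, with the value-typing derivation in parallel, should suffice.
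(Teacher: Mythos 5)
Your proposal is correct and follows essentially the same route as the paper, whose proof is only a sketch: induction on $\to$, invoking Lemma~\ref{lem:iso-subject-reduction}, Lemma~\ref{lem:rev-term-subst}, and the observation that well-typedness is preserved under the congruence contexts $C_\to[-]$. The one point you add that the paper leaves implicit is the auxiliary fact that $\match{\sigma}{v_i}{v'}$ on well-typed values yields a well-formed valuation $\Delta_i \Vdash \sigma$; you are right that this must be checked (the analogous remark appears after Definition~\ref{def:well-valuation} in the previous chapter), and your sketch of how to prove it is sound.
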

\begin{proof}
	The proof is done by induction on $\to$. It revolves around three
	quick observations: Lemma~\ref{lem:iso-subject-reduction},
	Lemma~\ref{lem:rev-term-subst}, and that if $t_2$ and 
	$C_\to[t_1]$ are well-typed, then $C_\to[t_2]$ also is.
\end{proof}

As usual we write $\to^*$ for the reflexive transitive closure of $\to$.  In
particular, the rewriting system follow a \emph{call-by-value} strategy,
requiring that the argument of an iso is fully evaluated to a value before
firing the substitution. Note that unlike~\cite{chardonnet2023curry,
sabry2018symmetric}, we do not require any form of termination and isos are not
required to be exhaustive: the rewriting system can diverge or be stuck.  The
evaluation of an iso applied to a value is dealt with by pattern-matching: the
input value will try to match one of the value from the clauses and potentially
create a substitution if the two values match, giving the corresponding
expression as an output under that substitution.

\begin{example}
	\label{ex:non-termination}
	Observe that $~\entailiso \ffix \phi . \phi \colon A \iso B$ is well-formed
	judgement.  Given any closed term judgement $~\entail t \colon A$, the
	judgement $~\entail (\ffix \phi . \phi)~t \colon B$ is also well-formed, and:
	\begin{align*}
		(\ffix \phi . \phi)~t 
		&\to (\phi[\ffix \phi . \phi / \phi])~t = (\ffix \phi . \phi)~t \\  
		&\to (\phi[\ffix \phi . \phi / \phi])~t = (\ffix \phi . \phi)~t \\  
		&\to (\phi[\ffix \phi . \phi / \phi])~t = (\ffix \phi . \phi)~t \\  
		&\to \cdots
	\end{align*}
	does not terminate. A slightly more subtle instance of non-termination is
	the term $~\entail (\ffix \phi . \set{\mid x \iso \letv{y}{\phi~x}{y}}~v \colon B$,
	which reduces as follows:
	\begin{align*}
		&\ (\ffix \phi . \set{\mid x \iso \letv{y}{\phi~x}{y}})~v \\
		&\to (\set{\mid x \iso \letv{y}{\phi~x}{y}}[\ffix \phi . \set{\mid x \iso
		\letv{y}{\phi~x}{y}} / \phi])~v \\
		&= \set{\mid x \iso \letv{y}{(\ffix \phi .
		\set{\mid x \iso \letv{y}{\phi~x}{y}})~x}{y}}~v \\
		&\to \letv{y}{(\ffix \phi . \set{\mid x \iso \letv{y}{\phi~x}{y}})~v}{y} \\
		&\to \cdots
	\end{align*}
	and it does not terminate.
\end{example}

\begin{example}
	\label{ex:non-reduction}
	Reductions can get stuck, because there is no pattern to match with.  For
	example, $~\entailiso \set{\mid \inr * \iso \inl *} \colon \one \oplus \one \iso \one
	\oplus \one$ is a well-typed iso abstraction. The term $~\entail \set{\mid \inr
	* \iso \inl *}~(\inl *) \colon \one \oplus \one$ does not reduce.
\end{example}

\begin{example}
  \label{ex:isos-map}
  Remember that $[A] = \mu X. \one\oplus (A\otimes X)$.  One can
  define the \emph{map} operator on list with an iso of type
  $(A\iso B)\to [A]\iso [B]$, defined as
  \[
    \lambda \isolambdavar. \ffix \isovar. \left\{
      \begin{array}{l@{~}c@{~}l}
        [~] & {\iso} & [~] \\
        h :: t & {\iso} & \letv{h'}{\isolambdavar~h}{} \letv{t'}{\isovar~t}{}h' :: t'
      \end{array}
    \right\},
  \]
  with the terms $[~] = \fold{(\inl{(*)})}$, representing the empty list, and $h::t =
  \fold{(\inr{(\pair{h}{t})})}$, representing the head and tail of the list. Its inverse
  $\opn{map}^{-1}$ is
  \[
    \lambda\isolambdavar. \ffix \isovar. \left\{
      \begin{array}{l@{~}c@{~}l}
        [~] & {\iso} & [~] \\
        h' :: t' & {\iso} & \letv{t}{\isovar~t'}{} \letv{h}{\isolambdavar h'}{} h :: t
      \end{array}
    \right\}.
\]

Note that in the latter, the variable $\isolambdavar$ has type $B\iso A$.  If
	we consider the inverse of the term $(\opn{map}~\omega)$ we would obtain the
	term $(\opn{map}^{-1}~\omega^{-1})$ where $\omega^{-1}$ would be of type
	$B\iso A$.
\end{example}

\begin{example}[Cantor Pairing]
  \label{ex:cantor}
  One can encode the Cantor Pairing between $\natS \otimes \natS \iso \natS$.
  First recall that the type of natural number $\natT$ is given by $\mu X.
  \one\oplus X$, then define $\overline{n}$ as the encoding of natural numbers
  into a closed value of type $\natT$ as $\overline{0} = \fold{(\inl{*})}$ and
  given a variable $x$ of type $\natT$, its successor is $\overline{S(x)} =
  \fold{(\inr{(x)})}$.  Omitting the $\overline{\ \cdot\ }$ operator for
  readability, the pairing is then defined as:

	\[\begin{array}{ll} \omega_1 : \natT \otimes \natT \iso (\natT \otimes \natT)
		\oplus \one  \\
		= \left\{\begin{array}{lcl}
			\pair{S(i)}{j} & {\iso} & \inl{(\pair{i}{S(j)})} \\
			\pair{0}{S(j)} & {\iso} & \inl{(\pair{j}{0})} \\
		\pair{0}{0} & {\iso} & \inr{(*)} \end{array}\right\},
	\end{array}
	\]
	\[
		\begin{array}{ll} 
			\omega_2 : (\natT \otimes \natT) \oplus \one \iso \natT  \\
			= \left\{\begin{array}{lcl}
				\inl{(x)} & {\iso} & \letv{y}{\isovar~x}{S(y)} \\
				\inr{(*)} & {\iso} & 0
			\end{array}\right\},
		\end{array}\]
	\[\begin{array}{ll}
		\opn{CantorPairing} : \natT\otimes \natT \iso \natT \\
		= \ffix \isovar. \left\{\begin{array}{lcl}
			x & {\iso} & \letv{y}{\omega_1~x}{} \\
			&& \letv{z}{\omega_2~y}{z}
		\end{array}\right\},
	\end{array}\]

  where the variable $\isovar$ in $\omega_2$ is the one being binded
  by the $\ffix\!$ of the $\opn{CantorPairing}$ iso. Intuitively,
  $\omega_1$ realise one step of the Cantor Pairing evaluation, while
  $\omega_2$ check if we reached the end of the computation and either
  apply a recursive call, or stop.
  
	For instance, $\opn{CantorPairing}~\pair{1}{1}$ will match with the first
	clause of $\omega_1$, evaluating into $\inl{\pair{0}{2}}$, and then, inside
	$\omega_2$ the reduction $\opn{CantorPairing}~\pair{0}{2}$ will be triggered
	through the recursive call, evaluating the second clause of $\omega_1$,
	reducing to $\inl{\pair{1}{0}}$, etc.
\end{example}

\section{Denotational semantics}
\label{sec:rev-detonational}

We now show how to build a denotational semantics for the language we presented
thus far. The semantics is akin to the one presented in~\cite{nous2021invcat} but
with extra structure to handle inductive types and recursive functions. The
section is organised as follows.
\begin{itemize}
	\item We first fix the interpretation of types. This requires us to make type
		judgements explicit in the formalisation of the syntax. The semantics of
		type judgement is then given, thanks to the work mentioned in
		\secref{sub:inductive-types}. We then discuss the interpretation of closed
		types, which are the types actually involved in the syntax.
	\item We detail the interpretation of term judgements, that are given
		as Scott continuous maps between the interpretation of the linear context,
		and the dcpo of reversible programs at a certain type.
	\item An interpretation of iso judgements is given, also as Scott continuous
		maps. The development is similar to the denotational semantics of a
		simply-typed $\lambda$-calculus as given in \secref{sub:ccc}, however our
		iso abstractions need more care.
	\item We finish with an interpretation of substitutions, allowing to fromulate
		a soundness and adequacy statement later on.
\end{itemize}

In the whole section, we consider $\CC$ a join inverse rig category (see
Definition~\ref{def:rig}), that is $\dcpo$-enriched (see
Definition~\ref{def:enriched-cat} and \secref{sub:dcpo}) and such that $0$ and
$1$ are distinct objects. 

\subsection{Denotational Semantics of Types}
\label{sub:rev-types}

\paragraph{Term types.}
As explained in the background section (see \secref{sec:restriction}), we can
assume without loss of generality that $\CC$ satisfies the hypothesis of
Theorem~\ref{th:param-alg}. In order to deal with open types, we make use of an
auxiliary judgement for types, of the form $X_1,\ldots,X_n\vDash A$, where
$\{X_i\}_i$ is a subset of the free type variables, non necessarily appearing
in $A$. We interpret this kind of judgement as a $\DCPO$-functor
$\CC^{\abs\Theta}\to\CC$ written $\sem{\Theta\vDash A}$. This is formally
defined as an inductive relation, and the semantics is stated similarly to what
is done in \cite{fiore04axiomatic, zamdzhiev2021linear,
zamdzhiev2021commutative}.

\begin{equation}
	\label{eq:rev-type-formation}
	\begin{array}{c}
		\infer{\Theta, X \vDash X}{}
		\qquad
		\infer{\Theta \vDash \one}{}
		\qquad
		\infer[\star \in \set{\oplus,\otimes}]{\Theta \vDash A \star B}{
			\Theta \vDash A
			&
			\Theta \vDash B
		}
		\qquad
		\infer{\Theta \vDash \mu X . A}{\Theta,X \vDash A}
	\end{array}
\end{equation}

The type judgements are inductively defined as given above. The interpretation
of types is detailed in Figure~\ref{fig:rev-type-interpretation}, where $\Pi
\colon \CC^{\abs\Theta} \to \CC$ is the projection functor on the last
component, $K_1 \colon \CC^{\abs\Theta} \to \CC$ is the constant functor that
outputs $1$ and the $\iid$, $\otimes$ and $\oplus$ are given by the rig
structure, and $(-)\nnoma$ is part of the parameterised initial algebra (see
Definition~\ref{def:para-initial-algebra}), that exists thanks to
Theorem~\ref{th:param-alg} (see details in \secref{sub:inductive-types}). 

\begin{figure}[!h]
  \begin{align*}
    \sem{\Theta \vDash A} &\colon \CC^{\abs\Theta}\to\CC \\
		\sem{\Theta, X \vDash X} &= \Pi \\
    \sem{\Theta \vDash \one} &= K_1 \\
		\sem{\Theta\vDash A\oplus B} &= \oplus\circ\pv{\sem{\Theta\vDash
		A}}{\sem{\Theta\vDash B}} \\
		\sem{\Theta\vDash A\otimes B} &= \otimes\circ\pv{\sem{\Theta\vDash
		A}}{\sem{\Theta\vDash B}} \\
		\sem{\Theta\vDash \mu X.A} &= \left(\sem{\Theta,X\vDash A}\right)^\noma
  \end{align*}
  \caption{Interpretation of types.}
  \label{fig:rev-type-interpretation}
\end{figure}

We remind the typing rule of the term contructor $\fold\!$.
\[
	\infer{\Psi ; \Delta \entail \fold t \colon \mu X . A}{\Psi ; \Delta \entail
	t \colon A[\mu X . A/X]}
\]
This typing rule involves a type substitution. We show in the next lemma
that type substitutions make sense, and we also provide a result about there
interpretation.

\begin{lemma}[Type Substitution]
	\label{lem:rev-type-subst}
	Given well-formed type judgements $\Theta, X \vDash A$ and $\Theta \vDash B$,
	the judgement $\Theta \vDash A[B/X]$ is well-formed and 
	\[
		\sem{\Theta \vDash A[B/X]} = \sem{\Theta, X \vDash A} \circ
		\pv{\iid}{\sem{\Theta \vDash B}}.
	\]
\end{lemma}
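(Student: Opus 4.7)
The plan is to prove both claims simultaneously by induction on the derivation of $\Theta, X \vDash A$, using the type formation rules from~\eqref{eq:rev-type-formation}. The well-formedness of $\Theta \vDash A[B/X]$ is the easy half and will follow at each step by applying the appropriate formation rule to the substituted subterm; the semantic equation is the substantive content.

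First I would handle the base cases. If $A = X$, then $A[B/X] = B$ and $\sem{\Theta, X \vDash X} = \Pi$ is the projection onto the last component, so composing with $\pv{\iid}{\sem{\Theta \vDash B}}$ yields $\sem{\Theta \vDash B}$, as desired. If $A = Y \in \Theta$, then $A[B/X] = Y$ and $\sem{\Theta, X \vDash Y}$ is a projection onto a coordinate in $\Theta$, which is unaffected by composition with $\pv{\iid}{\sem{\Theta \vDash B}}$. The case $A = \one$ is immediate because $K_1$ is constant. The cases $A = A_1 \oplus A_2$ and $A = A_1 \otimes A_2$ follow from the induction hypothesis together with the universal property of the pairing $\pv{-}{-}$, since both $\oplus$ and $\otimes$ are bifunctors and $\sem{\Theta \vDash A_i[B/X]} = \sem{\Theta, X \vDash A_i} \circ \pv{\iid}{\sem{\Theta \vDash B}}$ by IH.

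The hard case is $A = \mu Y . A'$, and by the usual convention we may assume $Y \notin \Theta$ and $Y \neq X$. Here $A[B/X] = \mu Y . A'[B/X]$, so on the right-hand side we have $\sem{\Theta \vDash \mu Y. A'[B/X]} = (\sem{\Theta, Y \vDash A'[B/X]})\nnoma$, while on the left we have $(\sem{\Theta, X, Y \vDash A'})\nnoma \circ \pv{\iid}{\sem{\Theta \vDash B}}$. The main obstacle is proving that these two agree. The key is a \emph{fusion lemma} for parameterised initial algebras: given a functor $F \colon \CC \times \DD \to \DD$ with parameterised initial algebra $(F\nnoma, \phi^F)$ and a functor $G \colon \CC' \to \CC$, the composite $F\nnoma \circ G$ together with the reindexed structure map forms a parameterised initial algebra for $F \circ (G \times \iid_\DD) \colon \CC' \times \DD \to \DD$. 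This holds because, for each object $X' \in \CC'$, the pair $(F\nnoma(G X'), \phi^F_{GX'})$ is by definition an initial $F(GX', -)$-algebra, and initiality at each parameter together with naturality transports to the full parameterised structure; uniqueness of parameterised initial algebras then forces the equality $(F \circ (G \times \iid_\DD))\nnoma = F\nnoma \circ G$.

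Applying this fusion lemma with $F = \sem{\Theta, X, Y \vDash A'}$ (the last variable $Y$ being the one folded) and $G = \pv{\iid}{\sem{\Theta \vDash B}} \colon \CC^{\abs\Theta} \to \CC^{\abs\Theta} \times \CC$, we obtain
\[
(\sem{\Theta, X, Y \vDash A'})\nnoma \circ \pv{\iid}{\sem{\Theta \vDash B}} = \bigl(\sem{\Theta, X, Y \vDash A'} \circ (\pv{\iid}{\sem{\Theta \vDash B}} \times \iid_\CC)\bigr)\nnoma.
\]
It remains to identify the functor inside the outer $(-)\nnoma$ on the right with $\sem{\Theta, Y \vDash A'[B/X]}$. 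This follows from the induction hypothesis applied to $A'$, suitably weakened so that $B$ is seen as a functor $\CC^{\abs\Theta} \times \CC \to \CC$ constant in the extra $Y$-parameter; concretely, the interpretation of $B$ in the extended context $\Theta, Y$ is $\sem{\Theta \vDash B} \circ \pi_\Theta$, and the bookkeeping of projections and pairings combines exactly into $(\pv{\iid}{\sem{\Theta \vDash B}} \times \iid_\CC)$. Applying $(-)\nnoma$ to both sides then yields the desired equation, closing the induction.
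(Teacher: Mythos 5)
Your proof is correct and follows essentially the same route as the paper: induction on the type formation rules, with the pairing/bifunctor argument for $\oplus$ and $\otimes$ and the identity $(\sem{\Theta, X, Y \vDash A'})\nnoma \circ \pv{\iid}{\sem{\Theta \vDash B}} = \bigl(\sem{\Theta, X, Y \vDash A'} \circ (\pv{\iid}{\sem{\Theta \vDash B}} \times \iid)\bigr)\nnoma$ as the crux of the $\mu$ case. The only difference is that you sketch a proof of this fusion property inline, whereas the paper cites it from the literature (Prop.~4.14 of the reference on linear/non-linear recursive types).
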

\begin{proof}
	The proof that $\Theta \vDash A[B/X]$ is well-formed is direct by
	induction on the formation rules. The semantic equality is also
	proven by induction on the formation rules of $\Theta, X \vDash A$.
	\begin{itemize}
		\item $\Theta \vDash \one$. Nothing to do.
		\item $\Theta, X \vDash X$. Indeed,
			$\sem{\Theta \vDash B} = \Pi \circ \pv{\iid}{\sem{\Theta \vDash B}}$.
		\item $\Theta, X \vDash A_1\star A_2$.
			\begin{align*}
				&\ \sem{\Theta \vDash (A_1\star A_2)[B/X]} & \\
				&= \sem{\Theta \vDash A_1[B/X]\star A_2[B/X]} & \\
				&= \star \circ \pv{\sem{\Theta \vDash A_1[B/X]}}{\sem{\Theta \vDash
				A_2[B/X]}} 
				& \text{(by definition)} \\
				&= \star \circ \pv{\sem{\Theta, X \vDash A_1} \circ
				\pv{\iid}{\sem{\Theta \vDash B}}}{\sem{\Theta, X \vDash A_2} \circ
				\pv{\iid}{\sem{\Theta \vDash B}}}
				& \text{(by IH)} \\
				&= \star \circ \pv{\sem{\Theta, X \vDash A_1}}{\sem{\Theta, X \vDash
				A_1}} \circ \pv{\iid}{\sem{\Theta \vDash B}}
				& \text{(by unicity)} \\
				&= \sem{\Theta, X \vDash A_1\star A_2} \circ \pv{\iid}{\sem{\Theta \vDash
				B}}
				& \text{(by definition)} 
			\end{align*}
		\item $\Theta, X \vDash \mu Y . A$.
			\begin{align*}
				&\ \sem{\Theta \vDash (\mu Y . A)[B/X]} & \\
				&= \sem{\Theta \vDash \mu Y . A[B/X]} & \\
				&= (\sem{\Theta, Y \vDash A[B/X]})\nnoma 
				& \text{(by definition)} \\
				&= (\sem{\Theta, Y, X \vDash A} \circ \pv{\iid}{\sem{\Theta, Y \vDash
				B}})\nnoma
				& \text{(by IH)} \\
 				&= (\sem{\Theta, X, Y \vDash A} \circ (\pv{\iid}{\sem{\Theta \vDash
 				B}} \times \iid))\nnoma
 				& \text{(} Y \text{ is not in } B \text{)} \\
 				&= (\sem{\Theta, X, Y \vDash A})\nnoma \circ \pv{\iid}{\sem{\Theta
 				\vDash B}}
 				& \text{(see \cite[Prop.~4.14]{zamdzhiev2021linear})} \\
 				&= \sem{\Theta, X \vDash \mu Y . A} \circ \pv{\iid}{\sem{\Theta
 				\vDash B}}
 				& \text{(by definition)} 
			\end{align*}
	\end{itemize}
\end{proof}

The previous lemma embodies the link between the fixed point contructor $\mu$
and the parameterised initial algebra, with the following observation:
\[
	\sem{\Theta \vDash A[\mu X . A / X]} = 
	\sem{\Theta, X \vDash A} \circ \pv{\iid}{\sem{\Theta \vDash B}}
	\tilde{\Rightarrow} \sem{\Theta \vDash \mu X . A}
\]
and thus there is a natural isomorphism:
\[ \alpha^{\sem{\Theta \vDash A}} \colon \sem{\Theta, X \vDash A[\mu X .
A/X]} \cong \sem{\Theta \vDash \mu X . A}. \]

In the syntax of this chapter, only closed types are involved in the typing
rules of terms. We sum up the semantics of closed types below.
\begin{equation}
	\label{eq:sem-closed-types}
	\begin{array}{c}
		\den\one = 1
		\qquad
		\den{A\oplus B} = \den A \oplus \den B
		\qquad
		\den{A\otimes B} = \den A \otimes \den B 
		\mynl
		\sem{\mu X.A} \cong \sem{A[\mu X.A/X]}
	\end{array}
\end{equation}

\paragraph{Iso types.}
The basic types of isos are represented by pointed dcpos of morphisms in $\CC$,
written $\den{A\iso B} = \CC(\den A, \den B)$. The rest is given by the usual
type interpretation of a simply-typed $\lambda$-calculus in the cartesian
closed category $\dcpo$ (see \secref{sub:ccc} for the details).

The terms used to build isos are dependent in two contexts: variables in
$\Delta$ and isos in $\Psi$. In general, if $\Delta = x_1 \colon A_1, \dots,
x_m \colon A_m$ and $\Psi = \phi_1 \colon B_1\iso C_1, \dots, \phi_n \colon B_n\iso
C_n$, then we set
  $\den\Delta = \den{A_1}\otimes\dots\otimes\den{A_m}$ and
  $\den\Psi = \CC(\den{B_1},\den{C_1})\times\dots\times\CC(\den{B_n},\den{C_n})$,
with $\otimes$ being the monoidal product in $\CC$ and $\times$ the cartesian
product in $\DCPO$.

\subsection{Denotational Semantics of Terms}
\label{sub:rev-terms}

A well-formed term judgement $\Psi ; \Delta \entail t \colon A$ is given a
semantics 
\[\sem{\Psi ; \Delta \entail t \colon A} \colon \sem\Psi \to
\CC(\sem\Delta, \sem A) \] 
as a Scott continuous map between two dcpos -- in other words, as a morphism in
$\dcpo$. Values do not contain iso variables, thus given a judgement $\Psi ;
\Delta \entail v \colon A$ with $v$ a value, $\sem{\Psi ; \Delta \entail v
\colon A}$ is a constant function, whose output is a morphism $\sem\Delta \to
\sem A$ in $\CC$. The interpretation of values, and therefore of the
corresponding terms, is as follows, for all $g \in \sem\Psi$:
\begin{align*}
	\sem{\Psi ; \Delta \entail t \colon A}(g) &\in \CC(\sem\Delta, \sem A) \\
	\sem{\Psi ; \emptyset \entail * \colon \one}(g) 
	&= \iid_{\sem \one} \\
	\sem{\Psi ; x \colon A \entail x \colon A}(g)
	&= \iid_{\sem A} \\
	\sem{\Psi ; \Delta \entail \inl t \colon A \oplus B}(g)
	&= \iota_l \circ \sem{\Psi ; \Delta \entail t \colon A}(g) \\
	\sem{\Psi ; \Delta \entail \inr t \colon A \oplus B}(g)
	&= \iota_r \circ \sem{\Psi ; \Delta \entail t \colon B}(g) \\
	\sem{\Psi ; \Delta_1, \Delta_2 \entail t_1 \otimes t_2 \colon A \otimes B}(g)
	&= \sem{\Psi ; \Delta_1 \entail t_1 \colon A}(g) \otimes 
	\sem{\Psi ; \Delta_2 \entail t_2 \colon B}(g) \\
	\sem{\Psi ; \Delta \entail \fold t \colon \mu X . A}(g)
	&= \alpha^{\sem{X \vDash A}} \circ \sem{\Psi ; \Delta \entail t \colon A[\mu
	X . A / X]}(g)
\end{align*}

\begin{lemma}
	\label{lem:rev-orthogonal-semantics-values}
	Given two judgements $\Psi ; \Delta_1 \entail v_1 \colon A$ and
	$\Psi ; \Delta_2 \entail v_2 \colon A$, such that $v_1~\bot~v_2$, we have
	that for all $g \in \sem\Psi$:
	\[
		\sem{v_1}(g)\pinv \circ \sem{v_2}(g) = 0_{\sem{\Delta_2},\sem{\Delta_1}}.
	\]
\end{lemma}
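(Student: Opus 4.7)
The plan is to proceed by induction on the derivation of $v_1 \bot v_2$ using Definition~\ref{def:rev-orthogonality}. Throughout, I will use the fact that on a restriction bifunctor in an inverse category the partial inverse distributes over the operation (so $(f \otimes g)\pinv = f\pinv \otimes g\pinv$ and $(f \oplus g)\pinv = f\pinv \oplus g\pinv$), and that for any total morphism $h$ one has $h\pinv \circ h = \iid$. These facts apply to $\iota_l$, $\iota_r$ (total by Definition~\ref{def:dis-tensor}) and to the structural iso $\alpha^{\sem{X \vDash A}}$ (which is an isomorphism). Fix $g \in \sem{\Psi}$ throughout; for readability I write $\sem{v}$ for $\sem{v}(g)$.

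For the base case $\inl v \bot \inr v'$, unfolding the semantics gives
\[
\sem{\inl v}\pinv \circ \sem{\inr v'} = \sem{v}\pinv \circ \iota_l\pinv \circ \iota_r \circ \sem{v'},
\]
and Lemma~\ref{lem:orthogonal-injections} yields $\iota_l\pinv \circ \iota_r = 0$, so the composite vanishes. The symmetric case is identical.

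The inductive cases follow the same pattern: peel off the outer context, use a functoriality/restriction argument to reduce the composite to one of the form $\sem{v_1'}\pinv \circ \sem{v_2'}$ with $v_1' \bot v_2'$, and apply the induction hypothesis. Explicitly, for $\inl v_1 \bot \inl v_2$ (and symmetrically $\inr$), one gets $\sem{v_1}\pinv \circ \iota_l\pinv \circ \iota_l \circ \sem{v_2} = \sem{v_1}\pinv \circ \res{\iota_l} \circ \sem{v_2} = \sem{v_1}\pinv \circ \sem{v_2} = 0$ by totality of $\iota_l$ and the IH. For the tensor cases $v_1 \otimes v \bot v_2 \otimes v$ (and symmetrically $v \otimes v_1 \bot v \otimes v_2$), using that $\otimes$ is a restriction bifunctor,
\[
(\sem{v_1} \otimes \sem{v})\pinv \circ (\sem{v_2} \otimes \sem{v}) = (\sem{v_1}\pinv \circ \sem{v_2}) \otimes (\sem{v}\pinv \circ \sem{v}) = 0 \otimes \res{\sem{v}} = 0,
\]
where the last equality uses that the zero morphism is absorbing under $\otimes$ (Corollary~\ref{cor:dcpofun} and Remark~\ref{rem:zero}). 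For $\fold v_1 \bot \fold v_2$, the outer $\alpha^{\sem{X \vDash A}}$ is an iso, hence $\alpha\pinv \circ \alpha = \iid$ and the composite reduces to $\sem{v_1}\pinv \circ \sem{v_2} = 0$ by IH.

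The main obstacle is, strictly speaking, bookkeeping: verifying that $(-)\pinv$ interacts correctly with $\otimes$, $\oplus$ and the structural iso $\alpha$, and that totality of the injections gives the expected cancellations. None of these steps is deep — they all follow from the axioms of a join inverse rig category in Section~\ref{sec:restriction} — but they must be invoked cleanly so that the induction goes through in each of the six clauses generated by the contexts $C_\bot$ listed in Definition~\ref{def:rev-orthogonality}.
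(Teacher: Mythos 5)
Your proof is correct and follows essentially the same route as the paper's: induction on the derivation of $\bot$, with the base case discharged by Lemma~\ref{lem:orthogonal-injections} and the congruence cases reduced to the induction hypothesis via the functoriality of $\otimes$, the totality of the injections, and the invertibility of the fold isomorphism. The paper states this in three lines; your version just spells out the bookkeeping.
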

\begin{proof}
	This is proven by induction on the definition of $\bot$. The cases $\inl
	v_1~\bot~\inr v_2$ and $\inr v_1~\bot~\inl v_2$ are covered by
	Lemma~\ref{lem:orthogonal-injections}. the other cases involve
	precompositions and tensor products, the result is direct with the induction
	hypothesis.
\end{proof}

Once we have fixed the denotation of the easiest terms, we can cover the
difficult part. Throughout the rest of the section, the semantics of a
well-formed term judgement $\sem{\Psi ; \Delta \entail t \colon A}$ is
obviously a map between sets, and the interesting part is proving that
it is indeed a Scott continuous map between two dcpos. The interpretation
of the remaining terms is given below.
\begin{align*}
		\sem{\Psi ; \Delta \entail t \colon A}
		&\in \dcpo(\sem\Psi,\CC(\sem\Delta,\sem A)) \\
		\sem{\Psi ; \Delta_1,\Delta_2 \entail \letv{p}{t_1}{t_2} \colon B}
		&= \comp~\circ \\
		&\pv{\sem{\Psi ; \Delta_2, p \colon A \entail t_2 \colon
		B}}{(\otimes \circ \pv{\iid_{\sem{\Delta_2}}}{\sem{\Psi ; \Delta_1
		\entail t_1 \colon A}})} \\
		\sem{\Psi ; \Delta \entail \omega~t \colon B}
		&= \comp \circ \pv{\sem{\Psi \entailiso \omega \colon A \iso B}}{\sem{\Psi ;
		\Delta \entail t \colon A}}
\end{align*}
All this is well-defined in $\DCPO$ provided that $\sem{\Psi \entailiso \omega
\colon A \iso B}$ is. This last point is the focus of the next subsection. Note
that the interpretation on terms and iso is thus defined by mutual induction on
the term and iso judgements. This does not cause any difficulty.

Before moving to the denotational semantics of isos, we prove a lemma of
central importance, extending Lemma~\ref{lem:rev-orthogonal-semantics-values},
and showing that the interpretations of two orthogonal expressions are also
orthogonal, in the sense of Remark~\ref{rem:loose-inner-product}.

\begin{lemma}
	\label{lem:rev-orthogonal-semantics}
	Given two judgements $\Psi ; \Delta_1 \entail e_1 \colon A$ and
	$\Psi ; \Delta_2 \entail e_2 \colon A$, such that $e_1~\bot~e_2$, we have
	that for all $g \in \sem\Psi$:
	\[
		\sem{e_1}(g)\pinv \circ \sem{e_2}(g) = 0_{\sem{\Delta_2},\sem{\Delta_1}}.
	\]
\end{lemma}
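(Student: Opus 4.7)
The plan is to proceed by induction on the derivation of $e_1~\bot~e_2$, mirroring the structure of the proof of Lemma~\ref{lem:rev-orthogonal-semantics-values} but handling the additional expression formers. Since the conclusion must hold for every $g \in \sem\Psi$, I will fix such a $g$ throughout and suppress it from the notation when unambiguous, working pointwise with morphisms in $\CC$. The base cases $\inl t_1 \bot \inr t_2$ and $\inr t_1 \bot \inl t_2$ reduce, after unfolding the semantics, to the identity $\iota_l\pinv \circ \iota_r = 0$ (and its symmetric counterpart), which is exactly Lemma~\ref{lem:orthogonal-injections}.

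For the inductive cases, I would step through each congruence context $C_\bot[-]$ in the grammar. The cases $\inl C_\bot[-]$ and $\inr C_\bot[-]$ exploit the fact that the injections $\iota_l, \iota_r$ are total: from $\sem{\inl e_1}\pinv \circ \sem{\inl e_2} = \sem{e_1}\pinv \circ \iota_l\pinv \circ \iota_l \circ \sem{e_2} = \sem{e_1}\pinv \circ \res{\iota_l} \circ \sem{e_2} = \sem{e_1}\pinv \circ \sem{e_2}$, the induction hypothesis gives zero. The $\fold C_\bot[-]$ case is analogous, since $\alpha^{\sem{X \vDash A}}$ is an isomorphism and therefore $\alpha\pinv \circ \alpha = \iid$. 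The two tensor cases $C_\bot[-] \otimes t$ and $t \otimes C_\bot[-]$ use bifunctoriality of $\otimes$ together with the contravariance of $(-)\pinv$ on it: for instance,
\[
(\sem{e_1} \otimes \sem{t})\pinv \circ (\sem{e_2} \otimes \sem{t}) = (\sem{e_1}\pinv \circ \sem{e_2}) \otimes (\sem{t}\pinv \circ \sem{t}),
\]
which is zero by the induction hypothesis together with the absorbing behaviour of the zero morphism with respect to $\otimes$ in a join inverse rig category.

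The only genuinely new case, compared to the value version, is $\letv{p}{t}{C_\bot[-]}$. Unfolding the denotation of $\letv{p}{t_1}{t_2}$ given just above the statement, I get $\sem{\letv{p}{t}{e_i}}(g) = \sem{e_i}(g) \circ (\iid_{\sem{\Delta_2}} \otimes \sem{t}(g))$ (modulo the implicit associator). Contravariance of the partial inverse then yields
\[
\sem{\letv{p}{t}{e_1}}\pinv \circ \sem{\letv{p}{t}{e_2}} = (\iid \otimes \sem{t})\pinv \circ \bigl(\sem{e_1}\pinv \circ \sem{e_2}\bigr) \circ (\iid \otimes \sem{t}),
\]
and the induction hypothesis collapses the middle factor to $0$, whence the whole composite is $0$ by the absorbing property of the zero morphism established in Remark~\ref{rem:zero}. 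I do not expect any real obstacle here: the argument is essentially the same as for values, with the new case handled by sandwiching the induction hypothesis between a morphism and its partial inverse. The one thing to be slightly careful about is ensuring that the implicit reindexing of contexts (associators and symmetries in the monoidal structure of $\CC$) is treated consistently, but this is exactly the convention adopted earlier in the chapter.
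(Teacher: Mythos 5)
Your proof is correct and follows essentially the same route as the paper's: induction on the derivation of $\bot$, with the base cases discharged by Lemma~\ref{lem:orthogonal-injections} and the congruence cases (including the new $\mathtt{let}$ case) handled by noting that they only wrap the induction hypothesis in pre- and post-compositions that the zero morphism absorbs. The paper's own proof merely states this more tersely by deferring to Lemma~\ref{lem:rev-orthogonal-semantics-values} and observing that $\mathtt{let}$ is interpreted as a precomposition.
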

\begin{proof}
	A large part of this lemma is already proven in
	Lemma~\ref{lem:rev-orthogonal-semantics-values}. It remains to observe that
	the interpretation of $\mathtt{let}$ involves a precomposition, thus the
	induction hypothesis on the definition of $\bot$ is enough to conclude.
\end{proof}

\subsection{Denotational Semantics of Isos}
\label{sub:rev-isos}

Isos do only depend on function variables but they are innately morphisms, so
their denotation will be similar to terms -- a Scott continuous map:
\[
	\den{\Psi\entailiso\omega \colon A\iso B} \colon \den\Psi \to \CC(\den
	A,\den B). 
\]
We define the denotation of an iso by induction on the typing derivation. The
interpretation of an iso-variable is direct: it is the projection on the last
component. The interpretations of evaluations and $\lambda$-abstractions are
usual in a cartesian closed category, in our case, $\dcpo$ (see
\secref{sub:ccc} and \secref{sub:dcpo}). 
\begin{align*}
	\sem{\Psi \entailiso \omega \colon T}
	&\in \dcpo(\sem\Psi, \sem T) \\
	\den{\Psi,\phi \colon T \entailiso \phi \colon T} &=
	\pi_{\sem T} \\
	\sem{\Psi \entailiso \omega_2 \omega_1 \colon T_2}
	&= \rmeval \circ \pv{\sem{\Psi \entailiso \omega_2 \colon T_1 \to
	T_2}}{\sem{\Psi \entailiso \omega_1 \colon T_1}} \\
	\sem{\Psi \entailiso \lambda \phi . \omega \colon T_1 \to T_2} &=
	\rmcurry(\sem{\Psi, \phi \colon T_1 \entailiso \omega \colon T_2}) \\
	\sem{\Psi \entailiso \ffix \phi . \omega \colon T}
	&= \fix (\sem{\Psi, \phi \colon T \entailiso \omega \colon T})
\end{align*}
The remaining rule, that builds an iso abstraction $\isobreduit$, needs more
details. The interpretation of an iso abstraction is close to the one in the
previous chapter (see \secref{sec:very-simple-semantics}), in a related but
different setting.

\begin{lemma}
	\label{lem:compati-clauses}
	Given a well-typed iso abstraction $\Psi \entailiso \isobreduit \colon A
	\iso B$, for all $g \in \sem\Psi$, the morphisms in $\CC$ given by:
	\[
		\sem{\Psi ; \Delta_i \entail e_i \colon B}(g)
		\circ \sem{\Psi ; \Delta_i \entail v_i \colon A}(g)\pinv
	\]
	with $i \in I$, are pairwise inverse compatible.
\end{lemma}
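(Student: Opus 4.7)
The plan is to apply Lemma~\ref{lem:inv-ortho-compati}, which states that in an inverse category, two morphisms $f, g \colon X \to Y$ are inverse compatible as soon as $f\pinv g = 0$ and $f g\pinv = 0$. So for any fixed $g \in \sem\Psi$ and any $i \neq j$ in $I$, writing $f_i \eqdef \sem{e_i}(g) \circ \sem{v_i}(g)\pinv$ and $f_j$ analogously, it suffices to check that both $f_i\pinv \circ f_j$ and $f_i \circ f_j\pinv$ are zero morphisms in $\CC$.

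First I would expand $f_i\pinv \circ f_j$ using the fact that $(-)\pinv$ is contravariant:
\[
 f_i\pinv \circ f_j
 = \sem{v_i}(g) \circ \sem{e_i}(g)\pinv \circ \sem{e_j}(g) \circ \sem{v_j}(g)\pinv.
\]
The well-typedness of the iso abstraction $\isobreduit$ requires (Figure~\ref{fig:rev-typterm}) that the expressions $e_i$ be pairwise orthogonal, so $e_i \bot e_j$. Lemma~\ref{lem:rev-orthogonal-semantics} then gives $\sem{e_i}(g)\pinv \circ \sem{e_j}(g) = 0$, and the morphism zero precomposed and postcomposed by anything stays zero (Remark~\ref{rem:zero}), hence $f_i\pinv \circ f_j = 0$.

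Symmetrically, expanding $f_i \circ f_j\pinv$ yields
\[
 f_i \circ f_j\pinv
 = \sem{e_i}(g) \circ \sem{v_i}(g)\pinv \circ \sem{v_j}(g) \circ \sem{e_j}(g)\pinv,
\]
and the well-typedness condition also requires the values $v_i$ to be pairwise orthogonal, so $v_i \bot v_j$. Lemma~\ref{lem:rev-orthogonal-semantics-values} then gives $\sem{v_i}(g)\pinv \circ \sem{v_j}(g) = 0$, so $f_i \circ f_j\pinv = 0$. Applying Lemma~\ref{lem:inv-ortho-compati} concludes that $f_i$ and $f_j$ are inverse compatible, and this holds for every pair $i \neq j$ and every $g \in \sem\Psi$.

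The proof is essentially a routine manipulation: the only subtle point is verifying that both orthogonality conditions in Definition~\ref{def:rev-orthogonality} (on the value side and on the expression side) are indeed available from the iso abstraction typing rule, and then noticing that they correspond exactly to the two algebraic equations needed by Lemma~\ref{lem:inv-ortho-compati}. There is no real obstacle; the lemma is the semantic counterpart of the non-overlapping discipline enforced syntactically on the left-hand and right-hand sides of the clauses.
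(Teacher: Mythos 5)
Your proof is correct and follows essentially the same route as the paper: both establish the two zero-composite conditions via the semantic orthogonality lemmas (Lemma~\ref{lem:rev-orthogonal-semantics} for the expressions, Lemma~\ref{lem:rev-orthogonal-semantics-values} for the values, the former subsuming the latter) and then invoke Lemma~\ref{lem:inv-ortho-compati}. Nothing is missing.
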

\begin{proof}
	Lemma~\ref{lem:rev-orthogonal-semantics} gives us that for all $g \in
	\sem\Psi$ and $i \neq j \in I$:
	\begin{align*}
		(\sem{e_i}(g)
		\circ \sem{v_i}(g)\pinv)\pinv
		\circ
		(\sem{e_j}(g)
		\circ \sem{v_j}(g)\pinv)
		&= 0_{\sem A,\sem A} \\
		(\sem{e_i}(g)
		\circ \sem{v_i}(g)\pinv)
		\circ
		(\sem{e_j}(g)
		\circ \sem{v_j}(g)\pinv)\pinv 
		&= 0_{\sem B,\sem B}
	\end{align*}
	which are the hypotheses of Lemma~\ref{lem:inv-ortho-compati}. This is enough
	to ensure that for all $g \in \sem\Psi$ and $i \neq j \in I$,
	\[
		\sem{e_i}(g)
		\circ \sem{v_i}(g)\pinv
		\asymp
		\sem{e_j}(g)
		\circ \sem{v_j}(g)\pinv
	\]
	and this last point concludes.
\end{proof}

In a similar vein to the previous chapter, each clause $v_i \iso e_i$, present
in an iso abstraction, is given an interpretation $\sem{e_i} \circ
\sem{v_i}\pinv$.  The previous lemma shows that in the case of an iso
abstraction, the interpretations of all clauses can be joined (in the sense of
Definition~\ref{def:join}), and thus the interpretation of the iso abstraction
is the least upper bound of the interpretation of all the clauses. This least
upper bound is also the one in $\dcpo$, as shown by the lemma below.

\begin{lemma}
	\label{lem:join-scott}
	Given a dcpo $\Xi$, two objects $X$ and $Y$ of $\CC$, a set of indices $I$
	and a family of Scott continuous maps $\xi_i \colon \Xi \to \CC(X,Y)$ that
	are pairwise inverse compatible, the function given by:
	\[
		\left\{
			\begin{array}{ccl}
				\Xi &\to &\CC(X,Y) \\
				x &\mapsto &\bigvee_{i \in I} \xi_i(x)
			\end{array}
		\right.
	\]
	is Scott continuous, and is written $\bigvee_{i \in I} \xi_i$.
\end{lemma}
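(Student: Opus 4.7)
The plan is to verify three properties of the candidate function $F \defeq \bigvee_{i \in I} \xi_i \colon \Xi \to \CC(X,Y)$: well-definedness of the pointwise join, monotonicity, and preservation of directed suprema. I expect the first two to be routine, and the third to be the main technical step, though it should ultimately reduce to manipulating the universal properties of joins (Definition~\ref{def:join}) and of directed suprema.

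For well-definedness, I would first interpret ``pairwise inverse compatible'' as meaning that for each $x \in \Xi$ and $i \neq j \in I$, $\xi_i(x) \asymp \xi_j(x)$ in $\CC(X,Y)$. Under this reading, the set $\{\xi_i(x) \mid i \in I\}$ is inverse compatible, so $\bigvee_{i \in I} \xi_i(x)$ exists in $\CC(X,Y)$ by the join structure assumed on $\CC$. For monotonicity, given $x \leq y$ in $\Xi$, each $\xi_i$ is Scott continuous and therefore monotone, yielding $\xi_i(x) \leq \xi_i(y) \leq \bigvee_{j \in I} \xi_j(y)$; since this holds for every $i$, the universal property of $\bigvee_{i \in I} \xi_i(x)$ gives $F(x) \leq F(y)$.

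The main step is showing $F(\sup D) = \sup_{d \in D} F(d)$ for any directed $D \subseteq \Xi$. Monotonicity already yields $\sup_{d \in D} F(d) \leq F(\sup D)$. For the converse inequality, I would fix an arbitrary $i \in I$ and compute, using Scott continuity of $\xi_i$, that
\[
\xi_i(\sup D) \;=\; \sup_{d \in D} \xi_i(d) \;\leq\; \sup_{d \in D} \bigvee_{j \in I} \xi_j(d) \;=\; \sup_{d \in D} F(d),
\]
where the middle inequality uses $\xi_i(d) \leq F(d)$ together with monotonicity of directed suprema in the dcpo $\CC(X,Y)$. Since this bound holds uniformly in $i$, the universal property of the join $\bigvee_{i \in I} \xi_i(\sup D)$ from Definition~\ref{def:join} gives $F(\sup D) \leq \sup_{d \in D} F(d)$, closing the argument.

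The one subtlety I anticipate is justifying that $\sup_{d \in D} F(d)$ genuinely exists in the dcpo $\CC(X,Y)$: this requires checking that the family $(F(d))_{d \in D}$ is directed, which follows from the fact that $F$ is monotone (just shown) and $D$ is directed. With that in hand, no further appeal to commutation of arbitrary joins with directed suprema is needed — the universal properties of the two separate constructions are sufficient.
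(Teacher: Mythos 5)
Your proof is correct. For comparison: the paper disposes of this lemma in a single sentence, asserting that the function is the join of the family $\{\xi_i\}_{i\in I}$ taken in the dcpo $[\Xi \to \CC(X,Y)]$ of Scott continuous maps, where joins are computed pointwise, and is therefore Scott continuous by construction. Your argument supplies exactly the verification that this appeal presupposes: you check that the pointwise join exists, is monotone, and commutes with directed suprema, using only the universal property of joins from Definition~\ref{def:join} and the Scott continuity of each $\xi_i$; the key exchange $\bigvee_i \xi_i(\sup D) \leq \sup_{d\in D}\bigvee_i \xi_i(d)$ is obtained correctly by bounding each $\xi_i(\sup D)$ and invoking leastness of the join. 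The one assumption implicit in both arguments --- that the restriction order underlying the join structure coincides with the order of the $\DCPO$-enrichment, so that ``least upper bound'' means the same thing in the two universal properties you combine --- is exactly what the enrichment result imported from \cite{kaarsgaard2017join} guarantees. Your version is more self-contained and elementary; the paper's is shorter but leaves the pointwise computation of joins in the function space unjustified.
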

\begin{proof}
	The function can also be obtained as the join in the dcpo $[\Xi \to
	\CC(X,Y)]$, it is therefore Scott continuous.
\end{proof}

The interpretation of an iso abstraction is then given by:
\begin{equation*}
  \begin{split}
		\den{\Psi\entailiso \isobreduit \colon A\iso B} =
		\bigvee_{i \in I} (\comp \circ \pv{\sem{\Psi ; \Delta_i \entail e_i \colon
		B}}{\sem{\Psi ; \Delta_i \entail v_i \colon A}\pinv})
  \end{split}
\end{equation*}

\begin{proposition}
	Given a well-typed iso abstraction $\Psi \entailiso \isobreduit \colon A \iso B$,
	its interpretation $\sem{\Psi \entailiso \isobreduit \colon A \iso B}$ is
	well-defined as a Scott continuous map between the dcpos $\sem\Psi$ and 
	$\CC(\sem A, \sem B)$.
\end{proposition}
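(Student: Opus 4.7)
The plan is to verify that the formula defining $\sem{\Psi\entailiso \isobreduit \colon A\iso B}$ --- a pointwise join of candidate components --- exists pointwise and yields a Scott continuous map. Writing $\xi_i \colon \sem\Psi \to \CC(\sem A, \sem B)$ for the prospective component $g \mapsto \sem{e_i}(g) \circ \sem{v_i}(g)\pinv$, I would establish three things: (a) each $\xi_i$ is Scott continuous, (b) for every $g \in \sem\Psi$ the family $(\xi_i(g))_{i \in I}$ is inverse compatible, so that the join $\bigvee_{i} \xi_i(g)$ exists in $\CC(\sem A, \sem B)$, and (c) the resulting pointwise join is itself Scott continuous in $g$.

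First I would dispatch (a) by invoking the mutual induction hypothesis on the term and iso formation rules: the interpretations $\sem{\Psi;\Delta_i\entail e_i\colon B}$ and $\sem{\Psi;\Delta_i\entail v_i\colon A}$ are Scott continuous maps from $\sem\Psi$ into the hom-dcpos $\CC(\sem{\Delta_i},\sem B)$ and $\CC(\sem{\Delta_i},\sem A)$ respectively. Since $(-)\pinv \colon \CC\op \to \CC$ is a $\dcpo$-functor (recalled in Section~\ref{sub:inv-additional}), postcomposition sends $\sem{v_i}$ to the Scott continuous map $g \mapsto \sem{v_i}(g)\pinv$. The $\dcpo$-enrichment of $\CC$ makes both the pairing and the composition morphism $\comp$ Scott continuous, so $\xi_i = \comp \circ \pv{\sem{e_i}}{\sem{v_i}\pinv}$ is obtained as a composite of Scott continuous maps and is itself Scott continuous.

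Step (b) is then exactly the content of Lemma~\ref{lem:compati-clauses}: the family $(\xi_i(g))_{i \in I}$ is pairwise inverse compatible for every fixed $g$, and therefore its join in $\CC(\sem A,\sem B)$ is defined thanks to the join inverse structure of $\CC$ (Definition~\ref{def:join}). For step (c), I would appeal directly to Lemma~\ref{lem:join-scott}, which produces a Scott continuous map out of a pairwise inverse compatible family of Scott continuous maps $\sem\Psi \to \CC(\sem A, \sem B)$, delivering the desired interpretation $\bigvee_{i \in I} \xi_i$.

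The main obstacle I anticipate is not in the argument itself --- each step is routine given the background material --- but in setting up the mutual induction cleanly, since the interpretation of an expression $e_i$ of the form $\letv{p_1}{\omega~p_2}{e'}$ invokes the interpretation of an iso, which is precisely what the current case is defining. The induction is nevertheless well-founded on the size of the judgement derivation, so the apparent circularity disappears and the proof proceeds by a standard simultaneous induction on term and iso judgement derivations.
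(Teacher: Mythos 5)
Your proposal is correct and follows essentially the same route as the paper: the paper's proof is a one-line appeal to Lemma~\ref{lem:compati-clauses} (pairwise inverse compatibility of the clause interpretations) and Lemma~\ref{lem:join-scott} (the pointwise join of compatible Scott continuous maps is Scott continuous). Your additional step (a) and the remark on the well-foundedness of the mutual induction are handled implicitly in the paper when it sets up the interpretation of terms and isos, so they add detail rather than deviate from the argument.
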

\begin{proof}
	This is a conclusion of Lemmas~\ref{lem:compati-clauses} and
	\ref{lem:join-scott}.
\end{proof}

We complete the denotational semantics of isos with an interpretation of
substitutions. It is not different to the one in a usual $\lambda$-calculus.

\begin{lemma}
	\label{lem:sem-subst-iso}
	Given two well-typed isos $\Psi, \phi \colon T_2 \entailiso \omega_1 \colon
	T_1$ and $\Psi \entailiso \omega_2 \colon T_2$,
	\[
		\sem{\Psi \entailiso \omega_1[\omega_2/\phi] \colon T_1}
		= \sem{\Psi, \phi \colon T_2 \entailiso \omega_1 \colon T_1}
		\circ \pv{\iid}{\sem{\Psi \entailiso \omega_2 \colon T_2}}.
	\]
\end{lemma}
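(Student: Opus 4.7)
The plan is to proceed by mutual induction on the derivations of iso judgements and term judgements, since iso abstractions contain expressions, and expressions may in turn contain iso applications. The statement to be proved in parallel for terms is: given $\Psi, \phi \colon T_2 ; \Delta \entail t \colon A$ and $\Psi \entailiso \omega_2 \colon T_2$,
\[
    \sem{\Psi ; \Delta \entail t[\omega_2/\phi] \colon A}
    = \sem{\Psi, \phi \colon T_2 ; \Delta \entail t \colon A}
      \circ \pv{\iid}{\sem{\Psi \entailiso \omega_2 \colon T_2}}.
\]
The well-typedness of both sides is guaranteed by Lemma~\ref{lem:iso-subst-type}.

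First, I would handle the term cases. The value constructors $*$, $x$, $\inl{}$, $\inr{}$, $\otimes$, $\fold{}$ are trivial, as their semantics is defined pointwise in $g \in \sem\Psi$ by composing with a morphism in $\CC$ that does not mention $\Psi$; naturality of pairing in $\dcpo$ then gives the result. The $\mathtt{let}$ and application cases reduce to the fact that $\comp$ and tensor are bifunctorial, together with the induction hypothesis applied to their subterms.

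Next, I would handle the iso cases. For an iso-variable $\psi \neq \phi$, both sides equal the projection on the $\psi$-component, since $(-)[\omega_2/\phi]$ leaves $\psi$ unchanged; for $\psi = \phi$, the substitution yields $\omega_2$, and the required equality $\sem{\omega_2} = \pi_{\sem{T_2}} \circ \pv{\iid}{\sem{\omega_2}}$ follows from the universal property of the product in $\dcpo$. The application case $\omega'\omega$ is immediate by functoriality of $\rmeval \circ \pv - -$ and the induction hypotheses. The $\lambda$-abstraction case uses the naturality of currying with respect to precomposition, namely $\rmcurry(f) \circ g = \rmcurry(f \circ (g \times \iid))$; here one needs the weakening/exchange isomorphism $\pv{\iid}{\sem{\omega_2}} \times \iid \cong \pv{\iid \times \iid}{\sem{\omega_2} \circ \pi_1}$ in $\dcpo$, which holds up to coherence. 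The $\ffix$ case uses a standard fact about parametric least fixed points: $\fix(h) \circ g = \fix(h \circ (g \times \iid))$ for any Scott continuous $h$, $g$ of the right shape, which follows from Kleene's theorem (Theorem~\ref{th:kleene}) and Scott continuity of precomposition in $\dcpo$.

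Finally, for the iso-abstraction $\isobreduit$, I would use that joins of inverse-compatible families commute with precomposition by a morphism in $\dcpo$: this is essentially the content of Definition~\ref{def:join} lifted pointwise, and is already used implicitly via Lemma~\ref{lem:join-scott}. Combined with the inductive hypothesis applied to each $\sem{v_i}$ and $\sem{e_i}$, and with the fact that the partial-inverse functor $(-)\pinv$ is itself a $\dcpo$-functor, we get the equality on each clause $\sem{e_i} \circ \sem{v_i}\pinv$ after precomposition, hence on their join. The main obstacle that I expect is bookkeeping: keeping track of the shape of the cartesian context $\sem\Psi$ versus $\sem{\Psi, \phi : T_2}$ and verifying that the coherence isomorphisms for pairing in $\dcpo$ make all the diagrams commute strictly, especially in the $\lambda$ and $\ffix$ cases where reindexing of the parameter context is delicate. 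Everything else reduces to functoriality or to standard properties of the cartesian closed structure of $\dcpo$.
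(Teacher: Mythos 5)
Your plan is correct and matches the paper's approach: the paper's proof is literally the single sentence ``by induction on the typing derivation of $\omega_1$,'' and your mutual induction on iso and term judgements, with the standard naturality facts for currying, parametrised fixed points, and joins under precomposition, is exactly the detail that sentence elides.
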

\begin{proof}
	The proof is done by induction on the typing derivation of $\omega_1$.
\end{proof}

\subsection{Denotational Semantics of Valuations and Substitution}
\label{sub:rev-subst-sem}

We provide an interpretation to valuations and to their application to a term
of the syntax. As expected, the result obtained is close to
Proposition~\ref{prop:substitution-interpretation}, that details the semantics
of valuations in the last chapter. 

\begin{proposition}[Substitution lemma]
	\label{prop:rev-substitution-interpretation}
	Given a well-typed term $\Psi ; \Delta \entail t \colon A$ and for all $(x_i
	\colon A_i) \in \Delta$, a well-typed term $\Psi ; \emptyset \entail t_i \colon A_i$; if
	$\sigma = \set{x_i \mapsto t_i}_i$, then for all $g \in \sem\Psi$:
	\[
		\sem{\Psi ; \emptyset \entail \sigma(t) \colon A}(g) = \sem{\Psi ; \Delta
		\entail t \colon A}(g) \circ \left(\bigotimes_i \sem{\Psi ; \emptyset
		\entail t_i \colon A_i}(g) \right).
	\]
	We define then, for all $g \in \sem\Psi$: $\sem\sigma(g) = \bigotimes_i
	\sem{\entail t_i \colon A_i}(g)$.
\end{proposition}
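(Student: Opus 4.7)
The plan is to proceed by structural induction on the derivation of $\Psi; \Delta \entail t \colon A$, with the iso parameter $g \in \sem\Psi$ fixed throughout; since $g$ plays no role in how substitution interacts with terms, I will suppress it and write $\sem t$ for $\sem t(g)$, a morphism in $\CC$. A preliminary observation, which follows directly from the definition of $\sem \sigma$ in the statement and the fact that valuations on disjoint supports combine as $\sigma_1 \sqcup \sigma_2$, is that whenever $\sigma$ splits along a context split $\Delta = \Delta_1, \Delta_2$ as $\sigma = \sigma_1 \sqcup \sigma_2$, then $\sem\sigma = \sem{\sigma_1} \otimes \sem{\sigma_2}$, up to the symmetric monoidal coherence isomorphisms of $\CC$, which I will leave implicit.

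The two base cases are straightforward. For the unit $\emptyset \entail * \colon \one$, both sides equal $\iid_{\sem\one}$ since $\sigma$ is empty and the empty tensor is $\iid_1$. For a variable $x \colon A \entail x \colon A$, we have $\sigma = \set{x \mapsto t_x}$, hence $\sem{\sigma(x)} = \sem{t_x}$, which agrees with $\sem x \circ \sem\sigma = \iid_{\sem A} \circ \sem{t_x}$. The cases of $\inl t$, $\inr t$ and $\fold t$ are then immediate: the denotation postcomposes $\sem t$ with a fixed morphism (one of $\iota_l$, $\iota_r$, or $\alpha^{\sem{X \vDash A}}$) that commutes trivially with the precomposition by $\sem\sigma$, so the induction hypothesis applied to $t$ suffices. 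The case of $\omega~t$ is similar: since $\sem{\Psi \entailiso \omega \colon A \iso B}(g)$ is a morphism in $\CC$ that does not depend on $\Delta$, we simply compute $\sem{\omega~\sigma(t)} = \sem\omega \circ \sem{\sigma(t)} \stackrel{\mathrm{IH}}{=} \sem\omega \circ \sem t \circ \sem\sigma$.

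The tensor case $t_1 \otimes t_2$ is the first where the context actually splits: write $\Delta = \Delta_1, \Delta_2$ and $\sigma = \sigma_1 \sqcup \sigma_2$ accordingly. Applying the induction hypothesis on each side and using bifunctoriality of $\otimes$ in $\CC$, we get $\sem{\sigma(t_1 \otimes t_2)} = \sem{\sigma_1(t_1)} \otimes \sem{\sigma_2(t_2)} = (\sem{t_1} \circ \sem{\sigma_1}) \otimes (\sem{t_2} \circ \sem{\sigma_2}) = (\sem{t_1} \otimes \sem{t_2}) \circ (\sem{\sigma_1} \otimes \sem{\sigma_2})$, which is $\sem{t_1 \otimes t_2} \circ \sem\sigma$ by the preliminary observation.

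The main obstacle is the $\letv{p}{t_1}{t_2}$ case, because of the binder $p$: the sub-term $t_2$ is typed in a context extended by $p \colon A$, and by Barendregt's convention the variables bound by $p$ lie outside the support of $\sigma$. I would split $\sigma = \sigma_1 \sqcup \sigma_2$ along $\Delta = \Delta_1, \Delta_2$, apply the induction hypothesis to $t_1$ with $\sigma_1$ and to $t_2$ with $\sigma_2$ extended by the identity on $p$ (so that at the semantic level, $\sem{\sigma_2}$ is tensored with $\iid_{\sem A}$ on the extra component). The denotation of $\letv{p}{\sigma(t_1)}{\sigma(t_2)}$ then unfolds, by the defining equation of the let-rule, to $\sem{\sigma_2(t_2)} \circ (\iid_{\sem{\Delta_2}} \otimes \sem{\sigma_1(t_1)})$; substituting the induction hypotheses and reassociating using bifunctoriality of $\otimes$ and associativity of composition, this rearranges to $\sem{\letv{p}{t_1}{t_2}} \circ (\sem{\sigma_1} \otimes \sem{\sigma_2}) = \sem{\letv{p}{t_1}{t_2}} \circ \sem\sigma$. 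The bookkeeping of symmetric monoidal coherence is the only delicate point, and it is routine.
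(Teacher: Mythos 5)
Your proof is correct and takes exactly the route the paper does: the paper's own proof is the single sentence ``straightforward by induction on the typing derivation of $t$,'' and your argument is a faithful elaboration of that induction, including the one genuinely delicate point (strengthening the induction hypothesis to substitutions acting on only part of the context so that the $\mathtt{let}$-binder case goes through). Nothing further is needed.
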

\begin{proof}
	The proof is straightforward by induction on the typing derivation of $t$.
\end{proof}

\begin{remark}
	We remind here that the interpretation of a valuation, given above, is not
	done in the most meticulous way. Indeed, there is no order in which the
	variables occur in a valuation, thus its categorical denotation is
	necessarily \emph{up to permutation}, as the denotation of a context. Since
	$\CC$ is a symmetric monoidal category, we argue that the results of this
	chapter are not impacted by this choice, and that providing an extra care to
	permutations would only introduce more notations and confusion.
\end{remark}

Proposition~\ref{prop:rev-substitution-interpretation} strengthens the
observation that categorical composition is \emph{exactly} substitution in the
Curry-Howard-Lambek correspondence. In addition, in our context of category
equipped with an \emph{inverse} structure, similar to the dagger in the last
chapter, we can show that the \emph{inner product} (see
Remark~\ref{rem:loose-inner-product} to understand the intuition behind this
notion) of two interpretations produces the interpretation of a valuation,
provided that there is a match.

\begin{lemma}
	\label{lem:rev-matching-semantics}
	Given two well-typed values $\Psi ; \Delta \entail v \colon A$ and $\Psi ;
	\emptyset \entail v' \colon A$, and a substitution $\sigma$, if
	$\match{\sigma}{v}{v'}$ then for all $g \in \sem\Psi$: 
	\[ \sem v(g)\pinv \circ
	\sem{v'}(g) = \sem\sigma(g). \]
\end{lemma}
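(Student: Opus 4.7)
The plan is to proceed by induction on the derivation of $\match{\sigma}{v}{v'}$, mirroring the inductive structure of the matching predicate given in \secref{sub:qua-substitution} (adapted to the present syntax with $\fold\!$). Since $v$ is a pattern-like value, there are only five cases to handle.

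First I would dispatch the two base cases. For $\match{\sigma}{*}{*}$, both $\sem{v}(g)$ and $\sem{v'}(g)$ are $\iid_{\sem\one}$, and $\sem\sigma(g)$ is (up to the empty tensor convention) $\iid_1$, so the equation reduces to $\iid\pinv \circ \iid = \iid$. For the variable case $\match{\{x \mapsto v'\}}{x}{v'}$, we have $\sem{v}(g) = \iid_{\sem A}$ (which is its own partial inverse, since identities are total) and the right-hand side $\sem\sigma(g) = \sem{v'}(g)$ by definition, so the equation holds definitionally.

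Next come the congruence cases. For $\match{\sigma}{\ini v}{\ini v'}$ with $\ini \in \{\inl,\inr\}$, using $\sem{\ini v}(g) = \iota_{\ell/r} \circ \sem{v}(g)$, contravariance of $(-)\pinv$, and the identity $\iota\pinv \circ \iota = \res\iota = \iid$ (the injections are total by Definition~\ref{def:dis-tensor}), we get
\[
\sem{\ini v}(g)\pinv \circ \sem{\ini v'}(g) = \sem{v}(g)\pinv \circ \iota\pinv \circ \iota \circ \sem{v'}(g) = \sem{v}(g)\pinv \circ \sem{v'}(g),
\]
and the induction hypothesis closes the case. For $\match{\sigma}{\fold v}{\fold v'}$, the semantics of $\fold\!$ is precomposition with the natural isomorphism $\alpha^{\sem{X \vDash A}}$; its inverse cancels analogously, reducing to the inductive hypothesis.

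The main obstacle, and the most delicate case, is $\match{\sigma_1 \cup \sigma_2}{v_1 \otimes v_2}{v'_1 \otimes v'_2}$ with disjoint supports. Here $\sem{v_1 \otimes v_2}(g) = \sem{v_1}(g) \otimes \sem{v_2}(g)$, and we must compute
\[
(\sem{v_1}(g) \otimes \sem{v_2}(g))\pinv \circ (\sem{v'_1}(g) \otimes \sem{v'_2}(g)).
\]
This requires that $(-)\pinv$ is a monoidal functor on $\CC$, i.e.\ $(f \otimes g)\pinv = f\pinv \otimes g\pinv$, and that $\otimes$ is a bifunctor compatible with composition, i.e.\ $(f \otimes g) \circ (f' \otimes g') = (f \circ f') \otimes (g \circ g')$. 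Both are standard facts for a join inverse rig category (see \secref{sub:inv-additional} and Corollary~\ref{cor:dcpofun}). Applying these, the composite rewrites to $(\sem{v_1}(g)\pinv \circ \sem{v'_1}(g)) \otimes (\sem{v_2}(g)\pinv \circ \sem{v'_2}(g))$; the induction hypothesis yields $\sem{\sigma_1}(g) \otimes \sem{\sigma_2}(g)$, which is $\sem{\sigma_1 \cup \sigma_2}(g)$ modulo the permutation convention flagged in the remark following Proposition~\ref{prop:rev-substitution-interpretation}. Disjointness of supports is exactly what lets us identify the tensor of the two valuation semantics with the semantics of the union.
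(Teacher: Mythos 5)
Your proof is correct and follows exactly the route the paper takes: the paper's own proof is a one-line "straightforward induction on $\match{\sigma}{v}{v'}$", and your case analysis (identities for the base cases, totality of injections and invertibility of the fold isomorphism for the congruence cases, and monoidality of $(-)\pinv$ plus the interchange law for the tensor case) is precisely the expansion of that induction. Nothing is missing; the handling of the union of disjoint valuations up to permutation matches the convention the paper itself adopts.
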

\begin{proof}
	The proof is straightforward by induction on $\match{\sigma}{b}{b'}$.
\end{proof}

The lemmas above can be combined to assert a first step towards soundness: the
interpretations of the left-hand side and right-hand side of the reversible
$\beta$-reduction are equal.

\begin{lemma}
	\label{lem:rev-beta-sound}
	Given a well-typed iso abstraction $~\entailiso \isobreduitj \colon A
	\iso B$ and a well-typed value $~\entail v' \colon A$, if
	$\match{\sigma}{v_i}{v'}$, then
	\[
		\sem{\entail \isobreduitj~v' \colon B}
		= \sem{\entail \sigma(v_i) \colon B}.
	\]
\end{lemma}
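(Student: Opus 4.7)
The plan is to unfold the definition of the semantics of the iso application and use the join structure of the interpretation of iso abstractions. Starting from
\[
	\sem{\entail \isobreduitj~v' \colon B}
	= \sem{\entailiso \isobreduitj \colon A \iso B} \circ \sem{\entail v' \colon A},
\]
I would substitute the definition of the iso abstraction's denotation, which is the join $\bigvee_{j \in J} \sem{e_j} \circ \sem{v_j}\pinv$, and then distribute the post-composition with $\sem{v'}$ inside the join using the join-preservation properties from Definition~\ref{def:join}. This produces
\[
	\sem{\entail \isobreduitj~v' \colon B}
	= \bigvee_{j \in J} \sem{e_j} \circ \sem{v_j}\pinv \circ \sem{v'}.
\]

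The next step is to reduce this join to a single summand, namely the one indexed by $i$. For every $j \neq i$, well-typedness of the iso abstraction gives $v_j~\bot~v_i$, and since $v' = \sigma(v_i)$ by the matching assumption, I would invoke the analogue of Lemma~\ref{lem:ortho-subst-equiv} adapted to the present syntax to deduce $v_j~\bot~v'$. Then Lemma~\ref{lem:rev-orthogonal-semantics} yields $\sem{v_j}\pinv \circ \sem{v'} = 0_{\sem\emptyset, \sem{\Delta_j}}$, and pre-composition with a zero morphism stays zero (Remark~\ref{rem:zero}), so these summands vanish since $0$ is the least element for the join order.

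For the remaining summand $j = i$, Lemma~\ref{lem:rev-matching-semantics} applied to $\match{\sigma}{v_i}{v'}$ rewrites $\sem{v_i}\pinv \circ \sem{v'}$ as $\sem\sigma$, leaving $\sem{e_i} \circ \sem\sigma$. A final appeal to Proposition~\ref{prop:rev-substitution-interpretation} (valid for expressions since they are terms) identifies this composition with $\sem{\sigma(e_i)}$, which is the desired right-hand side.

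The main obstacle is the preservation of orthogonality under substitution, since the previous chapter's Lemma~\ref{lem:ortho-subst-equiv} is phrased for the earlier grammar and has not been restated here for the syntax with $\mathtt{let}$-bindings and $\fold\!$. I would prove it by a straightforward induction on the derivation of $v_j~\bot~v_i$ following Definition~\ref{def:rev-orthogonality}: the base cases $\inl/\inr$ are stable under any substitution because substitution commutes with the injection constructors, and the congruence cases on $C_\bot[-]$ follow by the induction hypothesis. Once this auxiliary lemma is secured, the rest of the proof is a routine calculation gluing together the definitions and the lemmas cited above.
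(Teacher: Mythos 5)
Your proposal is correct and follows essentially the same route as the paper's proof: unfold the application semantics, distribute the (pre)composition with $\sem{v'}$ over the join, kill the $j \neq i$ summands via orthogonality (Lemma~\ref{lem:rev-orthogonal-semantics}), and rewrite the surviving summand with Lemma~\ref{lem:rev-matching-semantics} and Proposition~\ref{prop:rev-substitution-interpretation}. Your extra care in justifying $v_j~\bot~v'$ via stability of orthogonality under substitution, and your landing on $\sem{\sigma(e_i)}$ (consistent with the operational rule, where the statement's $\sigma(v_i)$ is a notational slip), only make the argument more explicit than the paper's.
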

\begin{proof}
	First, we deduce from the assumption $\match{\sigma}{v_i}{v'}$ that
	\begin{itemize}
		\item $\neg(v_i~\bot~v')$, and thus $\sem{v_i}\pinv \circ \sem{v'} =
			\sem\sigma$, thanks to Lemma~\ref{lem:rev-matching-semantics}.
		\item for all $j \neq i$, $v_j~\bot~v'$, and thus $\sem{v_j}\pinv \circ
			\sem{v'} = 0$, thanks to Lemma~\ref{lem:rev-orthogonal-semantics}.
	\end{itemize}
	We can then compute the semantics, with $\omega \defeq \isobreduitj$:
	\begin{align*}
		&\ \sem{\omega~v'} & \\
		&= \sem\omega \circ \sem{v'} & \text{(by definition)} \\
		&= \left( \bigvee_j \sem{v_j} \circ \sem{b_j}\pinv \right) \circ \sem{v'}
		& \text{(by definition)} \\
		&= \bigvee_j \sem{v_j} \circ \sem{b_j}\pinv \circ \sem{v'}
		& \text{(composition distributes over join)} \\
		&= \sem{v_i} \circ \sem{b_i}\pinv \circ \sem{v'}
		& \text{(Lemma~\ref{lem:rev-orthogonal-semantics})} \\
		&= \sem{v_i} \circ \sem\sigma
		& \text{(Lemma~\ref{lem:rev-matching-semantics})} \\
		&= \sem{\sigma(v_i)}
		& \text{(Prop.~\ref{prop:rev-substitution-interpretation})}
	\end{align*}
\end{proof}

We have brought forward a categorical interpretation to the programming
language introduced in \secref{sec:rev-language}. This interpretation makes use
of the join inverse rig structure to define the iso abstraction and to perform
a pattern-matching that ensures reversibility; an enrichment in $\dcpo$ allows
to consider recursive isos and inductive data types are represented with the
help or parameterised initial algebras.

Independently of how convincing this model is, it is good practice to prove it
has a strong link with the operational semantics of the language. This link is
called \emph{adequacy}, and is the focus of the next section.

\section{Adequacy}
\label{sec:rev-adequacy}

We show a strong relation between the operational semantics and the
denotational semantics which were introduced in the previous sections. First,
we fix a mathematical interpretation $\sem -$ in a join inverse rig category
$\CC$, that is $\dcpo$-enriched and whose objects $0$ and $1$ are distinct. 

Since the language handles non-termination, our adequacy statement links the
denotational semantics to the notion of termination in the operational
semantics.

\begin{definition}[Terminating]
	Given $~\vdash t\colon A$, $t$ is said \emph{terminating} if there exists a
	value $v$ such that $t\rightarrow^* v$. We either write $t\downarrow$, or $t
	\downarrow v$.
\end{definition}

Since the system is deterministic, if $t \downarrow v$, then $v$ is unique.

The goal of this section is to prove the next theorem.
\begin{theorem}[Adequacy]
	\label{th:rev-adeq}
	Given $~\vdash t\colon A$, $t\downarrow$ iff $\sem{\entail t \colon A} \neq
	0_{1,\sem A}$.
\end{theorem}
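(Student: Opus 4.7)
The proof splits into two directions. The forward direction ($t\downarrow$ implies $\sem t\neq 0_{1,\sem A}$) is proven via operational soundness plus non-triviality of values; the backward direction requires a logical-relations argument to accommodate recursion, higher-order isos and inductive types.

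For operational soundness, I would prove by induction on the rewriting rules (both of isos in Figure~\ref{fig:iso-operational} and of terms in Figure~\ref{fig:rev-operational-semantics}) that $t\to t'$ implies $\sem{\vdash t\colon A}=\sem{\vdash t'\colon A}$. The reversible $\beta$-rule for iso abstractions is exactly Lemma~\ref{lem:rev-beta-sound}. The \lett-reduction follows from Proposition~\ref{prop:rev-substitution-interpretation} applied to the singleton valuation $\sigma$ produced by $\match{\sigma}{p}{v}$. Iso-level $\beta$-reduction $(\lambda\phi.\omega_1)\omega_2\to\omega_1[\omega_2/\phi]$ is handled by Lemma~\ref{lem:sem-subst-iso}, while fix-unfolding is handled by the fixed-point equation $\sem{\ffix\phi.\omega}=\sem\omega(\fix\sem\omega)$ combined with the same substitution lemma. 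All congruence cases follow from functoriality and from the fact that composition and the tensor/coproduct bifunctors are $\dcpo$-functors (Lemma~\ref{lem:restr-fun-dcpo} and Corollary~\ref{cor:dcpofun}).

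Next, I would prove by induction on $\vdash v\colon A$ that $\sem{\vdash v\colon A}\neq 0_{1,\sem A}$. For $*$, $\sem *=\iid_1$ which is non-zero because $0\neq 1$ (and Lemma~\ref{lem:res-zero-zero} would make $\iid_1=0$ force $1$ to be a restriction zero). For $\inl v$ and $\inr v$, postcomposition with the total monic $\iota_l,\iota_r$ preserves non-zero maps (if $\iota_l\sem v=0$, composing with $\iota_l\pinv$ and using $\iota_l\pinv\iota_l=\res{\iota_l}=\iid$ would give $\sem v=0$). For $\pair{v_1}{v_2}$ the tensor of two non-zero totals is non-zero in a join inverse rig category (argue by postcomposing with $\pinv$'s). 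For $\fold v$, the component $\alpha^{\sem{X\vDash A}}$ is an isomorphism. Combining operational soundness with non-triviality of values, if $t\downarrow v$, then $\sem t=\sem v\neq 0_{1,\sem A}$.

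The backward direction is the main obstacle and I would prove it by a logical-relations (formal approximation) argument, in the style of Plotkin, adapted to the $\dcpo$-enriched inverse setting. I would define by induction on types a family of relations $R_A\subseteq \CC(1,\sem A)\times\{t\mid{\vdash}\, t\colon A\}$, extended to iso types $R_T\subseteq \sem T\times\{\omega\mid\vdash_\omega\omega\colon T\}$, characterised by: at every ground type $R_A(f,t)$ with $f\neq 0$ implies $t\downarrow$ (and moreover $t\downarrow v$ with $R_A(f,v)$ witnessing the structural shape of $f$); at $A\iso B$, $R_{A\iso B}(g,\omega)$ requires that whenever $R_A(f,v)$ with $f\neq 0$, then $g\circ f\neq 0$ implies $\omega\,v\downarrow$ and $R_B(g\circ f,\omega\,v)$; at $T_1\to T_2$, the usual Kripke closure. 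Inductive types are treated by the standard colimit-of-embeddings construction granted by Proposition~\ref{prop:omega-chains} and Theorem~\ref{th:param-alg}: $R_{\mu X.A}$ is defined as the union over finite syntactic unfoldings. The fundamental lemma is then proved by induction on typing derivations. The three delicate cases are: (i) iso abstractions $\isobreduit$, where one uses Lemma~\ref{lem:od-substitution}-style exhaustivity together with Lemma~\ref{lem:rev-orthogonal-semantics} to conclude that if $\bigvee_i(\sem{e_i}\sem{v_i}\pinv)\circ f\neq 0$, then some clause matches the value $v$ with $t\downarrow v$; (ii) the recursive iso $\ffix\phi.\omega$, which requires admissibility of $R_T$ under Scott-directed suprema, and this holds because the set $\{g\mid g\neq 0\Rightarrow\cdot\downarrow\}$ is Scott-closed (a non-zero supremum has a non-zero approximant); (iii) the $\fold$ case, handled via the syntactic approximants of $\mu X.A$. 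The critical step, and the one where I expect the most care, is the admissibility argument for $\ffix$: one must verify that if $\fix(\sem\omega)\neq 0$ then some finite iterate $\sem\omega^n(\bot)$ is already non-zero, and lift this to the syntactic unfolding $\omega[\cdots[\omega[\Omega/\phi]\cdots]/\phi]$ while preserving $R_T$. Applying the fundamental lemma at $t$ yields $R_A(\sem t,t)$; if $\sem t\neq 0_{1,\sem A}$ we obtain $t\downarrow$, completing the proof.
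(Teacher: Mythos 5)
Your forward direction matches the paper exactly: it is Proposition~\ref{prop:rev-soundness} followed by Corollary~\ref{cor:rev-soundness}, and your extra argument that closed values have non-zero denotation (via totality of the injections and $0\neq 1$) is a detail the paper leaves implicit, so that half is fine. For the backward direction, however, you have chosen precisely the route the paper considers and then declines: the text preceding the theorem states that formal approximation relations \emph{\`a la} Plotkin would work but that, because of the two separate levels of abstraction (a fixed-point calculus at the level of isos, inductive types at the level of terms), ``the definition of the relations and proving that they exist would be long,'' and it instead follows a syntactic strategy inspired by \cite{pagani2014quantitative}. Concretely, the paper introduces a finitary sublanguage with bounded unfoldings $\nfix{n}\phi.\omega$, a divergent iso $\Omega_T$ and an explicit stuck term $\bot$; proves that the finitary reduction $\finto$ is strongly normalising by splitting it into an iso-level system $\isoto$ and a term-level system $\termto$ that commute (Lemmas~\ref{lem:iso-sn}, \ref{lem:term-red-sn}, \ref{lem:red-commutative}); derives finitary adequacy (Theorem~\ref{th:fin-adeq}) from strong normalisation, progress and finitary soundness; and then lifts the result to the full language via $\sem{t}=\bigvee_{t'\subfin t}\sem{t'}$ (Lemma~\ref{lem:subfin-join-term}) and the observation that a terminating finitary subterm forces termination of the full term on the same value (Lemma~\ref{lem:fin-red-to}). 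Your approach trades all of that syntactic machinery for a fundamental lemma, which is a legitimate exchange, but be aware of where the cost reappears.

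The two places where your sketch is thinnest are exactly the ones the paper was avoiding. First, the existence and admissibility of $R_{\mu X.A}$: defining it as a ``union over finite syntactic unfoldings'' is plausible because all type constructors here are covariant, but you must still show that this least fixed point of relations coheres with the semantic fixed point $\alpha^{\sem{X\vDash A}}\colon\sem{A[\mu X.A/X]}\cong\sem{\mu X.A}$ obtained as a colimit of $\omega$-chains of embeddings, and that the resulting relation is closed under the directed suprema needed in the $\ffix$ case; this is a genuine minimal-invariant-style argument, not a one-liner. Second, your ground-type relation must carry more than ``$f\neq 0$ implies $t\downarrow$'': for the iso-abstraction case you need, from $\sem{v_i}\pinv\circ f\neq 0$, to recover an actual syntactic match $\match{\sigma}{v_i}{v'}$, which forces something like $f\leq\sem{v'}$ into the definition of $R_A$ (together with a converse of Lemma~\ref{lem:rev-orthogonal-semantics} for closed values, i.e.\ the analogue of Proposition~\ref{prop:patterns-are-orthogonal} for this chapter, which is provable but not stated in the paper), and that stronger relation must then be re-checked for admissibility. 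Neither issue is fatal --- the paper itself concedes the approach works --- but as written your proposal defers the entire difficulty of the theorem to these two unproved steps, whereas the paper's finitary detour discharges them by elementary combinatorial means.
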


Interestingly enough, there are two ways for a term $~\entail t \colon A$ to
have $0_{1, \sem A}$ as its interpretation: either it reduces over and over in
an infinite loop (see Example~\ref{ex:non-termination}), or it is stuck because
of pattern-matching (see Example~\ref{ex:non-reduction}). 

One strategy could be the use of formal approximation relations, introduced by
Plotkin \cite{plotkin1985predomains}. However, because of our two seperate
levels of abstraction (a fixed point calculus at the level of isos, and
inductive types at the level of terms), the definition of the relations and
proving that they exist would be long and of little interest to the reader. We
rather choose a syntactic approach, inspired by the proof in
\cite{pagani2014quantitative}.


\subsection{Soundness}
\label{sub:rev-sound}

We start by showing the simple implication in Theorem~\ref{th:rev-adeq}, that
the denotational semantics is stable w.r.t. computation; in other words,
applying a rule of the operational semantics does not change the mathematical
interpretation. The other direction -- which is made formal later -- is called
\emph{adequacy}, and is usually harder to prove. See \secref{sub:rev-adequacy}
for full details on the adequacy result.

\begin{lemma}
	\label{lem:rev-iso-sound}
	Given a well-formed iso judgement $~\entailiso \omega \colon T$,
	if $\omega \to \omega'$, then 
	\[
		\sem{\entailiso \omega \colon T}
		= \sem{\entailiso \omega' \colon T}.
	\]
\end{lemma}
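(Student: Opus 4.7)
The plan is to proceed by a straightforward induction on the derivation of $\omega \to \omega'$, treating each of the three rules of the iso operational semantics (Figure~\ref{fig:iso-operational}) separately. The key ingredients, all already available in the excerpt, are: Lemma~\ref{lem:sem-subst-iso} (compositionality of iso-substitution in the denotational semantics), the fixed point equation satisfied by $\fix$ on pointed dcpos (Theorem~\ref{th:kleene}), and the standard $\beta$-law of cartesian closed categories $\rmeval \circ \langle \rmcurry(f), g \rangle = f \circ \langle \iid, g \rangle$, which holds in $\dcpo$.

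For the fixed-point rule $\ffix \phi . \omega \to \omega[\ffix \phi . \omega / \phi]$, I would unfold the definition $\sem{\ffix \phi . \omega} = \fix(\sem{\Psi, \phi \colon T \entailiso \omega \colon T})$ and use Kleene's fixed point property together with the universal property of currying to rewrite this as $\sem{\Psi, \phi \colon T \entailiso \omega \colon T} \circ \langle \iid, \fix(\sem{\Psi, \phi \colon T \entailiso \omega \colon T}) \rangle = \sem{\Psi, \phi \colon T \entailiso \omega \colon T} \circ \langle \iid, \sem{\ffix \phi . \omega} \rangle$, and then conclude by Lemma~\ref{lem:sem-subst-iso}. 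For the $\beta$-rule $(\lambda \phi . \omega_1) \omega_2 \to \omega_1[\omega_2/\phi]$, the chain is: $\sem{(\lambda \phi . \omega_1) \omega_2} = \rmeval \circ \langle \rmcurry(\sem{\omega_1}), \sem{\omega_2}\rangle = \sem{\omega_1} \circ \langle \iid, \sem{\omega_2}\rangle = \sem{\omega_1[\omega_2/\phi]}$, again by Lemma~\ref{lem:sem-subst-iso}. For the congruence rule $\omega_1 \omega_2 \to \omega'_1 \omega_2$ with $\omega_1 \to \omega'_1$, since $\sem{\omega_1 \omega_2} = \rmeval \circ \langle \sem{\omega_1}, \sem{\omega_2}\rangle$ is defined compositionally in $\sem{\omega_1}$, the induction hypothesis $\sem{\omega_1} = \sem{\omega'_1}$ immediately yields the result.

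Type preservation under reduction is guaranteed by Lemma~\ref{lem:iso-subject-reduction}, so all the judgements appearing above are well-formed and their interpretations live in the expected dcpos. I do not anticipate any real obstacle: the only point requiring minor care is making sure that the substitution lemma for isos is applied with the correct contexts in the fixed-point case, where the variable $\phi$ being substituted is exactly the one bound by $\ffix$; this is a bookkeeping issue rather than a mathematical one. Everything else is a routine unfolding of the denotational definitions given in \secref{sub:rev-isos}.
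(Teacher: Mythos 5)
Your proposal is correct and follows essentially the same route as the paper's proof: induction on the reduction rules, unfolding the denotational definitions, using the fixed-point equation for the $\ffix$ case, the cartesian-closed $\beta$-law for the application case, and Lemma~\ref{lem:sem-subst-iso} to land on the substituted iso, with the congruence case discharged by the induction hypothesis. Your explicit pairing $\pv{\iid}{\sem{\ffix\phi.\omega}}$ in the fixed-point case is in fact slightly more careful than the paper's shorthand and matches the statement of Lemma~\ref{lem:sem-subst-iso} exactly.
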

\begin{proof}
	The proof is done by induction on $\to$.
	\begin{itemize}
		\item $\ffix \phi . \omega \to \omega[\ffix \phi . \omega/\phi]$.
			\begin{align*}
				&\ \sem{\ffix \phi . \omega} & \\
				&= \fix \sem\omega
				& \text{(by definition)} \\
				&= \sem\omega \circ \fix(\sem \omega)
				& \text{(fixed point)} \\
				&= \sem \omega \circ \sem{\ffix \phi . \omega}
				& \text{(by definition)} \\
				&= \sem{\omega[\ffix \phi . \omega/\phi]}
				& \text{(Lem.~\ref{lem:sem-subst-iso})} 
			\end{align*}
		\item $(\lambda \phi . \omega_1) \omega_2 \to \omega_1[\omega_2/\phi]$.
			\begin{align*}
				&\ \sem{(\lambda \phi . \omega_1) \omega_2} & \\
				&= \rmeval \circ \pv{\rmcurry(\sem{ \omega_1})}{\sem{\omega_2}}
				& \text{(by definition)} \\
				&= \sem{\omega_1} \circ \sem{\omega_2}
				& \text{(\secref{sub:ccc})} \\
				&= \sem{\omega_1[\omega_2/\phi]}
				& \text{(Lem.~\ref{lem:sem-subst-iso})} 
			\end{align*}
		\item $\omega_1 \omega_2 \to \omega'_1 \omega_2$. 
			Direct with the induction hypothesis.
	\end{itemize}
\end{proof}

\begin{proposition}[Soundness]
	\label{prop:rev-soundness}
	Given a well-formed term judgement $~\entail t \colon A$,
  if $t \to t'$, then 
	\[
		\sem{\entail t \colon A} = \sem{\entail t' \colon A}.
	\]
\end{proposition}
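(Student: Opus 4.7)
The plan is to proceed by induction on the derivation of $t \to t'$, following the four cases of the term operational semantics in Figure~\ref{fig:rev-operational-semantics}. The first base case, $\isobasique~v' \to \sigma(e_i)$ with $\match{\sigma}{v_i}{v'}$, is exactly the content of Lemma~\ref{lem:rev-beta-sound}, so it is simply a quotation of that lemma. Note that since we are in a closed context, we do not need to worry about the iso context $\Psi$: the Scott continuous maps are constants functions on $\sem \Psi = 1$.

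The second base case, $\letv{p}{v}{t} \to \sigma(t)$ with $\match{\sigma}{p}{v}$, follows by unfolding the denotation of $\mathtt{let}$, which (for a closed term with empty $\Delta_1, \Delta_2$) essentially gives $\sem{t} \circ \sem{v}$. Then Lemma~\ref{lem:rev-matching-semantics} — which only uses the shape of the pattern $p$, and thus applies here — yields $\sem{p}\pinv \circ \sem{v} = \sem{\sigma}$; since $\sem{p}$ is just a reassociation/identity on the tensor product of the pattern's variables, this combines with Proposition~\ref{prop:rev-substitution-interpretation} to give $\sem{\letv{p}{v}{t}} = \sem{t} \circ \sem{\sigma} = \sem{\sigma(t)}$.

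For the congruence case $C_\to[t_1] \to C_\to[t_2]$, the induction hypothesis gives $\sem{t_1} = \sem{t_2}$, and the clause-by-clause definition of the denotational semantics of each constructor in $C_\to[-]$ (injections, tensor, fold, iso application, and let) makes the denotation of $C_\to[-]$ a Scott continuous operation on the hole, so the equation propagates immediately. Finally, for $\omega~t \to \omega'~t$ with $\omega \to \omega'$, Lemma~\ref{lem:rev-iso-sound} gives $\sem{\omega} = \sem{\omega'}$, and the definition $\sem{\omega~t} = \comp \circ \pv{\sem{\omega}}{\sem{t}}$ concludes.

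The main obstacle is ensuring that the let case is handled cleanly: one must check that Lemma~\ref{lem:rev-matching-semantics}, stated for values, extends to the pattern $p$ (which is a variable or a tensor of patterns and is a special case of a value), and that the implicit symmetries/permutations used when splitting $\Delta_1, \Delta_2$ in the semantics of $\mathtt{let}$ match the ones used in the interpretation of $\sigma$. As noted in the chapter, these coherence issues are routine in a symmetric monoidal category and are worked out up to permutation without impacting the argument.
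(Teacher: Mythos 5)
Your proof is correct and follows essentially the same route as the paper: induction on the reduction relation, with the $\beta$-case discharged by Lemma~\ref{lem:rev-beta-sound}, the $\mathtt{let}$-case by Proposition~\ref{prop:rev-substitution-interpretation}, the iso-reduction case by Lemma~\ref{lem:rev-iso-sound}, and the congruence cases by compositionality of the interpretation. The extra detail you give in the $\mathtt{let}$-case (routing through Lemma~\ref{lem:rev-matching-semantics} for the pattern) is a harmless elaboration of what the paper treats as an immediate consequence of the substitution lemma.
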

\begin{proof}
	The proof is done by induction on $\to$.
	\begin{itemize}
		\item $\isobreduit v' \to \sigma(v_j)$. This is covered by Lemma~\ref{lem:rev-beta-sound}.
		\item $\letv{p}{v}{t} \to \sigma(t)$. This is a conclusion of
			Prop.~\ref{prop:rev-substitution-interpretation}.
		\item $\omega~t \to \omega'~t$. We conclude with the previous lemma
			(Lemma~\ref{lem:rev-iso-sound}).
		\item $\ini t \to \ini t'$ when $t \to t'$. The induction hypothesis gives $\sem t = \sem{t'}$
			and then 
			\[
				\sem{\ini t} = \iota_i \sem t = \iota_i \sem{t'} = \sem{\ini t'}.
			\]
		\item $t \otimes t_2 \to t' \otimes t_2$ when $t \to t'$.
			\[
				\sem{t \otimes t_2} = \sem{t} \otimes \sem{t_2} \overset{\text{IH}}{=} 
				\sem{t'} \otimes \sem{t_2} = \sem{t' \otimes t_2}.
			\]
		\item $\isobreduit t \to \isobreduit t'$ when $t \to t'$. In general,
			\[
				\sem{\omega~t} = \sem{\omega} \sem t \overset{\text{IH}}{=} \sem\omega \sem{t'}
				= \sem{\omega~t'}.
			\]
		\item $\fold t \to \fold t'$ when $t \to t'$. 
			\[
				\sem{\fold t} = \alpha \sem t \overset{\text{IH}}{=} \alpha \sem{t'} = \sem{\ini t'}.
			\]
		\item $\letv{p}{t}{t_2} \to \letv{p}{t'}{t_2}$ when $t \to t'$.
			\[
				\sem{\letv{p}{t}{t_2}} = \sem{t_2} (\iid \otimes \sem t)
				\overset{\text{IH}}{=} \sem{t_2} (\iid \otimes \sem{t'})
				= \sem{\letv{p}{t'}{t_2}}.
			\]
	\end{itemize}
\end{proof}

\begin{corollary}
	\label{cor:rev-soundness}
	Given a well-formed term judgement $~\entail t \colon A$ such that
	$t \downarrow$, then 
	\[
		\sem{\entail t \colon A} \neq 0_{\sem A}.
	\]
\end{corollary}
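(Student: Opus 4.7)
The plan is to reduce the statement to a property of closed values via Proposition~\ref{prop:rev-soundness}, then prove that property by structural induction. Specifically, since $t \downarrow$ means $t \to^\ast v$ for some closed value $\vdash v \colon A$, iterating soundness along the reduction chain yields $\sem{\vdash t \colon A} = \sem{\vdash v \colon A}$, so it suffices to show $\sem v \neq 0_{1,\sem A}$ for every closed value.

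The key lemma I would establish is that the denotation of every closed value $\vdash v \colon A$ is \emph{total}, i.e., $\res{\sem v} = \iid_1$. The proof is by induction on the structure of $v$. For $v = *$ we have $\sem * = \iid_1$, which is trivially total. For $v \in \{\inl v', \inr v', \fold v'\}$, the interpretation has the form $f \circ \sem{v'}$ where $f$ is total ($\iota_l$ and $\iota_r$ are total by Definition~\ref{def:dis-tensor}; $\alpha^{\sem{X \vDash A}}$ is a natural isomorphism and hence total). A short calculation using the axioms of Definition~\ref{def:restr} shows that $\res{fg} = \res g$ whenever $\res f = \iid$ — one direction comes from $fg\res g = fg$ together with axiom 3, the other from $g = g\res{fg}$ obtained via axiom 4 — so by the induction hypothesis $\res{\sem v} = \res{\sem{v'}} = \iid_1$. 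For $v = v_1 \otimes v_2$, the fact that $\otimes$ is a restriction bifunctor (Definition~\ref{def:restr-cat}) gives $\res{\sem{v_1} \otimes \sem{v_2}} = \res{\sem{v_1}} \otimes \res{\sem{v_2}} = \iid_1 \otimes \iid_1$, and postcomposition with the (total) unitor $1 \otimes 1 \xrightarrow{\cong} 1$ preserves totality by the same lemma.

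Given totality, the conclusion is short: if $\sem v = 0_{1,\sem A}$, then applying the restriction operator gives $\iid_1 = \res{\sem v} = \res{0_{1,\sem A}} = 0_{1,1}$. But $0_{1,1}$ factors through the restriction zero object $0$, so $\iid_1 = 0_{1,1}$ forces $1$ to be a retract of $0$; combined with the fact that $0$ is both initial and terminal this yields $1 \cong 0$, contradicting the standing assumption that $0$ and $1$ are distinct objects. The only mildly technical step is the restriction-preservation lemma for composition with a total morphism; this is part of the standard toolkit for restriction categories and falls out directly from the axioms of Definition~\ref{def:restr}, so the rest of the argument is a straightforward unwinding.
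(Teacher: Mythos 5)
Your proof is correct and follows the same route as the paper's: reduce to a closed value $v$ with $t \to^* v$ via Proposition~\ref{prop:rev-soundness}, then argue that $\sem{v} \neq 0_{1,\sem A}$. The difference is that the paper simply asserts this last fact, whereas you prove it, via the lemma that every closed value has a total denotation ($\res{\sem v} = \iid_1$); your induction is sound --- the injections and the initial-algebra isomorphism are total, totality is preserved by post-composition with total morphisms and by the restriction bifunctor $\otimes$ --- and the concluding contradiction $\iid_1 = \res{0_{1,\sem A}} = 0_{1,1}$ correctly uses Remark~\ref{rem:zero} and the factorisation of zero maps through the restriction zero object. The only point to flag is that the standing hypothesis ``$0$ and $1$ are distinct'' must be read as ``$1$ is not a retract of $0$'' (equivalently $\iid_1 \neq 0_{1,1}$) for your final step to close; that is clearly the intended non-degeneracy assumption, but your argument makes explicit that this is the precise form in which it is needed.
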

\begin{proof}
	Knowing that there is a value $v$ such that $t \to^* v$,
	Prop.~\ref{prop:rev-soundness} ensures that 
	\[
		\sem t = \sem v \neq 0.
	\]
\end{proof}

However, this was only the simple direction of the main theorem. The proof of the
other implication is the focus of the next section.
 
\subsection{Proof of Adequacy}
\label{sub:rev-adequacy}

Our proof of adequacy involves a finitary sublanguage, where the number of
recursive calls is controlled syntactically. We show the adequacy result for
the finitary terms thanks to strong normalisation, and then show that it
implies adequacy for the whole language; this is achieved by observing that a
normalising finitary term is also normalising in its non-finitary form. 

\paragraph{Finitary sublanguage.}
We introduce the syntax for finitary terms, where the number of possible
reductions is limited by the syntax itself. The grammar of finitary isos
is given by:
\[
	\omega~::=~\isobreduit \alt \lambda \phi . \omega \alt \phi \alt \omega
	\omega \alt \nfix n \phi . \omega
\]
where $n$ is a natural number. The iso $\nfix n \phi . \omega$ has the same
typing rule as $\ffix \phi . \omega$ above. We introduce syntactic sugar,
that denotes an expression that will never reduce, by induction on iso types:
\[
	\begin{array}{c}
		\Omega_{A \iso B} \defeq \set{\mid \cdot} 
		\qquad
		\Omega_{T_1 \to T_2} \defeq \lambda \phi^{T_1} . \Omega_{T_2}
	\end{array}
\]
The syntax of finitary terms does not change compared to the language presented
at the beginning of the chapter, with the addition of a term $\bot$, which
indicates that there was no match when applying an iso abstraction to a value.
The finitary operational semantics is then defined as:
\[
	\begin{array}{c}
		\infer{\nfix 0 \phi^ T . \omega \finto \Omega_T}{}
		\qquad
		\infer{\nfix{n+1} \phi . \omega \finto \omega[\nfix n \phi . \omega /
		\phi]}{}
		\mynl
		\infer{(\lambda \phi. \omega_1) \omega_2 \finto \omega_1[\omega_2/\phi]}{}
		\qquad
		\infer{\omega_1\omega_2 \finto \omega'_1 \omega_2}{\omega_1 \finto \omega'_1}
	\end{array}
\]
\[
	\begin{array}{c}
		\infer{ \isobreduit v' \finto \sigma(e_i)}{
			\match{\sigma}{v_i}{v'}}
		\qquad
		\infer{ \isobreduit v' \finto \bot}{
			\forall i, \neg(\match{\sigma}{v_i}{v'})}
		\\[1.5ex]
		\infer{C_\to[t_1] \finto C_\to[t_2]}{t_1 \finto t_2}
		\qquad
		\infer{C_\to[t] \finto \bot}{t \finto \bot}
		\\[1.5ex]
		\infer{\letv{p}{v}{t} \finto \sigma(t)}{\match{\sigma}{p}{v}}
		\qquad
		\infer{\omega~t \finto \omega'~t}{\omega \finto \omega'}
	\end{array}
\]

\begin{lemma}[Iso Subject Reduction]
	\label{lem:fin-iso-subject-reduction}
	If $\Psi \entailiso \omega \colon T$ is well-formed, $\omega$ is finitary and
	$\omega \finto \omega'$, then $\Psi \entailiso \omega' \colon T$.
\end{lemma}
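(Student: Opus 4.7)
\medskip

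The plan is to mimic the proof of Lemma~\ref{lem:iso-subject-reduction} from the non-finitary setting, proceeding by induction on the derivation of $\omega \finto \omega'$. The only new reduction rules compared to Figure~\ref{fig:iso-operational} concern the $\nfix{n}{}$ combinator, and the congruence/$\beta$ cases carry over essentially unchanged. Before attacking the induction I would first establish the standard substitution lemma for finitary isos: if $\Psi, \psi\colon T_1 \entailiso \omega_2\colon T_2$ with $\omega_2$ finitary and $\Psi \entailiso \omega_1\colon T_1$ with $\omega_1$ finitary, then $\Psi \entailiso \omega_2[\omega_1/\psi]\colon T_2$. This is the analogue of Lemma~\ref{lem:iso-subst-type}, proved by a straightforward mutual induction on term and iso judgements, with the added clause for $\nfix{n}{}$ which is handled exactly like $\ffix{}{}$.

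Next, I would prove the auxiliary fact that $\Omega_T$ is well-typed at $T$ for every iso type $T$, by induction on $T$. The base case $T = A\iso B$ is immediate since $\Omega_{A\iso B} = \set{\mid\cdot}$ is the empty iso abstraction: all the side-conditions in the typing rule of Figure~\ref{fig:rev-typterm} hold vacuously. The arrow case $T = T_1\to T_2$ follows from the induction hypothesis on $T_2$ together with the $\lambda$-introduction rule. This fact handles the base rule $\nfix{0}{\phi^T}.\omega \finto \Omega_T$: the right-hand side is well-typed at $T$ regardless of $\omega$.

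The remaining cases of the induction on $\finto$ then proceed as follows. For $\nfix{n+1}{\phi}.\omega \finto \omega[\nfix{n}{\phi}.\omega/\phi]$, well-typedness of the redex gives $\Psi,\phi\colon T\entailiso\omega\colon T$, and from this we also derive $\Psi \entailiso \nfix{n}{\phi}.\omega\colon T$ (using the typing rule for $\nfix{n}{}$, which we assume is the same as for $\ffix{}{}$); the substitution lemma above concludes. The $\beta$-rule $(\lambda\phi.\omega_1)\omega_2 \finto \omega_1[\omega_2/\phi]$ is identical to the one in Lemma~\ref{lem:iso-subject-reduction}, again discharged by the substitution lemma. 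The congruence rule $\omega_1\omega_2 \finto \omega'_1\omega_2$ is immediate from the induction hypothesis applied to $\omega_1\finto\omega'_1$, using that $\omega_2$ retains its type.

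I do not foresee any genuine obstacle: the only genuinely new ingredient compared to Lemma~\ref{lem:iso-subject-reduction} is the well-typedness of $\Omega_T$, which is a routine induction on iso types. The bookkeeping around finitariness is trivial since all the reducts of finitary isos are finitary (a fact worth stating as a separate observation, but it does not affect typing).
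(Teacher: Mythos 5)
Your proposal is correct and follows essentially the same route as the paper, whose proof is simply the remark that the argument is strongly similar to Lemma~\ref{lem:iso-subject-reduction}; you have merely filled in the details the paper leaves implicit (the well-typedness of $\Omega_T$ and the $\nfix{n}{}$ cases), all of which are exactly the routine steps the paper's one-line proof presupposes.
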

\begin{proof}
	Strongly similar to Lemma~\ref{lem:iso-subject-reduction}.
\end{proof}

\begin{lemma}[Iso Progress]
	\label{lem:iso-fin-progress}
	If $\Psi \entailiso \omega \colon T$ is well-formed and $\omega$ is finitary,
	$\omega$ is either an iso value or there exists $\omega'$ such that $\omega
	\finto \omega'$.
\end{lemma}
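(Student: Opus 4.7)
The plan is to proceed by induction on the typing derivation $\Psi \entailiso \omega \colon T$, closely following the structure of the proof of Lemma~\ref{lem:iso-progress}. As in that lemma, I would implicitly assume $\Psi$ is empty (the only non-trivial reading of the statement, since an iso variable $\phi$ is manifestly neither an iso value nor reducible); with this assumption, the iso-variable case is vacuous.

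For the base cases, iso abstractions $\isobreduit$ and $\lambda$-abstractions $\lambda \phi . \omega$ are iso values by definition, and both remain in the finitary grammar. The genuinely new case compared to Lemma~\ref{lem:iso-progress} is $\nfix n \phi . \omega$: whichever natural number $n$ is, the finitary operational semantics provides a reduction, namely $\nfix 0 \phi^T . \omega \finto \Omega_T$ when $n = 0$, and $\nfix{m+1} \phi . \omega \finto \omega[\nfix m \phi . \omega / \phi]$ when $n = m + 1$. For an application $\omega_2 \omega_1$, applying the induction hypothesis to $\omega_2$ gives two subcases: if $\omega_2 \finto \omega_2'$, the congruence rule propagates to $\omega_2 \omega_1 \finto \omega_2' \omega_1$; if $\omega_2$ is an iso value, typing forces it to have function type $T_1 \to T_2$, so it cannot be an iso abstraction (whose type has shape $A \iso B$), hence $\omega_2 = \lambda \phi . \omega_2'$ and the $\beta$-rule fires.

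The main obstacle is essentially bookkeeping: confirming that the notion of iso value in the finitary sublanguage is the same as in the full language (immediate, since both syntactic forms are finitary), and noting that the $\nfix$ cases always progress. The argument is otherwise a direct transcription of the closed-iso case analysis of Lemma~\ref{lem:iso-progress}, with the $\nfix$ unfolding rules replacing the $\ffix$ unfolding rule.
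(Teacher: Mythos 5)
Your proposal is correct and takes essentially the same route as the paper, whose proof is simply a remark that the argument is strongly similar to Lemma~\ref{lem:iso-progress}: induction on the typing derivation, with abstractions and $\lambda$-abstractions as values, $\nfix n$ always reducing (for $n=0$ and $n=m+1$ alike), and the application case resolved by the induction hypothesis plus the typing constraint that the applied iso must be a $\lambda$-abstraction. Your observation about reading $\Psi$ as empty is a reasonable way to handle the variable case, which the paper's one-line proof leaves implicit.
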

\begin{proof}
	Strongly similar to Lemma~\ref{lem:iso-progress}.
\end{proof}

\begin{lemma}[Subject Reduction]
	\label{lem:fin-subject-reduction}
	If $\Psi; \Delta \entail t \colon A$ is well-formed, $t$ is finitary and $t
	\finto t'$, then $\Psi; \Delta \entail t' \colon A$ is also well-formed.
\end{lemma}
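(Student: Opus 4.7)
The plan is to proceed by induction on the derivation of $t \finto t'$, in direct parallel with the proof of Lemma~\ref{lem:rev-subject-reduction}, but paying extra attention to the finitary-specific rules. First I would note that the argument presupposes that the finitary syntax assigns types to the two new constructs: the non-reducible iso $\Omega_T$ is well-typed at any iso-type $T$ (this is immediate by induction on $T$, using the empty clause rule at the base case $A \iso B$ and the $\lambda$-abstraction rule in the arrow case), and the error term $\bot$ is well-typed at every $A$. With this preliminary out of the way, the inductive cases break up cleanly.

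For the non-congruence iso-reduction cases, namely $\nfix 0 \phi^T . \omega \finto \Omega_T$, $\nfix{n+1} \phi . \omega \finto \omega[\nfix{n} \phi . \omega / \phi]$, and $(\lambda \phi . \omega_1) \omega_2 \finto \omega_1[\omega_2/\phi]$, the result follows respectively from the well-typedness of $\Omega_T$, from (the finitary analogue of) Lemma~\ref{lem:iso-subst-type} applied with the smaller index, and from the same lemma applied to the $\lambda$-abstraction. For the congruence rules $\omega_1 \omega_2 \finto \omega'_1 \omega_2$ and $\omega~t \finto \omega'~t$, we appeal to Lemma~\ref{lem:fin-iso-subject-reduction} to type the reduced iso, then re-assemble the application using the typing rule for iso-application. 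The $\beta$-rule for isos applied to a value, $\isobreduit~v' \finto \sigma(v_j)$ when $\match{\sigma}{v_j}{v'}$, is handled by observing that the premises of the typing rule for the iso give $\Psi ; \Delta_j \entail e_j \colon B$, and that the matching produces a well-formed valuation $\Psi ; \Delta_j \Vdash \sigma$ (assuming without loss of generality that finitary terms are closed under the same substitution lemma), so Lemma~\ref{lem:rev-term-subst} yields the goal. The $\mathtt{let}$-reduction rule $\letv{p}{v}{t} \finto \sigma(t)$ is dealt with identically.

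The genuinely new cases are the ones that produce $\bot$, namely the failed pattern-match $\isobreduit~v' \finto \bot$ (when no clause matches) and the propagation $C_\to[t] \finto \bot$ (when $t \finto \bot$). In both cases the right-hand side is simply $\bot$ at the appropriate type, and this is typeable by the preliminary observation above; there is nothing inductive to check. For the remaining congruence rule $C_\to[t_1] \finto C_\to[t_2]$ with $t_1 \finto t_2$, one performs a secondary induction on the shape of the context $C_\to[-]$: inversion of the typing derivation at the hole yields a well-typed $t_1$, the primary induction hypothesis gives that $t_2$ has the same type, and reapplying the relevant typing rule for the outer constructor rebuilds the derivation for $C_\to[t_2]$.

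I do not anticipate a real obstacle: the proof is essentially a copy of the non-finitary subject-reduction argument, with the two easy additional observations that $\Omega_T$ and $\bot$ are typeable at every (appropriate) type, so that the new $\nfix 0$ and failure rules pose no difficulty. The only point that requires genuine attention is making sure the finitary syntax and its typing rules indeed include $\bot$ at every type and that the congruence rule for $\bot$-propagation is sensible with respect to types; once this is fixed at the level of the typing discipline, the proof is routine.
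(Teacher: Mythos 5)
Your proof is correct and follows essentially the same route as the paper, which simply remarks that the argument is "strongly similar" to the non-finitary subject reduction (induction on the reduction relation, using the iso and term substitution lemmas and closure of typing under the congruence contexts). Your additional observations that $\Omega_T$ and $\bot$ must be typeable at every (appropriate) type are exactly the finitary-specific points the paper leaves implicit, and they are handled correctly.
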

\begin{proof}
	Strongly similar to Lemma~\ref{lem:rev-subject-reduction}.
\end{proof}

\begin{lemma}[Progress]
	\label{lem:fin-progress}
	If $~\entail t \colon A$ and $t$ is finitary, then:
	\begin{itemize}
		\item either $t$ is a value,
		\item or $t \to \bot$,
		\item or there exists $t'$ such that $t \finto t'$.
	\end{itemize}
\end{lemma}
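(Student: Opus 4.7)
The plan is to proceed by induction on the derivation of the typing judgement $~\entail t \colon A$. Because the empty context rules out the variable case, every rule corresponds to one of the term constructors $*$, $\inl$, $\inr$, $t_1 \otimes t_2$, $\fold t$, $\omega~t$, or $\letv{p}{t_1}{t_2}$, and in each case the strategy is the same: either the outermost constructor is already a value constructor whose immediate subterms happen to be values, or one of the subterms makes progress and an appropriate congruence rule $C_\to[-]$ (which I read off directly from the grammar of evaluation contexts) propagates that step to the whole term. The rules $C_\to[t] \finto \bot$ ensure that the $\bot$ case lifts through congruences in the same way as genuine reductions.

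The introduction forms $*$, $\inl t$, $\inr t$, $\fold t$ are straightforward: apply the induction hypothesis to the subterm, and either it is a value (so the whole term is), or it steps and the corresponding congruence applies. For $t_1 \otimes t_2$ I must respect the call-by-value discipline encoded in $C_\to$: the context $C_\to[-] \otimes t_2$ is used first, so I apply the IH to $t_1$; if $t_1$ is already a value $v_1$, I then use the context $v_1 \otimes C_\to[-]$ and apply the IH to $t_2$, giving a value, a $\finto$-step, or $\bot$.  The case $\letv{p}{t_1}{t_2}$ is analogous: if $t_1$ is not a value, reduce it via the IH; if $t_1$ is a value $v$, the pattern $p$ (a variable or an iterated pair of variables) always matches $v$, yielding some $\sigma$ with $\match{\sigma}{p}{v}$, so the $\mathtt{let}$-reduction rule fires.

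The main case to argue carefully is $\omega~t$ with $~\entailiso \omega \colon A \iso B$ and $~\entail t \colon A$. If $t$ is not a value, the IH on $t$ combined with the congruence $\isobreduit~C_\to[-]$ (or the $\omega$-application congruence) yields progress. If $t$ is a value $v$, I invoke Lemma~\ref{lem:iso-fin-progress} on $\omega$: either $\omega \finto \omega'$, in which case the rule $\omega~t \finto \omega'~t$ applies directly, or $\omega$ is an iso value. Since $\omega$ has ground iso type $A \iso B$ (and not a function type $T_1 \to T_2$), the only possible shape for an iso value here is an abstraction $\isobreduit$. At that point I split on whether some clause matches $v$: if $\match{\sigma}{v_i}{v}$ for some $i$, then $\isobreduit~v \finto \sigma(e_i)$; otherwise every $v_i$ fails to match $v$ and the $\bot$ rule $\isobreduit~v \finto \bot$ applies.

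The only real subtlety is this last case analysis on the shape of an iso value of ground type: the argument relies on inversion of the iso typing rules to exclude $\lambda\phi.\omega$ at a ground iso type, and on the dichotomy between ``some clause matches'' and ``no clause matches'' (decidable by inspection of the finitely many patterns $v_i$) to split into the $\finto \sigma(e_i)$ and $\finto \bot$ outcomes. Everything else is a routine dispatch over constructors via the congruence contexts $C_\to$.
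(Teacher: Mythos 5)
Your proposal is correct and follows essentially the same route as the paper: induction on the typing derivation, dispatching constructor cases through the congruence contexts $C_\to$, invoking the iso progress lemma in the $\omega~t$ case, and splitting on whether some clause matches to produce either a $\finto$-step or $\bot$. The only differences are presentational — you spell out the inversion argument excluding $\lambda\phi.\omega$ at ground iso type and the canonical-forms argument that a let-pattern always matches a closed value of tensor type, both of which the paper leaves implicit.
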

\begin{proof}
	The proof is done by induction on $~\entail t \colon A$.
	\begin{itemize}
		\item $~\entail * \colon \one$ is a value.
		\item $~\entail t_1 \otimes t_2 \colon A \otimes B$.
			By induction hypothesis, either $t_1$ reduces, in which case $t_1 \otimes
			t_2$ reduces too, or $t_1$ is a value. If $t_1$ is a value, by induction
			hypothesis, either $t_2$ reduces, in which case $t_1 \otimes t_2$
			reduces, or $t_2$ is a value, and thus $t_1 \otimes t_2$ is a value.
		\item $\Psi ; \Delta \entail \ini t \colon A_1 \oplus A_2$.
			By induction hypothesis, either $t$ reduces, in which case $\ini t$ too,
			or $t$ is a value, and thus $\ini t$ is a value.
		\item $\Psi ; \Delta \entail \fold t \colon \mu X . A$.
			By induction hypothesis, either $t$ reduces, in which case $\fold t$ too,
			or $t$ is a value, and thus $\fold t$ is a value.
		\item $\Psi ; \Delta \entail \omega~t \colon B$.
			Thanks to Lemma~\ref{lem:iso-fin-progress}, $\omega$ either reduces, in which 
			case $\omega~t$ also reduces, or it is an iso value $\isobreduit$. In the last case,
			in the induction hypothesis gives that either $t$ reduces, in which case
			$\omega~t$ also reduces, or $t$ is value. In that case, either $t$ matches with
			one $v_i$, and $\omega~t$ reduces, or it does not, and $\omega~t \to \bot$.
		\item $\Psi ; \Delta_1, \Delta_2 \entail \letv{p}{t_1}{t_2} \colon B$.
			In both cases of the induction hypothesis, $\letv{p}{t_1}{t_2}$ reduces.
	\end{itemize}
\end{proof}


\paragraph{Strong Normalisation.}
We prove that the reduction $\finto$ is strongly normalising, by observing that
this system is separated in two very distinct systems: one that
reduces the iso $\lambda$-terms, and another that performs the reversible
computations. We show that both those systems can be extended to commute with
each other, which ensures strong normalisation as long as both are strongly
normalising. We start by introducing the system that performs the reductions on
isos.

\[
	\begin{array}{c}
		\infer{\nfix 0 \phi^ T . \omega \isoto \Omega_T}{}
		\qquad
		\infer{\nfix{n+1} \phi . \omega \isoto \omega[\nfix n \phi . \omega /
		\phi]}{}
		\mynl
		\infer{(\lambda \phi. \omega_1) \omega_2 \isoto \omega_1[\omega_2/\phi]}{}
		\qquad
		\infer{\omega_1\omega_2 \isoto \omega'_1 \omega_2}{\omega_1 \isoto \omega'_1}
	\end{array}
\]
\[
	\begin{array}{c}
		\infer{
			\isobreduit \isoto \left\{
				\begin{array}{ll}
					\alt v_i \iso e_i & \text{ if } i \neq j \\
					\alt v_j \iso e'_j & \text{ else}
				\end{array}
			\right\}
		}{
			e_j \isoto e'_j
		}
		\mynl
		\mynl
		\infer{C_\to[t_1] \isoto C_\to[t_2]}{t_1 \isoto t_2}
		\qquad
		\infer{\letv{p}{t'}{t_1} \isoto \letv{p}{t'}{t_2}}{t_1 \isoto t_2}
		\qquad
		\infer{\omega~t \isoto \omega'~t}{\omega \isoto \omega'}
	\end{array}
\]

\begin{lemma}
	\label{lem:iso-sn}
	The reduction system $\isoto$ is strongly normalising.
\end{lemma}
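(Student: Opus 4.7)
The plan is to prove strong normalisation of $\isoto$ by a reducibility argument in the style of Tait and Girard, adapted to our finitary setting. First I would observe that the term-level rules of $\isoto$ (the congruence rules passing reduction through $C_\to[-]$, through $\letv{p}{t'}{-}$, and through $\omega~-$) are pure propagation: every $\isoto$-redex occurs at an iso position inside the term. Consequently, an infinite $\isoto$-reduction sequence on a term produces an infinite reduction sequence of iso-subterms at some fixed position (since a term of finite size has only finitely many iso positions, some position must be reduced infinitely often). This reduces the problem to showing: every well-typed iso $~\entailiso \omega \colon T$, whose iso-subterms in the $e_j$ are themselves strongly $\isoto$-normalising, is strongly $\isoto$-normalising.

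Next I would define, by induction on the iso type $T$, a family of reducibility predicates $\mathcal{R}_T$. At ground type $A \iso B$, $\mathcal{R}_{A\iso B}$ consists of those isos $\omega$ such that $\omega$ is strongly $\isoto$-normalising and every iso-subterm appearing in any $e_j$ inside any reduct of $\omega$ is again reducible at its respective iso type. At arrow type, $\mathcal{R}_{T_1 \to T_2}$ is defined in the usual way: $\omega \in \mathcal{R}_{T_1 \to T_2}$ iff $\omega\omega' \in \mathcal{R}_{T_2}$ for every $\omega' \in \mathcal{R}_{T_1}$. One then establishes the three standard lemmas: every reducible iso is strongly $\isoto$-normalising; reducibility is closed under $\isoto$-expansion at head redexes (provided the contracted reduct is reducible); and every well-typed iso is reducible under every reducible substitution for its free iso-variables (the fundamental lemma, proved by induction on the iso typing derivation).

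The two non-standard cases of the fundamental lemma are $\nfix n \phi^T.\omega$ and $\isobreduit$. For the bounded fixed point I would argue by a secondary induction on $n$: the base $\nfix 0 \phi^T . \omega \isoto \Omega_T$, and $\Omega_T$ is easily seen to lie in $\mathcal{R}_T$ by induction on $T$ (empty iso abstractions are trivially reducible at ground type, and $\lambda\phi.\Omega_{T_2}$ is reducible at arrow type since its $\beta$-reducts are $\Omega_{T_2}$); at the successor step $\nfix{n+1} \phi . \omega \isoto \omega[\nfix n \phi . \omega/\phi]$, the inductive hypothesis gives $\nfix n \phi . \omega \in \mathcal{R}_T$, so the fundamental lemma applied to $\omega$ yields reducibility of the reduct, and $\isoto$-expansion closure transports this back to $\nfix{n+1} \phi . \omega$. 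For iso abstractions, reducibility at ground type is exactly what is requested, and it follows from the reducibility of each iso-subterm of each $e_j$, which is given by the outer induction.

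The main obstacle will be formulating the ground-type case of $\mathcal{R}$ so that it is simultaneously strong enough to carry the fundamental lemma (it must track reducibility of all iso-subterms in every $e_j$) and well-founded (the reducibility of $e_j$'s iso-subterms refers only to strictly smaller iso-type derivations or to isos appearing under the same typing context, not to $\omega$ itself). This requires a careful definition in which reducibility of $\isobreduit$ is not asked to imply reducibility of the whole expression $e_j$, but only of the iso-subterms occurring inside $e_j$, which are typed by strictly sub-derivations of the iso-abstraction rule; an auxiliary induction on the term structure of $e_j$ then bridges reducibility of those iso-subterms to strong normalisation of $\isoto$ on the entire body via the congruence observation made at the very beginning.
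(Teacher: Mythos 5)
Your route differs from the paper's: the paper proves this lemma by translating finitary isos and terms into a simply-typed $\lambda$-calculus with pairs (keeping the bounded fixpoint, and sending an iso abstraction to the tuple of the translations of its clause bodies), and then invokes strong normalisation of that calculus via reducibility candidates. You instead run the reducibility argument directly on the iso syntax. Both ultimately rest on the same technique; the translation route saves you from setting up a logical relation on the mutually recursive iso/term syntax but owes a check that every $\isoto$-step is simulated by at least one step of the target calculus (which the paper leaves implicit), whereas your direct route owes nothing to a simulation but must get the logical relation right. Your opening observation --- that the term-level rules of $\isoto$ are pure congruences which never create, duplicate or erase iso positions, so an infinite reduction of a term forces an infinite reduction of some iso sitting at a fixed position --- is correct and is the load-bearing fact in either version.

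There is, however, a genuine gap in your definition of $\mathcal{R}_{A\iso B}$: it is not well-founded as stated. You require that every iso-subterm of every clause body of every reduct of $\omega$ be reducible \emph{at its respective iso type}, but those subterms can have types that are not smaller than $A \iso B$ --- a body $e_j$ may contain $\letv{y}{(\omega'\,\omega'')~x}{y}$ with $\omega'$ of arrow type, or an application of type $A\iso B$ itself --- so the ground-type clause quantifies over $\mathcal{R}$ at arbitrary types and the induction on types collapses. You flag this as ``the main obstacle'' but do not resolve it, and appealing to ``strictly smaller typing sub-derivations'' does not help a definition that must be given by induction on types alone. The repair is that the extra condition is unnecessary: take $\mathcal{R}_{A\iso B}$ to be simply the strongly $\isoto$-normalising isos. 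This suffices because $\isoto$ contains no rule that consumes a ground-type iso abstraction --- the pattern-matching $\beta$-rule lives in $\termto$ --- so within $\isoto$ a ground-type iso is only ever rewritten in place, never opened up by an argument. The iso-abstraction case of the fundamental lemma (stated by mutual induction on iso and term typing judgements) then follows from your own fixed-position observation: after substituting reducible isos for the free iso-variables, each iso-subterm of each $e_j$ is reducible, hence strongly normalising, by the induction hypothesis on its typing sub-derivation, and an infinite reduction of the abstraction would have to reduce one of these finitely many positions infinitely often. With that repair, your treatment of $\nfix{n}$ by secondary induction on $n$ and of $\Omega_T$ by induction on $T$ goes through.
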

\begin{proof}
	We translate finitary isos and finitary terms into a simply-typed
	$\lambda$-calculus with pairs. We write $\abs t$ the translation of $t$.
	\[
		\begin{array}{c}
			\abs * \defeq *
			\qquad
			\abs x \defeq *
			\qquad
			\abs{\ini t} \defeq \abs t
			\qquad
			\abs{t \otimes t'} \defeq \pv{\abs t}{\abs{t'}}
			\mynl
			\abs{\fold t} \defeq \abs t
			\qquad
			\abs{\omega~t} \defeq \pv{\abs\omega}{\abs t}
			\qquad
			\abs{\letv{p}{t}{t'}} \defeq \pv{\abs t}{\abs{t'}}
			\mynl
			\abs{\isobreduit} \defeq \langle \abs{e_i} \rangle _{i \in I}
			\qquad
			\abs{\lambda \phi . \omega} \defeq \lambda \phi . \abs \omega
			\qquad
			\abs \phi \defeq \phi
			\mynl
			\abs{\omega_2\omega_1} \defeq \abs{\omega_2} \abs{\omega_1}
			\qquad
			\abs{\nfix n \phi . \omega} \defeq \nfix n \phi . \abs \omega
		\end{array}
	\]
	This $\lambda$-calculus is strong normalising; this can be proven with
	candidates of reducibility, \emph{à la} System F \cite[Chapters 11 and
	14]{girard1989proofs}. 
\end{proof}

We then introduce the reductions that perform the reversible computation,
our equivalent of $\beta$-reduction but for our iso language.
\[
	\begin{array}{c}
		\infer{ \isobreduit v' \termto \sigma(e_i)}{
			\match{\sigma}{v_i}{v'}}
		\qquad
		\infer{ \isobreduit v' \termto \bot}{
			\forall i, \neg(\match{\sigma}{v_i}{v'})}
		\\[1.5ex]
		\infer{C_\to[t_1] \termto C_\to[t_2]}{t_1 \termto t_2}
		\qquad
		\infer{C_\to[t] \termto \bot}{t \termto \bot}
		\qquad
		\infer{\letv{p}{v}{t} \termto \sigma(t)}{\match{\sigma}{p}{v}}
	\end{array}
\]
This system is strongly normalising thanks to a decreasing argument:
the number of isos and $\mathtt{let}$ contructors strictly decreases
when applying the reduction $\termto$.
\begin{lemma}
	\label{lem:term-red-sn}
	The reduction system $\termto$ is strongly normalising.
\end{lemma}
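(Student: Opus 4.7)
The plan is to exhibit an explicit natural-number measure on finitary terms and show it strictly decreases with every $\termto$ step. Define
\[
  \mu(t) \eqdef \#\{\text{subterms of } t \text{ of the form } \omega~t' \text{ or } \letv{p}{t_1}{t_2}\},
\]
and set $\mu(\bot) = 0$. The key observation is that the value grammar
\[
  v ::= * \alt x \alt \inl v \alt \inr v \alt \pair{v_1}{v_2} \alt \fold v
\]
contains no iso-application and no let-binding constructor, so $\mu(v) = 0$ for every value $v$. By a straightforward induction on $t$, this extends to substitutions: if $\sigma$ is a valuation mapping variables to values, then $\mu(\sigma(t)) = \mu(t)$ for all $t$.

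With this in hand, I would check each base case of $\termto$ and verify strict decrease. For $\isobreduit~v' \termto \sigma(e_i)$ with $\match{\sigma}{v_i}{v'}$: the left-hand side has measure $1 + \mu(v') + \mu(e_i) = 1 + \mu(e_i)$ (since $v'$ is a value), while the right-hand side has measure $\mu(\sigma(e_i)) = \mu(e_i)$ by the substitution property. Similarly, $\letv{p}{v}{t} \termto \sigma(t)$ reduces the measure from $1 + \mu(t)$ to $\mu(t)$. The reductions to $\bot$ obviously decrease the measure to $0$.

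For the congruence cases $C_\to[t_1] \termto C_\to[t_2]$ (and analogously for $C_\to[t] \termto \bot$), I would prove by induction on the shape of $C_\to$ that $\mu(C_\to[t]) = \kappa(C_\to) + \mu(t)$ for some $\kappa(C_\to) \in \mathbb{N}$ depending only on the context (the grammar of $C_\to$ shows that each syntactic layer contributes a fixed, non-negative amount, and the value $v$ in the $v \otimes C_\to[-]$ case contributes $0$). Hence $\mu(t_1) > \mu(t_2)$ immediately lifts to $\mu(C_\to[t_1]) > \mu(C_\to[t_2])$.

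Since $\mu$ takes values in $\mathbb{N}$ and strictly decreases along every $\termto$ step, there can be no infinite reduction sequence, which establishes strong normalisation. I do not expect any serious obstacle here: the only subtle point is confirming that the value grammar really excludes $\omega~t'$ and $\mathtt{let}$-forms, but this is immediate from Figure~\ref{fig:termtypes}, and the substitution invariance $\mu(\sigma(t)) = \mu(t)$ is then a routine induction.
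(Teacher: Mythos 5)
Your argument is the same one the paper uses: $\termto$ strictly decreases the count of iso-applications and $\mathtt{let}$-constructors, hence terminates. One point needs care, though: for the step $\isobreduit~v' \termto \sigma(e_i)$ your accounting only works if $\mu$ counts occurrences \emph{inside} iso abstractions (i.e., in the clause expressions $e_j$). If ``subterm'' is read at the term level only --- isos being a separate syntactic category --- then the left-hand side has measure $1$ while $\sigma(e_i)$ can have measure at least $2$ (e.g.\ when $e_i = \letv{p}{\omega~p'}{e}$), so the measure would \emph{increase}. With the inclusive reading the left-hand side has measure at least $1 + \mu(e_i)$, since the abstraction contains every clause and not just the matched one (so your formula $1+\mu(v')+\mu(e_i)$ is an undercount, but the strict inequality survives), and the rest of your proof, including the substitution invariance and the context decomposition, goes through as stated.
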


\begin{lemma}
	\label{lem:red-commutative}
	$\termto~\isoto~\subseteq~\isoto~\termto$. 
\end{lemma}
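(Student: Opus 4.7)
The plan is to prove the inclusion by case analysis on the derivation of the $\termto$ step, with an appeal to the induction hypothesis in the congruence case. Concretely, given $t \termto t' \isoto t''$, I will exhibit an intermediate $s$ with $t \isoto s \termto t''$.

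First, consider the two base cases of $\termto$. If $t = \isobreduit\, v' \termto \sigma(e_i) = t'$ via $\match{\sigma}{v_i}{v'}$, then the key observation is that $\sigma$ is built from pattern-matching a value against a pattern, so every variable in its support is mapped to a (sub)value; since values contain no isos and no $\isoto$-redexes, any $\isoto$ reduction occurring in $\sigma(e_i)$ is located strictly inside the $e_i$ part of the substitution. Hence there is an actual $\isoto$-reduction $e_i \isoto e'_i$ with $t'' = \sigma(e'_i)$. I would then apply the rule that lifts such a clause-internal reduction to the whole iso abstraction, obtaining $\isobreduit \isoto \omega'$ where $\omega'$ has $e_i$ replaced by $e'_i$, and then the iso-contextual rule $\omega~v' \isoto \omega'~v'$, giving $t \isoto \omega'\, v' \termto \sigma(e'_i) = t''$ because the matching against $v_i$ is unaffected. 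The \textbf{let}-case $\letv{p}{v}{t_1} \termto \sigma(t_1)$ is completely parallel: $\sigma$ is again value-only, so $\sigma(t_1) \isoto t''$ comes from some $t_1 \isoto t'_1$ with $t'' = \sigma(t'_1)$, and then $\letv{p}{v}{t_1} \isoto \letv{p}{v}{t'_1} \termto \sigma(t'_1) = t''$.

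For the congruence case $t = C_\to[t_1] \termto C_\to[t_2]$ with $t_1 \termto t_2$, I would split on where the subsequent $\isoto$ step fires in $C_\to[t_2]$. If it fires strictly inside the context $C_\to[-]$ (in some iso, clause body, or $\letv{p}{-}{-}$ body that does not coincide with the hole), then the same redex already exists in $C_\to[t_1]$ because contexts are unchanged by the term reduction, so we can perform that iso step first, yielding some $C'_\to[t_1]$, and then close the square with the congruence rule of $\termto$ to reach $C'_\to[t_2] = t''$. If instead it fires inside the hole, so that $t_2 \isoto t'_2$ and $t'' = C_\to[t'_2]$, I invoke the induction hypothesis on $t_1 \termto t_2 \isoto t'_2$ to obtain $s_1$ with $t_1 \isoto s_1 \termto t'_2$, and conclude by $C_\to[t_1] \isoto C_\to[s_1] \termto C_\to[t'_2] = t''$. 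The only remaining $\termto$ rule, $C_\to[t_1] \termto \bot$ (or $\isobreduit v' \termto \bot$), is vacuous here because $\bot$ admits no $\isoto$-reduction.

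I expect the main obstacle to be the first base case: one must make precise the fact that the $\isoto$-reduction occurring in $\sigma(e_i)$ genuinely refines to an $\isoto$-reduction in $e_i$. This relies crucially on the structural invariant that valuations produced by the matching judgement $\match{\sigma}{v_i}{v'}$ map variables only to values, so no new iso-level redex can be created by the substitution (substitution commutes with, and reflects, $\isoto$-reductions). Once this invariant is stated cleanly as an auxiliary lemma, the rest of the case analysis reduces to mechanical applications of the congruence rules of $\isoto$ and $\termto$.
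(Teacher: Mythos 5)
The paper states this lemma without proof, so there is nothing to compare your argument against; I can only judge it on its own terms. Your overall strategy --- case analysis on the derivation of the $\termto$ step, with the auxiliary observation that a valuation produced by $\match{\sigma}{v_i}{v'}$ (or $\match{\sigma}{p}{v}$) maps variables to values only, so that substitution both preserves and \emph{reflects} $\isoto$-steps --- is the right one, and your two base cases are sound: values contain no isos, so every $\isoto$-redex of $\sigma(e_i)$ sits inside an iso inherited from $e_i$, and since $t$ is a value iff $\sigma(t)$ is, the congruence path down to that redex is replayable in $e_i$ itself.

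There is, however, a genuine gap in the congruence case, in the subcase where the $\isoto$-step fires inside the context. The claim that ``the same redex already exists in $C_\to[t_1]$, so we can perform that iso step first'' conflates the \emph{existence} of the redex as a subterm with the \emph{derivability} of a reduction at that position. The context grammar contains the production $v \otimes C_\to[-]$, which only lets a reduction descend into the right component of a tensor when the left component is already a value; the $\isoto$ congruence reuses these same contexts. Now take $C_\to[-] = (-) \otimes t_b$ where $t_b$ carries an $\isoto$-redex (say $t_b = ((\lambda\phi.\omega_0)\,\omega_1)~u$), and let $t_1 = \isobreduit~v'$ be a non-value whose matched clause body is a value, so that $t_1 \termto t_2$ with $t_2$ a value. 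Then $t_1 \otimes t_b \termto t_2 \otimes t_b \isoto t_2 \otimes t'_b$, yet no $\isoto$-step from $t_1 \otimes t_b$ can reach $t_b$ --- indeed, if the clause bodies of $t_1$ and the value $v'$ are $\isoto$-normal, $t_1 \otimes t_b$ has no $\isoto$-redex at all, so the square cannot be closed (not even up to the weaker quasi-commutation needed for the Bachmair--Dershowitz argument). The repair has to happen at the level of the definitions rather than inside your proof: one must let $\isoto$ fire under \emph{arbitrary} contexts rather than under the evaluation contexts $C_\to[-]$, which is harmless for its strong normalisation since it is a purely administrative reduction on isos. With that adjustment your case analysis does go through; as written, it does not.
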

	
We say that $\isoto$ commutes \cite{bachmair1986commutation} over $\termto$.
This and the strong normalisation of both systems $\isoto$ and $\termto$
ensures the strong normalisation of them combined $\isoto \cup \termto$
\cite[Theorem 1]{bachmair1986commutation}.

\begin{lemma}
	\label{lem:sub-system-fin}
	$\finto~\subseteq~\isoto \cup \termto$.
\end{lemma}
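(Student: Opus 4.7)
The plan is to proceed by mutual induction on the derivations of $\omega \finto \omega'$ and $t \finto t'$, matching each rule of $\finto$ to a rule of either $\isoto$ or $\termto$. The proof is largely a case analysis on the last rule used, so I will only sketch which case goes to which system and flag the one subtlety.

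First, for the purely iso-level rules of $\finto$, namely $\nfix 0 \phi^T.\omega \finto \Omega_T$, $\nfix{n+1}\phi.\omega \finto \omega[\nfix n \phi.\omega/\phi]$, $(\lambda\phi.\omega_1)\omega_2 \finto \omega_1[\omega_2/\phi]$, and the congruence $\omega_1\omega_2 \finto \omega_1'\omega_2$, I would observe that $\isoto$ contains an identical rule in each case (the congruence case being handled by induction). For the term-level reversible rules $\isobreduit\,v' \finto \sigma(e_i)$ (with $\match\sigma{v_i}{v'}$), $\isobreduit\,v' \finto \bot$ (with no match), and $\letv{p}{v}{t} \finto \sigma(t)$ (with $\match\sigma p v$), I would observe that $\termto$ has exactly the same rules.

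For the congruence rules $C_\to[t_1] \finto C_\to[t_2]$ (from $t_1 \finto t_2$) and $C_\to[t] \finto \bot$ (from $t \finto \bot$), I would apply the induction hypothesis to obtain $t_1 \isoto t_2$ or $t_1 \termto t_2$ in the first case (respectively $t \termto \bot$ in the second, since only $\termto$ produces $\bot$), and then conclude using the corresponding $C_\to$-congruence rule present in both $\isoto$ and $\termto$. For the mixed congruence $\omega~t \finto \omega'~t$ with $\omega \finto \omega'$, the key observation is that no iso-level reduction of $\finto$ can yield $\bot$ (inspection of the four iso rules shows they all return an iso), so the induction hypothesis gives $\omega \isoto \omega'$, and the $\isoto$-rule $\omega~t \isoto \omega'~t$ applies.

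The proof is essentially bookkeeping; the only substantive point is the observation in the last case that iso reductions cannot diverge to $\bot$, which is why the induction hypothesis lands on the $\isoto$ side rather than $\termto$. I would therefore state the result as following immediately by induction on $\finto$, with the single remark above justifying the dispatching between $\isoto$ and $\termto$ in each case.
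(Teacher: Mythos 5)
Your proof is correct and takes the same approach as the paper, which simply states that the result is direct by checking that each rule of $\finto$ belongs to $\isoto$ or to $\termto$. Your additional remarks on dispatching the congruence cases (in particular that iso-level reductions land in $\isoto$ and that only $\termto$ produces $\bot$) are accurate and merely make explicit what the paper leaves implicit.
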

\begin{proof}
	The proof is direct, by showing that any rule in $\finto$ is either
	in $\isoto$ or $\termto$.
\end{proof}

\begin{theorem}
	\label{th:fin-sn}
	The reduction system $\finto$ is strongly normalising.
\end{theorem}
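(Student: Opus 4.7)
The plan is to assemble the theorem directly from the four preceding lemmas, since the decomposition work has already been done. The core observation is that $\finto$ is a subsystem of $\isoto \cup \termto$ (Lemma~\ref{lem:sub-system-fin}), so any infinite $\finto$-reduction sequence would yield an infinite reduction sequence in the union system. It thus suffices to show that $\isoto \cup \termto$ is strongly normalising.

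First, I would recall that both component systems are strongly normalising: $\isoto$ by Lemma~\ref{lem:iso-sn} (via the translation into simply-typed $\lambda$-calculus with pairs) and $\termto$ by Lemma~\ref{lem:term-red-sn} (via the decreasing measure on the number of iso applications and $\mathtt{let}$ constructors). Next, I would invoke the commutation Lemma~\ref{lem:red-commutative}, which states $\termto \circ \isoto \subseteq \isoto \circ \termto$, expressing that $\isoto$-steps can always be postponed past $\termto$-steps. The main tool is then the commutation theorem of Bachmair and Dershowitz \cite[Theorem~1]{bachmair1986commutation}, which guarantees that the union of two strongly normalising reduction systems is strongly normalising whenever one commutes over the other in this sense. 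Applying it to $\isoto$ and $\termto$ yields strong normalisation of $\isoto \cup \termto$.

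Finally, I would close by observing that if $t_0 \finto t_1 \finto t_2 \finto \cdots$ were an infinite $\finto$-reduction, Lemma~\ref{lem:sub-system-fin} would convert it into an infinite $(\isoto \cup \termto)$-reduction, contradicting the previous step. Hence $\finto$ is strongly normalising.

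No step is genuinely hard here, since all the intellectual content was dispatched in the lemmas. The only subtlety worth double-checking is that the commutation condition stated in Lemma~\ref{lem:red-commutative} matches exactly the hypothesis of the Bachmair–Dershowitz theorem (and not one of its variants such as strong commutation or local commutation); given the formulation $\termto \circ \isoto \subseteq \isoto \circ \termto$ used here, this is the standard form required, so the application is immediate.
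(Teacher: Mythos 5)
Your proposal is correct and follows exactly the paper's own argument: strong normalisation of $\isoto$ and $\termto$ separately, the commutation lemma, the Bachmair--Dershowitz theorem to conclude for the union, and the subsystem inclusion to transfer the result to $\finto$. Nothing to add.
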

\begin{proof}
	With \cite[Theorem 1]{bachmair1986commutation} and Lemmas~\ref{lem:iso-sn},
	\ref{lem:term-red-sn}, and~\ref{lem:red-commutative}, we can ensure that
	$\isoto \cup \termto$ is strongly normalising. We conclude then with
	Lemma~\ref{lem:sub-system-fin}, that shows that $\finto$ is a subsystem of a
	strongly normalising system.
\end{proof}

\paragraph{Finitary adequacy.}
We prove adequacy, but for finitary terms. To do so, we also need to introduce
the denotational semantics of finitary isos. The interpratation of $\nfix n$,
instead of being Kleene's fixed point, is the morphism obtained by unfolding
$n$ times. The interpretation of $\Omega_T$ is the bottom element of $\sem T$.

\begin{lemma}
	\label{lem:fin-iso-sound}
	Given a well-formed finitary iso judgement $~\entailiso \omega \colon T$, if
	$\omega \finto \omega'$, then 
	\[
		\sem{\entailiso \omega \colon T}
		= \sem{\entailiso \omega' \colon T}.
	\]
\end{lemma}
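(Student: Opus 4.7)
The plan is to proceed by induction on the derivation of $\omega \finto \omega'$, examining the four iso-level reduction rules in turn. Since the statement concerns only iso judgements, only reductions whose conclusion is of the form $\omega \finto \omega'$ at iso type need to be considered.

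First I would handle the two congruence-style cases. The rule $\omega_1 \omega_2 \finto \omega'_1 \omega_2$ falls out directly from the induction hypothesis applied to $\omega_1 \finto \omega'_1$, together with the compositional definition of the semantics of iso application via $\rmeval$. Next, the $\beta$-style rule $(\lambda \phi . \omega_1)\,\omega_2 \finto \omega_1[\omega_2/\phi]$ is handled exactly as in the corresponding case of Lemma~\ref{lem:rev-iso-sound}: unfold the definitions of $\sem{\lambda \phi.\omega_1}$ and $\sem{\omega_2 \omega_1}$ using $\rmcurry$/$\rmeval$, then apply the iso-substitution Lemma~\ref{lem:sem-subst-iso}.

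The two new cases, specific to the finitary calculus, concern $\nfix{}$. The interpretation of $\nfix{n} \phi.\omega$ is defined as the $n$-th unfolding starting from the bottom element, i.e.\ $\sem{\nfix{0}\phi.\omega} = \bot_{\sem T}$ and $\sem{\nfix{n+1}\phi.\omega} = \sem{\Psi,\phi\colon T\entailiso \omega \colon T}\circ \pv{\iid}{\sem{\nfix{n}\phi.\omega}}$. For the rule $\nfix{n+1}\phi.\omega \finto \omega[\nfix{n}\phi.\omega/\phi]$, combining this defining equation with Lemma~\ref{lem:sem-subst-iso} yields the required equality. For the rule $\nfix{0}\phi^T.\omega \finto \Omega_T$, I would first prove, by a side induction on the iso type $T$, that $\sem{\Omega_T} = \bot_{\sem T}$: at base type $A \iso B$ the empty iso abstraction is interpreted as the empty join, which by Remark~\ref{rem:zero} is the zero morphism, i.e.\ the bottom of $\CC(\sem A, \sem B)$; at arrow type $T_1 \to T_2$, $\sem{\lambda \phi.\Omega_{T_2}} = \rmcurry(\sem{\Omega_{T_2}}\circ \pi_1) = \rmcurry(\bot \circ \pi_1) = \bot$ by Scott continuity of currying. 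The conclusion then follows from $\sem{\nfix{0}\phi.\omega} = \bot_{\sem T}$ by definition.

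The main obstacle is really just this side lemma that $\sem{\Omega_T} = \bot_{\sem T}$; it is the one point where the finitary proof genuinely differs from the non-finitary one, and it requires carefully inspecting how bottoms in pointwise-ordered function dcpos interact with $\rmcurry$ and with the join structure defining the interpretation of iso abstractions. All the other cases are routine adaptations of Lemma~\ref{lem:rev-iso-sound}.
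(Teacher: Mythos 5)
Your proof is correct and takes essentially the same route the paper intends: the paper merely declares the argument ``strongly similar'' to Lemma~\ref{lem:rev-iso-sound} after stipulating that $\sem{\nfix n \phi.\omega}$ is the $n$-fold unfolding and that $\sem{\Omega_T}$ is the bottom element of $\sem T$. The only difference is that you derive $\sem{\Omega_T}=\bot_{\sem T}$ compositionally (empty join at ground iso type, strictness of currying at arrow type) rather than taking it as a definition, which is a harmless and slightly more self-contained refinement.
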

\begin{proof}
	Strongly similar to Lemma~\ref{lem:rev-iso-sound}.
\end{proof}

\begin{proposition}[Finitary Soundness]
	\label{prop:fin-soundness}
	Given a well-formed finitary term judgement $~\entail t \colon A$, if $t
	\finto t'$, then 
	\[
		\sem{\entail t \colon A} = \sem{\entail t' \colon A}.
	\]
\end{proposition}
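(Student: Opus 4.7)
The plan is to proceed by induction on the derivation of $t \finto t'$, closely mirroring the structure of Proposition~\ref{prop:rev-soundness}. For every rule that already appears (up to the presence of $\nfix{n}$ and $\bot$) in the non-finitary system, I would lift the exact same computation: the reversible $\beta$-reduction rule is handled by Lemma~\ref{lem:rev-beta-sound}, the $\mathtt{let}$-reduction by Proposition~\ref{prop:rev-substitution-interpretation}, and the congruence rules inside contexts $C_\to[-]$ by functoriality of the various combinators ($\iota_i$, $\otimes$, $\alpha$, $(-)\circ(-)$) together with the induction hypothesis. The iso-level congruence $\omega~t \finto \omega'~t$ reduces, as before, to a soundness statement about $\finto$ on isos, which is exactly the content of Lemma~\ref{lem:fin-iso-sound}.

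Next I would handle the two families of rules that are new to the finitary system. For $\nfix 0 \phi^T.\omega \finto \Omega_T$, by the stipulated finitary semantics of $\nfix{0}$ (``unfold zero times'') both sides denote the bottom element of the dcpo $\sem T$, hence the equality is immediate; for $\Omega_T$ this uses a straightforward induction on $T$, exploiting that in a cartesian closed $\dcpo$-enriched setting, $\sem{\lambda\phi.\Omega_{T_2}}$ is the constant-bottom Scott continuous map. For $\nfix{n+1}\phi.\omega \finto \omega[\nfix n \phi.\omega/\phi]$, the equality follows from the substitution Lemma~\ref{lem:sem-subst-iso} together with the definition of $\sem{\nfix{n+1}\phi.\omega}$ as one unfolding of $\sem{\nfix{n}\phi.\omega}$.

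The genuinely new content concerns the rules producing $\bot$. First, one fixes the denotation by setting $\sem{\bot_A} = 0_{1,\sem A}$. For $\isobreduit\,v' \finto \bot$ under the hypothesis $\forall i,\ \neg\match{\sigma}{v_i}{v'}$, I would establish a small pattern/orthogonality lemma, analogous to Proposition~\ref{prop:patterns-are-orthogonal} but for the value grammar of this chapter: if no $v_i$ matches $v'$, then $v_i \bot v'$ for every $i$, proved by induction on the shape of $v_i$ and $v'$ using only the clauses of Definition~\ref{def:rev-orthogonality}. Lemma~\ref{lem:rev-orthogonal-semantics} then yields $\sem{v_i}\pinv \circ \sem{v'} = 0$ for each $i$, and the computation of Lemma~\ref{lem:rev-beta-sound} specialises to
\[
\sem{\omega\,v'} \;=\; \bigvee_{i} \sem{e_i} \circ \sem{v_i}\pinv \circ \sem{v'} \;=\; \bigvee_{i} 0 \;=\; 0 \;=\; \sem{\bot_B}.
\]
For $C_\to[t] \finto \bot$ whenever $t \finto \bot$, the induction hypothesis gives $\sem t = 0$, and then absorption of $0$ under pre/post-composition, tensor, injections, $\alpha$, and joins propagates $\sem{C_\to[t]} = 0 = \sem{\bot}$.

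The main obstacle will be the orthogonality lemma supporting the $\isobreduit\,v' \finto \bot$ case: Proposition~\ref{prop:patterns-are-orthogonal} is stated in a previous chapter for a different value grammar, so it needs to be re-proved here by induction, taking care that the values of the current language include $\fold{(-)}$ and that ``no match'' has to cascade through every constructor cleanly. Once this lemma is in place, the rest of the induction is bookkeeping and direct reuse of soundness arguments already available.
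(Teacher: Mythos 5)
Your proof is correct and follows exactly the route the paper intends: the paper's own proof is just the remark ``strongly similar to Proposition~\ref{prop:rev-soundness}'', i.e.\ an induction on the reduction rules reusing Lemmas~\ref{lem:rev-beta-sound}, \ref{lem:fin-iso-sound} and Proposition~\ref{prop:rev-substitution-interpretation}. You additionally spell out the finitary-specific cases the paper leaves implicit ($\sem{\bot}=0$, the bottom interpretation of $\Omega_T$ and $\nfix{0}$, and the ``no match implies orthogonal'' lemma needed for $\isobreduit\,v'\finto\bot$), and these details are all sound.
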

\begin{proof}
	Strongly similar to Prop.~\ref{prop:rev-soundness}.
\end{proof}

\begin{theorem}[Finitary Adequacy]
	\label{th:fin-adeq}
	Given a well-formed finitary term judgement $~\entail t \colon A$,
	$t \downarrow$ iff $\sem{\entail t \colon A} \neq 0_{\sem A}$.
\end{theorem}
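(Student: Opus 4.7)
The plan is to prove the two directions separately. The forward direction (if $t \downarrow$ then $\sem{t} \neq 0$) follows immediately from finitary soundness (Proposition~\ref{prop:fin-soundness}): if $t \finto^* v$ for some value $v$, then $\sem{t} = \sem{v}$, so it suffices to show that every closed value has nonzero denotation. This is a routine induction on the typing derivation of closed values: $\sem{*} = \iid_1$, and since we assume $0$ and $1$ are distinct objects of $\CC$, we have $\iid_1 \neq 0$; the coproduct injections $\iota_l, \iota_r$ and the isomorphism $\alpha^{\sem{X \vDash A}}$ are total and monic (hence preserve nonzero morphisms); and $\sem{v_1 \otimes v_2} = \sem{v_1} \otimes \sem{v_2}$ has restriction $\res{\sem{v_1}} \otimes \res{\sem{v_2}} = \iid_1 \otimes \iid_1 = \iid_1 \neq 0$, using that the closed values have total denotation as established along the same induction.

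The harder direction is the converse, which we prove by contrapositive. Suppose $t$ does not reduce to any value. By strong normalisation of $\finto$ (Theorem~\ref{th:fin-sn}) there exists a normal form $t^*$ with $t \finto^* t^*$; by iterated subject reduction (Lemma~\ref{lem:fin-subject-reduction}) the judgement $\entail t^* \colon A$ remains well-formed; and by progress for finitary terms (Lemma~\ref{lem:fin-progress}) $t^*$ is either a value or $\bot$. Since $t$ does not terminate to a value, we must have $t^* = \bot$, i.e.\ $t \finto^* \bot$. The argument will conclude by showing that $\sem{t^*} = \sem{\bot} = 0$, and then invoking finitary soundness to deduce $\sem{t} = \sem{t^*} = 0$.

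The crux, and main obstacle, is therefore to extend finitary soundness to cover reductions ending in $\bot$: defining $\sem{\bot} \eqdef 0$ and verifying that $t \finto \bot$ implies $\sem{t} = 0$. This is done by induction on the derivation of $t \finto \bot$, with the critical base case being $\isobreduit~v' \finto \bot$ where $v'$ matches none of the $v_i$. This requires a companion lemma (analogous to Proposition~\ref{prop:patterns-are-orthogonal} from the quantum chapter) asserting that if a closed value $v'$ fails to match a pattern value $v_i$ then $v_i \, \bot \, v'$ in the sense of Definition~\ref{def:rev-orthogonality}. The lemma is proved by straightforward induction on the typing derivation of $v'$, examining each head constructor: if $v_i$ and $v'$ begin with different coproduct injections, orthogonality is immediate; if they share a head constructor but are not structurally matchable, the induction hypothesis applies to the subterms; and if $v_i$ is a variable, the match always succeeds, contradicting the assumption. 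Once this is established, Lemma~\ref{lem:rev-orthogonal-semantics} yields $\sem{v_i}\pinv \circ \sem{v'} = 0$ for all $i$, and therefore
\[
  \sem{\isobreduit~v'} \;=\; \Bigl(\bigvee_i \sem{e_i} \circ \sem{v_i}\pinv\Bigr) \circ \sem{v'} \;=\; \bigvee_i \sem{e_i} \circ \sem{v_i}\pinv \circ \sem{v'} \;=\; 0,
\]
using that composition distributes over joins in a join inverse rig category. The inductive cases $C_\to[t] \finto \bot$ follow at once because zero absorbs composition and tensor, and each term constructor is interpreted as a composition or tensor with fixed morphisms.

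Combining the two steps: if $\sem{t} \neq 0$, then the normal form $t^*$ has $\sem{t^*} = \sem{t} \neq 0$ by finitary soundness, so $t^* \neq \bot$, hence $t^*$ is a value by progress, and therefore $t \downarrow$. The only nontrivial piece of work is the matching-failure lemma and the extension of soundness to $\bot$; both are routine once set up, but they are where all the essential interplay between syntactic orthogonality, partial inverses, and the join structure of $\CC$ plays out.
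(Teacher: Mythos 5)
Your proof is correct and follows essentially the same route as the paper's: the forward direction by finitary soundness, and the converse by combining strong normalisation, progress, and the fact that a reduction ending in $\bot$ forces a zero denotation. The paper's own proof is terser and leaves implicit the two points you flesh out (that closed values denote nonzero morphisms, and that matching failure against every clause yields orthogonality and hence a zero composite), so your additional lemmas are exactly the content the paper is silently relying on.
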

\begin{proof}
	We prove both directions of the double implication.
	\begin{itemize}
		\item[$(\Rightarrow)$] Knowing that $t \downarrow$, there exists a value
			$v$ such that $t \finto^* v$, and Prop.~\ref{prop:fin-soundness} ensures
			that $\sem t = \sem v \neq 0$.
		\item[$(\Leftarrow)$] We know that $\finto$ is strongly normalising (see
			Th.~\ref{th:fin-sn}), which means that the reduction from $t$ terminates,
			and Lemma~\ref{lem:fin-progress} ensures that it terminates either on a
			value $v$ or on $\bot$. However, $\sem t \neq 0$, thus it cannot
			terminate on $\bot$ because of Prop.~\ref{prop:fin-soundness}. We have
			then $t \finto^* v$, which concludes.
	\end{itemize}
\end{proof}

\paragraph{Finitary Subterms.}
We conclude in two steps. First we observe that the interpretation of a term is
nothing more than the join of the interpretations of its finitary subterms,
then we show that a reduction $\to^*$ can be linked to a finitary reduction
$\finto^*$. 

\begin{definition}[Finitary Subiso]
	\label{def:fin-subiso}
	Let $\subfin$ be the smallest relation between (finitary or not) isos such that:
	\[
		\begin{array}{c}
			\infer{\nfix n \phi . \omega \subfin \ffix \phi . \omega}{}
			\qquad
			\infer{\omega[\omega_1/\phi] \subfin
			\omega[\omega_2/\phi]}{\omega_1 \subfin \omega_2}
		\end{array}
	\]
\end{definition}

\begin{lemma}
	\label{lem:subfin-to-dcpo}
	Given two well-formed (finitary or not) iso judgements
	$\Psi \entailiso \omega_1 \colon T$ and $\Psi \entailiso \omega_2 \colon T$
	such that $\omega_1 \subfin~\omega_2$, then
	\[
		\sem{\Psi \entailiso \omega_1 \colon T} 
		\leq \sem{\Psi \entailiso \omega_2 \colon T}.
	\]
\end{lemma}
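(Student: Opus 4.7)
The plan is to proceed by induction on the derivation of $\omega_1 \subfin \omega_2$, using the two generating rules from Definition~\ref{def:fin-subiso} as our two cases. Since $\sem{-}$ lands in $\sem T$, which is a dcpo (built inductively from morphism hom-dcpos of $\CC$ and cartesian exponentials in $\dcpo$), the order $\leq$ we must establish is the intrinsic dcpo order on $\sem T$.

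For the base case $\nfix n \phi . \omega \subfin \ffix \phi . \omega$, I would unfold the denotational semantics on both sides. The interpretation of $\nfix n \phi . \omega$ is given by unfolding $\sem{\Psi, \phi \colon T \entailiso \omega \colon T}$ exactly $n$ times starting from $\sem{\Omega_T} = \bot_{\sem T}$; explicitly, writing $F = \sem{\Psi, \phi \colon T \entailiso \omega \colon T} \colon \sem\Psi \times \sem T \to \sem T$ curried as $\hat F \colon \sem\Psi \to [\sem T \to \sem T]$, we get $\sem{\nfix n \phi . \omega}(g) = \hat F(g)^n(\bot)$. On the other hand, $\sem{\ffix \phi . \omega}(g) = \fix(\hat F(g)) = \sup_{k \in \N} \hat F(g)^k(\bot)$ by Kleene's fixed point theorem (Theorem~\ref{th:kleene}). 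Since each $\hat F(g)^n(\bot)$ is below the supremum of the chain, we obtain $\sem{\nfix n \phi . \omega}(g) \leq \sem{\ffix \phi . \omega}(g)$ for every $g$, i.e.\ $\sem{\nfix n \phi . \omega} \leq \sem{\ffix \phi . \omega}$ in the pointwise order.

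For the inductive case $\omega[\omega_1/\phi] \subfin \omega[\omega_2/\phi]$ coming from $\omega_1 \subfin \omega_2$, I would appeal to the substitution lemma for isos (Lemma~\ref{lem:sem-subst-iso}), which gives
\[
  \sem{\omega[\omega_i/\phi]} = \sem{\omega} \circ \pv{\iid}{\sem{\omega_i}}, \qquad i \in \{1,2\}.
\]
The induction hypothesis yields $\sem{\omega_1} \leq \sem{\omega_2}$. Pairing in $\dcpo$ is Scott continuous in each argument, and composition in a $\dcpo$-enriched cartesian closed category is monotone in both arguments; in particular, post-composing by the Scott continuous map $\sem{\omega}$ preserves the order. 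Combining these monotonicities gives
\[
  \sem{\omega} \circ \pv{\iid}{\sem{\omega_1}} \leq \sem{\omega} \circ \pv{\iid}{\sem{\omega_2}},
\]
which is exactly the conclusion $\sem{\omega[\omega_1/\phi]} \leq \sem{\omega[\omega_2/\phi]}$.

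The main technical obstacle I anticipate is really only in the base case: one has to be confident that the interpretation of $\nfix n \phi . \omega$ really coincides with the $n$-fold iterate of the semantic operator starting at the bottom element, and that $\sem{\Omega_T} = \bot_{\sem T}$ at every iso type $T$. The latter can be seen by induction on $T$: at ground iso type $A \iso B$, the empty iso $\{\mid \cdot\}$ interprets as the empty join, which is the zero map $0_{\sem A, \sem B}$, and this is the least element of $\CC(\sem A, \sem B)$ by Remark~\ref{rem:zero}; at arrow iso type, $\Omega_{T_1 \to T_2} = \lambda \phi . \Omega_{T_2}$ interprets as the constantly-$\bot$ function, which is the bottom of the function dcpo. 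The inductive step is then a routine monotonicity argument that poses no difficulty given the $\dcpo$-enrichment.
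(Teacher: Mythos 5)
Your proof is correct and is exactly the argument the paper has in mind: the paper's own proof is simply the word ``Direct,'' and your induction on the derivation of $\subfin$ — with the base case $\sem{\nfix n \phi . \omega}(g) = \hat F(g)^n(\bot) \leq \sup_k \hat F(g)^k(\bot) = \sem{\ffix \phi . \omega}(g)$ via Kleene's theorem, and the inductive case via Lemma~\ref{lem:sem-subst-iso} and monotonicity of composition and pairing in the $\dcpo$-enrichment — supplies precisely the details being elided. Your check that $\sem{\Omega_T} = \bot_{\sem T}$ is also consistent with the paper, which simply stipulates this as the definition of $\sem{\Omega_T}$.
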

\begin{proof}
	Direct.
\end{proof}

\begin{lemma}
	\label{lem:subfin-join}
	Given a well-formed iso judgement $\Psi \entailiso \omega \colon T$,
	we have:
	\[
		\sem{\Psi \entailiso \omega \colon T} = 
		\bigvee_{\underset{\omega' \text{ finitary}}{\omega'\subfin~\omega}}
		\sem{\Psi \entailiso \omega' \colon T}.
	\]
\end{lemma}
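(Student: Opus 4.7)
The plan is to prove this by induction on the typing derivation of $\omega$, exploiting the fact that all operations used to interpret isos are Scott continuous (by $\dcpo$-enrichment and by Lemma~\ref{lem:join-scott}), and that $\subfin$ allows the finitization of $\ffix$-subterms at every position independently.

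First I would establish two elementary facts about $\subfin$. (i) Reflexivity for finitary isos: any finitary $\omega'$ satisfies $\omega'\subfin\omega'$; this is a trivial instance of the substitution rule by induction on $\omega'$. (ii) A congruence principle: for each iso-forming rule, if the subisos $\omega_i$ are replaced by finitary $\omega'_i\subfin\omega_i$, the resulting iso is a finitary subiso of the original. This is immediate from the second rule of Definition~\ref{def:fin-subiso} applied iteratively, taking a context $\omega[\vec\phi]$ and substituting.

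Then I would carry out the induction on the derivation of $\Psi\entailiso\omega\colon T$. For the variable case $\omega=\phi$, both sides reduce to the projection $\pi_{\sem T}$, since $\phi$ is already finitary and has only itself as a finitary subiso. For $\lambda\phi.\omega$ and for $\omega_2\,\omega_1$, the induction hypothesis together with Scott continuity of $\rmcurry$, $\rmeval$ and pairing in $\dcpo$ gives the result: a sup of finitary subisos of $\omega$ (resp.\ pairs of finitary subisos of $\omega_1,\omega_2$) translates through these continuous operations into the sup of the denotations of the corresponding finitary subisos of the whole iso. For the iso abstraction $\isobreduit$, I would use the induction hypothesis on each iso occurring inside the expressions $e_i$ (visible via the interpretation of terms, which is defined by mutual induction with that of isos), then use Scott continuity of $\comp$, of $(-)\pinv$, and the fact that the join $\bigvee$ from Lemma~\ref{lem:join-scott} commutes with directed sups (because it \emph{is} the join in the functor dcpo).

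The central case is $\ffix\phi.\omega$. By Kleene's theorem (Theorem~\ref{th:kleene}),
\[
\sem{\ffix\phi.\omega} = \fix\bigl(\sem{\Psi,\phi\colon T\entailiso\omega\colon T}\bigr) = \bigvee_{n\in\N}\,\sem\omega^{\,n}(\bot),
\]
and by construction of the finitary semantics, $\sem\omega^{\,n}(\bot) = \sem{\nfix n\phi.\omega}$. Combining this with the induction hypothesis applied in the body of the $\ffix$ (which allows us to replace $\omega$ inside by any finitary $\omega'\subfin\omega$) and using the congruence fact (ii) above, one obtains that the sup ranges exactly over all finitary $\nfix n\phi.\omega'$ with $\omega'\subfin\omega$, which is precisely the collection of finitary subisos of $\ffix\phi.\omega$.

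The main obstacle will be the bookkeeping in the iso-abstraction case and in the $\ffix$ case: one must argue that taking the join over finitary subisos of a compound iso is the same as taking iterated/joint joins over finitary subisos of its components. This amounts to a \emph{swap of suprema}, legal because all indexing sets are directed (any two finitary subisos have a common finitary upper bound $\subfin$ by choosing pointwise maxima of the $\nfix n$ indices) and all operations involved are Scott continuous in each argument.
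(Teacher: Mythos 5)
Your proposal is correct and follows essentially the same route as the paper: the paper's (much terser) proof consists precisely of the observation that $\sem{\ffix \phi . \omega} = \bigvee_{n \in \N} \sem{\nfix n \phi . \omega}$ by construction of the finitary semantics (Kleene's fixed point), with the remaining cases dispatched "by induction" using Scott continuity of the interpreting operations. Your elaboration of that induction, and in particular your explicit treatment of the directedness of the set of finitary subisos needed to swap suprema, fills in details the paper leaves implicit but introduces no new idea.
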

\begin{proof}
	We observe that, by definition:
	\[
		\sem{\Psi \entailiso \ffix \phi . \omega \colon T}
		= \bigvee_{n \in \N} \sem{\Psi \entailiso \nfix n \phi . \omega \colon T}
	\]
	which proves the desired result in the case of $\ffix \phi . \omega$. The
	general conclusion falls by induction.
\end{proof}

We generalise to terms the definition of subisos given above.

\begin{definition}[Finitary Subterm]
	\label{def:fin-subterm}
	Let $\subfin$ be the smallest congruence relation between (finitary or not)
	terms such that:
	\[
		\begin{array}{c}
			\infer{\omega_1~t \subfin~\omega_2~t}{\omega_1 \subfin~\omega_2}
		\end{array}
	\]
\end{definition}

The following lemma follows from the previous definition and
Lemma~\ref{lem:subfin-join}; this is because composition is distributive
with joins (see Definition~\ref{def:join}).

\begin{lemma}
	\label{lem:subfin-join-term}
	Given a well-formed term judgement $\Psi; \Delta \entail t \colon A$,
	we have:
	\[
		\sem{\Psi; \Delta \entail t \colon A} = 
		\bigvee_{\underset{t' \text{ finitary}}{t'\subfin~t}}
		\sem{\Psi; \Delta \entail t' \colon A}.
	\]
\end{lemma}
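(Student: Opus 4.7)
The plan is to prove this by structural induction on the derivation of $\Psi; \Delta \entail t \colon A$. The base cases ($*$ and $x$) are trivial because these terms contain no isos at all, hence are already finitary, so the set on the right-hand side is just $\{t\}$ and the join reduces to $\sem{t}$. The $\geq$ direction of the equality is straightforward throughout the induction, following from the term-level analogue of Lemma~\ref{lem:subfin-to-dcpo}: any $t' \subfin t$ satisfies $\sem{t'} \leq \sem{t}$ by monotonicity of the involved semantic operations.

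The interesting direction is $\leq$. For the constructors that do not involve iso application ($\inl$, $\inr$, $\fold$, $\otimes$, and $\letv{p}{-}{-}$), the semantics is obtained by postcomposing or tensoring with a fixed morphism (such as $\iota_l$, $\alpha^{\sem{X \vDash A}}$, or $\iid$) applied to the interpretations of the immediate subterms. Since composition and the rig tensors $\otimes$ and $\oplus$ preserve all joins in a join inverse rig category (see Definition~\ref{def:join} and Corollary~\ref{cor:dcpofun}), I can pull the join over finitary subterms outside the constructor, using the induction hypothesis on the immediate subterms. For the congruence closure in Definition~\ref{def:fin-subterm}, the finitary subterms of, say, $\fold t_0$ are exactly $\fold t'_0$ for $t'_0$ finitary with $t'_0 \subfin t_0$, so the indexing set matches up.

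The key case is iso application $\omega~t_0$, where $\sem{\omega~t_0} = \comp \circ \pv{\sem{\omega}}{\sem{t_0}}$. Here I apply Lemma~\ref{lem:subfin-join} to $\sem{\omega}$ and the induction hypothesis to $\sem{t_0}$, writing each as a join of finitary approximants. Using again that composition distributes over joins on both sides in a join inverse rig category, I obtain
\[
\sem{\omega~t_0} \;=\; \bigvee_{\omega' \subfin~\omega} \bigvee_{t'_0 \subfin~t_0} \sem{\omega'} \circ \sem{t'_0} \;=\; \bigvee_{\omega', t'_0} \sem{\omega'~t'_0},
\]
and the corresponding $\omega'~t'_0$ is a finitary subterm of $\omega~t_0$ by Definition~\ref{def:fin-subterm}. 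A similar argument handles $\letv{p}{t_1}{t_2}$, using distributivity of composition and of tensor over joins to pull the two independent joins (over finitary subterms of $t_1$ and of $t_2$) outside of the composite interpretation.

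The main obstacle, though mild, is verifying that the indexing set of finitary subterms of a compound term genuinely decomposes as a product of finitary subterms/subisos of the immediate components, so that interchanging the joins with the semantic operations is legitimate. This is essentially combinatorial bookkeeping on Definition~\ref{def:fin-subterm}, made possible because $\subfin$ is defined as a congruence generated by the single $\ffix$/$\nfix n$ rule at the iso level, so finitary approximation commutes cleanly with every term constructor.
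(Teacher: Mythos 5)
Your proof is correct and follows essentially the same route as the paper, which justifies this lemma in a single sentence by appeal to Definition~\ref{def:fin-subterm}, the iso-level Lemma~\ref{lem:subfin-join}, and distributivity of composition (and the tensors) over joins. Your structural induction, with the iso-application case handled by interchanging the two joins, is a faithful elaboration of exactly that argument, including the bookkeeping that finitary approximation commutes with each term constructor.
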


It is also the right time to observe that if a term has a finitary subterm that
reduces to a value eventually, the former also normalises to the same value.

\begin{lemma}
	\label{lem:fin-red-to}
	Given a well-formed closed term judgement $~\entail t \colon A$, if
	there exists a finitary subterm $t' \subfin t$ and a value such that
	$t' \finto^* v$, then $t \to^* v$.
\end{lemma}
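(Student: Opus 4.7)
The plan is to establish a simulation between the finitary reduction $\finto$ on $t'$ and the non-finitary reduction $\to$ on $t$. Since $t'$ uses $\nfix n$ where $t$ uses $\ffix$, a single step of $\finto$ should correspond to a short sequence of $\to$-steps that preserves a generalised ``subterm'' relation.

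The main difficulty is the rule $\nfix 0 \phi^T . \omega \finto \Omega_T$: in the non-finitary term, $\ffix \phi . \omega$ can keep unfolding, so this step has no natural analogue in $\to$. To handle this, I would first extend $\subfin$ to a relation $\sqsubseteq$ on well-typed isos and terms that additionally includes $\Omega_T \sqsubseteq \omega$ for every well-typed closed iso $\omega \colon T$, and is a congruence closed under iso-substitution. The inclusion $\subfin ~\subseteq~ \sqsubseteq$ is then immediate. Moreover, for any term value $v$, if $v \sqsubseteq t$ then $t = v$, because values contain no iso-level constructs, so no clause of $\sqsubseteq$ can fire strictly inside $v$.

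Next, I would prove the central simulation lemma: if $t_1 \sqsubseteq t_2$ and $t_1 \finto t_1'$ with $t_1' \neq \bot$, then there exists $t_2'$ such that $t_2 \to^* t_2'$ and $t_1' \sqsubseteq t_2'$. The proof is a case analysis on the $\finto$-step. The $\nfix(n{+}1)$ rule is simulated by the corresponding $\ffix$-unfolding in $t_2$, using that iso-substitution preserves $\sqsubseteq$; the $(\lambda \phi . \omega_1)\,\omega_2$ rule is simulated by the matching $\beta$-step; the pattern-matching rule $\isobreduit\,v' \finto \sigma(e_i)$ is simulated by the same matching in the corresponding iso abstraction of $t_2$, whose left patterns $v_i$ coincide with those of $t_1$ by construction of $\sqsubseteq$; and the degenerate step $\nfix 0 \phi . \omega \finto \Omega_T$ is simulated by zero steps in $t_2$ thanks to the new rule $\Omega_T \sqsubseteq \ffix \phi . \omega$. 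The congruence cases propagate through the reduction contexts $C_\to$.

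Finally, I would conclude by iterating the simulation along $t' \finto t_1' \finto \cdots \finto v$. Because $\bot$ propagates through every reduction context $C_\to$, the hypothesis that the sequence ends at a value $v$ guarantees that no intermediate $t_i'$ equals $\bot$, so the simulation lemma applies at every step. Starting from $t' \sqsubseteq t$, iteration yields $t \to^* t_n$ with $v \sqsubseteq t_n$, and the observation about values forces $t_n = v$, so $t \to^* v$. The main technical obstacle is to design $\sqsubseteq$ so that it supports the substitution compatibility needed for the $\nfix(n{+}1)$ case while remaining permissive enough for the $\Omega_T$ case; in particular, one must verify that substituting $\Omega_T$ into an iso body still preserves $\sqsubseteq$ pointwise, which is where the full congruence closure of $\sqsubseteq$ earns its keep.
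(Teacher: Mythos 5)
Your proof is correct and follows the same basic route as the paper's: run the finitary and non-finitary reductions in lockstep and conclude that they reach the same value. The paper's own argument is only a two-sentence sketch asserting that the reduction steps are ``exactly the same up to a point,'' which silently elides the step $\nfix 0 \phi . \omega \finto \Omega_T$ that has no counterpart under $\to$; your extension of $\subfin$ to a relation $\sqsubseteq$ containing $\Omega_T \sqsubseteq \omega$, together with the zero-step simulation case for it, is precisely the repair needed to make that sketch rigorous.
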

\begin{proof}
	The finitary term $t'$ has the same reduction steps as $t$ up to a point.
	Lemma~\ref{lem:fin-progress} ensures that this end point is either a value or
	$\bot$ in the finitary case. Thus if the reduction from $t'$ gets to a value,
	the reduction from $t$ must also finish on a value. Since the reduction steps
	were exactly the same, both reductions have the same normal form.
\end{proof}

\paragraph{Conclusion.}
We finally have all the tools to conclude with adequacy for closed terms of our
original language.

\begin{proof}[Proof of Adequacy (Theorem~\ref{th:rev-adeq})]
	There are two implications to prove.
	\begin{itemize}
		\item[$(\Rightarrow)$] This first implication is proven as
			Corollary~\ref{cor:rev-soundness}.
		\item[$(\Leftarrow)$] Suppose that $\sem t \neq 0$. Necessarily, thanks to
			Lemma~\ref{lem:subfin-join-term}, there exists a finitary term $t'$ such
			that $t' \subfin t$ and $\sem{t'} \neq 0$. In Theorem~\ref{th:fin-adeq},
			we have proven adequacy for finitary terms, meaning that there exists a
			value $v$ such that $t' \finto^* v$. Lemma~\ref{lem:fin-red-to} ensures
			then that $t \to^* v$, which concludes.
	\end{itemize}
\end{proof}

\section{Expressivity}
\label{sec:expressivity}

This section is devoted to assessing the expressivity of the
language. To that end, we rely on Reversible Turing Machine
(RTM)~\cite{axelsen11rtm}. We describe how to encode an RTM as an iso,
and prove that the iso realises the string semantics of the RTM.

\paragraph{Reference.} The work in this section has been done by Kostia
Chardonnet, firstly as a part of his thesis \cite{phd-kostia}, and in our paper
\cite{nous2023invrec}. Is is presented here out of coherence with the next
section.

\subsection{Recovering duplication, erasure and manipulation of constants}
\label{sec:dup}
Although the language is linear and reversible, since closed values are all
finite, and one can build isos to encode notions of duplication, erasure, and
constant manipulation thanks to partiality.

\begin{definition}[Duplication \cite{nous2023invrec}]
  We define $\dup_A^S$ the iso of type $A\iso A\otimes A$ which can
  duplicate any closed value of type $A$ by induction on $A$, where $S$
  is a set of pairs of a type-variable $X$ and an iso-variable
  $\isovar$, such that for every free-variable $X \subseteq A$, there
  exists a unique pair $(X, \isovar) \in S$ for some $\isovar$.
  The iso is defined by induction on $A$:
  $\dup_\one^S = \{() \iso \pv{()}{()}\}$, and
  \begin{itemize}
    \item $\dup_{A\otimes B}^S = \left\{\begin{array}{lcl}
      \pv{x}{y} & \iso & \letv{\pv{x_1}{x_2}}{\dup_A^S~x}{}
       \letv{\pv{y_1}{y_2}}{\dup_B^S~y}{} \\
			&& \pv{\pv{x_1}{y_1}}{\pv{x_2}{y_2}}
    \end{array}\right\}$;
    \item $\dup_{A\oplus B}^S = \left\{\begin{array}{lcl}
      \inl{(x)} & \iso & \letv{\pv{x_1}{x_2}}{\dup_A^S~x}{}
      \pv{\inl{(x_1)}}{\inl{(x_2)}} \\[0.5em]
      \inr{(y)} & \iso & \letv{\pv{y_1}{y_2}}{\dup_B^S~y}{}
      \pv{\inr{(y_1)}}{\inr{(y_2)}}
    \end{array}\right\}$;
    \item If $(X, \_) \not\in S$: $\dup_{\mu X. A}^S = \ffix\isovar. \left\{\begin{array}{l@{~}c@{~}l}
      \fold{(x)}
      & \iso
      &
        \letv{\pv{x_1}{x_2}}{Dup_{A[\mu X. A/X]}^{S\cup \set{(X , \isovar)}}~x}{} \\
      && \pv{\fold{(x_1)}}{\fold{(x_2)}}
    \end{array}\right\}$;

\item If $(X, \isovar) \in S$:
  $\dup_{\mu X. A}^S = \set{x \iso
    \letv{\pv{x_1}{x_2}}{\isovar~x}{\pv{x_1}{x_2}}}$.
\end{itemize}
\end{definition}

\begin{lemma}[Properties of Duplication \cite{nous2023invrec}]
  \label{lem:duplication-invariant}
  \label{lem:duplication-typed}
  Given a closed type $A$, then $\dup_A^\emptyset$ is well-defined,
  and the iso $\dup_A^\emptyset$ is well typed of type
  $A\iso A\otimes A$.
\end{lemma}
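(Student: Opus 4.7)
The plan is to prove both claims simultaneously by induction following the recursive definition of $\dup_A^S$, after strengthening the statement to handle arbitrary $S$: for every type $A$ whose free type variables all lie in the domain of $S$, $\dup_A^S$ is well-defined and satisfies $\Psi_S \entailiso \dup_A^S \colon A \iso A \otimes A$, where $\Psi_S$ contains one iso-variable $\isovar \colon T \iso T \otimes T$ for each $(X, \isovar) \in S$, with $T$ the closed $\mu$-type bound to $X$ in the ambient recursion. The lemma as stated is then the special case $S = \emptyset$.

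The delicate point is well-definedness of the recursive definition: the clause for $\mu X.A$ with $X \notin \mathrm{dom}(S)$ recurses on $A[\mu X.A/X]$, which is not structurally smaller and may have strictly larger raw size than $\mu X.A$. I would introduce a measure $\|A\|_S$ on types: the usual size, except that each subterm of the form $\mu X.B$ with $X \in \mathrm{dom}(S)$ contributes $0$. This measure strictly decreases at every recursive call, the only nontrivial check being that $\|A[\mu X.A/X]\|_{S'} \leq \|A\|_{S'}$ for $S' = S \cup \{(X,\isovar)\}$, so that $\|A[\mu X.A/X]\|_{S'} < 1 + \|A\|_{S'} = \|\mu X.A\|_S$. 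That inequality is proved by a straightforward structural induction on $A$, whose critical base case is $A = X$: then $X[\mu X.A/X] = \mu X.A$ has measure $0$ under $S'$ while $X$ has measure $1$. The inner $\mu$-binder cases work after $\alpha$-renaming to keep binders fresh w.r.t.\ $S'$, and the $\otimes$ and $\oplus$ cases are immediate from the inductive hypothesis.

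Once well-definedness is secured, well-typedness is routine and follows by induction on the same recursion. The cases for $\one$, the bound variable $X$, and $\mu X.A$ with $X \in \mathrm{dom}(S)$ are immediate from the typing rules of Figure~\ref{fig:rev-typterm}. The cases for $A\otimes B$ and $A\oplus B$ combine the inductive hypotheses via the typing rules for let-bindings, pairs, and injections. The case $\mu X.A$ with $X \notin \mathrm{dom}(S)$ uses the typing of $\ffix\!$ and $\fold\!$ together with Lemma~\ref{lem:rev-type-subst} to identify the type of the recursive call $\dup_{A[\mu X.A/X]}^{S'}$ with the unfolding of $\mu X.A$. The main obstacle is therefore the termination argument for the definition itself; the typing part is a direct application of the formation rules.
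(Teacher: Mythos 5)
Your proof is correct. The paper does not actually reprove this lemma (it is imported wholesale from \cite{nous2023invrec}), but your argument supplies exactly the content that is needed: the only real difficulty is that the clause for $\mu X.A$ with $X\notin\mathrm{dom}(S)$ recurses on the unfolding $A[\mu X.A/X]$, which is not structurally smaller, and your measure --- ordinary size with already-registered $\mu$-subterms counting zero --- resolves it, the key inequality $\|A[\mu X.A/X]\|_{S'}\leq\|A\|_{S'}$ holding because every occurrence of $X$ (measure $1$) is replaced by a $\mu$-type that is registered in $S'$ (measure $0$). Two minor remarks: the appeal to Lemma~\ref{lem:rev-type-subst} should only be to its syntactic half (well-formedness of the substituted type $A[\mu X.A/X]$), since the semantic equation it states plays no role in typing; and the $\alpha$-renaming you invoke is not merely cosmetic but genuinely required, because $S$ is keyed by type-variable \emph{names}, so without distinct binders the clause selecting the base case $(X,\isovar)\in S$ could fire on a different $\mu$-type for which $\isovar$ has the wrong iso type.
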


\begin{lemma}[Semantics of Duplication \cite{nous2023invrec}]
  \label{lem:duplication-semantics}
  Given a closed type $A$ and a closed value $v$ of type $A$, then
  $\dup_A^\emptyset~v \to^* \pv{v_1}{v_2}$ and $v = v_1 = v_2$.
\end{lemma}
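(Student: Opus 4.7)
The plan is to prove this by strong induction on the size $|v|$ of the closed value $v$ (counted as the total number of value constructors $*, \pv{-}{-}, \inl{-}, \inr{-}, \fold{-}$). Since evaluating $\dup_{\mu X . A}^\emptyset$ immediately introduces an iso-variable bound by $\ffix$ into the inner recursive call, the induction hypothesis will need to handle the general case where $S$ is non-empty. I would therefore formulate a strengthened statement: for any type judgement $\Theta \vDash A$, any family of closed inductive types $M_i = \mu Y_i . C_i$ instantiating each $X_i \in \Theta$, any $S = \{(X_i, \phi_i) : X_i \in \Theta\}$, and any closed value $v$ of type $A[\vec M / \vec X]$, we have $\sigma(\dup_A^S)~v \to^* \pv{v}{v}$, where $\sigma$ is the substitution sending each $\phi_i$ to the closed iso $\dup_{M_i}^\emptyset$. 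The desired lemma is then the special case $\Theta = \emptyset$, $S = \emptyset$.

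The induction would proceed by case analysis on $A$, which (by closedness and the shape of $v$) constrains the form of $v$. The base case $A = \one$ (forcing $v = *$) is a single use of the iso $\beta$-rule $(\omega.\beta)$ of the operational semantics. The structural cases $A = B \otimes C$ and $A = B \oplus C$ proceed by unfolding one step of the operational semantics (pattern-matching and $\mathtt{let}$-reduction) and applying the induction hypothesis to the strictly smaller sub-values $v_1, v_2$ or $v'$. The interesting case is $A = \mu X . B$ with $(X, \_) \notin S$ and $v = \fold v'$: I would unfold the $\ffix$ once via the rule $\ffix \phi . \omega \to \omega[\ffix\phi.\omega / \phi]$, perform pattern-matching of $\fold x$ against $\fold v'$ (binding $x \mapsto v'$), and observe that the inner computation produced in the $\mathtt{let}$-binding is $\sigma'(\dup_{B[\mu X . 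B / X]}^{S \cup \{(X, \phi)\}})~v'$, where $\sigma' = \sigma \cup \{\phi \mapsto \dup_{\mu X . B}^\emptyset\}$ extends $\sigma$. Since $|v'| < |\fold v'|$, the induction hypothesis yields $\pv{v'}{v'}$, after which the surrounding $\mathtt{let}~\pv{x_1}{x_2}~{=}~{-}~\mathtt{in}~\pv{\fold x_1}{\fold x_2}$ reduces to the desired $\pv{\fold v'}{\fold v'}$. The final case $A = X_i$ with $(X_i, \phi_i) \in S$ uses that $v$ must have the form $\fold v''$ (since it is a closed value of inductive type $M_i$): reducing $\sigma(\dup_{X_i}^S)~v$ produces $\dup_{M_i}^\emptyset~v$ inside a $\mathtt{let}$, and unfolding its $\ffix$ once lands us exactly in the previous case on the strictly smaller sub-value $v''$, closing the induction.

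The main obstacle will be the careful bookkeeping of substitutions when unfolding a $\ffix$: I must justify that substituting the freshly bound iso-variable $\phi$ by $\ffix \phi . \omega$ inside $\sigma(\dup_{B[\mu X . B / X]}^{S \cup \{(X, \phi)\}})$ is the same as applying the extended substitution $\sigma'$ to $\dup_{B[\mu X . B / X]}^{S \cup \{(X, \phi)\}}$ directly. This relies on two facts — that $\phi$ is chosen fresh with respect to $\operatorname{dom}(\sigma)$, and that $\dup_{\mu X . B}^\emptyset$ is a closed iso whose free iso-variables do not overlap with $\operatorname{dom}(\sigma)$ — so the two operations commute. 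A secondary subtlety is that although the type $A[\mu X . B / X]$ can become arbitrarily large after an unfolding, it is the value $v$ whose size strictly decreases at each recursive call, which is what makes the induction on $|v|$ well-founded despite the unbounded type structure.
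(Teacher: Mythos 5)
The thesis itself contains no proof of this lemma to compare against: the entire expressivity section is imported from \cite{nous2023invrec} and credited to Kostia Chardonnet, and Lemma~\ref{lem:duplication-semantics} is stated with a citation only. Judged on its own, your proposal has the right overall shape: strong induction on the size of the closed value $v$ (correctly identified as the well-founded measure, since the type $A[\mu X.A/X]$ grows while the value shrinks), a strengthened statement that threads a substitution for the iso-variables recorded in $S$, and a case analysis on the head constructor of the (closed) type. The cases $\one$, $\otimes$, $\oplus$ and the first encounter of an inductive type are handled correctly, including the unfolding of $\ffix$ before the iso-application rule can fire.

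The one point that needs repair is the exact formulation of your strengthened induction hypothesis. You pin $\sigma(\phi_i)$ to the specific closed iso $\dup_{M_i}^{\emptyset}$, but the substitution actually created by unfolding $\ffix\isovar.\omega$ binds the fresh $\isovar$ to $\sigma(\dup_{\mu X.B}^{S})$ for the \emph{current} $S$, and this coincides with $\dup_{\mu X.B}^{\emptyset}$ only when $S=\emptyset$. For nested inductive types whose inner body mentions the outer bound variable (e.g.\ $\mu X.(\one\oplus\mu Y.(\one\oplus(X\otimes Y)))$), the inner $\ffix$ closes over a non-empty $S$, so the iso bound to the inner variable still refers to the outer $\phi$ and is not syntactically $\dup_{M_i}^{\emptyset}$; your hypothesis is therefore not re-established at the recursive call. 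The fix is to quantify over an arbitrary substitution $\sigma$ constrained only by the semantic property you actually use: for each $(X_i,\phi_i)\in S$, $\sigma(\phi_i)$ is a closed iso with $\sigma(\phi_i)~w\to^{*}\pv{w}{w}$ for every closed value $w$ of the corresponding inductive type with $\lvert w\rvert\leq\lvert v\rvert$. The case $(X_i,\phi_i)\in S$ then becomes a direct appeal to this hypothesis (at $w=v$), the case $(X,\_)\notin S$ re-establishes it for the extended substitution by invoking the strong induction hypothesis at sizes $\leq\lvert v'\rvert<\lvert\fold{v'}\rvert$, and the rest of your argument goes through unchanged.
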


\begin{definition}[Constant manipulation \cite{nous2023invrec}]
  \label{def:constant}
  We define $\opn{erase}_v \colon A \otimes \Sigma^T \iso A$ which erase
  its second argument when its value is $v$ as $\{\pv{x}{v} \iso
  x\}$. Reversed, it turns any $x$ into $\pv{x}{v}$.
\end{definition}

\subsection{Definition of Reversible Turing Machine}

\begin{definition}[Reversible Turing Machine~\cite{axelsen11rtm}]
  Given a Turing Machine $M = (Q, \Sigma, \delta, b, q_s, q_f)$, where
  $Q$ is a set of states, $\Sigma = \set{b, a_1, \dots, a_n}$ is a
  finite set of tape symbols (in the following, $a_i$ and $b$ always
  refer to elements of $\Sigma$),
  $\delta \subseteq \Delta = (Q\times
  [(\Sigma\times\Sigma)\cup\set{\leftarrow, \downarrow,
    \rightarrow}]\times Q)$ is a partial relation defining the
  transition relation such that there must be no transitions leading
  out of $q_f$ nor into $q_s$, $b$ a blank symbol and $q_s$ and $q_f$
  the initial and final states. We say that $M$ is a \emph{Reversible
    Turing Machine} (RTM) if it is:
  \begin{itemize}
      \item \textit{forward} deterministic: for any two distinct pairs
  of triples $(q_1, a_1, q_1')$ and $(q_2, a_2, q_2')$ in $\delta$, if
  $q_1 = q_2$ then $a_1 = (s_1, s_1')$ and $a_2 = (s_2, s_2')$ and
  $s_1\not= s_2$.

  \item \textit{Backward} deterministic: for any two distinct pairs
  of triples $(q_1, a_1, q_1')$ and $(q_2, a_2, q_2')$ in $\delta$, if
  $q_1' = q_2'$ then $a_1 = (s_1, s_1')$ and $a_2 = (s_2, s_2')$ and
  $s_1'\not= s_2'$.
\end{itemize}
\end{definition}

\begin{definition}[Configurations~\cite{axelsen11rtm}]
  \label{confTM}
  A \emph{configuration} of a RTM is a tuple $(q, (l, s, r)) \in
  \opn{Conf} = Q\times (\Sigma^* \times \Sigma \times \Sigma^*)$ where
  $q$ is the internal state, $l, r$ are the left and right parts of
  the tape (as string) and $s\in \Sigma$ is the current symbol being
  scanned. A configuration is \emph{standard} when the cursor is on
  the immediate left of a finite, blank-free string $s \in
	(\Sigma\setminus\set{b})^*$ and the rest is blank, \emph{i.e.}~it is in
  configuration $(q, (\epsilon, b, s))$ for some $q$, where $\epsilon$
  is the empty string, representing an infinite sequence of blank
  symbols $b$.
\end{definition}

\begin{definition}[RTM Transition~\cite{axelsen11rtm}]
  An RTM $M$ in configuration $C = (q, (l, s, r))$ goes to
  a configuration $C' = (q', (l', s', r'))$, written $T\vdash
  C\rightsquigarrow C'$ in a single step if there exists
  a transition $(q, a, q')\in\delta$ where $a$ is either $(s, s')$,
  and then $l=l'$ and $r=r'$ or $a\in\{\leftarrow,\downarrow,
  \rightarrow\}$, and we have for the case $a=\leftarrow$:
  $l' = l \cdot s$ and for $r = x \cdot r_2$ we have $s' = x$ and
  $r' = r_2$, similarly for the case $a = \rightarrow$ and for the
  case $a =\downarrow$ we have $l' = l$ and $r' = r$ and $s = s'$.
\end{definition}

The semantics of an RTM is given on \textit{standard configurations} of
the form $(q, (\epsilon, b, s))$ where $q$ is a state, $\epsilon$ is
the finite string standing for a blank-filled tape, and $s$ is the
blank-free, finite input of the RTM.

\begin{definition}[String Semantics~\cite{axelsen11rtm}]
  \label{RTM:String-Sem} 
	The semantics of a RTM $M$, written
  $\opn{Sem}(M)$ is defined on standards configurations and is given
  by the set $\opn{Sem}(M) = \set{(s, s') \in ((\Sigma\backslash\set{b})^*
  \times (\Sigma\backslash\set{b})^*) \mid M \vdash (q_s, (\epsilon,
  b, s)) \rightsquigarrow^* (q_f, (\epsilon, b, s'))}$.
\end{definition}

\begin{theorem}[Properties of RTM~\cite{axelsen11rtm}]
  For all RTM $M$, $\opn{Sem}(M)$ is the graph of an injective
  function. Conversely, all injective computable functions (on a
  tape) are realisable by a RTM. Finally, any Turing Machine can be
  simulated by a Reversible Turing Machine.
\end{theorem}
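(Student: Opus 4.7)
The plan is to prove the three claims in sequence, each relying on a by-now classical argument for reversible Turing machines. For the first claim, I would observe that forward determinism of $\delta$ makes the single-step relation $M \vdash C \rightsquigarrow C'$ a partial function on configurations, whose reflexive--transitive closure is therefore also a partial function. Restricting to standard configurations in the sense of Definition~\ref{confTM} yields $\opn{Sem}(M)$ as the graph of a partial function $(\Sigma\setminus\set{b})^* \rightharpoonup (\Sigma\setminus\set{b})^*$. For injectivity, I would argue symmetrically using backward determinism: if $(s_1, s')$ and $(s_2, s')$ both lie in $\opn{Sem}(M)$, then the two computations end in the common configuration $(q_f, (\epsilon, b, s'))$, and tracing them backwards step-by-step using backward determinism at each transition shows by induction on the length of the run that the starting configurations -- and in particular the inputs $s_1$ and $s_2$ -- must coincide.

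For the second claim, I would reproduce Bennett's construction adapted to the injective setting. Given an injective computable $f$ realised by an irreversible Turing machine $N$, I would lay out an RTM on three tapes (input, history, output) proceeding in three phases: first a forward simulation of $N$ that logs on the history tape which clause of $N$'s transition relation was used at each step, making each combined step invertible; second a reversible copy of the result onto the output tape (safe because the target cells are blank); third a reversal phase that replays the log in reverse to erase both the history and, using a machine computing $f^{-1}$, the input -- leaving only the output on a clean tape. Injectivity of $f$ is what makes the final erasure phase possible without information loss.

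For the third claim, the same Bennett trick specialises immediately: one simply drops the demand that the input be erased, and outputs the pair $(x, N(x))$ rather than $N(x)$ alone. The first-coordinate projection then witnesses injectivity of the simulator, regardless of whether $N$ is injective.

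The main obstacle I expect is making the history-tape bookkeeping completely precise while preserving backward determinism at all boundary cases -- in particular when the simulated head first visits a blank cell, when distinct transitions could a priori produce the same tape contents, and at the start/end of the computation where the history tape itself transitions between empty and non-empty. Each log entry has to be chosen so that it is uniquely recoverable from the successor configuration alone, which is the step where a naive encoding tends to break backward determinism and where Axelsen and Gl\"uck's construction requires the most care.
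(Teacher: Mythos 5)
The paper does not prove this theorem: it is imported verbatim from Axelsen and Gl\"uck's work (\cite{axelsen11rtm}) and used as a black box in the encoding of RTMs as isos. Your reconstruction matches the standard arguments in that reference -- forward/backward determinism for functionality and injectivity of $\opn{Sem}(M)$, and Bennett's history-tape method (with and without the final uncomputation of the input) for the two expressivity claims -- so there is no divergence of approach to report, only two small points worth making explicit. First, in the injectivity argument the two accepting runs ending in the same standard configuration may a priori have different lengths; backward determinism forces them onto the same backward path, and it is the clause ``no transitions into $q_s$'' in the definition of an RTM that rules out one run continuing backwards past the other's starting configuration, so you should invoke it rather than leave the induction purely on ``the length of the run.'' Second, your uncomputation phase presupposes a machine for $f^{-1}$; for an injective computable \emph{partial} function this is not immediate, but it does hold -- on any $y$ in the range of $f$ one dovetails the computations $f(x)$ over all $x$ and returns the unique preimage -- and since that is precisely where injectivity of $f$ is consumed, the remark deserves a sentence.
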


\subsection{Encoding RTMs as Isos}
A RTM configuration is a set-based construction that we can model
using the type constructors available in our language. Because the
transition relation $\delta$ is backward and forward deterministic, it
can be encoded as an iso. Several issues need to be dealt with; we
discuss them in this section.

\paragraph{Encoding configurations.}
The set of states $Q = \{q_1, \dots, q_n\}$ is modelled with the type
$Q^T=\1\oplus\cdots\oplus\1$ ($n$ times). The encoding of the state
$q_i$ is then a closed value $q_i^T$. They are pairwise orthogonal.
The set $\Sigma$ of tape symbols is represented similarly by
$\Sigma^T=\1\oplus\cdots\oplus\1$, and the encoding of the tape symbol
$a$ is $a^T$.
We then define the type of configurations in the obvious manner: a
configuration $C = (q, (l, s, r))$ corresponds to a closed value
$\opn{isos}(C)$ of type
$Q^T \otimes ([\Sigma^T] \otimes \Sigma^T \otimes [\Sigma^T])$.

\paragraph{Encoding the transition relation $\delta$.}
A limitation of our language is that every sub-computation has to be
reversible and does not support infinite data structures such
as streams. In the context of RTMs, the empty string $\epsilon$ is
assimilated with an infinite string of blank symbols. If this can be
formalised in set theory, in our limited model, we cannot emit blank
symbols out of thin air without caution.

In order to simulate an infinite amount of blank symbols on both sides
of the tape during the evaluation, we provide an iso that grows the
size of the two tapes on both ends by blank symbols at each transition
step.
The iso $\opn{growth}$ is shown in
Table~\ref{tab:useful-functions}. It is built using three auxiliary
functions, written in a Haskell-like notation
$\opn{len}$ sends a closed value $[v_1, \dots, v_n]$ to
$\pair{[v_1, \dots, v_n]}{\ov{n}}$.
$\opn{snoc'}$ sends $\pair{[v_1, \dots, v_n]}{v, \ov{n}}$ to
$\pair{[v_1, \dots, v_n, v]}{v, \ov{n}}$.
$\opn{snoc}$ sends $\pv{[v_1, \dots, v_n]}{v})$ to
$\pv{[v_1, \dots, v_n, v]}{v}$.
Finally, $\opn{growth}$ sends
$\pv{[a_1^T, \dots, a_n^T]}{[a_1'^T, \dots, a_m'^T]}$ to
$\pv{[a_1^T, \dots, a_n^T, b^T]}{[a_1'^T, \dots, a_m'^T, b^T]}$.

Now, given a RTM $M = (Q, \Sigma, \delta, b, q_s, q_f)$, a relation
$(q, r, q') \in \delta$ is encoded as a clause between values
$\opn{iso}(q,r,q') = v_1 \iso v_2$ of type $C^T \iso C^T$. These clauses
are defined by case analysis on $r$ as follows.
When $x, x', z, y$ and $y'$ are variables:
\begin{itemize}
\item $\opn{iso}(q, \rightarrow, q') =
  (q^T, (x', z, y :: y')) \iso \letv{(l,
    r)}{\opn{growth}~(x', y')}{(q'^T, (z :: l, y, r))}$,
\item $\opn{iso}(q, \leftarrow, q') =
  (q^T, (x::x', z, y')) \iso \letv{(l, r)}{\opn{growth}~(x',
    y')}{(q'^T, (l, x, z :: r))}$,
\item $\opn{iso}(q, \downarrow, q') =
  (q^T, (x', z, x')) \iso \letv{(l, r)}{\opn{growth}~(x',
    y')}{(q'^T, (l, z, r))}$,
\item $\opn{iso}(q, (s, s'), q') =
  (q^T, (x', s^T, y')) \iso \letv{(l, r)}{\opn{growth}~(x',
    y')}{(q'^T, (l, s'^T, r))}$.
\end{itemize}
The encoding of the RTM $M$ is then the iso $\opn{isos}(M)$ whose
clauses are the encoding of each rule of the transition relation
$\delta$, of type $\opn{Conf}^T\iso\opn{Conf}^T$.

\begin{table}
	\begin{tabular}{@{}l}
		$
		\begin{array}{|l}
			\opn{len}:[A]\iso[A] \otimes \natT\\
			\begin{array}{@{}l@{\,}l@{~}c@{~}l}
				\opn{len} & [~] & {\iso} & ([~], 0) \\
				\opn{len} & h :: t & {\iso} & \letv{(t', n)}{\opn{len}~t}{} \\
				&   & & (h :: t', S(n)) \\
			\end{array}
		\end{array}
		$
		\\[6ex]
		$
		\begin{array}{|l}
			\opn{snoc'} \colon [A] \otimes A \otimes \natT \iso [A] \otimes A \otimes \natT \\
			\begin{array}{@{}l@{\,}l@{~}c@{~}l}
				\opn{snoc'}& ([~], x, 0) & {\iso} & \letv{(x_1, x_2)}{\dup_A^\emptyset x}{}\\
				& & & ([x_1], x_2, 0) \\
				\opn{snoc'} & (h::t, x, S(n)) & {\iso} & \letv{(t', x', n')}{\opn{snoc'} (t, x, n)}{} \\
				&&&(h :: t', x', S(n'))
			\end{array}
		\end{array}
		$
		\\[8ex]
		$
		\begin{array}{|l}
			\opn{growth} \colon [\Sigma^T] \otimes [\Sigma^T] \iso [\Sigma^T]
			\otimes [\Sigma^T]\\
			\begin{array}{@{}l@{\,}l@{~}c@{~}l}
				\opn{growth}
				& (l, r)
				& {\iso}
				& \letv{\pv{l'}{b_1}}{\opn{snoc}\pv{l}{b^T}}{} \\
				&& & \letv{\pv{r'}{b_2}}{\opn{snoc}\pv{r}{b^T}}{} \\
				&& & \letv{l''}{\opn{erase}_b \pv{l'}{b_1}}{} \\
				&& & \letv{r''}{\opn{erase}_b \pv{r'}{b_2}}{} (l'', r'')
			\end{array}
		\end{array}
		$
		\\[8ex]
		$
		\begin{array}{|l}
			\opn{It}: (A\iso A \otimes (\one\oplus\one)) \to (A\iso
			A\otimes \natT)\\
			\begin{array}{@{}l@{\,}l@{~}c@{~}l}
				\opn{It} \isolambdavar
				& x& \iso & \letv{y}{\isolambdavar~x}{}\\
				&&& \letv{z}{
					\left\{\begin{array}{l@{~}c@{~}l}
						(y, \tc)
						&\iso
						& \letv{(z,n)}{(\opn{It}\,\isolambdavar)~y}{(z, S~n)} \\
						(y, \fc) & \iso & (y, 0)
					\end{array}\right\}
					~y}{}
				z
			\end{array}
		\end{array}
		$
		\\[6ex]
		$
		\begin{array}{|l}
			\opn{rmBlank} \colon [\Sigma] \iso [\Sigma] \otimes \natS
			\\
			\begin{array}{@{}l@{\,}l@{~}c@{~}l}
				\opn{rmBlank} & [] & {\iso} & ([], 0) \\
				\opn{rmBlank} & b^T :: t & {\iso} & \letv{(t', n)}{\opn{rmBlank}~t}{(t', S(n))} \\
				\opn{rmBlank} & a_1^T :: t & {\iso} & ((a_1^T :: t), 0) \\
				\vdots& \vdots & \vdots & \vdots \\
				\opn{rmBlank} & a_n^T :: t & {\iso} & ((a_n^T :: t), 0)
			\end{array}
		\end{array}
		$
		\\[10ex]
		$
		\begin{array}{|l}
			\opn{rev}_{\text{aux}} \colon [A] \otimes [A] \iso [A]
			\otimes [A]
			\\
			\begin{array}{@{}l@{\,}l@{~}c@{~}l}
				\opn{rev}_{\text{aux}} & ([], y) & {\iso} & ([], y) \\
				\opn{rev}_{\text{aux}} & (h :: t, y) & {\iso} & \letv{(h_1, h_2)}{\dup_A^\emptyset~h}{} \\
				&& & \letv{(t_1, t_2)}{\isovar (t, h_2 :: y)}{} \\
				&& & (h_1 :: t_1, t_2)
			\end{array}
		\end{array}
		$
		\\[8ex]
		$
		\begin{array}{|l}
			\opn{rev} \colon [A] \iso [A] \otimes [A]\\
			\opn{rev} = \set{x \iso \letv{(t_1,t_2)}{\opn{rev}_{\text{aux}}~(x, [])}{(t_1, t_2)}}
		\end{array}
		$
		\\[4ex]
		$
		\begin{array}{|l}
			\opn{cleanUp}: C^T \otimes \natT \iso C^T \otimes
			\natT \otimes \natT \otimes \natT \otimes [\Sigma^T]
			\\
			\begin{array}{@{}l@{\,}l@{~}c@{~}l}
				\opn{cleanUp}
				&
				((x, (l, y, r)), n)
				& {\iso} &
				\letv{(l', n_1)}{\opn{rmBlank}~l}{}\\
				&&&\letv{(r_{\text{ori}}, r_{\text{rev}})}{\opn{rev}~r}{}\\
				&&&\letv{(r', n_2)}{\opn{rmBlank}~r_{\text{rev}}}{}\\
				&&&((x, (l', y, r')), n, n_1, n_2, r_{\text{ori}})
			\end{array}
		\end{array}
		$
	\end{tabular}
	\caption{Some useful isos for the encoding.}
	\label{tab:useful-functions}
\end{table}

\paragraph{Encoding successive applications of $\delta$.}
The transition $\delta$ needs to be iterated until the final state is
reached. This behavior can be emulated in our language using the iso
$\opn{It}$, defined in Table~\ref{tab:useful-functions}.  The iso
$\opn{It}\,\omega$ is typed with $(A\iso A\otimes \natT)$. Fed with a
value of type $A$, it iterates $\omega$ until $\fc$ is met. It then
returns the result together with the number of iterations.

To iterate $\opn{iso}(M)$, we then only need to modify $\opn{iso}$ to
return a boolean stating whether $q_f$ was met. This can be done straightforwardly, yielding an iso $\opn{isos}_\boolT(M))$ of
type
$
  \opn{Conf}^T\iso \opn{Conf}^T\otimes(\1\oplus\1).
$
With such an iso, given $M$ be a RTM such that $M \vdash (q_s, (\epsilon, b,
s)) \rightsquigarrow^{n+1} (q_f, (\epsilon, b, (a_1, \dots, a_n)))$, then
$\opn{It}(\opn{isos}_\boolT(M))~(q_s^T, ([b^T], b^T, s^T))$ reduces to the
encoding term $((q_f^T, ([b^T, \dots, b^T], b^T, [a_1^T, \dots, a_n^T, b^T,
\dots, b^T])), \overline{n})$. If it were not for the additional blank tape
elements, we would have the encoding of the final configuration.

\paragraph{Recovering a canonical presentation.}
Removing blank states at the \emph{beginning} of a list is easy: it
can for instance, be done with the iso $\opn{rmBlank}$, shown in
Table~\ref{tab:useful-functions}. Cleaning up the tail of the list can
then be done by reverting the list, using, e.g. $\opn{rev}$ in the same
table. By abuse of notation, we use constants in some patterns: an
exact representation would use \Cref{def:constant}. Finally, we can
define the operator $\opn{cleanUp}$, solving the issue raised in the
previous paragraph.
In particular, given a RTM $M$ and an initial configuration $C$ such
that
$M \vdash C \rightsquigarrow C' = (q, (\epsilon, b, (a_1, \dots,
a_n)))$.  Then we have that
$\opn{cleanUp} \opn{It}(\opn{isos}_\boolT(M)) C^T \to^* ((q^T, ([],
b^T, [a_1^T, \dots, a_n^T])), v)$, where $v$ is of type
$\natT \otimes \natT \otimes \natT \otimes [\Sigma^T]$. If we want to
claim that we indeed capture the operational behaviour of RTMS, we need
to get rid of this value $v$.

\begin{figure}
	\begin{center}
		\includegraphics[width=0.8\textwidth]{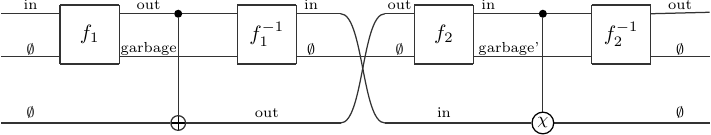}
	\end{center}
	\caption{Reversibly removing additional garbage from some process.}
	\label{fig:garbage-removal}
\end{figure}

\paragraph{Getting rid of the garbage.}
To discard this value $v$, we rely on Bennett's
trick~\cite{bennett1973logical}, shown in \Cref{fig:garbage-removal}.
Given two Turing machines $f_1$ and $f_2$ and some input $\opn{in}$
such that if $f_1(\opn{in}) = \opn{out} \otimes \opn{garbage}$ and
$f_2(\opn{out}) = \opn{in} \otimes \opn{garbage'}$, then the process
consists of taking additional tapes in the Turing Machine in order to
reversibly duplicate (represented by the $\oplus$) or reversibly erase
some data (represented by the $\chi$) in order to recover only the
output of $f_1$, without any garbage.

Given an iso $\omega \colon A\iso B\otimes C$ and
$\omega' \colon B \iso A \otimes C'$ where $C, C'$ represent garbage, we
can build an iso from $A\iso B$ as follows, where the variables
$x, y, z$ (and their indices) respectively correspond to the first,
second, and third wire of~\Cref{fig:garbage-removal}. This operator
makes use of the iso \opn{Dup} discussed in Section~\ref{sec:dup}.
\[
\begin{array}{lcl}
  \garRem{\omega}{\omega'}~x_1
  & \iso
  & \letv{\pv{x_2}{y}}{\omega~x_1}{}
    \letv{\pv{x_3}{z}}{\dup_B^\emptyset~x_2}{}\\
  && \letv{x_4}{\omega^{-1}~\pv{x_3}{y}}{}
   \letv{\pv{z_2}{y_2}}{\omega'~z}{}\\
  && \letv{z_3}{(\dup_B^\emptyset)^{-1}~\pv{z_2}{x_4}}{}
   \letv{z_4}{\omega'^{-1}~\pv{z_3}{y_2}}{}
             z_4.
\end{array}
\]

\begin{theorem}[Capturing the exact semantics of a RTM \cite{nous2023invrec}]
  For all RTM $M$ with standard configurations $C = (q_s,
  (\epsilon, b, s))$ and $C' = (q_f, (\epsilon, b, s'))$ such that
  $M \vdash C \rightsquigarrow^* C'$, we have
  \[\garRem{\opn{cleanUp}(\opn{It}
      (\opn{isos}_\boolT(M)))}{\opn{cleanUp}(\opn{It}
      (\opn{isos}_\boolT(M^{-1})))}~\opn{isos}(C) \to^* \opn{isos}(C')\]
  The behavior of RTMs is thus captured by the language.
\end{theorem}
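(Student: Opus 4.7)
The plan is to proceed by a sequence of lemmas mirroring the layers of the construction. First, I would prove a single-step soundness lemma: for any transition $(q, r, q') \in \delta$ and any configuration $C$ on which this transition is applicable, the corresponding clause in $\opn{isos}(M)$ is the unique one that pattern-matches $\opn{isos}(C)$, and $\opn{isos}(M)~\opn{isos}(C) \to^* \opn{isos}(C'')$ where $C''$ encodes $C'$ but with both tapes grown by one extra blank symbol at each far end. The growth is unavoidable: since the language only manipulates finite lists, the pretense that $\epsilon$ is an infinite blank tape must be paid for by explicitly producing a blank at each step via $\opn{growth}$. Forward determinism of $M$ is what guarantees that pattern-matching selects a unique clause; backward determinism is what makes $\opn{isos}(M)$ well-typed as an iso (the right-hand sides of clauses are pairwise orthogonal).

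Second, I would iterate this lemma: by induction on the length $n$ of the derivation $M \vdash C \rightsquigarrow^n C'$, and using the unfolding equation for $\opn{It}$ together with the semantics of the boolean-returning variant $\opn{isos}_\boolT(M)$ (which returns $\fc$ exactly when the state becomes $q_f$), conclude that $\opn{It}(\opn{isos}_\boolT(M))~\opn{isos}(C) \to^* (\opn{isos}(C'_{\mathrm{pad}}), \overline{n})$, where $C'_{\mathrm{pad}}$ is $C'$ with $n$ extra blank symbols padded to each of the four tape ends accumulated across the iterations. The base case corresponds to reaching $q_f$ in one step.

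Third, for the cleanup phase, I would prove by straightforward list induction that $\opn{rmBlank}$ strips the leading blanks of a list while counting them, and that $\opn{rev}$ reverses a list while also duplicating it (via the copying semantics of $\dup_A^\emptyset$ guaranteed by Lemma~\ref{lem:duplication-semantics}). Composing these, $\opn{cleanUp}$ applied to $(\opn{isos}(C'_{\mathrm{pad}}), \overline{n})$ reduces to $(\opn{isos}(C'), v)$ for some bookkeeping value $v$ of type $\natT \otimes \natT \otimes \natT \otimes [\Sigma^T]$ that records the step count, the two blank counts, and the original (padded, reversed) right tape. Writing $f \defeq \opn{cleanUp}(\opn{It}(\opn{isos}_\boolT(M)))$, we thus have $f~\opn{isos}(C) \to^* (\opn{isos}(C'), v)$, and by applying the same argument to $M^{-1}$ (which is itself an RTM by reversibility of $M$), the iso $g \defeq \opn{cleanUp}(\opn{It}(\opn{isos}_\boolT(M^{-1})))$ satisfies $g~\opn{isos}(C') \to^* (\opn{isos}(C), v')$ for the corresponding backward bookkeeping $v'$.

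Finally, I would unfold the definition of $\garRem{f}{g}$ term-by-term on the input $\opn{isos}(C)$, using soundness of the denotational semantics (Proposition~\ref{prop:rev-soundness}) together with the value semantics above to trace the reduction through each of the six \texttt{let}-bindings, and verify that the two intermediate garbage components are copied and then uncomputed by running the inverses $f^{-1}$ and $g^{-1}$; the output that survives is exactly $\opn{isos}(C')$. The main obstacle will be the last step: Bennett's trick is conceptually clean but formally requires aligning the types carefully and showing that the bookkeeping $v$ produced by the forward pass from $\opn{isos}(C)$ and the bookkeeping $v'$ produced by the backward pass from $\opn{isos}(C')$ play symmetric roles, so that the two duplicate-then-uncompute phases really do cancel. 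This relies essentially on the reversibility of $M$ (so $M^{-1}$ is itself an RTM with the same step count between $C'$ and $C$) and on Lemma~\ref{lem:duplication-invariant} to ensure $\dup_A^\emptyset$ is invertible on the relevant outputs.
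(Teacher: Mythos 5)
Your decomposition --- single-step simulation with explicit $\opn{growth}$, iteration via $\opn{It}$ driven by the boolean flag, blank-stripping by $\opn{cleanUp}$, and finally Bennett's trick --- is exactly the intended structure of the argument (the thesis defers the actual proof to the cited paper, but the construction narrative of this section \emph{is} that four-stage pipeline). Your identification of forward determinism as what makes clause selection unique and of backward determinism as what makes the right-hand sides pairwise orthogonal (hence $\opn{isos}(M)$ well-typed) is also correct, as is the bookkeeping of the accumulated blanks and of the garbage type $\natT \otimes \natT \otimes \natT \otimes [\Sigma^T]$.

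There is one load-bearing ingredient you leave implicit, and it is the only place I would call a genuine hole. Unfolding $\garRem{f}{g}$ runs not only $f$ and $g$ but also the \emph{syntactic inverses} $f^{-1}$, $g^{-1}$ and $(\dup_B^\emptyset)^{-1}$ --- note that inversion is not a primitive of this chapter's iso grammar, so these are themselves constructed isos --- and the theorem is an operational statement, so you need an operational inversion lemma: for every iso $\omega$ appearing in the encoding, if $\omega~v \to^* v'$ then $\omega^{-1}~v' \to^* v$. Knowing that $g = \opn{cleanUp}(\opn{It}(\opn{isos}_\boolT(M^{-1})))$ recomputes the input (which you correctly derive from $M^{-1}$ being an RTM) tells you nothing about $f^{-1}$, which is a \emph{different} iso obtained by flipping the clauses of $f$; both uncomputation steps of $\garRem{-}{-}$, as well as the step $(\dup_B^\emptyset)^{-1}~\pv{v}{v} \to^* v$, rest entirely on this lemma, and it is a nontrivial meta-theorem about the pattern-matching discipline (it is, in effect, what the orthogonality side conditions purchase). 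Relatedly, invoking Proposition~\ref{prop:rev-soundness} to ``trace the reduction'' is misdirected: soundness derives denotational equalities from reductions, not reductions from denotational equalities, so the final step must be traced purely operationally --- which you do gesture at with the six $\mathtt{let}$-bindings. With the inversion lemma stated and proved, the rest of your plan goes through.
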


\section{Semantics preservation}
\label{sec:sem-preservation}

In this section, we fix the interpretation $\sem -$ of the language in
$\PInj$, the category of sets and partial injections. This choice
comes without any loss of generality, and allows us to consider
\emph{computable} set-functions. In this section, we show that given a
computable, reversible set-function $f \colon \interp{A} \to \interp{B}$,
there exists an iso $\omega \colon A\iso B$ such that
$\interp{\omega} = f$. In order to do that, we fix a canonical
flat representation of our types.

\subsection{A Canonical Representation}

We define a canonical representation of closed values of some type $A$ into a new
type $\opn{Enc} = \mathbb{B} \oplus \one \oplus \one \oplus \one \oplus \one \oplus \natT$
(recall that $\mathbb{B} = \one\oplus \one$ and $\natT = \mu X. \one\oplus X$).
For simplicity let us name each the following terms of type $\opn{Enc}$ :
$\ttt = \inl{(\inl{()})}$,
$\fff = \inl{(\inr{()})},$
$S = \inr{(\inl{()})},$
$D^\oplus = \inr{(\inr{(\inl{()})})},$
$D^\otimes = \inr{(\inr{(\inr{(\inl{()})})})},$
$D^\mu = \inr{(\inr{(\inr{(\inr{(\inl{()})})})})}$,
and for every natural number $n$, we write $\tilde{n}$ for the term
$\inr{(\inr{(\inr{(\inr{(\inr{(\inr{(\overline{n})})})})})})}$.
Now, given some closed type $A$, we can define $\floor{-}_A \colon A\iso
[\opn{Enc}]$ the iso that transform any closed value of type $A$ into a list of
$\opn{Enc}$. The iso is defined inductively over $A$: $\floor{-}_\one = \{() \iso [S]\},$ and
  \[\floor{-}_{A\oplus B} = \left\{\begin{array}{lcl}
    \inl{(x)} & \iso & \letv{y}{\floor{x}_A}{D^\oplus :: \fff :: y} \\
    \inr{(x)} & \iso & \letv{y}{\floor{x}_B}{D^\oplus :: \ttt :: y}
  \end{array}\right\},\]
  \[\floor{-}_{A \otimes B} = \left\{\begin{array}{lcl}
    \pv{x}{y} & \iso & \letv{x'}{\floor{x}_A}{}
                       \letv{y'}{\floor{y}_B}{} \\
              & & \letv{\pv{z}{n}}{++~\pv{x'}{y'}}{}
                  D^\otimes :: \tilde{n} :: z
  \end{array}\right\},\]
  \[\floor{-}_{\mu X. A} = \left\{\begin{array}{lcl}
    \fold{x} & \iso & \letv{y}{\floor{x}_{A[\mu X. A/X]}}{}
    D^\mu :: y
  \end{array}\right\},\]
where the iso $++ \colon [A]\otimes [A] \iso [A]\otimes \natT$ is defined as:
\[\ffix f. \left\{\begin{array}{lcl}
  \pv{[]}{x} & \iso & \pv{x}{0} \\
  \pv{h :: t}{x} & \iso & \letv{\pv{y}{n}}{f~\pv{t}{x}}{}
  \pv{h :: y}{S(n)}
\end{array}\right\}.\]

\subsection{Capturing every computable injection}

With this encoding, every iso $\omega \colon A\iso B$ can be turned into another iso
$\floor{\omega} \colon [\opn{Enc}] \iso [\opn{Enc}]$ by composing $\floor{-}_A$,
followed by $\omega$, followed by $\floor{-}_B^{-1}$. This is in particular the
case for isos that are the images of a Turing Machine. We are now ready to see
how every computable function $f$ from $\interp{A} \to \interp{B}$ can be turned
into an iso whose semantics is $f$.
Given a computable function $f \colon \interp{A} \to \interp{B}$, call $M_f$ the
RTM computing $f$. Since $f$ is in $\PInj$, its output
uniquely determines its input. Following~\cite{bennett1973logical}, there exists
another Turing Machine $M_f'$ which, given the output of $M_f$ recovers the
initial input.
In our encoding of a RTM, the iso will have another additional garbage
which consist of a natural number, \emph{i.e.}~the number of steps of the RTM $M_f$.
Using $\garRem{\opn{isos}(M_f)}{\opn{isos}(M_f')}$ we can obtain a single iso,
from the encoding of $A$ to the encoding of $B$, without any garbage left.
This also ensures that $\sem{\garRem{\opn{isos}(M_f)}{\opn{isos}(M_f')}}(x)
= (\sem{\opn{isos}(M_f)}(x))_1$, for any input $x$.

\begin{theorem}[Computable function as Iso]
	\label{th:computable}
	Given a computable function $f\colon \interp{A} \to \interp{B}$, let $g\colon
	\interp{[\opn{Enc}] \otimes [\opn{Enc}]} \to \interp{[\opn{Enc}] \otimes
	[\opn{Enc}]}$ be defined as $g = \interp{\floor{-}_B} \circ f \circ
	\interp{\floor{-}_A^{-1}}$, and let $\omega\colon A \iso B$ be defined as 
	\begin{align*}
		\{x~\iso~
		&\letv{y}{\floor{x}_A}{} \\
		&\letv{y'}{\garRem{\opn{isos}(M_g)}{\opn{isos}(M_g')}~y}{} \\
		&\letv{z}{\floor{y'}_B^{-1}}{z}\}.
	\end{align*}
  Then $\interp{\omega} = f$.
\end{theorem}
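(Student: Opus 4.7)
The plan is to compute $\sem{\omega}$ compositionally using the denotational semantics and then reduce to the already-established semantics preservation for the encoding of reversible Turing machines. Since the body of $\omega$ is a sequence of three $\mathtt{let}$-bindings, compositionality of the interpretation (Proposition~\ref{prop:rev-substitution-interpretation}) gives
\[
  \sem{\omega} \;=\; \sem{\floor{-}_B^{-1}} \,\circ\, \sem{\garRem{\opn{isos}(M_g)}{\opn{isos}(M_g')}} \,\circ\, \sem{\floor{-}_A}.
\]
The idea is then to show that each of the three factors does exactly what its name suggests, so that the middle map becomes $g$ and the two outer maps cancel against the two encodings hidden inside $g$.

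First I would verify that the stated function $g \colon \sem{[\opn{Enc}] \otimes [\opn{Enc}]} \to \sem{[\opn{Enc}] \otimes [\opn{Enc}]}$ is itself a computable partial injection. Injectivity follows because $f$ is a partial injection in $\PInj$ and both $\sem{\floor{-}_A^{-1}}$ and $\sem{\floor{-}_B}$ are partial injections (as images of isos in the language, whose denotations lie in $\PInj$). Computability is immediate because the two encoding isos are explicitly defined programs, so their set-theoretic semantics are computable, and the composition of computable partial maps is computable. Hence $g$ has a realising RTM $M_g$, and by Bennett's trick it has a reverse companion $M_g'$, justifying the very construction used in the statement.

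Next I would invoke the previous theorem on reversible Turing machines (the statement from Section~\ref{sec:expressivity} that $\garRem{\opn{cleanUp}(\opn{It}(\opn{isos}_\boolT(M)))}{\cdots}$ faithfully simulates $M$ on standard configurations), combined with the operational adequacy result (Theorem~\ref{th:rev-adeq}) and the soundness lemma (Proposition~\ref{prop:rev-soundness}), to conclude that
\[
  \sem{\garRem{\opn{isos}(M_g)}{\opn{isos}(M_g')}} \;=\; g
\]
as morphisms in $\PInj$ on encoded inputs, with both sides undefined exactly where $g$ is. Substituting this equality into the composition above, we get $\sem{\omega} = \sem{\floor{-}_B^{-1}} \circ g \circ \sem{\floor{-}_A}$. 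Expanding the definition of $g$ yields
\[
  \sem{\omega} \;=\; \sem{\floor{-}_B^{-1}} \circ \sem{\floor{-}_B} \circ f \circ \sem{\floor{-}_A^{-1}} \circ \sem{\floor{-}_A}.
\]

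Finally I would cancel the two encoding/decoding pairs. The isos $\floor{-}_A$ and $\floor{-}_A^{-1}$ are mutual inverses in the language by construction, and the denotational semantics respects inversion (because $\sem{\omega^{-1}} = \sem{\omega}^\circ$ in the inverse structure of $\PInj$), so $\sem{\floor{-}_A^{-1}} \circ \sem{\floor{-}_A} = \iid_{\sem A}$ on the image of $\sem{\floor{-}_A}$, and likewise on the $B$ side. The only subtlety is that these equalities are identities only on the subsets that are genuine encodings of values of $A$ and $B$; however $f$ factors through those subsets since its domain and codomain are exactly $\sem A$ and $\sem B$, so the restrictions agree and we conclude $\sem{\omega} = f$. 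The main obstacle I anticipate is not conceptual but bookkeeping: making sure that partiality is tracked correctly through the composition, i.e.\ that $\sem{\omega}$ is undefined on precisely the inputs where $f$ is, which amounts to checking that the encoding isos are total on their intended domains and that $\garRem{-}{-}$ neither adds nor removes points of definition beyond those of $g$.
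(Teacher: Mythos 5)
Your proposal is correct and follows essentially the same route as the paper's proof: decompose $\sem{\omega}$ into the three-fold composition, identify the middle factor with $g$ via soundness, expand the definition of $g$, and cancel the encoding/decoding pairs. The extra care you take about the computability and injectivity of $g$ and about tracking partiality is welcome but not a different argument; the paper handles the former in the setup preceding the theorem and disposes of the latter by appealing directly to soundness for the identities $\interp{\floor{-}_A^{-1}} \circ \interp{\floor{-}_A} = \iid_A$ and $\interp{\floor{-}_B^{-1}} \circ \interp{\floor{-}_B} = \iid_B$.
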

\begin{proof}
		 In $\omega$, call the right-hand-side $e$. Notice that 
		 \[\interp{e} =
		 \interp{\floor{-}_B^{-1}} \circ
		 \interp{\garRem{\opn{isos}(M_g)}{\opn{isos}(M_g')}} \circ
		 \interp{\floor{-}_A}. \] 
		 By Prop.~\ref{prop:rev-soundness}, we know that
		 $\interp{\garRem{\opn{isos}(M_g)}{\opn{isos}(M_g')}} = g$.  Therefore,
		 since $g = \interp{\floor{-}_B} \circ f \circ \interp{\floor{-}_A^{-1}}$
		 by definition, we get
			\begin{align*}
				\sem e &= \interp{\floor{-}_B^{-1}} \circ
				\interp{\garRem{\opn{isos}(M_g)}{\opn{isos}(M_g')}} \circ \interp{\floor{-}_A} \\
				&= \interp{\floor{-}_B^{-1}} \circ g \circ \interp{\floor{-}_A} \\
				&= \interp{\floor{-}_B^{-1}} \circ \interp{\floor{-}_B} \circ f \circ
				\interp{\floor{-}_A^{-1}} \circ \interp{\floor{-}_A}
			\end{align*}
		 By Prop.~\ref{prop:rev-soundness} we get that $\interp{\floor{-}_B^{-1}}
		 \circ \interp{\floor{-}_B} = \iid_B$ and $\interp{\floor{-}_A^{-1}} \circ
		 \interp{\floor{-}_A} = \iid_A$.  Therefore $\interp{e} = f$.  Since the
		 left-hand-side of $\omega$ is just a variable we get $\interp{\omega} =
		 \interp{e}\circ \iid\inv = \interp{e} = f$.
\end{proof}

\section{Further notes and conclusion}

In this chapter, we have developed a functional, reversible programming
language, based on \cite{sabry2018symmetric}, with inductive types and general
recursive calls. This language has been proven to have the same expressivity as
a Turing Machine, meaning that any computable function can be performed. We
have provided a mathematical interpretation of the language, based on the
categorical presentation of inverse categories, equipped with:
\begin{itemize}
	\item a join rig structure, to model pattern-matching and iso;
	\item an enrichment in $\dcpo$, to denote recursive calls through least
		fixed points;
	\item parameterised initial algebra, representing inductive data types;
\end{itemize}
for which the category of sets and partial injections, written $\PInj$, is a
concrete model.  This denotational semantics has been proven sound and adequacy
with regard to the operational semantics of the language. The adequacy proof
involves a \emph{finitary} language, based on the original one, where the
number of recursive calls is limited. This adequacy statement, together with
Turing completeness, ensures that any computable partial injection between two
types of the language, has a corresponding iso in the language.

\paragraph{The role of semantics.}
However abstract the development of a denotation semantics seems, it has a role
to play in the formalisation of programming languages. In the original paper
\cite{sabry2018symmetric}, the language at the level of isos would only allow
very specific $\lambda$-abstractions. This \emph{ad hoc} presentation was a
consequence of the authors having an operational understanding of the language
only, without concerns for a denotation one. Once a denotational semantics was
established by the author of this thesis, it was clear that any usual
$\lambda$-calculus -- and really any usual programming language -- could be put
on top of isos, for their semantics lives in the cartesian closed category
$\dcpo$.

The aim of \cite{sabry2018symmetric} is to establish a high-order programming
language that handles quantum control. While the ideas outlined in the paper
are promising, the language in itself suffers from the same issue as observed
above: the denotational understanding of the language is weak. This echoes to
Abramsky's note in \cite{abramsky2020whither}, questioning whether denotational
semantics should lead or follow. We do not argue in favour of one nor the
other; however, we believe that in the presentation of a programming language,
there should be both convincing operational and denotational arguments. This is
what the author hopes he has done successfully all along the thesis.

\paragraph{Quantum control.}
While Chapter~\ref{ch:qu-control} presents an operational and a denotational
account of simply-typed quantum control, a sound and adequate denotational
semantics of the language in \cite{sabry2018symmetric} is yet to be found. With
the development of the language in the current chapter, the desired result
seems to be just around the corner. However, adding reversible quantum effects
to the semantics presented here does not preserve fixed points (a more general
intuition of this point is outlined in the next chapter), and thus an
interpretation of recursive calls in that setting is yet to be found.

%
%

\chapter{Notes on Quantum Recursion}
\label{ch:qu-recursion}
\begin{citing}{7cm}
	``The category of Hilbert spaces is self-dual, has two monoidal structures,
	and its homsets are algebraic domains, but its enrichment and limit behaviour
	is wanting.'' --- Chris Heunen, in \cite{heunen2013l2}.
\end{citing}

\begin{abstract}
	We present some remarks and ideas on recursion in quantum control. First, we
	outline the limitations of Hilbert spaces as a denotational model of
	programming languages.  Then, we present some results in the semantics of
	guarded recursion applied to quantum programming.

	\paragraph{References.}
	While the different contributions in this chapter are not yet enough to be
	published independently, they appear to the author as potential foundations
	for infinite-dimentional quantum-controlled programming. This work is the
	author's.
\end{abstract}

\section{Introduction}

In this chapter, we tackle the question of quantum recursion with quantum
control -- in other words, in the context of a quantum reversible effect. Let
us recall that, by quantum control, we mean reversible quantum operations,
which are usually unitaries between Hilbert spaces. It is yet unclear whether
general recursion makes sense in that setting. For example, what is a
non-terminating behaviour with unitaries?

While \cite{sabry2018symmetric} brings a syntactic approach to quantum control
with infinite data types, we choose here a mathematical approach, through a
(potential) denotational semantics. Our focus shall be on Hilbert spaces. They
are the most natural candidates for a denotational semantics of quantum
control, because their direct sum allows for quantum superposition.

We have seen in the previous chapters that unitaries can be written as a sum --
or a \emph{decomposition} -- of contractions. Thus, contractions seem to be the
right notion of \emph{partial unitaries}. They can even be described as
\emph{subunitaries} thanks to \cite[Proposition 14]{pablo2022universal} which
shows that a bounded linear map $f \colon H_1 \to H_2$ is contractive if and
only if $f\dg f \sqsubseteq \iid$.

\section{Limitations}

In this section, we present the \emph{things that do not work} when working with
Hilbert spaces and contractions to interpret quantum control. First, we recall
the fact that the reversible quantum effect is not a monad, setting its study
outside of the realm of Moggi. Then, we observe that the enrichment of Hilbert
spaces is not sufficient to use them as a model such as the one in
Chapter~\ref{ch:reversible}, where join inverse rig categories are
$\dcpo$-enriched. Finally, we motivate the fact that the category of Hilbert
spaces and contractions is probably not canonically traced, and therefore
recursion could not studied with that angle.

\subsection{Effects and the functor $\ell^2$}
\label{sub:l2effect}

As shown in \cite[Corollary 4.8]{heunen2013l2}, the functor $\ell^2$ cannot
induce a monad. Reversible quantum operations interpreted as maps between
Hilbert spaces then cannot be seen as an effect in a traditional way
\cite{moggi-lics, moggi}, as laid out in \secref{sub:sem-lambda-effects} and
studied in Chapter~\ref{ch:monads}.  We do not leave out the possibility of
finding another category for which these operations form a monad, but the
author thinks it is unlikely.

In any case, $\ell^2$ is a functor from $\PInj$ to $\Hilb$, and one can show
that a monad over $\PInj$ would not contain any interesting computational
meaning.  Given a monad $(\MM,\eta,\mu)$ over $\PInj$, for all sets $X$, the
multiplication $\mu_X$ is a monomorphism and an epimorphism, and therefore is
an isomorphism between $\MM^2 X$ and $\MM X$.

Instead, $\ell^2$ can induce a \emph{promonad} over $\PInj$, given by $\PP
\defeq \Hilb(\ell^2(-),\ell^2(-)) \colon \PInj^{\rm op} \times \PInj \to \Set$
(see \cite{hughes2000arrows, jacobs2009arrows, alimarine2005arrows,
asade2010arrows} for a detail account on \emph{arrows} -- the programming
language paradigm corresponding to promonads). Promonads in the context of
reversible programming have been studied in details in \cite{heunen2018arrows}.
The language in Chapter~\ref{ch:qu-control} can be seen as working directly
with the promonad $\PP$; however, it is unclear whether this point of view
would improve or facilitate the presentation in that chapter.

\subsection{Hilbert spaces are not properly enriched}

As shown in \cite[Proposition 2.10]{heunen2013l2}, an order on bounded linear
maps between Hilbert spaces can be established; however, this order is not
preserved by composition, and thus does not form an enrichment. This is also
true for the wide subcategories with  isometries, or even contractions, as
morphisms. This hints at the fact that fixed points in a reversible quantum
setting cannot be studied in the same way as in Chapter~\ref{ch:reversible}. We
are left to wonder how the \emph{structurally recursive} fixed points
introduced in \cite{sabry2018symmetric} can be interpreted in a proper,
compositional semantics. While trying to answer this question, the author has
come across the notion of \emph{guarded recursion}, which led to the
observations in \secref{sub:guar-contribution}.

\subsection{Conjecture: infinite-dimensional Hilbert spaces are not
canonically traced}

It has been observed in \cite{bartha2014quantumturing} that the computationally
interesting tensor when working with isometries between Hilbert spaces is the
direct sum $\oplus$, and not the tensor product $\otimes$. It is shown in that
paper that the category of \emph{finite-dimensional} Hilbert spaces and
isometries is traced over the direct sum $\oplus$. This would allow for a
reversible quantum programming language managing finite data type to deal with
finite loops, usual written $\mathtt{for}$ in programming languages. This trace
is the canonical trace in the following sense: given an isometry $g \colon X
\oplus U \to Y \oplus U$, which can be decomposed in blocks, giving
\[
	g = \left( \begin{array}{cc}
		g_{X,Y} & g_{U,Y} \\
		g_{X,U} & g_{U,U}
	\end{array} \right)
\]
the operator
\begin{equation}
	\label{eq:trace}
	\mathrm{Tr}^{X,Y}_U(g) = g_{X,Y} + \sum_{i=1}^{\infty} g_{U,Y} \circ
	g^i_{U,U} \circ g_{X,U}
\end{equation}
is the trace given in \cite{bartha2014quantumturing}. The existence of this
trace depends on the Moore-Penrose generalised inverse
\cite{greville2003inverses, sanchez2020moore-penrose} of $\iid - g_{U,U}$. What
about non-finite dimensions? Pablo Andrés-Martínez \cite[Section
3.3.4]{pablo2022unbounded} raises this as an open question. We push the idea
further by stating the following conjecture: the category of Hilbert spaces and
contractions is not traced with the operator given in (\ref{eq:trace}). The
next lemma is a hint towards this conjecture.

\begin{lemma}
	Let $f$ be the bounded linear map $\ell^2(\N) \to \ell^2(\N)$ such that $f
	\ket n = \frac{1}{n+1} \ket n$. The map $f$ is a contraction and does not
	admit a Moore-Penrose inverse.
\end{lemma}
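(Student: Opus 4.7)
The plan is to establish the two claims in sequence. First I would verify that $f$ is a contraction by direct computation: for any $x = \sum_{n \geq 0} \alpha_n \ket n \in \ell^2(\N)$, one has
\[
\|f x\|^2 \;=\; \sum_{n \geq 0} \frac{|\alpha_n|^2}{(n+1)^2} \;\leq\; \sum_{n \geq 0} |\alpha_n|^2 \;=\; \|x\|^2,
\]
since $(n+1)^{-2} \leq 1$ for all $n \geq 0$. In particular $f$ is bounded with operator norm $1$.

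For the second claim, I would invoke the standard characterisation from the theory of generalised inverses: a bounded linear operator on a Hilbert space admits a (bounded) Moore--Penrose inverse if and only if its range is closed. It then suffices to prove that $\iim f$ is not closed in $\ell^2(\N)$. The natural witness is the vector $x \defeq \sum_{n \geq 0} \frac{1}{n+1} \ket n$, which belongs to $\ell^2(\N)$ because $\sum_{n \geq 0} \frac{1}{(n+1)^2}$ converges (to $\pi^2/6$). Its truncations $x_N \defeq \sum_{n=0}^N \frac{1}{n+1} \ket n$ all lie in $\iim f$, since $x_N = f\bigl(\sum_{n=0}^N \ket n\bigr)$, and clearly $x_N \to x$ in $\ell^2(\N)$. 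However $x \notin \iim f$: any preimage $y = \sum_n \alpha_n \ket n$ would have to satisfy $\alpha_n / (n+1) = 1/(n+1)$, hence $\alpha_n = 1$ for all $n$, so $\|y\|^2 = \sum_n 1 = \infty$, a contradiction. Thus $\iim f$ is a proper dense subspace of $\ell^2(\N)$, not closed, and no Moore--Penrose inverse exists.

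The only place where care is needed is the invocation of the ``closed range'' characterisation of Moore--Penrose invertibility; if one prefers to avoid citing this result, one can argue directly that any operator $g$ satisfying the four Moore--Penrose equations for $f$ would have to act as $g \ket n = (n+1) \ket n$ on every basis vector (this is forced by the relations $fgf = f$ and $gfg = g$ together with the self-adjointness of $fg$ and $gf$, since $f$ is self-adjoint with trivial kernel on each one-dimensional eigenspace $\C \ket n$), and this assignment extends to no bounded operator on $\ell^2(\N)$. This gives a self-contained proof whose main (mild) obstacle is just the bookkeeping for the alternative direct argument — the existence question itself is settled by the failure of closedness of $\iim f$ exhibited above.
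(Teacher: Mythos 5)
Your proof is correct, and your primary route differs from the paper's. The paper's own proof is a one-liner taking the direct approach you sketch only as an alternative at the end: it simply observes that the Moore--Penrose inverse of $f$ would have to be the map $g \ket n = (n+1)\ket n$, which is not bounded (indeed, $fgf = f$ already forces $fg = \iid$ on basis vectors, and injectivity of $f$ then pins down $g\ket n = (n+1)\ket n$; you do not even need all four Moore--Penrose identities for this). Your main argument instead invokes the standard characterisation that a bounded operator admits a bounded Moore--Penrose inverse iff its range is closed, and then exhibits $\sum_n \frac{1}{n+1}\ket n$ as a limit of range elements that is not itself in the range. This buys a conceptually cleaner explanation of \emph{why} the inverse fails to exist (the range is dense but proper), at the cost of citing an external theorem; the paper's direct computation is shorter and self-contained. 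Your explicit verification that $f$ is a contraction is also fine — the paper leaves it implicit. Both arguments are sound; no gaps.
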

\begin{proof}
	The Moore-Penrose inverse of $f$ would be $g \colon \ell^2(\N) \to \ell^2(\N)$
	such that $g \ket n = (n+1) \ket n$, which is not bounded.
\end{proof}

We could then potentially provide a $g_{U,U}$ using the map given above, and
obtain a suitable $g$, with the Szőkefalvi-Nagy \textipa{\sffamily ['s\o:kEf6lvi
'n6\textbardotlessj]}
dilation theorem \cite{szokefalvi1954contractions}, which would not be traceable.

This raises more questions than it answers. One may wonder whether the category
we are interested in in Chapter~\ref{ch:qu-control} -- namely,
countably-dimensional Hilbert spaces and isometries -- is traced at all in a
\emph{computationally interesting} way. Making this statement mathematically
precise is a challenge in itself; thus proving or disproving it might require
some effort.

In his thesis, Pablo Andrés-Martínez \cite{pablo2022unbounded} unveiled a
category both traced with regard to its direct sum $\oplus$ and that could
handle non-finite data types; however it has not been proven to be a model for
a sufficiently expressive programming language. The work presented in the next
section is similar in the idea, and has also not been shown to be a sound and
adequate interpretation to a programming language yet. Nevertheless, it is
based on a tried-and-texted paradigm for classical programming, called
\emph{guarded recursion}.

\section{A Foundation for Guarded Quantum Recursion}

Guarded recursion is a framework in which recursive calls are guarded by delay
modalities. This framework is particularly useful to reason about streams in
programming languages. A type system aimed for guarded recursion usually
contains the \emph{later} modality, given by the symbol $\later$. Its
introduction rule is simple:
\[
	\infer{\Theta \vdash~\later\! A}{\Theta \vdash A}
\]
If we take the example of a guarded $\lambda$-calculus -- such as the one
introduced by Nakano \cite{nakano2000recursion} -- a contructor $\nnext\!$ is
added to the syntax, with the following rule:
\[
	\infer{\Gamma \vdash \nnext M \colon \later\! A}{\Gamma \vdash M \colon A}
\]
A term under $\nnext\!$ is \emph{locked} and needs to wait for its time to be
computed. The fixed point combinator introduced in the beginning of the thesis
(see \secref{sub:dcpo}) becomes now $\ffix \colon (\later\! A \to A) \to A$, and given
$\cdot \vdash M \colon \later\! A \to A$, we have the following operational rule:
\[
	\ffix M \to M(\nnext \ffix M)
\]
while does not allow for a infinite reduction in general, because the
$\nnext\!$ will control this behaviour.

This framework admits sound and adequate denotational semantics in the
\emph{topos of trees} $\Set^{\N^{\mathrm op}}$, written $\Scat$ in the
following, which is a cartesian closed category.

\subsection{Work of the author}
\label{sub:author-guarded}

The author has contributed to the following points.
\begin{itemize}
	\item The definition of categories ($\NN$ and $\QQ$, see below) to model
		guarded quantum recursion.
	\item A proof that those categories are $\Scat$-enriched (see
		Lemma~\ref{lem:enriched}), allowing for a similar semantic study as what is
		done in Chapter~\ref{ch:reversible}.
	\item A proof that inductive types can be interpreted in this model (see
		Theorem~\ref{th:param}).
	\item Convincing arguments for this model to interpret a programming language
		with quantum guarded recursion.
\end{itemize}

\subsection{Related work}
\label{sub:guar-related}

The work in \cite{birkedal2012first} can be described as a foundation for
semantics of guarded recursion, and an introduction to what they call
\emph{synthetic guarded domain theory}. Their theory assumes a topos or a
sheaves structure, which will not be the case in an attempt to interpret
quantum control. However, their work is a great source of inspiration for our
denotation of guarded inductive types. They prove that locally contractive
functors have a unique fixed point, which shows that induction and coinduction
have the same interpretation in their model; this has a practical consequence:
the fixed point is obtained as a limit, but can still be manipulated as an
initial algebra.

A first account of solution to solve recursive equations in (pre)sheaves is
given in \cite{gianantonio2004unifying}, and the same authors also observed
that working with contractive functors would unify induction and coinduction
\cite{gianantonio2003unifying}. Their theory is not suited to quantum for the
same reason as above.

The work in \cite{birkedal2010metric} details a denotational semantics to the
guarded $\lambda$-calculus introduced by Nakano in \cite{nakano2000recursion}.
Their model is a category of ultrametric spaces, which is actually included in
the theory of \cite{birkedal2012first}, as justified by the authors themselves.

Several papers have then used synthetic guarded domain theory to develop
refined models of guarded types or guarded recursion
\cite{birkedal2016guarded,mannaa2020ticking,basold2023causal}. Their study,
based on \cite{birkedal2012first}, cannot be used to model quantum computation,
for the same reasons as above.

\subsection{Contribution}
\label{sub:guar-contribution}

This section aims at providing a categorical tool necessary to interpret
guarded recursion with quantum control. Recursive domain equations are here
tackled through locally contractive functors as defined in
\cite{birkedal2012first}. 

We will work with categories of the form $\CC^{\N^{\mathrm op}}$, where $\N$ is
the category of natural numbers starting from $0$, with morphisms defined by
the classical order on natural numbers. Given an object $X$ of
$\CC^{\N^{\mathrm op}}$, meaning a functor $\N^{\mathrm op}\to \CC$, its image
on the morphism $n \leq n+1$ is written $r^X_n \colon X(n+1) \to X(n)$, and is
a morphism in $\CC$. This object $X$ of $\CC^{\N^{\mathrm op}}$ can be
represented with a diagram, such as:
\begin{equation}
	\label{eq:diagram-view}
	\begin{tikzcd}
		X(0) & X(1) & X(2) & X(3) & \cdots
		\arrow["r^X_0"', from=1-2, to=1-1]
		\arrow["r^X_1"', from=1-3, to=1-2]
		\arrow["r^X_2"', from=1-4, to=1-3]
		\arrow[from=1-5, to=1-4]
	\end{tikzcd}
\end{equation}
and a morphism $f \colon X \to Y$ can be pictured with the following diagram
in $\CC$:
\begin{equation*}
	\label{eq:diagram-view-natural}
	\begin{tikzcd}
		X(0) & X(1) & X(2) & X(3) & \cdots \\
		Y(0) & Y(1) & Y(2) & Y(3) & \cdots
		\arrow["r^X_0"', from=1-2, to=1-1]
		\arrow["r^X_1"', from=1-3, to=1-2]
		\arrow["r^X_2"', from=1-4, to=1-3]
		\arrow[from=1-5, to=1-4]
		\arrow["r^Y_0", from=2-2, to=2-1]
		\arrow["r^Y_1", from=2-3, to=2-2]
		\arrow["r^Y_2", from=2-4, to=2-3]
		\arrow[from=2-5, to=2-4]
		\arrow["f_0", from=1-1, to=2-1]
		\arrow["f_1", from=1-2, to=2-2]
		\arrow["f_2", from=1-3, to=2-3]
		\arrow["f_3", from=1-4, to=2-4]
	\end{tikzcd}
\end{equation*}
This way of picturing them will help the intuition throughout the section.
The category
$\Set^{\N^{\mathrm op}}$ is called the \emph{topos of trees}, and is a
cartesian closed category. However, $\CC^{\N^{\mathrm op}}$ in general has a
topos flavour without being cartesian closed. Some subcategories of
$\Coiso^{\N^{\mathrm op}}$ and $\Contr^{\N^{\mathrm op}}$ will be used for our
denotational model; a coisometry is in particular a contraction, thus
$\Coiso^{\N^{\mathrm op}}$ is a subcategory of $\Contr^{\N^{\mathrm op}}$. This
model was inspired by the work in \cite{birkedal2012first} and
\cite{birkedal2016guarded}; however, the motivations, the point of view and the
results differ. We use the following notations:
$ \Scat \defeq \Set^{\N^{\mathrm op}},
\NN \defeq \Coiso^{\N^{\mathrm op}}. $

\begin{lemma}[\cite{maclane2012sheaves}]
	The category $\Scat$ is a cartesian closed category with the following: given
	two objects $X,Y$ in $\Scat$, their product $X \times Y$ is obtained
	pointwise, and $[X \to Y]$ is given by $\Scat(\yo(-) \times X, Y)$ where $\yo
	\colon \N \to \Scat$ is the Yoneda embedding ($\yo$ is the Japanese hiragana
	``yo'' , see Example~\ref{ex:yoneda}).
\end{lemma}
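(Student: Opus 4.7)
The plan is to verify the three defining pieces of cartesian closure for $\Scat$ in turn: a terminal object, binary products, and exponentials satisfying the curry/uncurry bijection.

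First, I would handle the easy parts pointwise. The terminal object $1$ is the constant functor $1(n) = \{*\}$ with identity restriction maps; the unique arrow from any $X$ to $1$ is forced componentwise by the terminal property of $\{*\}$ in $\Set$. For binary products, I would set $(X \times Y)(n) = X(n) \times Y(n)$ and $r^{X \times Y}_n = r^X_n \times r^Y_n$, with projections and pairings lifted componentwise from $\Set$. Naturality of the pairing follows because each $r^{X\times Y}_n$ is defined as a product of maps, and the universal property transfers levelwise since morphisms in $\Scat$ are natural transformations.

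The substantive step is the exponential. Given $X,Y$, I would define $[X \to Y](n) \defeq \Scat(\yo(n) \times X, Y)$, with restriction $r^{[X\to Y]}_n \colon [X \to Y](n{+}1) \to [X \to Y](n)$ given by precomposition with $\yo(n \leq n{+}1) \times \mathrm{id}_X$; functoriality is routine. To establish the adjunction $\Scat(Z \times X, Y) \cong \Scat(Z, [X \to Y])$ naturally in $Z$, the key observation is the Yoneda lemma: for any presheaf $F$, elements $z \in Z(n)$ correspond bijectively to natural transformations $\tilde z \colon \yo(n) \to Z$. Given $f \colon Z \times X \to Y$, I would define its transpose $\tilde f \colon Z \to [X \to Y]$ by $\tilde f_n(z) \defeq f \circ (\tilde z \times \mathrm{id}_X)$; conversely, given $g \colon Z \to [X \to Y]$, I would define $\hat g \colon Z \times X \to Y$ componentwise by $\hat g_n(z,x) \defeq g_n(z)_n(\mathrm{id}_n, x)$, using that $g_n(z)$ is a natural transformation $\yo(n) \times X \to Y$ that can be evaluated at the distinguished element $\mathrm{id}_n \in \yo(n)(n)$.

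The main obstacle will be checking that these operations are mutually inverse and that the bijection is natural in $Z$: each verification reduces to a diagram chase combining the Yoneda lemma with naturality of the restriction maps $r^X$, $r^Y$, and the components of $g$. Once these are dispatched, cartesian closure of $\Scat$ follows. As a sanity check, this construction recovers the standard cartesian closed structure on $\Set^{\mathcal{C}^{\mathrm{op}}}$ for a small category $\mathcal C$, specialised to $\mathcal C = \N$, so no phenomena particular to $\N$ come into play.
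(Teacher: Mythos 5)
Your proposal is correct and is precisely the standard argument for cartesian closure of a presheaf category $\Set^{\mathcal{C}^{\mathrm{op}}}$ (pointwise terminal object and products, exponentials via the Yoneda formula $[X \to Y] = \Scat(\yo(-) \times X, Y)$ with the curry/uncurry bijection mediated by the Yoneda lemma), specialised to $\mathcal{C} = \N$. The paper does not give its own proof — it cites this result directly from the reference — and your argument is essentially the one found there, so there is nothing further to reconcile.
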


We write $\QQ$, the full subcategory of $\Contr^{\N^{\mathrm op}}$ whose
objects are objects in $\NN$. Note that $\NN$ is embedded in $\QQ$. 
The category $\NN$ will be used to study the semantics of types, whereas $\QQ$
is the category where the terms and the functions are interpreted.

\begin{lemma}
	The categories $\QQ$ and $\NN$ are symmetric monoidal, equipped
	with a pointwise monoidal product. More generally, if $\CC$ is symmetric
	monoidal, so is $\CC^{\N^{\mathrm op}}$.
\end{lemma}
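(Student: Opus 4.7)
The plan is to prove the general statement and then specialise. Given $\CC$ symmetric monoidal with data $(\otimes, I, \alpha, \lambda, \rho, \sigma)$, I define the tensor on $\CC^{\N^{\mathrm op}}$ pointwise: for objects $X, Y$, set $(X \otimes Y)(n) \defeq X(n) \otimes Y(n)$ with restriction maps $r^{X \otimes Y}_n \defeq r^X_n \otimes r^Y_n$, using bifunctoriality of $\otimes$ to certify the composite is a functor $\N^{\mathrm op} \to \CC$. For natural transformations $f \colon X \to X'$ and $g \colon Y \to Y'$, set $(f \otimes g)_n \defeq f_n \otimes g_n$; the naturality square
\[
\begin{tikzcd}[ampersand replacement=\&]
X(n+1) \otimes Y(n+1) \& X'(n+1) \otimes Y'(n+1) \\
X(n) \otimes Y(n) \& X'(n) \otimes Y'(n)
\arrow["f_{n+1} \otimes g_{n+1}", from=1-1, to=1-2]
\arrow["f_n \otimes g_n"', from=2-1, to=2-2]
\arrow["r^X_n \otimes r^Y_n"', from=1-1, to=2-1]
\arrow["r^{X'}_n \otimes r^{Y'}_n", from=1-2, to=2-2]
\end{tikzcd}
\]
commutes because the tensor of two commuting squares commutes, which is an instance of $\otimes$ being a bifunctor. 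The unit is the constant functor $\mathbf{I}$ at $I$ with identity restriction maps.

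Next I lift the structural isomorphisms pointwise: $\alpha^{X,Y,Z}_n \defeq \alpha^{X(n), Y(n), Z(n)}$, and analogously for $\lambda$, $\rho$, $\sigma$. Each of these is a natural transformation in $\CC^{\N^{\mathrm op}}$: the required naturality squares are exactly the naturality squares of $\alpha$, $\lambda$, $\rho$, $\sigma$ in $\CC$ applied to the restriction maps, which commute by assumption. The pentagon, triangle, hexagon and involutivity axioms reduce to the same axioms in $\CC$ at every level $n$, and therefore lift to $\CC^{\N^{\mathrm op}}$.

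For $\NN = \Coiso^{\N^{\mathrm op}}$, the general statement applies directly, since $\Coiso$ is symmetric monoidal as a wide subcategory of $\Hilb$ (tensors of coisometries are coisometries, and the structural isomorphisms of $\Hilb$ are unitaries, hence in $\Coiso$). For $\QQ$, which has the same objects as $\NN$ but morphisms that are natural transformations in $\Contr$, I need two extra checks: that the pointwise tensor of two objects of $\QQ$ still lies in $\QQ$, and that the structural isomorphisms are valid morphisms in $\QQ$. The first is immediate, since $r^X_n \otimes r^Y_n$ is a tensor of coisometries and thus a coisometry, and the restriction maps of $\mathbf{I}$ are identities. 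The second holds because the components $\alpha^{\sem A, \sem B, \sem C}$ et cetera are unitaries in $\Hilb$, which are in particular contractions, so they qualify as morphisms of $\Contr^{\N^{\mathrm op}}$ and thus of $\QQ$. The only place where one has to pay attention is this asymmetry between the object-level restriction to $\Coiso$ and the morphism-level relaxation to $\Contr$ in $\QQ$: the plan's only real verification, as opposed to straight pointwise lifting, is ensuring that no coherence condition forces a structural component to fall outside $\Contr$, which it does not, since unitaries compose to unitaries pointwise.
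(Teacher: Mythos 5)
Your proof is correct and follows essentially the same route as the paper: the monoidal structure is lifted pointwise, with the structural isomorphisms defined componentwise and all coherence axioms inherited level by level. You are in fact slightly more careful than the paper's own proof, which only treats the general $\CC^{\N^{\mathrm op}}$ case and leaves implicit the check that $\QQ$ (objects from $\Coiso^{\N^{\mathrm op}}$, morphisms in $\Contr^{\N^{\mathrm op}}$) is closed under the pointwise tensor and contains the structural isomorphisms.
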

\begin{proof}
	Let $(\CC, \otimes, I, \alpha, \lambda, \rho)$ be a symmetric monoidal
	category. We show that $\CC^{\N^{\mathrm op}}$ is also one. The tensor is obtained
	pointwise: given two objects $X$ and $Y$ in $\CC^{\N^{\mathrm op}}$, their
	tensor product is the following object:
	\[\begin{tikzcd}
		X(0) \otimes Y(0) & X(1) \otimes Y(1) & X(2) \otimes Y(2) & X(3) \otimes
		Y(3) &
		\cdots
		\arrow["r^X_0 \otimes r^Y_0"', from=1-2, to=1-1]
		\arrow["r^X_1 \otimes r^Y_1"', from=1-3, to=1-2]
		\arrow["r^X_2 \otimes r^Y_2"', from=1-4, to=1-3]
		\arrow[from=1-5, to=1-4]
	\end{tikzcd} \]
	We abuse notations and write $X \otimes Y$ for this tensor product.
	The tensor unit of this tensor is the object:
	\[\begin{tikzcd}
		I & I & I & I &
		\cdots
		\arrow["\iid_I"', from=1-2, to=1-1]
		\arrow["\iid_I"', from=1-3, to=1-2]
		\arrow["\iid_I"', from=1-4, to=1-3]
		\arrow[from=1-5, to=1-4]
	\end{tikzcd} \]
	We abuse notations by also writing $I$ for this unit.
	The left unitor $I \otimes X \to X$ is given pointwise as well:
	\[\begin{tikzcd}
		I \otimes X(0) & I \otimes X(1) & I \otimes X(2) & I \otimes X(3) & \cdots \\
		X(0) & X(1) & X(2) & X(3) & \cdots
		\arrow["\iid_I \otimes r^X_0"', from=1-2, to=1-1]
		\arrow["\iid_I \otimes r^X_1"', from=1-3, to=1-2]
		\arrow["\iid_I \otimes r^X_2"', from=1-4, to=1-3]
		\arrow[from=1-5, to=1-4]
		\arrow["r^X_0"', from=2-2, to=2-1]
		\arrow["r^X_1"', from=2-3, to=2-2]
		\arrow["r^X_2"', from=2-4, to=2-3]
		\arrow[from=2-5, to=2-4]
		\arrow["\lambda_{X(0)}"', from=1-1, to=2-1]
		\arrow["\lambda_{X(1)}", from=1-2, to=2-2]
		\arrow["\lambda_{X(2)}", from=1-3, to=2-3]
		\arrow["\lambda_{X(3)}", from=1-4, to=2-4]
	\end{tikzcd} \]
	The right unitor and associator are defined pointwise in the same way. The
	proof of the coherence diagrams is direct.
\end{proof}

Similarly, given two objects $X,Y$ of $\QQ$ (resp. $\NN$), their pointwise
direct sum $X \oplus Y$ is an object of $\QQ$ (resp. $\NN$).
\[\begin{tikzcd}
	X(0) \oplus Y(0) & X(1) \oplus Y(1) & X(2) \oplus Y(2) & X(3) \oplus Y(3) &
	\cdots
	\arrow["r^X_0 \oplus r^Y_0"', from=1-2, to=1-1]
	\arrow["r^X_1 \oplus r^Y_1"', from=1-3, to=1-2]
	\arrow["r^X_2 \oplus r^Y_2"', from=1-4, to=1-3]
	\arrow[from=1-5, to=1-4]
\end{tikzcd} \]
Moreover, in $\QQ$, one can define injections $\iota_l^{X,Y} \colon
X \to X \oplus Y$ as follows:
\[\begin{tikzcd}
	X(0) & X(1) & X(2) & X(3) & \cdots \\
	X(0) \oplus Y(0) & X(1) \oplus Y(1) & X(2) \oplus Y(2) & X(3) \oplus Y(3) &
	\cdots
	\arrow["r^X_0"', from=1-2, to=1-1]
	\arrow["r^X_1"', from=1-3, to=1-2]
	\arrow["r^X_2"', from=1-4, to=1-3]
	\arrow[from=1-5, to=1-4]
	\arrow["r^X_0 \oplus r^Y_0", from=2-2, to=2-1]
	\arrow["r^X_1 \oplus r^Y_1", from=2-3, to=2-2]
	\arrow["r^X_2 \oplus r^Y_2", from=2-4, to=2-3]
	\arrow[from=2-5, to=2-4]
	\arrow["\iota^{X(0),Y(0)}_l", from=1-1, to=2-1]
	\arrow["\iota^{X(1),Y(1)}_l", from=1-2, to=2-2]
	\arrow["\iota^{X(2),Y(2)}_l", from=1-3, to=2-3]
	\arrow["\iota^{X(3),Y(3)}_l", from=1-4, to=2-4]
\end{tikzcd} \]
and $\iota_r^{X,Y} \colon Y \to X \oplus Y$ is defined similarly.


\begin{remark}[Dagger]
	\label{rem:dagger}
	Morphisms in $\QQ$ are natural transformations, whose components
	are morphisms in $\Contr$. In that regard, $\QQ$ inherits some
	of the structure of $\Contr$, but not all. For example,
	$\QQ$ is not a dagger category. However, given $\alpha \colon X \natto Y$
	a morphism in $\QQ$, we will write $\alpha\dg$ for the 
	\emph{componentwise} dagger of $\alpha$, even if it might not 
	be a morphism in $\QQ$.
\end{remark}

\begin{lemma}
	\label{lem:enriched}
	The categories $\QQ$ and $\NN$ are enriched in $\Scat$.
\end{lemma}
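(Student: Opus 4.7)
The plan is to exhibit, for every pair of objects $X, Y$ in $\QQ$, a presheaf $\QQ(X,Y) \in \Scat$ whose value at level $n \in \N$ records the data of a natural transformation from $X$ to $Y$ truncated at level $n$. Concretely, I would set
$$\QQ(X,Y)(n) = \bigl\{\, (f_k)_{0 \le k \le n} \,\bigl|\, f_k \in \Contr(X(k), Y(k)),\; r^Y_k \circ f_{k+1} = f_k \circ r^X_k \text{ for } k < n \,\bigr\},$$
with the restriction map $\QQ(X,Y)(n+1) \to \QQ(X,Y)(n)$ given by forgetting the final component. Functoriality in $\N^{\mathrm{op}}$ is immediate, so this is a genuine object of $\Scat$. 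The analogous definition using $\Coiso$ in place of $\Contr$ gives the candidate hom-presheaf $\NN(X,Y)$ for $\NN$.

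Next I would equip this data with identity and composition morphisms in $\Scat$. The identity $\iid_X \colon 1 \to \QQ(X,X)$ (where $1$ is the terminal presheaf) is given at level $n$ by sending the unique element of $1(n)$ to $(\iid_{X(k)})_{k \le n}$; this is a valid element because identities trivially commute with the restriction maps. Composition $\comp_{X,Y,Z} \colon \QQ(Y,Z) \times \QQ(X,Y) \to \QQ(X,Z)$ is defined at level $n$ by componentwise composition, sending $((g_k)_k, (f_k)_k)$ to $(g_k \circ f_k)_{k \le n}$. The required naturality square at level $k$ is exactly the pasting
\[ r^Z_k \circ g_{k+1} \circ f_{k+1} \;=\; g_k \circ r^Y_k \circ f_{k+1} \;=\; g_k \circ f_k \circ r^X_k , \]
so the output indeed lies in $\QQ(X,Z)(n)$. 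Naturality of $\comp$ in $\Scat$ (with respect to the restriction maps) is automatic since truncation commutes with componentwise composition. For $\NN$, the only extra check is that a componentwise composition of coisometries is a coisometry and that identities are coisometries, both of which are immediate.

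Finally I would verify the three coherence diagrams of Definition~\ref{def:enriched-cat} (associativity and the two unit laws). At each level $n$ they unfold into finitely many identities among contractions (respectively coisometries), namely the corresponding identities in $\Contr$ itself applied componentwise; since two morphisms of $\Scat$ are equal iff they agree at every level, this is enough. The main obstacle, which is cosmetic rather than substantive, is to ensure that the resulting $\Scat$-category genuinely enriches $\QQ$ in the sense that the global sections $\Scat(1, \QQ(X,Y))$ correspond bijectively to natural transformations $X \Rightarrow Y$; this reduces to observing that a compatible family of truncations at every $n$ assembles uniquely into a single natural transformation, because the indexing category $\N$ carries no higher coherence data to reconcile.
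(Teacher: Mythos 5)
Your construction is correct, but it is not the one the paper uses, and the difference is worth noting. You take $\QQ(X,Y)(n)$ to be the set of \emph{truncated} families $(f_k)_{k\le n}$ satisfying naturality up to level $n$, with restriction given by forgetting the top component; this works verbatim over any base category $\CC$ and never touches the dagger. The paper instead takes $\NN(X,Y)(n)$ to be the set of $n$-th components of \emph{global} natural transformations, with restriction map $f \mapsto r^Y_n \circ f \circ (r^X_n)\dg$; the well-definedness of that restriction is exactly where the coisometry identity $r^X_n \circ (r^X_n)\dg = \iid$ is used. The two presheaves are in fact isomorphic, but seeing this itself uses the coisometry hypothesis twice: the structure maps $r^X_n$ are epi, so the top component of one of your truncated families determines all the lower ones, and every truncated family extends to a global transformation by setting $f_{n+1} = (r^Y_n)\dg \circ f_n \circ r^X_n$ (a contraction, since $(r^Y_n)\dg$ is an isometry). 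What each approach buys: yours is more elementary and more general, and isolates the fact that the enrichment has nothing to do with Hilbert-space structure; the paper's presentation is chosen so that the later computations — Lemma~\ref{lem:later-enriched-view}, the formula $\nu^\Scat_{\NN(X,Y)} = \nu^\NN_Y \circ - \circ (\nu^\NN_X)\dg$ of Lemma~\ref{lem:next-enriched-view}, and the definition of locally contractive functors — can be read off directly from the explicit dagger formula for the restriction maps. If you adopted your presentation you would have to re-derive that formula before those later results go through.
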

\begin{proof}
	The homsets $\NN (X,Y)$ can be seen as objects of $\Scat$,
	with $\NN(X,Y) (n)$ being the set of $n$-th component of
	natural transformations in $\NN(X,Y)$,
	and the image of $n \leq n+1$, written $r^{\NN (X,Y)}_n$, is:
	\[\begin{tikzcd}
		\NN(X,Y)(n) && \NN (X,Y)(n+1).
		\arrow["r^Y_n\circ - \circ (r^X_n)\dg"',from=1-3, to=1-1]
	\end{tikzcd}\]
	Note that elements of $\NN(X,Y)(n)$ are in particular in $\Coiso(X(n),Y(n))$.
	This definition is sound because if $\alpha$ is a natural transformation,
	we have in particular that $\alpha_n\circ r^X_n = r^Y_n \circ \alpha_{n+1}$,
	and by precomposing by $(r^X_n)\dg$, we get
	\begin{align*}
		r^Y_n \circ \alpha_{n+1} \circ (r^X_n)\dg
		&= \alpha_n\circ r^X_n \circ (r^X_n)\dg \\
		&= \alpha_n.
	\end{align*}
	We also have to prove that composition, defined pointwise by the composition
	in $\Coiso$, is a morphism in $\Scat$; formally that for every $X,Y,Z$ objects
	of $\NN$, $\comp_{X,Y,Z} \colon \NN(X,Y) \times \NN(Y,Z) \to \NN(X,Z)$ is a
	morphism in $\Scat$. We need to prove that it is a natural transformation, in
	other words that for all $n$, the diagram:
	\[\begin{tikzcd}
		\NN(X,Y)(n) \times \NN(Y,Z)(n) && \NN(X,Y)(n+1) \times \NN(Y,Z)(n+1) \\
		\NN(X,Z)(n) && \NN(X,Z)(n+1)
		\arrow["r^{\NN(X,Y)}_n \times r^{\NN(Y,Z)}_n"', from=1-3, to=1-1]
		\arrow["r^{\NN(X,Z)}_n", from=2-3, to=2-1]
		\arrow["\comp_n", from=1-1, to=2-1]
		\arrow["\comp_{n+1}", from=1-3, to=2-3]
	\end{tikzcd}\]
	commutes. Indeed:
	\begin{align*}
		(\comp_n \circ (r^{\NN(X,Y)}_n \times r^{\NN(Y,Z)}_n)) (f_{n+1},g_{n+1})
		&= \comp_n (r^Y_n \circ f_{n+1} \circ (r^X_n)\dg,r^Z_n \circ g_{n+1} \circ (r^Y_n)\dg) \\
		&= \comp_n (f_n,g_n) = g_n \circ f_n,
	\end{align*}
	and
	\begin{align*}
		(r^{\NN(X,Z)}_n \circ \comp_{n+1}) (f_{n+1},g_{n+1})
		&= r^{\NN(X,Z)}_n (g_{n+1} \circ f_{n+1}) \\
		&= r^Z_n \circ g_{n+1} \circ f_{n+1} \circ (r^X_n)\dg \\
		&= g_n \circ r^Y_n \circ f_{n+1} \circ (r^X_n)\dg  \\
		&= g_n \circ f_n \circ r^X_n \circ (r^X_n)\dg \\
		&= g_n \circ f_n,
	\end{align*}
	which ensures the commutativity of the diagram above.

	The same observations apply to $\QQ$.
\end{proof}

\begin{remark}
	\label{rem:embedding-S}
	The embedding $E_\NN^\QQ \colon \NN \embed \QQ$ is $\Scat$-enriched. 
\end{remark}

A feature of categories of the form $\CC^{\N^{\mathrm op}}$ is the later
functor. This functor works as some sort of delay operation. It can be used to
keep track of the depth of a term and the number of recursive calls. The way it
works is fairly simple: it shifts the diagram in (\ref{eq:diagram-view}) one step
to the right, and adds a terminal object on the left.

\begin{definition}[Later functor]
	\label{def:later}
	Given any category $\CC$ with terminal object $1$, The later functor is a
	functor $L\colon \CC^{\N^{\mathrm op}} \to \CC^{\N^{\mathrm op}}$, such that
	given a functor $X \colon \N^{\mathrm op} \to \CC$, $LX(0) = 1$ and $LX(n+1)
	= X(n)$.  Given $\alpha\colon X\natto Y$, $L\alpha_0 = ~!$ (the terminal map),
	and $(L\alpha)_{n+1} = \alpha_n$.
\end{definition}

\begin{remark}
	Note that this endofunctor can be defined on $\Scat$, $\QQ$ and $\NN$; where
	the terminal object in the latter categories is the zero-dimentional Hilbert
	space $\set 0$. We will use the same letter $L$ when it is not ambiguous. If
	any ambiguity arises, the notations $L^\Scat$, $L^\QQ$ and $L^\NN$ will be
	used.
\end{remark}

\begin{lemma}[\cite{birkedal2012first}]
	The functor $L^\Scat \colon \Scat \to \Scat$ is a strict monoidal functor.
\end{lemma}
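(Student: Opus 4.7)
The plan is to verify the three defining properties of a strict monoidal functor for $L^\Scat$ with respect to the cartesian monoidal structure of $\Scat$: strict preservation of the tensor unit, strict preservation of the tensor product on objects and morphisms, and strict preservation of the coherence isomorphisms (unitors and associator). The key observation is that the monoidal product in $\Scat$ is computed pointwise, and $L$ only shifts the diagram; since $1 \times 1 = 1$ holds strictly in $\Set$ (any product of terminal objects is terminal), the ``inserted'' component at index $0$ behaves correctly.

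First I would check the unit: $L(1)(0) = 1$ by definition of $L$, and $L(1)(n+1) = 1(n) = 1$, with all restriction maps being the unique map to $1$. Hence $L(1) = 1$ as objects of $\Scat$. Second, I would check binary products on objects: $L(X \times Y)(0) = 1 = 1 \times 1 = LX(0) \times LY(0) = (LX \times LY)(0)$, and for $n \geq 0$,
\[
L(X \times Y)(n+1) = (X \times Y)(n) = X(n) \times Y(n) = LX(n+1) \times LY(n+1) = (LX \times LY)(n+1),
\]
with the restriction map $r^{L(X\times Y)}_n$ matching $r^{LX}_n \times r^{LY}_n$ on the nose (using that any map into $1$ is unique at the boundary step from $n=1$ to $n=0$). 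For morphisms $\alpha \colon X \natto X'$ and $\beta \colon Y \natto Y'$, a direct componentwise check yields $L(\alpha \times \beta) = L\alpha \times L\beta$.

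Third, I would verify that the associator and unitors are preserved strictly. Because the monoidal structure on $\Scat$ is defined componentwise from that on $\Set$, and since $L$ preserves $\times$ and $1$ strictly, the associator $\alpha_{LX,LY,LZ}$ and unitors $\lambda_{LX}, \rho_{LX}$ agree componentwise with $L\alpha_{X,Y,Z}$, $L\lambda_X$, $L\rho_X$: at index $n+1$ both sides restrict to the corresponding map in $\Set$ at index $n$, and at index $0$ everything collapses to the unique map between terminal objects.

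I do not expect any substantial obstacle here; the only mildly delicate point is the boundary at $n = 0$, where one must check that the inserted terminal object does not break compatibility with the restriction maps nor with the coherence morphisms. This is handled by the universal property of $1$, which forces all relevant maps to coincide at that index.
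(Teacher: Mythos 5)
Your verification is correct: the paper itself imports this lemma from \cite{birkedal2012first} without proof, and your pointwise check of unit, product, and coherence preservation is exactly the standard argument. The only point worth being aware of is that ``$1 \times 1 = 1$'' in $\Set$ holds strictly only under a chosen convention for products of terminal objects (otherwise it is a canonical isomorphism), which is the usual harmless abuse in this literature.
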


\begin{remark}
	The functors $L^\NN$ and $L^\QQ$ preserve the monoidal structure
	in the sense that $L(X \otimes Y) = LX \otimes LY$, however $L$
	does not map the tensor unit to the tensor unit in those categories.
\end{remark}

\begin{lemma}
	\label{lem:later-enriched-view}
	Given $X,Y$ objects in $\NN$, we have $L^\Scat \NN(X,Y) \cong \NN(L^\NN X, L^\NN Y)$.
\end{lemma}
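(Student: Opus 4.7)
The plan is to exhibit an explicit componentwise isomorphism of presheaves and check that it is compatible with the restriction maps of $\Scat$. The guiding intuition is that $L^\NN$ is an index shift, and the Yoneda-style enrichment of Lemma \ref{lem:enriched} captures natural transformations componentwise, so applying $L^\NN$ to both source and target on the right is exactly the same thing as applying $L^\Scat$ to the whole homset on the left.

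First I would unpack both sides pointwise. By Definition \ref{def:later}, $(L^\Scat \NN(X,Y))(0) = 1$ and $(L^\Scat \NN(X,Y))(n+1) = \NN(X,Y)(n)$. On the other side, $(L^\NN X)(0) = (L^\NN Y)(0) = \set 0$, so $\NN(L^\NN X, L^\NN Y)(0)$ is the set of $0$-components of natural transformations $L^\NN X \natto L^\NN Y$, and this is a singleton (the unique coisometry $\set 0 \to \set 0$). For $n \geq 0$, we have $(L^\NN X)(n+1) = X(n)$ and $(L^\NN Y)(n+1) = Y(n)$, so $\NN(L^\NN X, L^\NN Y)(n+1)$ consists of coisometries $X(n) \to Y(n)$ that arise as an $(n+1)$-st component of a natural transformation $L^\NN X \natto L^\NN Y$ in $\NN$.

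Next I would establish a bijection between natural transformations $\alpha \colon X \natto Y$ in $\NN$ and natural transformations $\beta \colon L^\NN X \natto L^\NN Y$ in $\NN$, given by $\beta_0 = \iid_{\set 0}$ and $\beta_{n+1} = \alpha_n$. Naturality of $\beta$ at the square $0 \leftarrow 1$ is automatic because the codomain $(L^\NN Y)(0) = \set 0$ is terminal. Naturality of $\beta$ at $(n+1) \leftarrow (n+2)$ unfolds to $\alpha_n \circ r_n^X = r_n^Y \circ \alpha_{n+1}$, which is exactly the naturality of $\alpha$ at $n \leftarrow (n+1)$. Reading this bijection off the $(n+1)$-st component yields the equality $\NN(L^\NN X, L^\NN Y)(n+1) = \NN(X,Y)(n)$, and at level $0$ both sides are singletons. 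This gives a candidate isomorphism $\phi \colon L^\Scat \NN(X,Y) \to \NN(L^\NN X, L^\NN Y)$ at each level.

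The final step is to check that $\phi$ is natural in the $\Scat$ sense, i.e.\ commutes with the two restriction maps. The restriction $(L^\Scat \NN(X,Y))(1) \to (L^\Scat \NN(X,Y))(0)$ is the unique terminal map, and likewise the restriction of $\NN(L^\NN X, L^\NN Y)$ at the same level factors through $r_0^{L^\NN Y}$, which ends in the terminal object $\set 0$, so it is also the terminal map. At higher levels, the restriction of $L^\Scat \NN(X,Y)$ between $(n+2)$ and $(n+1)$ is by definition $r_n^{\NN(X,Y)}$, which by Lemma \ref{lem:enriched} sends $\alpha_{n+1}$ to $r_n^Y \circ \alpha_{n+1} \circ (r_n^X)\dg$. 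The corresponding restriction of $\NN(L^\NN X, L^\NN Y)$ sends $\beta_{n+2}$ to $r_{n+1}^{L^\NN Y} \circ \beta_{n+2} \circ (r_{n+1}^{L^\NN X})\dg = r_n^Y \circ \beta_{n+2} \circ (r_n^X)\dg$, and under the identification $\beta_{n+2} = \alpha_{n+1}$ the two formulas coincide. The main (minor) obstacle throughout is bookkeeping with the index shift and the terminal object at level $0$; once this is carefully laid out, the isomorphism is essentially the identity at each level.
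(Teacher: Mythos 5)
Your proof is correct and follows essentially the same route as the paper's: a pointwise comparison showing both presheaves are singletons at level $0$ and equal to the set of $n$-th components of natural transformations $X \natto Y$ at level $n+1$, with the restriction maps matching under the shift. You spell out the bijection between natural transformations $X \natto Y$ and $L^\NN X \natto L^\NN Y$ and the compatibility with the $\Scat$-restrictions more explicitly than the paper does, which only improves on its rather terse argument.
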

\begin{proof}
	Let $X$ and $Y$ be two objects in $\NN$. Remember that the homset $\NN (X,Y)$
	can be seen as an object of $\Scat$, with $\NN(X,Y) (n)$ being the set of
	$n$-th component of natural transformations in $\NN(X,Y)$, and the image of
	$n \leq n+1$, written $r^{\NN (X,Y)}_n$, is:
	\[\begin{tikzcd}
		\NN(X,Y)(n) && \NN (X,Y)(n+1).
		\arrow["r^Y_n\circ - \circ (r^X_n)\dg"',from=1-3, to=1-1]
	\end{tikzcd}\]
	We have $L^\Scat \NN(X,Y) = \{ * \}$
	and $\NN(L^\NN X, L^\NN Y) = \{ 0_{\{ 0 \} \to \{ 0 \}} \}$. They are both
	singletons. Moreover, $L^\Scat \NN(X,Y)_{n+1} = \NN(X,Y)_n$ is the set of
	$n$-th components of natural transformations from $X$ to $Y$, which is
	exactly like $\NN(L^\NN X, L^\NN Y)_{n+1}$. Note also that the functor $L$
	does not change the morphisms image of $n \leq n+1$, except shifting them.
	Therefore, $L^\Scat \NN(X,Y) \cong \NN(L^\NN X, L^\NN Y)$.
\end{proof}

The delay embodied by the functor $L$ can be introduced by a natural
transformation, called \emph{next}. This natural transformation helps us
introduce the delay in a programming language, as the denotational semantics
of a delayed program.

\begin{definition}[Next]
	\label{def:next}
	Given any category $\CC$ with terminal object $1$, and $L \colon
	\CC^{\N^{\mathrm op}} \to \CC^{\N^{\mathrm op}}$ the later functor,
	there is a natural transformation $\nu\colon id\natto L$, defined as
	$\nu_{X,0} = ~!$ and $\nu_{X,n+1} = r^X_n$. Or as a diagram,
	it maps the functor $X$ to the functor $LX$ as follows:
	\[\begin{tikzcd}
		X(0) & X(1) & X(2) & X(3) & \cdots \\
		1 & X(0) & X(1) & X(2) & \cdots
		\arrow["r^X_0"', from=1-2, to=1-1]
		\arrow["r^X_1"', from=1-3, to=1-2]
		\arrow["r^X_2"', from=1-4, to=1-3]
		\arrow[from=1-5, to=1-4]
		\arrow["!"', from=1-1, to=2-1]
		\arrow["r^X_0", from=1-2, to=2-2]
		\arrow["r^X_1", from=1-3, to=2-3]
		\arrow["r^X_2", from=1-4, to=2-4]
		\arrow["!", from=2-2, to=2-1]
		\arrow["r^X_0", from=2-3, to=2-2]
		\arrow["r^X_1", from=2-4, to=2-3]
		\arrow[from=2-5, to=2-4]
	\end{tikzcd}\]
	Which gives, in the categories $\QQ$ and $\NN$:
	\[\begin{tikzcd}
		X(0) & X(1) & X(2) & X(3) & \cdots \\
		\set 0 & X(0) & X(1) & X(2) & \cdots
		\arrow["r^X_0"', from=1-2, to=1-1]
		\arrow["r^X_1"', from=1-3, to=1-2]
		\arrow["r^X_2"', from=1-4, to=1-3]
		\arrow[from=1-5, to=1-4]
		\arrow["0"', from=1-1, to=2-1]
		\arrow["r^X_0", from=1-2, to=2-2]
		\arrow["r^X_1", from=1-3, to=2-3]
		\arrow["r^X_2", from=1-4, to=2-4]
		\arrow["0", from=2-2, to=2-1]
		\arrow["r^X_0", from=2-3, to=2-2]
		\arrow["r^X_1", from=2-4, to=2-3]
		\arrow[from=2-5, to=2-4]
	\end{tikzcd}\]
\end{definition}

\begin{remark}
	Note that $\nu$ is a \emph{high-level} natural transformation. It is an
	informal way to say that it is a natural transformation at the level of
	$\CC^{\N^{\mathrm op}}$. For all functors $X \colon \N^{\mathrm op} \to \CC$,
	$\nu_X$ is a morphism in $\CC^{\N^{\mathrm op}}$.  A morphism in
	$\CC^{\N^{\mathrm op}}$ is a natural transformation -- this time,
	\emph{low-level} -- at the level of $\CC$. Thus, if $n$ is a natural number,
	then $\nu_{X,n}$ is a morphism in $\CC$.

	Once again, and similarly to the functor $L$, the natural transformation
	$\nu$ can be defined in $\Scat$ as well as in $\NN$ and $\QQ$.
	If any ambiguity arises, the notations $\nu^\Scat$, $\nu^\QQ$ and $\nu^\NN$
	will be used. Observe that the components of $\nu^\QQ$ and $\nu^\NN$
	are the same.
\end{remark}

A direct consequence of Lemma~\ref{lem:enriched} is the possibility to move
back and forth from $\NN$ to its enrichment $\Scat$ with the natural
transformation $\nu$.

\begin{lemma}
	\label{lem:next-enriched-view}
	Given $X,Y$ two objects of $\NN$, we have
	$\nu^\Scat_{\NN(X,Y)} = \nu^\NN_Y \circ - \circ \left( \nu^\NN_X \right)\dg$.
\end{lemma}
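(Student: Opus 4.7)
The plan is to unfold both sides of the equation componentwise at each $n \in \N$, using the definitions of $\nu^\Scat$ and $\nu^\NN$ together with the isomorphism $L^\Scat \NN(X,Y) \cong \NN(L^\NN X, L^\NN Y)$ of Lemma~\ref{lem:later-enriched-view}. Both sides are morphisms in $\Scat$, so it suffices to check that their components agree level by level.

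First I would compute the left-hand side. By Definition~\ref{def:next} applied in $\Scat$, $(\nu^\Scat_{\NN(X,Y)})_0 \colon \NN(X,Y)(0) \to 1$ is the unique terminal map, and $(\nu^\Scat_{\NN(X,Y)})_{n+1} = r^{\NN(X,Y)}_n$. By the proof of Lemma~\ref{lem:enriched}, this restriction map sends a component $\alpha_{n+1}$ to $r^Y_n \circ \alpha_{n+1} \circ (r^X_n)\dg$ (which, by naturality, equals $\alpha_n$).

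Next I would compute the right-hand side, interpreting it as a map $\NN(X,Y) \to \NN(L^\NN X, L^\NN Y)$ in $\Scat$ and using the isomorphism of Lemma~\ref{lem:later-enriched-view} to identify the target with $L^\Scat \NN(X,Y)$. By Definition~\ref{def:next} applied in $\NN$, the components of $\nu^\NN_Y$ are $(\nu^\NN_Y)_0 = 0$ and $(\nu^\NN_Y)_{n+1} = r^Y_n$, and symmetrically for $\nu^\NN_X$; in particular, the componentwise dagger (in the sense of Remark~\ref{rem:dagger}) satisfies $(\nu^\NN_X)_{n+1}\dg = (r^X_n)\dg$. Hence at level $0$ the map sends any $\alpha_0$ to $0 \circ \alpha_0 \circ 0 = 0$, matching the terminal component, and at level $n+1$ it sends $\alpha_{n+1}$ to $r^Y_n \circ \alpha_{n+1} \circ (r^X_n)\dg$, matching the left-hand side. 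Thus the two morphisms of $\Scat$ agree componentwise and are therefore equal.

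The main obstacle I anticipate is bookkeeping the isomorphism of Lemma~\ref{lem:later-enriched-view} carefully enough to make the comparison meaningful: the right-hand side is naturally a morphism into $\NN(L^\NN X, L^\NN Y)$ while the left-hand side lands in $L^\Scat \NN(X,Y)$, and one has to be sure that the canonical identification of these two objects is precisely the one that makes the componentwise matching above well-typed. This is essentially already done in the proof of Lemma~\ref{lem:later-enriched-view} (shifting indices by one and verifying that the naturality condition for $\beta \colon L^\NN X \natto L^\NN Y$ degenerates to the naturality of an ordinary $\alpha \colon X \natto Y$ shifted by one), so the remaining work is mostly a matter of expressing the computation in the right order and verifying that the zero component causes no mismatch.
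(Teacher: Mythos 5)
Your proposal is correct and follows exactly the route the paper intends: its own proof is just the one-line citation of Lemma~\ref{lem:enriched}, Lemma~\ref{lem:later-enriched-view} and Definition~\ref{def:next}, and your componentwise unfolding (terminal/zero component at level $0$, and $r^Y_n \circ - \circ (r^X_n)\dg$ at level $n+1$ on both sides) is precisely the verification those citations compress.
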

\begin{proof}
	This follows from Lemma~\ref{lem:enriched},
	Lemma~\ref{lem:later-enriched-view} and Definition~\ref{def:next}.
\end{proof}

Lemmas~\ref{lem:later-enriched-view} and \ref{lem:next-enriched-view}
also allow for the following observation.

\begin{corollary}
	The functor $L^\NN$ is $\Scat$-enriched.
\end{corollary}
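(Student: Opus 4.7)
The plan is to build the required $\Scat$-morphism $L^\NN_{X,Y} \colon \NN(X,Y) \to \NN(L^\NN X, L^\NN Y)$ by composing two morphisms of $\Scat$ that are already provided by the preceding development, and then to check that this assignment satisfies the two coherence axioms of an enriched functor (Definition~\ref{def:enriched-cat}).

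First, I would define $L^\NN_{X,Y}$ as the composition in $\Scat$ of $\nu^\Scat_{\NN(X,Y)} \colon \NN(X,Y) \to L^\Scat \NN(X,Y)$, which is a $\Scat$-morphism since $\nu^\Scat$ is a natural transformation of $\Scat$-endofunctors, with the natural isomorphism $L^\Scat \NN(X,Y) \cong \NN(L^\NN X, L^\NN Y)$ from Lemma~\ref{lem:later-enriched-view}. Because both arrows live in $\Scat$, so does the composite.

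Next, I would verify that this $\Scat$-morphism reproduces the action of $L^\NN$ on morphisms, level by level. At level $0$, both target sets are singletons (since $L^\NN X(0) = \{0\}$) and there is nothing to check. At level $n+1$, Lemma~\ref{lem:next-enriched-view} identifies $\nu^\Scat_{\NN(X,Y),n+1}$ with the map $\alpha \mapsto r^Y_n \circ \alpha_{n+1} \circ (r^X_n)\dg$; by naturality of $\alpha$, this equals $\alpha_n$, which is exactly $(L^\NN \alpha)_{n+1}$ from Definition~\ref{def:later}. Hence $L^\NN_{X,Y}$ as defined agrees with the morphism-action of the already-established functor $L^\NN$.

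Finally, the two coherence axioms for an enriched functor reduce to already-known properties of $L^\NN$ as an ordinary functor: $L^\NN_{X,X} \circ \iid_X = \iid_{L^\NN X}$ follows from $L^\NN(\iid_X) = \iid_{L^\NN X}$, and $L^\NN_{X,Z} \circ \comp_{X,Y,Z} = \comp_{L^\NN X, L^\NN Y, L^\NN Z} \circ (L^\NN_{Y,Z} \otimes L^\NN_{X,Y})$ follows from $L^\NN(\beta \circ \alpha) = L^\NN \beta \circ L^\NN \alpha$. No real obstacle is expected: the only slightly delicate point is checking that the isomorphism of Lemma~\ref{lem:later-enriched-view} is indeed the identity on underlying components at levels $\geq 1$, so that the calculations above go through; this is precisely the content of the proof of that lemma, so the whole argument amounts to unpacking definitions.
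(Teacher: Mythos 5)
Your proposal is correct and follows essentially the same route as the paper, which derives the corollary directly from Lemma~\ref{lem:later-enriched-view} and Lemma~\ref{lem:next-enriched-view}: the morphism mapping is the composite of $\nu^\Scat_{\NN(X,Y)}$ with the isomorphism $L^\Scat\NN(X,Y)\cong\NN(L^\NN X,L^\NN Y)$, and the coherence axioms reduce to ordinary functoriality since composition in the enrichment is defined pointwise. You simply spell out the level-by-level verification (including the use of $r^X_n\circ(r^X_n)\dg=\iid$ for coisometries) that the paper leaves implicit.
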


Note that, as explained in Remark~\ref{rem:dagger}, $\left( \nu^\NN_X
\right)\dg$ is not a natural transformation, and thus the notation above is
loose. However, it can be used in some cases, as shown by the next lemma.

\begin{lemma}
	\label{lem:dagger-use}
	Given a morphism $f \colon X \to Y$ in $\QQ$, we have that $\nu^\QQ_Y \circ f
	\circ \left( \nu^\QQ_X \right)\dg \colon LX \to LY$ is a morphism in $\QQ$.
\end{lemma}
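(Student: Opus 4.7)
The plan is to show that the composite $\nu^\QQ_Y \circ f \circ (\nu^\QQ_X)^\dagger$ coincides, componentwise, with $L^\QQ f$, and then invoke the fact that $L^\QQ$ is an endofunctor on $\QQ$ to conclude. The key leverage is that, although $(\nu^\QQ_X)^\dagger$ is not itself a morphism in $\QQ$ (cf.\ Remark~\ref{rem:dagger}), the pathological pieces will be absorbed by a postcomposition that relies crucially on the coisometric nature of the restriction maps of objects of $\NN$.

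First, I would compute the $n$-th component of the candidate morphism. At level $0$, both $\nu^\QQ_{X,0}$ and $\nu^\QQ_{Y,0}$ are the zero maps into $\{0\}$, so the composite is the zero map $\{0\} \to \{0\}$, which is by definition $(L^\QQ f)_0$. At level $n+1$, unfolding Definition~\ref{def:next} gives
\[
\nu^\QQ_{Y,n+1} \circ f_{n+1} \circ (\nu^\QQ_{X,n+1})^\dagger = r^Y_n \circ f_{n+1} \circ (r^X_n)^\dagger.
\]
Now I would use naturality of $f$, namely $r^Y_n \circ f_{n+1} = f_n \circ r^X_n$, to rewrite this as $f_n \circ r^X_n \circ (r^X_n)^\dagger$. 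Since $X$ is an object of $\NN$ by definition of $\QQ$, the restriction $r^X_n$ is a coisometry, so $r^X_n \circ (r^X_n)^\dagger = \mathrm{id}_{X(n)}$. Hence the component simplifies to $f_n$, which is exactly $(L^\QQ f)_{n+1}$.

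Thus $\nu^\QQ_Y \circ f \circ (\nu^\QQ_X)^\dagger$ and $L^\QQ f$ agree componentwise. The remaining step is to check that $L^\QQ$ is genuinely an endofunctor of $\QQ$: for objects, $L^\QQ X$ has $L^\QQ X(0) = \{0\}$ and $L^\QQ X(n+1) = X(n)$, with restriction maps $r^{L X}_0 = 0$ (a coisometry between zero-dimensional spaces trivially) and $r^{L X}_{n+1} = r^X_n$ (a coisometry by hypothesis), so $L^\QQ X$ lies in $\NN$ and therefore in $\QQ$; for morphisms, the components of $L^\QQ f$ are contractions and naturality of $L^\QQ f$ reduces directly to naturality of $f$ together with the triviality at level $0$. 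Consequently $L^\QQ f$ is a morphism in $\QQ$, and so is $\nu^\QQ_Y \circ f \circ (\nu^\QQ_X)^\dagger$.

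The only conceptual subtlety, and what could have been an obstacle, is that the dagger of a natural transformation is not in general natural; what saves the argument is that the postcomposition with $\nu^\QQ_Y$ pairs the troublesome $(r^X_n)^\dagger$ with its own left inverse $r^X_n$ (through naturality of $f$), and the coisometric hypothesis on objects of $\NN$ makes this cancellation exact. I do not expect this proof to require more than the direct componentwise calculation above.
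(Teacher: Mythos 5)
Your proof is correct, and it packages the argument differently from the paper. The paper verifies directly that the family $g_{n+1} = r^Y_n \circ f_{n+1} \circ (r^X_n)\dg$ satisfies the naturality squares of a morphism $LX \to LY$, by a chain of equalities that uses naturality of $f$ and the coisometry identity $r^X_n \circ (r^X_n)\dg = \iid$ twice (once at level $n$ and once at level $n+1$). You use exactly the same two ingredients but deploy them once, to show that the composite is componentwise equal to $L^\QQ f$, and then delegate naturality to the functoriality of $L^\QQ$ (which the remark after Definition~\ref{def:later} already asserts, and which you re-verify on objects by noting that the shifted restriction maps remain coisometries). Your route is slightly more informative: it does not merely establish that $\nu^\QQ_Y \circ f \circ (\nu^\QQ_X)\dg$ is \emph{a} morphism of $\QQ$, it identifies it as $L^\QQ f$, which is the morphism-level analogue of the formula in Lemma~\ref{lem:next-enriched-view} and makes transparent why the non-naturality of $(\nu^\QQ_X)\dg$ (Remark~\ref{rem:dagger}) is harmless here. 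The paper's version buys a self-contained five-line computation with no appeal to $L$ being an endofunctor of the subcategory $\QQ$; yours buys a cleaner conceptual statement at the cost of that one extra (easy) check.
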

\begin{proof}
	Remember that $\nu$ is defined as $\nu_{X,n} = r^X_{n-1}$. Let us
	proceed:
	\begin{align*}
		r_n^Y \circ f_{n+1} \circ (r^X_n)\dg \circ r^X_n
		&= f_n \circ r^X_n \circ (r^X_n)\dg \circ r^X_n \\
		&= f_n \circ r^X_n \\
		&= r^Y_n \circ f_{n+1} \\
		&= r^Y_n \circ f_{n+1} \circ r^X_{n+1} \circ (r^X_{n+1})\dg \\
		&= r^Y_n \circ r^Y_{n+1} \circ f_{n+2} \circ (r^X_{n+1})\dg.
	\end{align*}
\end{proof}

To interpret inductive types, we need the help of fixed points to solve
recursive domain equations. First, we recall some notions on contractive
morphisms in $\Scat$, introduced in \cite{birkedal2012first}.

\begin{definition}[\cite{birkedal2012first}]
	\label{def:contractive-S}
	A morphism $f \colon X \to Y$ in $\Scat$ is contractive if
	there exists a morphism $g \colon LX \to Y$ such that
	$f = g \circ \nu_X$. A morphism $f \colon X \times Y \to Z$ is
	contractive in the first variable if there exists $g \colon LX \times Y \to Z$
	such that $f = g \circ (\nu_X \times \iid_Y)$.
\end{definition}

\begin{lemma}[\cite{birkedal2012first}]
	\label{lem:contractive-S}
	The following assertions hold.
	\begin{itemize}
		\item Given $f \colon X \to Y$, $g \colon Y \to Z$, if 
			either $f$ or $g$ is contractive, then $gf$ is contractive.
		\item Given $f \colon X \to Y$ and $g \colon X' \to Y'$
			contractive, so is $f \times g \colon X \times X' \to Y \to Y'$.
		\item A morphism $h \colon X \times Y \to Z$ is contractive in the first
			variable iff $\mathrm{curry}(h) \colon X \to Z^Y$ is contractive.
	\end{itemize}
\end{lemma}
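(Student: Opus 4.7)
The plan is to prove each of the three assertions by unpacking the definition of contractivity (Definition~\ref{def:contractive-S}) and exhibiting an explicit factorisation in each case. The key observations that will be used repeatedly are: (i) the naturality of $\nu \colon \iid \natto L$, which gives $\nu_Y \circ f = L f \circ \nu_X$ for any $f \colon X \to Y$; (ii) that $L$ is strict monoidal (stated just before Lemma~\ref{lem:later-enriched-view}), which in $\Scat$ means $L(X \times Y) = LX \times LY$; and (iii) that consequently $\nu_{X \times Y} = \nu_X \times \nu_Y$, by a direct level-by-level computation from Definition~\ref{def:next}.

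For the first assertion, I would split into cases. If $f = \tilde f \circ \nu_X$ is contractive, then $g \circ f = (g \circ \tilde f) \circ \nu_X$, so $g \circ \tilde f \colon LX \to Z$ witnesses contractivity of $gf$. If instead $g = \tilde g \circ \nu_Y$ is contractive, then by naturality of $\nu$, $g \circ f = \tilde g \circ \nu_Y \circ f = \tilde g \circ Lf \circ \nu_X$, so $\tilde g \circ Lf \colon LX \to Z$ is the required witness. For the second assertion, writing $f = \tilde f \circ \nu_X$ and $g = \tilde g \circ \nu_{X'}$, I would compute
\[ f \times g = (\tilde f \times \tilde g) \circ (\nu_X \times \nu_{X'}) = (\tilde f \times \tilde g) \circ \nu_{X \times X'}, \]
using observation (iii), which exhibits $\tilde f \times \tilde g \colon L(X \times X') \to Y \times Y'$ as the witness.

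For the third assertion, I would use that $\mathrm{curry}$ is a natural bijection $\Scat(X \times Y, Z) \cong \Scat(X, Z^Y)$ induced by the cartesian closed structure of $\Scat$, and in particular that precomposition commutes with currying in the first variable: $\mathrm{curry}(h \circ (k \times \iid_Y)) = \mathrm{curry}(h) \circ k$ for any $k \colon X' \to X$. If $h = \tilde h \circ (\nu_X \times \iid_Y)$ is contractive in the first variable, then $\mathrm{curry}(h) = \mathrm{curry}(\tilde h) \circ \nu_X$, so $\mathrm{curry}(\tilde h) \colon LX \to Z^Y$ witnesses contractivity of $\mathrm{curry}(h)$. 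Conversely, if $\mathrm{curry}(h) = k \circ \nu_X$, then uncurrying and using the same naturality in reverse gives $h = \mathrm{uncurry}(k) \circ (\nu_X \times \iid_Y)$, so $\mathrm{uncurry}(k)$ witnesses contractivity of $h$ in the first variable.

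None of the three steps presents a real obstacle; the proof is essentially bookkeeping with the factorisation through $\nu$. The only subtle point is the compatibility between $\nu$ on a product and the pointwise product of the $\nu$ components, which is why I would spell out observation (iii) explicitly before invoking it in the second and third assertions.
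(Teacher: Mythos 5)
Your proof is correct. The paper itself gives no proof of this lemma --- it is imported verbatim from \cite{birkedal2012first} --- so there is nothing to compare against; your direct verification by unpacking the factorisation through $\nu$, using naturality of $\nu$ for the composition case, strict monoidality of $L$ (hence $\nu_{X\times X'} = \nu_X \times \nu_{X'}$) for the product case, and compatibility of currying with precomposition in the first variable for the exponential case, is exactly the standard argument and all three steps go through as you describe.
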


\begin{theorem}[\cite{birkedal2012first}]
	\label{th:contractive-S}
	There exists a natural family of morphisms $\mathrm{fix}_X \colon
	(LX \to X) \to X$ which computes unique fixed points: given
	$f \colon X \times Y \to X$ contractive in the first variable 
	and $g \colon LX \times Y \to X$ the resulting morphism, then
	$\mathrm{fix}_X \circ \mathrm{curry}(g)$ is the unique $h \colon Y \to X$
	such that $f \circ \pv{h}{\iid_Y} = h$.
\end{theorem}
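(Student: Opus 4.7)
The plan is to construct $\mathrm{fix}_X$ explicitly in $\Scat$ by induction on the index $n \in \mathbb{N}$, and then to use the contractivity hypothesis to run a second induction establishing both existence and uniqueness of the fixed point.

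First, I would define $\mathrm{fix}_X \colon X^{LX} \to X$ pointwise. At stage $n$, an element of $(X^{LX})(n) = \Scat(\yo(n) \times LX, X)$ is a natural family $\phi$ with components $\phi_k \colon (\yo(n) \times LX)(k) \to X(k)$ for $k \leq n$. Since $LX(0) = 1$, I would set $(\mathrm{fix}_X(\phi))_0 \eqdef \phi_0(\mathrm{id}_0, *) \in X(0)$, and then inductively $(\mathrm{fix}_X(\phi))_{k+1} \eqdef \phi_{k+1}(\mathrm{id}_{k+1}, (\mathrm{fix}_X(\phi))_k)$, using the identification $LX(k+1) = X(k)$ from Definition~\ref{def:later}. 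The first routine check is that this assignment is natural in $n$, and therefore yields a bona fide morphism $\mathrm{fix}_X$ in $\Scat$; naturality in $X$ follows by a straightforward diagram chase, giving the claimed family.

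Next, given $f \colon X \times Y \to X$ contractive in the first variable, Definition~\ref{def:contractive-S} supplies $g \colon LX \times Y \to X$ with $f = g \circ (\nu_X \times \iid_Y)$, and by Lemma~\ref{lem:contractive-S} (third bullet) $\mathrm{curry}(g) \colon Y \to X^{LX}$ is well-defined. I would then set $h \eqdef \mathrm{fix}_X \circ \mathrm{curry}(g)$ and verify $f \circ \langle h, \iid_Y\rangle = h$ by unfolding the definitions at each stage $n$: at level $0$, both sides equal $g_0(*, -)$ because $\nu_{X,0}$ factors through $1$; at level $n+1$, using the relation $\nu_{X,n+1} = r^X_n$ together with the inductive definition of $\mathrm{fix}_X$, the equation reduces to the recurrence built into $\mathrm{fix}_X$.

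For uniqueness, suppose $h, h' \colon Y \to X$ both satisfy $f \circ \langle k, \iid_Y\rangle = k$. I would show by induction on $n$ that $h_n = h'_n$. The base case $n = 0$ is forced because $\nu_{X,0}$ has codomain $1$, so $h_0$ depends only on $Y(0)$ and $g_0$, not on $h$ itself. For the step, the contractivity factorisation gives $h_{n+1}(y) = g_{n+1}(\nu_{X,n+1}(h_{n+1}(y)), y) = g_{n+1}(r^X_n(h_{n+1}(y)), y)$, and naturality of $h$ forces $r^X_n \circ h_{n+1} = h_n \circ r^Y_n$; hence $h_{n+1}$ is entirely determined by $h_n$ and $g_{n+1}$, so the induction hypothesis gives $h_{n+1} = h'_{n+1}$.

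The main obstacle is bookkeeping rather than conceptual: one must be careful that the two occurrences of $\nu_X$ (the one hidden inside the contractivity of $f$ and the one implicit in the recurrence defining $\mathrm{fix}_X$) line up correctly, and that the naturality squares for $h$ in $\Scat$ are invoked at the right places in the uniqueness step. Once the indexing conventions from Definitions~\ref{def:later} and~\ref{def:next} are fixed, both existence and uniqueness reduce to the single observation that $LX(n+1) = X(n)$ makes ``the value at stage $n+1$'' depend only on data already available at stage $n$.
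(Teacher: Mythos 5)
Your proof is correct, and it reconstructs the standard argument: the thesis itself states this theorem as an imported result from \cite{birkedal2012first} and gives no proof, and your pointwise inductive definition of $\mathrm{fix}_X$ (using $LX(n+1)=X(n)$ so that the value at stage $n+1$ depends only on stage-$n$ data) together with the induction for existence and uniqueness is exactly the proof in that reference. The only blemishes are notational: the first argument of $\phi_k$ lives in the singleton $\yo(n)(k)=\N(k,n)$, so writing $\mathrm{id}_k$ there is a harmless abuse, and in the uniqueness step you should restrict $y$ along $r^Y_n$ when passing from $h_{n+1}$ to $h_n$, which you in fact do implicitly via the naturality square.
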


\begin{remark}
	Unsurprisingly, the previous theorem can be used for the interpretation of a
	fixed point combinator.
\end{remark}

The definition of a contractive morphism in $\Scat$ help define functors
that have a fixed point in $\NN$. We precise what we mean by fixed point
of a functor.

\begin{definition}[Fixed Point]
	\label{def:fixed-point}
	A fixed point of an endofunctor $T  \colon \NN \to \NN$ is a pair $(X,\alpha
	\colon TX \to X)$ such that $\alpha$ is an isomorphism. 
\end{definition}

We define locally contractive functors, which are functor that admit a fixed
points. We will see that, as its name suggests, a locally contractive functor
has a unique fixed point, up to isomorphism.

\begin{definition}[Locally Contractive functor]
	\label{def:lacally-contractive-functor}
	An $\Scat$-functor $F\colon \NN \to \NN$ is said to be \emph{locally
	contractive} if its morphism mapping $F_{X,Y}\colon \NN(X,Y) \to
	\NN(FX,FY)$ is contractive; meaning it factorises through $\nu$:
	for all $X,Y$ there is a morphism mapping $G_{X,Y} \colon L( \NN(X,Y) )
	\to \NN(FX,FY)$ such that:
	\[\begin{tikzcd}
		\NN(X,Y) && \NN(FX,FY) \\
		& L(\NN(X,Y))
		\arrow["F_{X,Y}", from=1-1, to=1-3]
		\arrow["\nu_{\NN(X,Y)}"', from=1-1, to=2-2]
		\arrow["G_{X,Y}"', from=2-2, to=1-3]
	\end{tikzcd}\]
	and such that $G$ behaves \emph{like a functor}; formally, such that
	the diagrams:
	\[\begin{tikzcd}
		L\NN(Y,Z)\times L\NN(X,Y) & L(\NN(Y,Z)\times \NN(X,Y)) & L\NN(X,Z) \\
		\NN(FY,FZ)\times \NN(FX,FY) && \NN(FX,FZ)
		\arrow["\cong", from=1-1, to=1-2]
		\arrow["L(\mathrm{comp})", from=1-2, to=1-3]
		\arrow["\mathrm{comp}", from=2-1, to=2-3]
		\arrow["G_{X,Y}\times G_{Y,Z}"', from=1-1, to=2-1]
		\arrow["G_{X,Z}", from=1-3, to=2-3]
	\end{tikzcd}\]
	\[\begin{tikzcd}
		\set{\star} & L\NN(X,X) \\
		& \NN(FX,FX)
		\arrow["L(\mathrm{id})", from=1-1, to=1-2]
		\arrow["\mathrm{id}"', from=1-1, to=2-2]
		\arrow["G_{X,X}", from=1-2, to=2-2]
	\end{tikzcd}\]
	commute in $\Scat$, for all objects $X,Y,Z$.
\end{definition}

\begin{remark}
	The definition above was inspired by a similar one in
	\cite{birkedal2012first}, where they use the fact that their category is
	closed to draw the diagrams with objects and morphisms of the category.
	We manage to do the same with the enrichment of $\NN$ in $\Scat$.
\end{remark}

There are some direct examples of locally contractive functors.

\begin{lemma}
	The functor $L^\NN$ is locally contractive.
\end{lemma}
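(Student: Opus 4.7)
The plan is to exhibit the factorisation $L^\NN_{X,Y} = G_{X,Y} \circ \nu^\Scat_{\NN(X,Y)}$ explicitly, using the isomorphism from Lemma~\ref{lem:later-enriched-view}, and then derive the two coherence diagrams from the (ordinary) functoriality of $L^\NN$.

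First I would define $G_{X,Y} \colon L^\Scat \NN(X,Y) \to \NN(LX,LY)$ to be the isomorphism of Lemma~\ref{lem:later-enriched-view}; under the description of the later functor in Definition~\ref{def:later} this isomorphism is, level by level, essentially the identity: at level $0$ both sides are singletons (namely $\{*\}$ and $\{0_{\{0\},\{0\}}\}$), and at level $n+1$ both sides are canonically identified with $\NN(X,Y)(n)$, since $\NN(LX,LY)(n+1)$ consists of morphisms $LX(n{+}1) \to LY(n{+}1) = X(n) \to Y(n)$.

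Next I would check the factorisation $L^\NN_{X,Y} = G_{X,Y} \circ \nu^\Scat_{\NN(X,Y)}$ componentwise in $\Scat$. At level $0$ both sides are forced by terminality. At level $n+1$, the map $\nu^\Scat_{\NN(X,Y),n+1}$ is by Definition~\ref{def:next} equal to $r^{\NN(X,Y)}_n$, and by Lemma~\ref{lem:enriched} this sends an $(n{+}1)$-th component $\alpha_{n+1}$ to $r^Y_n \circ \alpha_{n+1} \circ (r^X_n)\dg$. Using naturality $\alpha_n \circ r^X_n = r^Y_n \circ \alpha_{n+1}$ and the coisometry identity $r^X_n \circ (r^X_n)\dg = \iid$, this simplifies to $\alpha_n$. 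On the other hand, $L^\NN_{X,Y,n+1}(\alpha_{n+1})$ is the $(n{+}1)$-th component of $L^\NN \alpha$, which by Definition~\ref{def:later} is exactly $\alpha_n$. After transporting along the identity isomorphism $G_{X,Y,n+1}$, both sides agree.

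Finally, the two coherence diagrams for $G$ follow from the fact that $L^\NN$ is an ordinary functor together with the naturality of $\nu^\Scat$. Indeed, composing with $\nu^\Scat$ on the right, the identity-preservation diagram becomes $L^\NN(\iid_X) = \iid_{LX}$, and the composition diagram becomes $L^\NN(g \circ f) = L^\NN g \circ L^\NN f$; both hold by functoriality of $L^\NN$. A small bookkeeping step identifies the canonical isomorphism $L^\Scat \NN(Y,Z) \times L^\Scat \NN(X,Y) \cong L^\Scat(\NN(Y,Z) \times \NN(X,Y))$ with the action of $G$ at a product and transports these equalities through $G$. The main obstacle is the level-$0$ case, where everything collapses to the terminal object and the coisometry cancellation $r r\dg = \iid$ is not available; this is handled by terminality, but one must verify that $G_{X,Y}$ is genuinely well-defined at level $0$ as a morphism in $\Scat$ (i.e.~that the restriction maps commute with the identification), which is a direct diagram chase using the definition of the restriction maps for $L^\Scat \NN(X,Y)$ and $\NN(LX,LY)$.
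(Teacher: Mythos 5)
Your proof is correct and follows exactly the route the paper takes: the paper's own argument is simply to cite Lemma~\ref{lem:later-enriched-view} together with Definition~\ref{def:lacally-contractive-functor}, and your write-up unpacks that citation by taking $G_{X,Y}$ to be the isomorphism $L^\Scat\NN(X,Y)\cong\NN(L^\NN X,L^\NN Y)$ and verifying the factorisation and coherence diagrams levelwise. The extra detail (the coisometry cancellation at level $n+1$ and the terminality argument at level $0$) is exactly what the paper leaves implicit.
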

\begin{proof}
	This is a direct conclusion of Lemma~\ref{lem:later-enriched-view} and
	Definition~\ref{def:lacally-contractive-functor}.
\end{proof}

\begin{lemma}[\cite{birkedal2012first}]
	\label{lem:contractive-comp}
	Given $F,G \colon \NN \to \NN$ two $\Scat$-enriched functors and
	such that either $F$ or $G$ is locally contractive, then $FG$ is 
	locally contractive.
\end{lemma}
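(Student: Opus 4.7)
The plan is to handle the two cases separately and then, in each case, exhibit the desired factorisation $H^{FG}_{X,Y} \colon L\NN(X,Y) \to \NN(FGX,FGY)$ of the morphism mapping $(FG)_{X,Y}$ through $\nu^\Scat_{\NN(X,Y)}$. Since $F$ and $G$ are $\Scat$-enriched, by the standard $\Scat$-enriched composition we have $(FG)_{X,Y} = F_{GX,GY} \circ G_{X,Y}$ as morphisms in $\Scat$.

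First, suppose $G$ is locally contractive, with factorisation $G_{X,Y} = H^G_{X,Y} \circ \nu^\Scat_{\NN(X,Y)}$. Then
\[
(FG)_{X,Y} \;=\; F_{GX,GY} \circ H^G_{X,Y} \circ \nu^\Scat_{\NN(X,Y)},
\]
so taking $H^{FG}_{X,Y} \defeq F_{GX,GY} \circ H^G_{X,Y}$ gives the required factorisation. Next, suppose $F$ is locally contractive, with factorisation $F_{X',Y'} = H^F_{X',Y'} \circ \nu^\Scat_{\NN(X',Y')}$. Instantiating at $X' = GX$, $Y' = GY$ and using naturality of $\nu^\Scat$ applied to the $\Scat$-morphism $G_{X,Y} \colon \NN(X,Y) \to \NN(GX,GY)$, we obtain
\[
\nu^\Scat_{\NN(GX,GY)} \circ G_{X,Y} \;=\; L^\Scat(G_{X,Y}) \circ \nu^\Scat_{\NN(X,Y)},
\]
so that $(FG)_{X,Y} = H^F_{GX,GY} \circ L^\Scat(G_{X,Y}) \circ \nu^\Scat_{\NN(X,Y)}$, giving the factorisation $H^{FG}_{X,Y} \defeq H^F_{GX,GY} \circ L^\Scat(G_{X,Y})$.

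What remains is to check that each $H^{FG}$ satisfies the two coherence diagrams from Definition~\ref{def:lacally-contractive-functor} (compatibility with composition and with identities). The idea is to glue together the corresponding diagrams for $H^G$ (in the first case) or $H^F$ (in the second case) with the functoriality of $F$ (resp.\ of $G$) and the fact that $L^\Scat$ is a strict monoidal functor, so that $L^\Scat$ preserves the natural isomorphism $L\NN(Y,Z)\times L\NN(X,Y) \cong L(\NN(Y,Z)\times \NN(X,Y))$ used in the composition diagram. The main obstacle will be the bookkeeping in the second case: one has to chase $L^\Scat(G_{X,Z}) \circ L^\Scat(\mathrm{comp})$ through the product coherence and then invoke $F$'s functoriality at the level of $\NN(FGY,FGZ)\times\NN(FGX,FGY)$, but this is entirely routine diagram-chasing once the factorisations are in place.
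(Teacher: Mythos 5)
Your proof is correct. The paper states this lemma as a citation from \cite{birkedal2012first} and gives no proof of its own, so there is nothing to compare against; your argument is the expected one, namely the enriched-functor analogue of the first item of Lemma~\ref{lem:contractive-S}: factor $(FG)_{X,Y}=F_{GX,GY}\circ G_{X,Y}$ through $\nu^\Scat$ using either the factorisation of $G_{X,Y}$ directly, or the factorisation of $F_{GX,GY}$ together with naturality of $\nu^\Scat$ at $G_{X,Y}$. The remaining coherence checks are, as you say, routine consequences of the enriched functoriality of $F$ and $G$ and of $L^\Scat$ preserving the product isomorphism.
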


\begin{remark}[Initial Algebra and Final Coalgebra]
	\label{rem:canon}
	Given a locally contractive endofunctor $F \colon \NN \to \NN$, a fixed point
	of $F$ is an initial algebra and a final coalgebra. Indeed, given a fixed
	point $\alpha \colon FX \cong X$ and an algebra $\beta \colon FY \natto Y$, an
	algebra morphism $\gamma \colon X \natto Y$ is given by a fixed point of $H
	\colon \gamma \mapsto \beta \circ F_{X,Y}(\gamma) \circ \alpha^{-1}$. Note
	that $H$ is then a contractive morphism in $\Scat$, because $F$ is locally
	contractive. It is proven in \cite{birkedal2012first} that such a morphism
	has a unique fixed point. Thus the algebra morphism $\gamma$ is unique; and
	this makes $\alpha$ an initial algebra. The proof of the coalgebra part is
	similar.
\end{remark}

\begin{remark}
	This basically means that induction and coinduction in our system are the
	same. This observation was made long before us (see related work
	\secref{sub:guar-related}). In this setting, it is a matter a choice whether the
	syntax should use $\mu$ or $\nu$ as a notation for fixed points. Our focus is
	on inductive data types, we then use $\mu$.
\end{remark}

Given a morphism $\alpha\colon X \to Y$ in $\NN$, one says that
$\alpha$ is an $n$-isomorphism if the first $n$ components of $\alpha$
(that is to say, $\alpha_0,\dots,\alpha_{n-1}$) are isomorphisms.

\begin{lemma}
	\label{lem:niso}
	A locally contractive functor $F \colon \NN \to \NN$ maps an $n$-isomorphism
	to an $n+1$-isomorphism.
\end{lemma}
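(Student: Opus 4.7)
The plan is to compute $F\alpha$ componentwise using the factorisation provided by local contractiveness, writing $F_{X,Y} = G_{X,Y} \circ \nu_{\NN(X,Y)}$. Reading this identity at each level, and using the description of $\nu$ together with the enrichment of $\NN$ in $\Scat$ (Lemmas~\ref{lem:enriched} and \ref{lem:next-enriched-view}), one finds $(F\alpha)_0 = G_{X,Y,0}(\star)$, which is independent of $\alpha$, and $(F\alpha)_k = G_{X,Y,k}(\alpha_{k-1})$ for every $k \geq 1$.

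Level $0$ is handled uniformly. Because $L\NN(X,Y)(0) = \set{\star}$ is a singleton for every pair $(X,Y)$, the composition and identity axioms of $G$ at level $0$ collapse to $G_{X,X,0}(\star) = \iid_{FX(0)}$ and $G_{X,Z,0}(\star) = G_{Y,Z,0}(\star) \circ G_{X,Y,0}(\star)$. Setting $Z = X$ exhibits $G_{X,Y,0}(\star)$ as an isomorphism with inverse $G_{Y,X,0}(\star)$, so $(F\alpha)_0$ is always an isomorphism. For $k \in \set{1, \dots, n}$, the component $\alpha_{k-1}$ is an isomorphism by hypothesis and, being a coisometry, is in fact unitary with $\alpha_{k-1}^{-1} = \alpha_{k-1}\dg$. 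The aim is then to exhibit an element $\beta \in \NN(Y,X)(k-1)$ --- that is, a natural transformation $Y \to X$ in $\NN$ whose $(k-1)$-th component equals $\alpha_{k-1}\dg$. Once such a $\beta$ is available, the composition and identity axioms of $G$ at level $k$ yield $G_{Y,X,k}(\alpha_{k-1}\dg) \circ G_{X,Y,k}(\alpha_{k-1}) = G_{X,X,k}(\iid_{X(k-1)}) = \iid_{FX(k)}$ and symmetrically, so $(F\alpha)_k$ is an isomorphism.

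The main obstacle is the construction of such a $\beta$. The plan is to define its components by induction on the level. Below level $k-1$, naturality forces $\beta_j = \alpha_j\dg$ for $j \leq k-1$, and these are all unitaries because $\alpha$ is an $n$-iso, hence legitimate morphisms in $\Coiso$. Above level $k-1$, the constraint $r^X_j \circ \beta_{j+1} = \beta_j \circ r^Y_j$ must be solved by a coisometry $\beta_{j+1} \colon Y(j+1) \to X(j+1)$. Here one exploits the fact that $r^X_j$ is itself a coisometry, so $(r^X_j)\dg$ is an isometric section of it; one lifts $\beta_j \circ r^Y_j$ along $r^X_j$ and adjusts by a coisometry between suitable orthogonal complements to ensure that $\beta_{j+1}$ is itself a coisometry. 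Once the full $\beta$ has been produced, the argument of the previous paragraph concludes.
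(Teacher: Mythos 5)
Your reduction of the problem is sound and matches what the paper's (very terse) proof intends: the factorisation $F_{X,Y} = G_{X,Y} \circ \nu_{\NN(X,Y)}$ does give $(F\alpha)_0 = G_{X,Y,0}(\star)$ and $(F\alpha)_k = G_{X,Y,k}(\alpha_{k-1})$, and your level-$0$ argument using the composition and identity axioms over the singleton $L\NN(X,Y)(0)$ is correct. The step you flag as ``the main obstacle'' is, however, a genuine gap, and the lifting argument you sketch does not close it. With the enrichment of Lemma~\ref{lem:enriched}, $L\NN(X,Y)(k) = \NN(X,Y)(k-1)$ consists of $(k-1)$-components of \emph{global} morphisms of $\NN$, so to invert $G_{X,Y,k}(\alpha_{k-1})$ via the composition axiom you really do need a morphism $\beta \colon Y \to X$ of $\NN$, defined at \emph{every} level, with $\beta_{k-1} = \alpha_{k-1}\dg$. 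Your inductive extension above level $k-1$ requires, at each step, a coisometry $d \colon \Ker(r^Y_j) \to \Ker(r^X_j)$ (the block computation is right: since $\beta_j$ is coisometric the off-diagonal blocks vanish and $d$ must satisfy $dd\dg = \iid$), and such a $d$ exists only when $\dim \Ker(r^X_j) \leq \dim \Ker(r^Y_j)$. This fails in general. Concretely, take $X(0) = \C$, $X(j) = \C^2$ for $j \geq 1$ with $r^X_0$ the projection and $r^X_j = \iid$ afterwards, and $Y(j) = \C$ for all $j$ with identity restrictions; let $\alpha_0 = \iid_\C$ and $\alpha_j$ be the projection for $j \geq 1$. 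Then $\alpha$ is a $1$-isomorphism, but any $\beta \colon Y \to X$ would need a coisometry $\beta_1 \colon \C \to \C^2$, which cannot exist; $\NN(Y,X)$ is the empty presheaf and no candidate $\beta$ is available.

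The paper itself offers no more than ``the proof is direct,'' so it does not resolve this either; but the standard way out, used in \cite{birkedal2012first}, is to enrich in the presheaf of \emph{truncated} natural families (the internal hom of $\Scat$, whose level-$k$ elements are compatible families $(f_0, \dots, f_k)$ with no requirement of extension to higher levels) rather than in components of global transformations. Under that enrichment the family $(\alpha_0\dg, \dots, \alpha_{k-1}\dg)$ is automatically a legitimate element of the hom-object at level $k-1$ --- no extension problem arises --- and your composition-axiom argument then closes. As written against the definitions of this chapter, though, the construction of $\beta$ is exactly where your proof breaks.
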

\begin{proof}
	This is the purpose of Definition~\ref{def:lacally-contractive-functor}. The
	proof is direct by observing that $L$ shifts all components one step to the
	right. 
\end{proof}

This observation allows to prove the next theorem, with the same proof strategy
as in \cite{birkedal2012first}.

\begin{theorem}
	\label{th:fixedpoint}
	Any locally contractive endofunctor $T \colon \NN \to \NN$ has a fixed point.
\end{theorem}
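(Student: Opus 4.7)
The plan is to construct the fixed point as an $\omega$-limit of the iterates of $T$ applied to the terminal object of $\NN$. First I would note that $\NN$ has a terminal object $\mathbf{1}$, namely the constant functor sending every $n$ to the zero Hilbert space $\set{0}$, and consider the chain
\[
\mathbf{1} \xleftarrow{\;!\;} T\mathbf{1} \xleftarrow{T!} T^{2}\mathbf{1} \xleftarrow{T^{2}!} T^{3}\mathbf{1} \xleftarrow{\;\;} \cdots
\]
where $!\colon T\mathbf{1}\to\mathbf{1}$ is the unique morphism into the terminal object and the subsequent connecting maps are obtained by iterating $T$.

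Second, I would prove by induction on $n$ that $T^{n}(!)\colon T^{n+1}\mathbf{1} \to T^{n}\mathbf{1}$ is an $n$-isomorphism: the base case $n=0$ holds vacuously since there are no components to check, and the inductive step is exactly Lemma~\ref{lem:niso}, using that $T$ is locally contractive. Consequently, for any fixed $n$, the sequence of components $\bigl((T^{k}\mathbf{1})(n)\bigr)_{k\geq n+1}$ is canonically constant: all the comparison maps are isomorphisms in $\Coiso$. This lets me define a candidate fixed point $X\in\NN$ componentwise by $X(n) \defeq (T^{n+1}\mathbf{1})(n)$, with the restriction map $r^{X}_{n}\colon X(n+1)\to X(n)$ obtained as the composite of the restriction $r^{T^{n+2}\mathbf{1}}_{n}$ with the isomorphic identification $(T^{n+2}\mathbf{1})(n)\cong (T^{n+1}\mathbf{1})(n)$ coming from the $(n+1)$-isomorphism $T^{n+1}(!)$.

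Third, I would construct the fixed-point isomorphism $\alpha\colon TX \to X$. The guiding idea is that, for each $n$, local contractivity forces the value $(TX)(n)$ to depend only on the data of $X$ at strictly earlier levels. Since $X$ agrees with $T^{n+1}\mathbf{1}$ on levels $0,\dots,n$ by construction, one expects a natural isomorphism $(TX)(n)\cong (T\cdot T^{n+1}\mathbf{1})(n) = (T^{n+2}\mathbf{1})(n)$, and then the $(n+1)$-isomorphism $T^{n+1}(!)$ identifies the latter with $X(n)$. Assembling these isomorphisms into a natural transformation and checking that they commute with the restriction maps will produce $\alpha$; naturality follows from the functorial behaviour of $G$ in Definition~\ref{def:lacally-contractive-functor}, and uniqueness of the resulting fixed point (up to isomorphism) is inherited from the initial-algebra/final-coalgebra viewpoint already recorded in Remark~\ref{rem:canon}.

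The main obstacle is the rigorous identification $(TX)(n)\cong (T^{n+2}\mathbf{1})(n)$: because $T$ is only given as an $\Scat$-enriched endofunctor on $\NN$, it does not act on individual components, so this identification is not definitional. Making it precise is exactly where local contractivity must be used: the factorisation through $\nu$ in Definition~\ref{def:lacally-contractive-functor} means that $T$ applied to any two objects which are $(n+1)$-isomorphic produces objects whose $n$-th components coincide up to canonical isomorphism, and this is what turns the pointwise definition of $X$ into a genuine fixed point of $T$ in $\NN$.
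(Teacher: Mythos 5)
Your plan is correct and follows essentially the same route as the paper's proof: the terminal sequence $Z \leftarrow TZ \leftarrow T^2Z \leftarrow \cdots$, the diagonal object $\Omega(n) = T^{n+1}Z(n)$, and Lemma~\ref{lem:niso} to make the connecting maps eventually isomorphisms. The only cosmetic difference is that you build the isomorphism $TX \cong X$ componentwise (correctly, via Lemma~\ref{lem:niso} applied to the $(n+1)$-isomorphic cone maps), whereas the paper concludes by observing that both $\Omega$ and $T\Omega$ are limits of the same diagram.
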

\begin{proof}
	Note that the category $\NN$ has a terminal object, written $Z$:
	\[\begin{tikzcd}
		\set 0 & \set 0 & \set 0 & \dots
		\arrow["0"', from=1-2, to=1-1]
		\arrow["0"', from=1-3, to=1-2]
		\arrow[from=1-4, to=1-3]
	\end{tikzcd}\]
	and we will write $!$ the unique map to $Z$. Given a
	functor $T \colon \NN \to \NN$, let us have a look at the sequence:
	\begin{equation}
		\label{eq:terminal-sequence}
		\begin{tikzcd}
			TZ & T^2Z & T^3Z & T^4Z & \dots
			\arrow["T!"', from=1-2, to=1-1]
			\arrow["T^2!"', from=1-3, to=1-2]
			\arrow["T^3!"', from=1-4, to=1-3]
			\arrow[from=1-5, to=1-4]
		\end{tikzcd}
	\end{equation}
	and we obtain the fixed point of $T$ through the limit
	of the above diagram. The limit can be built by observing the
	following diagram:
	\[\begin{tikzcd}
		TZ(0) & T^2Z(0) & T^3Z(0) & T^4Z(0) & \dots \\
		TZ(1) & T^2Z(1) & T^3Z(1) & T^4Z(1) & \dots \\
		TZ(2) & T^2Z(2) & T^3Z(2) & T^4Z(2) & \dots \\
		TZ(3) & T^2Z(3) & T^3Z(3) & T^4Z(3) & \dots \\
		\vdots & \vdots & \vdots & \vdots
		\arrow["T!_0"', from=1-2, to=1-1]
		\arrow["T^2!_0"', from=1-3, to=1-2]
		\arrow["T^3!_0"', from=1-4, to=1-3]
		\arrow[from=1-5, to=1-4]
		\arrow["T!_1"', from=2-2, to=2-1]
		\arrow["T^2!_1"', from=2-3, to=2-2]
		\arrow["T^3!_1"', from=2-4, to=2-3]
		\arrow[from=2-5, to=2-4]
		\arrow["T!_2"', from=3-2, to=3-1]
		\arrow["T^2!_2"', from=3-3, to=3-2]
		\arrow["T^3!_2"', from=3-4, to=3-3]
		\arrow[from=3-5, to=3-4]
		\arrow["T!_3"', from=4-2, to=4-1]
		\arrow["T^2!_3"', from=4-3, to=4-2]
		\arrow["T^3!_3"', from=4-4, to=4-3]
		\arrow[from=4-5, to=4-4]
		\arrow["r^1_0", from=2-1, to=1-1]
		\arrow["r^2_0", from=2-2, to=1-2]
		\arrow["r^3_0", from=2-3, to=1-3]
		\arrow["r^4_0", from=2-4, to=1-4]
		\arrow["r^1_1", from=3-1, to=2-1]
		\arrow["r^2_1", from=3-2, to=2-2]
		\arrow["r^3_1", from=3-3, to=2-3]
		\arrow["r^4_1", from=3-4, to=2-4]
		\arrow["r^1_2", from=4-1, to=3-1]
		\arrow["r^2_2", from=4-2, to=3-2]
		\arrow["r^3_2", from=4-3, to=3-3]
		\arrow["r^4_2", from=4-4, to=3-4]
		\arrow[from=5-1, to=4-1]
		\arrow[from=5-2, to=4-2]
		\arrow[from=5-3, to=4-3]
		\arrow[from=5-4, to=4-4]
	\end{tikzcd}\]
	which is simply Diagram~(\ref{eq:terminal-sequence}) expended with the diagram
	view of objects of $\NN$, see Diagram~(\ref{eq:diagram-view}). Let us consider
	the object of $\NN$ made of the diagonal elements of the last diagram above,
	and we call this new object $\Omega$. Its image on objects is $\Omega(n) =
	T^{n+1}Z(n)$ and its image on morphisms can be read on the diagram (it does
	not matter which one is chosen because the diagram commutes).
	
	Moreover, Lemma~\ref{lem:niso} ensures that all $T^n!_k$, where $k<n$, is an
	isomorphism; so they can be soundly reversed, (in other words, having those
	arrows in the other direction does not break the commutativity of the
	diagram) which gives that:
	\[\begin{tikzcd}
		TZ & T^2Z & T^3Z & T^4Z & \dots & \Omega
		\arrow["T!"', from=1-2, to=1-1]
		\arrow["T^2!"', from=1-3, to=1-2]
		\arrow["T^3!"', from=1-4, to=1-3]
		\arrow[from=1-5, to=1-4]
		\arrow[from=1-6, to=1-5]
		\arrow[bend left=20, from=1-6, to=1-4]
		\arrow[bend left=30, from=1-6, to=1-3]
		\arrow[bend left=40, from=1-6, to=1-2]
		\arrow[bend left=50, from=1-6, to=1-1]
	\end{tikzcd}\]
	commutes.
	This new object $\Omega$ is a limit of Diagram~(\ref{eq:terminal-sequence}):
	it is made of elements of the diagram, thus any object that is mapped to the
	diagram has a single way to be mapped to $\Omega$. For the same reason,
	$T\Omega$ is a limit of the diagram as well, thus $\Omega \cong T\Omega$.
	Note that this can all be viewed as a consequence of Lemma~\ref{lem:niso}.
\end{proof}

\begin{remark}
	The category $\NN$ is \emph{contractively complete} in the
	sense of \cite{birkedal2012first}.
\end{remark}

We generalise the notion of locally contractive endofunctors to
functors $\NN^k \to \NN$, to facilitate the discussion to come.

\begin{definition}
	An $\Scat$-functor $T \colon \NN^k \to \NN$ is \emph{locally contractive}
	if it is serarately locally contractive in each variable; formally,
	given any vector $\ov F(-)$ of objects of $\NN$ with one hole
	(\emph{e.g.}, $\ov F(-) = F_1,\dots,F_{j-1},-,F_{j+1},\dots,F_k$),
	the functor $T(\ov F(-)) \colon \NN \to \NN$ is locally contractive.
\end{definition}

\begin{theorem}[Parameterised Fixed Point]
	\label{th:param}
	A locally contractive functor $T \colon \NN^{k+1} \to \NN$
	admits a parameterised fixed point; in details,
	a pair $(T^\noma,\phi^T)$ such that:
	\begin{itemize}
		\item $T^\noma \colon \NN^k \to \NN$ is a locally contractive functor,
		\item $\phi^T \colon T \circ \pv{\mathrm{id}}{T^\noma} \natto T^\noma$ is a
			natural isomorphism,
		\item for every object $\ov F$ in $\NN^k$,
			$(T^\noma \ov F,\phi^T)$ is the fixed point of $T(\ov F,-)$.
	\end{itemize}
\end{theorem}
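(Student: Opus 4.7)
The plan is to lift the non-parameterised construction of Theorem~\ref{th:fixedpoint} pointwise, then package the resulting assignment into a locally contractive functor using the uniqueness properties outlined in Remark~\ref{rem:canon}. Concretely, for each $\ov F$ in $\NN^k$, the functor $T(\ov F, -) \colon \NN \to \NN$ is locally contractive (by the separate-variable definition of local contractivity of $T$), so Theorem~\ref{th:fixedpoint} yields an object $T^\noma \ov F$ of $\NN$ together with an isomorphism $\phi^T_{\ov F} \colon T(\ov F, T^\noma \ov F) \to T^\noma \ov F$. Remark~\ref{rem:canon} tells us that $(T^\noma \ov F, \phi^T_{\ov F})$ is simultaneously an initial $T(\ov F,-)$-algebra and a final $T(\ov F,-)$-coalgebra, and uniqueness of this data (up to unique isomorphism) will do all the heavy lifting in the rest of the proof.

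Next I would extend $T^\noma$ to morphisms. Given $\ov f \colon \ov F \to \ov G$ in $\NN^k$, the morphism $T(\ov f, T^\noma \ov G)$ precomposed with $\phi^T_{\ov G}$ equips $T^\noma \ov G$ with the structure of a $T(\ov F, -)$-algebra, so by initiality of $(T^\noma \ov F, \phi^T_{\ov F})$ there is a unique algebra morphism $T^\noma \ov F \to T^\noma \ov G$, which we take as $T^\noma \ov f$. Functoriality (preservation of identities and composition) is then immediate from the uniqueness clause: the identity and the composite both satisfy the relevant defining equation, so must agree with the canonical fixed-point morphism. The naturality of $\phi^T$ is a similar routine diagram chase using the uniqueness of fixed-points in $\Scat$ (Theorem~\ref{th:contractive-S}).

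The real obstacle is showing that $T^\noma$ is itself locally contractive, because this requires controlling how the morphism mapping $T^\noma_{\ov F, \ov G} \colon \NN^k(\ov F, \ov G) \to \NN(T^\noma \ov F, T^\noma \ov G)$ interacts with the later functor $L^\Scat$ on homs. The strategy I would use is to read $T^\noma \ov f$ as the unique fixed point, in $\Scat$, of the contractive operator $H_{\ov f} \colon \gamma \mapsto \phi^T_{\ov G} \circ T_{(\ov F, T^\noma \ov F), (\ov G, T^\noma \ov G)}(\ov f, \gamma) \circ (\phi^T_{\ov F})^{-1}$. Since $T$ is locally contractive in \emph{every} variable, including the last, the operator $H_{\ov f}$ factors through $\nu_{\NN(T^\noma \ov F, T^\noma \ov G)}$; this lets me exhibit $T^\noma_{\ov F, \ov G}$ as a composition in which at least one factor is contractive (the ``$T$ applied in the parameter variables'' factor), and then conclude by Lemma~\ref{lem:contractive-S} and Lemma~\ref{lem:contractive-comp} that $T^\noma_{\ov F, \ov G}$ is contractive as a map in $\Scat$. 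This is the step where I expect most of the bookkeeping to live, since it requires unfolding the definitions of $L^\Scat$ on homs (Lemma~\ref{lem:later-enriched-view}) and carefully separating the role of the parameters from the role of the recursive variable.

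Finally, the natural isomorphism $\phi^T \colon T \circ \pv{\iid}{T^\noma} \naturalto T^\noma$ is assembled componentwise from the $\phi^T_{\ov F}$ constructed above, and naturality is again an instance of the uniqueness-of-fixed-point argument. Once all this is in place, the three clauses of the statement are exactly the data we have produced: $T^\noma$ is a locally contractive functor $\NN^k \to \NN$ (by the argument of the previous paragraph), $\phi^T$ is a natural isomorphism by construction, and each $(T^\noma \ov F, \phi^T_{\ov F})$ is the fixed point of $T(\ov F, -)$ by the very way it was obtained from Theorem~\ref{th:fixedpoint}.
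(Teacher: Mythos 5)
Your proposal is correct, and it begins exactly as the paper does — apply Theorem~\ref{th:fixedpoint} pointwise to each $T(\ov F,-)$ — but it then takes a genuinely different route to make $T^\noma$ a locally contractive functor. The paper stays concrete: it remembers that the fixed point produced by Theorem~\ref{th:fixedpoint} is the diagonal object $\Omega(\ov F)(n)=T(\ov F,-)^{n+1}Z(n)$ of the terminal sequence, defines the action on a morphism $\beta$ explicitly by $\Omega(\beta)_n=(T(\beta,-)^{n+1}Z)_n$, and reads off both functoriality (since $T$ preserves identities and composites) and local contractivity (since $T$ is applied at least once in the formula) directly from that expression, with the isomorphism $T\circ\pv{\iid}{\Omega}\natto\Omega$ and its invertibility coming from the limiting diagram and Lemma~\ref{lem:niso}. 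You instead treat the fixed point abstractly as an initial algebra (Remark~\ref{rem:canon}), define $T^\noma\ov f$ by initiality, obtain functoriality and the naturality of $\phi^T$ from uniqueness of algebra morphisms, and recover local contractivity by exhibiting $T^\noma\ov f$ as the unique fixed point of the contractive operator $H_{\ov f}$ via Theorem~\ref{th:contractive-S} and Lemma~\ref{lem:contractive-S}. By uniqueness of initial-algebra morphisms the two constructions yield the same functor, so both routes are sound. The paper's explicit formula buys an essentially free proof of local contractivity; your universal-property route buys independence from the particular limit construction, at the cost of the one genuinely delicate step you correctly flag: to show that $\ov f\mapsto\mathrm{fix}(H_{\ov f})$ is contractive you must use the naturality of $\nu$ to push the occurrence of $T^\noma\ov f$ on the right-hand side of $h=H_{\ov f}(h)$ under an application of $L$, since the bare statement of Theorem~\ref{th:contractive-S} gives existence and uniqueness of the parameterised fixed point but not, by itself, its contractive dependence on the parameter.
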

\begin{proof}
	The proof is the same as the one of \cite[Theorem
	7.5]{birkedal2012first}. Given $\ov F$ an object
	in $\NN^k$, $T(\ov F,-) \colon \NN \to \NN$ is a contractive functor, and
	thus has a fixed point $(\Omega(\ov F),\alpha^{\ov F})$. The next
	step is to prove that the statement $\Omega(-)$ induces a functor $\NN^k \to
	\NN$. Remember that $\Omega(\ov F)(n) = T(\ov F, -)^{n+1} Z(n)$, which has a
	functor flavour; given $\beta \colon \ov F \natto \ov G$, $\Omega(\beta)_n =
	(T(\beta,-)^{n+1}Z)_n$ makes $\Omega(-)$ a functor (it preserves the identity
	and composition because $T$ does). Also, in the formula, $T$ is applied at
	least once and is locally contractive, thus $\Omega$ is locally contractive.
	The natural transformation $T\circ\pv{\mathrm{id}}{\Omega}\natto \Omega$ is
	obtained by looking at the square diagram in the first part of the proof;
	its isomorphic nature is inherited from Lemma~\ref{lem:niso}.
\end{proof}

Note that a parameterised fixed point provides a natural isomorphism, whose
components are also isomorphisms. These components are coisometries, and an
isomorphic coisometry is a unitary. 

\paragraph{As a denotational semantics.}
Contractive functors and contractive morphisms would respectively be used to
interpret inductive data types and guarded recursion functions in a programming
language. We can imagine a language akin to the one in
Chapter~\ref{ch:reversible} with quantum superpositions, with the
introduction of the modality $\later$ and the combinator $\nnext\!$.  The
interpretation of a type judgement $\Theta \vdash A$ is given by a functor
$\NN^{\abs\Theta} \to \NN$, and the interpretation of $\Theta \vdash \mu X . A$
is given by $\sem{\Theta, X \vdash A}\nnoma$ thanks to Theorem~\ref{th:param}.
The fixed point combinator would have the following typing rule:
\[
	\infer{
		\Psi \entailiso \ffix \phi . \omega \colon T
	}{
		\Psi, \phi \colon \later\! T \entailiso \omega \colon T
	}
\]
which ensures that the interpretation $\sem{\Psi, \phi \colon \later\! T
\entailiso \omega \colon T}$ is contractive (see
Definition~\ref{def:contractive-S}) and therefore admits a unique fixed point
(see Theorem~\ref{th:contractive-S}). 

\subsection{Conclusion on Guarded Quantum Recursion}
\label{sub:guar-conclusion}

We have laid out convincing grounds for guarded quantum recursion based on its
denotational semantics in $\NN$ and $\QQ$, two categories enriched in the topos
of trees $\Scat$, therefore allowing us to extract properties of the latter for
the interpretation of the functions in the language.

The details on the corresponding programming language capturing both quantum
control as an algebraic effect and guarded recursion is left as future work.

\chapter*{Conclusion}
\label{ch:conclusion}
\addcontentsline{toc}{chapter}{\nameref{ch:conclusion}}
\begin{citing}{7cm}
	``What you say does not matter, it is what people remember that matters.''
	--- Benoît Valiron.
\end{citing}

\vspace{2cm}

In this thesis, we are concerned with the question of the semantics of effects in
programming languages. This study is conducted through the prism of formal
programming languages, which are directly a $\lambda$-calculus or inspired by
it. We have put forward new perspectives on various effects and their
commutativity via the question of centrality, and we have thouroughly studied a
reversible algebraic effect aimed at quantum computing, allowing for a quantum
control of the program flow. We have also questioned the semantics of inductive
types and recursion in the context of reversibility, to potentially adapt it to
the quantum case.

In Chapter~\ref{ch:monads}, we have studied the question of centrality for
\emph{monads}. In particular, we have provided three equivalent conditions for
a monad to be centralisable. Monads have been shown to be the right structure
to model effects in category theory. However, some effects cannot be captured
as a monad, such as the reversible effect we focus on in
Chapter~\ref{ch:qu-control}. The question of centrality of effects should
then be generalised to \emph{promonads}, which are to monads what relations are
to functions. Moreover, commutativity can be studied more broadly than with
centres, and a whole theory of \emph{centralisers} might be developed.

In Chapter~\ref{ch:qu-control}, we have laid foundations for the semantics of
quantum computing seen as a reversible effect. This point of view allows us to
work with a \emph{quantum control flow}. This is especially meaningful since
some quantum-controlled operations, such as the quantum switch, cannot be
performed with classical control. Once this simply-typed quantum control and
its semantics are properly presented, we have wondered about possible additions
to the language, such as infinite data types and recursion.

To do so, we have first studied the question of infinite data types and
recursion in reversible computation, in Chapter~\ref{ch:reversible}. In
particular, we have provided a sound and adequate denotational semantics of the
same language, without the quantum effect, and with inductive types and
recursion. This is done hoping that it is possible to generalise to the
language incorporating the quantum effect.

However, we show in Chapter~\ref{ch:qu-recursion} that this generalisation is
not that simple. On a more positive note, we outline a potential categorical
model for quantum control on infinite data types and recursion, with the help
of \emph{guarded recursion}. This model strongly echoes with the category
$\mathbf{LSI}_\leq$ introduce by Pablo Andrés-Martínez in his thesis
\cite{pablo2022unbounded}. In the conclusion of his thesis, he wrote:
\begin{changemargin}{2cm}{2cm}
	``Quantum computer scientists tend to dismiss unbounded iteration in quantum
	algorithms as an uninteresting field of work: in the case of classical
	control flow due to the assumption that testing a termination condition on
	every iteration would destroy any achievable quantum speed-up and, in the
	case of quantum control flow, due to the tech- nical obstacles that
	interference and the possibility of infinitely many execution paths would
	entail [\dots]''
\end{changemargin}
Even if these concerns were verified, these comments apply to classical
computing also, since the memory of our computer is finite. However, it did not
prevent computer scientists from studying infinite data types, infinite loops
and their semantics. Modern programming languages contain types such as natural
numbers and lists, which are by essence infinite, even if their representation
in the architecture is necessarily finite.

Pablo Andrés-Martínez finishes his conclusion with the following sentence.
\begin{changemargin}{2cm}{2cm}
	``It is my hope that further study on this field will yield quantum programming
	languages supporting quantum control flow and new algorithms that make use of
	unbounded iteration.''
\end{changemargin}
I share his hope.

\newpage
\bibliographystyle{alpha}
\bibliography{ref}

\end{document}